\setlist{nolistsep}
\newcommand\Mycomb[2][^{|b|}]{\prescript{#1\mkern-0.5mu}{}C_{#2}}
\newcounter{ALC@tempcntr}
\newcommand{\reminder}[1]{{\textsf{\textcolor{red}{[#1]}}}}
\newcommand{\hide}[1]{}
\newcommand{\tensor}[1]{\underline{\mathbf{#1}}} 
\DeclareMathOperator*{\argmin}{argmin}   
\DeclareMathOperator*{\argmax}{argmax}
\DeclareMathOperator*{\argminA}{arg\,min} 
\newcommand\oast{\stackMath\mathbin{\stackinset{c}{0ex}{c}{0ex}{\ast}{\bigcirc}}}
\newcommand{\lambdaselection}{\textsc{SelSPF}\xspace}
\newcommand{\smacd}{\textsc{SMACD}\xspace}
\newcommand{\cll}{\textsc{cLL1}\xspace}
\newcommand{\richcom}{\textsc{RichCom}\xspace}
\newcommand{\poplar}{\textsc{POPLAR}\xspace}
\newcommand{\captionmethod}{\textsc{CAPTION}\xspace}
\newcommand{\ned}{\textsc{NED}\xspace}
\newcommand{\sambaten}{\textsc{SamBaTen}\xspace}
\newcommand{\getrank}{\textsc{GetRank}\xspace}
\newcommand{\effcor}{\textsc{CorConDia}\xspace}
\newcommand{\octen}{\textsc{OCTen}\xspace}
\newcommand{\obtd}{\textsc{OnlineBTD}\xspace}
\newcommand{\spade}{\textsc{SPADE}\xspace}
\newcommand{\aptera}{\textsc{Aptera}\xspace}
\newcommand{\allcods}{\url{https://sites.google.com/view/gujralekta/research-work/software}}
\newcommand{\smacdcodeurl}{\url{http://www.cs.ucr.edu/~egujr001/ucr/madlab/src/SHOCD.zip}}
\newcommand{\richcomcodeurl}{\url{http://www.cs.ucr.edu/~egujr001/ucr/madlab/src/richcom.zip}}
\newcommand{\captioncodeurl}{\url{http://www.cs.ucr.edu/~egujr001/ucr/madlab/src/caption_code.zip}}
\newcommand{\nedcodeurl}{\url{http://www.cs.ucr.edu/~egujr001/ucr/madlab/src/ned.zip}}
\newcommand{\sambatencodeurl}{\url{ http://www.cs.ucr.edu/~egujr001/ucr/madlab/src/SAMBATEN.zip}}
\newcommand{\octencodeurl}{\url{ http://www.cs.ucr.edu/~egujr001/ucr/madlab/src/OCTen.zip}}
\newcommand{\obtdcodeurl}{\url{ http://www.cs.ucr.edu/~egujr001/ucr/madlab/src/OnlineBTD.zip}}
\newcommand{\spadecodeurl}{\url{http://www.cs.ucr.edu/~egujr001/ucr/madlab/src/SPADE.zip}}
\newcommand{\apteracodeurl}{\url{http://www.cs.ucr.edu/~egujr001/ucr/madlab/src/aptera.zip}}
\begin{document}


\title{Modeling and Mining Multi-Aspect Graphs With Scalable Streaming Tensor Decomposition}
\author{Ekta Gujral}
\degreemonth{June}
\degreeyear{2021}
\degree{Doctor of Philosophy}
\chair{Dr. Evangelos E. Papalexakis}
\othermembers{Dr. Eamonn Keogh\\
Dr. Jiasi Chen\\
Dr. Christos Faloutsos}
\numberofmembers{4}
\field{Computer Science}
\campus{Riverside}

\maketitle
\copyrightpage{}
\approvalpage{}

\degreesemester{Fall}

\begin{frontmatter}

\begin{acknowledgements}
I am extremely grateful to my advisor, Evangelos E. (Vagelis) Papalexakis, without whose help, I would not have been here. Throughout my entire PhD journey, Vagelis has been a great role model as a mentor, a friend and a human being. As a mentor, he patiently listened to my challenges, understood them, shared his genuine advice and motivated me to push harder in life. It is because of this, I have become responsible, dedicated and hopefully, successful, in my career. Vagelis's chrisma is such that you will instantly love and never forget once you meet him.  He is the fun loving advisor and one of the best and smart people I know.

I would also like to thank all my outstanding committee members, Christos Faloutsos.,
Jiasi Chen, and Eamonn Keogh: It was Eamonn who taught me about Artificial Intelligence. His research work always inspired me and I felt lucky when he accepted to be part of the committee member. I met Jiasi when she started WinC in the computer science department and I was immediately impressed by her capabilities. Later, we were part of Riverside Unified School District where we presented our work to the high school girls together. Her enthusiasm and
love for teaching is contagious. Christos gave me extremely valuable industrial perspective on real-world data mining problems. Multiple conversations with Christos also helped me tremendously towards the end of my PhD studies.

I thank all my collaborators, Petko Bogdanov, Alex Gorovits, Saba Al-Sayouri, Danai Koutra, Tianxiong Yang, Neil Shah, Georgios Theocharous, Leonardo Neves, Brian J Gallagher, Ming Jiang and Anup Rao. Without their invaluable discussion and feedback, my journey would not have been completed. I would also like to thanks to my vanpool members, Arthur Jia, Xinping Cui, Jacob Greenstein, Amanda Pagul, Ian Marcus, Haripriya Vasireddy, and Daniel Cicala, who always waited for me whenever I was late. 

Thanks to all the people who shared my life in Riverside: Ravdeep Pasricha, Sara Abdali, Rutuja Gurav, Pravallika Devineni, Yorgos Tsitsikas, Negin Entezari, Uday Singh Saini, William Shiao, Sunita Kopper, Kamalika Poddar, Pratheek Chindodi Rajashekar, Jiahuan Liu, Harini Venkatesan, Mario Salazar, Abhishek Srivastava, Umar Farooq, Dipankar Ranjan, Dipan Shaw, and Payas Rajan. A good support system is always required to surviving and staying sane in grad school. Finally, my special gratitude to the Computer Science and Engineering department  administrative members Vanda Yamaguchi, Madie Heersink, Sara Galloway, and Heidi Nam (International office) for making my Ph.D. journey smooth. I feel extremely fortunate to have everyone in my life.

I would also like to thank Vandana Jagdish, Gurbir Singh, Tejinderpal Singh, Raghav Kohli, Nazdeep Behl, Balkarn Singh, Nutan Pant, Jyoti Rai, Anirban Ghosh, Arti Ahuja, Aarti Joshi, Divya Singh, Amanjot Singh Duggal, Karandeep Singh, Navdeep Thind, Patricia, Vanessa, Nour, Erum, all my life long friends who always supported me in my ups and downs during this journey and have had a significant impact on my life. 

Most of all, I am grateful to my parents, Ajay Kumar Gujral and Sunita Gujral,  and my in-laws Anil Kumar Chhabra and Sunita Chhabra, my lovely brother, Kushal Gujral. A huge thank you to my husband, Ravi Chhabra for giving me the freedom to pursue opportunities and filling my life with love and optimism. Ohh, am I forgetting someone here? Yes, its my son, Kiyan Chhabra, who just turned 2 and cheers me up whenever I need it. I am extremely lucky to have you in my life.

\end{acknowledgements}

\begin{dedication}
\null\vfil
{\large
\begin{center}
To my husband, Ravi Chhabra and son, Kiyan Chhabra who brings me to joy.\\
To my parents and in-laws, who enthusiastically support me everyday. 
\end{center}}
\vfil\null
\end{dedication}

\begin{abstract}
Graphs emerge in almost every real-world application domain, ranging from online social networks all the way to health data and movie viewership patterns. Typically, such real-world graphs are big and dynamic, in the sense that they evolve over time. Furthermore, graphs usually contain multi-aspect information i.e. in a social network, we can have the "means of communication" between nodes, such as who messages whom, who calls whom, and who comments on whose timeline and so on. 

How can we model and mine useful patterns, such as communities of nodes in that graph, from such multi-aspect graphs? How can we identify dynamic patterns in those graphs, and how can we deal with streaming data, when the volume of data to be processed is very large? In order to answer those questions, in this thesis, we propose novel tensor-based methods for mining static and dynamic multi-aspect graphs. In general, a tensor is a higher-order generalization of a matrix that can represent high-dimensional multi-aspect data such as time-evolving networks, collaboration networks, and spatio-temporal data like Electroencephalography (EEG) brain measurements. 

The thesis is organized in two synergistic thrusts: First, we focus on static multi-aspect graphs, where the goal is to identify coherent communities and patterns between nodes by leveraging the tensor structure in the data. Second, as our graphs evolve dynamically, we focus on handling such streaming updates in the data without having to re-compute the decomposition, but incrementally update the existing results. In more detail, the following two thrusts are as follows: 

\begin{itemize}
    \item \textbf{Mining Graphs and Networks}: Firstly, we focus on static multi-aspect data, in which, static network information is available. We detail our proposed algorithms spanning the topics of community detection in a semi-supervised matrix-tensor coupling settings and structurally dynamic multi-aspect graph summarization through tensor analysis. Our \smacd algorithm based on the CP decomposition is the first to incorporate multi-aspect graph information and semi-supervision while being able to discover overlapping and non-overlapping communities in social networks. Moving away from the restrictive assumptions of CP decomposition, we propose \richcom, where we leverage the concept of block term decomposition in order to extract rich and interpretable structure from general multi-aspect data. Next, we propose \poplar, where we introduce the concept of Laplacian regularization on the PARAFAC2 decomposition, which improves community detection in time-evolving social networks, by leveraging graph-based auxiliary information. Can community detection benefit by having more auxiliary graphs? To answer this question, we proposed \captionmethod that effectively decomposes multi-view auxiliary information and PARAFAC2 data jointly into interpretable latent factors using a variety of objective functions.  In addition to the above, we pose and study the niche detection problem, which imposes an explainable lens on the classical problem of co-clustering interactions. We design an end-to-end framework, \ned,  which discovers co-clusters of user behaviors based on interaction densities and explaining them using attributes of involved nodes.

    \item \textbf{Mining Streaming Tensors}: Next, we expand our scope to incremental multi-aspect data, in which we leverage network information over time. In a wide array of modern real-world applications, the data is far from being static. As the volume and velocity of data grow, the need for time and space-efficient online tensor decomposition is imperative. This is a challenging task primarily because of following reasons a) high-accuracy (competitive to decomposing the full tensor) using significantly fewer computations than the full decomposition calls for innovation, b) the velocity of incoming data is very high and require real-time enforcement, c) operating on the full ambient space of data leads to the increase in space complexity, rendering such approaches hard to scale, d) for irregular tensor data, any pre-processing to accumulate across any mode may lose significant information, and e) last but not least, there are certain instances wherein rank-1 decomposition (CP or Tucker) can not be useful (e.g. EEG/ECG data signals) and require online methods that go beyond rank-1. Motivated by the above challenges, we investigate how to adaptively monitor various decompositions of a tensor that is dynamically changing over time without having to compute from scratch, provided that the previous decomposition is available. We propose fast, scalable and efficient methods i.e. \sambaten, \octen that tackle streaming CP decomposition, \spade to tackle irregular streaming tensors (PARAFAC2) and \obtd to handle beyond rank-1 streaming tensor decomposition. In a given tensor, static or dynamic, the number of useful patterns corresponds to the low-rank of the tensor. Unfortunately, this is an NP-Hard problem, and especially in the streaming case, these challenges make it more complex and non-trivial. This is the reason that various tensor mining researchers, understandably, set the number of components manually for static as well as streaming tensor decompositions. To fill the gap, we propose an effective and efficient method \aptera to estimate the rank of irregular tensor data.
\end{itemize}

\end{abstract}

\tableofcontents
\listoffigures
\listoftables
\end{frontmatter}

\chapter{Introduction}
Graphs are effective way to represent a large variety of data like computer networks, biological networks, etc. and relations between data entities. In most real applications, however, the information available usually goes beyond a plain graph that captures relations between different nodes. For instance, in an online social network such as Facebook, relations and interactions between users are represented by a set of edge types rather than a single type of edge. Such different edge-types can be "who messages whom", "who pokes whom", "who-comments on whose timeline" and so on. We refer these graphs as multi-aspect graphs or tensors. Considering all the aspects of interaction or communication yields high clustering accuracy. Tensor decompositions are invaluable tools in analyzing multi-aspect graphs or multi-modal datasets. 

In the era of information explosion, the data of diverse variety is generated or modified in large volumes. In many cases, data may be added or removed from any of the dimensions with high velocity. When using tensors to represent this dynamically changing data, an instance of the problem is of the form of a ``streaming'', ``incremental'', or ``online'' tensors. Considering an example of Electronic Health Records \cite{perros2017spartan} data, where we have $K$ number of subjects for which we observe features and we permit each subject to have multiple observations. As time grows, a number of subjects are added with more or fewer observations. Each such subject is a new incoming slice(s) to the tensor, which is seen as a streaming update. Additionally, the tensor may be growing in all of its $N$-modes, especially in complex and evolving environments such as online social network where new interactions occur every second and new friendships are formed at a similar pace. Given such a time-evolving data, it is very predictive that we cannot keep using the old decomposition obtained from previous set of data at certain timestamp as it is already outdated when new interaction formed. Therefore, it is necessary to have fast and efficient methods to handle the latest decomposition when new data is arrived. However, state-of-the-art tensor decomposition methods are developed for static data and they are not able to handle streaming data due to their poor performance, in terms of both time and space. 

In most of the tensor mining algorithms and applications, it is assumed that rank of the tensor is known. The tensor rank calculation has been proven to be NP-Hard \cite{hillar2013most} and in various cases NP-Complete \cite{haastad1989tensor}. This is the reason that various tensor mining papers \cite{afshar2018copa,kiers1999parafac2}, understandably, set the number of components manually. In literature, there are various algorithms \cite{bro2003new,papalexakis2016automatic,shi2017tensor,tsitsikas2020nsvd,zhao2015bayesian,johnsen2014automated} available to find the rank of CP tensor. Unfortunately, state-of-the-art methods are not able to tackle irregular data like PARAFAC2. 

\section{Research Questions}
In this thesis, we want to answer the following questions, all of that are fundamental to understand growing multi-aspect or temporal data. Given multi-aspect graphs or time evolving data:
\begin{itemize}
    \item \textbf{Q1 Mining Graphs and Networks}: How can we identify useful repetitive sub-graphs i.e. communities and its structural properties that might affect various applications?
    \item \textbf{Q2 Streaming Tensor Mining}: How can we maintain a valid and accurate tensor decomposition of a dynamically evolving data, without having to re-compute the entire decomposition after every single update?
     \item \textbf{Q3 Automatic Tensor Mining}: How can we automatically find how many PARAFAC2 components are required in a data driven and unsupervised way?
\end{itemize}

\subsection{Mining Graphs and Networks}
For mining graphs and network analysis, we develop a semi-supervised clustering algorithm \cite{gujral2018smacd,gujral2018smacdhetro} that simultaneously consider multi-view clustering and semi-supervised learning to not only use multiple views of data but also improve the clustering accuracy by utilizing partially available labels of nodes. For accuracy, our algorithms exploit non-negativity and sparsity constraints \cite{gujral2018smacd} to find sub-groups or patterns in multi-aspect graphs. Moreover, in \cite{gujral2020beyond}, we discover meaningful rich structure of communities in multi-aspect graphs. Specifically, our algorithm exploits the Block Term Decomposition to extract higher than rank-1 but still interpretable structure from a multi-aspect dataset and uses AO-ADMM to speed up decomposition. We also explore less known but effective graph laplacian constraints for irregular tensor decomposition \cite{gujral2020poplar} that have potential to offer more accurate results. Additionally, we develop efficient community detection \cite{gujral2019hacd,al2018t,gorovits2018larc,gorovits2018myron} methods for temporal data.

\subsection{Streaming Tensor Mining}
For online tensor decomposition, we develop various decomposition algorithms that incrementally update a large data efficiently and effectively in dynamic graphs. In \cite{gujral2018sambaten}, we propose SamBaTen, a sampling-based tensor decomposition and OcTen \cite{gujral2018octen}, a compression-based tensor decomposition. Specifically, within a limited memory budget, our algorithms are fast, scalable and achieved comparable accuracy. Our algorithms exploit random sampling, compression and utilize computational resources distributed across multiple machines. Regardless of recent development on temporal data through CP tensor decomposition approaches, there are certain instances wherein time modeling is difficult for the regular tensor factorization methods, due to either data irregularity or time-shifted latent factor appearance. To handle this problem, we propose a SPADE \cite{gujral2020spade} to track the updates of online PARAFAC2 decomposition. Specifically, our algorithm avoids the expensive computations of MTTKRP, which classic algorithms suffer from. Additionally, to handle data like streaming Electroencephalography (EEG) brain measurements where we need decomposition beyond rank-1, we develop a fast, efficient and scalable method namely OnlineBTD \cite{gujral2020onlinebtd}.

\subsection{Automatic Tensor Mining}
In data mining, PARAFAC2 is a powerful tensor decomposition method that is ideally suited for unsupervised modeling of "irregular" tensor data, e.g., patient's diagnostic profiles, where each patient's recovery timeline does not necessarily align with other patients. In real-world applications, where no ground truth is available for this data, how can we automatically choose how many components to analyze? Although extremely trivial, finding the number of components is very hard. So far, under traditional settings, to determine a reasonable number of components, when using PARAFAC2 data, is to compute decompositions with a different number of components and then analyze the outcome manually. This is an inefficient and time-consuming path, first, due to large data volume and second, the human evaluation makes the selection biased. To handle this problem, we propose a \aptera for automatic PARAFAC2 tensor mining that is based on locating the L-curve corner.  

\section{Thesis Outline and Contributions}
The thesis addresses a number of important questions regarding the community detection and summarization of multi-aspect graphs by leveraging the tensor structure of the data. Also, the dissertation focuses on incremental and efficient decomposition of streaming tensor data that enables many real applications in diverse domains. More specifically, in this thesis we investigate the problem of community detection, incremental tensor decomposition and automatic tensor mining. Table \ref{tbl:overview}  gives the overall structure of our research with the mapping to the chapters of this thesis.

\begin{table}[t]
	\begin{tabular}{c|l}
	\hline
	\multirow{4}{6em}{Part I Mining Graphs and Networks} & Chapter \ref{ch:3}:  Semi-supervised Community Detection \cite{gujral2018smacd}, \cite{gujral2018smacdhetro}\\
	 & Chapter \ref{ch:4}: Discover Rich Community Structure \cite{gujral2020beyond}\\
     & Chapter \ref{ch:5} and \ref{ch:6}: Coupled PARAFAC2  \cite{gujral2020poplar} \cite{guiral2020cap}\\ 
	 & Chapter \ref{ch:7}: Niche Detection \\ 
	 \hline
     \multirow{4}{6em}{Part II Streaming Tensor Mining}&  Chapter \ref{ch:8} and \ref{ch:9}: Online CP Tensor Decomposition \cite{gujral2018sambaten} , \cite{gujral2018octen}\\ 
    &  Chapter \ref{ch:10}: Online Block Term Decomposition \cite{gujral2020onlinebtd} \\
        &  Chapter \ref{ch:11}: Online PARAFAC2 Tensor Decomposition \cite{gujral2020spade}\\
       & Chapter \ref{ch:12}: Automatic PARAFAC2 Mining\\
	\hline
	\end{tabular}
		\caption{Overview of the thesis with references to chapters.}
	\label{tbl:overview} 
\end{table}
 
Chapter \ref{ch:2} provides the background to graph and tensor-related notations, preliminaries definitions and a review on the existing related algorithms. It also builds a foundation for better understanding the proceeding chapters.

In the Part I, \textbf{Mining graphs and Networks}, we focus on static multi-aspect graphs, where the goal is to identify and summarize coherent communities between nodes by leveraging the tensor structure. We exploit semi-supervision, tensor decompositions, and graph mining techniques to address the following problems: community detection and summarization in the multi-aspect graphs, joint analysis of static and dynamic graph and explainability of the communities. We address following questions:

\begin{itemize}
    \item How to find communities or clusters in large multi-aspect graphs? Can we  leverage semi-supervision to improve accuracy?
    \item How are communities in real multi-aspect graphs structured? How we can effectively and concisely summarize and explore those communities in a high-dimensional, multi-aspect graph without losing important information?
    \item Can we jointly model temporal and static information from PARAFAC2 data to extract meaningful insights? Does the static information added in temporal data improve predictive performance?
    \item Given a rich interaction graph, and nodal attributes, how can we explain and make sense of it?
\end{itemize}

\textbf{Impact}

We develop a collection of novel methods to detect communities in a multi-aspect graphs. We are among the first to propose following methods that detect communities in semi-supervised way, summarize the structures in multi-aspect graphs and jointly analyze the PARAFAC2 data with static information. These methods serve as an ensemble that can be employed under different or changing conditions.
\begin{itemize}
    \item In Chapter \ref{ch:3}, we study the problem of finding the assignment of each node into one (or more) of $R$ community labels by effectively integrates and leverages both (a) the multi-view nature of real graphs, and (b) partial supervision in the form of community labels for a small number of the nodes to improve the quality of community detection. The sparsity and non-negativity constraints are exploited also when we devise sparse, and ALS based optimization algorithms to fit our model to multi-aspect data. Our proposed method \smacd is two step process i.e. Decomposition and Assignment. In the decomposition step, we compute a sparse and non-negative latent components using CP decomposition on coupled data and partial labels. We further design a procedure to automatically find the best sparsity constrained parameter. As validated by multiple synthetic and real dataset, that \smacd through combining semi-supervision and multi-aspect edge information, outperforms the state-of-the-arts.
    \item In Chapter \ref{ch:4}, we study the problem of discovering and summarizing community structure from a multi-aspect graph. State-of-the-art studies focused on patterns in single graphs, identifying structures in a single snapshot of a large network or in time evolving graphs and stitch them over time.To fill the gap, we proposed \cll to extract rich and interpretable structure from general multi-aspect data and \richcom, a summarization and encoding scheme to discover and explore structures of communities identified by \cll. We showed empirical results on small and large real datasets that demonstrate performance on par or superior to existing state-of-the-art.
    \item In chapter \ref{ch:5} and \ref{ch:6}, we explore joint analysis of the PARAFAC2 with matrix and tensor data respectively. In chapter \ref{ch:5} we study the impact of laplacian constraints using auxiliary information on PARAFAC2 decomposition and proposed new method called \poplar. Motivated by the promising outcome of the \poplar,we provide a scalable method for decomposing coupled CP and PARAFAC2 tensor data through non-negativity-constrained least squares optimization on a variety of objective function. We present results showing the scalability of this novel implementation on a millions of elements as well as demonstrate the high level of interpretability on real world data.
    \item In chapter \ref{ch:7}, we focus on developing features that attract and appeal to customers and are very critical to companies, as it determines their success and revenue. Specifically, in creating content, it is beneficial to understand the attributes that make a particular type of content (e.g food-related videos) attractive to specific markets that might be under-served by the platform (e.g. 18-24 year old women in Australia).  Thus, to improve targeting and to create and promote content that will likely better retain, engage, and satisfy target audiences, we propose a method for self-explaining \emph{niche detection} in user content consumption data. Our work characterizes niches as outstanding co-clusters in user-content interaction graph data, imbued with user and content-oriented attributed co-cluster explanations which designate audience and content types.  
\end{itemize}

In the Part II, \textbf{Streaming Tensor Mining}, we focus on dynamic multi-aspect graphs, where the goal is to focus on handling streaming updates in the data without having to re-compute the decomposition, but incrementally update the existing results. For many different applications like sensor network monitoring or evolving social network, the data stream is an important model. Streaming decomposition is a challenging task due to the following reasons. First, \textbf{accuracy}: high-accuracy (competitive to decomposing the full tensor) using significantly fewer computations than the full decomposition calls for innovation. Second, \textbf{speed}: the velocity of incoming data into the system is very high and require real-time execution. Third, \textbf{space}: operating on the full ambient space of data, as the tensor is being updated online, leads to increase in space complexity, rendering offline approaches hard to scale, and calling for efficient methods that work on memory spaces which are significantly smaller than the original ambient data dimensions. Lastly, \textbf{beyond rank-1 data} \cite{gujral2020beyond}: there are certain instances wherein rank-1 decomposition (CP or Tucker) can not be useful, for example,  EEG signals \cite{hunyadi2014block} needed to be modeled as a sum of exponentially damped sinusoids and allow the retrieval of peaks by singular value decomposition. The rank-one terms can only model components of data that are proportional along columns and rows, and it may not be realistic to assume this. Alternatively, it can be handled with blocks of decomposition. Based on this, we develop models that can handle dynamic regular real-world graphs. Then, we focus particularly on irregular data and build a model to handle the streaming updates. Next, we focus on automatic mining of irregular data. We address following questions:
\begin{itemize}
    \item How to summarize various types of high-order data tensor? How to incrementally update those patterns over time?
    \item How to automatic mine PARAFAC2 in a data driven and unsupervised way?
\end{itemize}

\textbf{Impact}

 \begin{itemize}
     \item In chapter\ref{ch:8} and  \ref{ch:9}, we investigate the problem of finding the CP decompositions of streaming tensors. Operating on the full ambient data space, as the tensor is being updated online, leads to an increase in time and space complexity, rendering traditional approaches hard to scale, and thus calls for efficient methods that work on memory spaces which are significantly smaller than the original ambient data dimensions. Based on this, we propose a novel sample-based \sambaten and compression based \octen framework, respectively. The proposed frameworks effectively identify the low rank latent factors of incoming slice(s) to achieve online tensor decompositions. To further enhance the capability, we also tailor our general framework towards higher-order online tensors. Through experiments, we empirically validate its effectiveness and accuracy and we demonstrate its memory efficiency and scalability by outperforming state-of-the-art approaches.  
     \item In Chapter \ref{ch:10}, we focus on the problem of finding the PARAFAC2 decompositions of streaming irregular tensors. Streaming PARAFAC2 decomposition is a challenging task due to the reason that for PARAFAC2 tensor data, any pre-processing to accumulate across any mode may lose significant information and maintaining high-accuracy (competitive to decomposing the full tensor) using significantly fewer computations than the full decomposition calls for innovative and, ideally, sub-linear approaches. We propose \spade, a novel online PARAFAC2 decomposition method. We demonstrate its efficiency and scalability over synthetic and real datasets. The \spade provides comparable approximation quality to baselines and it is both fast and memory-efficient than the baseline approaches. Extensive experiments with Adobe dataset have demonstrated that the proposed method is capable of handling larger dataset in incremental fashion for which none of the baseline performs due to lack of memory.

     \item In chapter \ref{ch:11} discusses the framework to effectively identify the beyond rank-1 latent factors of incoming slice(s) to achieve online block term tensor decompositions. The tracking of the BTD decomposition for the dynamic tensors is a very pivotal and challenging task due to the variability of incoming data and lack of efficient online algorithms in terms of accuracy, time and space. Through experimental evaluation on multiple datasets, we show that \obtd  provides stable decompositions and  have significant improvement in terms of run time and memory usage.
     
     \item In chapter \ref{ch:11}, we work towards an automatic, PARAFAC2 tensor mining algorithm that minimizes human intervention. With a rich variety of applications, PARAFAC2 decomposition \cite{harshman1972parafac2}  is a very effective analytical method and if performed correctly, it can reveal underlying structures of data. However, there are research issues that need to be tackled in the field of data mining in order for PARAFAC2 decompositions to assert their role as an effective tool for practitioners. One challenge, which has received considerable attention, is finding the correct number of components aka rank of PARAFAC2 decomposition. Motivated by this, we propose a effective and efficient method \aptera to estimate the rank of irregular 'PARAFAC2' data that discover the number of components (interchangeably rank) through higher-order singular values.
 \end{itemize}

We made all of the algorithms produced throughout this thesis open source for reproducibility and the benefit of the community. All the codes are available at link\footnote{\allcods}

\chapter{Background}
\label{ch:2}
We explore the context of our research in this chapter, review existing literature, and describe the weaknesses and gaps that will be addressed in subsequent chapters. We begin with a summary of various notations used throughout this thesis, tensor-mining related terminologies and operations, then move on to a discussion of various tensor decomposition algorithms and their applications. Next, we discuss multi-aspect static graph mining, various methods that we are particularly interested in the first part of the thesis, into more details. Finally, we introduce the streaming tensor decomposition problem in the last section and review existing works and discuss their limitations.

\section{Preliminary Definitions and Notation}
This section introduces basic notations and operations that are widely used in the field. A summary of symbols that we use through the whole thesis can be found in the Table \ref{chapter2tbl:listofSymbols}.

\begin{table}[H]
\begin{center}
\ssmall
\begin{tabular}{ |c|c| }
\hline
Symbols & Definition \\ 
\hline
\hline
$\tensor{X},\mathbf{A}, \mathbf{a},a$ & Tensor, Matrix, Column vector, Scalar \\ 
\hline
$\mathbb{R}$ & Set of Real Numbers  \\ 
\hline
$vec()$ & Vectorization operator \\ 
\hline
$diag(\mathbf{A})$ & Diagonal of matrix $\mathbf{A}$ \\ 
\hline
$[\mathbf{A}; \mathbf{B}]$ & Vertical stacking of $\mathbf{A,B}$  \\ 
\hline
$[\mathbf{A}~~~\mathbf{B}]$ & Horizontal stacking of $\mathbf{A,B}$  \\ 
\hline 
$\mathbf{A}(i,:)$ &$ i^{th}$ row of $\mathbf{A}$  \\ 
\hline
$\mathbf{A}(:,j)$ & $ j^{th}$ column of $\mathbf{A}$  \\ 
\hline
$\mathbf{A}(i,j)$ & $ (i,j)^{th}$ element of $\mathbf{A}$  \\ 
\hline
$\mathbf{a}(r)$ & $ r^{th}$  element of $\mathbf{a}$  \\ 
\hline
$\lVert \mathbf{A} \rVert $& Frobenius norm \\
\hline
$\mathbf{A}^{T}$, $\mathbf{A}^{-1}$,$\mathbf{A}^{\dagger}$ & Transpose of $\mathbf{A}$, Inverse of $\mathbf{A}$, Pseudoinverse of  $\mathbf{A}$\\
\hline
$\mathbf{A}^{n}$&$n^{th}$ matrix from a set of matrices \{$\mathbf{A}$\}\\
\hline
$\circ$ & Outer product  \\ 
\hline
 $\otimes $& Kronecker product\\
\hline
 $\odot$ & Khatri-Rao product\\
 \hline
$\circledast $ & Hadamard product  \\ 
\hline
 $\oslash $& Element-wise division\\
\hline
$x_n$ & n-mode product\\
\hline
$\circledast_{i=1}^{N} \mathbf{A}^{(i)}$&  $\mathbf{A}^{(N)} \circledast \dots \circledast \mathbf{A}^{(i)} \circledast \dots \circledast \mathbf{A}^{(1)}$\\
\hline
$\otimes_{i=1}^{N} \mathbf{A}^{(i)}$&  $\mathbf{A}^{(N)} \otimes \dots \otimes \mathbf{A}^{(i)} \otimes \dots \otimes \mathbf{A}^{(1)}$\\
\hline
$\odot{i=1}^{N} \mathbf{A}^{(i)}$&  $\mathbf{A}^{(N)} \odot \dots \odot \mathbf{A}^{(i)} \odot \dots \odot \mathbf{A}^{(1)}$\\
\hline
$\mathbf{X}_{(n)}$& n-mode matricization of $\tensor{X}$\\
\hline
reshape( ) & Rearrange the elements of a given matrix or tensor to a given set of dimensions\\
\hline
MTTKRP & Matricized Tensor Times Khatri-Rao Product\\
\hline
OoM & Out of Memeory\\
\hline
\end{tabular}
\caption{Table of symbols and their description.}
\label{chapter2tbl:listofSymbols}
\end{center}
\end{table}
\subsection{Introduction to Tensors}
A multi-view graph with $K$ views is a collection of $K$ matrices $\mathbf{X}_1,\cdots\mathbf{X}_K$ with dimensions $I\times J$ (where $I,J$ are the number of nodes). This collection of matrices is naturally represented as a tensor $\tensor{X}$ of size $I\times J \times K$. In order to avoid overloading the term ``dimension'', we call an $I\times J \times K$ tensor a three ``mode'' tensor, where ``modes'' are the numbers of indices used to index the tensor.  

\begin{definition}
\textbf{Rank-1 Tensor}\cite{kolda2009tensor,papalexakis2016tensors}: A tensor is a higher order generalization of a matrix. For example, vectors are $1^{st}$-mode and matrices are $2^{nd}$-mode tensors and a tensor $\tensor{X} \in \mathbb{R}^{I_1 \times I_2 \times  I_3}$ is an $3^{rd}$-order tensor as shown in Figure \ref{chapter2fig:tensor} (adopted from \cite{kolda2009tensor}). An $N$-mode\footnote{Notice that the literature (and thereby this paper) uses the above  terms as well as "order" interchangeably.} tensor $\tensor{X} \in \mathbb{R}^{I_1\times I_2 \dots \times I_N} $ is the outer product of $N$ vectors, as given in equation \ref{chapter2eq:tensor}.
\begin{equation}
\label{chapter2eq:tensor}
\tensor{X} =  \mathbf{a}_1 \circ \mathbf{a}_2 \dots \circ \mathbf{a}_N 
\end{equation}	
It can essentially indexed by $N$ variables i.e. $(\mathbf{a}_1,\mathbf{a}_2 \dots,\mathbf{a}_N)$. The outer product 3-mode tensor $\tensor{X}$ of vectors $(\mathbf{a}, \mathbf{b} ,\mathbf{c})$ can be written as $x_{ijk} = a_ib_jc_k$ for all values of the indices.
\end{definition}

 \begin{figure}[!ht]
	\centering
     \includegraphics[clip,trim=2cm 7cm 2cm 6cm,width  = 0.7\textwidth]{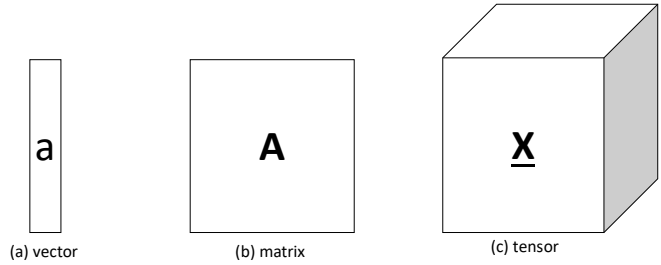}
        \caption{Tensor generalization: (a) 1-order tensor (i.e., vector) (b) 2-order tensor (i.e., matrix) (c) 3-order tensor $\tensor{X}$}
        \label{chapter2fig:tensor}
\end{figure}
 
\textbf{Tensor Indexing}: In order to index matrix $\mathbf{A} \in \mathbb{R}^{I \times J}$, we denote its $(i, j)$-th element by $a_{ij}$, $i^{th}$ row vector by $a_i$, and $j^{th}$ column vector by $a_j$. Similarly, the $(i, j, k)$-th element of a 3rd-mode tensor $\tensor{X}$ of size $I\times J \times K$ is denoted by $x_{ijk}$ and the $(i1, i2, . . . , iN)$-th element of an $Nth$-order $\tensor{Y} \in \mathbb{R}^{I_1\times I_2 \dots \times I_N}$ is denoted by $y_{i_{1}i_{2}...i_{N}}$. We refer to tensors with more than 3 modes as higher-order ones. 

\begin{definition}
\textbf{Fibers}: The fibers are the higher-order analogue of matrix rows and columns. A fiber is defined by fixing every index but one. A matrix column is a mode-1 fiber and a matrix row is a mode-2 fiber. Third-order tensors have column, row, and tube fibers, denoted by $x_{:jk}, x_{i:k}$, and $x_{ij:}$, respectively; see Figure \ref{chapter2fig:fiber} (adopted from \cite{kolda2009tensor}). When extracted from the tensor, fibers are always assumed to be oriented as column vectors
\end{definition}
  \begin{figure}[!ht]
     \centering
      \includegraphics[clip,trim=5cm 2.5cm 5cm 3cm,width  = 0.31\textwidth]{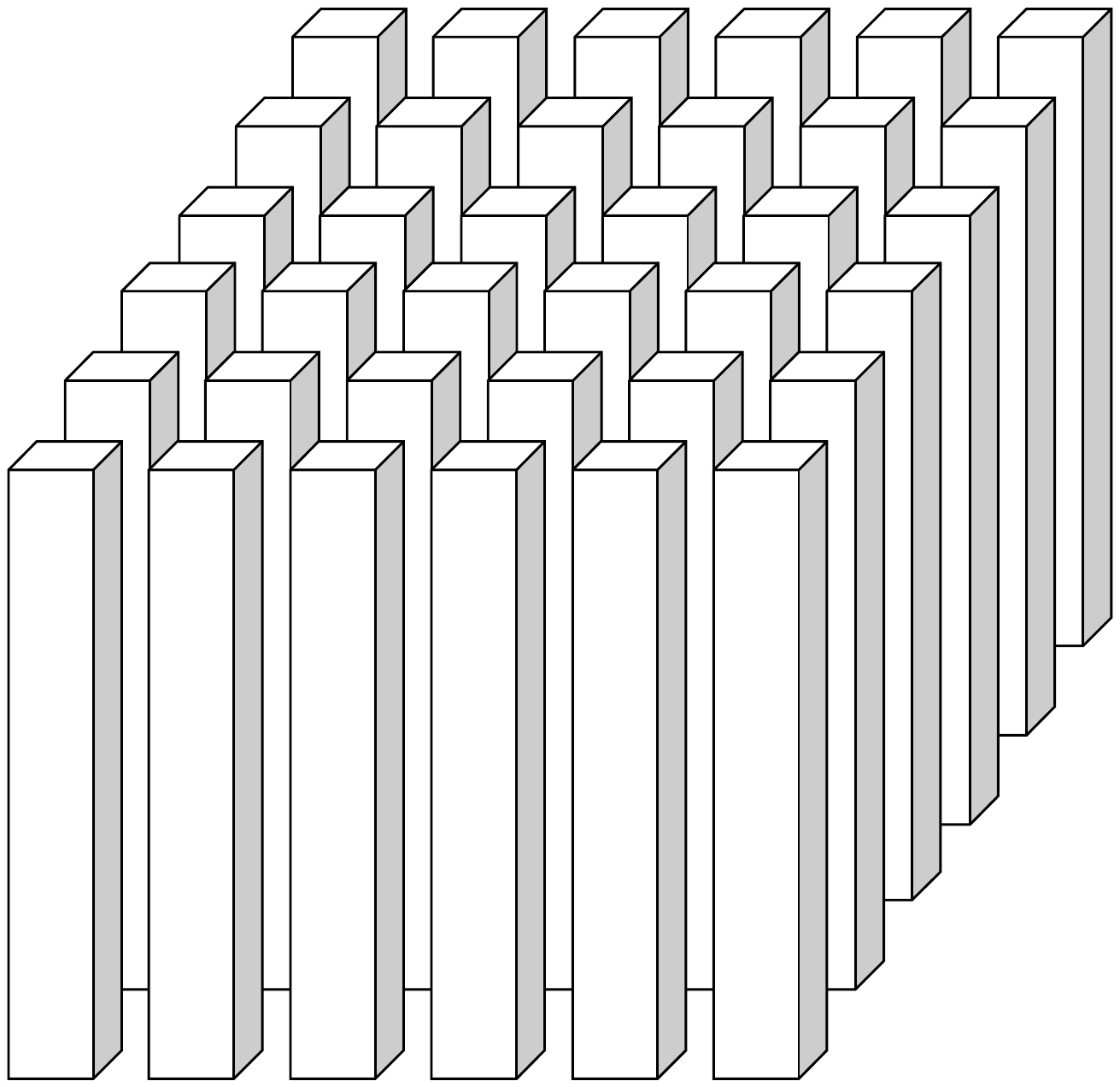}
      \includegraphics[clip,trim=5cm 2.5cm 6cm 3cm,width  = 0.31\textwidth]{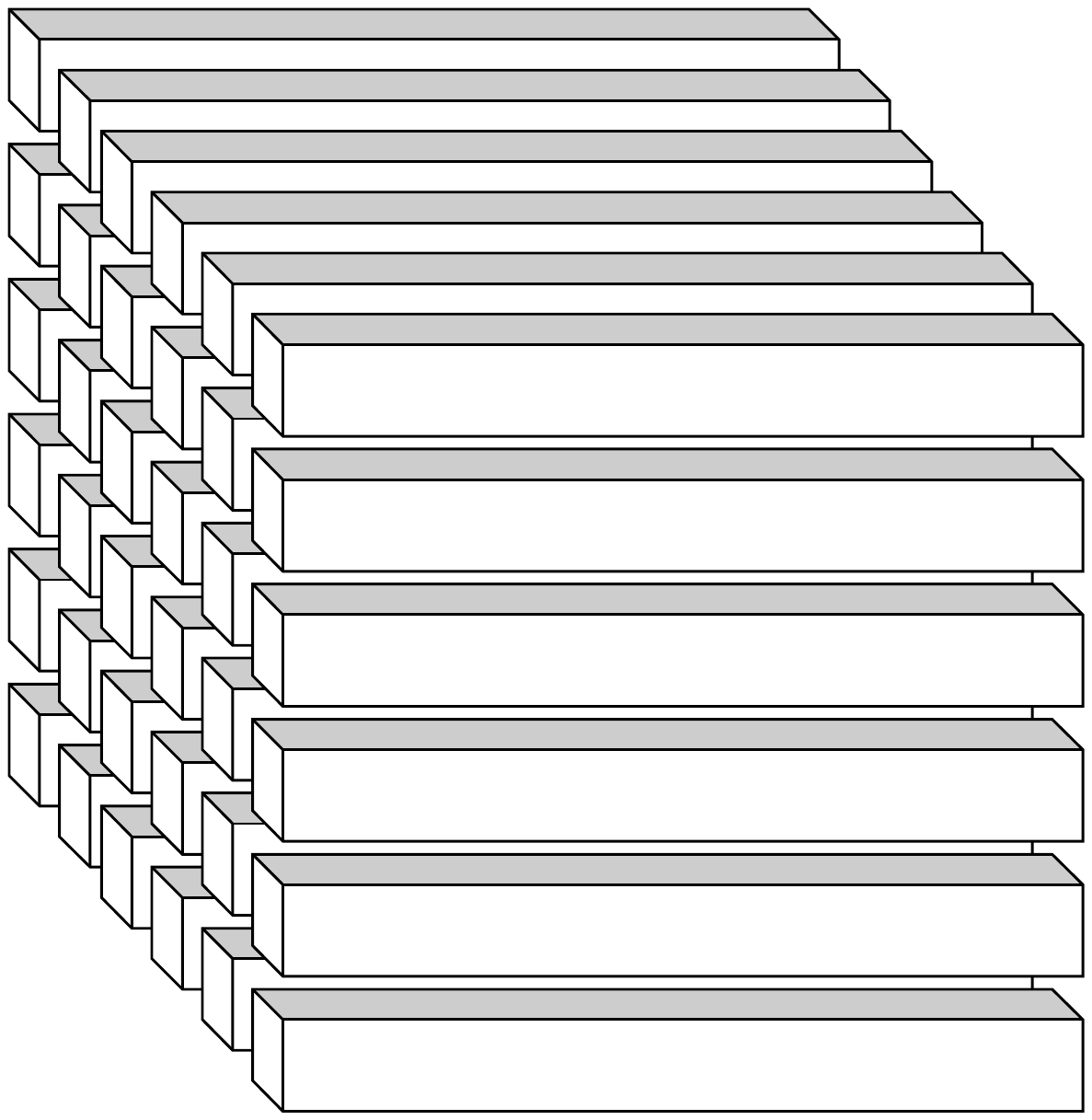}
      \includegraphics[clip,trim=5cm 2.5cm 6cm 5cm,width  = 0.31\textwidth]{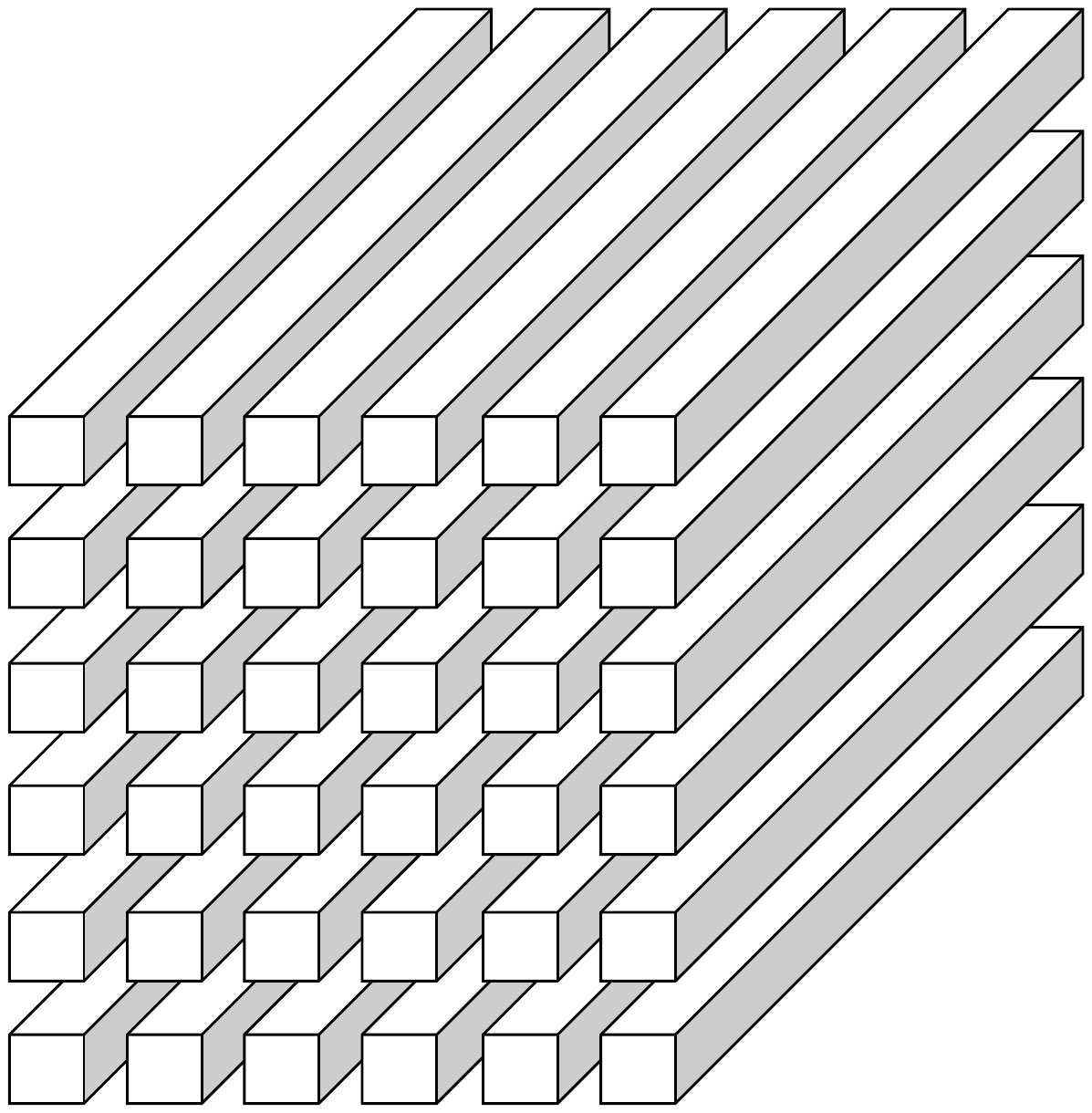}
 	\caption{Fibers of a 3rd-order tensor (a) Mode-1 or column fibers, (b) Mode-2 or row fibers (c) Mode-3 or tube fibers}
	\label{chapter2fig:fiber}
\end{figure}
\begin{definition}
\textbf{Slice} :  A slice is a (m-1)-dimension partition of tensor where an index is varied in one mode and the indices fixed in the other modes. There are three categories of slices : horizontal ($\tensor{X}$(i,:,:)) , lateral ($\tensor{X}$(:,j,:)), and frontal ($\tensor{X}$(:,:,k)) for third-order tensor X as shown in Figure \ref{chapter2fig:slice}. 
\end{definition}
 \begin{figure}[!ht]
     \centering
      \includegraphics[clip,trim=5cm 2.5cm 5cm 4.5cm,width  = 0.31\textwidth]{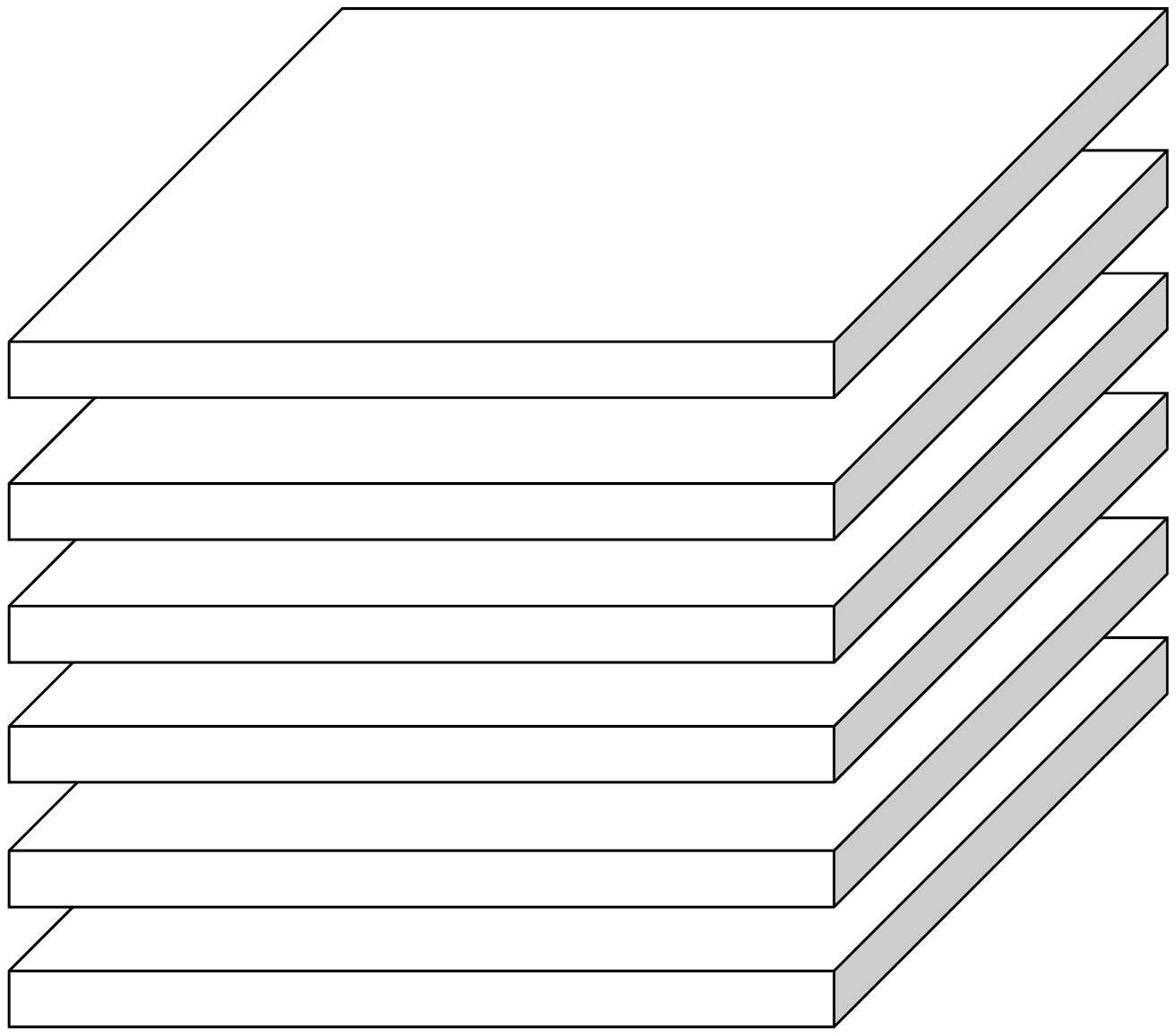}
      \includegraphics[clip,trim=5cm 2.5cm 6cm 3cm,width  = 0.31\textwidth]{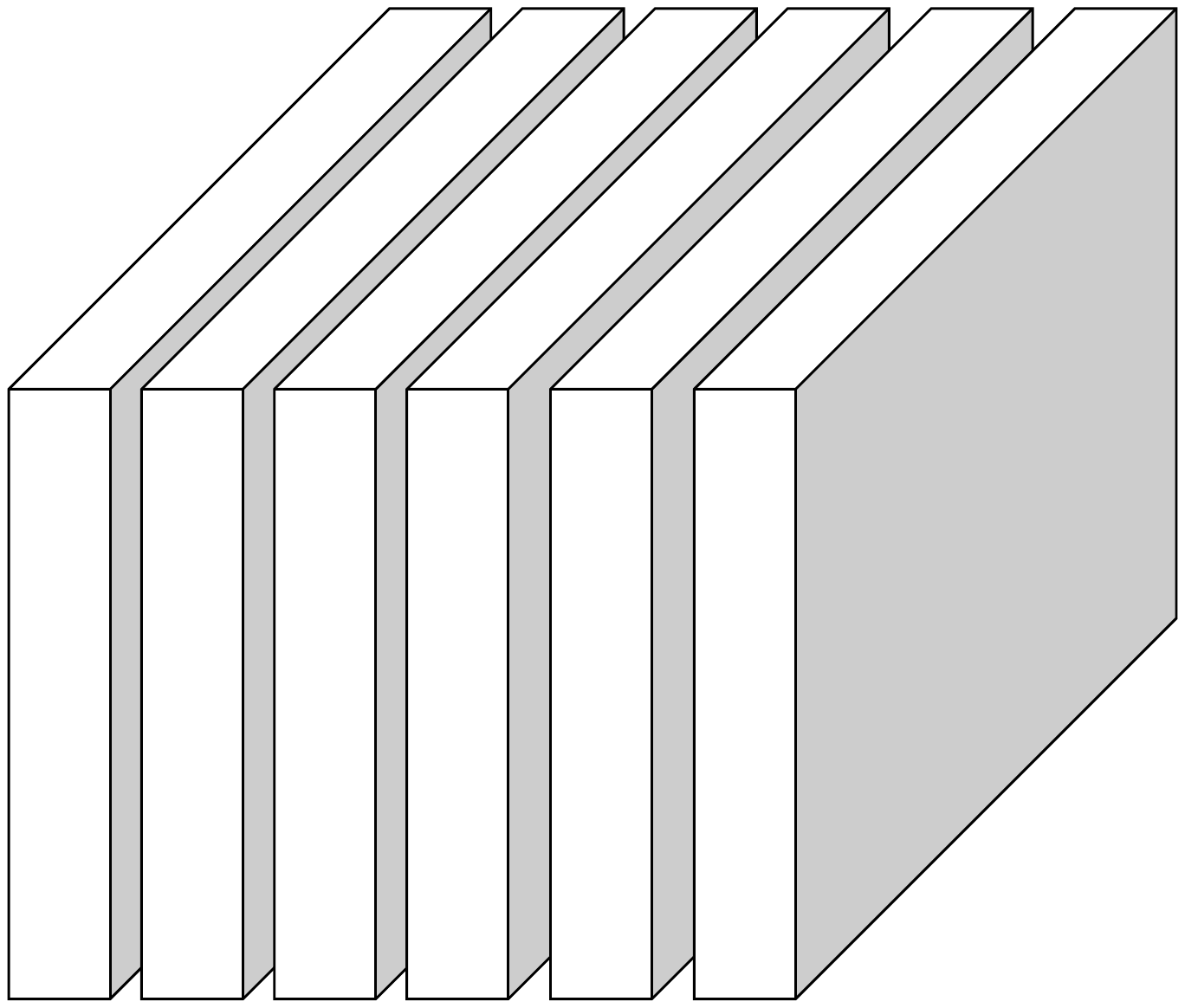}
      \includegraphics[clip,trim=5cm 2.5cm 6cm 5cm,width  = 0.31\textwidth]{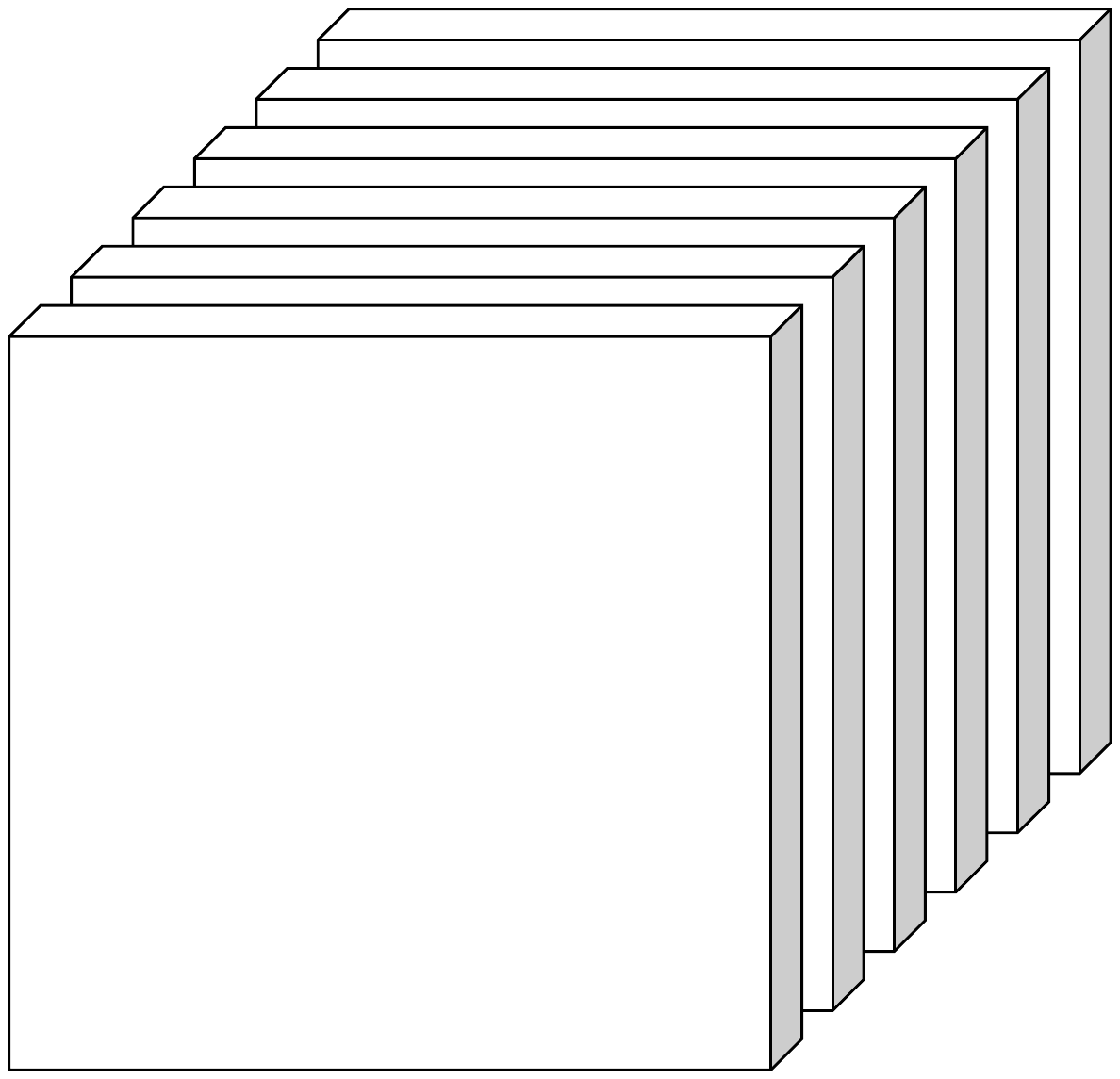}
 	\caption{Slices of 3-order tensor (a) horizontal $\tensor{X}(i,:,:)$ (b)  lateral $\tensor{X}(:,j,:)$, (c) frontal $\tensor{X}(:,:,k)$.}
	\label{chapter2fig:slice}
\end{figure}
\begin{definition}
\label{def:rank1}
\textbf{\em{Rank-1 Tensor}} A $N$-mode tensor is of rank-1 if it can be strictly decomposed into the outer product of $N$ vectors. Therefore, we add different scaling of a sub-tensor as we introduce more modes when reconstructing the full tensor. A rank-$1$ $3$-mode tensor can be written  as $\tensor{X} = \mathbf{a} \circ \mathbf{b}  \circ \mathbf{c}$. A pictorial view of the rank-$1$ concept is shown in Figure (\ref{chapter2fig:rank1}).
\begin{figure}[!ht]
	\begin{center}
		\includegraphics[clip,trim=4cm 6cm 4cm 4cm,width = 0.6\textwidth]{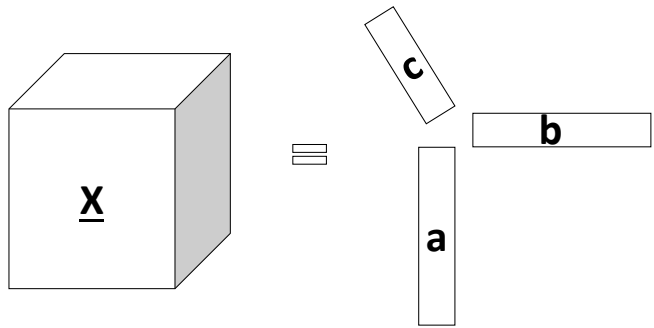}
		\caption{A rank-1 mode-3 tensor.}
		\label{chapter2fig:rank1}
	\end{center}
\end{figure}

\end{definition}
\begin{definition}
\label{def:tensorrank1}
\textbf{\em{Tensor Rank}} The rank of a tensor rank($\tensor{X}$) = $R$ is defined as the minimum number of rank-1 tensors that are required to produce $\tensor{X}$ as their sum.
\end{definition}
\begin{definition}
\label{def:outer}
\textbf{\em{Outer product}} of two vectors $\mathbf{a}$ and $\mathbf{b}$ of dimensions $I$ and $J$, respectively is defined as:
\begin{equation}
\label{eq:op}
a \circ b =  ab^T 
\end{equation}
and their outer product is an $I \times J$ matrix.
\end{definition}

\begin{definition}
\textbf{\em{Inner product}} of two equal-sized tensors $\tensor{X},\tensor{Y} \in \mathbb{R}^{I_1 \times I_2 \times \dots \times I_N}$ is the sum of the products of their elements.

\begin{equation}
    \langle \tensor{X}, \tensor{Y} \rangle  = \sum_{i_1 = 1}^{I_1}\sum_{i_1 = 1}^{I_1} \dots \sum_{i_n = 1}^{I_N} x_{i_1i_2\dots i_N}y_{i_1i_2\dots i_N}
\end{equation}
\end{definition}

\begin{definition}
\label{def:Frobenius}
\textbf{\em{Frobenius Norm}} of a matrix $\mathbf{A}$ and $\tensor{X}$ is computed as the square root of the sum of the squares of all its elements as given in equation \ref{eq:normmatrix} and \ref{eq:normtesor} respectively.
  \begin{equation}
\label{eq:normmatrix}
||\mathbf{A}||_F^2  = \sqrt{\sum_{i = 1}^{I}\sum_{j = 1}^J  a_{ij}^2}
 \end{equation}
 For tensors:

\begin{equation}
\label{eq:normtesor}
||\tensor{X}||_F^2 = \sqrt{\sum_{i_1 = 1}^{I_1}\sum_{i_1 = 1}^{I_1} \dots \sum_{i_n = 1}^{I_N} x_{i_1i_2\dots i_N}^2}
 \end{equation}
\end{definition}
 
\subsection{Tensor Reordering}
\begin{definition}
\label{def:Matricization}
\textbf{\em{N-mode Matricization}} It is the method that reorder the tensor into a matrix \cite{papalexakis2016tensors}. This is also know as flattening or unfolding. Given a tensor $\tensor{X} \in \mathbb{R}^{I_1 \times I_2 \times \dots \times I_N}$, the folding is denoted as $\mathbf{X}_{(n)} \in \mathbb{R}^{I_n \times \Pi_{i\neq n}^{N}}I_i$ and the concept is easier to understand using a brief example of the matricization of a third-order tensor as shown in Figure \ref{chapter2fig:matricization}.
\begin{figure}[!ht]
	\begin{center}
		\includegraphics[clip,trim=0cm 4cm 1cm 3cm,width = 0.8\textwidth]{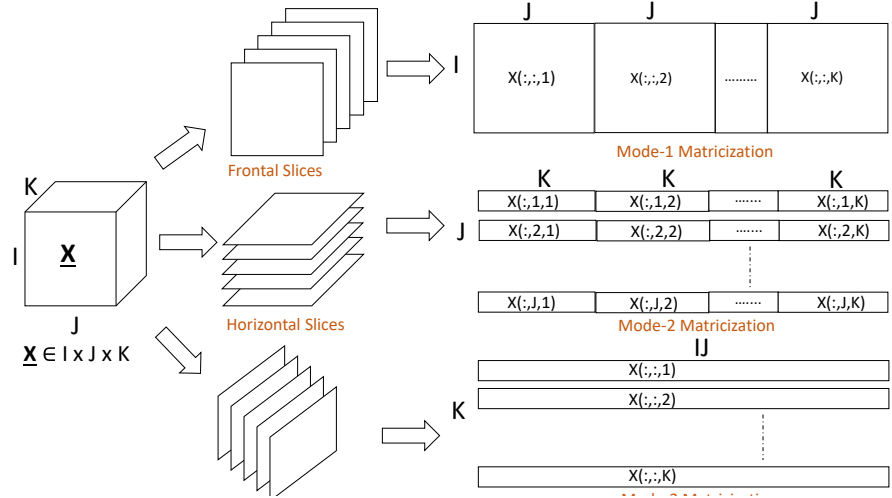}
		\caption{Mode-3 tensor matricization}
		\label{chapter2fig:matricization}
	\end{center}
\end{figure}
\end{definition}

 \begin{definition}
\textbf{Vectorization}: The vectorization of a tensor can be computed by vertically stacking the columns of N-mode unfolded tensor as:
$$
vec(\tensor{X}) = \begin{bmatrix}
	x_{111}  \\   
	x_{112} \\
	\vdots   \\
	x_{ijk} \\
	\end{bmatrix}
$$
\end{definition}

\subsection{Matrix/Tensor Products}
\begin{definition}
\label{def:Kronecker}
\textbf{\em{Kronecker product}}\cite{papalexakis2016tensors} is denoted by symbol $\otimes$ and the Kronecker product of two matrices $\mathbf{A} \in \mathbb{R}^{I \times L} $ and $\mathbf{B} \in \mathbb{R}^{J \times L} $ results in matrix size of ($IJ \times L^2$) and it is defined as:
\begin{equation}
	\label{eq:kronecker}
	\mathbf{A} \otimes \mathbf{B} =\begin{bmatrix}
	a_{11}\mathbf{B} \ \  a_{12}\mathbf{B} \ \ \dots \ \  a_{1L}\mathbf{B}\\
	a_{21}\mathbf{B} \ \  a_{22}\mathbf{B} \ \ \dots \ \ a_{2L}\mathbf{B}\\
	\vdots  \  \ \  \  \  \  \  \  \  \ \vdots \  \  \  \ \  \  \  \  \ \vdots \\
	a_{I1}\mathbf{B} \ \  a_{I2}\mathbf{B} \ \ \dots \ \ a_{IL}\mathbf{B}\\
	\end{bmatrix}
	\end{equation}
\end{definition} 
\begin{definition}
\label{def:krao}
\textbf{\em{Column-wise Khatri-Rao product}}\cite{papalexakis2016tensors} : It is denoted by symbol $\odot_{c}$ and the column-wise Khatri-Rao product of two matrices $\mathbf{A} \in \mathbb{R}^{I \times L} $ and $\mathbf{B} \in \mathbb{R}^{J \times L} $, $\mathbf{A} \odot_{c} \mathbf{B} \in \mathbb{R}^{IJ \times L}$ is defined as:
\begin{equation}
\label{eq:CKhatri}
\mathbf{A} \odot_{c} \mathbf{B} =\begin{bmatrix}
\mathbf{a}_{1}\otimes \mathbf{b}_{1} \ \  \mathbf{a}_{2}\otimes \mathbf{b}_{2} \ \ \dots \mathbf{a}_{L}\otimes \mathbf{b}_{L}\\
\end{bmatrix}
\end{equation}
\end{definition}
In case of $a$ and $b$ are in vector form, then the Kronecker and column-wise Khatri-Rao products are same, i.e., $\mathbf{a} \otimes  \mathbf{b} = \mathbf{a} \odot_{c} \mathbf{b}$.
\begin{definition}
\label{def:pKronecker}
\textbf{\em{Partition-wise Kronecker product}} \cite{li2013variants,de2008decompositions} : Let $\mathbf{A} =$ $[\mathbf{A}_1 \ \ \mathbf{A}_2$  $\dots \mathbf{A}_R] \in \mathbb{R}^{I \times LR}$ and $\mathbf{B} = [\mathbf{B}_1  \ \ \mathbf{B}_2 \dots \mathbf{B}_R]\in \mathbb{R}^{I \times LR}$ are two partitioned matrices. The partition-wise Kronecker product is defined by:
\begin{equation}
\label{eqbtd:pKronecker}
\mathbf{A} \odot \mathbf{B} =\begin{bmatrix}
\mathbf{A}_{1} \otimes \mathbf{B}_{1} \ \  \mathbf{A}_{2}\otimes \mathbf{B}_{2} \ \ \dots \ \ \mathbf{A}_{R} \otimes \mathbf{B}_{R}\\
\end{bmatrix}
\end{equation}
\end{definition} 

 \begin{definition}
\label{def:Hadamard}
\textbf{\em{Hadamard product}} of two same size matrices $\mathbf{A}$ and $\mathbf{B}$ is their element-wise product. 
\begin{equation}
	\label{eq:hadmad}
	\mathbf{A} \circledast \mathbf{B} =\begin{bmatrix}
	a_{11}b_{11}  \ \  a_{12}b_{12} \ \ \dots \ \  a_{1J}b_{1J}\\
	a_{21}b_{21} \ \  a_{22}b_{22} \ \ \dots \ \ a_{2J}b_{2J}\\
	\vdots  \  \ \  \  \  \  \  \  \  \ \vdots \  \  \  \ \  \  \  \  \ \vdots \\
	a_{I1}b_{I1} \ \  a_{I2}b_{I2} \ \ \dots \ \ a_{IJ}b_{IJ}\\
	\end{bmatrix}
	\end{equation}
\end{definition}
\begin{definition}
\label{def:nwayprod}
\textbf{\em{N-way product}} \cite{li2013variants,de2008decompositions} :  Given an N-mode tensor $\tensor{X} \in \mathbb{R}^{I_1 \times I_2 \times \dots \times I_N}$ and a matrix $\mathbf{A}\in \mathbb{R}^{I_n \times R}$, the n-mode product is computed as $\tensor{Y} = \tensor{X} \times_n \mathbf{A}$ 
 
\begin{equation}
\label{eqbtd:nway}
\tensor{Y}(i_1,\dots,i_{n-1},r,i_{n+1}\dots,i_n) = \sum_{j=1}^{I_n} \tensor{X}(i_1,\dots,i_{n-1},j,i_{n+1}\dots,i_n) \mathbf{A}(j,r)
 \end{equation}
 
\end{definition}

\section{Tensor Decompositions and Applications}
In this section, we introduce three popular tensor decompositions, namely the CANDECOMP/PARAFAC (CP) decomposition, PARAFAC2 decomposition and Block term decomposition. We refer the interested reader to several well-known surveys that provide more details on other tensor decompositions and their applications \cite{kolda2009tensor,papalexakis2016tensors}.

\subsection{Canonical Polyadic Decomposition }
\label{def:cp}
The most popular and widely used tensor decompositions are the CANonical DECOMPosition (CANDECOMP) and the PARAllel FACtors (PARAFAC) decomposition \cite{carroll1970analysis,PARAFAC,bader2015matlab}. Both are originated from different knowledge domains and evolved independently over times, but they both boil down to the same principles, henceforth referred as  canonical polyadic (CP) decomposition. The CP decomposition of a $N$-mode tensor $\tensor{X} \in \mathbb{R}^{I_1\times I_1 \dots \times I_N} $ is defined as the sum of  outer product rank-1 components given in Equ. \ref{eq:CPpart1}:
\begin{equation}
\label{eq:CPpart1}
\begin{aligned}
\mathcal{L}  &=\argminA_{\mathbf{a}_{1_r},\mathbf{a}_{2_r},\dots,\mathbf{a}_{N_r}}||\tensor{X} - \sum_{r=1}^R \lambda_r \mathbf{a}_{1_r} \circ \mathbf{a}_{2_r} \dots \circ \mathbf{a}_{N_r}||_F^2 \\
& =\argminA_{\mathbf{a}_{1_r},\mathbf{a}_{2_r},\dots,\mathbf{a}_{N_r}}||\tensor{X} - [  \lambda;\mathbf{A}_1,\mathbf{A}_2,\dots,\mathbf{A}_N]||_F^2
\end{aligned}
\end{equation}

The factor matrices $(\mathbf{A}_1,\mathbf{A}_2,\dots,\mathbf{A}_N)$ are the combination of the vectors from the rank-1 components:
\begin{equation}
\label{eq:CPfac}
\mathbf{A}_i = [ \mathbf{a}_{i}^{(1)} \ \ \mathbf{a}_{i}^{(2)} \dots \mathbf{a}_{i}^{(R)}]
\end{equation}

We can formalize the CP decomposition of 3-mode tensor as follows:
\begin{equation}
\label{eq:obcpd}
    \argmin ||\tensor{X}-\tensor{\hat{X}}|| \quad \text{where} \quad \tensor{\hat{X}} = \argminA_{\mathbf{a}_{r} , \mathbf{b}_{r}, \mathbf{c}_{r}} \sum_{r=1}^R \mathbf{a}_{r} \circ \mathbf{b}_{r}\circ \mathbf{c}_{r} 
\end{equation}

where $\mathbf{a}\in \mathbb{R}^I$, $\mathbf{b}\in \mathbb{R}^J$ and $\mathbf{c}\in \mathbb{R}^K$. $R$ is the number of components or latent factors and also known as rank of the tensor. A pictorial view of the CP decomposition of 3-mode tensor is given in Figure \ref{chapter2fig:cpd}. Due to its ease of interpretation, CP decomposition has became one of the most popular tensor decomposition techniques that has been extensively and widely applied in different fields. The exact case when $||\tensor{X}-\tensor{\hat{X}}|| = 0$, we refer to $\tensor{\hat{X}}$ being a low rank approximation of the original tensor $\tensor{X}$ and can be written as matricized form as:

\begin{equation}
\begin{aligned}
\mathbf{\hat{X}}_{(1)} = & (\mathbf{C} \odot \mathbf{B})\mathbf{A}^T\\
\mathbf{\hat{X}}_{(2)} = & (\mathbf{C} \odot \mathbf{A})\mathbf{B}^T\\
\mathbf{\hat{X}}_{(3)} =  & (\mathbf{B} \odot \mathbf{A})\mathbf{C}^T
\end{aligned}
\end{equation}

\begin{figure}[!ht]
	\begin{center}
		\includegraphics[clip,trim=0cm 9cm 0cm 5cm,width = 0.7\textwidth]{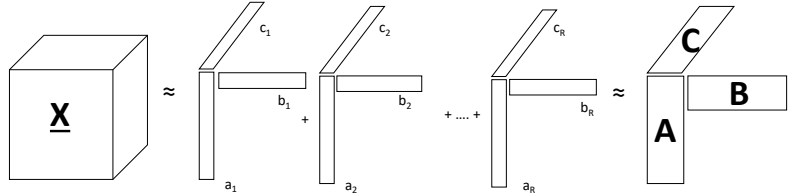}
		\caption{CP Decomposition of a third-order tensor $\tensor{X} \in \mathbb{R}^{I\times J  \times K}$.}
		\label{chapter2fig:cpd}
	\end{center}
\end{figure}

\subsubsection{Alternating Least Squares (ALS) Algorithm}
The objective function in Equ.\ref{eq:obcpd} is highly non-convex and thus hard to directly optimize. However, we use Alternating Least Squares (ALS), a form of Block Coordinate Descent (BCD) optimization algorithm, in order to solve the problem. The main idea behind ALS is the following: when fixing all optimization variables except for one, the problem essentially boils down to linear least squares problem which can be solved optimally. Thus, ALS cycles over all the optimization variables and updates them iteratively until the value of the objective function stops changing between consecutive iterations. For the 3-way tensor case, the ALS algorithm would perform the following steps repeatedly until convergence.
\begin{equation}
\begin{aligned}
\mathbf{A} \leftarrow & \argminA_A ||\mathbf{X}_{(1)} - (\mathbf{C} \odot \mathbf{B})\mathbf{A}^T||_F^2\\
\mathbf{B} \leftarrow & \argminA_B ||\mathbf{X}_{(2)} - (\mathbf{C} \odot \mathbf{A})\mathbf{B}^T||_F^2\\
\mathbf{C} \leftarrow & \argminA_C ||\mathbf{X}_{(3)} - (\mathbf{B} \odot \mathbf{A})\mathbf{C}^T||_F^2
\end{aligned}
\end{equation}

To this minimization problem the optimal solution is given by:

\begin{equation}
\begin{aligned}
\mathbf{A} =  &  \mathbf{X}_{(1)}(\mathbf{C} \odot \mathbf{B}) (\mathbf{C}^T \mathbf{C} \ast \mathbf{B}^T \mathbf{B} )^{\dagger}\\
\mathbf{B} =  &  \mathbf{X}_{(2)}(\mathbf{C} \odot \mathbf{A}) (\mathbf{C}^T \mathbf{C} \ast \mathbf{A}^T \mathbf{A} )^{\dagger}\\
\mathbf{C} =  &  \mathbf{X}_{(3)}(\mathbf{B} \odot \mathbf{A}) (\mathbf{B}^T \mathbf{B} \ast \mathbf{A}^T \mathbf{A} )^{\dagger}
\end{aligned}
\end{equation}

The ALS algorithm \ref{alg:cpals} is simple to understand and implement but it is possible that it might take lot of iterations to converge and it is also possible that it might not converge to a global optimum. The performance is highly depends on the initialization. 

\begin{algorithm2e}[H]
   \caption{Classic CP-ALS } 
     \label{alg:cpals}
	 \SetAlgoLined
      \KwData{$\tensor{X} \in \mathbb{R}^{I \times J \times K}$  and target rank $R$}
      \KwResult{ Factor matrices $\mathbf{A} \in \mathbb{R}^{I \times R}$,$\mathbf{B} \in \mathbb{R}^{J \times R}$, $\mathbf{C} \in \mathbb{R}^{J \times R}$ and $\lambda \in \mathbb{R}^R$  }
      Initialize $\mathbf{B}, \mathbf{C}$\\
      \While{\text{convergence criterion is not met}}{
       $\mathbf{A} \leftarrow \mathbf{X}_{(1)} (\mathbf{C} \odot \mathbf{B})(\mathbf{C}^T\mathbf{C} \ast \mathbf{B}^T\mathbf{B})$ \\
       \For{r=1,$\dots$,R}{$\lambda_r  = ||\mathbf{A}(:,r)||$,  $\mathbf{A}(:,r) = \mathbf{A}(:,r)/\lambda_r$ \\}
       $\mathbf{B} \leftarrow \mathbf{X}_{(2)} (\mathbf{C} \odot \mathbf{A})(\mathbf{C}^T\mathbf{C} \ast \mathbf{A}^T\mathbf{A})$ \\
       \For{r=1,$\dots$,R}{$\lambda_r  = ||\mathbf{B}(:,r)||$,  $\mathbf{B}(:,r) = \mathbf{B}(:,r)/\lambda_r$ \\} 
        $\mathbf{C} \leftarrow \mathbf{X}_{(3)} (\mathbf{B} \odot \mathbf{A})(\mathbf{B}^T\mathbf{B} \ast \mathbf{A}^T\mathbf{A})$ \\
       \For{r=1,$\dots$,R}{
       $\lambda_r  = ||\mathbf{C}(:,r)||$,  $\mathbf{C}(:,r) = \mathbf{C}(:,r)/\lambda_r$\\
       } 
      }
 \end{algorithm2e}  

\subsubsection{Uniqueness}
When there is only one possible combination of rank-1 tensors that sums to $\tensor{X}$, with the exception of scaling indeterminacy and permutation of vectors, the solution is refers as unique solution. The scaling indeterminacy (Equ. \ref{scaling}) is refers as the certainty that we can scale the individual vectors and permutation (Equ. \ref{permutation}) refers as the rank-1 component of the tensor can be arbitrarily reordered. 
\begin{equation}
\label{scaling}
    \tensor{X} = \sum_{r=1}^R (\alpha_r\mathbf{a}_r) \circ (\beta_r \mathbf{b}_r) \circ (\gamma_r \mathbf{c}_r) \quad \text{s.t.} \quad  \alpha_r\beta_r\gamma_r = 1
\end{equation}

\begin{equation}
\label{permutation}
    \tensor{X} = [\![\mathbf{A}, \mathbf{B}, \mathbf{C}]\!] = [\![\mathbf{A}\Pi, \mathbf{B}\Pi, \mathbf{C}]\Pi\!], \quad  \Pi \in \mathbb{R}^{R \times R}
\end{equation}
It is observed that rank decomposition are not generally unique for the matrices and similar is observed for higher order tensors. To analyze it further, we need to understand the concept of k-rank.

\begin{definition}
\textbf{Kruskal rank}: The $k$-rank of a matrix $\mathbf{A}$, denoted by $k_{\mathbf{A}}$, is defined as the maximum number $k$ such that any $k$ columns are linearly independent \cite{kruskal1977three}.
\end{definition}

The sufficient condition of the uniqueness of the CP decomposition of N-mode tensor is in Equ. \ref{cpsuffcondNmode}:
\begin{equation}
\label{cpsuffcondNmode}
   \sum_{n=1}^N k_{\mathbf{A}_n} \geq 2R+(N-1)
\end{equation}

For example, for 3-mode tensor sufficient condition of the uniqueness is in Equ. \ref{cpsuffcond3mode}:
\begin{equation}
\label{cpsuffcond3mode}
   k_{\mathbf{A}} + k_{\mathbf{B}}+k_{\mathbf{C}} \geq 2R+2
\end{equation}

The above equation provide only sufficient condition for CP's uniqueness. Berge el at \cite{ten2002uniqueness} proved that the sufficient condition is not enough for the tensor with $R>3$. Sidiropoulos and Liu \cite{liu2001cramer} provided the necessary conditions for CP's uniqueness for N-mode tensor as:

\begin{equation}
\label{cpnesscond3mode}
   \min_{1,\dots,N}\Big(\Pi_{m=1,m\neq n}^N \text{rank}(\mathbf{A}_m)\Big) \geq R
\end{equation}

\subsubsection{Applications of CP}
CP decomposition is widely poplar tensor decomposition since 1970. Carroll and Chang \cite{carroll1970analysis} uses CP decomposition in psychometrics to analyze multiple similarity or dissimilarity matrices collected for a variety of subjects. The main concept was that by simply averaging the data across all subjects, various points of view on the data would be eliminated. They used the method on two sets of data: one set of auditory tones from Bell Labs and another set of country comparisons. In \cite{harshman1970foundations}, used CP to obtain explainable factors to reduce the PCA ambiguity.  CP decomposition has been shown to be useful in the fluorescence data modeling in chemometrics, and its application can be found in \cite{appellof1981strategies,andersen2003practical,smilde2005multi}. For neuroscientists (fMRI data), CP decomposition is a common analyzing method \cite{mocks1988topographic}. Similar to fMRI \cite{davidson2013network}, EEG data is also a typical type of data being analyzed by CP decomposition to find the source of various seizure \cite{acar2007multiway,morup2006parallel,morup2008shift,morup2011modeling}.

From last decade, data mining researchers showed wide interest in the CP decomposition. In \cite{acar2005modeling}, author constructed tensor data from online chat-room and used CP decomposition to show its effectiveness. Bader, Berry, and Browne \cite{bader2008discussion} used CP decomposition to automatically detect email conversations from enron data. In \cite{papalexakis2013more,mao2014malspot}, CP is used for community detection or clustering from multi-aspect data and network traffic modeling for time-evolving networks \cite{araujo2014com2,dunlavy2011temporal}. Various applications focused on outlier detection on the temporal factor \cite{papalexakis2012parcube}. \cite{agrawal2015study} explored the differences and various similarities between search engines using CP decomposition. Other application domains includes healthcare \cite{ho2014marble,yang2017tagited,yin2018joint}, signal processing \cite{nion2009batch,sidiropoulos2017tensor,sidiropoulos2000blind} and recommendation system \cite{kutty2012people,rendle2010factorization,park2020estimating}. Overall, tensor is a strong data modeling method for a wide range of real-world data, and CP decomposition has a lot of potential for analysis.

\subsection{PARAFAC2 Decomposition}
PARAFAC2 model \cite{harshman1972parafac2} differs from CP/PARAFAC \cite{bader2015matlab,carroll1970analysis,PARAFAC} where a low-rank trilinear model is not required. The CP decomposition applies the same factors across all the different modes, whereas PARAFAC2 allows for non-linearities such that variation across the values and/or the size of one mode as shown in Fig \ref{poplarfig:poplar}. PARAFAC2 with factors $\mathbf{U}, \mathbf{V}$ and $\mathbf{W}$ can be written w.r.t. the frontal slices of the tensor $\tensor{X}$ as: 
\begin{equation}
\label{eq:parafac21}
\tensor{X}_k =  \mathbf{U}_k \mathbf{S}_k \mathbf{V}^T
\end{equation}		
where $k = 1,\dots ,K$, $\mathbf{U}_k \in \mathbb{R}^{I_k \times R}$, $\mathbf{S}_k = diag(W(k, :)) \in \mathbb{R}^{R \times R}$ is diagonal matrix,  and $\mathbf{V} \in \mathbb{R}^{J \times R}$. To preserve the uniqueness of the solution, the paper \cite{harshman1972parafac2} proposed the constraint that the cross product $\mathbf{U}_k^T\mathbf{U}_k$ is invariant regardless of the subject $k$ is considered. For this constraint to hold, $\mathbf{U}_k$ can be computed as given in Equ. \ref{eq:uk}:
\begin{equation}
\label{eq:uk}
   \mathbf{U}_k\approx  \mathbf{Q}_k \mathbf{H}
\end{equation}

where $\mathbf{Q}_k$ is orthogonal matrix across columns and of size $\mathbb{R}^{I_k \times R}$. The matrix $\mathbf{H}$ is a positive definite matrix independent of $k$ and of size $\mathbb{R}^{R \times R}$. Therefore,
\begin{equation}
\label{eq:uktuk}
   \mathbf{U}_k^T \mathbf{U}_k =  \mathbf{H}^T \mathbf{Q}_k^T \mathbf{Q}_k \mathbf{H} = \mathbf{H}^T \mathbf{I}_k \mathbf{H} = \Phi
\end{equation}

Given the above modeling, the standard algorithm to solve PARAFAC2 for data $\tensor{X}$ tackles the following optimization problem:
\begin{equation}
\label{eq:parafac2}
\min_{\mathbf{\{U_k\},\{S_k\},V}}  \sum_{k=1}^K\|\tensor{X}_k - \mathbf{U}_k \mathbf{S}_k \mathbf{V}^T \|_F^2
\end{equation}	
subject to $\mathbf{U}_k = \mathbf{Q}_k \mathbf{H}$, $\mathbf{Q}_k^T\mathbf{Q}_k=\mathbf{I}$, and $\mathbf{S}_k$ is diagonal matrix. The $\mathbf{U}_k$ decomposed into two matrices, $\mathbf{Q}_k$ that has orthonormal columns and $\mathbf{H}$ which is invariant regardless of k. 

To solve Eq (\ref{eq:parafac2}), most common method is Alternating Least Square (ALS) that updates $\mathbf{Q}_k$ by fixing other factor matrices i.e $ \mathbf{H}, \mathbf{W}$, and $\mathbf{V}$. The orthogonal coupling matrix $\mathbf{Q}_k$ can be obtained by Singular Value decomposition (SVD) of ($\mathbf{H}\mathbf{C}\mathbf{B}^T \tensor{X}_k^T) = [\mathbf{P}_k, \Sigma_k, \mathbf{Z}_k^T]$.  With $\mathbf{Q}_k^T = \mathbf{Z}_k\mathbf{P}_k^T$ fixed, the rest of factors can be obtained as:
\begin{equation}
\label{eq:parafac2_2}
\begin{aligned}
 \mathcal{L} = & \min_{\mathbf{H},\{\mathbf{S}_k\},\mathbf{V}}\frac{1}{2}||\mathbf{Q}_k\tensor{X}_k - \mathbf{H}\mathbf{S}_k\mathbf{V}^T||^F_2 s.t. \mathbf{Q}_k\mathbf{Q}_k^T=\mathbf{I}_r\\
 &\min_{\mathbf{H},\{\mathbf{S}_k\},\mathbf{V}}\frac{1}{2}||\tensor{Y} - \mathbf{H}\mathbf{S}_k\mathbf{V}^T||^F_2\\
 \end{aligned}
\end{equation}
A pictorial view of the PARAFAC2 decomposition is shown in Figure (\ref{chapter2fig:parafac2})

\begin{figure}[!ht]
	\begin{center}
		\includegraphics[clip,trim=0cm 3cm 0cm 5cm,width = 0.8\textwidth]{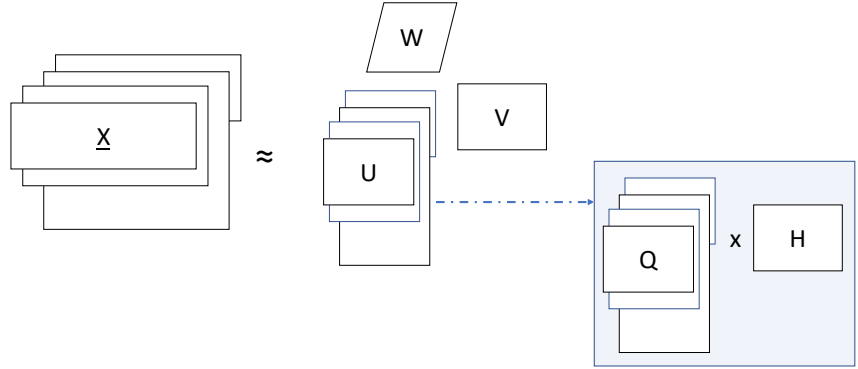}
		\caption{PARAFAC2 Decomposition of a third-order tensor $\mathbf{X}_k \in \mathbb{R}^{I_k\times J}$ \quad $k \in [1,\dots,K]$.}
		\label{chapter2fig:parafac2}
	\end{center}
\end{figure}

The constrained version of Equ. (\ref{eq:parafac2_2})  can be written as:
\begin{equation}
\label{eq:con}
\mathcal{L} = \min_{\mathbf{H},\{\mathbf{S}_k\},\mathbf{V}}\frac{1}{2}||\tensor{Y} - \mathbf{H}\mathbf{S}_k\mathbf{V}^T||^F_2 +  \frac{\alpha}{2}(\mathcal{R}(\tensor{Y}; \mathbf{H}, \mathbf{S}_k, \mathbf{V}))
\end{equation}

PARAFAC2 is unique under certain conditions pertaining to the number of matrices (K), full column rank of $\mathbf{U}$, the positive definiteness of $\Phi$, and non-singularity of $\mathbf{S}_k$ \cite{harshman1996uniqueness,ten1996some}. There have been several published work regarding the uniqueness property of PARAFAC2 \cite{harshman1996uniqueness,kolda2009tensor,ten1996some}. However the most relevant towards the large-scale data is that PARAFAC2 is unique for $K\geq 4$ \cite{stegeman2016multi}.The classic method of PARAFAC2 (Algorithm \ref{alg:parafac2})is limited to dense data and require large amount of resources (time and space) to process big data.The author \cite{perros2017spartan} proposed method namely SPARTAN (Scalable PARAFAC2) for large and sparse tensors. The speed up of the process is obtained by modifying core computational kernel.The author \cite{afshar2018copa} proposed constrained version of Scalable PARAFAC2. But these methods are limited to static data. In this era, data is growing very fast and a recipe for handling the limitations is to adapt existing approaches using online techniques. To our best knowledge, there is no work in the literature that deals dynamic PARAFAC2 tensor decomposition. To fill the gap, we propose a scalable and efficient (time and space) method to find the PARAFAC2 decomposition for streaming large-scale high-order PARAFAC2 data in chapter \ref{ch:10}. 
\begin{algorithm2e}[!htp]
   \caption{Classic PARAFAC2-ALS } 
     \label{alg:parafac2}
	 \SetAlgoLined
      \KwData{ \{$\mathbf{X}_k \in \mathbb{R}^{I_k \times J}\}$ for k = 1,$\dots$,K and target rank $R$}
      \KwResult{ Factor matrices \{$\mathbf{U}_k \in \mathbb{R}^{I_k \times R}$\}, \{$\mathbf{S}_k \in \mathbb{R}^{R \times R \times K}$\}  $\forall$ k, $\mathbf{V} \in \mathbb{R}^{J \times R}$ and $\mathbf{H} \in \mathbb{R}^{R \times R}$}
       Initialize $\mathbf{H}$, $\mathbf{V}$ and $\mathbf{S}_k$\\
       \While{\text{convergence criteria}}{
       \For{k = 1,$\dots$,K}{
       $[\mathbf{P}_k, \Sigma_k, \mathbf{Z}_k^T]$ = \text{SVD}$(\mathbf{H}\mathbf{S}_k\mathbf{V}^T \mathbf{X}_k^T)$\\
       $\mathbf{Q}_k = \mathbf{Z}_k\mathbf{P}_k^T$\\
       }
       \For{k = 1,$\dots$,K}{
       $\mathbf{Y}_k = \mathbf{Q}_k^T\mathbf{X}_k$\\
       }
       [$\mathbf{H},\mathbf{V},\mathbf{W}$] = \text{CP-ALS} ($\mathbf{Y}_k$)\\
       \For{k = 1,$\dots$,K}{
       $\mathbf{S}_k = diag(\mathbf{W}(k,:))$\\
       }
       \For{k = 1,$\dots$,K}{
       $\mathbf{U}_k = \mathbf{Q}_k\mathbf{H}$\\
       }
       }
 \end{algorithm2e}    
\subsubsection{PARAFAC2 Applications}
Bro et al \cite{bro1999parafac2} used PARAFAC2 to handle time shifts in resolving chromatographic data with spectral detection. In this application, the irregular mode corresponds to elution time. The PARAFAC2 model did not assume parallel proportional elution profiles but in the application the elution profiles matrix preserves its “inner-product structure” across samples. Wise et al \cite{wise2001application} applied PARAFAC2 to the problem of fault detection in a semiconductor etch process. Chew et al. \cite{cattell1944parallel} used PARAFAC2 for clustering documents across multiple languages

PARAFAC2, a linear decomposition method, is well suited for the data and yields robust, valid, and automated models that allow for the detection of erroneous measurements. Most recently, Perros et al \cite{perros2017spartan}, Afshar et al \cite{afshar2018copa} and Yin et al \cite{yin2021tedpar} used PARAFAC2 for the mining of temporally-evolving phenotypes on data taken from medically complex pediatric patients. The idea is to extend classic method to large scaled sparse data and incorporate constraints such as temporal smoothness, sparsity, and non-negativity in the resulting factors. The resulting PARAFAC2 decomposition is such that factors reveal concise phenotypes and meaningful temporal profiles of patients.  

Augustijn et al \cite{augustijn2020isothermal} used for the isothermal chemical denaturation (ICD) data and yields robust, valid, and automated models that allow for the detection of erroneous measurements.  The idea is to model the data entirely by PARAFAC2 method and the, detect outliers using a cutoff based on their contribution to the residual tensor. The outliers are removed, and another PARAFAC2 model is obtained and validated. Finally, outliers can then predicted in the denaturant dependent mode. 


\subsection{Block Term Decomposition}
\label{def:blockterm}
De Lathauwer et al. \cite{de2008decompositions,de2008decompositions1,de2008decompositions2} introduced a new type of tensor decomposition that unifies the Tucker and the CP decomposition and referred as Block Term Decomposition (BTD). The BTD of a 3-mode tensor $\tensor{X} \in \mathbb{R}^{I\times J  \times K} $, shown in figure \ref{chapter2fig:btd}, is a sum of rank-(L, M, N) terms is a represented as:
\begin{equation}
\label{chapter2eq:btd}
\tensor{X} \approx \sum_{r=1}^R \mathcal{G}_r \times_1 \mathbf{A}_r \times_2 \mathbf{B}_r \times_3 \mathbf{C}_r 
\end{equation}
\begin{figure}[!ht]
	\vspace{-0.15in}
	\begin{center}
		\includegraphics[clip,trim=0cm 7.5cm 0cm 3cm,width = 0.7\textwidth]{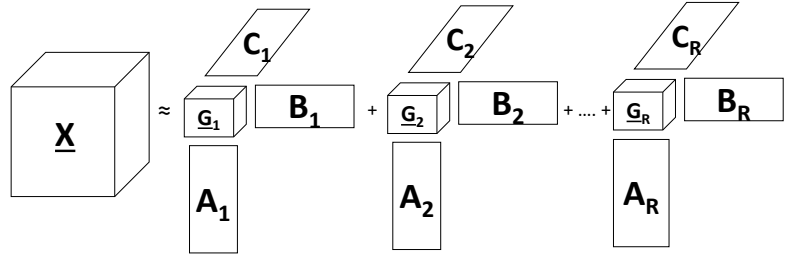}
		\caption{BTD -(L, M, N) for a third-order tensor $\tensor{X} \in \mathbb{R}^{I\times J  \times K} $.}
		\label{chapter2fig:btd}
	\end{center}
	\vspace{-0.1in}
\end{figure}

The factor matrices $(\mathbf{A},\mathbf{B},\mathbf{C})$ are defined as $\mathbf{A} =$ $[\mathbf{A}_1 \ \ \mathbf{A}_2$  $\dots \mathbf{A}_R] \in \mathbb{R}^{I \times LR}$, $\mathbf{B} = [\mathbf{B}_1  \ \ \mathbf{B}_2 \dots \mathbf{B}_R]\in \mathbb{R}^{J \times MR}$ and $\mathbf{C} = [\mathbf{C}_1  \ \ \mathbf{C}_2 \dots \mathbf{C}_R]\in \mathbb{R}^{K \times NR}$. The small core tensors $\mathcal{G}_r \in \mathbb{R}^{L \times M \times N}$ are full rank-$(L, M, N)$. If $R$=1, then Block-term and Tucker decompositions are same.  In terms of the standard matrix representations of $\tensor{X}$ , (\ref{chapter2eq:btd}) can be written as:

 \begin{equation}
\label{chapter2eq:btdmat}
\begin{aligned}
{}&\mathbf{X}_{I \times JK}^{(1)} \approx  \mathbf{A} . (blockdiag(\mathcal{G}_1^{(1)} \dots  \mathcal{G}_R^{(1)})) .(\mathbf{C} \odot \mathbf{B})^{T} \\
&\mathbf{X}_{J \times IK}^{(2)} \approx  \mathbf{B} . (blockdiag(\mathcal{G}_1^{(2)} \dots  \mathcal{G}_R^{(2)})) .(\mathbf{C} \odot \mathbf{A})^{T} \\
&\mathbf{X}_{K \times IJ}^{(3)} \approx  \mathbf{C} . (blockdiag(\mathcal{G}_1^{(3)} \dots  \mathcal{G}_R^{(3)})) .(\mathbf{B} \odot \mathbf{A})^{T} 
\end{aligned}
\end{equation}
In terms of the $(IJK \times 1)$ vector representation of $\tensor{X}$ , the decomposition can be written as:
 \begin{equation}
\label{chapter2eq:btdIJK1}
\mathbf{x}_{IJK} \approx  ( (\mathbf{C} \odot \mathbf{B}) \odot\mathbf{A})\begin{bmatrix}
 (\mathcal{G}_1)_{LMN}  \\
 \hdots\\
 (\mathcal{G}_R)_{LMN} \\
\end{bmatrix}
\end{equation}

\textbf{Algorithm}:  
The direct fitting of Equ. (\ref{chapter2eq:btdIJK1}) is difficult. A various types of fitting algorithms \cite{vervliet2016tensorlab,kolda2009tensor} have been derived and discussed in the literature for tensor decompositions. The most common fitting for tensor decomposition is by using ALS (Alternating Least Squares) and approximation loss can be written as:
 \begin{equation}
\label{chapter2eq:ls}
\mathcal{LS}(\tensor{X},\mathbf{A},\mathbf{B},\mathbf{C},\tensor{\mathcal{G}}) =   \argminA_{\mathbf{A},\mathbf{B},\mathbf{C},\tensor{\mathcal{G}}} ||\tensor{X} - \sum_{r=1}^R[\![\mathbf{A}_r,\mathbf{B}_r,\mathbf{C}_r,\tensor{\mathcal{G}}_r]\!]||_F^2
\end{equation}

The main idea behind alternating least squares (ALS) is to fix all the factor matrices except for one, then the problem reduces to a linear least squares which can be solved optimally. The ALS fitting of BTD Decomposition \cite{de2008decompositions2} and to promote reproducibility, we provide clean implementation of the method as described in Algorithm \ref{alg:btdalsmethod}.
\begin{algorithm2e} [!htp]
		\caption{ALS algorithm for BTD Decomposition}
	    \label{alg:btdalsmethod}\textbf{}
			\KwData{$\tensor{X} \in  \mathbb{R}^{I \times J\times  K}$, Rank ($L_r, M_r, N_r$)}
			\KwResult{Factor matrices $\mathbf{A}, \mathbf{B}$, $\mathbf{C}$ and core tensor $\tensor{D}$}
		     Initialize $R$ blocks of $\mathbf{B} \in \mathbb{R}^{J \times M}$, $\mathbf{C} \in \mathbb{R}^{K \times N}$ and $\tensor{\mathcal{G}} \in \mathbb{R}^{L \times M \times N}$\\
		      \While{convergence criterion is not met}{
		       Update $\mathbf{A} \leftarrow \mathbf{X}^{(1)}.(\mathcal{G}_1^{(1)}.(\mathbf{C}_1 \otimes \mathbf{B}_1)^{T} \dots \mathcal{G}_R^{(1)}.(\mathbf{C}_R \otimes \mathbf{B}_R)^{T} )^\dagger$\\
			  Update $\mathbf{B} \leftarrow  \mathbf{X}^{(2)}.(\mathcal{G}_1^{(2)}.(\mathbf{C}_1 \otimes \mathbf{A}_1)^{T} \dots \mathcal{G}_R^{(2)}.(\mathbf{C}_R \otimes \mathbf{A}_R)^{T} )^\dagger$\\
	    	  Update $\mathbf{C} \leftarrow\mathbf{X}^{(3)}.(\mathcal{G}_1^{(3)}.(\mathbf{B}_1 \otimes \mathbf{A}_1)^{T} \dots \mathcal{G}_R^{(3)}.(\mathbf{B}_R \otimes \mathbf{A}_R)^{T} )^\dagger$\\
	    	  Update $\tensor{D} \leftarrow \begin{bmatrix}
 (\mathcal{G}_1)_{LMN} \\
 \vdots\\
 (\mathcal{G}_R)_{LMN} \\
\end{bmatrix} \leftarrow ( (\mathbf{C}_1 \otimes \mathbf{B}_1 \otimes \mathbf{A}_1) \dots (\mathbf{C}_R \otimes \mathbf{B}_R \otimes \mathbf{A}_R) )^{\dagger}\mathbf{x}_{IJK}$\\
			 	 }
\end{algorithm2e}
 
\subsubsection{\textbf{Uniqueness}} It was provided in \cite{de2008decompositions1} that the BTD is essentially unique up to scaling, permutation and the simultaneous post multiplication of $\mathbf{A}_r$ by a non-singular matrix $\mathbf{F}_r \in \mathbb{R}^{L \times L}$,   $\mathbf{A}_r$ by a non-singular matrix $\mathbf{G}_r \in \mathbb{R}^{M \times M}$, and $\mathbf{C}_r$ by a non-singular matrix $\mathbf{H}_r \in \mathbb{R}^{N \times N}$, provided that $\mathcal{G}_r$ is replaced by  $\mathcal{G}_r \bullet_1\mathbf{F}_r^{-1} \bullet_2 \mathbf{G}_r^{-1} \bullet_3 \mathbf{H}_r^{-1}$ and the matrices $[\mathbf{A}_1 \quad \mathbf{A}_2 \dots \mathbf{A}_R]$ and $[\mathbf{B}_1 \quad \mathbf{B}_2 \dots \mathbf{B}_R]$ are full column rank. 

\textbf{Proof of Uniqueness}\cite{de2008decompositions1,de2008decompositions2,de2008decompositions3}:
\label{chapter2sec:proofbtduniuq}
Suppose that the conditions a) $N > L + M - 2$ and b) $k_\mathbf{A}^{'} + k_\mathbf{B}^{'} + k_\mathbf{C}^{'} \geq 2R + 2$, hold and that we have an alternative decomposition of $\tensor{X}$, represented by ($ \overline{\mathbf{A}},\overline{\mathbf{B}},\overline{\mathbf{C}},\overline{\tensor{D}}$), with  $k_{\overline{\mathbf{A}}^{'}}$ and $k_{\overline{\mathbf{B}}^{'}}$ maximal under the given dimensionality constraints. $\overline{\mathbf{A}} = \mathbf{A}  \Pi_a  \Delta_a$ and $\overline{\mathbf{B}} = \mathbf{B} \Pi_b \Delta_b$, in which $\Pi$ is a block permutation matrix and $\Delta$ is a square non-singular block-diagonal matrix, compatible with the structure of factor matrix.

It suffices to prove it for $\mathbf{A}$. The result for $\mathbf{B}$ and $\mathbf{C}$ can be obtained by switching modes. Let $\omega(\mathbf{x})$ denote the number of nonzero entries of a vector $\mathbf{x}$.

\textbf{\em From \cite{de2008decompositions1}, Lemma 5.2 (i), Upper-bound on $\omega^{'}(\mathbf{x^T\overline{\mathbf{A}}})$}: The constraint on $k_{\overline{\mathbf{A}}^{'}}$ implies that $k_{\overline{\mathbf{A}}}^{'} \geq k_\mathbf{A}^{'}$. Hence, if  $\omega^{'}(\mathbf{x^T\overline{\mathbf{A}}}) \leq R -k^{'}_{\overline{\mathbf{A}}} + 1$  then $$\omega^{'}(\mathbf{x^T\overline{\mathbf{A}}}) \leq R -k^{'}_{\overline{\mathbf{A}}} + 1 \leq R -k^{'}_{\mathbf{A}} + 1 \leq k^{'}_{\mathbf{B}} + k^{'}_{\mathbf{C}}  - (R+1)$$
where the last inequality corresponds to condition (a).

\textbf{\em From \cite{de2008decompositions1}, Lemma 5.2 (ii), Lower-bound on $\omega(\mathbf{x^T\overline{\mathbf{A}}})$}: After columns are
sampled in the column space of the corresponding sub-matrix of $\mathbf{B}$ and $\mathbf{C}$, lower bound is 
$$\omega^{'}(\mathbf{x^T\overline{\mathbf{A}}}) \geq min(\gamma,k^{'}_{\mathbf{B}}) + min(\gamma,k^{'}_{\mathbf{C}}) - \gamma$$

\textbf{\em From \cite{de2008decompositions1}, Lemma 5.2 (iii), Combination of the two bounds.}
$$min(\gamma,k^{'}_{\mathbf{B}}) + min(\gamma,k^{'}_{\mathbf{C}}) - \gamma \leq \omega^{'}(\mathbf{x^T\overline{\mathbf{A}}}) \leq k^{'}_{\mathbf{B}} + k^{'}_{\mathbf{C}}  - (R+1)$$
If matrix $\mathbf{A}$ or $\mathbf{B}$  or $\mathbf{C}$ is tall and full column rank, then its essential uniqueness implies essential uniqueness of the overall tensor decomposition. We call the decomposition essentially unique when it is subject only to these trivial indeterminacies i.e  ($  \mathbf{A},\mathbf{B},\mathbf{C},\tensor{D}$) and  ($ \overline{\mathbf{A}},\overline{\mathbf{B}},\overline{\mathbf{C}},\overline{\tensor{D}}$) are equal.
\subsubsection{ BTD-(L, L, 1)}
\label{chapter2def:ll1}
The LL1-decomposition \cite{de2008decompositions3} of a 3-mode tensor $\tensor{X} \in \mathbb{R}^{I\times J  \times K} $ with tensor rank $R$ is a sum of blocks with rank-$(L_r, L_r, 1)$, is a represented as:
	\begin{equation}
	\label{chapter2eq:ll1}
	\tensor{X} \approx \sum_{r=1}^R (\mathbf{A}_r \cdot \mathbf{B}_r^T) \circ \mathbf{c}_r 
	\end{equation}
where factor blocks $A_r \in \mathbb{R}^{I \times L_r}$ and the matrix $B_r \in \mathbb{R}^{J \times L_r}$ are both rank-$L_r$, $1\le r \le R$. Here, the factor matrices $(\mathbf{A},\mathbf{B},\mathbf{C})$ is of dimension $\mathbf{A} =$ $[\mathbf{A}_1 \ \ \mathbf{A}_2$  $\dots \mathbf{A}_R] \in \mathbb{R}^{I \times LR}$, $\mathbf{B} = [\mathbf{B}_1  \ \ \mathbf{B}_2 \dots \mathbf{B}_R]\in \mathbb{R}^{J \times LR}$ and $\mathbf{C} = [\mathbf{c}_1  \ \ \mathbf{c}_2 \dots \mathbf{c}_R]\in \mathbb{R}^{K \times R}$. The $(L \times  L)$ identity matrix is
represented by $I_{L \times L}$. $1_L$ is a column vector of all ones of length $L$. 
 
\begin{figure}[!ht]
	\begin{center}
		\includegraphics[clip,trim=0cm 9cm 0cm 3cm,width = 0.7\textwidth]{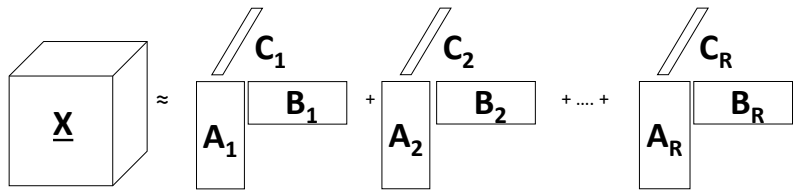}
		\caption{BTD -(L, L, 1) for a third-order tensor $\tensor{X} \in \mathbb{R}^{I\times J  \times K} $.}
		\label{fig:btdll1}
	\end{center}
\end{figure}
\subsubsection{BTD Applications}
The BTD framework offers a coherent viewpoint on how to generalize the basic concept of rank from matrices to tensors. The author \cite{hunyadi2014block} presented an application of BTD where epileptic seizures pattern was non-stationary, such a trilinear signal model is insufficient. The epilepsy patients suffer from recurring unprovoked seizures, which is a cause and a symptom of abrupt upsurges. They showed the robustness of BTD against these sudden upsurges with various model parameter settings. The author \cite{chatzichristos2017higher} used a higher-order BTD for the first time in fMRI analysis. Through extensive simulation, they demonstrated its effectiveness in handling strong instances of noise. A deterministic block term tensor decomposition (BTD) - based Blind Source Separation \cite{de2011blind,ribeiro2015tensor} method was proposed and offered promising results in analyzing the atrial activity (AA) in short fixed segments of an AF ECG signals. The paper \cite{de2019block} extends the work \cite{de2011blind} for better temporal stability. The paper \cite{ mousavian2019noninvasive} proposed doubly constrained block-term tensor decomposition to extract fetal signals from maternal abdominal signals.

\subsection{Tucker Decomposition}
In 1963, the Tucker decomposition as shown in Figure \ref{fig:tucker} was proposed by Tucker \cite{tucker3,tucker1963implications} and later, in 19 clarified by Levin \cite{levin1965three} and Tucker \cite{tucker1963implications,tucker1964extension,tucker1966some}. A decomposition of a $3$-mode tensor $\tensor{X} \in \mathbb{R}^{I\times J \times K} $ with Rank $P,Q,$ and $R$ is defined as the sum of  outer product rank-1 components and one small core tensor $\mathcal{G}  \in \mathbb{R}^{P \times Q \times R}$:
\begin{equation}
\label{eq:tucker}
\begin{aligned}
\mathcal{L} = & \argminA || \tensor{X} - \sum_{p=1}^P\sum_{q=1}^Q\sum_{r=1}^R g_{pqr} \mathbf{a}_p \circ \mathbf{b}_q \circ \mathbf{c}_r  ||_F^2\\
& \argminA || \tensor{X} - \mathcal{G} \times_1 \mathbf{A} \times_2 \mathbf{B} \bullet_3  \mathbf{C}||_F^2\\
& \argminA || \tensor{X} - [\![\mathcal{G}; \mathbf{A}, \mathbf{B},\mathbf{C}||_F^2]\!]\\
\end{aligned}
\end{equation}

Most fitting algorithms consider columnwise orthonormality of the factor matrices, but it is not necessary. Actually, CP decomposition is considered as special case of tucker decomposition when 	$P=Q=R$ and block term decomposition is also a special case of tucker when $R=1$. 

	\begin{figure}[!ht]
	\begin{center}
		\includegraphics[clip,trim=0cm 7cm 0cm 3cm,width = 0.7\textwidth]{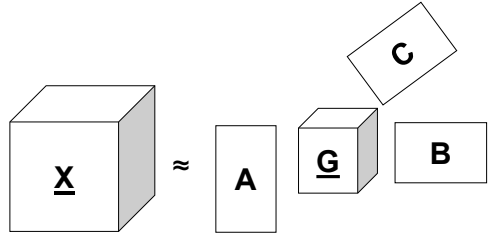}
		\caption{Tucker decomposition for a third-order tensor $\tensor{X} \in \mathbb{R}^{I\times J  \times K} $.}
		\label{fig:tucker}
	\end{center}
\end{figure}

\section{Streaming Tensor Decomposition}
This section gives the background of the second research problem we focus in this thesis: streaming tensor decomposition. We start with an introduction to streaming tensors. Next, we provide representative researches that have been proposed for various types of streaming tensors and a brief review of the more recent and advanced approaches.
\subsection{Streaming Tensors}
In many real-world applications, data grow dynamically and may do so in many modes. For example, given a dynamic
tensor in a movie-based recommendation system, organized as \text{\em{users $\times$ movie $\times$ rating $\times$ hours}},the number of registered users, movies watched or rated, and hours may all increase over time. Another example is network monitoring sensor data where tons of information like source and target IP address, users, ports etc., is collected every second. In the Social Networks example, we observe user interactions every few seconds, which can be translated to new tensor slices, after fixing the temporal granularity. Such type of data that evolve over time is referred as streaming or online tensor data. The Figure \ref{fig:streamingtensor} illustrates tensor example where the orange region indicates a single element in the tensor such as number of interactions between two friends at any timestamp or it could be a number of packets sent from a source IP to a destination IP through a certain port by specific user. 

\begin{figure}
	\begin{center}
		\includegraphics[clip,trim=0cm 5cm 0cm 4cm,width=0.7\textwidth]{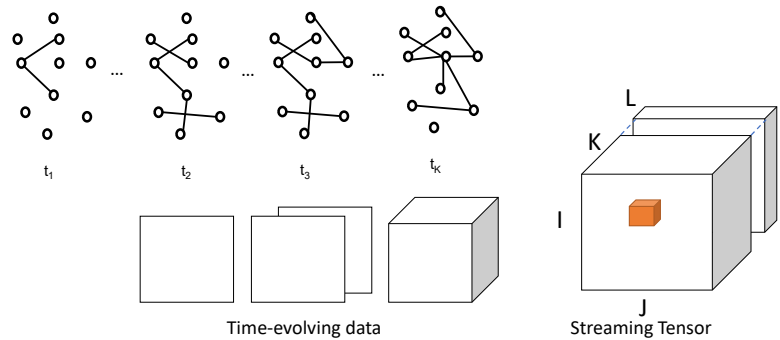}
		\caption{Illustration of streaming tensors}
		\label{fig:streamingtensor}
	\end{center}
\end{figure}

In such conditions, the  update needs to process the new data very quickly, which makes non-incremental methods to fall short because they need to recompute the decomposition for the full dataset.

\subsection{Streaming CP/PARAFAC Decomposition}
There is very limited study on online CP decomposition methods.

Phan \textit{el at.} \cite{phan2011parafac} had purposed a theoretic method namely GridTF to large-scale tensors decomposition based on CP's basic mathematical theory to get sub-tensors and join the output of all decompositions to achieve final factor matrices.

Sidiropoulos \textit{el at.}\cite{nion2009adaptive}, proposed algorithm that focus on CP decomposition namely RLST (Recursive Least Squares Tracking), which recursively update the factors by minimizing the mean squared error.

In 2014, Sidiropoulos \textit{el at.} \cite{sidiropoulos2014parallel} , proposed a parallel algorithm for low-rank tensor decomposition that is suitable for large tensors.

The paper by Zhou, \textit{el at.} \cite{zhou2016accelerating} describes an online CP decomposition method, where the latent components are updated for incoming data. These state-of-the-art techniques focus on only fast computation but not effective memory usage.

Besides CP decomposition, Tucker decomposition methods\cite{SunITA,papadimitriou2005streaming} were also introduced. Online Tucker decomposition was first proposed by Sun \textit{el at.}\cite{SunITA} as ITA (Incremental Tensor Analysis). In there research, they described the three variants of Incremental Tensor Analysis. First, they proposed DTA i.e. Dynamic tensor analysis which is based on calculation of co-variance of matrices in traditional higher-order singular value decomposition in an incremental fashion. Second, with the help of the SPIRIT algorithm, they found approximation of DTA named as Stream Tensor Analysis (STA). Third, they proposed window-based tensor analysis (WTA). To improve the efficiency of DTA, it uses a sliding window strategy. Liu \textit{el at.} \cite{papadimitriou2005streaming} proposed an efficient method to diagonalize the core tensor to overcome this problem. Other approaches replace SVD with incremental SVD to improve the efficiency. Hadi \textit{el at.} \cite{fanaee2015multi} proposed multi-aspect-streaming tensor analysis (MASTA) method that relaxes  constraint and allows the tensor to concurrently evolve through all modes. These methods were not only able to handle data increasing in one-mode, but also have solution for multiple-mode updates using methods such as incremental SVD. The latest line of work is introduced in \cite{austin2016parallel} i.e TuckerMPI to find inherent low-dimensional multi-linear structure, achieving high compression ratios. Tucker is mostly focused on recovering subspaces of the tensor, rather than latent factors, whereas our focus is on the CP/PARAFAC decomposition which is more suitable for exploratory analysis.

Another line of work is incremental tensor completion. The main difference between completion and decomposition techniques is that in completion ``zero'' values are considered ``missing'' and are not part of the model, and the goal is to impute those missing values accurately, rather than extracting latent factors from the observed data. The earliest work on incremental tensor completion traces back to  \cite{mardani2015subspace}, and recently,  Qingquan \textit{el at.}\cite{song2017multi},  proposed streaming tensor completion based on block partitioning.

\subsection{Streaming PARAFAC2 Decomposition}
 The PARAFAC2 model was first developed by Harshman \cite{harshman1972parafac2} to handle the situation where the number of observations (row dimension) in each $\tensor{X}_k$ may vary e.g study of phonetics. In his work, Harshman described a way to factorize multiple matrices simultaneously given that one factor was not exactly the same in all those matrices. This can be solved by imposing orthogonality constraints on a linear transformation as a coupling relationship between the similar factors to ensure identifiability. The classic method of PARAFAC2 \cite{harshman1972parafac2} is limited to small sized dense data and require large amount of resources (time and space) to process big data. The author \cite{perros2017spartan} proposed method namely SPARTAN (Scalable PARAFAC2) for large and sparse tensors. The speed up of the process is obtained by modifying core computational kernel.The author \cite{afshar2018copa} proposed constrained version of Scalable PARAFAC2. But these methods are limited to static data. In this era, data is growing very fast and a recipe for handling the limitations is to adapt existing approaches using online techniques. To our best knowledge, there is no work in the literature that deals dynamic PARAFAC2 tensor decomposition. To fill the gap, we propose a scalable and efficient (time and space) method to find the PARAFAC2 decomposition for streaming large-scale high-order PARAFAC2 data in Chapter \ref{ch:11}. 
 
\subsection{Streaming Block Term Decomposition}
The Block Term Decomposition (BTD) unifies the CP and Tucker Decomposition. The BTD framework offers a coherent viewpoint on how to generalize the basic concept of rank from matrices to tensors. The author \cite{hunyadi2014block} presented an application of BTD where epileptic seizures pattern was non-stationary, such a trilinear signal model is insufficient. The epilepsy patients suffer from recurring unprovoked seizures, which is a cause and a symptom of abrupt upsurges. They showed the robustness of BTD against these sudden upsurges with various model parameter settings. The author \cite{chatzichristos2017higher} used a higher-order BTD for the first time in fMRI analysis. Through extensive simulation, they demonstrated its effectiveness in handling strong instances of noise. A deterministic block term tensor decomposition (BTD) - based Blind Source Separation \cite{de2011blind,ribeiro2015tensor} method was proposed and offered promising results in analyzing the atrial activity (AA) in short fixed segments of an AF ECG signals. The paper \cite{de2019block} extends the work \cite{de2011blind} for better temporal stability. The paper \cite{ mousavian2019noninvasive} proposed doubly constrained block-term tensor decomposition to extract fetal signals from maternal abdominal signals. Recently, the paper \cite{gujral2020beyond} proposed ADMM based constrained BTD method to find structures of communities within social network data. However, these classic methods and applications of BTD are limited to small-sized static dense data ($1K \times 1K \times 100$) and require a large amount of resources (time and space) to process big data. In this era, data is growing very fast and a recipe for handling the limitations is to adapt existing approaches using online techniques. To the best of our knowledge, our proposed method \textbf{OnlineBTD} (Chapter \ref{ch:10}) is the first approach to track streaming block term decomposition while not only being able to provide stable decompositions but also provides better performance in terms of efficiency and scalability.

\section{Conclusion}
In this Chapter we presented the necessary notation used throughout this thesis, as well
as algorithms and applications of CP, PARAFAC2 and Block term decomposition, which are the decompositions used in this thesis. Furthermore, we also discussed the streaming tensor and various incremental existing decomposition methods to built a solid foundation for better understanding.   
\part{Mining Graphs and Networks}
\chapter{Semi-supervised Multi-Aspect Community Detection}
\label{ch:3}
\begin{mdframed}[backgroundcolor=Orange!20,linewidth=1pt,  topline=true,  rightline=true, leftline=true]
{\em "How to find communities or clusters in large multi-aspect graphs? Can we  leverage semi-supervision to improve accuracy?”}
\end{mdframed}

Community detection in real-world graphs has been shown to benefit from using multi-aspect information, e.g., in the form of "means of communication" between nodes in the network. An orthogonal line of work, broadly construed as semi-supervised learning, approaches the problem by introducing a small percentage of node assignments to communities and propagates that knowledge throughout the graph. In this chapter we introduce \smacd, a novel semi-supervised multi-aspect community detection method that effectively integrates and leverages both (a) the multi-view nature of real graphs, and (b) partial supervision in the form of community labels for a small number of the nodes along with an automated parameter tuning algorithm which essentially renders \smacd parameter-free. To the best of our knowledge, \smacd is the first approach to incorporate multi-aspect graph information and semi-supervision, while being able to discover overlapping and non-overlapping communities. Our results on real and synthetic data demonstrate that \smacd, through combining semi-supervision and multi-aspect edge information, outperforms the baselines and yields high clustering accuracy. The content of this chapter is adapted from the following published paper:

{\em Gujral, Ekta, and Evangelos E. Papalexakis. "Smacd: Semi-supervised multi-aspect community detection." In Proceedings of the 2018 SIAM International Conference on Data Mining, pp. 702-710. Society for Industrial and Applied Mathematics, 2018.}
\section{Introduction}
Community detection in real graphs is a widely pervasive problem with applications in social network analysis and collaboration networks, to name a few. There have been continuing research efforts in order to solve this problem. Traditionally, research has focused plain graphs where the only piece of information present is the nodes and the edges \cite{leskovec2010empirical}. 

In most real applications, however, the information available usually goes beyond a plain graph that captures relations between different nodes. For instance, in an online social network such as Facebook, relations and interactions between users are inherently {\em multi-aspect} or {\em multi-view}, i.e., they are naturally represented by a set of edge types rather than a single type of edge. Such different edge-types can be ``who messages whom'', ``who pokes whom'', ``who-comments on whose timeline'' and so on. There exists a significant body of work that uses this multi-view nature of real graphs for community detection. Indicatively,\cite{tang2009clustering,cheng2013flexible,papalexakis2013more,dong2014clustering} proposed algorithm combines multiple views of a graph in order to detect communities more accurately. 

\begin{figure}[!ht]
	\begin{center}
		\includegraphics[width = 0.6\textwidth]{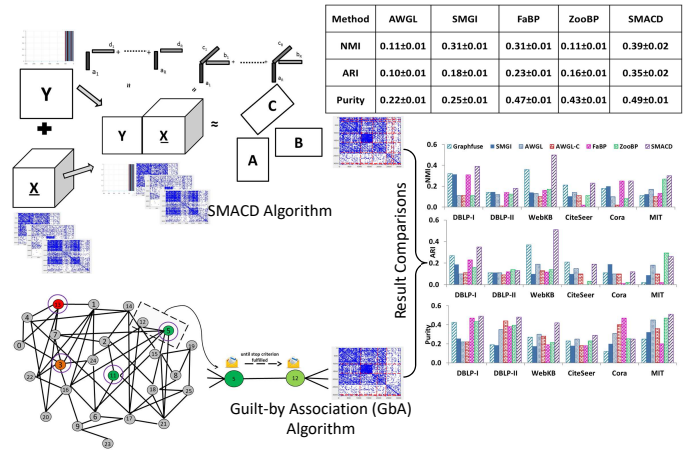}
		\caption{\smacd vs state-of-art techniques: Our proposed method \smacd successfully combines multi-view graph information and semi-supervision and outperforms state-of-the-art techniques.}
				\label{smacd:comparisionTech}
	\end{center}
	\vspace{-0.1in}
\end{figure}

Another line of work leverages partial ground truth information that may be available to us. Such partial ground truth information manifests as a small percentage of nodes for which we know the community where they belong. These partial node labels may be obtained via questionnaires or by leveraging domain expert opinion, however, since the process of obtaining those labels may be costly and time-consuming, we assume that they represent a small percentage of the nodes in our graph. The most popular school of thought that takes such partial ground truth into account are the so called ``Guilt-by-Association'' or label propagation techniques where the main idea is that affinity between nodes implies affiliation with the same community and those techniques iteratively propagate the known node labels throughout the graph estimating the unknown labels.  Belief propagation  \cite{koutra2011unifying,yedidia2005constructing} is one of the widely used  "Guilt-by-Association" method  and has been very successful  in various real life scenarios including community detection. In view of that method which only gives non zero weight to every graph  Karsuyama et al. \cite{karasuyama2013multiple} proposed another multiple graph learning method (SMGI), where weight can be sparse. Auto-weighted Multiple-Graph Learning (AWGL) \cite{nie2016parameter} framework learn the set of weights automatically for all graphs and classify graphs into different classes. 

Our contributions are as follows:

\begin{itemize}[noitemsep]
\item  {\bf Novel Approach}: We introduce \smacd as shown in Fig. (\ref{smacd:comparisionTech}), a semi-supervised multi-aspect\footnote{Note that in the paper we use the terms multi-view and multi-aspect interchangeably} community \footnote{Note that in the paper we use the terms community and cluster interchangeably} detection algorithm. To the best of our knowledge, this is the first principled method for leveraging multiple views of a graph and an existing (small) percentage of node labels for community detection and is able to handle {\em overlapping} communities.
\item  {\bf Algorithm}: Under the hood of \smacd runs our proposed algorithm for Non-Negative Sparse Coupled Matrix-Tensor Factorization (NNSCMTF) which jointly decomposes a tensor that represents a multi-view graph, and a matrix which contains partial node label information. NNSCMTF introduces latent sparsity and non-negativity constraints to the Coupled Matrix-Tensor Factorization model \cite{acar2011all}, which are well suited for community detection.
\item  {\bf Automated Parameter Turning}: Sparsity introduced by NNSCMTF is controlled  by a parameter which if chosen arbitrarily may not yield the best possible performance. 
We introduce \lambdaselection, an automated parameter tuning algorithm that does not rely on the partial node labels and selects a value for the sparsity parameter which yields performance in terms of community detection accuracy which is on par with the one obtained when doing an exhaustive search for that parameter based on all the ground truth available to us.
\item  {\bf Evaluation on Real Data}: We conduct extensive experiments in order to evaluate \smacd's performance in comparison to state-of-the-art methods.
\end{itemize}

{\bf Reproducibility}: We encourage reproducibility and extension of our results by making our Matlab implementation and the synthetic data we used available at link \footnote{\smacdcodeurl}. Note also that all the datasets we use for evaluation are publicly available.

The rest of the chapter is organized as follows. Section \ref{smacd:related} discusses related work. In Section \ref{smacd:problemdef}, we formally introduce our problem and outlined our proposed method \smacd, and in Section \ref{smacd:experiments} we present our experimental evaluation. Finally, in Section \ref{smacd:conclusions} we conclude with a few remarks for future extensions of this work.

\section{Related work}
\label{smacd:related}
We provide review of work related to our problem here.

\noindent{\bf Multi-vew Clustering/Community Detection}:
Real data usually exhibit different cross-domain relations and can be represented as multi-view graphs.In  \cite{berlingerio2011finding} the authors introduce a graph theoretic based community detection algorithm over multi-view graphs and relationships between nodes represented by various types of edges. In \cite{gligorijevicfusion} the authors proposed two algorithms namely Weighted Simultaneous Symmetric Non-negative Matrix Trifactorization (WSSNMTF) and Natural Gradient Weighted Simultaneous Symmetric Non-negative Matrix Trifactorization (NG-WSSNMTF); these algorithms work on binary as well as weighted graphs but are limited to symmetric adjacency matrices. In \cite{papalexakis2013more} the authors introduce a robust algorithm for community detection on multi-view graphs based on tensor decomposition which uses a regularized CP model with sparsity penalties. In addition to different graph views, ``time'' is also a multi-aspect feature of a graph. In \cite{tantipathananandh2011finding} the authors propose a method for identifying and tracking dynamic communities in time-evolving networks. Finally, most recently in \cite{egonetTensors2016} the authors introduce a CP-based community detection framework for egonets.

\noindent{\bf Heterogeneous Information Networks (HIN)}:
Heterogeneous Information Networks are versatile representations of networks that involve multiple typed objects (or nodes) and multiple typed links denoting different relations (or edges). There is a fairly rich body of work in the literature working on related problems to ours \cite{jiang2017semi,wan2015graph}\hide{\cite{jiang2017semi,wang2016text,wan2015graph}}, however, we were unable to find an implementation directly applicable to the problem at hand for experimental comparison. 

\noindent{\bf Guilt-by-Association techniques}:
Guilt-by-Association is a general framework of techniques which propagate partial knowledge in the graph and make inferences pertaining to the nodes of that graph or multi-view graphs. Belief Propagation \cite{yedidia2005constructing} is one of the most widely used techniques for multi-view graphs, which has been successfully used in \hide{fraud detection\cite{mcglohon2009snare}, malware detetion\cite{chau2011polonium}, and} community detection in collaboration networks \cite{koutra2011unifying}. Closely related to Belief Propagation is Random Walk with Restarts (RWR) and related techniques  \cite{tong2006fast} \hide{\cite{pan2004gcap,tong2006fast}}. In \cite{koutra2011unifying,gatterbauer2015linearized} the authors unify different Guilt-by-Association techniques into a very efficient framework. In \cite{eswaran2017zoobp}, author approximates Belief Propagation in undirected heterogeneous graph (i.e., a graph that consists of different types of nodes and edges) to speed-up the process. 

\noindent{\bf Semi-supervised approaches}:
Semi-supervised learning is generally the learning framework where only a small portion of labels is present and the vast majority of data points are unlabeled. The author of \cite{zhu2005semi} provides a concise overview of different semi-supervised techniques.
An example of semi-supervised multi-view graph classification can be found in \cite{ji2010graph,nie2016parameter} where the authors introduce a graph regularize and a small set of node labels in order to predict the class of all the nodes in a heterogeneous graph.

\noindent{\bf Tensor and Coupled Models}:
For a detailed overview of different tensor models and coupled matrix-tensor models we refer the reader to two concise survey papers \cite{kolda2009tensor,papalexakis2016tensors}. Coupled Matrix-Tensor Factorization (CMTF) has received an increasing amount of attention in the recent years and a detailed overview of the publication history of CMTF can be found in \cite{papalexakis2016tensors}. Most relevant to our proposed NNSCMTF model are different tensors models with sparsity constraints, such as \cite{gilpin2016some,cao2016semi} which is the first CP decomposition with latent factor sparsity, and \cite{papalexakis2013k} which introduces a Tucker decomposition with a sparse core, as well as constrained CMTF models, such as \cite{acar2014structure,wang2015rubik} where the authors introduce scalar weights on each component which are regularized for sparsity, thereby resulting in a decomposition which is flexible and contains individual and shared components between the tensor and the matrix. 

To the best of our knowledge the NNSCMTF model has not been previously proposed. Most relevant to our proposed framework, Cao \textit{et al.} \cite{cao2016semi}, propose a semi-supervised learning framework, based on matrix-tensor coupling. We were unable to directly compare the method of \cite{cao2016semi} as released because the focus of \cite{cao2016semi} is 4-mode tensors.
\section{Problem Formulation}
\label{smacd:problemdef}
Graphs can represent a large variety of data and relations between data entities.  Each entity represented by node or vertex ($V$) and relation between entities are defined by weighted or unweighted edges ($E_i,w_i$). In this paper we focus on multi-view or multi-aspect graphs, i.e., a collection of graphs for the same set of nodes and different set of edges per view or layer. In the remainder of the paper we use the terms ``view'', ``aspect'',  and ``layer'' interchangeably. 
Each graph can be represented using an adjacency matrix, a square node-by-node matrix that indicates an edge (and a potential weight associated to it) between two nodes. A multi-view graph can be, thus, represented as a collection of adjacency matrices.

The goal of our work is to identify communities in that multi-view graph, which essentially boils down to assigning each node into one of $R$ community labels. In order to simplify our problem definition, we assume that $R$ is given to us. (there exist, however, heuristics in tensor literature \cite{papalexakis2016autoten} that can deal with an unknown $R$).

The problem that we solve is the following:
\begin{mdframed}[linecolor=red!60!black,backgroundcolor=gray!20,linewidth=1pt,  topline=true,  rightline=true, leftline=true] 

{\bf Given} (a) a multi-view or multi-aspect graph, and (b) a $p\%$ of node labels to $R$ communities, {\bf find} an assignment of all nodes of the graph to one (or more) of the $R$ communities. 

\end{mdframed}
\section{Proposed Method: \smacd}
\label{smacd:proposed}
As \cite{papalexakis2013more} has demonstrated, using higher-order information for the edges of a graph, such as the "means of communication", results in more accurate community detection. What if we additionally have semi-supervision in the form of community labels for a small subset of the individuals? In this section we introduce \smacd which formulates this problem as a matrix-tensor couple as shown in Figure \ref{fig:semisupervised_coupled}, where the matrix contains the community labels for the small subset of users that are known, and missing values for the rest of its entries. The key rationale behind \smacd is the following: Using the coupled matrix that contains partial label information for each node will provide a {\em soft guide} to the tensor decomposition with respect to the community structure that it seeks to identify. Thus, using this side information we essentially guide the decomposition to compute a solution which bears a community structure as close to the partial labels as possible (in the least squares sense). 

In \cite{cao2016semi} the authors propose semiBAT, where they follow a different approach of incorporating semi-supervision in the context of matrix-tensor coupling: instead of a bilinear decomposition for $\mathbf{Y}$ (the partial label matrix) which provides soft guidance to the structure discovery, semiBAT explicitly uses a classification loss in the objective function. In \cite{cao2016semi} the goal classification of brain states, rather than discovering community structure, thus explicitly using the classification loss instead of taking a low-rank factorization of the label matrix seems more appropriate. 

\begin{figure}[!ht]
	\begin{center}
		\includegraphics[width = 0.6\textwidth]{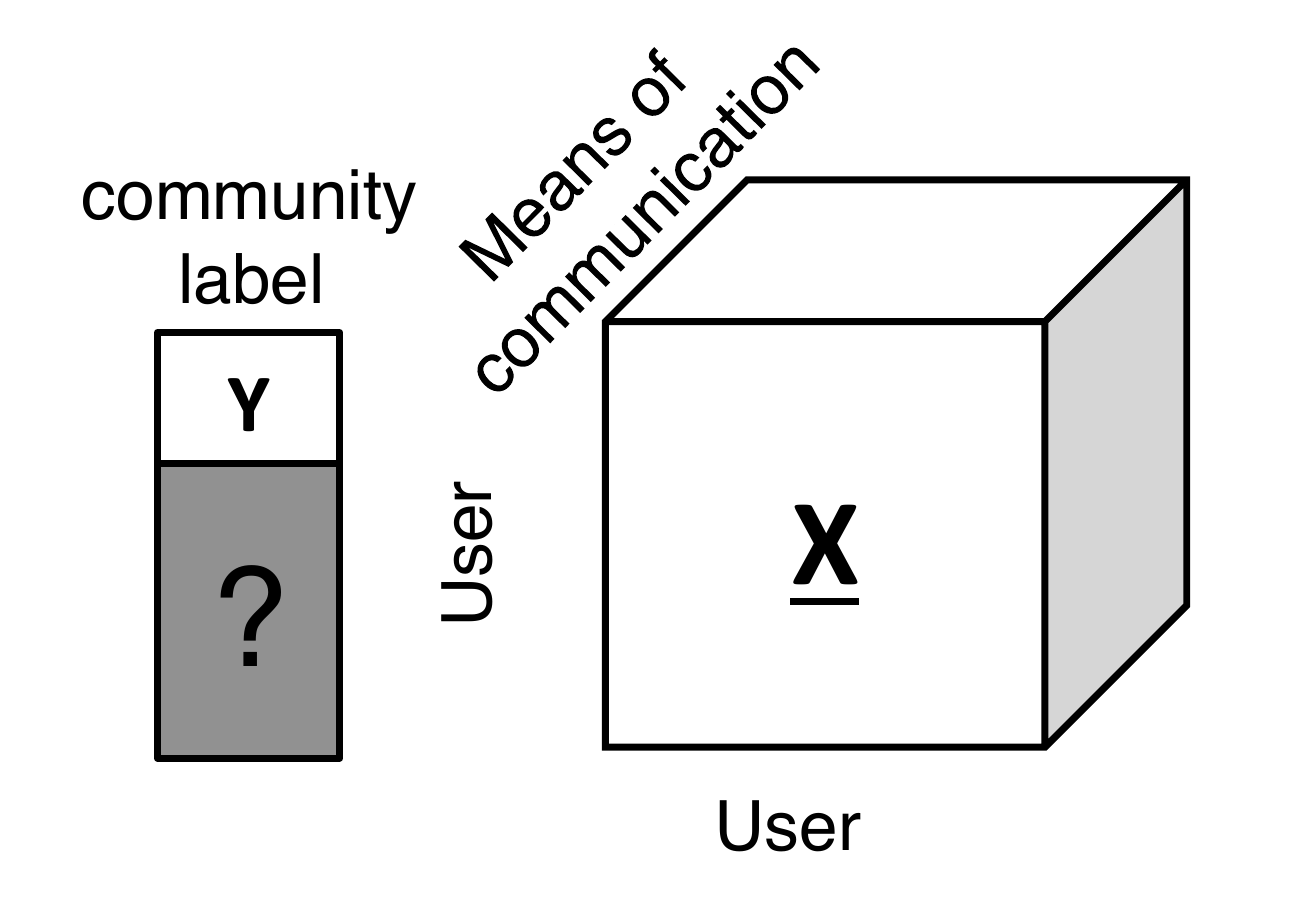}
		\caption{\smacd: Semi-supervised community detection via coupling}
		\label{fig:semisupervised_coupled}
	\end{center}
\end{figure}

At a high level, \smacd takes as input a tensor $\tensor{X} \in {\mathbb{R}^{I\times I \times K}}$ which contains the multi-view graph, a matrix $\mathbf{Y} \in \mathbb{R}^{I\times R}$ containing the node assignments to communities, and the number of communities $R$ (which is given implicitly through matrix $\mathbf{Y}$. \smacd consists of the following two steps.

\noindent{\bf Step 1: Decomposition} Given $\tensor{X},\mathbf{Y}$ compute an $R-1$ component Sparse and Non-negative MTF (as shown below in Section \ref{smacd:nnscmtf}). The columns of $\mathbf{A}$ and $\mathbf{B}$ contain soft assignments of each node to one of $R-1$ communities. Both matrices contain similar information (which in practice ends up being almost identical, especially in cases where we have symmetric tensors in the first two modes).

\noindent{\bf Step 2: Hard Assignment}
In this step we assign each node to a single community by finding the community with maximum membership. This translates to finding the maximum column index for each row (which corresponds to each node). In the previous step we have computed a sparse decomposition which causes a number of the nodes to have all-zero rows in $\mathbf{A}$, i.e., they have no assignment to any of the $R-1$ communities. We assign those nodes to the $R$-th community which essentially is meant for capturing all remaining variation that our CP model in the CMTF decomposition was unable to capture. Step 2 is necessary only in the case where we have {\em non-overlapping} communities. However, \smacd works for overlapping communities as well, simply by eliminating Step 2 and computing Step 1 for $R$ communities instead of $R-1$ as we show in Section \ref{smacd:overlapping}.

\subsection{Non\-negative Sparse Coupled Matrix\-Tensor Factorization}
\label{smacd:nnscmtf}

In this section we describe our model along with an Alternating Least Squares algorithm that computes a locally optimal solution. We propose two constraints on top of the CMTF model \hide{ described in the previous subsection}, motivated by community detection:

\noindent{\bf Non-negativity Constraint}: \smacd uses the factor matrices $\mathbf{A,B}$ as community assignments. Such assignments are inherently non-negative numbers (a negative assignment to a community is hard to interpret and is not natural). Thus, in NNSCMTF we impose element-wise non-negativity constraints (denoted as $\mathbf{A}\geq0$) to all factor matrices. In addition to interpretability, non-negativity constraints have recently been shown to promote uniqueness in matrix decompositions \cite{huang2013nmf} (note that the CP decomposition is already unique \cite{papalexakis2016tensors}) which, in turn, improves the quality of our results.

\noindent{\bf Latent Sparsity Constraint}: In order to (a) further enhance interpretability and (b) suppress noise, we impose latent sparsity to the factors of the model. Intuitively, we would like the coefficients of the factor matrices to be non-zero only when a node belongs to a particular community, thus eliminating the need for ad\-hoc thresholding. To that end we introduce $\ell_1$ norm regularization for all factors which promotes a sparse solution. 

The proposed model is:
\begin{equation}
\begin{aligned}
\label{eq:opt1}
&	\min_{\mathbf{A\geq0,B\geq0,C\geq0,D\geq0}}  \| \tensor{X} - \sum_r \mathbf{A}(:,r)\circ \mathbf{B}(:,r)\circ \mathbf{C}(:,r) \|_F^2 + \|\mathbf{Y} - \mathbf{AD}^T\|_F^2   \\
&	+ \lambda \sum_{i,r}  | \mathbf{A}(i,r) |  + \lambda \sum_{j,r} | \mathbf{B}(j,r) |  +   \lambda \sum_{k,r} | \mathbf{C}(k,r) | +  \lambda_d \sum_{l,r} | \mathbf{D}(l,r) |
\end{aligned}
 \end{equation}
  
where $\lambda$ is the sparsity regularizer penalty. The above objective function is highly non-convex and thus hard to directly optimize. However, we use Alternating Least Squares (ALS), a form of Block Coordinate Descent (BCD) optimization algorithm, in order to solve the problem of Eq. \ref{eq:opt1}. The reason why we choose ALS over other existing approaches, such as Gradient Descent \cite{acar2011all}, \hide{or Stochastic Gradient Descent \cite{beutelflexifact}}, is the fact that ALS offers ease of implementation and flexibility of adding constraints and regularizers, does not introduce any additional parameters that may influence convergence, and as a family of algorithms has been very extensively studied and used in the context of tensor decompositions. The main idea behind ALS is the following: when fixing all optimization variables except for one, the problem essentially boils down to a constrained and regularized linear least squares problem which can be solved optimally. Thus, ALS cycles over all the optimization variables and updates them iteratively until the value of the objective function stops changing between consecutive iterations. In ALS/BCD approaches, such as the one proposed here, when every step of the algorithm is solved optimally, then the algorithm decreases the objective function monotonically.

In the following lines we demonstrate the derivation of one of the ALS steps. Let us denote $\mathbf{X}_{(i)}$ the $i$-th mode matricization or unfolding of $\tensor{X}$, i.e., the unfolding of all slabs of $\tensor{X}$ into an $I \times JK$ matrix (we refer the interested reader to \cite{kolda2009tensor} for a discussion on matricization), then because of properties of the CP/PARAFAC model \cite{kolda2009tensor}, fixing $\mathbf{B,C,D}$ we have

\begin{equation}
\label{eq:opt3}
\begin{aligned}
 &\min_{\mathbf{A\geq0}}  \| \mathbf{X}_{(1)} -  \mathbf{A}[(\mathbf{B}\odot \mathbf{C})^T \|_F^2   + \|\mathbf{Y} - \mathbf{AD}^T\|_F^2 +   \lambda \sum_{i,r}  |\mathbf{A}(i,r) |   \\  
& \Rightarrow \min_{\mathbf{A}\geq0} \lVert [\textbf{ X}_{(1)}\ ; \mathbf{Y}] - \mathbf{A}[(\mathbf{B}\odot \mathbf{C})^T \hspace{0.3cm}  \mathbf{D}^T] \rVert _{F}^2 + \lambda \sum_{i,r} | \mathbf{A}(i,r) |    \\
&\Rightarrow \min_{\mathbf{A}\geq0} \lVert [\mathbf{L} - \mathbf{AM}] \rVert _{F}^2 + \lambda\sum_{i,r} | \mathbf{A}(i,r)|
\end{aligned}
\end{equation}

where $\mathbf{L}=[\textbf{ X}_{(1)}\ ; \mathbf{Y}] $ ,  and $\mathbf{M}=[(\mathbf{B}\odot \mathbf{C})^T \hspace{0.3cm}  \mathbf{D}^T]$. This problem is essentially a Lasso regression on the columns of $\mathbf{A}$ \cite{tibshirani1996regression} and we use coordinate descent to solve it optimally \hide{\cite{papalexakis2013k}}. The update formulas for $\mathbf{B,C,D}$ follow the same derivation after fixing all but the matrix that is being updated. Algorithm \ref{algsmacd:nnscmtf} outlines all the update steps of the ALS algorithm for NNSCMTF. In our implementation we set the stopping criterion to be that the absolute relative error between two consecutive iterations is smaller than $10^{-8}$, and the maximum number of iterations is set to $10^3$.

\begin{algorithm2e}[H]
   \caption{\smacd \-ALS using NNSCMTF} 
     \label{algsmacd:nnscmtf}
	 \SetAlgoLined
      \KwData{ Tensor $\tensor{X}$ of size $I \times J \times K$, Shared matrix $\mathbf{Y}$ of size $I \times R$, number of clusters R, $\lambda$.}
      \KwResult{ Factor matrices $\mathbf{A}, \mathbf{B}, \mathbf{C}$ of size $I \times R$, $J \times R$ and $K \times R$.}
      
     $\textbf{ X}_{(1)}$ = [X(1,:,:) X(2,:,:) ........ X(I,:,:)] : mode-1 fiber\\
     $\textbf{ X}_{(2)}$ = [X(:,1,:) X(:,2,:) ........ X(:,J,:)] : mode-2 fiber\\
     $\textbf{ X}_{(3)}$ = [X(:,:,1) X(:,:,2) ........ X(:,:,K)] : mode-3 fiber\\
     
     \While{not converged {\em or} max Iterations} { 
         Re-estimate $\mathbf{A}$ ($\mathbf{B}$,$\mathbf{C}$ and $\mathbf{D}$ fixed) using
         $$\displaystyle{\min_{\mathbf{A}\geq0} \lVert [\textbf{ X}_{(1)}\ ; Y] - \mathbf{A}[(\mathbf{C}\odot \mathbf{B})^T \hspace{0.3cm}  \mathbf{D}^T] \rVert _{F}^2 + \lambda\sum_{i,r} | \mathbf{A}(i,r)} |$$\\
         Re\-estimate $\mathbf{B}$ ($\mathbf{A}$,$\mathbf{C}$ and $\mathbf{D}$ fixed)using
          $$  \min_{\mathbf{B}\geq0} \lVert \textbf{ X}_{(2)}\  - \mathbf{B}(\mathbf{C}\odot \mathbf{A})^T  \rVert _{F}^2 + \lambda\sum_{j,r} | \mathbf{B}(j,r) |$$\\
           Re\-estimate $\mathbf{C}$ ($\mathbf{A}$, $\mathbf{B}$ and $\mathbf{D}$ fixed) using
        $$  \min_{\mathbf{C}\geq0} \lVert \textbf{ X}_{(3)}\ -\mathbf{C} (\mathbf{B}\odot \mathbf{A})^T  \rVert _{F}^2 + \lambda | \sum_{k,r}\mathbf{C}(k,r)|$$\\
          Re-estimate $\mathbf{D}$ ($\mathbf{A}$, $\mathbf{B}$, $\mathbf{C}$ fixed) using
        $$\min_{\mathbf{D}\geq0} \lVert [\mathbf{L}- \mathbf{AD}^T]  \rVert _{F}^2 + \lambda\sum_{l,r} \mathbf{D}(l,r)$$
    }
\end{algorithm2e}

\subsection{Overlapping Communities}
\label{smacd:overlapping}

Our goal is to design  an algorithm which consumes tensor $\tensor{X} =\{X_1,X_2.....,X_N\}$ along with small amount of labels $\mathbf{Y}$ and it outputs the set of collection of subsets of Nodes V which we considered as overlapping clusters. Thus, we will refer to nodes with multiple classes as overlapping nodes. In real-world networks, these nodes represent bridges between different communities. For this reason, the ability to identify these bridges or overlapping nodes, although often neglected, is necessary for evaluating the accuracy of any community detection algorithms. Given $\tensor{X}$ and $\mathbf{Y}$, CP decomposition is used to learn latent factors which detect community structure. 
\begin{figure}[!ht]
	\begin{center}
	\includegraphics[width = 0.7 \textwidth]{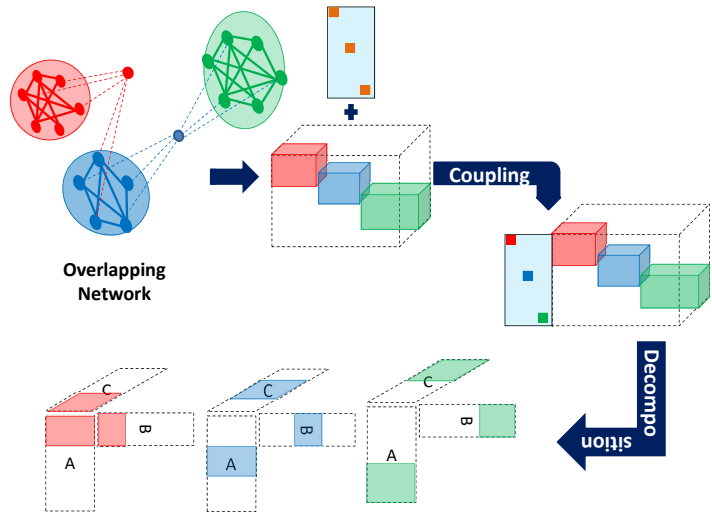}
	\caption{\smacd successfully combines multi-view graph information and semi supervision.}
	\label{smacd:ovcomm}
	\end{center}
\end{figure}

Multi-view connectivity of tensor and coupling with label matrix as shown in Fig. (\ref{smacd:ovcomm}) can increase the robustness of community detection in the case of highly-mixed communities. Overlapping communities amounts to soft clustering over the nodes, as opposed to hard clustering which forces each node to belong to a unique community. The advocated approach only requires slight modifications in Step 2 to yield soft community assignments, that is, by treating $A_{nk}$ as the normalized affiliation of node n to community k, and requiring $\mathbf{A}_{nk} \geq t $ per node n. The tensor factorization with regularization can now be written as follows:
\begin{equation}
    \begin{aligned}
\label{eq:opt4}
&	\min_{\mathbf{A\geq0,B\geq0,C\geq0,D\geq0}}  \| \tensor{X} - \sum_r \mathbf{A}(:,r)\circ \mathbf{B}(:,r)\circ \mathbf{C}(:,r) \|_F^2 + \|\mathbf{Y} - \mathbf{AD}^T\|_F^2 +   \\
&	 \lambda \sum_{i,r}  | \mathbf{A}(i,r) |  + \lambda \sum_{j,r} | \mathbf{B}(j,r) |  +   \lambda \sum_{k,r} | \mathbf{C}(k,r) | +  \lambda_d \sum_{l,r} | \mathbf{D}(l,r) |  \\
&	 \quad {} s.t. \quad {} ||A_n||_1 \geq t \quad {} \forall n=1\dots K
\end{aligned}
\end{equation}

Every step of the algorithm [\ref{algsmacd:nnscmtf}] is solved optimally, then the algorithm decreases the objective function monotonically. Once the algorithm returns the solution for NNSCMTF, the rows of factor matrix $\mathbf{A}$ provides the community association in networks with overlapping communities where a node can be associated with more than one community. To evaluate \smacd's result with ground truth communities, we compared resultant $A_{i,j}$ with threshold t and node is assigned with community 'r' if $\mathbf{A}({i,j}) \geq t$. Each node's predicted label ( or labels) is ordered incrementally based on corresponding value of $\mathbf{A}({i,j})$.
\begin{equation}
\text{Predicted Label(s)\{i\}}= 
\begin{cases}
indices(A({i,j})),& \text{if } \mathbf{A}({i,j}) \geq t\\
R+1,              & \text{otherwise}
\end{cases}
\end{equation}

\subsection{\lambdaselection: Automated Selection of the Sparsity Penalty}
\label{smacd:lambda}
The \smacd model contains the $\lambda$ sparsity penalty, which if not chosen correctly may have an impact on the final result. Traditionally, such parameters are chosen via trial-and-error, and in fact, all the baseline methods that we compare against in Section \ref{smacd:experiments} follow this empirical approach for their parameter tuning. On the other hand, as part of \smacd we introduce \lambdaselection (based on principle of Armijo-Goldstein's rule \footnote{\url{https://en.wikipedia.org/wiki/Backtracking_line_search}} for selecting step size in backtracking line search methods), an automated algorithm that selects a ``good'' value of $\lambda$ which achieves accuracy which is on-par with a brute force selection of $\lambda$ based on community detection accuracy, which obviously entails knowing {\em all} community labels. 

The intuition behind \lambdaselection is simple: Start with a very high $\lambda$ which gives all-zero community assignments. Start decreasing $\lambda$ on a logarithmic scale until a solution is reached for which all communities have at least one node assigned to them. Subsequently focus the search on a grid that starts from the previous stopping point and increase $\lambda$ to the last point before at least one of the communities is empty again. \lambdaselection is based on the fact that $\lambda$ and sparsity levels in the latent factors are directly related. We provide a detailed outline of \lambdaselection in Algorithm \ref{algsmacd:lambda}.
Essentially with the introduction of \lambdaselection, there is no need for hand-tuning \smacd via trial-and-error.

\begin{algorithm2e}[H]
    \caption{\lambdaselection for automated selection of $\lambda$} 
	\label{algsmacd:lambda}
	\SetAlgoLined
			\KwData {Tensor $\tensor{X}$ , Shared matrix $\mathbf{Y}$, R, initial $\lambda_{high} ,$ 
			$\tau=$ Step Size.}
			\KwResult {Best $\lambda$ value for our \smacd.}
			Set $\lambda =\lambda_{high}$ and iteration counter j=0.\\
			Until $ Rank\{f({\tensor{X}},{\mathbf  {Y}}, \lambda_{j})\} \geq R$ is satisfied, increment j and set  $\lambda_{j}=\tau \,\lambda_{{j-1}}$\\
		    Split ($\lambda_{j}$,$\lambda_{j-1}$) into a grid of $\frac{1}{\tau}$ values and repeat step 2 with the $\lambda_{j-1}$.\\
		    Return {$\lambda$ as the solution.}\\
\end{algorithm2e}

\subsection{Analysis of Algorithm} As we demonstrate in Section \ref{smacd:lambda_exp}, this automatic selection of $\lambda$ yields a very close solution in terms of accuracy to the brute force selection. Let us consider that the upper limit for $\lambda$ is $\lambda_{high}$ =$10^8$ and  lower limit of $\lambda$ is $\lambda_{low}$ =$10^{-8}$. and step size varies in powers of 10. If we were to do a brute force search, we would need to run an experiment on every  $\lambda$ that falls between  $\lambda_{low} \le \lambda \le \lambda_{high}$. For example $\lambda$ = 10 to $\lambda$=100, we have testcases for $\lambda$=\{10, 20, 30....., 100\} with step size of 10.  Thus, in the worst case , we would need to run NNSCMTF for \{(Upper-power of 10,$ L_h$)-(Lower-power of 10,$L_l$)* number of testcases (T)\} i.e (8-(-8))*10=160 times. Worst case running time will be $O((L_h-L_l)*T)$ for the brute force selection of $\lambda$. On the other hand, \lambdaselection requires only  $O(L_h-L_l+T)$iterations to find suitable value of $\lambda$ (26 in our example).

\subsection{Deciding the Number of Communities} As it is currently described, \smacd takes the number of communities $R$ as an input. A natural question that may arise is ``what if the number of communities is unknown?''; this in fact can also happen even in the semi-supervised case, where we may have {\em partial} knowledge about the number of communities, i.e., we know that there are at least $R$ communities in the graph. In this case, we are essentially interested in estimating the number of latent factors in a tensor $\tensor{X}$, which is generally a very hard problem, however, there exist efficient heuristic approaches that provide good estimates \cite{papalexakis2016autoten}. Therefore, \smacd is also extensible under partial knowledge of the number of communities.

\subsection{Missing Values} \smacd does not recover the missing data in multi-view graph networks.This might result in overfitting or high biasing. Traditionally, we can handle missing data problem in different ways. Firstly by using weighted ALS in each step but it results in slower algorithm. Secondly, using Gradient Descent with a weights \cite{acar2011all}, it may give us more local minima than ALS. Finally, by imposing Stochastic Gradient Descent \cite{yang2013community} which handles only non missing data, but there is no guarantee of convergence because community detection in multi-view graph networks is highly non-convex problem. To overcome this issue, we use $\ell_1$ regularization by constraining the factor matrices. The results of our \smacd outperform baselines and indicates that $\ell_1$ regularization takes care of overfitting and biasing.

\subsection{Convergence}
Here we demonstrate the convergence of Algorithm \ref{smacd:nnscmtf} for NNSCMTF on all real datasets that we use for evaluation. Figure \ref{fig:iterative} summarizes the convergence of the algorithm, showing the approximation error as a function of the number of iterations. It is clear that the algorithm converges to a very good approximation error (in the order of $10^{-5}$) within 10-20 iterations. For each dataset, computation cost was average 12 sec/iteration.
\begin{figure}[!ht]
		
	\begin{center}
		\includegraphics[width = 0.8\textwidth]{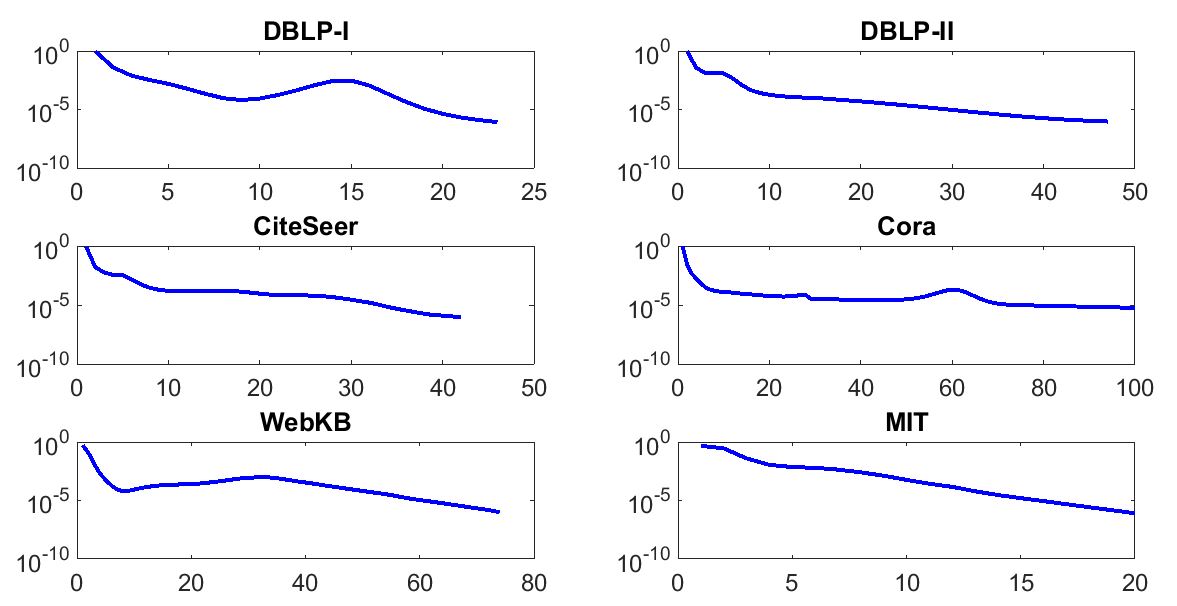}
		\caption{Approximation error vs. number of iterations. NNSCMTF converges very quickly to error as low as $10^{-5}$.}
		\label{fig:iterative}
	\end{center}

\end{figure}
\section{Experiments}
\label{smacd:experiments}
 In this section we extensively evaluate the performance of \smacd on two synthetic and eight real datasets, and compare its performance with state-of-the-art approaches which either use multi-view graphs or semi-supervision (but not both) for community detection. We implemented \smacd in Matlab using the functionality of the Tensor Toolbox for Matlab \cite{tensortoolbox} \hide{\cite{tensortoolbox,SIAM-67648}} which supports efficient computations for sparse tensors. \hide{We also evaluate performance of \smacd-EN on two real datasets.Our implementation is available at link\footnote{\codeurl}}

\subsection{Data-set description}
\subsubsection{Synthetic Data Description}
In order to fully control the community structure in our experiments we generate synthetic multi-view graphs with different cluster density. We generally follow the synthetic data creation of \cite{papalexakis2013more}. We partition the adjacency matrices corresponding to different graph views into different blocks, each one corresponding to a community. We then assign different nodes to each block with a probability which is a function of the density of the block (i.e., community) we desire; if this probability is not close to 1, then there will be a considerable amount of nodes falling outside of those blocks, effectively acting as noise. We further corrupt those datasets with random Gaussian noise with variance $0.05$. We construct two synthetic datasets: Synthetic-1 has 5 views and 5 communities and has very few ``cross-edges'', whereas Synthetic-2 has 3 views and higher number of ``cross-edges'', making it a harder dataset. We include those synthetic datasets in our code package.

\subsubsection{Real Data Description}
In order to truly evaluate the effectiveness of \smacd, we test its performance against six real datasets that have been used in the literature. Those datasets are: DBLP-I, DBLP-II, Cora, CiteSeer, WebKB, and MIT reality mining dataset. DBLP-I and DBLP-II datasets are collected from the DBLP online database and were used in \cite{papalexakis2013more}.  In DBLP-I and DBLP-II, the first graph view represents citations of one author to another. The second view represents co-authorship relations. Finally the third view relates two authors if they share any three terms in a title or in abstract of their publication. DBLP-I contains authors who published in SIGIR, TODS, STOC+FOCS. DBLP-II contains who published in venues, PODS, ICDE, CACM and TKDE.These publication venues constitute the communities. 

The Cora dataset \cite{senaimag08} was collected from the LINQS online database, and consists of 2708 machine learning publications and citations. This network consists of 5429 edges and 7 different communities. The groups are categorized in 7 different classes i.e. Neural Networks, Rule Learning, Case Based, Probabilistic methods, Reinforcement based and Theory. CiteSeer dataset \cite{senaimag08}consists of 3312 publications related to AI, DB, IR, ML and HCI research categories. The first view connects papers based on their word vectors, and second view connects the paper based on citations.

The WebKB dataset \cite{senaimag08} is small dataset of 878 web pages of Washington universities which belong to 5 categories, namely courses, facilities, student, project and staff. We consider these categories as ground truth classes for WebKB. We considered these categories as ground truth classes. Finally, the MIT reality mining \cite{dong2012clustering}, collected by researchers at MIT, consists of 87 mobile users information collected on campus. Ground truth is the self-reported affiliation of the users.

\textbf{Overlapping communities}: The ``Insight Resources (IR) Repository'' (a.k.a IRR) consists of five multi-view datasets with manual annotation of user stances (e.g., political or sports). Our interest is in Rugby Union dataset\cite{greene2013producing}. It is a collection of 854 international Rugby Union players, clubs , and organizations  active on Twitter. The ground truth consists of communities corresponding to 15 countries. The communities are overlapping, as players can be assigned to both their home nation and the nation in which they play club rugby. SNOW2014G dataset is first introduced in \cite{rizos2017multilabel} and author extracted largest connected component and retweet social interactions to form the graph edges from the tweet collection. It consists of top 10992 users, 3 views and clustered them into 90 classes. 

\subsection{Evaluation Measures}
We evaluate the community detection performance in terms of three different quality measures: Normalized Mutual Information (NMI), Adjacent Random Index (ARI) and Purity. These measures provide a quantitative way to compare the obtained communities $\Omega= {w_1, w_2,......, w_r}$ to ground truth classes C= ${c_1, c_2,......, c_r}$. 

\begin{itemize}
    \item \textbf{Normalized Mutual Information (NMI)}: Mathematically NMI is defined as:
    \begin{equation}
        \text{NMI}(\Omega,C)=\frac{I(\Omega,C)}{[H(\Omega)+H(C)]}
    \end{equation}
    where I($\Omega$,C) is mutual information between cluster $\Omega$ and C, H($\Omega$) and H(C) are entropy of cluster and classes. Next, Purity is defined as the ratio of number of nodes correctly extracted to total number of nodes.
    
    \item \textbf{Purity}: Mathematically Purity is defined as: 
    \begin{equation}
        \text{Purity}(\Omega,C)=\frac{1}{N}\sum_{k=0}^N max|w_k\cap c_k|
    \end{equation}
    where $w_k$ and $c_k$ are the number of objects in a community and a class respectively. $|w_k\cap c_k|$  is the interaction of objects of $w_k$ and $c_k$
    
    \item \textbf{Adjacent Random Index (ARI)}: Finally when interpreting communities as binary decisions of each object pair, Adjacent Random Index(ARI) is defined as:
\begin{equation}
    \begin{aligned}
    ARI(\Omega,c) & =\frac{tp+tn}{tp+fp+fn+tn} , \quad \omega(c_1,c_2) = \frac{\omega_u(c_1,c_2)-\omega_e(c_1,c_2)}{1-\omega_e(c_1,c_2)}
    \end{aligned}
 \end{equation}
where $tp$, $tn$, $fp$ and $fn$ are true positive, true negative, false positive and false negative, respectively.
\item \textbf{ Omega index ($\omega$)}: Omega index ($\omega$) is the overlapping version of the Adjusted Random Index (ARI). Omega index considers the number of nodes pairs belong together in no clusters, how many are belong together in exactly single cluster or exactly two clusters, and so on.
\end{itemize}
NMI, Purity and ARI (or Omega index for overlapping community) are defined on the scale $[0, 1]$ and the higher the score, the better the community quality is. 

\subsection{Baselines for Comparison}
Here we briefly present the state-of-the-art baselines.  For each baseline we use the {\em reported parameters} that yielded the best performance in the respective publications. For fairness, we also compare against the parameter configuration for \smacd that yielded the best performance in terms of NMI. However, moving one step further, we also evaluate \lambdaselection and demonstrate that an unsupervised selection of parameters yields qualitatively the same performance for \smacd as the brute force selection. All comparisons were carried out over 50 iterations each, and each number reported is an average with a standard deviation attached to it. 
\begin{itemize}
\item {\bf GraphFuse \cite{papalexakis2013more}}:  GraphFuse is a tensor decomposition based approach which can be seen as a special case of \smacd when there is no semi-supervision. The sparsity penalty factor $\lambda$ for DBLP-I, DBLP-II, CiteSeer, Cora, WebKB and MIT is set for $\lambda$= 0.000001, 0.0001, 0.000001, 0.1, 0.00005 and 0.00001, respectively and a maximum of 150 iterations was used for convergence. 

\item {\bf WSSNMTF and NG-WSSNMTF \cite{gligorijevicfusion}}: The details for the methods are described in \cite{gligorijevicfusion}. We used the SVD matrix initialization. The sparsity penalty parameter $\eta$ for WSSNMTF and NG-WSSNMTF are chosen as DBLP-I and DBLP-II ($\eta_1 =\eta_2=0.01$), for CiteSeer ($\eta_1 =\eta_2 =1$) , Cora ($\eta_1 =0.01$, $\eta_2 =10$) ,WebKB ($\eta_1 =\eta_2 =0.01$) and MIT ($\eta_1 =1$, $\eta_2 =1000$). These $\eta$ values are chosen to lead to best clustering performance and max 100 iterations are used for reaching the convergence.

\item {\bf Fast Belief Propagation (FaBP) \cite{koutra2011unifying}}: FaBP is a fast, iterative Guilt-by-Association technique, in particular conducting Belief Propagation. A belief in our case is a community label for each node. We used one-vs-all technique for multi-clustering.

\item {\bf ZooBP \cite{eswaran2017zoobp}}: ZooBP works on any undirected heterogeneous graph with multiple edge types. \hide{It provides a closed-form solution to BP in arbitrary heterogeneous graphs using an intuitive principle for approximating the beliefs of nodes.}  As in FaBP, a belief here is a community label for each node. 

\item {\bf SMGI \cite{karasuyama2013multiple}}: Sparse Multiple Graph Integration method is another method of integrating multiple graphs for label propagation, which introduces sparse graph weights which eliminate the irrelevant views in the multi-view graph. 

\item {\bf AWGL \cite{nie2016parameter}}: Parameter-Free Auto-Weighted Multiple Graph Learning is the latest auto-weighted multiple graph learning framework, which can be applied to multi-view unsupervised (AWGL-C) as well as semi-supervised (AWGL) clustering task.

\item {\bf Parameter Tuning}
In order to be on-par with the baselines, we tuned \smacd's parameter $\lambda$ so that we obtain the maximum performance. We provided $\le$10\% labels in matrix and rest of labels are empty. The maximum number of iterations for \smacd is set to $10^3$. We perform experiments with various values of $\lambda$ ranging from  $10^{-8}$ to $10^{6}$  on all real multi-view networks to explore the behaviour of our algorithm.  $\lambda$ is chosen to give best clustering results in terms of NMI, for DBLP-I, DBLP-II, CiteSeer, Cora, WebKB and MIT values for $\lambda$= 0.3,0.09, 0.0001,1, 0.9 and 600, respectively. For both the synthetic data, penalty factor is set to 1. For overlapping communities, we use \textbf{t}=0.1 for both datasets.
\end{itemize}
\subsection{Experimental Results}
Below we extensively evaluate \smacd and compare it against baseline methods.
\subsubsection{Comparison with Baselines}
For all datasets we compute Normalized Mutual Information, Purity and Adjacent Random Index. For \smacd, AWGL , SMGI, ZooBP and FaBP we use labels for 10\% of the nodes in each dataset. 

The results for the Synthetic data are shown in Figure  \ref{smacd:synthetic_1}, with each bar-plot corresponding to  a different method. We observe that \smacd performed better than other approaches when applied on SYN-I and SYN-II. SYN-I is designed with high cluster density in layer 2 and 3, and noisy links, and has high number of cross-community edges between nodes. Given that, we found that \smacd achieved the highest NMI, ARI and Purity. We omit the figure of the results due to space restrictions.

\begin{figure}[!ht]

	\begin{center}
		\includegraphics[width = 0.6\textwidth]{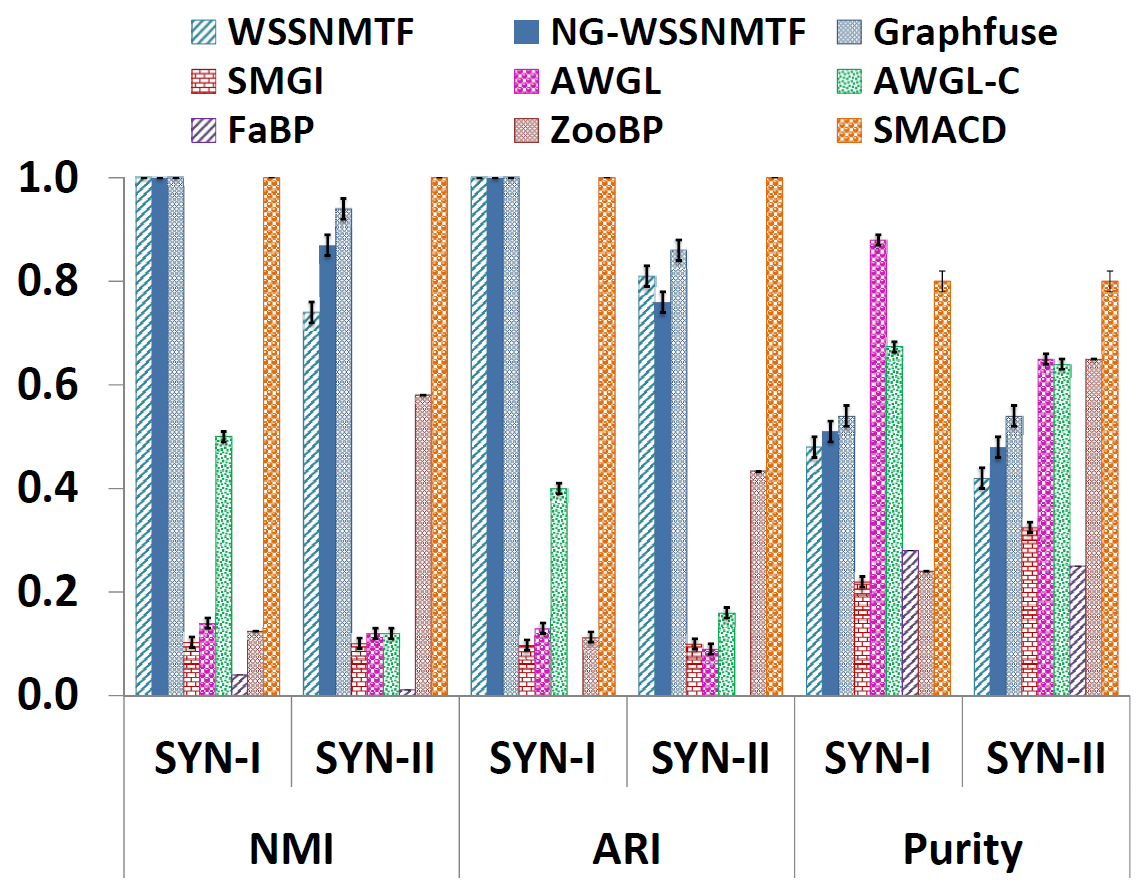}
		\caption{Experimental results of SYN-I and SYN-II dataset: \smacd outperforms the baselines.} 
        \label{smacd:synthetic_1}

	\end{center}
\end{figure}

The most interesting comparison, however, is on the real datasets, since they present more challenging cases than the synthetic ones. \smacd outperforms  the other state-of-the-art approaches in most of  the real multi-view networks, with the exception of Cora. In the cases of DBLP-I and DBLP-II, \smacd gave better results compared to the baselines , specifically in terms of NMI and Purity. For Citeseer, \smacd has comparable behavior with the baselines in terms of NMI.  Most importantly, however, \smacd achieves the highest NMI, ARI, and Purity for WebKB and MIT, arguably the hardest of the six real datasets we examined and have been analyzed in the literature. 

\begin{figure}[!ht]

	\begin{center}
		\includegraphics[width = 0.47\textwidth]{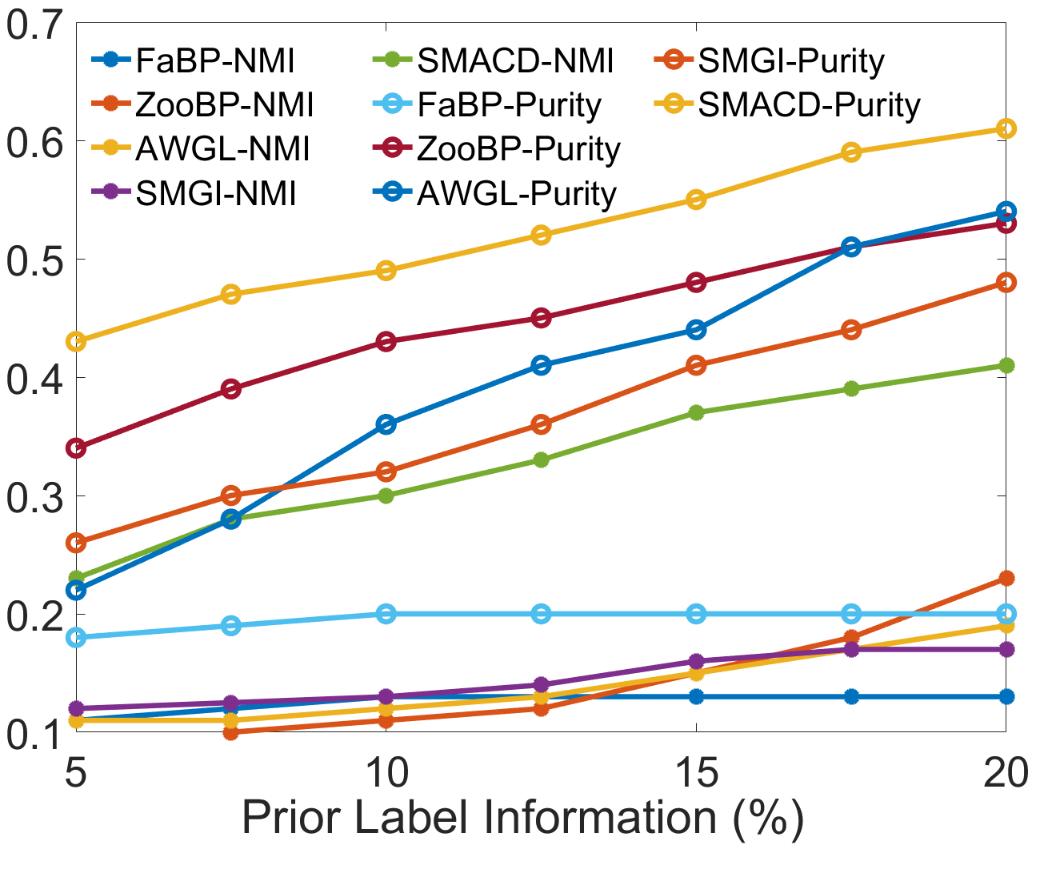}
	    \includegraphics[clip,trim= 0.5in 2.2in 0.4in 0.4in , width=0.4 \textwidth]{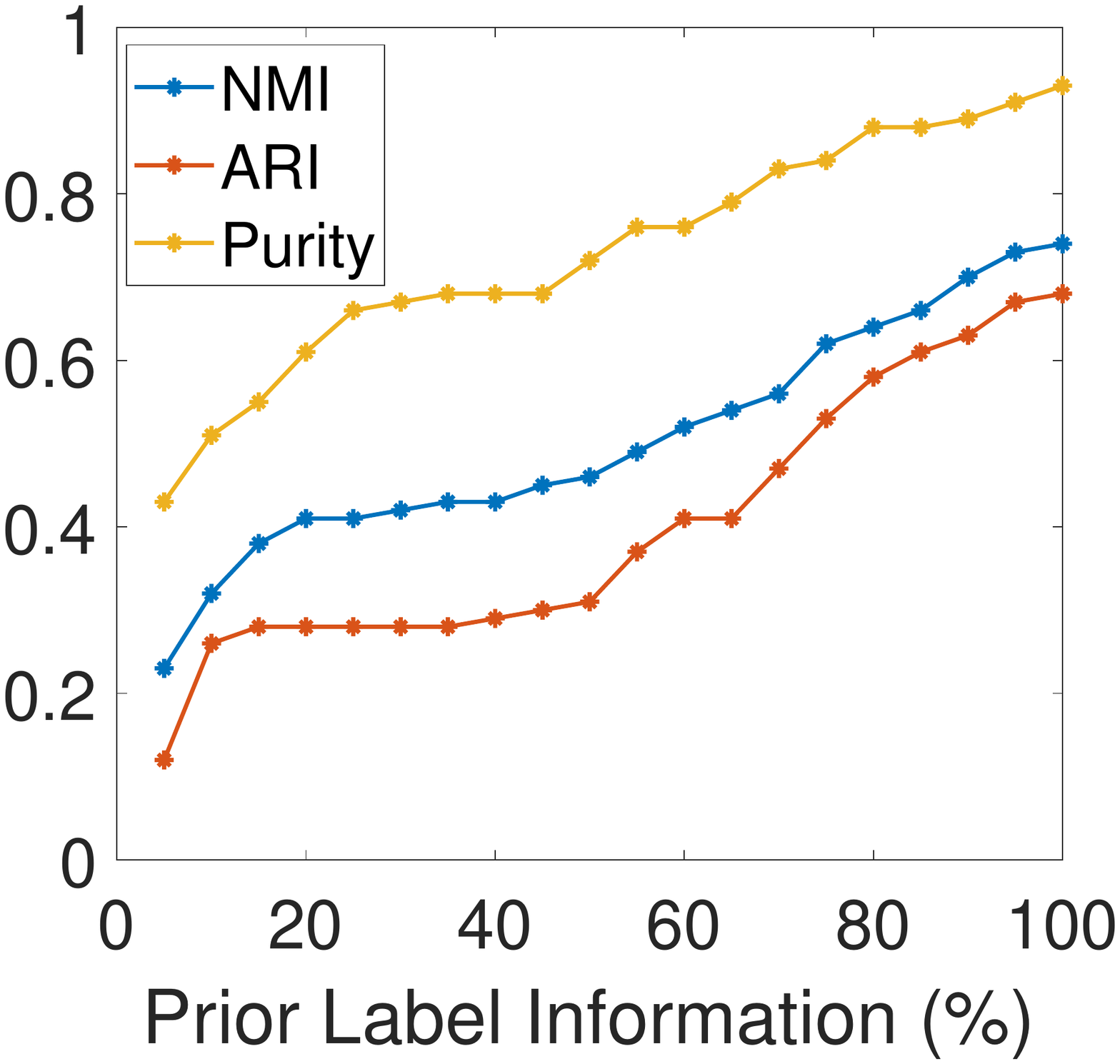}
		 \caption{(a) \smacd vs. Guilt-by-Association(FaBP and ZooBP), AWGL and SMGI for different degrees of semi-supervision for DBLP-I. (b) Performance of \smacd as a function of the number of labels. These results confirm the intuition, since performance improves as the number of labels increases.}
		  \label{smacd:comparisionfabpnnscmtf}

	\end{center}
\end{figure}

\subsubsection{\smacd Performance on overlapping communities}
\label{smacd:fbexp}
We report the accuracy of our method for real world  Rugby and SNOW2014G datasets with overlapping communities. Figure \ref{smacd:gotResult} shows the results. \smacd outpeforms all other state-of-art methods \hide{(AWGL,SMGI, ZooBP and FaBP)} in accuracy (or purity) measures. In particular our method has always the highest purity score and in all but one case it has the best NMI score with small (i.e. 2.5\%) number of prior label information. This shows that coupled tensor and matrix allows the overlapping community structure to be more easily and accurately detectable.
We ran \smacd and other state-of-art methods for 20 times using 2.5-30\% prior label information. In Figure \ref{smacd:gotResult} we present our clustering accuracy on both data-sets. For SNOW2014G dataset our algorithm outperformed by predicting cluster with $\approx$ \textbf{3x, 4x, 2x and 5x} more accuracy than AWGL, SMGI, ZooBP and FaBP respectively.
\begin{figure*}[!ht]
	\vspace{-0.2in}		
	\begin{center}
		\includegraphics[clip,trim= 0.1in 1.4in 0.1in 0.4in ,width = 0.26\textwidth]{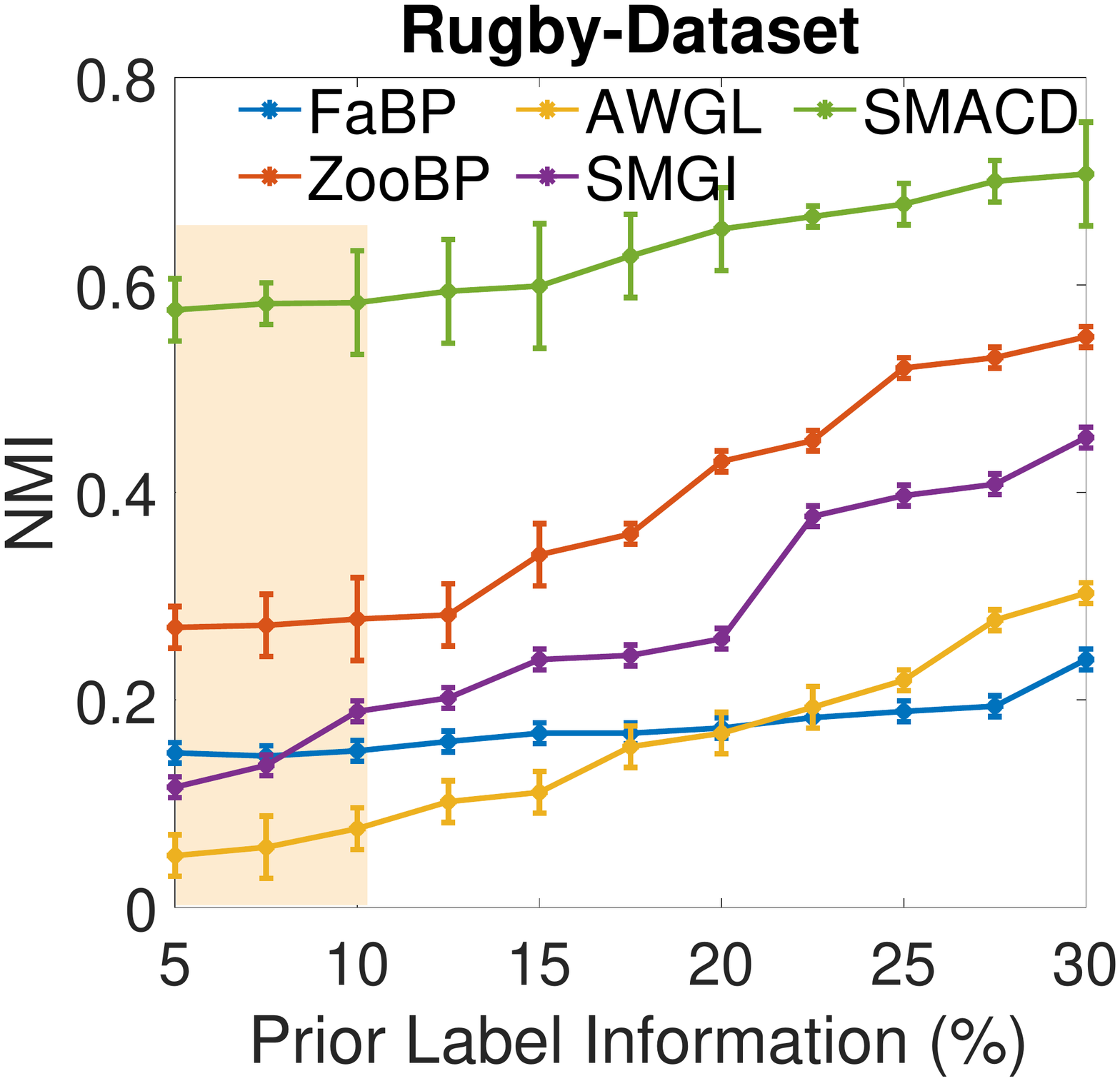}
		\includegraphics[clip,trim= 0.1in 1.4in 0.1in 0.4in , width = 0.26\textwidth]{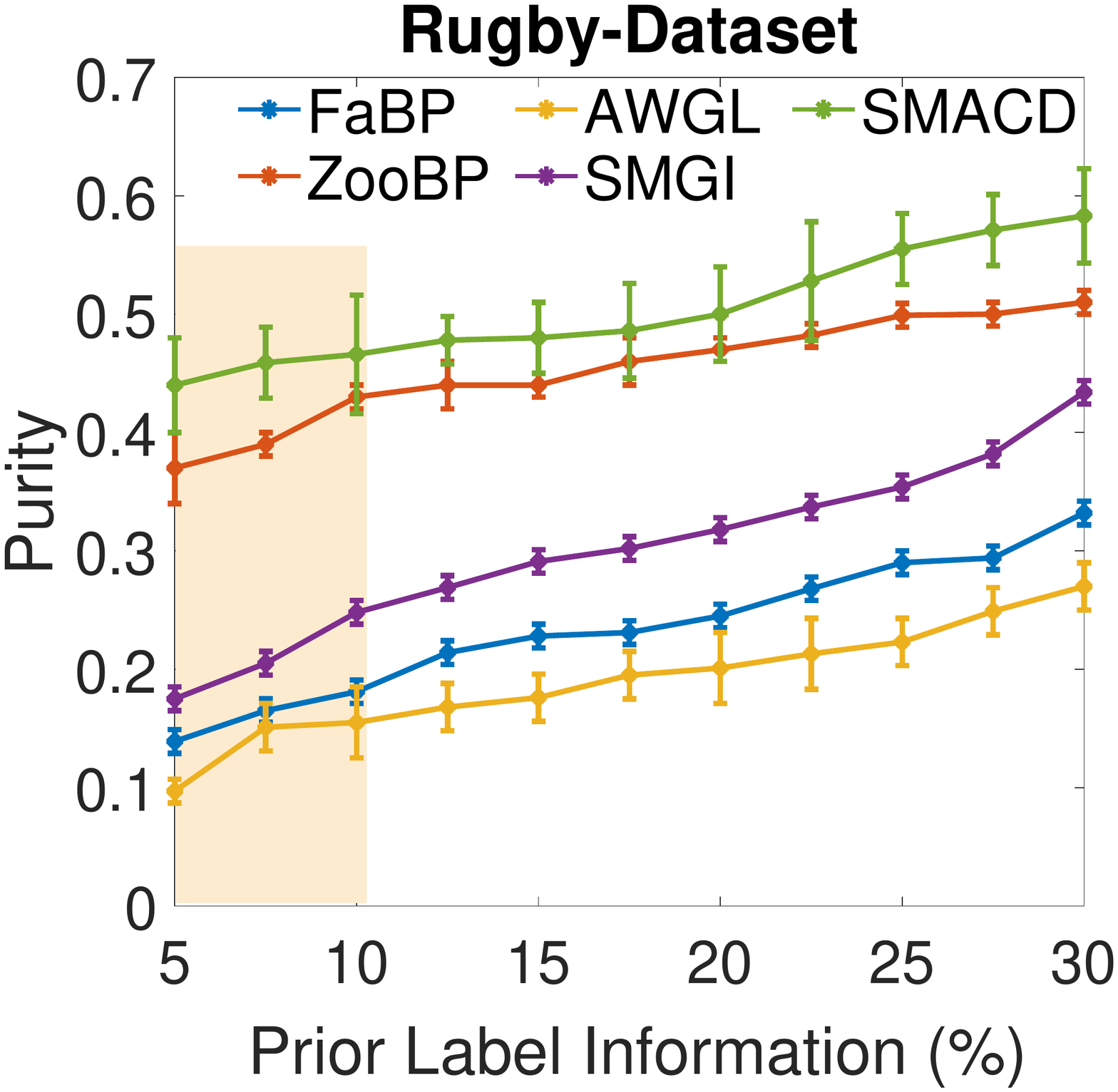}
		\includegraphics[clip,trim= 0.1in 1.4in 0.1in 0.4in , width = 0.26\textwidth]{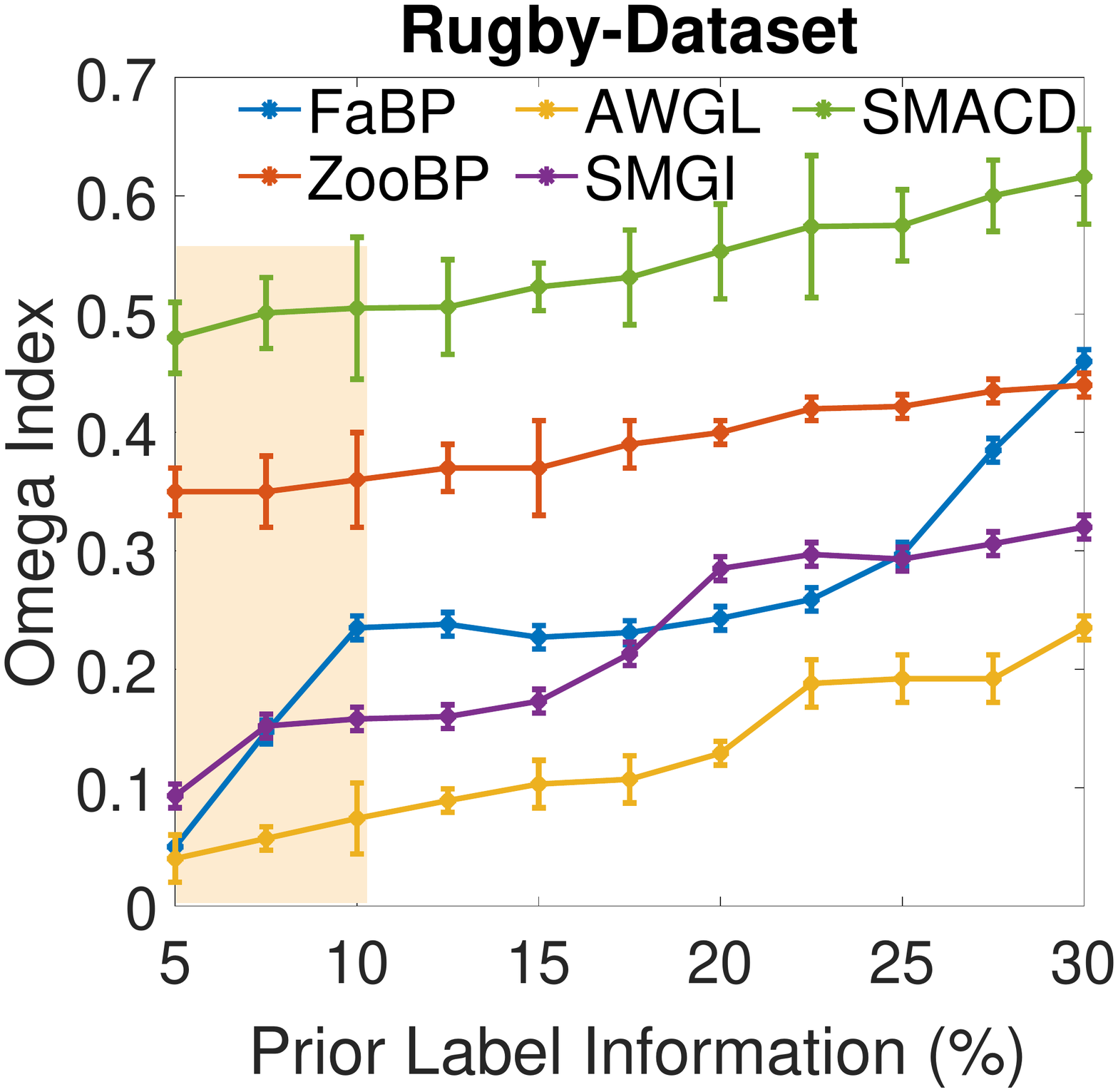}
		\includegraphics[clip,trim= 0.1in 1.4in 0.1in 0.4in ,width = 0.26\textwidth]{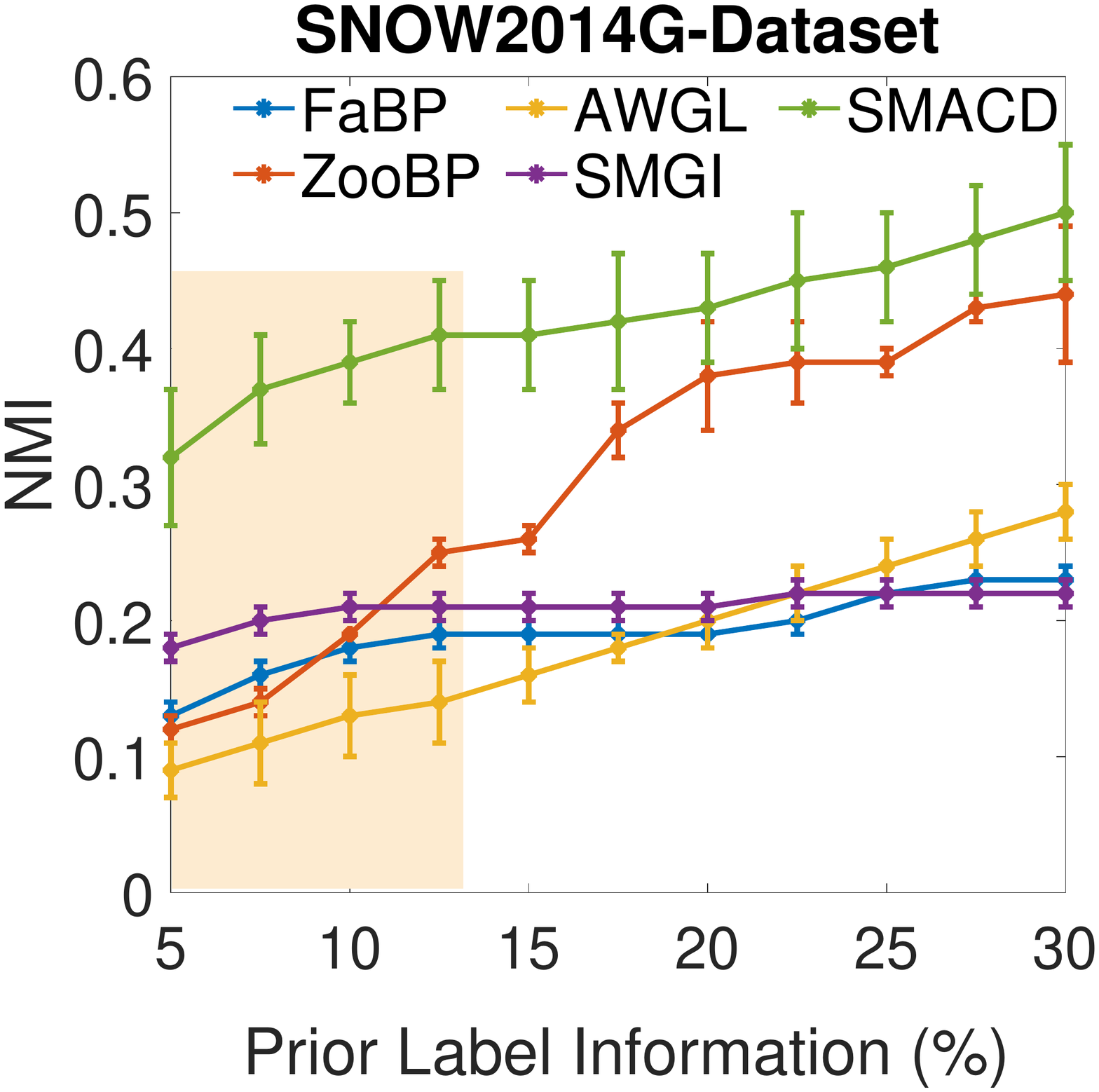}
		\includegraphics[clip,trim= 0.1in 1.4in 0.1in 0.4in , width = 0.26\textwidth]{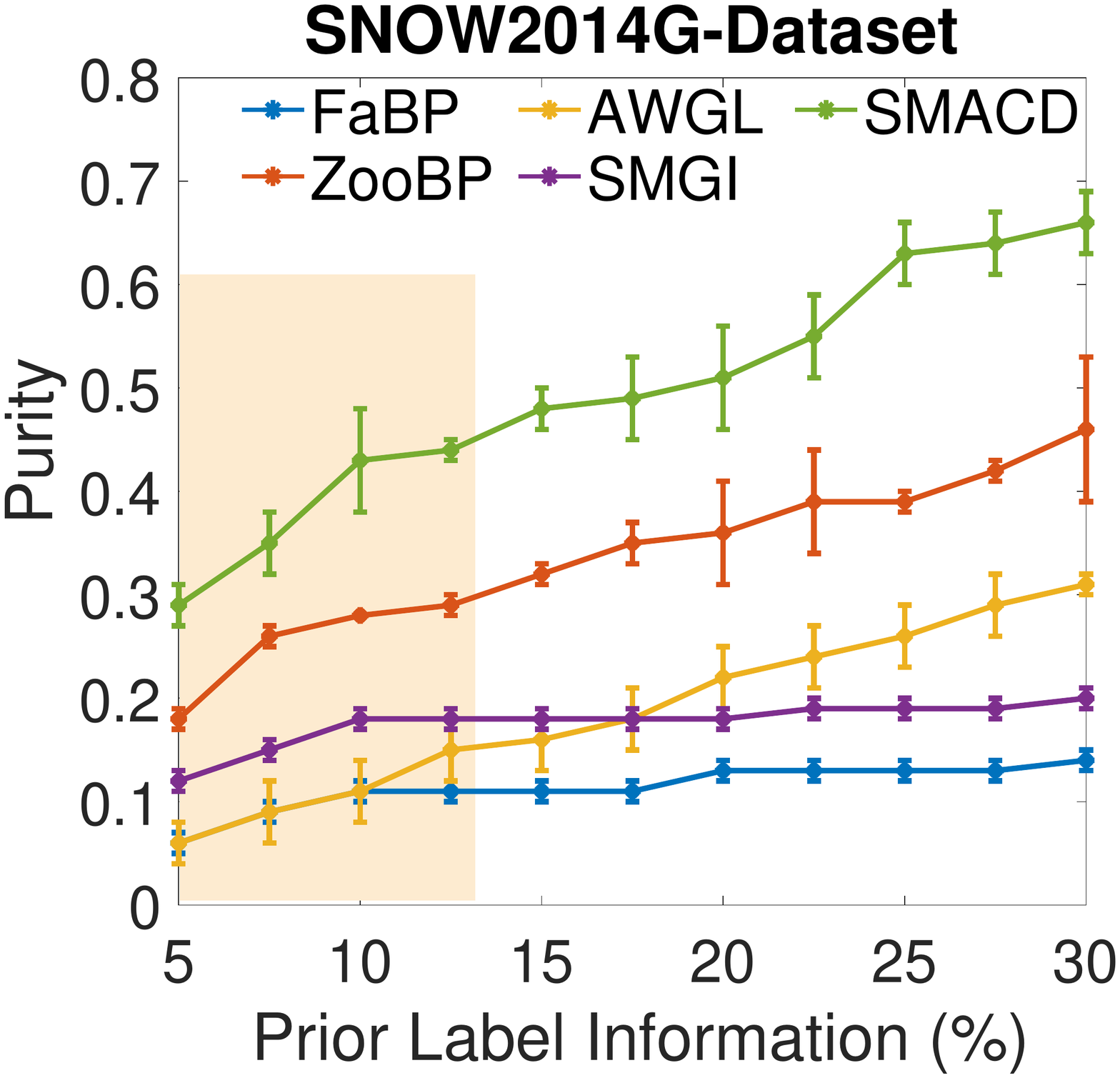}
		\includegraphics[clip,trim= 0.1in 1.4in 0.1in 0.4in , width = 0.26\textwidth]{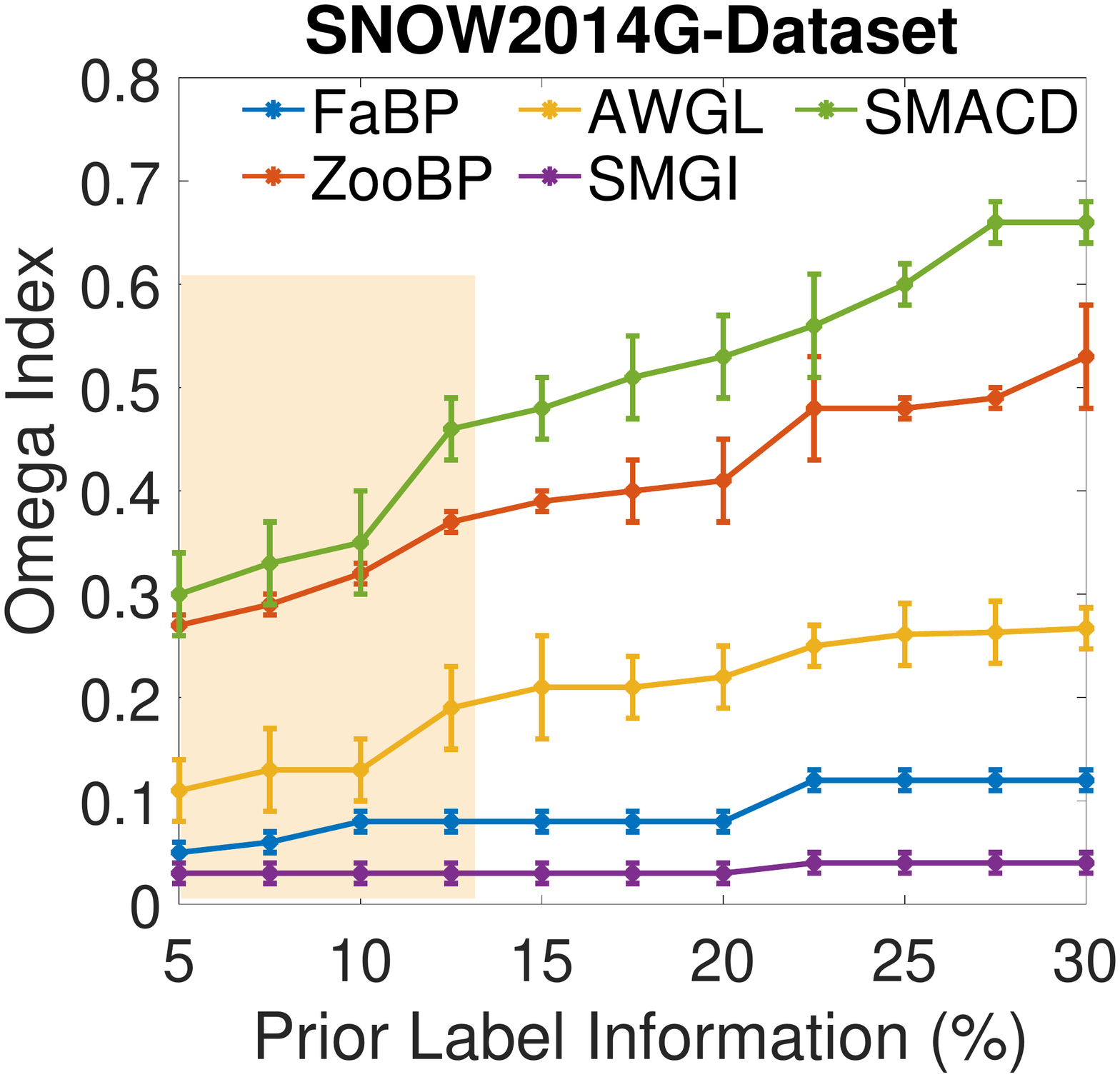}
		\caption{Experimental results for NMI, Purity and Omega index . \smacd consistently outperforms the baseline with an upward trend as the number of available labels increases and works better in {\em small amounts of labels}.}
		\label{smacd:gotResult}
	\end{center}
\end{figure*}

\subsubsection{Performance vs. Degree of Semi-supervision}
Next, we evaluate the performance of \smacd compared to Guilt-by-Association as a function of the degree of semi-supervision, i.e., the percentage of available labels. We performed experiments for the DBLP-I dataset for 5\%, 10\% and 20\% labeled nodes and we show the results in Figure \ref{smacd:comparisionfabpnnscmtf}(a) showing a consistent trend between the two methods. We further measure the performance of \smacd as the number of labels grows, and summarize the results in Figure \ref{smacd:comparisionfabpnnscmtf}(b) where we can see that what we would expect intuitively holds true: the more labels we have the higher the community accuracy. Due to limited space, we show the trend only for DBLP-I but we observe similar behavior for the rest of the datasets.

\subsubsection{Evaluation of  \lambdaselection}
\label{smacd:lambda_exp}
We evaluate the effectiveness of \lambdaselection in choosing a $\lambda$ that yields good community quality. We compare \smacd's performance with respect to the $\lambda$ chosen using a brute force evaluation (on 50 iterations per $\lambda$) of the performance according to NMI (using all the labels), against the selection made by \lambdaselection. Figure \ref{smacd:selSPF} demonstrates that, in terms of NMI, both parameter selections in fact yield very comparable (if not identical) performance. This result indicates that \smacd can be used by practitioners as a black box, without the need for specialized and tedious trial-and-error tuning.

\begin{figure}[!h]
	\vspace{-0.1in}		
	\begin{center}
		\includegraphics[width = 0.6\textwidth]{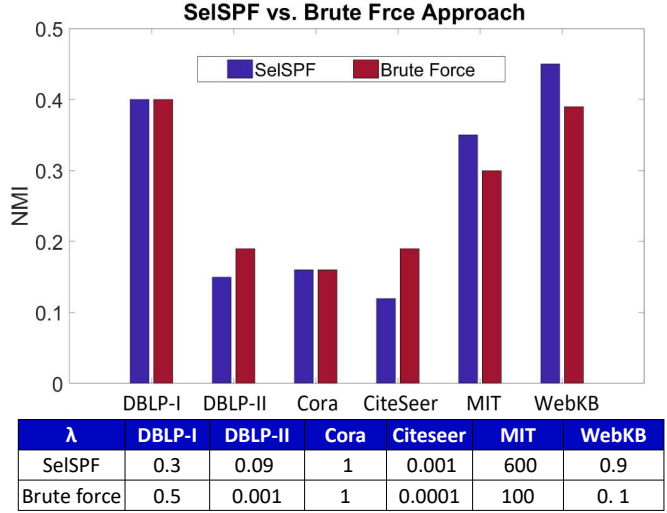}
		\caption{$\lambda$ selection using \lambdaselection vs. brute force approach. \lambdaselection is able to choose a value for $\lambda$ which yields similar accuracy as the expensive and grossly impractical brute force approach, effectively rendering \smacd parameter-free.}
		\label{smacd:selSPF}
	\end{center}
\end{figure}
\subsubsection{Why \smacd ?}
\label{smacd:whyshocd_exp}
The ability to effectively leverage the multi-view nature of a graph stems from the model that \smacd uses under the hood. The underlying CP model has well-studied uniqueness properties \cite{kolda2009tensor} \hide{\cite{ten2002uniqueness}} which have implications about the quality of the decomposition, and hence the community assignments. In short, CP is unique under mild conditions, which essentially guarantees that the computed decomposition is the only combination of factors (thus, community assignments) which can reconstruct the data, and not any rotated version thereof. On the other hand, matrix-based approaches, such as \cite{gligorijevicfusion}\hide{\cite{gligorijevicfusion,tang2009clustering,dong2012clustering,cheng2013flexible}}, typically suffer from rotational ambiguity (this is easy to see since, for a bilinear model, $\mathbf{X} \approx \mathbf{AB}^T= \mathbf{AQQ}^{-1}\mathbf{B}^T = \tilde{\mathbf{A}}\tilde{\mathbf{B}}^T$ for any invertible $\mathbf{Q}$) and fail to guarantee that the computed community assignments are the best possible, and not any rotation thereof.
Finally, coupling with the matrix containing partial community labels is a ``soft'' manner of imposing semi-supervision. Instead of making hard assignments of the nodes for which we have labels, \smacd is using the underlying structure of the $\mathbf{Y}$ label matrix in order to ``guide'' the low-rank structure discovered by the CP decomposition on the tensor $\tensor{X}$. Thus, in combination with CP's uniqueness, soft semi-supervision of $\mathbf{Y}$ guides the decomposition to a set of unique community assignments, as close as possible to the partially observed community assignments. 
\section{Conclusion}
\label{smacd:conclusions}
We introduce \smacd, a novel approach on semi-supervised multi-aspect community detection based on a novel coupled matrix-tensor model. We propose an automated parameter tuning algorithm, which effectively renders \smacd ~{\em parameter-free}. We extensively evaluate \smacd's effectiveness over the state-of-the-art, in a wide variety of real and synthetic datasets, demonstrating the merit of leveraging semi-supervision and higher-order edge information towards high quality overlapping and non-overlapping community detection.

\vspace{0.5in}

\noindent\fbox{%
    \parbox{\textwidth}{%
       Chapter based on material published in SDM 2018 \cite{gujral2018smacd} and HetroNam 2018 \cite{gujral2018smacdhetro}.
    }%
}

\chapter{Beyond Rank-1: Discovering Rich Community Structure in Multi-Aspect Graphs}
\label{ch:4}
\begin{mdframed}[backgroundcolor=Orange!20,linewidth=1pt,  topline=true,  rightline=true, leftline=true]
{\em "How are communities in real multi-aspect or multi-view graphs structured? How we can effectively and concisely summarize and explore those communities in a high-dimensional, multi-aspect graph without losing important information?”}
\end{mdframed}

State-of-the-art studies focused on patterns in single graphs, identifying structures in a single snapshot of a large network or in time evolving graphs and stitch them over time. However, to the best of our knowledge, there is no method that discovers and summarizes community structure from a multi-aspect graph, by {\em jointly} leveraging information from all aspects. State-of-the-art in multi-aspect/tensor community extraction is limited to discovering clique structure in the extracted communities, or even worse, imposing clique structure where it does not exist.

In this chapter we bridge that gap by empowering tensor-based methods to extract rich community structure from multi-aspect graphs. In particular, we introduce \cll, a novel constrained Block Term Tensor Decomposition, that is generally capable of extracting higher than rank-1 but still interpretable  structure from a multi-aspect dataset. Subsequently, we propose \richcom, a community structure extraction and summarization algorithm that leverages \cll to identify rich community structure (e.g., cliques, stars, chains, etc) while leveraging higher-order correlations between the different aspects of the graph.

Our contributions are four-fold: (a) \textbf{Novel algorithm}: we develop {\em{\cll}}, an efficient framework to extract rich and interpretable structure from general multi-aspect data; (b) \textbf{Graph summarization and exploration}: we provide {\em{\richcom}}, a summarization and encoding scheme to discover and explore structures of communities identified by \cll; (c) \textbf{Multi-aspect graph generator}: we provide a simple and effective synthetic multi-aspect graph generator, and (d) \textbf{Real-world utility}: we present empirical results on small and large real datasets that demonstrate performance on par or superior to existing state-of-the-art. The content of this chapter is adapted from the following published paper:

{\em Gujral, Ekta, Ravdeep Pasricha, and Evangelos Papalexakis. "Beyond rank-1: Discovering rich community structure in multi-aspect graphs." In Proceedings of The Web Conference 2020, pp. 452-462. 2020.}
\section{Introduction}
\label{richcom:intro}
Multi-aspect graphs emerge in various applications, as diverse as biomedical imaging \cite{jia2012community}, social networks \cite{snapnets}, computer vision \cite{wang2009unsupervised}, recommender systems \cite{leskovec2007dynamics}, and communication networks \cite{leskovec2009community}, and are generally shaped using high-order tensors \cite{kolda2009tensor}. A simple example of such multi-aspect graph could be a three-mode tensor, where each different aspect, represented by a frontal slice of the tensor, is the adjacency matrix of a social network under a different means of communication. Each  such aspect of the data is an impression of the same underlying phenomenon e.g, the formation of friendship in social networks or evolution of communities over time. 
\begin{figure}
	\begin{center}
		\includegraphics[clip,trim=1cm 3cm 1cm 3.2cm,width = 0.7\textwidth]{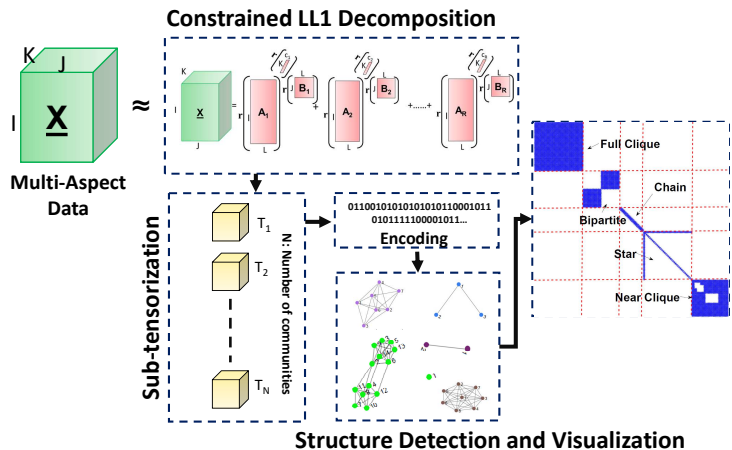}
		\caption{A toy example of \richcom : It decomposes the multi-aspect data and identifies non-overlapping as well as overlapping sets of nodes, that form sub-tensors and resultant structures like cliques, bi-bipartite, chains, stars etc. are encoded and visualized.}
		\label{richcom:outmethod}
	\end{center}
 \vspace{-0.2in}
\end{figure}
Even though there has been a significant focus on graph mining and community detection for single graphs \cite{von2007tutorial,leskovec2010empirical,fortunato2010community}, incorporating different aspects of the graph has only recently received attention \cite{berlingerio2011finding,papalexakis2013more,gujral2018smacd, gorovits2018larc}. In fact, taking into account all aspects of a  graph has been shown to lead to better results compared to ``two-dimensional'' counterparts \cite{papalexakis2013more,gujral2018smacd, gorovits2018larc}.
The problem we address in this chapter is the following: Given a multi-aspect graph, say, an air traffic networks of airlines like European-ATN \cite{kim2015community}, how can we efficiently describe and summarize its {\em{community structure}}? The heart of this chapter is finding and visualizing the structures (e.g., clique, chain, star etc.) of communities in multi-aspect/layer graphs via tensor decomposition approach, in order to get a better insight of their properties.  

To the best of our knowledge, the state-of-the-art \cite{koutra2015summarizing} focuses on a single graph and provide vocabulary (e.g clique, star) of graph primitives using the Minimum Description Length (MDL) principle. The paper \cite{shah2015timecrunch} is further extension of \cite{koutra2015summarizing} for dynamic graphs and stitching the temporal patterns in the time-evolving scenario. These methods \cite{shah2015timecrunch,koutra2015summarizing}, even though they are offering very valuable insights, but they are either focus on a single graph or find community structure in single graph and then track them over time. 

We propose a tensor-based method.\hide{ that is able to fulfill above mentioned specifications.} A simple but interpretable CANDECOMP/\-PARAFAC tensor decomposition \cite{carroll1970analysis,PARAFAC,bader2015matlab}, aka CP (see Def. \ref{def:cp}) method is well known and studied in the literature. The fundamental {\em road-block} in traditional exploratory tensor decomposition (CP \cite{carroll1970analysis,PARAFAC,bader2015matlab} and Tucker \cite{tucker3}) analysis, is that the only form of latent structure it can discover is {\em{rank-$1$}}. Consider a time-evolving or a multi-aspect graph  \cite{papalexakis2013more,gujral2018smacd}, both of which can be expressed as a set of adjacency matrices, forming a third-order tensor $\tensor{X}$. These decompositions can only discover rank-$1$ structures in the graph, which translate into dense cliques; even if the real underlying structure in the graph is not a clique but, say, a star, CP is unable to extract it, and in the best case it will approximate it as a near-clique, or in the worst case it will fail to identify it. In the signal processing literature there exists the Block Term Decomposition (BTD) \cite{de2008decompositions1,de2008decompositions2} which shows promise; a form of BTD is the $(L, L, 1)$ decomposition\cite{de2008decompositions3}: $\tensor{X} \approx \sum_{r=1}^R (\mathbf{A}_r \cdot \mathbf{B}_r^T) \circ \mathbf{c}_r$  where $\mathbf{A}_r$ and $\mathbf{B}_r$ have $L$ columns, and which essentially extracts rank-$L$ structures in the first two dimensions. In this chapter, we adapt and extend this less well-known tensor model, the Block Term Decomposition-$(L, L, 1)$ (see Def. \ref{chapter2def:ll1}), to that end. This model holds potential for ``general'' multi-aspect graph exploration, where structure is richer than {\em{rank-1}} but we still wish to have interpretable results. Note that chapter does not focus on the community detection in data and it is out of scope of work. Here, after block term tensor decomposition, each latent component is considered as community and we focus on detecting structures of those communities.
\begin{figure*}[htbp]
	\begin{center}
		\includegraphics[clip,trim=3cm 3cm 3cm 3cm,width = 0.33\textwidth]{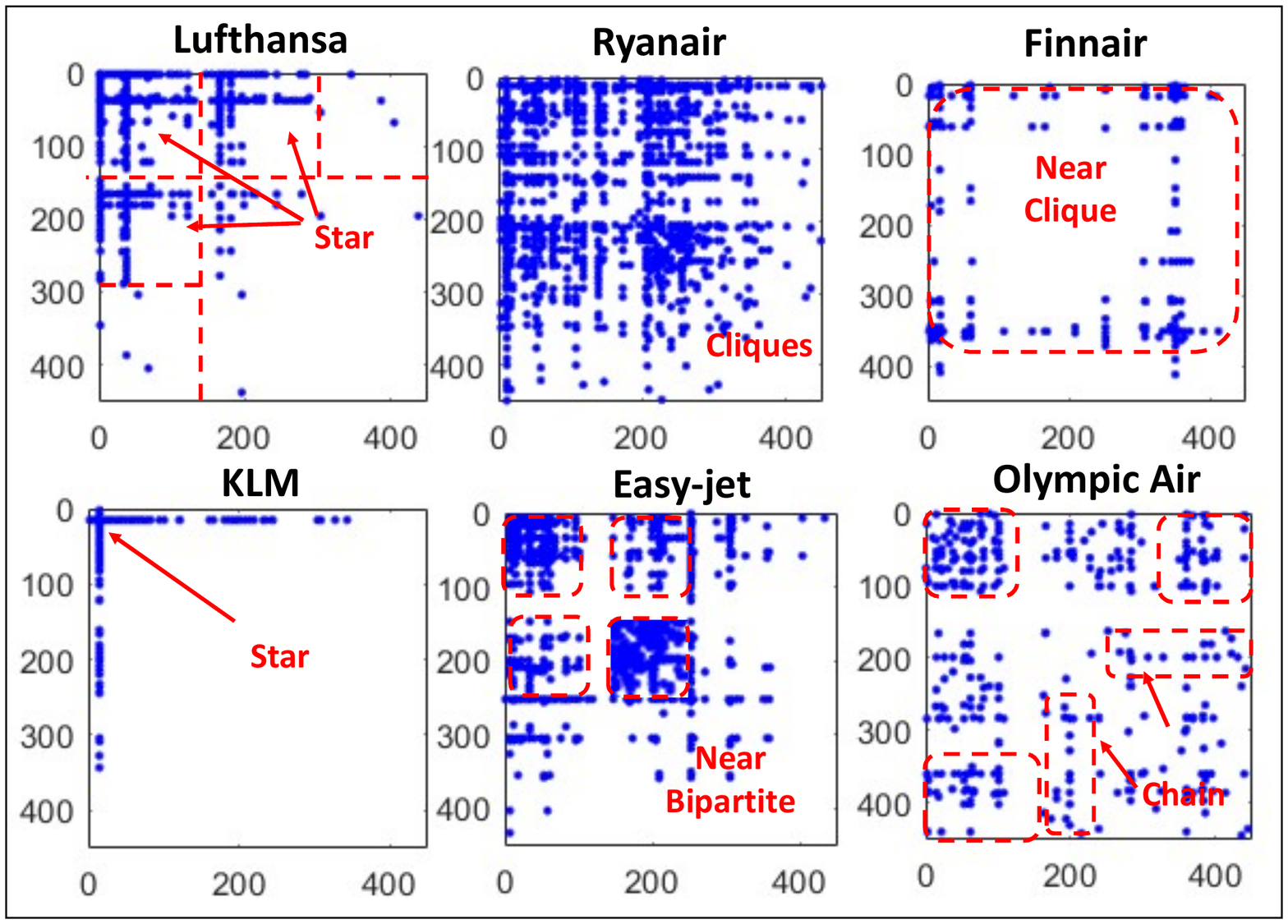}
		\includegraphics[clip,trim=4.2cm 3cm 4.5cm 3cm,width = 0.275\textwidth]{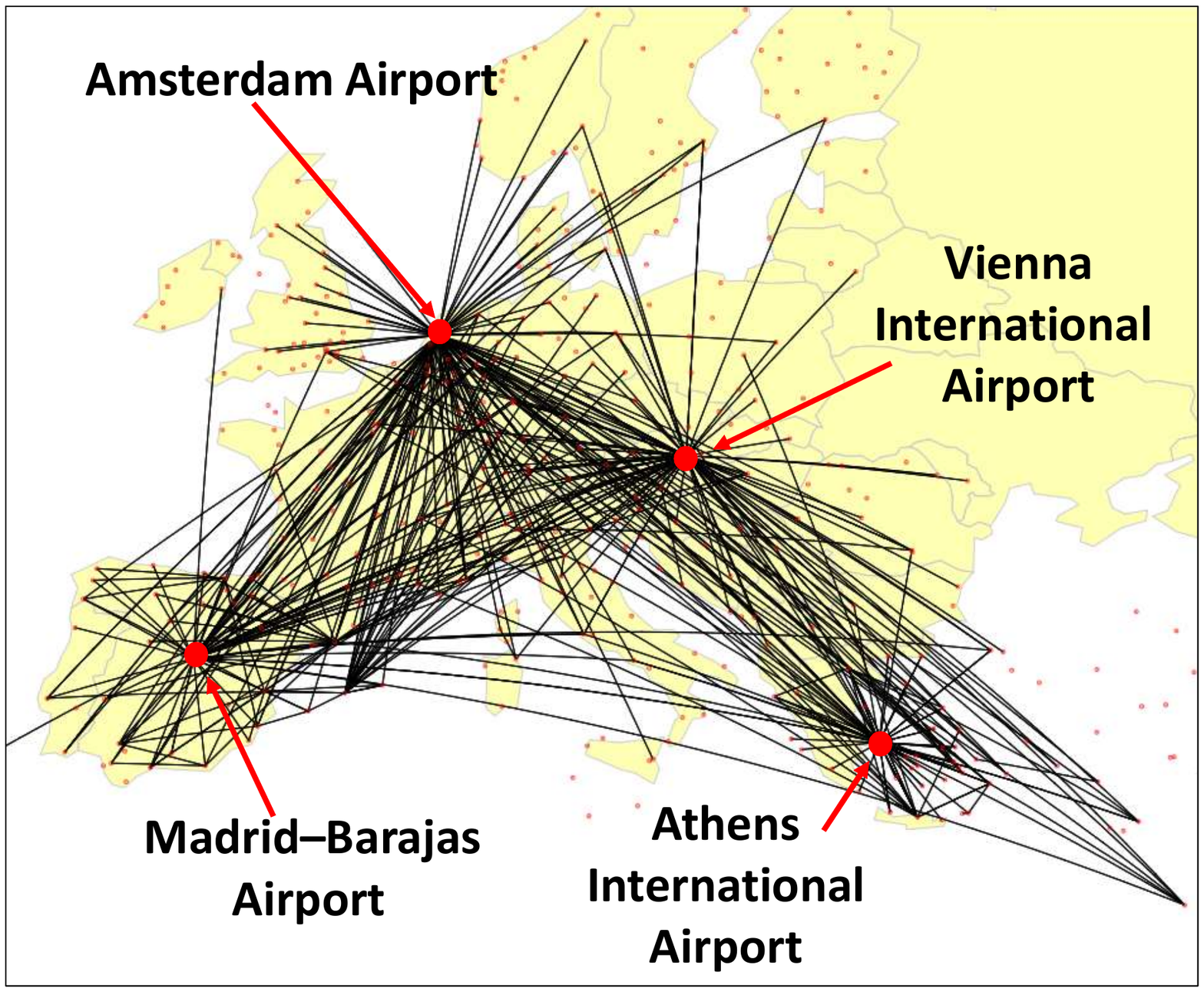}
		\includegraphics[clip,trim=4.5cm 3cm 4.5cm 3cm,width = 0.28\textwidth]{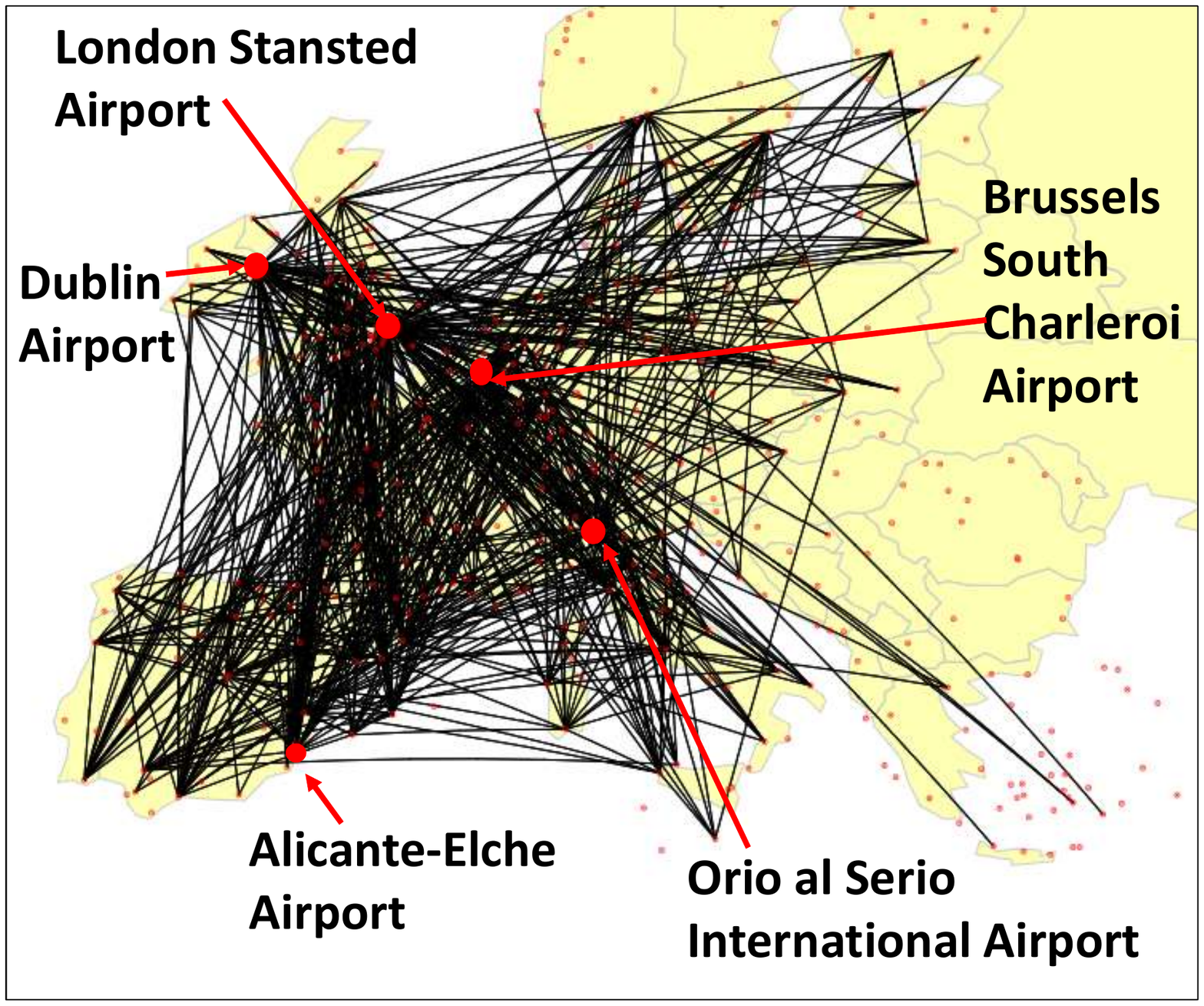}
		\caption{\richcom finds meaningful structures in EU-Airline dataset as (a) we show the adjacency matrix created from factors over multiple views showing \richcom able to find stable decompositions and detect various structures, (b) An example of ATN network of a major airline's (e.g. KLM) operating airport forming star structures and, (c) the network of a low-fare (low-cost) airline's (e.g. Ryanair) air traffic forming clique structure. In each aspect, the airports with the highest degree (hubs) are highlighted. }
		\label{richcom:atnnetwork}
	\end{center}
	 \vspace{-0.2in}
 \end{figure*}
 
\textbf{\em{Motivation}} : The motivation behind \richcom is that exploring a high-dimensional, multi-aspect graph is impossible to do manually. Thus, extraction and visualization of the main communities within such a graph is an important tool towards enabling multi-aspect graph exploration and a handful of simple structures could be easily understood, and often meaningful. Figure \ref{richcom:atnnetwork} is an illustrating example of \richcom, where the most 'important' sub-tensor that provides {\em{EU-Air Traffic Network}}'s summary is semantically interesting. Here, importance is computed based on MDL approach explained in Section (\ref{richcom:encode}). Alleviating the limitations of existing approaches, this chapter presents \cll a novel factorization model using constrained Block Term Decomposition-rank $(L, L, 1)$ to account for the multi-aspect graph's structures. Using \cll, factors are estimated via a novel algorithm based on the alternating optimization and alternating method of multipliers (AO-ADMM) and \richcom is used for encoding cost to discover community structures entries in the data and our contributions can be summarized as:
\begin{itemize}
	\item  \textbf{Novel Problem Formulation} : We formulate the exploration and discovery of rich structures in multi-aspect graphs using a novel tensor modeling as shown in Figure \ref{richcom:outmethod}.
	\item  \textbf{Novel and Effective Framework}: We introduce novel constrained LL1-tensor decomposition \cll and alternating optimization and alternating direction method of multipliers (AO-ADMM) is also developed that recovers the non-negative and sparse factors. 
	\item \textbf{Real-World Utility} : Finally, the proposed decomposition approach enables discovery of community structures via \richcom on tensor by using the recovered factors. We provide qualitative analysis of \richcom on synthetic and real, public multi-aspect datasets consisting  up to thousands of edges. Experiments testify \richcom spots interesting structures like `stars' and `cliques' in the European Air Traffic Network (ATN) data (See Fig. \ref{richcom:atnnetwork}).
\end{itemize}
\textbf{Reproducibility}: We make our Python and MATLAB implementation publicly available Link\footnote{\label{note1}\richcomcodeurl}. Furthermore, graph and tensor generator along with the small size dataset we use for evaluation are also available at the same link.

The rest of this chapter is organized as overview of related work (Sec \ref{richcom:related}), problem formulation (Sec \ref{richcom:problem}), proposed method (Sec \ref{richcomsec:method}), qualitative analysis (Sec \ref{richcom:experiments}) on six real datasets. Finally, Sec. \ref{richcom:conclusions} summarizes some closing remarks.

\section{Related work}
\label{richcom:related}
In this section, we provide review of the work related to our algorithm. At large, in the literature this work can be categorized into three main categories as described below:

\textbf{Multi-aspect Community Detection}: Real data usual exhibit different cross-domain relations and can be represented as multi-aspect graphs. In  \cite{berlingerio2011finding} the authors introduce a graph theoretic based community detection algorithm over multi-aspect graphs and relationships between nodes represented by various types of edges. In \cite{papalexakis2013more,gujral2018smacd} the authors introduce a robust algorithm for community detection on multi-view graphs based on tensor decomposition which uses a regularized CP model with sparsity penalties. In addition to different graph views, "time" is also a multi-aspect feature of a graph. Finally, most recently in \cite{gorovits2018larc} the authors propose a method for identifying and tracking dynamic communities in time-evolving networks. None of these works summarize in terms of local structures; our work focuses on interpretable community structures.

\textbf{Static/Dynamic Graph Summarization}: Graph based method like \cite{toivonen2011compression} and \cite{yan2002gspan} uses structural equivalence and minimum DFS code, respectively, to simplify single graph representation to obtain a compressed smaller graph. VoG \cite{koutra2015summarizing} uses minimum descriptive length to label sub-graphs in terms of a vocabulary including cliques, stars, chains and bipartite cores on static graphs. Further, TimeCrunch \cite{shah2015timecrunch} extends \cite{koutra2015summarizing} for dynamic graphs and it uses MDL to label and stitch the dynamic sub-graphs. Here, compression is not our aim. Our work proposes AO-ADMM based constrained Block Term-(L, L, 1) decomposition  for multi-aspect graphs to label community structures and provides an effective and stable algorithm for discovering them. 

\textbf{Tensor Decomposition and Optimization}: Besides well known CP and Tucker decomposition, the Block Term Decomposition aka BTD was presented in the $3$-segment papers \cite{de2008decompositions1,de2008decompositions2,de2008decompositions3}. BTD provides a tensor decomposition in a sum of Tucker terms. The paper \cite{chatzichristos2017higher} use BTD in modeling process for better exploitation of the spatial dimension of fMRI images. In literature, well-suited optimization framework namely Alternating Method of Multipliers, that has shown promise in other, simpler tensor models \cite{huang2016flexible,smith2017constrained,afshar2018copa} was introduced to speed up tensor decomposition process.
\section{Problem Formulation}
\label{richcom:problem}
Consider a multi-aspect graph, where each frontal slice of a tensor is a snapshot of the adjacency matrix at a given time point or a different aspect of user relation (e.g., ``calls'', ``text'', etc). A typical application here is the extraction of communities in the graph and their evolution over time. Traditionally, one would take a CP decomposition of the tensor, where each rank-one component would map to each community; furthermore, the values on the $a_r$ and $b_r$ vectors (corresponding to the two dimensions of the nodes in our graph) would give the membership of each node to each community. This method has been shown to be very accurate in extracting community memberships  \cite{papalexakis2013more,cao2014tensor,gujral2018smacd}, however, it is not able to identify the shape of the sub-graph that defines the community. As we argue in the introduction (Sec. \ref{richcom:intro}) of this chapter, the reconstructed adjacency matrix of the $r-th$ community will be $a_rb^T_r$ which is a rank-$1$ block, corresponding to a clique. Therefore, CP imposes clique structure to all communities, which need not be true, and may result in false conclusions. On the other hand, a BTD-(L, L, 1) is able to extract higher rank-$1$ components in the first two modes, providing flexibility in extracting richer structure in the multi-aspect graph. 

The problem that we solve is the following:
\begin{mdframed}[linecolor=red!60!black,backgroundcolor=gray!20,linewidth=1pt,  topline=true,  rightline=true, leftline=true]
\textbf{Given}: a multi-aspect graph given as a higher-order tensor $\tensor{X}$ and $\mathcal{S}$ as structure vocabulary:\\
\textbf{Extract}: latent components of $\tensor{X}$ which allow for higher \\than rank-$1$ in the first two modes, \\
\textbf{Find}: a set of possibly overlapping sub-tensors
	\{$\tensor{T}_{1}, \tensor{T}_{2} \dots \tensor{T}_R$\} with minimal encoded length (average bits) i.e. $\mathcal{B}(\mathcal{S}_i) + \mathcal{B}(Err)$
to \textbf{briefly describe} the given multi-aspect graph communities structure in a efficient and scalable fashion.
\end{mdframed}

\section{Proposed Method: RICHCOM}
\label{richcomsec:method}
Given $\tensor{X}$, this section proposes the constrained LL1 decomposition in order to factorize the multi-aspect graph or tensor into its constituent community-revealing factors and provide community structure's encoding formulation. We focus on a third-order tensor $\tensor{X} \in \mathbb{R}^{I\times J  \times K} $ for our problem and its loss function formulation is given by:
\begin{equation}
\label{richcom:ls}
\mathcal{LS}(\tensor{X},\mathbf{A},\mathbf{B},\mathbf{C}) =   \argmin_{\mathbf{A},\mathbf{B},\mathbf{C}}  ||\tensor{X}-\sum_{r=1}^R (\mathbf{A}_r \cdot \mathbf{B}_r^T) \circ \mathbf{c}_r ||_F^2  
\end{equation}
The above Eq. (\ref{richcom:ls}) in its non-convex optimization form can be solved as:
\begin{equation}
\label{richcom:rls}
\{\mathbf{A},\mathbf{B},\mathbf{C}\} \leftarrow   \argmin_{\mathbf{A},\mathbf{B},\mathbf{C}}  \mathcal{LS}(\tensor{X},\mathbf{A},\mathbf{B},\mathbf{C}) + r(\mathbf{A}) + r(\mathbf{B})  +r(\mathbf{C}) 
\end{equation}
where \textbf{r(.)} is a penalty or constrained function. Instead of solving (\ref{richcom:rls}) for the all three factors at once, we can use least square method by fixing all factor matrices but solve one each at a time. Thus the problem is converted into three coupled linear least-squares sub-problems.
\begin{figure}[!ht]
	\begin{center}
		\includegraphics[clip,trim=0cm 7.5cm 0cm 5.8cm,,width = 0.8\textwidth]{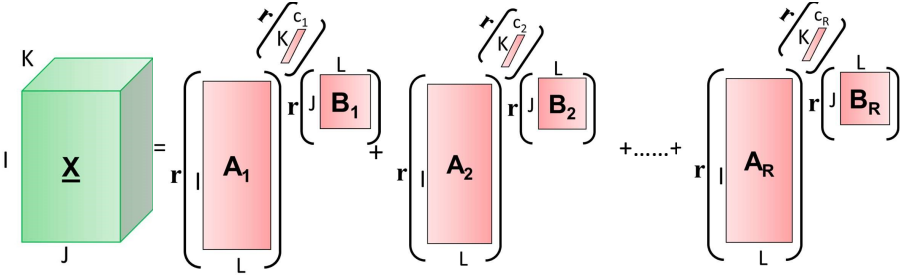}
		\caption{Proposed constrained -(L, L, 1) for a third-order tensor $\tensor{X} \in \mathbb{R}^{I\times J  \times K} $. Here, r(.) represents constraint or penalty function on each block.}
		\label{richcom:methodStep1}
	\end{center}
\vspace*{-\baselineskip}
\end{figure}
A very well-suited optimization framework, that has shown promise in other, simpler tensor models \cite{huang2016flexible,smith2017constrained,afshar2018copa}, is the Alternating Method of Multipliers \cite{boyd2011distributed}, applied in an alternating optimization fashion. In the next subsection we derive and describe our optimization method in detail.
\subsection{Solving the \cll}
In the proposed framework, each step consists of fixing two factors and minimizing the sub-problem with respect to the third factor. In this section, we provide solver for tackling the problem efficiently.
\subsubsection{\textbf{Factor $\mathbf{A}$ update}}
Consider first the update of factor $\mathbf{A} = [\mathbf{A}_1 \ \ \mathbf{A}_2 \dots \mathbf{A}_R] \in \mathbb{R}^{I\times LR}$ at iteration $k$, obtained after fixing $\mathbf{B}  = \mathbf{B}^{(k-1)}$ and $\mathbf{C}  = \mathbf{C}^{(k-1)}$ and solving the corresponding minimization. The arising sub-problem, after manipulation can be re-written as:
\begin{equation}
\label{richcom:updateA}
\mathbf{A}^{(k)}  \leftarrow   \argmin_{\mathbf{A}}[(\mathbf{B}^{(k-1)} \odot \mathbf{C}^{(k-1)})^{\dagger}  \cdot \tensor{X}_{(1)}]^T
\end{equation}
where $\tensor{X}_{(1)} = (\mathbf{B} \odot \mathbf{C}) \mathbf{A}^T$ is a metricized reshaping of the tensor $\tensor{X}$ in mode-1. Also $\mathbf{B}^{(k-1)}  \odot \mathbf{C}^{(k-1)} := [(\mathbf{B}^{(k-1)}_1 \otimes \mathbf{c}^{(k-1)}_1) \ \  (\mathbf{B}^{(k-1)}_2 \otimes \mathbf{c}^{(k-1)}_2) \ \   \dots \ \  (\mathbf{B}^{(k-1)}_R \otimes \mathbf{c}^{(k-1)}_R) ]$ is the partition-wise Kronecker product of $\mathbf{B}^{(k-1)}$ and $\mathbf{C}^{(k-1)}$, where $\mathbf{B}^{(k-1)}_r$ denotes partitioned matrix $r$ of  $\mathbf{B}^{(k-1)}$ and $\mathbf{c}^{(k-1)}_r$ denotes column $r$ of  $\mathbf{C}^{(k-1)}$, and $\otimes$ denotes the Kronecker product operator; see also Def. \ref{def:pKronecker}. The regularized version of Equ. \ref{richcom:updateA} is given as: 
\begin{equation}
\label{richcom:updateAadmm}
\mathbf{A}^{(k)}  \leftarrow   \argmin_{\mathbf{A}_{reg}}[(\mathbf{B}^{(k-1)} \odot \mathbf{C}^{(k-1)})^{\dagger}  \cdot \tensor{X}_{(1)}]^T +r(\mathbf{A}^{(k-1)})
\end{equation}
Constraints are implemented in such a way that when the constraints are violated then \textbf{r(.)} takes the value of infinity ($\infty$), otherwise in normal scenario regularization (e.g. sparsity, l1 etc.) uses finite values to penalize undesirable but reasonable solutions. Following the steps in \cite{smith2017constrained}, the primal, an auxiliary and a dual variables as $\mathbf{H} \in \mathbb{R}^{I\times LR}$, $\widetilde{\mathbf{H}} \in \mathbb{R}^{LR\times I}$ and $\mathbf{U} \in \mathbb{R}^{I\times LR}$, respectively are introduced to account for the augmented Lagrangian of Eq. (\ref{richcom:updateAadmm}). Each iteration optimizes factor $\mathbf{A}$ by means of \text{\em{AO-ADMM}} (given in algorithm \ref{richcom:method}) as: 
\begin{equation}
\label{richcom:updateAaoadmm}
\begin{aligned}
\mathcal{L}_{A}^{(k)}(\mathbf{H},\widetilde{\mathbf{H}},\mathbf{U}) = {} &   \argmin_{\mathbf{H},\widetilde{\mathbf{H}}}||(\mathbf{Y}_{A}^{(k)}\cdot\tensor{X}_{(1)})^T||_F^2 +Trace(\mathbf{H} \mathbf{H}^{T}) \\  
& + r(\mathbf{H}) + (\rho/2)||\mathbf{H}-\widetilde{\mathbf{H}}^T+\mathbf{U}||_F^2 \\
& \text{subject to}  \quad \quad \quad \mathbf{H} =\widetilde{\mathbf{H}}
\end{aligned}
\end{equation}
where $\mathbf{Y}_{A}^{(k)}: =(\mathbf{B}^{(k-1)} \odot \mathbf{C}^{(k-1)})^{\dagger} $, and $ r(\mathbf{H})$ is the regularizer (e.g. non-negativity, sparsity  etc.) constraint on factor $\mathbf{A}$. The Lagrange multiplier $\rho$ is set to minimum value between $10^{-3}$ and $(||\mathbf{Y}_{A}||_F^2/LR)$ to yield good performance. The AO-ADMM solver advances further by iteratively updating the variables $\mathbf{H}$, $\widetilde{\mathbf{H}}$ ,$\mathbf{U}$ until a convergence criterion is met, i.e. whether the prescribed error tolerance is met, i.e., $||\mathbf{A}^{(k)}_{curr} - \mathbf{A}^{(k)}_{prev}||_F/ ||\mathbf{A}^{(k)}_{prev.}||_F \le \epsilon$ or maximum number of iterations are reached.

\subsubsection{\textbf{Factor $\mathbf{B}$ update}}
Update of factor $\mathbf{B} = [\mathbf{B}_1 \ \ \mathbf{B}_2 \dots \mathbf{B}_R] \in \mathbb{R}^{J\times LR}$ can be similarly obtained by solving the sub-problem as:
\begin{equation}
\label{richcom:updateBadmm}
\mathbf{B}^{(k)}  \leftarrow   \argmin_{\mathbf{B}\geq 0}[(\mathbf{C}^{(k-1)} \odot \mathbf{A}^{(k)})^{\dagger}  \cdot \tensor{X}_{(2)}]^T +r(\mathbf{B}^{(k-1)})
\end{equation}
And its \text{\em{AO-ADMM}} solver as :
\begin{equation}
\label{richcom:updateBaoadmm}
\begin{aligned}
\mathcal{L}_{B}^{(k)}(\mathbf{H},\widetilde{\mathbf{H}},\mathbf{U}) = {} &   \argmin_{\mathbf{H},\widetilde{\mathbf{H}}}||(\mathbf{Y}_{B}^{(k)} \cdot \tensor{X}_{(2)})^T||_F^2 +Trace(\mathbf{H} \mathbf{H}^{T}) \\  
& + r(\mathbf{H}) + (\rho/2)||\mathbf{H}-\widetilde{\mathbf{H}}^T+\mathbf{U}||_F^2 \\
& \text{subject to}  \quad \quad \quad \mathbf{H} =\widetilde{\mathbf{H}}
\end{aligned}
\end{equation}
where $\mathbf{H} \in \mathbb{R}^{J\times LR}$, $\widetilde{\mathbf{H}} \in \mathbb{R}^{LR\times J}$ and $\mathbf{U} \in \mathbb{R}^{J\times LR}$ and $\mathbf{Y}_{B}^{(k)}: =(\mathbf{C}^{(k-1)} \odot \mathbf{A}^{(k)})^{\dagger} := [(\mathbf{c}^{(k-1)}_1 \otimes \mathbf{A}^{(k)}_1) \ \  (\mathbf{c}^{(k-1)}_2 \otimes \mathbf{A}^{(k)}_2) \ \   \dots \ \  (\mathbf{c}^{(k-1)}_R \otimes \mathbf{A}^{(k)}_R) ]^{\dagger}$, providing a similar optimization problem as in \ref{richcom:updateAadmm}. Eq. \ref{richcom:updateBaoadmm} formulates the update rule for solving (\ref{richcom:updateAadmm}) and similarly method using a general framework for \cll to update factor  $\mathbf{B}$.

\begin{mdframed}[linecolor=red!60!black,backgroundcolor=gray!20,linewidth=1pt,  topline=true,  rightline=true, leftline=true]
\textbf{Proposition 1}: If the sequence generated by AO-ADMM in
Algorithm \ref{richcom:method} is bounded, then the sequence \{$\mathbf{A}(k), \mathbf{B}(k), \mathbf{C}(k)$\} and AO-ADMM converges to a stationary point of Equ. \ref{richcom:rls}.\\
\textbf{Proof}: The convergence follows from [\cite{huang2016flexible}, Theorem 1; \cite{razaviyayn2013unified}, Theorem 2].
\end{mdframed}

\subsubsection{\textbf{Factor $\mathbf{C}$ update}}
Update of factor $\mathbf{C} = [\mathbf{c}_1 \ \ \mathbf{c}_2 \dots \mathbf{c}_R] \in \mathbb{R}^{K\times R}$ is obtained by
fixing $\mathbf{A}$ and $\mathbf{B}$ at their most recent values, and solving it by :
\begin{equation}
\label{richcom:updateCadmm}
\begin{aligned}
\mathbf{C}^{(k)}  \leftarrow   {} & \argmin_{\mathbf{C}_{reg}}\big\{[(\mathbf{A}^{(k)}_{1} \odot_c \mathbf{B}^{(k)}_{1})1_{L_{1}} \dots (\mathbf{A}^{(k)}_{R} \odot_c \mathbf{B}^{(k)}_{R})1_{L_{R}}]^{\dagger} \cdot  \tensor{X}_{(3)}  \big\}^T \\
& + r(\mathbf{C}^{(k-1)}) 
\end{aligned}
\end{equation}
where $\tensor{X}_{(3)} := [(\mathbf{A}_{1} \odot_c \mathbf{B}_{1})1_{L_{1}} \ \ (\mathbf{A}_{2} \odot_c \mathbf{B}_{2})1_{L_{2}} \dots  (\mathbf{A}_{R} \odot_c \mathbf{B}_{R})1_{L_{R}}]\cdot C^{T}$. Utilizing an \text{\em{AO-ADMM}} approach, the augmented Lagrangian is represented as:
\begin{equation}
\label{richcom:updateCaoadmm}
\begin{aligned}
\mathcal{L}_{C}^{(k)}(\mathbf{H},\widetilde{\mathbf{H}},\mathbf{U}) = {} &   \argmin_{\mathbf{H},\widetilde{\mathbf{H}}}||(\mathbf{Y}_{C}^{(k)} \cdot \tensor{X}_{(3)})^T||_F^2 +Trace(\mathbf{H} \mathbf{H}^{T}) \\  
& + r(\mathbf{H}) + (\rho/2)||\mathbf{H}-\widetilde{\mathbf{H}}^T+\mathbf{U}||_F^2 \\
& \text{subject to}  \quad \quad \quad \mathbf{H} =\widetilde{\mathbf{H}}
\end{aligned}
\end{equation}
where $\mathbf{H} \in \mathbb{R}^{K\times R}$, $\widetilde{\mathbf{H}} \in \mathbb{R}^{R\times K}$ and $\mathbf{U} \in \mathbb{R}^{K\times R}$ and $\mathbf{Y}_{C}^{(k)}: = (\mathbf{A}^{(k)} \odot \mathbf{B}^{(k)})^{\dagger} :=[(\mathbf{A}_{1}^{(k)} \otimes \mathbf{B}_{1}^{(k)})1_{L_{1}} \ \  (\mathbf{A}_{2}^{(k)} \otimes \mathbf{B}_{2}^{(k)})1_{L_{2}} \ \ \dots \ \ (\mathbf{A}_{R}^{(k)} \otimes \mathbf{B}_{R}^{(k)})1_{L_{R}} ]^{\dagger}$. The normalization ($c_r \leftarrow c_r/||c_r||$) is applied to each column of obtained factor matrix $\mathbf{C}^{(k)}$ to avoid any underflow or overflow. 

Our proposed method readily extends to higher-order tensors, since the underlying tensor model BTD - rank (L,L,1) mathematically extends naturally as such. In this study, we focus on three-mode tensor only for simplicity.

\subsection{Community Structure Encoding}
\label{richcom:encode}
Once the proposed solver returns the solution of (\ref{richcom:rls}), next step is to find communities by extracting weakly connected components \cite{cgc} from first two dimensions of tensor i.e obtained matrices $D_r=(\mathbf{A}_r \cdot \mathbf{B}_r^T)$ from factors $\mathbf{A}$ and $\mathbf{B}$. Once the communities are formed, we extract the sub-tensors $\tensor{T}_i \in \mathbb{R}^{b\times b \times b} $ from original multi-aspect graph or tensor using nodes falls under each community and we find structure $\mathcal{S}$ such as \text{\em{FC}}: Full Clique; \text{\em{NC}}: Near Clique; \text{\em{ST}}: Star; \text{\em{CH}}: Chain; \text{\em{CB}}: Complete Bipartite; \text{\em{NB}}: Near Bipartite, that best describes its characteristics using below encoding cost (average bits) $\mathcal{B}(\tensor{T})$ with help of J. Rissanen modeling $B_\mathbb{N}$ \cite{rissanen1983universal}. We observed that these structures appear very frequent, in most of real world data, (e.g in company communication networks, there is possibility of one way interaction that results in star (hub and spoke) structure and in friendship network, most of friends are connected to each other forming cliques or near cliques etc).
\subsubsection{\textbf{Cliques and Near Cliques}}
These are the simplest structures, as all the nodes have a similar role. For a full and near cliques, we compute a set of fully-connected or almost fully connected nodes. Consider sub-tensor $\tensor{T} \in \mathbb{R}^{|b| \times |b| \times |b|} $, where $|b|$ is number of nodes fall in the community and $|nz|$ as number of non-zero elements in $\tensor{T}$. Also, consider $|n|$ represents number of nodes having at least two non-zero element in $\tensor{T}$.  Now, if $\{|nz|=|b|*|b|*|b|\}$ then sub-tensor is considered as full clique, otherwise between range  $\{0.75*|b|*|b|*|b| \leq |nz| \le |b|*|b|*|b|\}$, sub-tensor is refereed as near clique and we formalize the average bits to encode the structure as. 
\begin{equation}
\label{richcom:clique}
\begin{aligned}
\mathcal{B}^{(FC,NC)}_{\tensor{T}} = {} & B_\mathbb{N} (\tensor{T}) + \log_2(\Mycomb{|n|})  - |nz| \log(|nz|) \\
& - |z| \log(|z|) +3b^3\log(b)
\end{aligned}
\end{equation}
where $|z|$ are number of elements not present in $\tensor{T}$. The intuition is that the more sparse a near-clique is, encoding will be cheaper.
\subsubsection{\textbf{Star}} A star is very special case of the structure because of its highly sparse nature and it consists of a single node we call it \text{\em{hub}} connected to at least other two nodes. Consider sub-tensor $\tensor{T} \in \mathbb{R}^{|b| \times |b| \times |b|} $ and we formulate the average bits to encode the structure as :
\begin{equation}
\label{richcom:star}
\mathcal{B}^{(ST)}_{\tensor{T}} =  B_\mathbb{N} (\tensor{T}) + \log_2(\Mycomb[|b|-1]{|n-1|}) + n \log(n) 
\end{equation} 
where $|n|$ represents number of nodes having at least one non-zero element in $\tensor{T}$ and $|b|$ is number of nodes fall in the community.
\subsubsection{\textbf{Chain}}
In the chain structure, each node is linked to only one of its adjacent next node and forming a super-diagonal sub-tensor that means it has non-zero elements below, above and at the diagonal position only. Consider sub-tensor $\tensor{T} \in \mathbb{R}^{|b| \times |b| \times |b|} $ and we formulate the average bits to encode the structure as :
\begin{equation}
\label{richcom:chain}
\mathcal{B}^{(CH)}_{\tensor{T}} =  B_\mathbb{N} (\tensor{T}) + |n|\log_2(|b|)
\end{equation} 
where $|n|$ represents number of nodes having at least one non-zero diagonal element in $\tensor{T}$ and $|b|$ is number of nodes fall in the community.
\subsubsection{\textbf{Bipartite and Near Bipartite}}
A complete bipartite and near-bipartite structure is a sub-tensor whose nodes can be divided into two subsets $C_1$ and $C_2$ such that no edge has both endpoints in the same subset. We formulate the average bits to encode the structure as :
\begin{equation}
\label{richcom:bip}
\begin{aligned}
\mathcal{B}^{(CB,NB)}_{\tensor{T}} = {} & B_\mathbb{N} (\tensor{C_1}) + B_\mathbb{N} (\tensor{C_2})+  \log_2(\Mycomb{|n_2|}) \\
& \log_2(\Mycomb{|n_1|}) + - |nz| \log(|nz|)  - |z| \log(|z|)\\ &+3b^3\log(b)
\end{aligned}
\end{equation}
where $|nz|$ as number of non-zero elements in $\tensor{T}$, $|z|$ are number of elements not present in $\tensor{T}$, $|n_1|$ and $|n_2|$ represents number of nodes having at least two non-zero element in $\tensor{C_1}$ and $\tensor{C_2}$, respectively.
\subsection{Encoding the Error}
We encode the errors made by structure $\mathcal{S}$ with regard to $\tensor{X}$ and store the information in two separate encoding matrix $\mathbf{E}^{+}$ and $\mathbf{E}^{-}$. The former refers to the area of $\tensor{X}$ that structure $\mathcal{S}$ include and later refer as the area of $\tensor{X}$ that $\mathcal{S}$ does not include. We formulate the average bits to encode the error as:
\begin{equation}
\label{richcom:bip1}
\begin{aligned}
\mathcal{B}(\mathbf{E}^{+}) = {}  \log_2(|\mathbf{E}^{+}|) - ||\mathbf{E}^{+}||\log(|nz|) - ||\mathbf{E}^{+}||^{'}\log(|z|) + |\mathbf{E}|\log(|b|)  \\
\mathcal{B}(\mathbf{E}^{-}) = {}  \log_2(|\mathbf{E}^{-}|) - ||\mathbf{E}^{-}||\log(|nz|) - ||\mathbf{E}^{-}||^{'}\log(|z|) + ||\mathbf{E}||\log(|b|)  
\end{aligned}
\end{equation}
where $\mathbf{E}=\mathbf{E}^{+}+\mathbf{E}^{-}$. We first encode the number of $1s$ in $\mathbf{E}^{+}$ and $\mathbf{E}^{-}$, then followed by sending the actual $1s$ and $0s$ to its optimal prefix codes.

\subsection{Visualization of Community Structure}
\label{richcom:vis}
Community structure visualization is a powerful tool to convey the content of a community and can highlight patterns, and show connections among nodes. We developed tool ($link^{\ref{note1}}$) in MATLAB to visualize each community structure. Figure \ref{richcom:ccv} is visualization of a few structures discovered by \richcom that provide summarization with the minimum encoding cost in American college football dataset. 
\begin{figure}[!ht]
	\begin{center}
		\includegraphics[clip,trim=0cm 14.4cm 0cm 9cm,width = 0.9\textwidth]{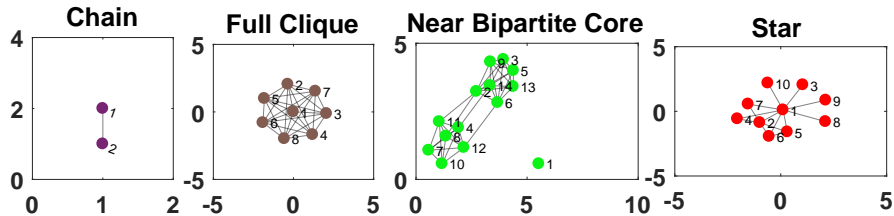}
		\caption{Visualization: a few community structures of Football\cite{rozenshtein2014discovering} dataset.}
		\label{richcom:ccv}
	\end{center}
\end{figure}

Finally, by putting everything together, we obtain the general version of our \richcom and ADMM solver,as presented in Algorithm \ref{richcom:method} and \ref{richcom:ADMM}, respectively. 

\begin{algorithm2e}[H]
\footnotesize
   	\caption{\richcom : Discovering Rich Community Structure}
   	\label{richcom:method}
	 \SetAlgoLined
			\KwData{ $\tensor{X} \in \mathbb{R}^{I \times J \times K}$, $L\in \mathbb{R}^{R}$, Max iterations $I_{max}$.}
			\KwResult{ Factor matrices $\mathbf{A}, \mathbf{B}, \mathbf{C}$, Structures $\mathcal{S}$.}
			$(\mathbf{A}, \mathbf{B}, \mathbf{C})  \leftarrow \cll  (\tensor{X},L,I_{max})$   \\
			$\mathbf{D}_r \leftarrow (\mathbf{A}_r \cdot \mathbf{B}_r^T) \quad \forall r \in R$   \\
			$\{Y_{nodes},Y_{comm}\} \leftarrow communityDetection(\mathbf{D})$\\
			\For{i = 1 : \text{total communities}}{
    			 $m \leftarrow Y_{nodes}(find(Y_{comm}==i))$ \\
    			 $\tensor{T}_{i} \leftarrow\tensor{X}(m,m,m)$ \quad  	$\mathcal{S}_{i} \leftarrow encode(\tensor{T}_i) \qquad \qquad \qquad \qquad \qquad \rhd{\  \ \textbf{using section \ref{richcom:encode}}}$ \\
			}
		 \KwRet{ ($\mathbf{A}, \mathbf{B}, \mathbf{C}$, $\mathcal{S}$)}\\
		     
		     \SetKwProg{Fn}{Function}{ is}{end}
             \Fn{\textbf{cLL1} ($\tensor{X},L,I_{max} $)}{
    		 	  Initialize $\mathbf{A}, \mathbf{B}, \mathbf{C}$ randomly; $s \leftarrow sum(L)$ ; $R \leftarrow length(L)$\\
    					\While{$k  < I_{max}$ or not-convergence}{
        			$\mathbf{G} \leftarrow AA^T$ ; $\qquad \mathbf{Y}_{A}^{(k)} \leftarrow (\mathbf{B}^{(k-1)} \odot \mathbf{C}^{(k-1)})^{\dagger}$; $\mathbf{F} \leftarrow \big(\mathbf{Y}_{A}^{(k)} \cdot\tensor{X}_{(1)}\big)^T$; $\quad \rho = min(10^{-3}, (||\mathbf{Y}_{A}^{(k)}||_F^2/s)$\\
        		     $\mathbf{A}^{(k)} , \hat{\mathbf{A}}^{(k)} \leftarrow  \textbf{ADMM}(\mathbf{A}^{(k-1)}, \hat{\mathbf{A}}^{(k-1)},\mathbf{F},\mathbf{G},\rho)$  \\
        		     
        			$\mathbf{G} \leftarrow BB^T$ ; $\qquad \mathbf{Y}_{B}^{(k)} \leftarrow (\mathbf{c}^{(k-1)} \odot \mathbf{A}^{(k)})^{\dagger}$; $\mathbf{F} \leftarrow \big(\mathbf{Y}_{B}^{(k)} \cdot\tensor{X}_{(2)}\big)^T$; $\quad \rho = min(10^{-3}, (||\mathbf{Y}_{B}^{(k)}||_F^2/s)$\\
        			$\mathbf{B}^{(k)} , \hat{\mathbf{B}}^{(k)} \leftarrow  \textbf{ADMM}(\mathbf{B}^{(k-1)}, \hat{\mathbf{B}}^{(k-1)},\mathbf{F},\mathbf{G}, \rho)$ \\
        			 
        			$\mathbf{G} \leftarrow CC^T$; $\mathbf{Y}_{C}^{(k)} \leftarrow (\mathbf{A}^{(k)} \odot \mathbf{B}^{(k)})^{\dagger}=[(\mathbf{A}_{1}^{(k)} \otimes \mathbf{B}_{1}^{(k)})1_{L_{1}}, \dots , (\mathbf{A}_{R}^{(k)} \otimes \mathbf{B}_{R}^{(k)})1_{L_{R}} ]^{\dagger}$;	$\mathbf{F} \leftarrow \mathbf{Y}_{C}^{(k)} \cdot\tensor{X}_{(3)}\big)^T$; $\quad \rho = min(10^{-3}, (||\mathbf{Y}_{C}^{(k)}||_F^2/s)$\\
        			$\mathbf{C}^{(k)} , \hat{\mathbf{C}}^{(k)} \leftarrow  \textbf{ADMM}(\mathbf{C}^{(k-1)}, \hat{\mathbf{C}}^{(k-1)},\mathbf{F},\mathbf{G}, \rho)$ \\
    			}
    		}
		\end{algorithm2e}

\section{Experiments}
\label{richcom:experiments}
We evaluate the quality, scalability, and real-world utility of \richcom on both real and synthetic datasets. We used MatlabR2016b for our implementations, along with functionalities for tensors from the Tensor-Toolbox \cite{bader2015matlab} and Tensorlab \cite{vervliet2016tensorlab}.
\subsection{Experimental Setup}

We provide the datasets used for evaluation in Table \ref{richcom:data}. \\

\begin{algorithm2e}[H]
   \caption{ADMM solver for Equ. (\ref{richcom:rls}).}
     \label{richcom:ADMM}
	 \SetAlgoLined
      \KwData{  Residual matrices $\mathbf{R}_{H}$, $\mathbf{R}_{U}$, $\mathbf{R}_{F}$, $\mathbf{R}_{G}$, and $\rho$}
		\KwResult { $\mathbf{R}_{H}$, $\mathbf{R}_{U}$}
		$\mathbf{L} \leftarrow \text{Lower Cholesky decomposition} (\mathbf{R}_{G}+\rho \mathbf{I})$\\
		\While {$iter <I_{ADMM}$ or ($r < \epsilon$ and  $s < \epsilon$ ) } {
			$\widetilde{\mathbf{R}}_{H} \leftarrow (\mathbf{L}^{T})^{-1} \mathbf{L}^{-1}(\mathbf{R}_{F}+ \rho(\mathbf{R}_{H}+\mathbf{R}_{U})$\\
			$\mathbf{R}_{H}^{0} \leftarrow \mathbf{R}_{H}$\\
			$\mathbf{R}_{H} \leftarrow \argmin_{\mathbf{R}_{H}} r(\mathbf{R}_{H})+Tr(\mathbf{R}_{G})+\frac{\rho}{2}||\mathbf{R}_{H} - \widetilde{\mathbf{R}}_{H}^{T} +\mathbf{R}_{U}||$\\
			 $\mathbf{R}_{U} \leftarrow \mathbf{R}_{U} + \mathbf{R}_{H} - \widetilde{\mathbf{R}}_{H} $\\
			$r \leftarrow ||\mathbf{R}_{H} - \widetilde{\mathbf{R}}_{H}^{T}||_F^2/ ||\mathbf{R}_{H}||_F$\\
		$s \leftarrow ||\mathbf{R}_{H} - \widetilde{\mathbf{R}}_{H}^{0}||_F^2/ ||\mathbf{R}_{U}||_F$\\
			}
		\KwRet{ ($\mathbf{R}_{H},\mathbf{R}_{U}$)}
\end{algorithm2e}

\subsubsection{Synthetic Data Description}
In order to fully control and evaluate the community structures in our experiments, we generate synthetic graphs and converted them into multi-aspect graphs or tensor with different cluster density. The code for both generators are available at $link^{\ref{note1}}$. Consider graph or network $G$ = ($\mathcal{V}$,$\mathcal{E}$), represented by node or vertex ($\mathcal{V}$) and relation between entities are defined by weighted or unweighted edges $(\mathcal{E},\mathcal{W}_{ij})$ s.t. $(\mathcal{V},\mathcal{E},\mathcal{W})\in \mathbb{R}^{N \times N}$. Let $\mathcal{W}_{ij}$ denotes the edge weight if ($i$, $j$) $\in \mathcal{E}$, otherwise $\mathcal{W}_{ij} = 0$. Consider $\tensor{X}_{n} = (\mathcal{V},\mathcal{E}^{(n)},\mathcal{W}^{(n)}) \subset G$ denotes the structure constructed as the sub-tensor by subset of nodes $N \in \mathcal{V}$ and third mode as its single and two-hop neighbors. The tensor $\tensor{X}$ captures dependencies between structures in graph $G$. In a network, the proposed multi-aspect representation is indeed capable of preserving the inherent "structure" among node adjacency along the $3^{rd}$ mode. We add Gaussian($\mu,\sigma$) white noise where the interactions in $\tensor{X}$ are uniformly distributed. This makes recovering the original communities structure increasingly challenging.

\subsubsection{Real Data}
Table \ref{richcom:data} provides a brief description of the real-world datasets from different
domains that we use in our experiments.The Football\cite{rozenshtein2014discovering} dataset is network of American football games of Fall 2000; European Air Traffic Network(ATN) \cite{kim2015community} multi-layer network composed of 37 different layers and each one corresponding to a different airline operating in Europe; Wikipedia\cite{leskovec2010predicting} consists of users participating in the elections over 7 days and its bipartite in nature; EU-Core\cite{yin2017local} consists of e-mail data from a large European research institution over $526$ days; Autonomous Systems (AS)-level interconnection\cite{aslevel} consists of multi-aspect of peering information inferred from (a) Oregon route-views, (b) RIPE RIS BGP, (c) Looking glass data, and (d) Routing registry, all combined and Enron\cite{rayana2016less} consists of a million emails communication over $899$ unique dates.
\subsubsection{The baseline method}
We compare exploration for methods: (a) VoG\cite{koutra2015summarizing}: finds meaningful patterns in single-layer graphs. We find each structure of $\tensor{X}$ iteratively with it and remove any duplicate structure found for fair analysis and (b) TimeCrunch\cite{shah2015timecrunch}: find coherent, temporal patterns in dynamic graphs, and evaluate \richcom: our proposed method to discover rich community structures in multi-aspect graphs. It is also {\em{noted}} that \richcom does not necessarily compete against \cite{koutra2015summarizing} and \cite{shah2015timecrunch}, it is an alternative way of achieving the same goal as those method while incorporating joint information from all aspects. Furthermore,  the proposed method has direct implications in advancing our ability to analyze general multi-aspect data, where the underlying latent structure is not necessarily rank-1.

\textbf{Note}: Traditional community detection approaches rely on kernel functions, summary graph statistics (e.g., degrees or clustering coefficients) or designed features to extract structural data from graphs for measuring local structures (cliques and lines). However, these approaches are limited because these hand-engineered features are inflexible and designing these features can be time-consuming and expensive. In this work, we only focus on directly related baselines approaches i.e. VoG\cite{koutra2015summarizing} and TimeCrunch\cite{shah2015timecrunch}. Thus, even though the extracted sub-structures can be used to identify members of each community, the scope of this work is to identify the structures themselves, and thus we do not compare against community detection methods since the objectives are different.
\subsubsection{Estimating Rank \textbf{$R$} and \textbf{$L$}}
In this work, we deal with two different kinds of Rank a) the rank `$R$' of a tensor $\tensor{X}$  and b) each `$r$' block having rank - (L, L, 1). Finding the rank $R$ of a tensor $\tensor{X}$ itself is a extremely hard problem \cite{haastad1990tensor} and out of the scope of this chapter. But we refer the interested reader to previous heuristics studies which try to estimate a low-rank estimation \cite{morup2009automatic,papalexakis2016automatic} for an overview. In our case, it also requires further research on the relationship between rank $R$ and rank $L$ of each block and it is beyond the scope of this chapter. For our work, we set rank $R$ of tensor $\tensor{X}$ as $5$ and vary the rank $L$ of block between $10-30$, experimental analysis is provided in sub-section \ref{richcomsubsubsec:rank}.
\subsection{Experimental Results}
\subsubsection{Qualitative Analysis}
In this section, we provide quantitative evaluation and analysis of our method for structure discovery on synthetic and real-world datasets. Table \ref{richcom:data} reports the resulting discovered structures.
\begin{table*}[t]
	\centering
	\footnotesize
	\setlength\tabcolsep{1pt}
	\begin{tabular}{|l|c|c||c|c|c|c|c|c||c|c|c|c|c|c||c|c|c|c|c|c||}
		\cline{1-21}
		\multirow{2}{*}{{\bf Dataset}}& \multicolumn{2}{|c||}{{\bf Statistics}}
		& \multicolumn{6}{c||}{{\bf \richcom }} & \multicolumn{6}{c||}{{\bf VoG\cite{koutra2015summarizing}}}
		& \multicolumn{6}{c||}{{\bf TimeCrunch \cite{shah2015timecrunch} }}   \\ \cline{2-21}
		
		& {\bf Nodes} & {\bf \#nz}& {\bf \text{\em{FC}}} & {\bf \text{\em{NC}}} &{\bf \text{\em{ST}}} &{\bf \text{\em{CH}}} &{\bf \text{\em{CB}}} &{\bf \text{\em{NB}}} &{\bf \text{\em{FC}}} & {\bf \text{\em{NC}}} &{\bf \text{\em{ST}}} &{\bf \text{\em{CH}}} &{\bf CB} &{\bf \text{\em{NB}}}&{\bf \text{\em{FC}}} & {\bf \text{\em{NC}}} &{\bf \text{\em{ST}}} &{\bf \text{\em{CH}}} &{\bf \text{\em{CB}}} &{\bf \text{\em{NB}}}\\ \hline
		
		Synthetic & $5k$  & $0.2m$  &$54$&$-$&$42$&$7$&$52$&$5$&$58$&$-$&$30$&$13$&$-$&$-$&$29$&$-$&$22$&$-$&$-$&$-$
		\\ \hline
		Football & $115$  & $8k$& $11$&$10$&$35$&$1$&$-$&$32$&$8$&$-$&$2$&$-$&$6$&$4$&$35$&$-$&$66$&$-$&$-$&$1$  \\ \hline
		EuroATN & $450$ & $10k$& $65$&$12$&$217$&$-$&$24$&$48$&$414$&$-$&$-$&$-$&$-$&$-$&$5$&$-$&$206$&$6$&$11$&$-$
		\\\hline 
		EU-Core & $1k$  & $221k$&$329$&$136$&$1829$&$360$&$31$&$530$&$731$&$-$&$1471$&$45$&$41$&$218$&$12$&$-$&$1996$&$499$&$36$&$192$
		\\\hline
		Wikipedia & $7k$  &$0.4m$ &$55$&$71$&$1373$&$11$&$30$&$97$&$1515$&$-$&$536$&$13$&$1$&$53$&$108$&$-$&$939$&$77$&$1$&$115$
		\\ \hline
		AS-level  & $13k$ & $0.7m$&$4$&$26$&$1246$&$32$&$6$&$237$&$2$&$-$&$626$&$26$&$3$&$170$&$2442$&$-$&$-$&$3$&$3$&$130$
		\\\hline 
		Enron & $32k$ &$1.9m$ &$11$&$57$&$1800$&$48$&$3$&$127$&$9$&$-$&$1507$&$30$&$2$&$119$&$1122$&$-$&$126$&$146$&$-$&$107$
		\\ \hline
		
	\end{tabular}
	\caption{Summarization of community (having $>1$ node) structures found in datasets. The most frequent structures are the `star' and `near bipartite'. For each dataset, we provide the frequency of each structure type: {\em{`FC'}} for full cliques, {\em{`NC'}} for near-cliques, {\em{1ST'}} for star, {\em{`CH'}} for chains, {\em{`CB'}} for complete-bipartite, and {\em{`NB'}} for near bipartite.}
	\label{richcom:data}
\end{table*}

\textbf{Synthetic}: A network of $50$ cliques, $30$ stars, $20$ chains and $20$ bipartite structures is generated and converted to tensor using one-hop neighbor relation with added noise. Discovered structures are presented in Tbl. \ref{richcom:data} and quality in terms of finding correct structures is given in Tbl. \ref{richcom:synData}. \richcom and VoG able to detect almost all of the structures (e.g clique, star and bipartite). VoG successfully detected most of the chains, both \richcom and TimeCrunch not able to detect all chain structures because of added noise to multi-aspect format of graph. However, for \richcom, we don't observe a significant drop in discovery of any structure after discarding the noise.

\textbf{EU-Core\cite{yin2017local}}: This is a temporal higher-order network dataset of emails between members of the research institution during October 2003 to May 2005 (18 months) and messages can be sent to multiple recipients of $42$ departments. Agreeing with hunch, EU-Core consists of a large number of clique and near-clique structures corresponding to many instances of discussion within department. However, we also find numerous re-occurred stars (see Tbl. \ref{richcom:data}) which indicate email communication between different departments. Interestingly, figure \ref{richcom: eucoreanalysis} shows researchers for $4$ months (corresponding to Oct, 2004 - Jan, 2005) forming continuous near clique structure, consisting of $85$ researchers who communicated each month belonging to same department. Suddenly, some members disappeared during Dec'04 and again in Jan'05 communication resumed back normally. We suspect that this may indicate the days corresponding to Festival of Lights, Christmas celebration week and New Year's Eve holiday time.
\begin{figure}[!ht]
	\begin{center}
		\includegraphics[clip,trim=0cm 11cm 2cm 11cm,width = 0.8\textwidth]{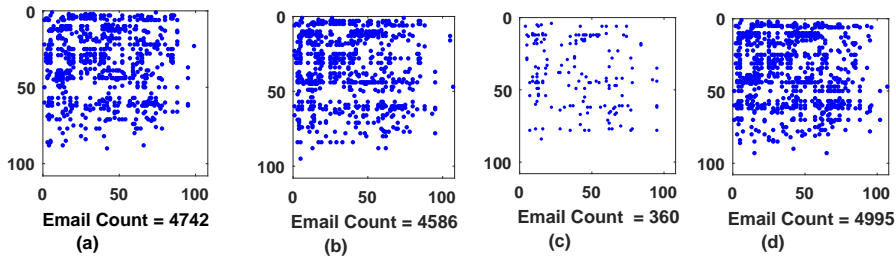}
		\caption{\richcom finds $107$ research members in EU-core forming a continuous near clique ($\approx 43\%$ density) over the observed last 4 months (a-d). The member's interaction drop (see (c)) indicates the festival month (e.g December).}
		\label{richcom: eucoreanalysis}
	\end{center}
	 \vspace{-0.2in}
\end{figure}

\textbf{Football\cite{rozenshtein2014discovering}}: Figure \ref{richcom:strutFB} provide visualization of Top-$10$ community structures discovered by \richcom and we plot these nodes using original football graph and mapped them to ground truth communities provided in literature. Football dataset is characterized by multiple cliques and near (cliques) structures. Interestingly, we found 10 conferences forming near cliques (in literature, total 12 conferences are given as ground truth) and few of the conferences teams had games with other conferences groups that result in formation of near bipartite and star relation.
\begin{figure}[hbt!]
	\begin{center}
		\includegraphics[clip,trim=0.5cm 12.5cm 4.5cm 7.3cm,width = 0.8\textwidth]{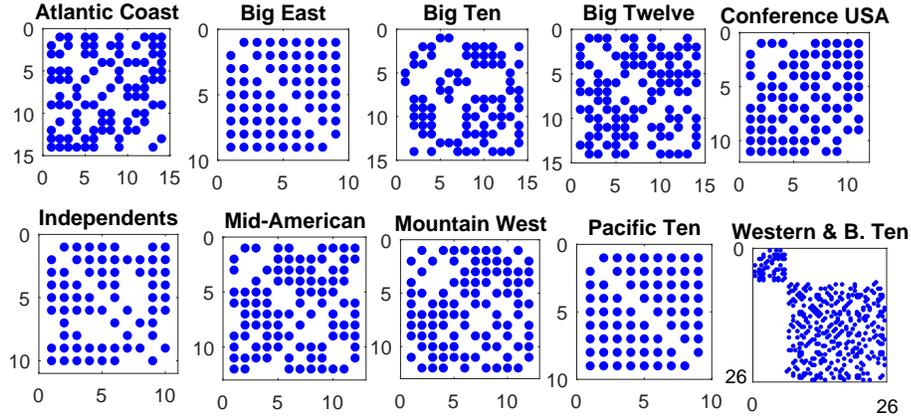}
		\caption{Top 10 structures of football teams found by \richcom. }
		\label{richcom:strutFB}
	\end{center}
	 \vspace{-0.15in}
\end{figure}

Figure \ref{richcom:strutFB} provide visualization for the structures found by \richcom and we plot these nodes using original football graph and mapped them to ground truth communities provided in literature. 

\textbf{Wikipedia\cite{leskovec2010predicting}}: It is a bipartite graph between users (e.g voter, admins and nominators) participating in the elections. As such, it is characterized by multiple stars and (near) bipartite structures. Many of the voters cast vote to the single user (nominee), as indicated by the presence of $1389$ stars and provide the vote as support/neutral/oppose for particular nominee on election week. Interestingly, it is observed that more than half (65\%) of these star discovered on the very first day of election (on Sept 14, 2004), indicating the strong support for their favorable nominee. Also, it is observed that about more than half of the votes casted by existing admins and they form near bipartite relation with ordinary Wikipedia users.
\begin{table}[t]
	\centering
	\small
	\begin{center}
		\begin{tabular}{ |c||c|c|c|c||c|c|c|c| } 
			\hline
			\multirow{3}{*}{{\bf Method}}   & \multicolumn{8}{|c|}{Precision} \\ \cline{2-9}
			& FC & ST & CH &CB & FC & ST & CH &CB\\ \hline
			& \multicolumn{4}{|c||}{With Noise} & \multicolumn{4}{|c|}{Without Noise}\\ \cline{2-9}
			
			\richcom & $0.93$ &  $0.71$ &  $0.35$ & $0.71$ &  $0.98$ &  $0.85$ &  $0.75$ & $0.79$\\ \hline
			VoG\cite{koutra2015summarizing} &   $-$ &  $-$ &  $-$ & $-$&$0.86$ &  $1$ &  $0.65$ & $0.0$ \\  \hline
			TimeCrunch \cite{shah2015timecrunch} &  $0.58$ &  $0.73$ &  $0.0$ & $0.0$ & $0.74$ &  $0.73$ &  $0.0$ & $0.23$\\
			\hline
		\end{tabular}
	\end{center}
	\caption{Result based on structures found correctly in synthetic datasets (higher is better).}
	\label{richcom:synData} 
\end{table}

\textbf{Autonomous Systems (AS)\cite{aslevel}} : The AS-level dataset is largely comprised of stars and few near clique and bipartite structures. We discover large proportion of stars which occur only at Oregon route-view instance. Further analyzing these results in Fig \ref{richcom: ascomm}, we find that $985$ of the $1246$ stars (73\%) are found on first instance on Oregon route-view and rest were observed in Looking glass and Routing registry instance. Interestingly, for 2 consecutive weeks, we observed set of routers form (near) clique structure, but later turned into (near) bipartite form indicate operational routers tables changes over time. When a connection between two observed routers on an earlier snapshot disappears from later snapshots, it could be caused either by actual termination or simply by a change in the route server's set of peer routers. 
\begin{figure}[!ht]
	\begin{center}
		\includegraphics[clip,trim=3cm 11cm 0cm 10.9cm,width = 0.8\textwidth]{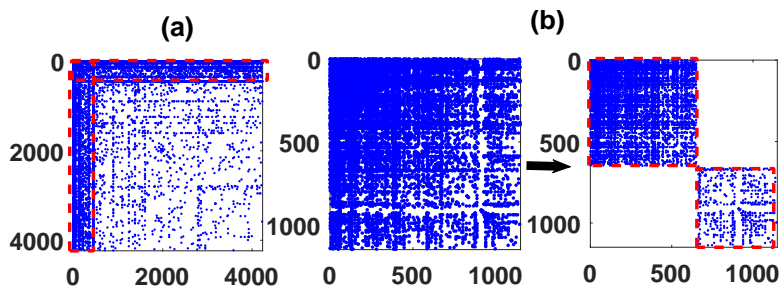}
		\caption{AS-level: Adjacency matrix of the (a) top star ($4234$ nodes) and (b) near bipartite ($624 \times 522$ nodes) structure found by \richcom, corresponding to route-view of "traffic flows".}
		\label{richcom: ascomm}
	\end{center}
	 \vspace{-0.15in}
\end{figure}

The \richcom summary for both \textbf{Enron}\cite{rayana2016less} and  \textbf{European ATN}\cite{kim2015community} datasets is very interesting and provided in \textbf{Sec \ref{richcomsubsec:work}}.
\subsubsection{Scalability}
We also evaluate the scalability of the method. We present the run-time (Fig. \ref{richcom:scalemethod}) of \richcom and baseline method TimeCrunch \cite{shah2015timecrunch} with respect to the number of non-zero elements in the input tensor. For this purpose, we use sub-tensors form of Enron dataset consists of a millions of emails and method is flexible enough to deal over all the layers. Also, we evaluate proposed method on synthetic data with increasing two modes (I and J) of tensor with third mode (K) equivalent to $2\%$ and $20\%$ of I. Fig. \ref{richcom:scalemethod} shows near linear run-time for million edges. For our experiment we used Intel(R) Xeon(R), CPU E5-2680 v3 @ 2.50GHz machine with 48 CPU cores and 378GB RAM. It is worth noting that since it is a AO-ADMM optimization framework, it is possible to parallelize the implementations, which can enable its feasible adoption for analysis of even larger multi-aspect or time evolving tensors. 
\begin{figure}[!ht]
 \vspace{-0.1in}
	\begin{center}
		\includegraphics[clip,trim=0cm 2.5cm 0cm 3cm,width = 0.55\textwidth]{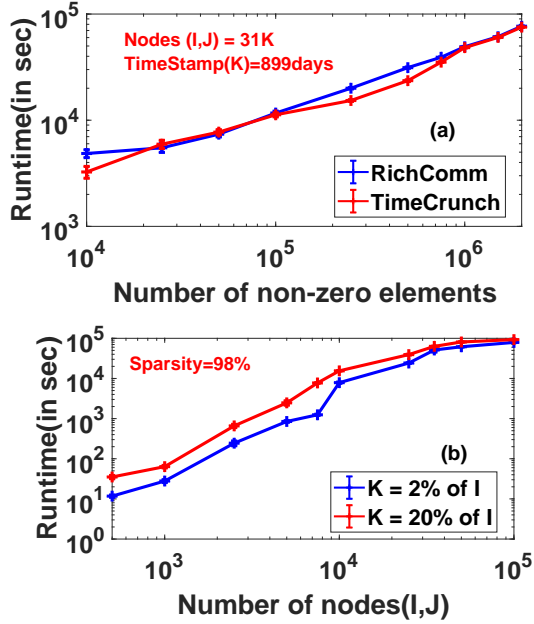}
		\caption{\richcom scales well on (a) induced aspects on third mode of Enron\cite{leskovec2012learning} dataset, up to 2M non-zero elements in size, and (b) on synthetic data varying two modes of tensor (I, J), up to 20M non-zero elements in size.}
		\label{richcom:scalemethod}
	\end{center}
\end{figure}
\subsubsection{Parameter Selection $L$ and $R$}
\label{richcomsubsubsec:rank}
We use synthetic dataset with $20$ cliques and $20$ stars consisting of $50$ nodes in each structure. To evaluate the impact of $R$, we fixed rank of each block i.e. $L$. We can see that with higher values of the $R$, number of cliques and star structure discovered is improved as shown in Figure \ref{richcom:rankRL} (a). Also, it is observed that after $R \geq 5$, it become saturated. Similarly, we fixed rank $R = 5$ and vary $L$. We found that for $L \geq 10$ all structures discovery become stable. 
\begin{figure}[!ht]
 	\begin{center}
		\includegraphics[clip,trim=0.1cm 10cm 0.1cm 8.2cm,width = 0.5\textwidth]{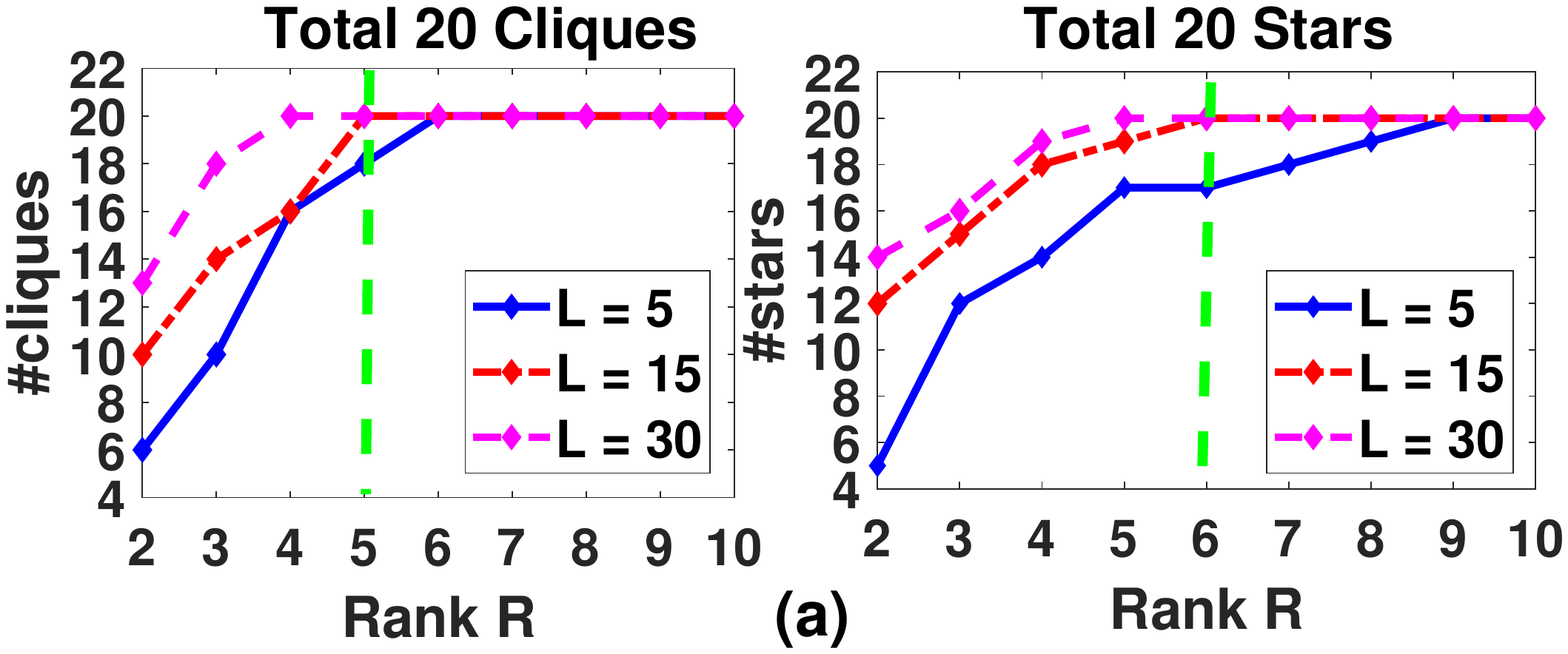}
		\includegraphics[clip,trim=0cm 9cm 0.1cm 7.1cm,width = 0.41\textwidth]{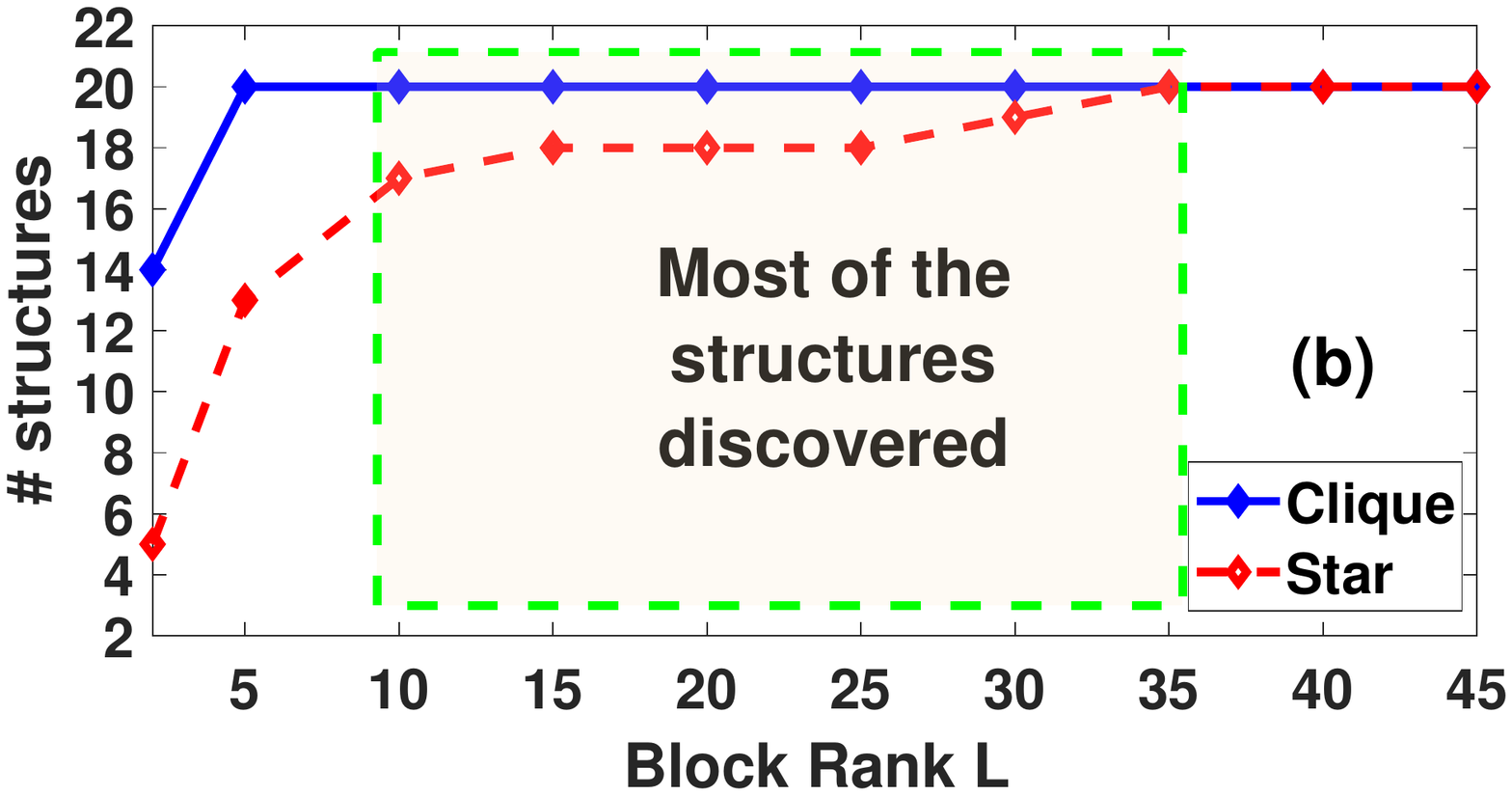}
		\caption{ \richcom performance on synthetic dataset for (a) varying $R$ and constant $L$, and (b) vary $L$ and constant $R=5$.}
		\label{richcom:rankRL}
	\end{center}
\end{figure}
\subsubsection{Convergence of \richcom}
Here we demonstrate the convergence of Algorithm \ref{richcom:method} for \cll  on three real datasets i.e \textbf{Football\cite{rozenshtein2014discovering}}, \textbf{European ATN}\cite{kim2015community} and  \textbf{EU-Core}\cite{yin2017local} network that we use for evaluation. The stabilization of fitness is observed after iteration $33, 48$ and $45$ for Football, EU-Core and EU-ATN, respectively. Figure \ref{richcom:iterative} summarizes the convergence of the algorithm, showing the approximated fitness as a function of the number of iterations. It is clear that the algorithm converges to a very good approximation within $40-50$ iterations. 
\begin{figure}[!ht]
	\begin{center}
		\includegraphics[clip,trim=0cm 9.5cm 0.1cm 5.5cm,width = 0.6\textwidth]{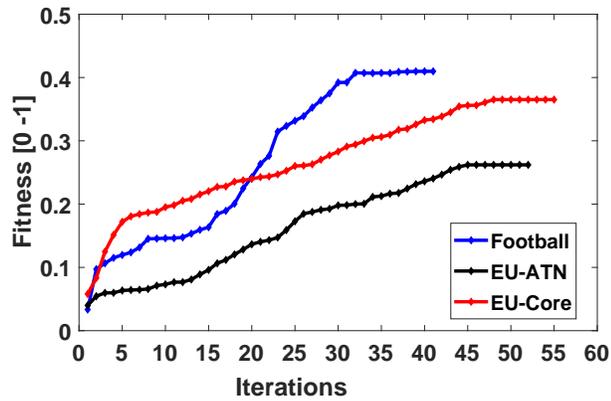}
		\caption{Fitness vs. number of iterations. For each dataset, computation cost was average $47$ sec/iteration.}
		\label{richcom:iterative}
	\end{center}
	\vspace{-0.2in	}
\end{figure}
As our method is developed based on accelerated AO-ADMM approach [Razaviyayn et al. \cite{razaviyayn2013unified}, Huang et al. 2016 \cite{huang2016flexible} and Smith et al. 2017 \cite{smith2017constrained}], it follows the same convergence rules and the proof follows from that work.

\subsection{\richcom at Work}
\label{richcomsubsec:work}
Beyond our qualitative analysis of the structure discovery in the six real dataset in Tbl. \ref{richcom:data}, we also consider a sample community structure analysis from the two datasets in Fig. (\ref{richcom:atnnetwork} , \ref{richcom:enronanalysis}), and present our findings in this section.
\begin{figure*}[!ht]
	\begin{center}
		\includegraphics[clip,trim=4.5cm 3.5cm 5.2cm 3cm,width = 0.28\textwidth]{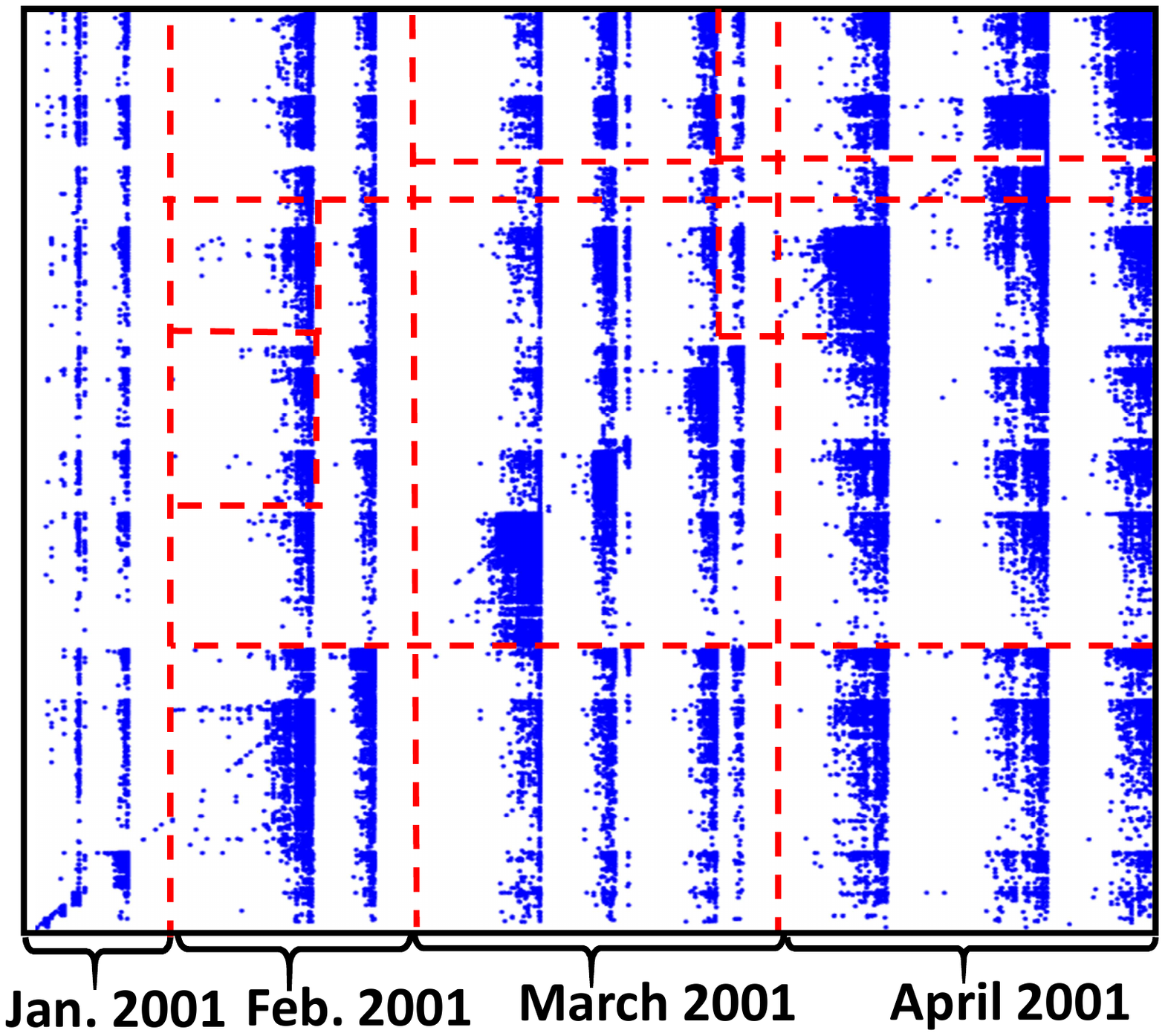}
		\includegraphics[clip,trim=2.5cm 2.5cm 0cm 2.5cm,width = 0.345\textwidth]{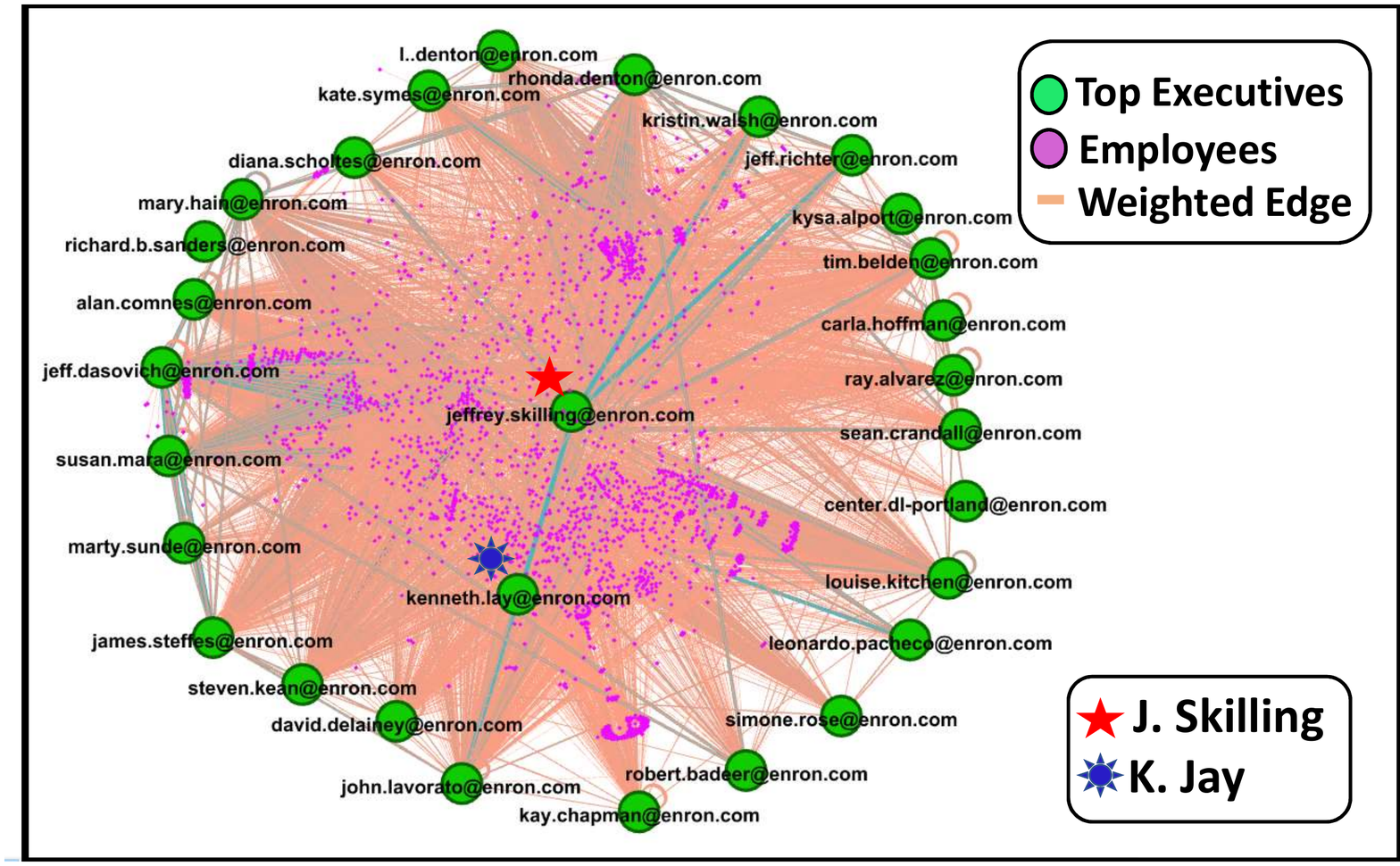}
		\includegraphics[clip,trim=2cm 3cm 2cm 3cm,width = 0.345\textwidth]{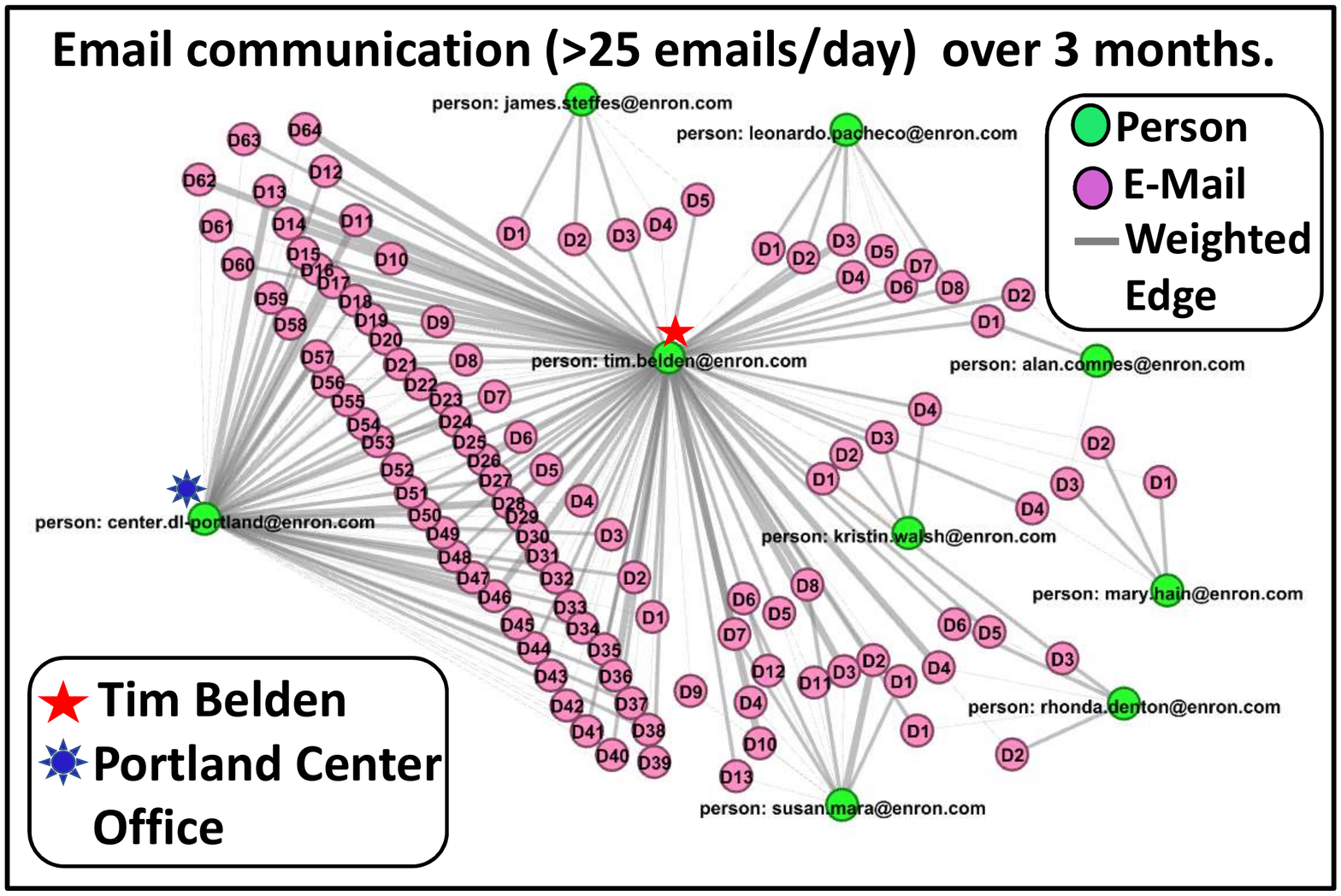}
		\caption{(a) Formation of star (include top executives only) structures over period of Jan to April 2001 (b) the most informative `star' structure describing interactions between CEO and top executives during accounting fraud of Enron. (b) the second most informative star structure - email  communication between Tim Belden and Enron's World Trade center office.}
		\label{richcom:enronanalysis}
	\end{center}
\end{figure*}

\textbf{Case Study 1}: The European air traffic network can be represented as a graph composed of $K$ = $37$ different layers or aspects each representing a airline (e.g Lufthansa, KLM etc). Each layer $k$ has the same number of nodes, $|I|=450$, as all European airports (e.g London Heathrow, Zurichi Kloten Airport etc) are represented in each layer. \richcom extracted two structures i.e. `star' (Fig \ref{richcom:atnnetwork}(b)) and `cliques' (Fig \ref{richcom:atnnetwork}(c)) frequently from this dataset, layers comprising in particular, major national airlines (e.g  Lufthansa$\rightarrow 1$, Finnair$\rightarrow 4$ and KLM$\rightarrow9$), low-cost fares (e.g Ryanair $\rightarrow 2$, Easyjets$\rightarrow 3$), regional (e.g Olympic Air $\rightarrow 37$, Aegean Airlines $\rightarrow 30$) or cargo airlines like Fed-Ex. Figure \ref{richcom:atnnetwork}(a) show, instead, the single-layer ATN corresponding to a given major,regional and low-cost airlines reconstructed from decomposed factors using \cll. These types of airlines have developed based on different commercial and structural constraints. As an example, it is well known that national airlines are designed as star network and ensuring the \textbf{hub} and \textbf{spoke} structure, to provide an almost full coverage of the airports belonging to a given country and maximize efficiency in terms of national or governmental transportation interests. On other hand low-cost airlines try to avoid this unify structures and, to be more  aggressive, generally cover more than one country simultaneously, results in formation of near clique structures. One of the other reasons is that low-cost airlines stay away from busy and expensive hubs. They manage to operate from smaller airports through which ground times and delays are reduced that lead to cost reduction. The result of this study show that \richcom successfully exploit the multi aspect nature of data to discover the various useful structures.

\textbf{Case Study 2}: We use four years ($1999$-$2002$) of Enron email communications. In each view, the nodes represent email addresses and edges depict sent/received/cc relations. This network contains a total of $\approx 32k$ unique email addresses used only for the communication inside organization; and we analyze the data over $899$ days including weekends also. The \richcom captures structures formed and changed during the major events in the company's history, such as revenue losses, CEO changes, etc. The interesting case is the constant increase in star structures between Jan and April 2001, this abnormal increase was an indicator of cover up of accounting fraud happens at that time; (see Fig. \ref{richcom:enronanalysis}(a)). Jeffrey Skilling (CEO) and Kenneth Lay (Chairman) were conducting regular meeting with their top executives (e.g. Sally Beck (Chief Operating Officer), Vincent Kaminski (Quantitative Modeling Group Head), Darren Farmer (Logistics Manager), Michelle Lokay (Administrative Assistant), Louise Kitchen (Enron-online President),Williams III (Senior Analyst), Tim Belden (Trading Head) and Richard Sanders (Assistant General Counsel)), forming star structure (see Figure \ref{richcom:enronanalysis}(b)), in order to find new ways to handle Enron's liability. Second observation from structure is shown in Figure \ref{richcom:enronanalysis}(c) where Tim Belden's email address used to send emails to the Enron's World Trade center Office: `center.dl-portland@enron.com' during time period of Oct - Dec 2001. But he also send a few emails to other employees inside the company as well. This would ensure that \richcom discovers most relevant people as central hub and further analysis could reveal more interesting patterns.

\textbf{\em{What sets \richcom apart}}: none of the state-of art method meet all the following specifications (which \richcom does): (a) consider multi-aspect graphs or tensors, (b) discover efficient communities using tensor decomposition approach\hide{rank-($L_r$, $L_r$, 1)} and, (c) provide summarization\hide{and visualization} of obtained community's structure by leveraging higher-order correlations between different aspects (either over time or over different views/aspects) to inform the extraction. 


\section{Conclusion}
\label{richcom:conclusions}
We proposed \richcom and \cll, a novel constrained (L, L, 1) based framework to learn and encode the community structures.
The performance of the proposed method is assessed via experiments on synthetic as well as six real-world networks.

\textbf{Take-home points:}
\begin{itemize}
	\item  \cll is novel constrained LL1-tensor decomposition method, and alternating optimization with alternating direction method of multipliers (AO-ADMM) is developed to recover the non-negative and sparse factors. The utility of \cll extends beyond multi-aspect graphs to general multi-aspect data mining, where the underlying latent structure is richer than rank-1.
	\item Through experimental evaluation on multiple datasets, we show that \cll  provides stable decompositions and offering high quality structure via \richcom within reasonable run time, in the presence of overlapping and non-overlapping communities.
	\item Utility: we provide a simple and effective multi-aspect graph generator.
\end{itemize}
There are several items that can be considered for future work. First, as a natural extension, one can generalize this to account for higher order ($>3$ modes) data. Second, AO-ADMM admits parallel extensions, which can enable the exploration of billion-scale multi-aspect graphs .

\vspace{0.5in}

\noindent\fbox{%
    \parbox{\textwidth}{%
       Chapter based on material published in TheWebConf 2020 \cite{gujral2020beyond}.
    }%
}

\chapter{PARAFAC2 decomposition using auxiliary information}
\label{ch:5}
\begin{mdframed}[backgroundcolor=Orange!20,linewidth=1pt,  topline=true,  rightline=true, leftline=true]
{\em "Can we jointly models temporal and static information from PARAFAC2 data to extract meaningful insights? Does the static information added in temporal data improve predictive performance?”}
\end{mdframed}

PARAFAC2 is a powerful method for analyzing multi-modal data consisting of irregular frontal slices. In this work, we propose \poplar method that imposes graph Laplacians constraints induced by the similarity symmetric tensor as auxiliary information to force decomposition factors to behave similarly and the method is developed using AO-ADMM for 3-way PARAFAC2 tensor decomposition. To the best of our knowledge, \poplar is the first approach to incorporate graph Laplacians constraints using auxiliary information. We extensively evaluate \poplar's performance in comparison to state-of-the-art approaches across synthetic and real dataset, and \poplar clearly exhibits better performance with respect to the Fitness (better 3-8\%), and F1 score (better 5-20\%) among the state-of-the-art factorization method. Furthermore, the running time for the method is comparable to the state-of-art method. The content of this chapter is adapted from the following published paper:

{\em Gujral, Ekta, Georgios Theocharous, and Evangelos E. Papalexakis. "POPLAR: Parafac2 decOmPosition using auxiLiAry infoRmation." In 2020 IEEE 11th Sensor Array and Multichannel Signal Processing Workshop (SAM), pp. 1-5. IEEE, 2020.}

\section{Introduction}
\label{poplar:intro}
In real world applications, we often encounter multi-aspect relationships in data. For example, in social network analysis, interactions among various users and their interactions types are the main focus of interest. Tensors (or multi-way arrays) are highly suitable representation for such relationships. Tensor analysis methods have been extensively studied and applied by researchers \cite{kolda2009tensor,papalexakis2016tensors,bro1997parafac} to many real-world problems. Regardless of recent development of traditional tensor decomposition approaches, there are certain instances \cite{ho2014marble} wherein time modeling is difficult for the regular tensor factorization methods, due to either data irregularity or time-shifted latent factor appearance as shown in Figure \ref{poplarfig:poplar}. The PARAFAC2 decomposition, proposed by Harshman \cite{harshman1972parafac2}, is another alternative to the traditional model. The PARAFAC2 model is appropriate for 3-mode data that do not follow a perfect trilinear structure and allows one of the modes to vary. It has been extensively used in chemometrics \cite{amigo2010comprehensive,skov2008multiblock}, electronic health record\cite{perros2017spartan}, natural language processing \cite{chew2007cross}, and social sciences \cite{helwig2017estimating}. 

In many practical cases, we have multi-aspect information represented as tensors (PARAFAC or PARAFAC2), and we can compute information on the objects forming the relationships based on various similarities. For example, in the (user, item, time)-relationships, each user has location/calls/connectivity information, and each item has its product/service information. Therefore, we can consider that we have similarity measures that corresponds to the sets of matrices for non-variable modes of data. Even if recently, a constraint PARAFAC2 (COPA) fitting algorithm was proposed for large and sparse data \cite{afshar2018copa}, it cannot incorporate meaningful auxiliary information on the model factors such as: a) user-user interactions, which facilitates model interpretability and understanding, b) product/service similarity that provides relational information among objects. 

To handle the these challenges and inspired by the work by Narita et al. \cite{narita2012tensor} which incorporates object similarity into PARAFAC1 tensor factorization, we exploit the auxiliary information given as similarity tensor in a regularization framework for PARAFAC2 tensor factorization. We propose the Parafac2 decOmPosition using auxiLiAry infoRmation method (POPLAR), which introduces graph laplacian constraints/regularization in PARAFAC2 modeling that enables to improve the prediction accuracy of tensor decomposition. 

Our contributions are summarized as follows:
\begin{itemize}[noitemsep]
	\item {\bf Novel Algorithm} We propose \poplar, a method equipping the PARAFAC2 modeling with Graph Laplacian constraints. While \poplar incorporates a additional auxiliary tensor for Laplacian constraints, it provides more accuracy than baselines and it is comparable in terms of scalability.
	\item {\bf Experimental Evaluation} : we show experimental results on both synthetic and real datasets.
\end{itemize}

 \begin{figure}[!ht]
 \vspace{-0.1in}
	\begin{center}
		\includegraphics[clip,trim=0.1cm 4cm 1.4cm 3.5cm,width = 0.7\textwidth]{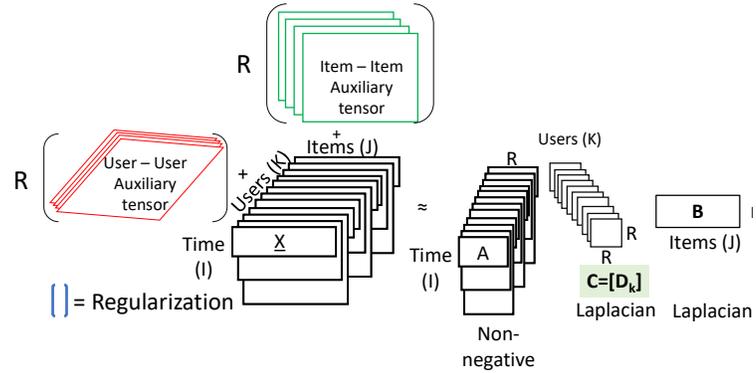}
		\caption{An illustration of the laplacian constraints imposed by \poplar on PARAFAC2 tensor decomposition.}
		\label{poplarfig:poplar}
	\end{center}
\end{figure}

Table [\ref{tbl:comp}] provides our contributions in the terms of existing works.
\begin{table}[h!]
\begin{center}
\begin{tabular}{ |c|c|c|c| }
\hline
Property & PARAFAC2 & COPA & \poplar \\ 
\hline
Sparsity &$-$&\checkmark &\checkmark\\
Graph Laplacian &$-$&$-$&\checkmark\\
Scalability &$-$&\checkmark&\checkmark\\
Handle irregular tensors &\checkmark&\checkmark&\checkmark\\
\hline
\end{tabular}
\caption{Comparison of models.}
\label{tbl:comp}
\end{center}
\vspace{-0.2in}
\end{table}
\section{Problem Formulation}
We consider exploiting auxiliary information for improving the prediction accuracy by PARAFAC2 decomposition, especially for sparse observations. The problem that we focus on in this paper is summarized as follows.
\begin{mdframed}[linecolor=red!60!black,backgroundcolor=gray!20,linewidth=1pt,    topline=true,rightline=true, leftline=true] 
\textbf{Given:} A third-order PARAFAC2 tensor $\tensor{X} \in \mathbb{R}^{I_k \times J} $ and symmetric similarity tensors $\tensor{M} \in \mathbb{R}^{J \times J \times M}$ and $\tensor{N} \in \mathbb{R}^{K \times K \times N}$, each corresponding to one of the regular modes of $\tensor{X}$.
\\
\textbf{Find:} A decomposition $\tensor{X}$ defined by Eq.(\ref{eq:parafac2}) and Eq (\ref{eq:con}).
\end{mdframed}

\section{Proposed Method: POPLAR}
\label{poplar:method}
We propose Graph Laplacians regularization for PARAFAC2 tensors induced by the similarity tensor as auxiliary information to force factors to behave similarly. This a natural extension of the method proposed by Narita et al. \cite{narita2012tensor} for tensor factorization using matrix as an auxiliary information. Consider $3$-mode PARAFAC2 tensor with auxiliary tensors $\tensor{M}$ and $\tensor{N}$ on its fixed ($2^{nd}$ and $3^{rd}$) modes. The simple way to obtain these tensors is using various standard similarity methods like cosine similarity, Euclidean distance, Jaccard and ABC similarity etc. The regularization term we propose regularizes factor matrices of PARAFAC2 for its static modes using the similarity matrices (e.g user-user or item-item matrix). For simplicity, we explain process for $3^{rd}$ mode only. Thus, regularization term for $3^{rd}$ mode $\mathbf{C}$ is defined as:
\begin{equation}
    \mathcal{R}(\mathbf{C}) = \sum_{n=1}^{N} Tr(\mathbf{C}^{T}\mathbf{L}_n\mathbf{C})
\end{equation}

where $\mathbf{L}$ is the Laplacian Matrix induced from the similarity tensor $\tensor{M}$ for the factor $\mathbf{C}$. Thus the objective function can be written as:
\begin{equation}
\small
\label{poplar:parafac2_reg}
\mathcal{L} = min_{\mathbf{\{A_k\},\{D_k\},B}}  \sum_{k=1}^K\|\mathbf{X}_k - \mathbf{A}_k \mathbf{D}_k \mathbf{B} \|_F^2 + \frac{\alpha}{2}(\mathcal{R}(\tensor{M}) +  \mathcal{R}(\tensor{N}))
\end{equation}	
The regularization is imposed by using the Graph Laplacians (GL) method on the similarity tensors ($\tensor{M}$ and $\tensor{N}$ ) that helps to direct two similar objects in $2^{nd}$ and $3^{rd}$ mode to behave similarly. Mathematically, Equ. (\ref{poplar:parafac2_reg}) can be re-written as :
\begin{equation}
\label{poplar:inmode}
\begin{aligned}
\mathcal{L} = & \min_{\mathbf{\{A_k\},\{D_k\},B}}  \sum_{k=1}^K\|\mathbf{X}_k - \mathbf{A}_k \mathbf{D}_k \mathbf{B} \|_F^2 + \\
& \frac{\alpha}{2}(Tr(\sum_{m=1}^{M} \mathbf{B}^T\mathbf{L}_{m}\mathbf{B} + \sum_{n=1}^{N} \mathbf{C}^T\mathbf{L}_{n}\mathbf{C})
\end{aligned}
\end{equation}	

where $\mathbf{L}_m$ and $\mathbf{L}_n$ are the Graph Laplacian matrices obtained from the slices of similarity tensors $\tensor{M}$ and $\tensor{N}$, respectively. The Graph Laplacian matrix can be computed as follows:
$$\mathbf{L}_n = \mathbf{Deg}_n - \mathbf{N}_n$$
where $\mathbf{Deg}$ is degree matrix and it $i^{th}$ diagonal element is the sum of all
of the elements in the $i^{th}$ row similarity matrix and computed as:
\begin{equation}
    \mathbf{Deg}_{n_{i,j}} =\begin{cases}
    \sum_{j=1}^N \mathbf{N}_n(i,j), & \text{i = j}.\\
    0, & \text{otherwise}.
  \end{cases}
\end{equation}
The regularization term can be simply interpreted as:
\hide{\begin{equation}
   Tr(\sum_{m=1}^{M} \mathbf{B}^T\mathbf{L}_{m}\mathbf{B})= \sum_{m=1}^M(\sum_{i,j=1}^J\mathbf{M}_{m_{i,j}} \sum_{r = 1}^R (\mathbf{B}_{i,r} - \mathbf{B}_{j,r})^2)
\end{equation}}
\begin{equation}
    Tr(\sum_{n=1}^{N} \mathbf{C}^T\mathbf{L}_{n}\mathbf{C}) = \sum_{n=1}^{N}(\sum_{i,j=1}^K\mathbf{N}_{n_{i,j}} \sum_{r = 1}^R (\mathbf{C}_{i,r} - \mathbf{C}_{j,r})^2)
\end{equation}
This term implies that, if two objects are similar to each other, the corresponding factor vectors should be similar to each other. Thus, when using AO-ADMM the update rule for $\mathbf{C}$ is:
$$ \mathbf{C}^T :=  ((\mathbf{H}^T \mathbf{H} \ast \mathbf{B}^T \mathbf{B} ) + \rho \mathbf{I})^{-1}(\mathcal{\mathbf{X}}_{(3)}(\mathbf{B} \odot \mathbf{H}) + \rho(\overline{\mathbf{C}} + \mathbf{M}_{\mathbf{C}^T} ))^T $$
$$ \overline{\mathbf{C}} :=  \min_{\overline{\mathbf{C}}} \sum_{n=1}^{N} Tr(\overline{\mathbf{C}}^{T}\mathcal{L}_n\overline{\mathbf{C}}) + \rho||\overline{\mathbf{C}} - \mathbf{C}^T + \mathbf{M}_{\mathbf{C}^T} ||_F^2$$
 $$\mathbf{M}_{\mathbf{C}^T} :=   \mathbf{M}_{\mathbf{C}^T} - \mathbf{C}^T + \overline{\mathbf{C}} {\ \ } \text{s.t.} {\ \ } \overline{\mathbf{C}} = \mathbf{C}$$
where $\mathbf{M}_{\mathbf{C}^T}$ is a dual variable and the Lagrange multiplier $\rho$ is a step size related to $\mathbf{C}^T$ factor matrix and set to minimum value between $10^{-3}$ and $Tr(\mathbf{H}^T \mathbf{H} \ast \mathbf{B}^T \mathbf{B})/R$ to yield good performance.

Adapting AO-ADMM to solve PARAFAC2 with Laplacians constraints has following benefits:
\begin{itemize}
    \item AO-ADMM is more general than other methods in the sense that the loss function doesn’t need to be differentiable.
    \item Simple to implement, parallelize and easy to incorporate a wide variety of constraints that can obtained using simple element-wise operations.
    \item Computational savings gained by using the Cholesky decomposition.
    \item The splitting scheme can be applied to large-scale data. Data can be distributed across different machines and optimize objective locally with communication on the primal, auxiliary and dual variables between the machines.
\end{itemize}

\section{Experiments}
\label{poplar:experiments}
 \subsection{Data Set Description}
\textbf{Auxiliary Tensor Creation}: We created Auxiliary tensors using ABC similarity\cite{safavi2017scalable}, Pearson correlation\cite{safavi2017scalable} , K-NN similarity \cite{al2018t}, Jacard similarity and Edit distance.

\textbf{Synthetic Data}: Table \ref{poplar:dataset} provides summary statistics regarding all datasets used. For all synthetic data we use rank $R = 10$. The entries of loading matrix $\mathbf{B}$ and $\mathbf{C}$ are Gaussian with unit variance, and orthogonality is imposed on factors $\mathbf{H}$ and  $\mathbf{Q}$, then a few entries are clipped to zero randomly to create a sparse PARAFAC2 tensor. The labels are created by selecting highest value index for each entry in matrix $\mathbf{C}$.

\begin{table}[h!]
\begin{center}
\begin{tabular}{ |c|c|c|c|c|c|c|  }
\hline
Dataset & $K$ & $J$ & max($I_k$)& $\#nnz$ &$R$ \\ 
\hline
SYN-I  &25k&5k&1k&2.1 Mil.& 10 \\ 
\hline
SYN-II &50k&10k&2k&5.4 Mil. & 10   \\
\hline
ADOBE   &80k&1.7k&17k& 3 Mil. & 10, 40\\
\hline
\end{tabular}
\caption{Summary statistics for the datasets of our experiments. $K$ is the number of users, $J$ is the number of items, $I_k$ is the number of time observations for the $k^{th}$ subject, $\#nnz$ corresponds to the total number of non-zeros in tensor $\tensor{X}$ and $R$ is rank of tensor $\tensor{X}$.}
\label{poplar:dataset}
\end{center}
\vspace{-0.1in}
\end{table}
\textbf{ Real Data}: Adobe dataset is sequential data and it consists of tutorial sequence  7 million anonymized users. The data is structured as sequence-by-tutorial-by-user. We selected users who watched at least unique 10 tutorials. Thus, the dataset is of dimension [max 17k $\times$ 1.7k $\times$ 80k]. We compute user-user tensor similarity tensors. We have semi synthetic ground truth values for this dataset and we assigned each user to class based on the type of tutorial watched.

\subsection{Baselines} In this experiment, two baselines have been used as the competitors to evaluate the performance.  \begin{itemize}
	\item  \textbf{PARAFAC2} \cite{kiers1999parafac2} : an implementation of standard fitting algorithm PARAFAC2 with random initialization.  
 	\item  \textbf{COPA} \cite{afshar2018copa}  : a scalable constrained PARAFAC2 fitting algorithm was proposed for large and sparse data.  
 	\end{itemize}

\subsection{Evaluation Measures}
We evaluate \poplar and the baselines using three quantitative criteria: Fitness [0-1], F1-score [0-1] and CPU-Time (in seconds). These measures provide a quantitative way to compare the performance of our method. For Fitness and F1-score, higher is better.
\begin{figure}[!ht]
	\begin{center}
		\includegraphics[clip,trim=0cm 0cm 0cm 0cm,width = 0.4\textwidth]{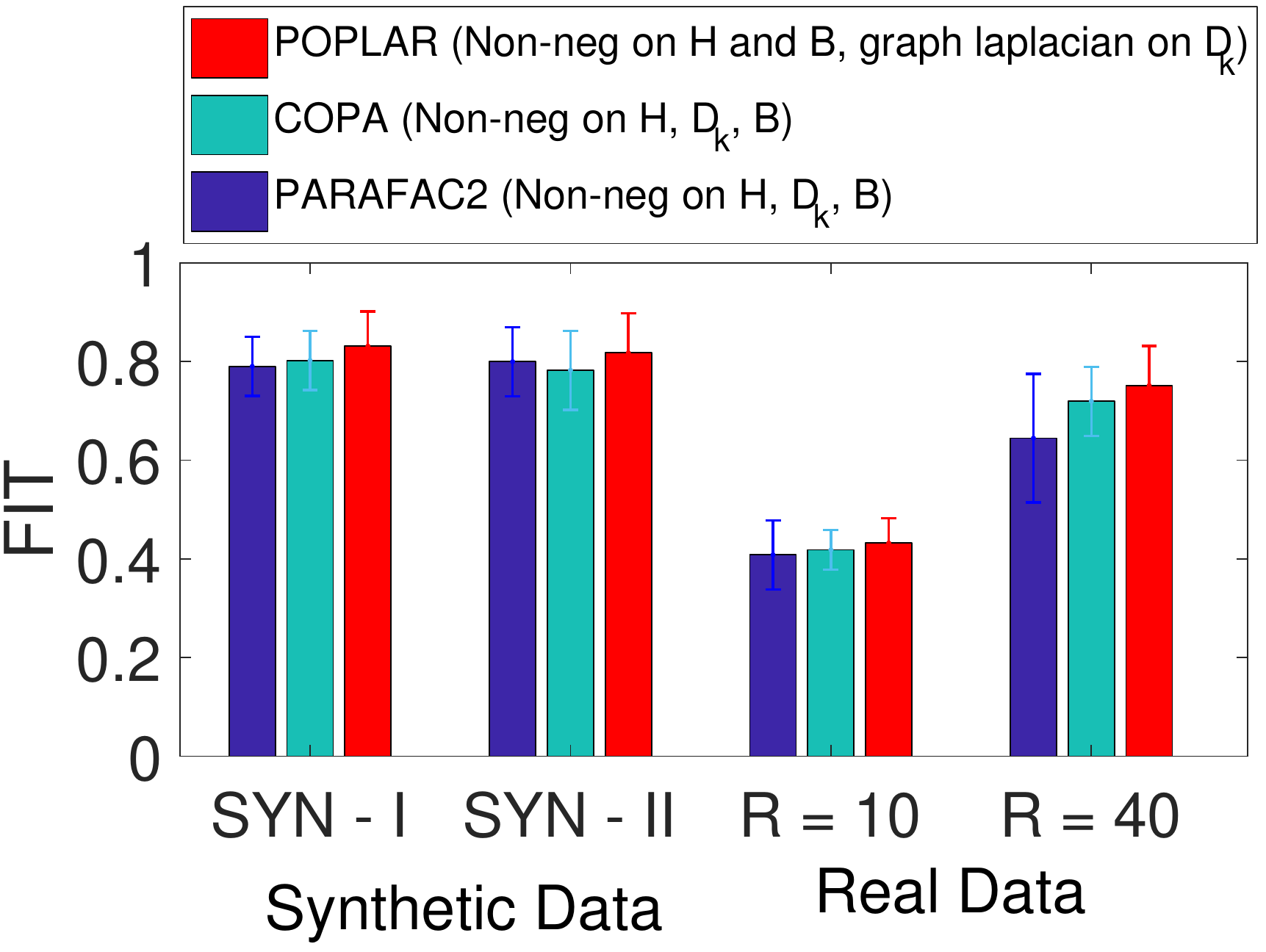}
		\caption{Comparison of FIT for different approaches on synthetic data with rank R = 10 and two target ranks R = 10 and R = 40 on real world dataset. Overall, \poplar shows comparable fitness to baseline while supporting additional graph laplacian constraint.}
		\label{poplar:resultfinal}
	\end{center}
\vspace{-0.1in}
\end{figure}

\subsection{ Results}
\subsubsection{Fitness and Accuracy}
We run each method for $3$ different random initialization and provide the average and standard deviation of FIT as shown in Figure \ref{poplar:resultfinal}. This Figure illustrates the impact of proposed constraint on the FIT values across both synthetic for fixed rank R = 10 and real datasets for two different target ranks (R={10, 40}). Overall,\poplar achieves average $3-8\%$ improvement.
\begin{figure}[!ht]
	\vspace{-0.1in}
	\begin{center}
		\includegraphics[clip,trim=0cm 0cm 0cm 0cm,width = 0.4\textwidth]{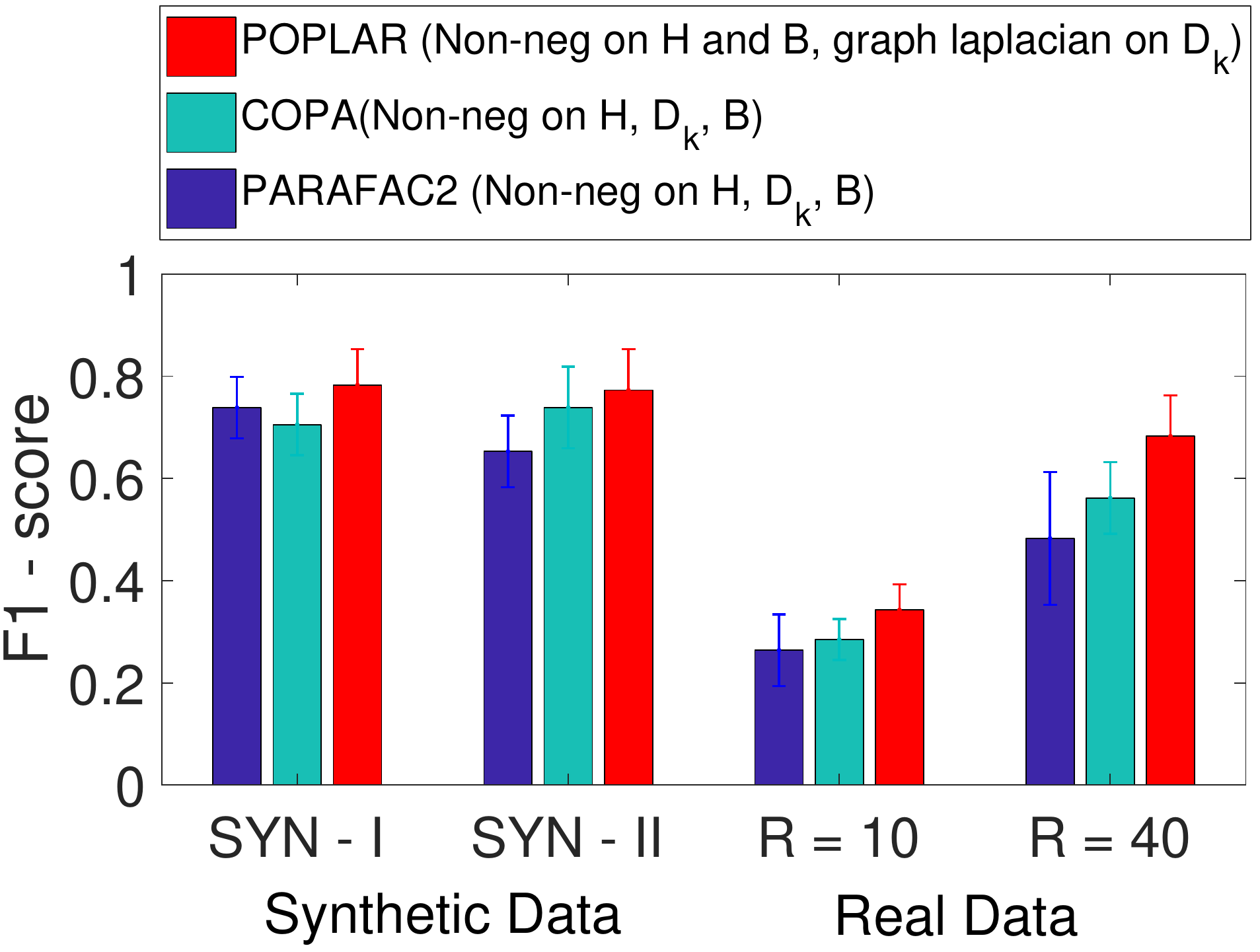}
		\caption{Comparison of F1-score for different approaches on synthetic data with rank R = 10 and two target ranks R = 15 and R = 40 on real world dataset. Overall, \poplar achieves better F1 score to baseline.}
		\label{poplar:resultfinalf1}
	\end{center}
	\vspace{-0.1in}
\end{figure}

Similarly, we evaluate accuracy in terms of predicting correct labels using F1-score. We provide the average and standard deviation of F1 score as shown in Figure \ref{poplar:resultfinalf1}. Overall, \poplar achieves significant improvement  $5-20\%$ over baselines.
\begin{figure}[!ht]
	\begin{center}
		\includegraphics[clip,trim=0cm 0cm 0cm 0cm,width = 0.4\textwidth]{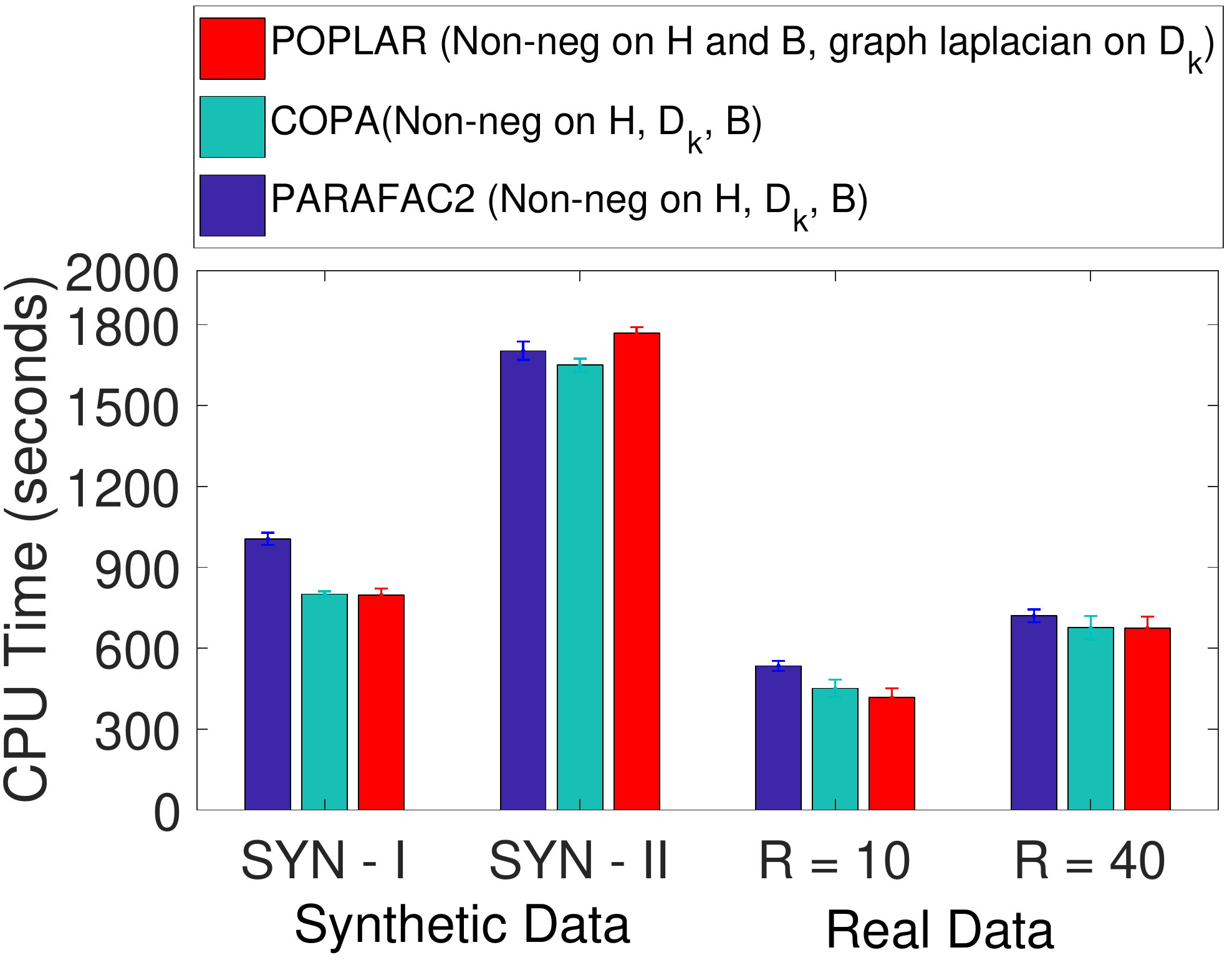}
		\caption{The CPU Time comparison (average and standard deviation) in seconds for non-negative version of PARAFAC2 \& COPA  and \poplar for 3 different random initialization on synthetic data with rank R = 10 and two target ranks R = 10 and R = 40 on real world dataset. Note that even with additional processing of auxiliary tensor \poplar performs just slightly slower than COPA for SYN-II, which does not support such graph laplacian constraints.}
		\label{poplar:time}
	\end{center}
	\vspace{-0.2in}
\end{figure}
\subsubsection{Computation time}
Finally, we briefly discuss the computation time of our method. Although using auxiliary tensor as constraints slightly increases the time complexity of the PARAFAC2 decomposition method, the actual computation time is almost as same as that for baseline methods as shown in Figure \ref{poplar:time}. This is partially because 1) \poplar converges (tolerance = $10^{-7}$) faster than baselines 2) we used medium sized datasets in the experiments, and further investigation with larger datasets (K $>10^5$) should be made in future work.

 \subsubsection{Scalability}
 We also evaluate the scalability of our algorithm on synthetic dataset. A PARAFAC2 tensors $\tensor{X} \in \mathbb{R}^{max 1000 \times 1000 \times [5k - 100k]}$ are decomposed with increasing target rank. The time needed by \poplar increases very moderately. Our proposed method, successfully decomposed the tensor in reasonable time as shown in Figure \ref{poplar:sca}.
 
 \begin{figure}[!ht]
	\begin{center}
				\includegraphics[width = 0.6\textwidth]{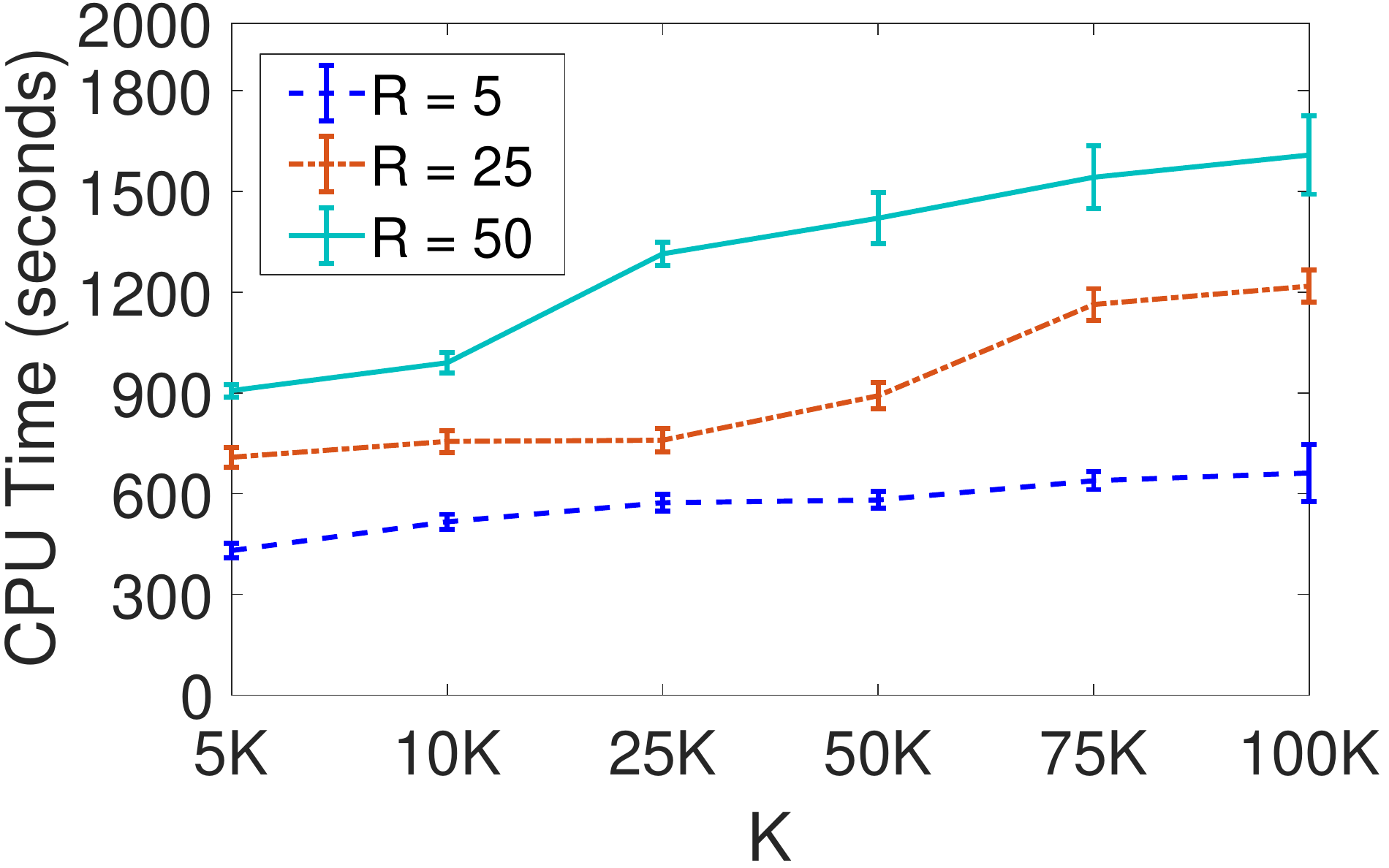}
		\caption{The average time in seconds for varying target rank.}
		\label{poplar:sca}
	\end{center}
	\vspace{-0.2in}
\end{figure}

\section{Conclusion}
\label{poplar:conclusions}
This paper outlined our vision on exploring the graph laplacian regularization on PARAFAC2 tensor decomposition using auxiliary information to improve accuracy of factorization. We propose \poplar, a AO-ADMM-based framework that is able to offer interpretable results, and we provide a experimental analysis on synthetic as well as real world dataset. By inspecting Figure \ref{poplar:resultfinal}-\ref{poplar:sca}, PARAFAC2 along with auxiliary information as laplacian constraints clearly exhibits the better performance with respect to the Fitness, and F1 score among the state-of-the-art factorization method. Furthermore, the running time for method is comparable to state-of-art method. Furthermore, this paper outlines a set of interesting future research directions:
\begin{itemize}
    \item How can we couple one of auxiliary tensor with PARAFAC2 tensor to obtain better approximation?
    \item What other constraints, other than graph laplacian or non-negative, for the PARAFAC2 decomposition are well suited for various application and have potential to offer more accurate results? 
    \item How can we incorporate cross mode graph laplacian regularization for PARAFAC2 decomposition?
\end{itemize}

\vspace{0.5in}

\noindent\fbox{%
    \parbox{\textwidth}{%
       Chapter based on material published in SAM 2020 \cite{gujral2020poplar}.
    }%
}

\chapter{Constraint Coupled CP and PARAFAC2 Tensor Decomposition}
\label{ch:6}
\begin{mdframed}[backgroundcolor=Orange!20,linewidth=1pt,  topline=true,  rightline=true, leftline=true]
{\em "Given data from a variety of sources that share a number of dimensions, how can we effectively decompose them jointly into interpretable latent factors?”}
\end{mdframed}

 The coupled tensor decomposition framework captures this idea by jointly supporting the decomposition of several CP tensors. However, coupling tends to suffer when one dimension of data is irregular, i.e., one of the dimensions of the tensor is uneven, such as in the case of PARAFAC2. In this work, we provide a scalable method for decomposing coupled CP and PARAFAC2 tensor datasets through non-negativity-constrained least squares optimization on a variety of objective functions. We offer the following contributions: (1) Our algorithm can perform coupled factorization with an active-set, block principal pivoting and least square optimization method including the Frobenius norm induced non-negative factorization. (2) \captionmethod scales to billions of non-zero elements in both the data and model. Comprehensive experiments on large data confirmed that \captionmethod is up to $5\times$ faster and $70-80\%$ accurate than several baselines. We present results showing the scalability of this novel implementation on a billion elements as well as demonstrate the high level of interpretability in the latent factors produced, implying that coupling is indeed a promising framework for large-scale, unsupervised pattern exploration and cluster discovery. The content of this chapter is adapted from the following published paper:

{\em Guiral, Ekta, Georgios Theocharous, and Evangelos E. Papalexakis. "$C^3$APTION: Constrainted Coupled CP And PARAFAC2 Tensor Decomposition." In 2020 IEEE/ ACM International Conference on Advances in Social Networks Analysis and Mining (ASONAM), pp. 401-408. IEEE, 2020.}
\section{Introduction}
\label{caption:intro}
 With the opportunity to handle large volumes and velocity of data as a result of recent technical developments, such as mobile connectivity \cite{novovic2017evolving}, digital tools \cite{madabhushi2016image}, biomedical technology \cite{bellazzi2011data} and modern medical testing techniques \cite{cms}, we face multi-source and multi-view data \cite{gujral2018smacd,gujral2020beyond} sets. Suppose, for example, that we are given a health care record data, such as Centers for Medicare and Medicaid (CMS) \cite{cms}, and we have information about patient who visited hospital, or who got what kind of diagnosis in which visit, and when. This data may be formulated as a three mode PARAFAC2 tensor. Suppose now that we also have some static features information pertaining to the patient, e.g. multi-aspect relation based on demographic information. This data may be formulated as a three mode CP tensor. This problem can be formulated as an example of a coupled factorization, where the two tensors of a 3-mode (visits, diagnosis, patients) PARAFAC2 and a 3-mode (patients, patients, aspect) CP tensor share a common dimension. 
\begin{figure}
	\begin{center}
		\includegraphics[clip,trim=1cm 2.5cm 0.5cm 2.5cm,width=0.7\textwidth]{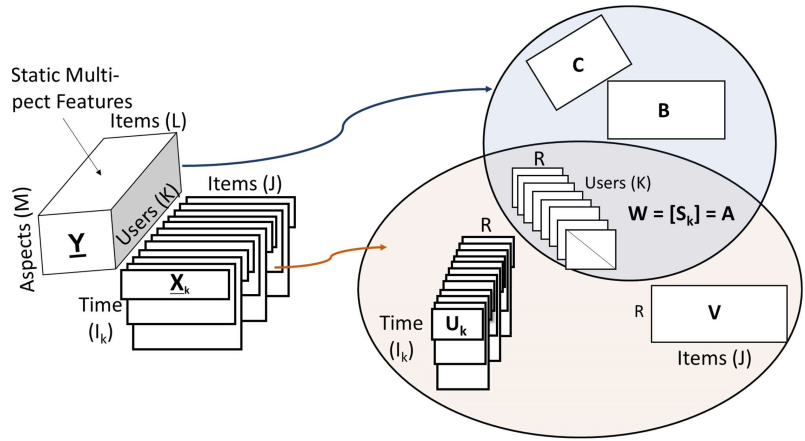}
		\caption{Illustration of CAPTION decomposition. Each slice of $\tensor{X}_k$ represents the different clinical visits for patient $k$. CP tensor $\tensor{Y}$ includes the similarity CP tensor based on demographic information of patients. \captionmethod decomposes $\tensor{X}_k$ into three parts: $\mathbf{X}_k$, $\mathbf{W} = \mathbf{S}_k$, and $\mathbf{V}$.  CP tensor $\tensor{Y}$ is decomposed into $\mathbf{W} = \mathbf{S}_k$, $\mathbf{B}$ and $\mathbf{C}$. Note that latent factor $\mathbf{W}$ is shared between both tensors.}
		\label{caption:caption}
	\end{center}
\end{figure}

In many practical cases, we have multi-aspect information represented as tensors. Despite its attractiveness, individual tensor factorization suffers from robustness issues. Applying coupled tensors and matrices to heterogeneous datasets from multiple sources has been a topic of interest in many areas. Coupled tensor decomposition gives an equivalent representation of multi-way data by a set of factors and parts of the factors are shared for coupled data. In literature, fusion and coupled methods \cite{acar2011all,acar2014structure,afshar2020taste,chatzichristos2018fusion,genicot2016coupled} reported so far ignore the underlying irregular nature of the data in at least one of the modalities among the data like health care data. Acar et al. proposed an all-at-once coupled gradients based optimization approach, called CMTF-OPT \cite{acar2011all}. The advanced version of CMTF-OPT, ACMTF-OPT \cite{acar2014structure}, places additional constraints on the model to force good performance when distinguishing between shared and unshared data latent factors. Many researchers have subsequently made improvements \cite{papalexakis2013scoup,acar2015data,jeon2016scout} to CMTF for large-scale data. In \cite{chatzichristos2018fusion}, paper proposed fusion or soft coupling of both EEG and fMRI PARAFAC2 data and provides insights on presence of shifts in the ERPs per subject. Similarly, \cite{genicot2016coupled}, proposed robust coupling of two CP tensors via measuring distance between factors. Recently, Afshar et al. \cite{afshar2020taste} proposed method based on block principle pivot, namely TASTE, for coupling between PARAFAC2 tensor and matrix and this method provides valuable insights for phenotyping of electronic health records. But, these prior work has either focused on a specific type of coupled factorization (two CP tensors or two PARAFAC2 tensors or tensor-matrix) or a specific objective function, thus having a limited range of potential applications where two different format of data is required. 

To handle the these limitations and inspired by the work by Afshar et al. \cite{afshar2020taste}, we proposed a scalable method namely \captionmethod that couple CP and PARAFAC2 tensor which incorporates non-negative constraints with multiple update settings of latent factors as shown in figure \ref{caption:caption}. We demonstrate, with synthetic and real data, the advantage of the proposed method over baseline methods in terms of accuracy and computation time. 
A preliminary version of this work appeared in \cite{gujral2020asilomar} as a short paper. In this paper, we extend those preliminary results by (i) providing a detailed description of all proposed methods to handle limitations of previous work, (ii) provide thorough experimentation on synthetic and real data, (iii) provide detailed case studies in real data using our proposed method, and (iv) we conduct a scalability analysis, demonstrating that our proposed method can scale up to billions of non-zero elements in data and 5$\times$ faster and 70-80\% accurate than any state-of-art method. Our contributions are summarized as follows:
 \begin{itemize}[noitemsep]
	\item {\bf Novel and Scalable Algorithm}: We propose \captionmethod, a method of coupling the CP and PARAFAC2 tensor with various optimization update rules with non-negative constraint. Our proposed method is efficient, scalable and provides stable decompositions than baselines. 
	\item {\bf Fast and Accurate Algorithm}: Our proposed fitting algorithm is up to $5 \times$  faster than the state of the art baseline. At the same time, \captionmethod preserves model accuracy better than baselines while maintaining interpretability.
	\item {\bf Experimental Evaluation}: we show experimental results on both synthetic and real datasets.
\end{itemize}
To promote reproducibility, we make our MATLAB implementation publicly available at link\footnote{\label{note2}\captioncodeurl}.
\section{Proposed Method: \captionmethod}
\label{caption:method}
A generalized CP and PARAFAC2 approach is appealing from several perspectives including the ability to use different aspect or information of data, improved interpretability of decomposed factors, and more reliable and robust results. We propose \captionmethod, a scalable and coupled CP and PARAFAC2 model, to impose non-negativity constraints on the factors. We consider exploiting coupling for improving the prediction accuracy by PARAFAC2 decomposition, especially for sparse observations. The problem that we focus on in this paper is summarized as follows.
\begin{mdframed}[linecolor=red!60!black,backgroundcolor=gray!20,linewidth=1pt,    topline=true,rightline=true, leftline=true] 
\textbf{Given:} A third-order PARAFAC2 tensor $\tensor{X} \in \mathbb{R}^{I_k \times J} $ and CP tensor $\tensor{Y} \in \mathbb{R}^{K \times L \times M}$ corresponding to $3^{rd}$ mode of $\tensor{X}$.
\\
\textbf{Find:} A decomposition defined by Eq.(\ref{captioneq:lsfull}).
\end{mdframed}

\subsection{General Framework for \captionmethod}
Given PARAFAC2 tensor $\tensor{X}$ and CP tensor $\tensor{Y}$ coupled in its $3^{rd}$ mode, this section proposes three different settings of the coupled tensor decomposition \captionmethod in order to factorize the multi-aspect graph or tensor into its constituent community-revealing factors. We focus on a third-order tensor $\tensor{X} \in \mathbb{R}^{I_k \times J}$ ($\forall k \in [1,K]$) and $\tensor{Y} \in \mathbb{R}^{K \times L \times M}$ for our problem and its loss function formulation is given by:
\begin{equation}
\small
\label{captioneq:ls_std}
\begin{aligned}
\mathcal{LS} =  
& \argminA_{\mathbf{Q}_k, \mathbf{U}_k,\mathbf{H},\mathbf{V},\mathbf{S}_k,\mathbf{B},\mathbf{C}} \frac{1}{2} ||\mathbf{X}_k - \mathbf{U}_k\mathbf{S}_k\mathbf{V}^T||_F^2  + \\ 
& \frac{\lambda}{2} ||\tensor{Y} - \mathbf{W}(\mathbf{C} \odot \mathbf{B})^{T}\!]||_F^2 +  \sum_{k=1}^{K} \big(\frac{\mu_k}{2} ||\mathbf{U}_k - \mathbf{Q}_k \mathbf{H} ||_F^2\big)\\ 
& \textbf{s. t. }  \mathbf{Q}_k^T\mathbf{Q}_k = \mathbf{I}, \mathbf{U}_k \geq 0, \mathbf{S}_k \geq 0,\\
& \quad \quad \mathbf{W}(k,0) = diag(\mathbf{S}_k), \mathbf{B} \geq 0, \mathbf{C} \geq 0 \quad \forall k \in [1,K]
\end{aligned}
\end{equation}

For PARAFAC2, we re-write the first part $\mathcal{LS}_1$ of minimization of  $\mathcal{LS}$ function in terms of $\mathbf{Q}_k$ as  $Trace(\tensor{X}_k^T\tensor{X}_k) + Trace(\mathbf{V}\mathbf{S}_k\mathbf{H}^T\mathbf{Q}_k^T\mathbf{Q}_k\mathbf{H}\mathbf{S}_k\mathbf{V}^T) -2* Trace(\tensor{X}_k^T\mathbf{Q}_k\mathbf{H}\mathbf{S}_k\mathbf{V}^T) $. The first and second terms are constant and by rearranging the rest, it is equivalent to   
 \begin{equation}
 \small
\label{captioneq:ls4}
 \begin{aligned}
\mathcal{LS}_1 = {} & \argminA_{\mathbf{Q}_k, \mathbf{U}_k,\mathbf{H},\mathbf{V},\mathbf{S}_k}   \frac{1}{2} ||\mathbf{Q}_k^T\mathbf{X}_k - \mathbf{H} \mathbf{S}_k \mathbf{V}^T ||_F^2 \\
& \textbf{s. t. }  \mathbf{Q}_k^T\mathbf{Q}_k = \mathbf{I}, \mathbf{U}_k \geq 0, \mathbf{S}_k \geq 0, \quad \forall k \in [1,K] 
 \end{aligned}
\end{equation}
Thus, the constrained coupled tensor decomposition objective $\mathcal{LS}$ is of the form: 
\begin{equation}
\small
\label{captioneq:lsfull}
\begin{aligned}
\mathcal{LS} =  
& \argminA_{\mathbf{Q}_k, \mathbf{U}_k,\mathbf{H},\mathbf{V},\mathbf{S}_k,\mathbf{B},\mathbf{C}} \frac{1}{2} || \mathbf{X}_k - \mathbf{H} \mathbf{S}_k \mathbf{V}^T||_F^2 + \\ 
& \frac{\lambda}{2} ||\tensor{Y} - \mathbf{W}(\mathbf{C} \odot \mathbf{B})^{T}\!]||_F^2 +  \sum_{k=1}^{K} \big(\frac{\mu_k}{2} ||\mathbf{U}_k - \mathbf{Q}_k \mathbf{H} ||_F^2\big)\\ 
& \textbf{s. t. } \mathcal{X}(:,:,k) = \mathbf{Q}_k^T\tensor{X}_k,  \mathbf{Q}_k^T\mathbf{Q}_k = \mathbf{I}, \mathbf{U}_k \geq 0, \mathbf{S}_k \geq 0,\\
& \quad \quad \mathbf{W}(k,0) = diag(\mathbf{S}_k), \mathbf{B} \geq 0, \mathbf{C} \geq 0 \quad \forall k \in [1,K]
\end{aligned}
\end{equation}
\subsection{Inference of Factors}
We propose 3 types of algorithms to solve coupling namely \captionmethod-ASET (unconstrained version), \captionmethod-BPP (constrained), and \captionmethod-ALS (constrained). First two methods are natural extension of \cite{afshar2020taste} tensor coupling. Equation \ref{captioneq:lsfull} is non-convex, our method utilizes instances of the non-negativity constrained least squares (NNLS) framework to divide problem into sub-problems. We use optimization method block principal pivoting for \captionmethod-BPP \cite{kim2012fast}, active set of Lawson and Hanson \cite{kim2008nonnegative} for \captionmethod-ASET and least square method for \captionmethod-ALS to solve each sub-problem. Next, we summarize the solution for each latent factor.
\subsubsection{\textbf{Factor $\mathbf{Q}_k$ update}} 
Consider first the update of factor $\mathbf{Q}_k$ obtained after fixing other factor matrices. For \captionmethod-ASET and \captionmethod-BPP, we update the $\mathbf{Q}_k$ by minimizing Equ. \ref{captioneq:ls_std} using following method:

$$
\argminA_{Q_k} \frac{\mu_k}{2} ||\mathbf{U}_k - \mathbf{Q}_k\mathbf{H} ||_F^2  \quad s.t. \quad \mathbf{Q}_k^T \mathbf{Q}_k = \mathbf{I}
$$
$$
 \argminA_{Q_k} \mu_k (Tr(\mathbf{Q}_k\mathbf{H}\mathbf{H}^T\mathbf{Q}_k^T - 2\mathbf{Q}_k\mathbf{H}\mathbf{U}_k^T  + \mathbf{U}_k\mathbf{U}_k^T))   = 0
$$
Using $Tr(ABC)=Tr(CAB)$ property, we can re-write $Tr(\mathbf{Q}_k\mathbf{H}\mathbf{H}^T\mathbf{Q}_k^T$) = $Tr(\mathbf{Q}_k^T\mathbf{Q}_k\mathbf{H}\mathbf{H}^T$). As $\mathbf{Q}_k^T \mathbf{Q}_k = \mathbf{I}$, we can reformulate above equation w.r.t. $\mathbf{Q}_k$ as follows :  
$$
  \argminA_{Q_k} \mu_k \mathbf{Q}_k\mathbf{H}\mathbf{U}_k^T  \quad s.t. \quad \mathbf{Q}_k^T \mathbf{Q}_k = \mathbf{I}
$$
\begin{equation}
\label{captioneq:svd}
[\mathbf{Q}_k] = SVD[\mu_k\mathbf{H}\mathbf{U}_k^T] \quad s.t. \quad \mathbf{Q}_k^T \mathbf{Q}_k = \mathbf{I}
\end{equation}
For \captionmethod-ALS, this factor is computed using the simple SVD as:
\begin{equation}
\label{captioneq:svdals}
\begin{aligned}
[\mathbf{Q}_k] = 
& SVD[\mathbf{H} \times diag(\mathbf{W}(k,:))\times (\tensor{X}_{k} \times \mathbf{V})^T]
\end{aligned}
\end{equation}
Note that each $\mathbf{Q}_k$ can contain negative values.
\subsubsection{\textbf{Factor $\mathbf{H}$ update}} 
We  update $\mathbf{H}$ by fixing $\mathbf{V}$, $\mathbf{W}$ and $\mathbf{Q}_k$. We set derivative the loss $\mathcal{LS}$ w.r.t. $\mathbf{H}$ ( Equ. \ref{captioneq:ls_std}, note that part 1 and part 2 are constant) to zero to find local minima as follows:

$$
\frac{\delta \mathcal{LS}}{\mathbf{\delta \mathbf{H}}} = \frac{\sum_{k=1}^{K} \frac{\mu_k}{2}Tr((\mathbf{U}_k - \mathbf{Q}_k\mathbf{H})(\mathbf{U}_k-\mathbf{Q}_k\mathbf{H})^T)}{\delta \mathbf{H}}  = 0
$$
$$
 \delta (\sum_{k=1}^{K} \mu_k Tr(\mathbf{Q}_k\mathbf{H}\mathbf{H}^T\mathbf{Q}_k^T - 2\mathbf{Q}_k\mathbf{H}\mathbf{U}_k^T  + \mathbf{U}_k\mathbf{U}_k^T))/\delta \mathbf{H}  = 0
$$
$$
\sum_{k=1}^{K} \mu_k \mathbf{Q}_k^T\mathbf{Q}_k\mathbf{H} -  \sum_{k=1}^{K} \mu_k \mathbf{Q}_k^T  \mathbf{U}_k = 0
$$
as $\mathbf{Q}_k^T \mathbf{Q}_k = \mathbf{I}$, the update rule for latent factor $\mathbf{H}$ is given below:
\begin{equation}
\small
\label{captioneq:Hupdate}
\begin{aligned}
&  \textbf{For \captionmethod-ASET:} \quad  \mathbf{H} = \frac{\sum_{k=1}^{K} \mathbf{Q}_k^T\mathbf{U}_k}{\sum_{k=1}^{K}\mu_k} \\
&  \textbf{For \captionmethod-BPP:} \quad  \mathbf{H} = \frac{\sum_{k=1}^{K} \mathbf{Q}_k^T\mathbf{U}_k}{\sum_{k=1}^{K}\mu_k} \quad \textbf{s. t. }  \mathbf{H} \geq 0
\end{aligned}
\end{equation}
For \textbf{\captionmethod-ALS}, we set $\mu_k = 0$ and derive update rule from Equ. \ref{captioneq:lsfull}  as follows:
\begin{equation}
\small
\label{captioneq:Hupdateals}
\mathbf{H} =   \frac{(\mathcal{X}\mathbf{V})*\mathbf{W}^T}{(\mathbf{V}^T\mathbf{V} \ast \mathbf{W}^T\mathbf{W})} \quad \textbf{s. t. }  \mathbf{H} \geq 0
\end{equation}
\subsubsection{\textbf{Factor $\mathbf{S}_k$ or $\mathbf{W}$  update}}
This mode of the PARAFAC2 tensor is coupled with CP tensor. The objective function \ref{captioneq:ls_std} with respect to $\mathbf{W}$ can be rewritten as:
\begin{equation}
\label{captioneq:W1update_tmp}
\argminA_{S_k} \frac{1}{2} ||\tensor{X}_k - \mathbf{U}_k\mathbf{S}_k\mathbf{V}^T||_F^2  +
\frac{\lambda}{2} ||\tensor{Y} - \mathbf{W}(\mathbf{C} \odot \mathbf{B})^{T}\!]||_F^2 
\end{equation}

For \textbf{\captionmethod-ASET} Equation \ref{captioneq:W1update_tmp} can be rewritten as:
\begin{equation}
\label{captioneq:W1updateASET}
\begin{aligned}
   {} & \argminA_{\mathbf{S}_k} \frac{1}{2} ||\Big[\begin{matrix} (\mathbf{V} \odot \mathbf{U}_k) \\ \sqrt{\lambda}(\mathbf{C} \odot \mathbf{B})  \end{matrix} \Big] \mathbf{W}(k,:)^T - \Big[\begin{matrix} \text{vec}(\tensor{X}_k) \\ \sqrt{\lambda} \text{vec}(\tensor{Y}(k,:,:))   \end{matrix} \Big]||_F^2
\end{aligned}
\end{equation}

For \textbf{\captionmethod-BPP}, Equation \ref{captioneq:W1updateASET} can be computed such that $\mathbf{W}(k, :)\geq 0$. The Khatri-rao product operation is expensive that can be replaced by element-wise (hadamard) product and matrix to tensor product can be replaced by slice wise dot product with factor matrices \cite{liu2008hadamard}.

For \textbf{\captionmethod-ALS}, we update $\mathbf{S}_k$ or $\mathbf{W}$ as:
\begin{equation}
\label{captioneq:Wupdate}
\begin{aligned}
  {}  & \mathbf{S}_k =   \frac{(\mathbf{U}_k^T\mathbf{U}_k \ast \mathbf{V}^T\mathbf{V})+(\sqrt{\lambda}(\mathbf{C}^T\mathbf{C}\ast\mathbf{B}^T\mathbf{B})}{diag(\mathbf{U}_k^T\tensor{X}_k\mathbf{V})+ diag(\mathbf{B}^T \tensor{Y}(k,:,:)\mathbf{C})} \\
    & \mathbf{W}(k, :) = diag(\mathbf{S}_k)  \quad \textbf{s. t. }  \mathbf{W}(k,:) \geq 0 , \forall k \in [1,K]
\end{aligned}
\end{equation}
\subsubsection{\textbf{Factor $\mathbf{V}$ update}}
We solve Equation \ref{captioneq:ls_std} with respect to $\mathbf{V}$ as given below: 
\begin{equation}
\label{captioneq:V1update_tmp}
 \argminA_{\mathbf{V}} \frac{1}{2} ||\tensor{X}_k - \mathbf{U}_k\mathbf{S}_k\mathbf{V}^T||_F^2
\end{equation}
For \textbf{\captionmethod-ASET}, Equ. (\ref{captioneq:V1update_tmp}) can be formulated as:
\begin{equation}
\label{captioneq:Vupdate}
 \mathbf{V}(:,k) =   \frac{\mathbf{X}_k^T}{\mathbf{S}_k\mathbf{U}_k^T} 
\end{equation}

For \textbf{\captionmethod-BPP}, Equ. (\ref{captioneq:Vupdate}) can be easily updated such that by $ \mathbf{V} \geq 0.$

For \textbf{\captionmethod-ALS}, we update $\mathbf{V}$ as given below:
\begin{equation}
\label{captioneq:VupdateALS}
\begin{aligned}
  {}  & \mathbf{V} =   \frac{(\mathcal{X}^T\mathbf{H})*\mathbf{W}^T}{(\mathbf{H}^T\mathbf{H} \ast \mathbf{W}^T\mathbf{W})} \quad \textbf{s. t. } \quad \mathbf{V}  \geq 0
\end{aligned}
\end{equation}
\subsubsection{\textbf{Factor $\mathbf{B}$ or $\mathbf{C}$ update}}
Finally, factor matrices $\mathbf{B}$ and $\mathbf{C}$ represents the participation of CP tensor for user similarities. We solve Equation \ref{captioneq:ls_std} w.r.t $\mathbf{B}$ as given below: 
\begin{equation}
\label{captioneq:B1update}
\begin{aligned}
   {} & \argminA_{\mathbf{B}} \frac{1}{2} ||\tensor{Y} - \mathbf{B}(\mathbf{C} \odot \mathbf{W})^T||_F^2 \quad \text{s. t.} \quad \mathbf{B} \geq 0
\end{aligned}
\end{equation}

which can be easily updated via all-set method for \captionmethod-ASET and via block principal pivoting for $\captionmethod-BPP$.

For \captionmethod-ALS, we update $\mathbf{B}$ as given below:
\begin{equation}
\label{captioneq:Bupdate}
\begin{aligned}
  {}  & \mathbf{B} =   \frac{\text{MTTKRP}(\tensor{Y},\mathbf{C},\mathbf{W})}{(\mathbf{C}^T\mathbf{C} \ast \mathbf{W}^T\mathbf{W})} \quad \textbf{s. t. } \quad \mathbf{B}  \geq 0
\end{aligned}
\end{equation}
Similarly, we update $\mathbf{C}$ as:
\begin{equation}
\label{captioneq:Cupdate}
\begin{aligned}
  {}  & \mathbf{C} =   \frac{\text{MTTKRP}(\tensor{Y},\mathbf{B},\mathbf{W})}{(\mathbf{B}^T\mathbf{B} \ast \mathbf{W}^T\mathbf{W})} \quad \textbf{s. t. }\quad  \mathbf{C}  \geq 0
\end{aligned}
\end{equation}
\subsubsection{\textbf{Factor $\mathbf{U}_k$ update}} 
For \captionmethod-ALS, this factor is computed using the simple multiplication $\mathbf{U}_k = \mathbf{Q}_k*\mathbf{H}$. For  \captionmethod-ASET  and  \captionmethod-BPP ( where $\mathbf{U}_k \geq 0$), the objective function with respect to $\mathbf{U}_k$ can be solved as:
\begin{equation}
\label{captioneq:W1update}
\begin{aligned}
   {} & \argminA_{\mathbf{U}_k} \frac{1}{2} ||\Big[\begin{matrix} (\mathbf{V}\mathbf{S}_k) \\ \sqrt{\mu_k}\mathbf{I}   \end{matrix} \Big] \mathbf{U}_k^T - \Big[\begin{matrix} \tensor{X}_k^T \\ \sqrt{\mu_k} \mathbf{H}^T\mathbf{Q}_k^T   \end{matrix} \Big]||_F^2
\end{aligned}
\end{equation}

\section{Experiments}

\label{sec:experiments}
In this section we extensively evaluate the performance of \captionmethod on multiple synthetic and real datasets, and compare its performance with state-of-the-art approaches. We focus on answering the following:

\textbf{Q1.} Does \captionmethod preserve accuracy while being fast to compute and helps in Identifiability of latent factors?

\textbf{Q2.} How does \captionmethod scale for increasing number of users ($K$)?

\textbf{Q4.} How can we use \captionmethod for real-world utility?
\subsection{Dataset}
We provide the datasets used for evaluation in Table \ref{tbl:dataset}. Rank determination in the experiments is performed with the aid of the Core Consistency Diagnostic method \cite{bro2003new,papalexakis2016automatic}.

\textbf{Synthetic Data}: In order to fully explore the performance of \captionmethod, in our experiments we generate synthetic tensor with varying density. Those tensors are created from a known set of randomly generated factors, so that we have full control over the ground truth of the full decomposition. The specifications of synthetic datasets are given in Table \ref{tbl:dataset}.

\begin{table}[t]
	\centering
	\small
	\begin{tabular}{|c||c|c|c|c|}
	\cline{1-5}
	\multirow{2}{*}{{\bf Dataset}}& \multicolumn{4}{|c|}{{\bf Statistics (K: Thousands M: Millions)}}   \\ 
	\cline{2-5}
	& {\bf [$I_{max}, J, K]$} & {\bf [K, L, M]}& {\bf \text{\em{R}}}   & {\bf \text{\em{\#nnz}}}\\	\hline
		SYN-I &$[500,1K,5K]$&$[5K,5K,500]$&$40$ &$[0.5B, 1.5B]$\\\hline
	   SYN-II &$[1K,1K,10K]$&$[10K,10K,1K]$&$40$ &$[1.4B, 3.9B]$\\\hline
	   SYN-III &$[1K,5K,50K]$&$[50K,50K,1K]$&$10$ &$[6B, 9B]$\\\hline
	\hline
	   Collaboration & $[25,10,11K]$&$[11K,11K,5]$&$5 - 50$ &$[1M, 1.2M]$\\\hline
 	   Movielens &$[121,4K,6K]$&$[6K,6K,5]$&$5 - 50$ &$[1M, 4.5M]$\\\hline
	   Adobe &$[1K,1K,31K]$&$[31K,31K,5]$&$5 - 50$ &$[1.7M, 6.3]$\\\hline
	   CMS &$[250,1K,98K]$&$[98K,98K,5]$&$5 - 50$ &$[9.6M, 9.7M]$\\\hline
	   
	\end{tabular}
	\caption{Details for the datasets.}
	\label{tbl:dataset} 
	\vspace{-0.1in}
\end{table}

\begin{table*}[t]
	\centering
	\ssmall\setlength\tabcolsep{4pt}
	\begin{tabular}{|c|c|c|c|c|c|c|c|}
	\hline
      {\bf Data}&{\bf Metric}&{\bf SCD }&	{\bf RCTF }&	{\bf TASTE }& {\bf C-BPP}&	{\bf C-ASET}&	{\bf C-ALS}\\\hline
      \multirow{3}{*}{{\bf SYN-I}}  &RMSE&$0.43 (0.055)$&$0.38 (0.068)$&$0.21 (0.041)$&$0.20 (0.068)$&$0.26 (0.032)$&{\bf 0.18 (0.026)}  \\
            &NMI&$0.45(0.010)$&$0.49 (0.074)$&$0.78 (0.021)$&$0.79 (0.019)$&$0.65 (0.012)$&{\bf 0.92 (0.034)}  \\
       &Time &$490.01 (14.07)$&$548.32(16.36)$&$357.23 (34.59)$&$336.43 (11.59)$&$348.56 (56.43)$&{\bf 301.87 (34.43)}  \\\hline
	     
	      \multirow{3}{*}{{\bf SYN-II}} &RMSE&\multirow{3}{*}{\reminder{[OoM]}}&\multirow{3}{*}{\reminder{[OoM]}}&$0.30 (0.023)$&$0.29 (0.013)$&$0.36 (0.092)$&{\bf0.25 (0.065)}  \\
	      & NMI & &&$0.65 (0.044)$&$0.68 (0.023)$&$0.61 (0.049)$&{\bf0.75 (0.063)}\\
	      & Time  & &&$2109.11 (89.75)$&$2021.48 (56.35)$&$2090.41 (134.67)$	&{\bf 1689.68 (101.23)}  \\\hline

	      \multirow{3}{*}{{\bf SYN-III}}  &RMSE&\multirow{3}{*}{\reminder{[OoM]}} &\multirow{3}{*}{\reminder{[OoM]}}&$0.35 (0.022)$&$0.38 (0.035)$&$0.43 (0.056)$	&{\bf 0.32 (0.081)}  \\
	      & NMI &&&$0.72 (0.069)$&$0.70 (0.013)$&$0.65 (0.086)$	&{\bf 0.76 (0.081)}  \\
	   & Time &&&$2387.56 (72.85)$&$2304.68 (89.63)$&$2360.41 (112.34)$	&{\bf 1959.68 (91.23)}  \\\hline
	\end{tabular}
	\caption{Performance of \captionmethod in terms of RMSE, NMI and CPU Time (mins) for synthetic data. Numbers where our proposed method outperforms other baselines are bolded. For each dataset, we report the standard deviation between two parentheses along with average score. Remarkably, \captionmethod-ALS better preserve accuracy which ultimately, improves task performance.\hide{nitpick: If we can add citation next to method (I mean if it doesn't destroy the size of the table) that would be helpful too}}
	\label{tbl:resultsyn}

\end{table*}
\textbf{Real Data}: In order to truly evaluate the effectiveness of \captionmethod, we test its performance against four real datasets that have been used in the literature. Those datasets are summarized in Table \ref{tbl:resultsyn} and details are below.

 \textbf{Collaboration Data}\cite{kdddata}: It is co-authorship network (where two authors are connected if they publish at least one paper together) of $11,176$ authors over years 1990-2015 for International Conference on Data Mining (ICDM), International Conference on Machine Learning (ICML), Knowledge Discovery and Data Mining (KDD) conference.

\textbf{Movielens}\cite{harper2016movielens}: MovieLens-1M dataset is widely used in recent literature. For this dataset, we created tensor as year-by-movie-by-user i.e each year of ratings corresponds to a certain observation for each user's activity.

\textbf{Adobe}: Adobe dataset is sequential data and it consists of tutorial sequence of anonymous $7$ million users. We selected users (31K) who watched at least unique $15$ tutorials. We created PARAFAC2 tensor as sequence-by-tutorial-by-user [max 1k $\times$ 1k $\times$ 31k] and CP tensor as user-by-user-by-similarity. We have semi synthetic ground truth values for this dataset and we assigned each user to class based on the type of tutorial watched.

\textbf{CMS} \cite{cms}: This dataset is synthetically created by Centers for Medicare and Medicaid (CMS) by using 5\% of real medicare data  and includes 98K beneficiaries. We created PARAFAC2 tensor as  visits-by-diagnosis-by-patient and CP tensor as user-by-user-by-similarity.

We create CP tensor using well known similarity methods such that cosine similarity, Jaccard similarty, LSH hasing \cite{safavi2019fast}, ABC hashing\cite{safavi2019fast}, K-mean and  Edit distance.
\subsection{Baselines} 
Here we briefly present the state-of-the-art baselines we used for comparison. Note that for each baseline we use the reported parameters that yielded the best performance in the respective publications.All comparisons were carried out over 10 iterations each, and each number reported is an average with a standard deviation attached to it. We compared the following algorithms for coupling CP and PARAFAC2 tensors. 
\begin{itemize}
\item \textbf{TASTE} \cite{afshar2020taste}: This method based on block principle pivot for coupling between PARAFAC2 tensor and matrix. We run algorithm for all slices of CP tensor $\tensor{Y}$.
\item \textbf{Soft Coupled Decomposition} \cite{chatzichristos2018fusion}: SCD method is soft coupling of two PARAFAC2 data.
\item \textbf{Robust Coupled Tensor Factorization} \cite{genicot2016coupled}: RCTF is robust coupling of two CP tensors via measuring distance between factor.
\item Our proposed methods:
\begin{itemize}
    \item  \textbf{\captionmethod-ASET}: This is solving coupled tensor factorization via Active
set methods with using the unconstrained least squares optimization.
    \item \textbf{\captionmethod-BPP}: This is solving coupled tensor factorization via block
principal pivoting method using the non negativity-constrained least squares optimization.
 \item  \textbf{\captionmethod-ALS}: This is solving coupled tensor factorization via alternating non-negative least squares optimization.  
\end{itemize}

\end{itemize}    

\subsection{Evaluation Measures}
We evaluate \captionmethod and the baselines using three quantitative criteria: Root Mean Square Error and CPU-Time (in minutes). Briefly, 
\begin{itemize}
    \item \textbf{Root Mean Square Error}:  Performance is evaluated as the Root Mean Square Error (RMSE) which is a well known evaluation measure used in coupled tensor factorization literature. Mathematically,
    \begin{equation}
         RMSE = \sqrt{\frac{\sum_{k=1}^{K}\big(|| \tensor{X}_k - \hat{\tensor{X}}_k||^2 \big) + ||\tensor{Y} - \hat{\tensor{Y}}||^2}{\sum_{k=1}^{K}(I_k \times J) + (K  \times L \times M)}}
        \end{equation}
    \item \textbf{CPU time (sec)}: indicates how much faster does the decomposition runs as compared to baselines. The average running time is measured in seconds, and is used to validate the time efficiency of an algorithm.

  \item  \textbf{Normalized Mutual Information}: Normalized Mutual Information (NMI) is a good measure for determining the quality of clustering. Mathematically, 
 \begin{equation}
        NMI (Y,C) = \frac{2 * I( Y; C)}{[H(Y) + H(C)]}
    \end{equation}
    where I($Y$, $C$) is mutual information between cluster $Y$ and  $C$, H($Y$) and H($C$) are entropy of cluster and classes.
\end{itemize}

\subsection{Experimental Result}
\textbf{Q1a. Effectiveness and Run Time}

We evaluate performance of the algorithm for community detection where each node in a graph is assigned to
a single label. In our study, we perform hard clustering over latent factor matrices.We run each method for $10$ different random initialization and provide the average and standard deviation of RMSE and CPU Time (min) as shown in Table \ref{tbl:resultsyn}.

\textbf{Synthetic Data}: The baseline method SCD and RCTF unable to decompose SYN-II and SYN-III due to out of memory during intermediate computations. Our proposed methods \captionmethod-BPP and \captionmethod-ASET provide comparable accuracy and runtime when compared to TASTE method. Overall, \captionmethod-ALS achieves significant improvement on running time and average $3-8\%$ RMSE improvement. Therefore, our approach is the only one that achieves a fast and accurate solution.
\begin{table*}[t]
	\centering
      \ssmall\setlength\tabcolsep{4pt}
	\begin{tabular}{|c|c|c|c|c|c|c|c|}
	\hline
      {\bf Dataset}&{\bf Metric}&{\bf SCD  }&	{\bf RCTF }&	{\bf TASTE  }& {\bf C-BPP}&	{\bf C-ASET}&	{\bf C-ALS}\\\hline
        \multirow{2}{*}{{\bf Collaboration}}  &RMSE&$0.39 (0.075)$&$0.35 (0.068)$&$0.17 (0.041)$&$0.17 (0.068)$&$0.23 (0.032)$&{\bf 0.14 (0.026)}  \\
       &Time  &$379.01 (14.07)$&$427.88(16.36)$&$236.93 (19.69)$&$210.43 (11.59)$&$226.43 (13.28)$&{\bf 174.31 (11.41)}  \\\hline
	   \multirow{3}{*}{{\bf Movielens}}  &RMSE&$0.24 (0.021)$&$0.28 (0.020)$&$0.19 (0.082)$&$0.16 (0.012)$&$0.21 (0.093)$&{\bf 0.14 (0.012)}  \\
       &Time  &$48.21 (3.21)$&$45.20 (2.34)$&$21.34 (1.69)$&$20.89 (4.83)$&$25.55 (2.84)$&{\bf 10.19 (1.36)}  \\\hline
        \multirow{3}{*}{{\bf Adobe}}  &RMSE&$0.29 (0.033)$ &$ 0.35 (0.062)$ & $0.28 (0.015) $ & $ 0.22 (0.023)$ & $ 0.26 (0.042) $ &  \textbf{0.20 (0.013)}  \\
        & NMI &$0.42 (0.05)$&$0.48 (0.09)$&$0.53 (0.02)$&$0.54 (0.01)$&$0.49 (0.06)$	&{\bf 0.58 (0.08)}  \\
       &Time  &$210.23 (11.34)$ &$ 198.20 (16.96)$ & $98.22 (10.58) $ & $ 93.23 (23.52)$ & $ 150.24 (25.58) $ &  \textbf{78.72 (21.74)}  \\\hline
        \multirow{2}{*}{{\bf CMS}}  &RMSE&$0.29 (0.045)$ &$ 0.34 (0.098)$ & $0.21 (0.058) $ & \textbf{0.20 (0.052)}& $ 0.24 (0.021) $ &  $0.23 (0.037)$ \\
       &Time  &$466.23 (13.44)$ &$ 435.34 (9.10)$ & $150.24 (10.92) $ & $149.24 (11.23)$& $ 202.24 (19.13) $ &  \textbf{112.33 (11.93)}  \\\hline
	\end{tabular}
	\caption{Performance of \captionmethod in terms of RMSE and CPU Time (mins) for real data decomposed. Numbers where our proposed method outperforms other baselines are bolded. For each dataset, we report the standard deviation between two parentheses along with average score. *Note: we have semi-synthetic labels for the Adobe dataset only. \hide{Same for citaion next to method}}
	\label{tbl:resultreal} 
\end{table*}


For real dataset we do not have labels, so we provide only RMSE and CPU Time for these data as discussed below:

\textbf{Collaboration Data}: We observed that \captionmethod-ALS provides the high-quality communities as shown in Table \ref{tbl:resultreal}. We see similar behaviour with \captionmethod-BPP also. We select top two communities based on size. Each community represents a group of scientists with the same research interests, such as Data Mining community (\#3) and Information Retrieval and Web Mining community (\#10) in Table \ref{table:hcommunity}. Researchers like ”Jiawei Han” and ”Philip S. Yu”, have published a large number of papers in collaboration with people from various research communities. These authors considered as tightly related to the same community in the network. We further analyze the outcome of baseline methods and observe that SCD and RCTF are not able to find few strongly connected communities and fails to merge the small groups even those share a strong connection. In terms of RMSE, \captionmethod-ALS, outperformed the baselines as shown in Table \ref{tbl:resultreal}
\begin{table}[h!]
	\small
	\begin{center}
		\begin{tabular}{ |c|c|c|c|}
			\hline
			{\bf Community[\#3]} & {\bf 	Community[\#10] }\\
			\hline
			\hline
			Jiawei Han &    Rakesh Agrawal \\
			Philip S. Yu&  Ramakrishnan Srikant    \\
			Wei Fan & Panayiotis Tsaparas    \\
			Charu C. Aggarwal &H  Lei Zhang    \\
			Jimeng Sun  &  Josh Attenberg    \\
			Jian Pei  &  Anitha Kannan    \\
			Bing Liu  &  Sreenivas Gollapudi   \\
			Bhavani M. Thuraisingham  &  Kamal Ali  \\
			Longbing Cao   & Sunandan Chakraborty    \\
			Tanya Y. Berger Wolf&  Rui Cai    \\
			Xindong Wu &Indu Pal Kaur\\
 		\hline 
		\end{tabular}
		\caption{Top two communities (based on size) discovered by \captionmethod-ALS on \textit{Collaboration} Dataset. Selected researchers are based on top 10 factor values of latent factor.}
		\label{table:hcommunity}
	\end{center}
\end{table}
\begin{figure}
	\begin{center}
		\includegraphics[clip,trim=0.3cm 0.3cm 0.3cm 0.3cm,width=0.6\textwidth]{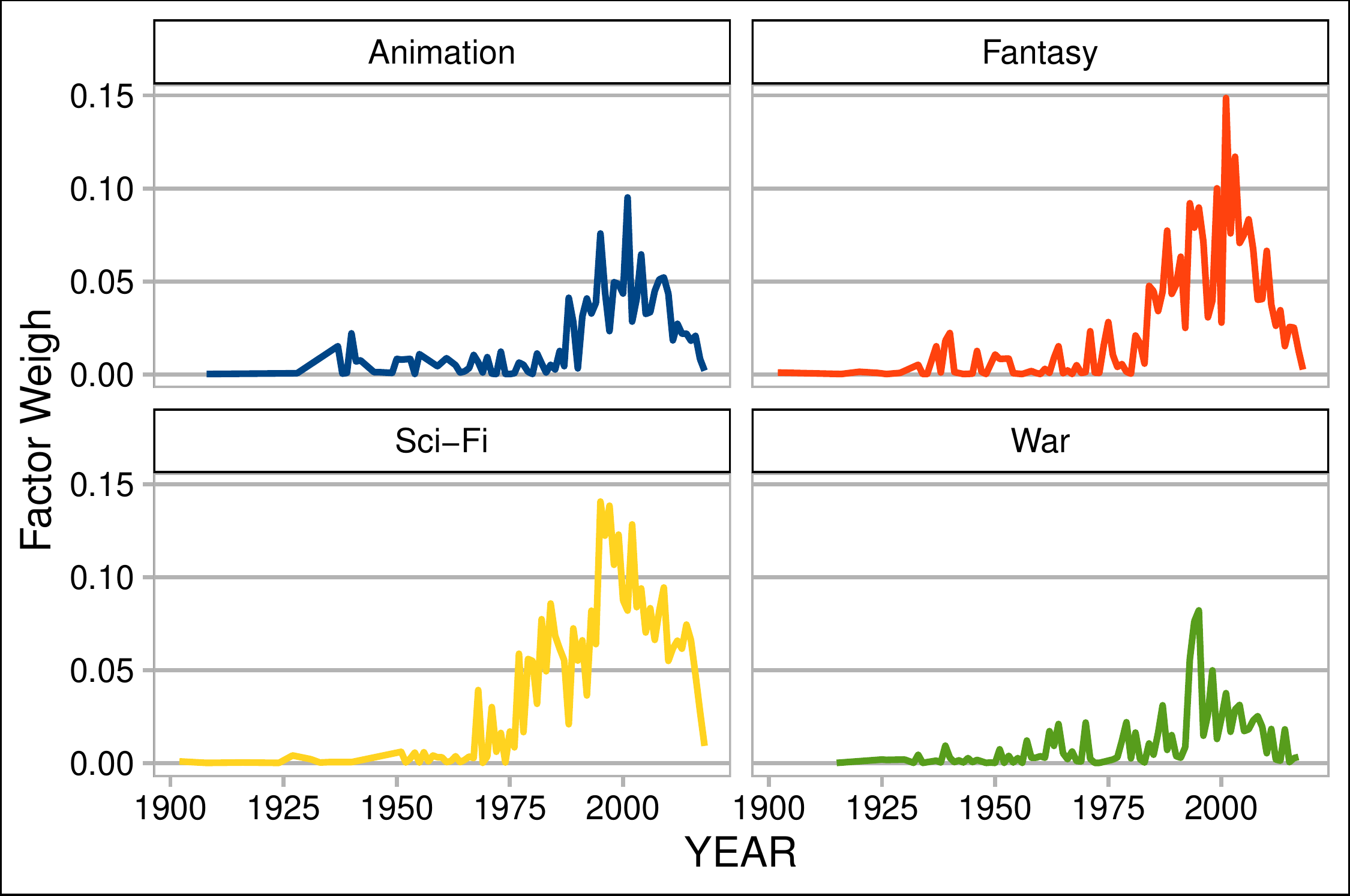}
		\caption{Movielens Data exploratory analysis for top movie genre.} 
		\label{fig:movie}
	\end{center}
\end{figure}

\textbf{Movielens Data}: We decompose movielens data using R=20 for all methods. Our proposed method \captionmethod-ALS outperformed w.r.t RMSE and computation time. Here, we explore interesting observations. First, we observe that there is a rapid growth of sci-fi movies beginning of 1970, a few months after the first Moon landing. Secondly, we observe that the rise of popularity of animation movies as shown in fig \ref{fig:movie}, the reason could be the advancement of the computer animation technology which made the development of such movies much easier. Next, the most of the war/action movies were popular around the time of World War II, Vietnam War and war in Afghanistan and Iraq. It’s interesting to observe that how the cinematography world reflected the viewership of the real world. Another interesting observation is that most of the salesman and programmers mostly loved adventurer movies and lawyers liked most of drama and fantasy movies. 

\textbf{Adobe Data}: \captionmethod-ALS outperforms the baseline methods. Remarkably, it surpasses the baselines most when the data is sparse. In order to present the use of \captionmethod towards community detection, we focus our analysis on a subset of tutorials watched by each community in this dataset. Figure \ref{fig:adobecomm} presents the top 5 (based on size) community's most frequent tutorial(s) sequence watched. Conceptually, those users share similar interest in terms of domain knowledge, learning or interests. We observe from the factors of the CP tensor that these communities are connected strongly within the group and have few connections outside the group. Nevertheless, \captionmethod-ALS achieves significantly good performance in terms of $NMI$ $\approx 0.58$ as shown in Table \ref{tbl:resultreal}. 
\begin{figure}
	\vspace{-0.1in}
	\begin{center}
		\includegraphics[clip,trim=0.3cm 8.5cm 0.5cm 0.5cm,width=0.5\textwidth]{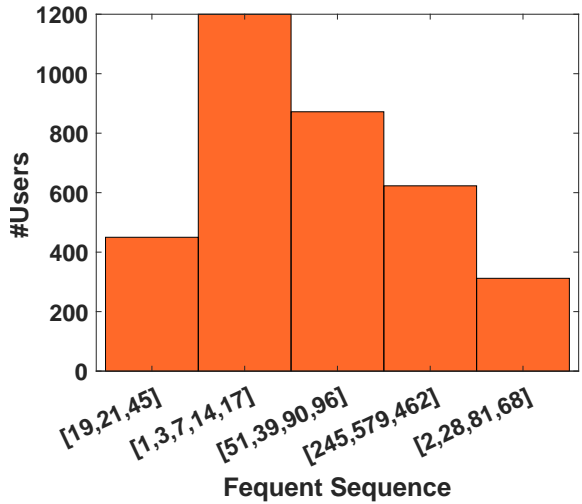}
		\caption{Frequent sequence of tutorials watched for top communities based on size.} 
		\label{fig:adobecomm}
	\end{center}
	\vspace{-0.24in}
\end{figure}

\textbf{Q1b. Identifiability Analysis}
As we know that PARAFAC2 is known for the hardness in terms of optimization. In many instances, Alternate Least Square (ALS) based algorithms do not converge to good solutions, although the PARAFAC2 decomposition theoretically has identifiability. For identifiability analysis, PARAFAC2 tensor $\tensor{X} \in \mathbb{R}^{1000 \times 1000 \times 1000}$ and CP tensor $\tensor{Y} \in \mathbb{R}^{1000 \times 1000 \times 500}$ is constructed with fixed target rank $R = 10$. Our aim is to recover latent factors as similar as possible to original latent factors. To simplify, we discuss identifiability of latent factor matrix $\mathbf{W}$ only. We compute the dot product for all permutations of columns between original latent factors ($W_{org}$) and latent factors ($W_{pred}$) obtained after decomposition. If the computed dot product is higher than the threshold value (80\%), the two factors match, and we consider them as recovered factor. If the dot product between a column in $W_{org}$ and with all the columns in $W_{pred}$ has a value less than the threshold, we consider it as an non-recovered factor. The Figure \ref{fig:identifiability} shows that using coupled CP tensor data alongside PARAFAC2 data could help to alleviate the above discussed challenge and improve the identifiability of decomposition.

\begin{figure}
	\begin{center}
		\includegraphics[clip,trim=0cm 10cm 1cm 2cm,width=0.5\textwidth]{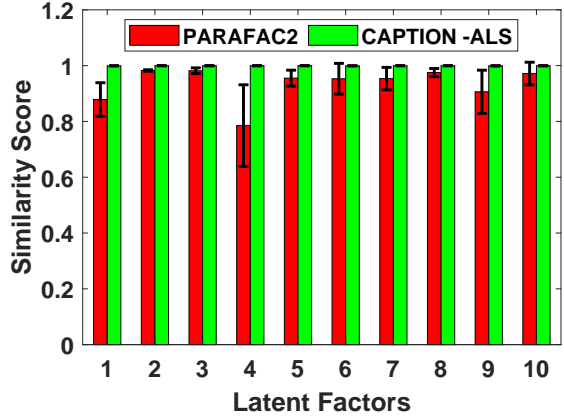}
		\caption{Identifiability analysis with and without coupling of PARAFAC2. Higher the value, better the identifiability.} 
		\label{fig:identifiability}
	\end{center}
	\vspace{-0.25in}
\end{figure}

\textbf{Q2. Scalability}
We also evaluate the scalability of our algorithm on synthetic dataset in terms of time needed for increasing load of input
users (K). A PARAFAC2 tensors $\tensor{X} \in \mathbb{R}^{100 \times 100 \times [1K - 1M]}$ and CP tensor $\tensor{Y} \in \mathbb{R}^{[1K - 1M] \times [1K - 1M] \times 5}$ are decomposed with fixed target rank $R = 40$. The time needed by \captionmethod increases very linearly with increase in non-zero elements. Our proposed method \captionmethod-ALS, successfully decomposed the large coupled tensors in reasonable time as shown in Figure \ref{fig:scal1}(a - b) and is up to $5\times$ faster than baseline methods. Figure \ref{fig:scal1}(c - h), we present
the RMSE, and the computational time for the approaches under comparison for ML, Adobe and CMS data for increasing target rank from $5 - 50$. We remark that all methods achieve comparable RMSE values for three different data sets but proposed method \captionmethod-ALS  is up to average $2.5\times$ faster than baselines for all data sets. We remark the favorable scalability properties of \captionmethod, rendering it practical to use for large tensors.
\begin{figure*}
\vspace{-0.2in}
	\begin{center}
		\includegraphics[clip,trim=3cm 9.5cm 4cm 5.6cm,width=0.30\textwidth]{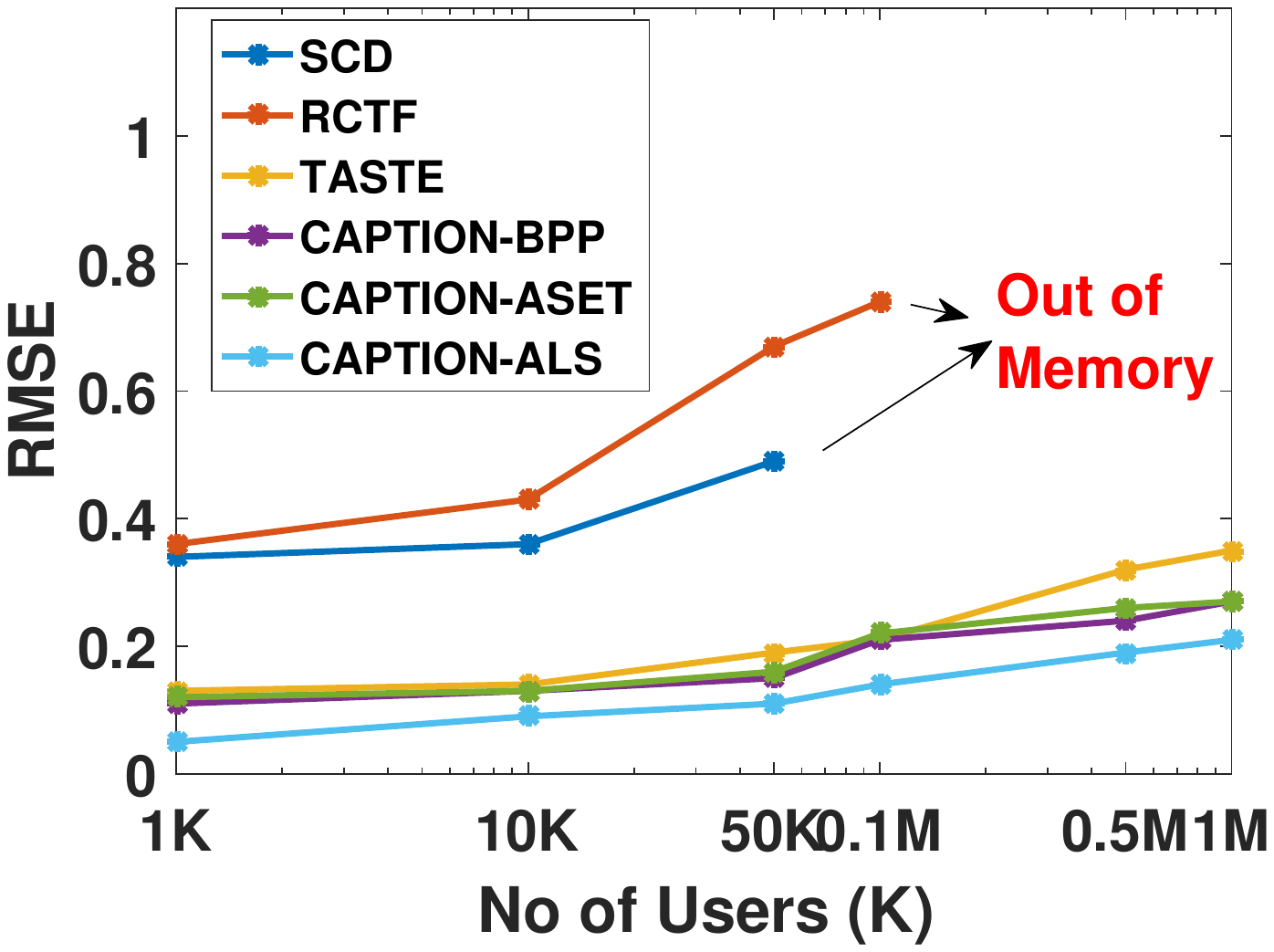}
		\includegraphics[clip,trim=2.8cm 9.5cm 4cm 5.6cm,width=0.30\textwidth]{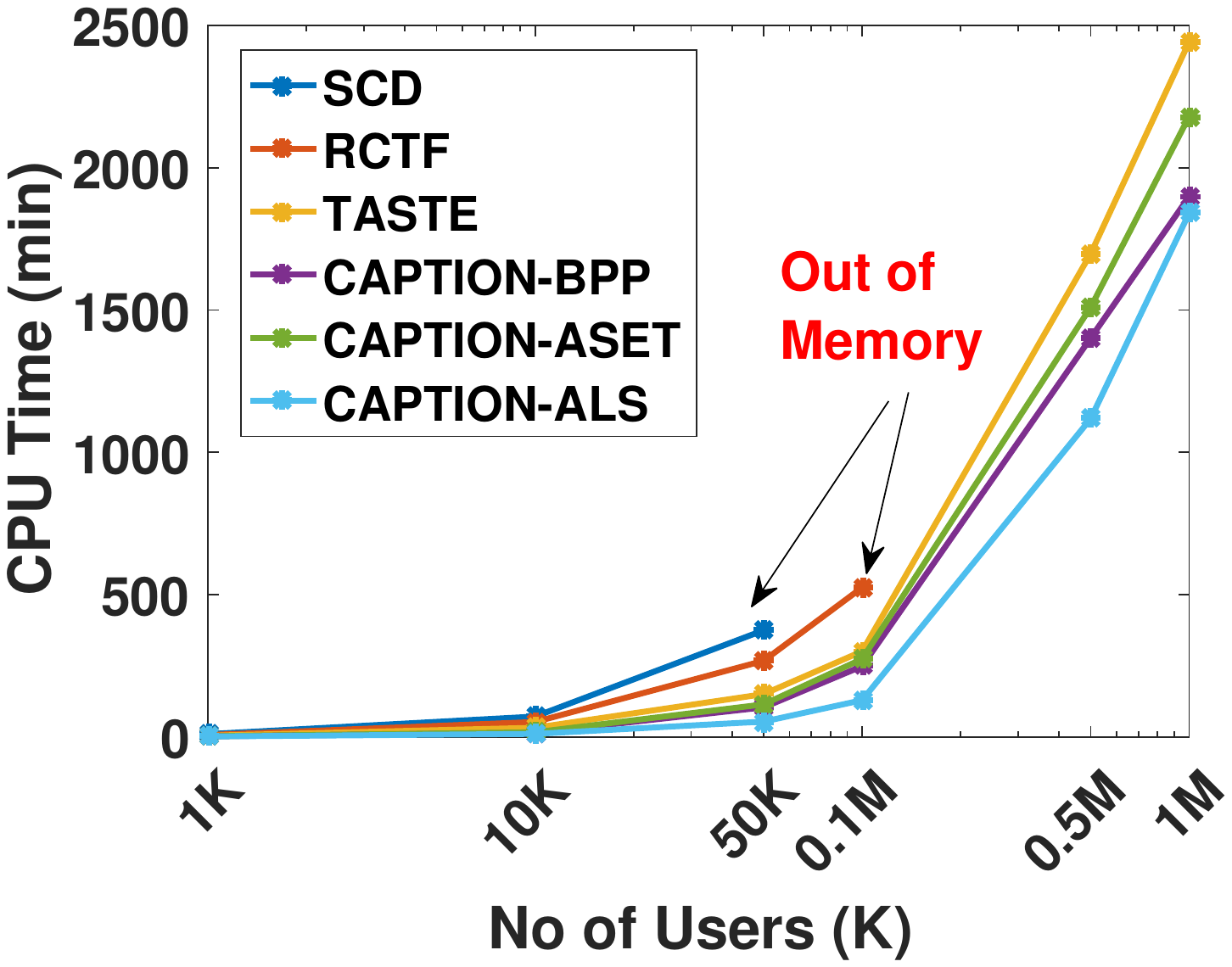}
		\includegraphics[clip,trim=3cm 9.5cm 4cm 6cm,width=0.30\textwidth]{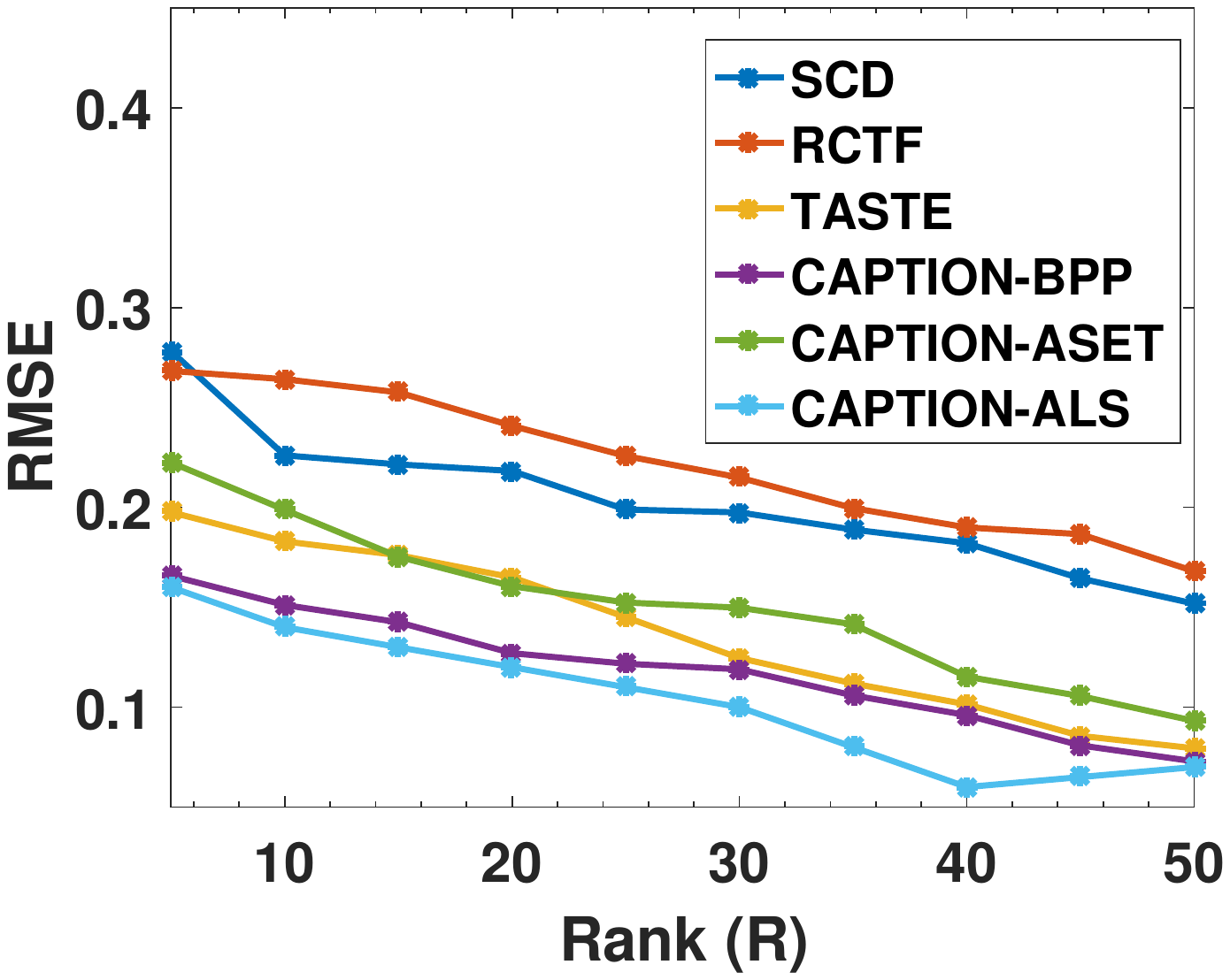}
		\includegraphics[clip,trim=2.8cm 9.5cm 4cm 6cm,width=0.30\textwidth]{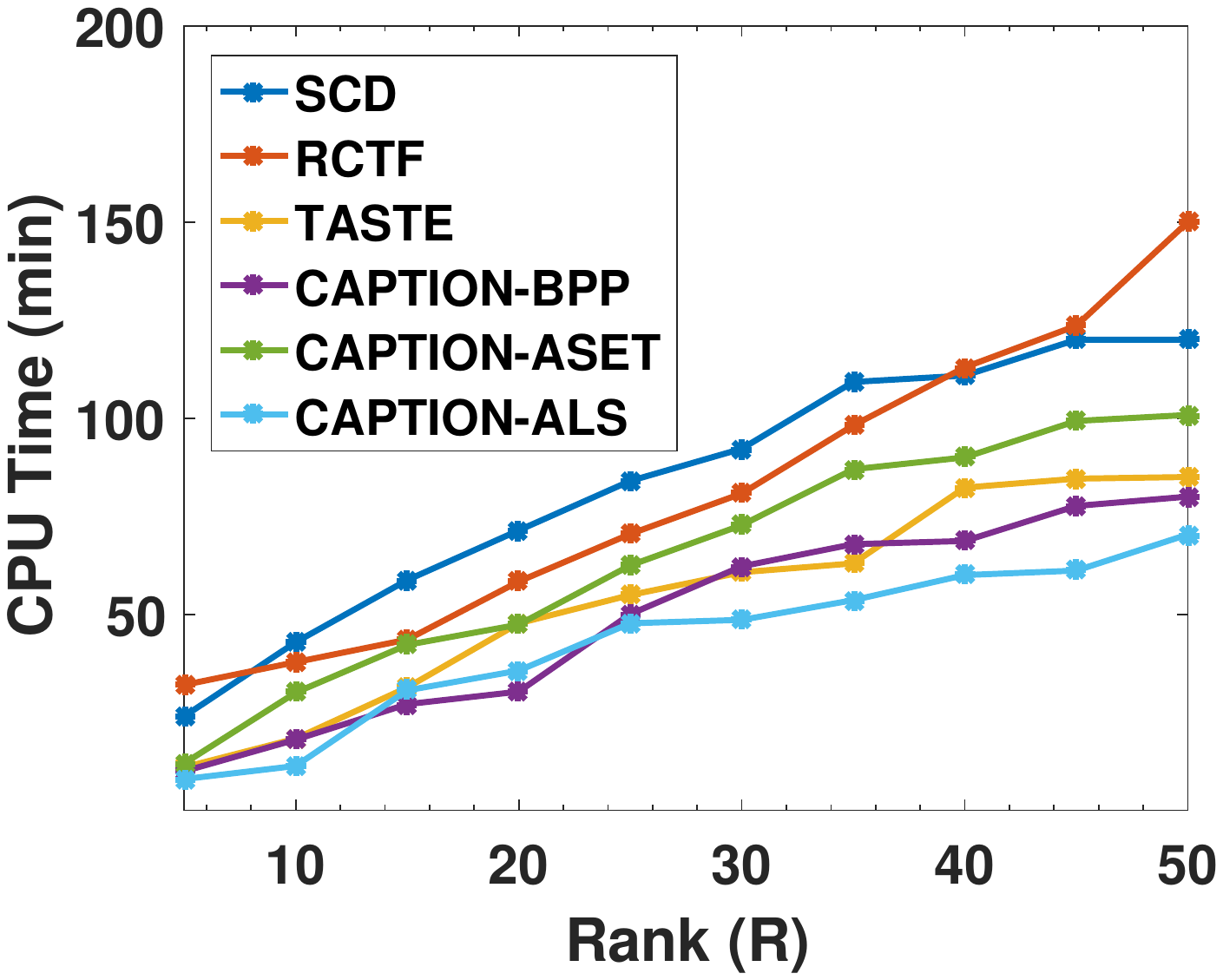}
		\includegraphics[clip,trim=3cm 9.5cm 4cm 6cm,width=0.30\textwidth]{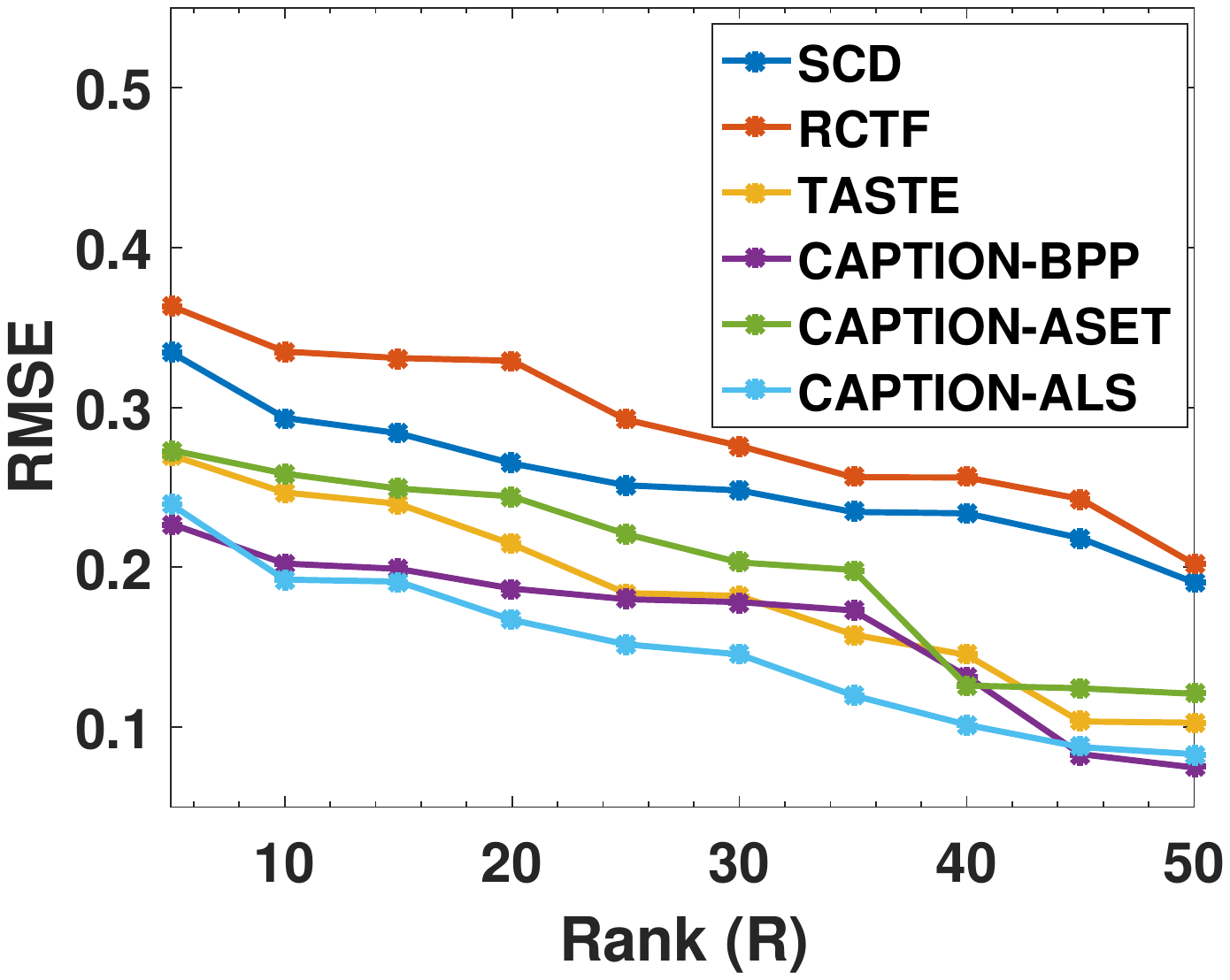}
		\includegraphics[clip,trim=2.8cm 9.5cm 4cm 6cm,width=0.30\textwidth]{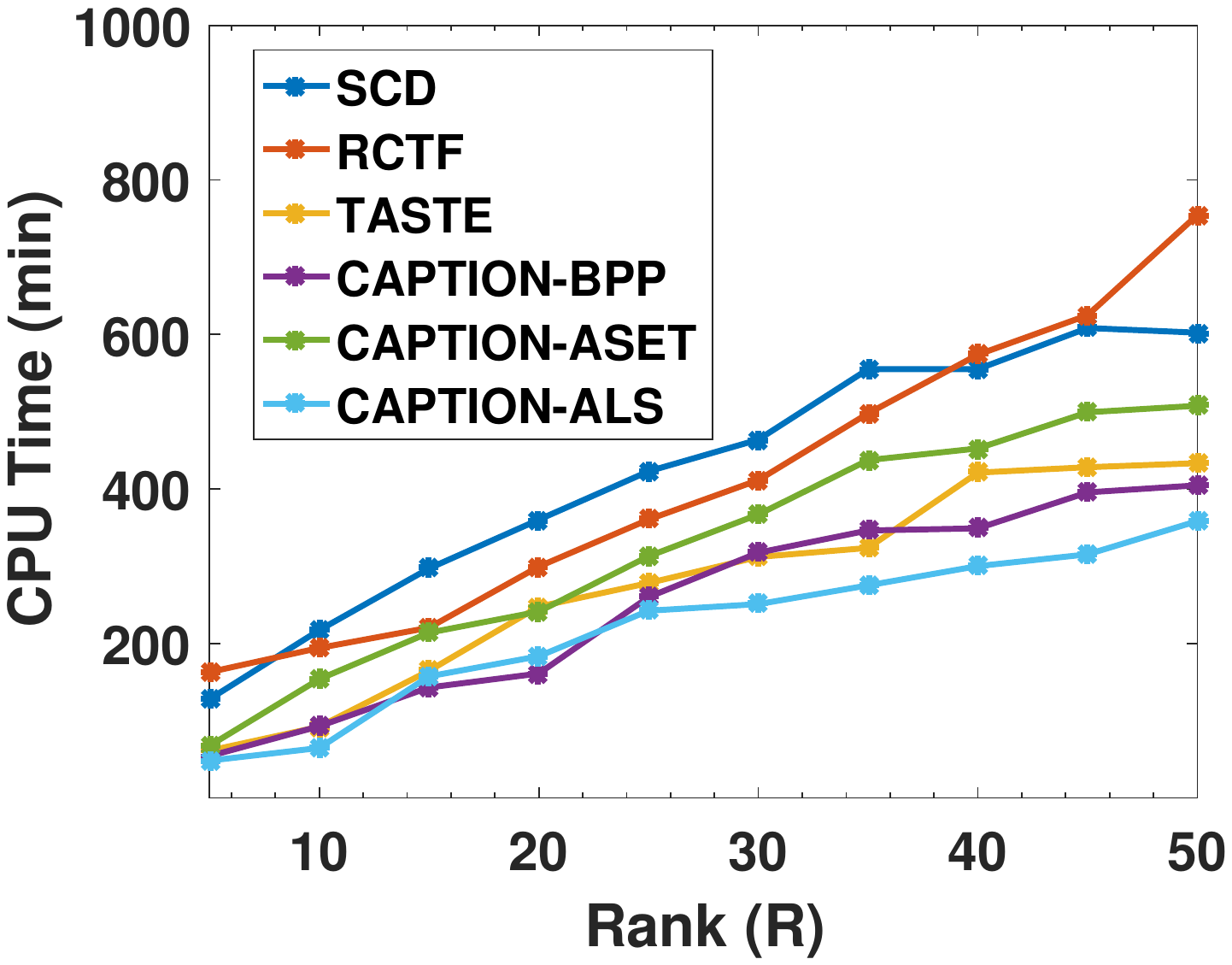}
        \includegraphics[clip,trim=3cm 9.5cm 4cm 6cm,width=0.30\textwidth]{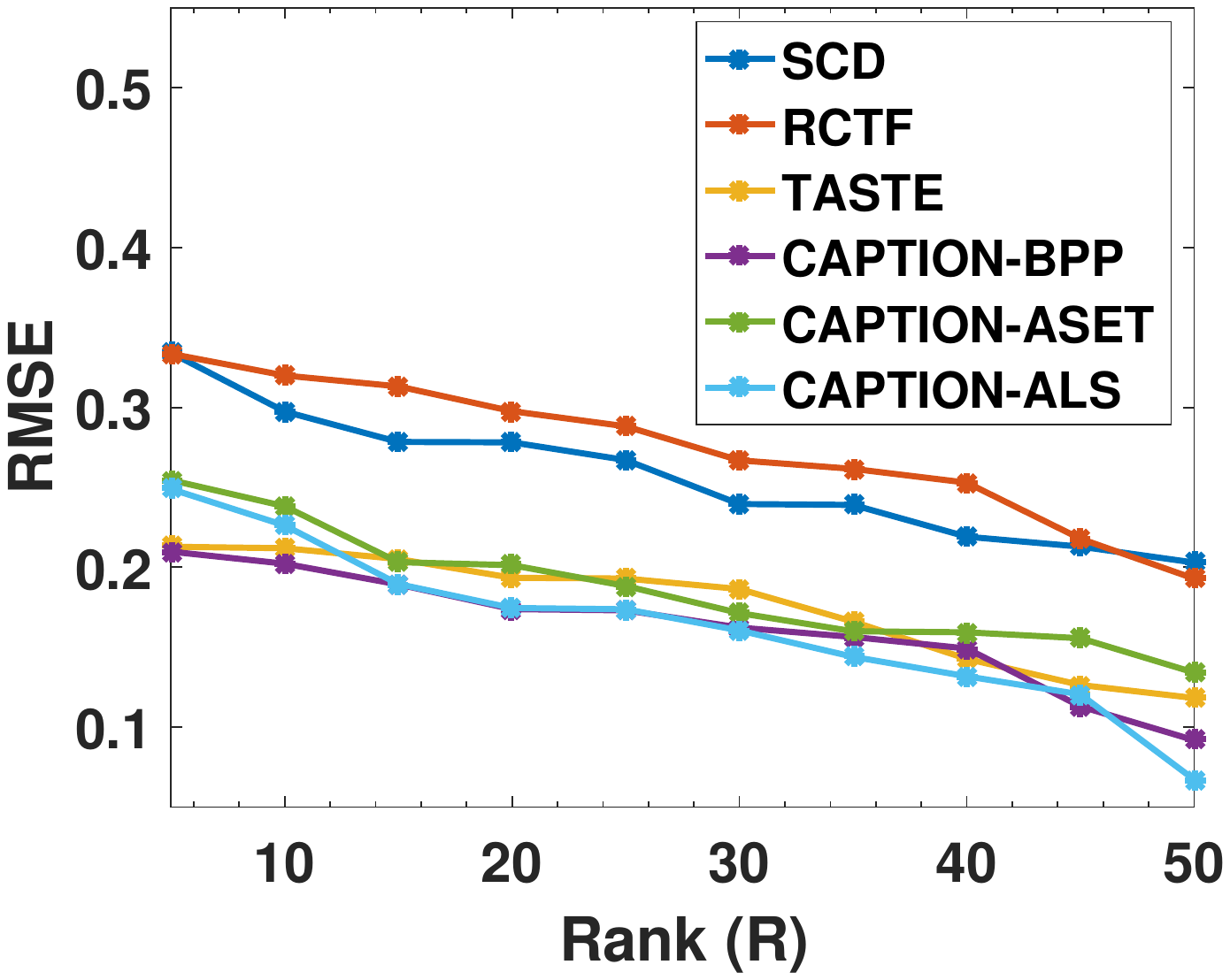}
		\includegraphics[clip,trim=2.8cm 9.5cm 4cm 6cm,width=0.30\textwidth]{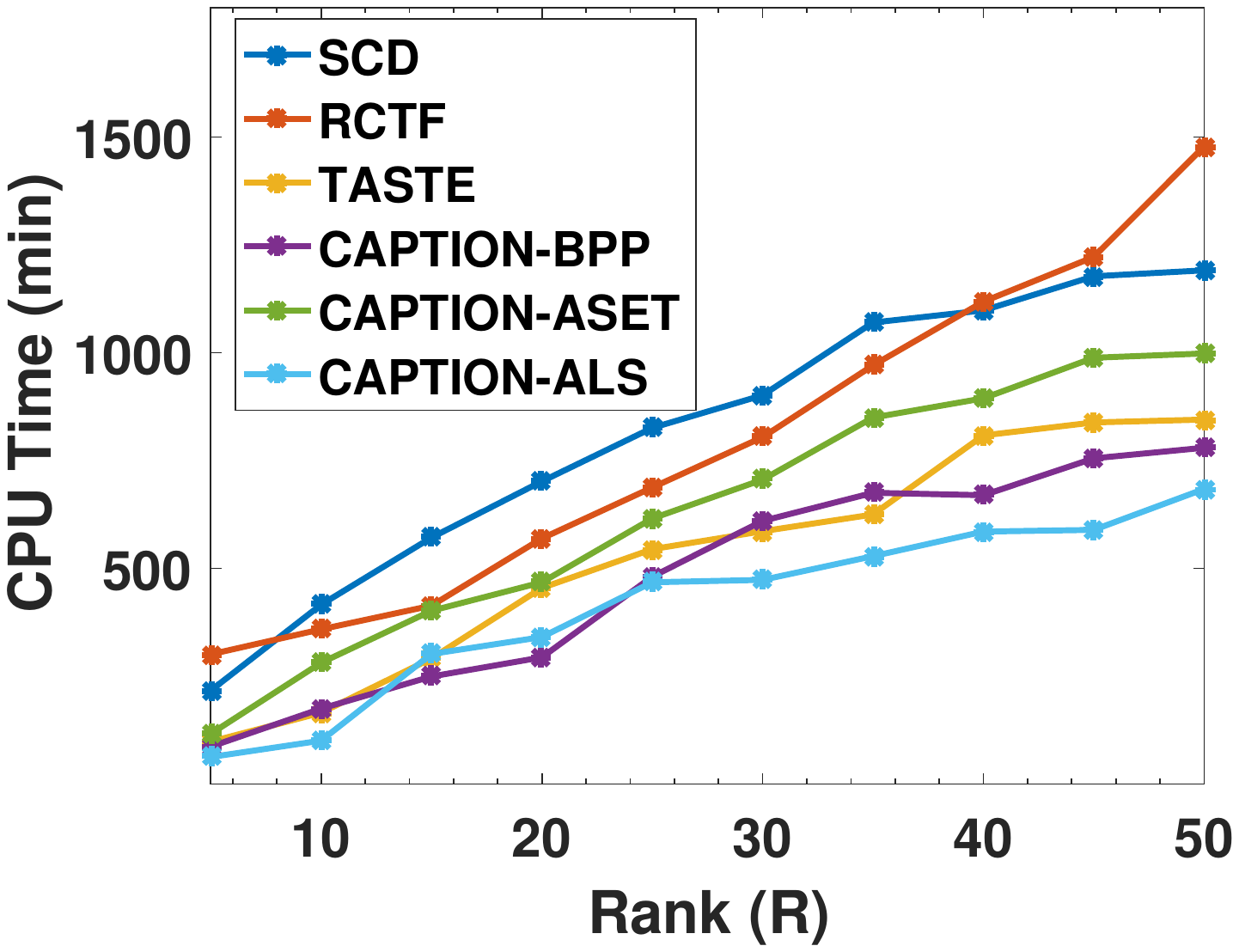}
		\caption{Scalability analysis of \captionmethod method using synthetic and three real world datasets. (a-b): Scalability analysis of synthetic data with respect to varying number of users $K$, where K ranges $10^3 - 10^6$. Stable performance (RMSE) in the range $1K - 50K$, for most methods. Baseline method SCD and RCTF runs out of memory. (c-d): Scalability analysis with respect to Movielens dataset (c-d), Adobe dataset (e-f) and CMS heath record data (g-h). \captionmethod-ALS significantly outperforms the other methods even when data is very sparse.} 
		\label{fig:scal1}
	\end{center}
	\vspace{-0.2in}
\end{figure*}

\textbf{Q3. \captionmethod at Work}

Centers for Medicare and Medicaid (CMS) data files were created to allow researchers to gain familiarity using Medicare claims data while protecting beneficiary privacy. The CMS data contains multiple files per year. The file contains synthesized data taken from a 5\% random sample of Medicare beneficiaries in 2008 and their claims from 2008 to 2010. We created CP tensor $\tensor{Y} $ from files that contain demographic characteristics (sex, race, state etc) of the beneficiary. The PARAFAC2 tensor $\tensor{Y}$ is created from files that has clinical variables such as chronic conditions. We decompose CP and PARAFAC2 tensor jointly with rank $R = 40$. 

\textbf{Model Interpretation}: We propose the following model interpretation towards the target challenge:
\begin{itemize}
    \item \textbf{Diagnosis feature factor:} Each column of factor matrix $\mathbf{V}$ represents a cluster and each row indicates a medical feature. Therefore an entry $\mathbf{V}(i, j)$   represents the membership of medical feature $i$ to the $j$.
    \item \textbf{Irregular dimension factor (visits):} The $r^{th}$ column of $\mathbf{U}_k$ presents the evolution of cluster $r$ for all clinical visits for patient $k$.
    \item \textbf{Coupled factor (patients):} The coupled latent factor $\mathbf{W} = diag(\mathbf{S}_k)$ and CP latent factor  $\mathbf{B}$, indicates the R communities of the patient.  
    \item \textbf{Similarity factor:} The factor  $\mathbf{C}$ indicates the importance of similarities membership which is responsible for creating clusters.
\end{itemize}
\textbf{Findings}: In order to illustrate the use of \captionmethod towards clustering, we focus our analysis on a subset of patients those are classified as medically complex. We observe that cluster number $32$ has most patients with respiratory disease. These are the patients with high utilization ($>50\%$), multiple clinical visits (avg 67) and high severity (death rate 8-10\%). Most of the patients share ICD-9 code 492 (Emphysema), 496 (Chronic airway obstruction) and 511 (Pleurisy). These codes are characterized by obstruction of airflow that interferes with normal breathing. It is observed that these conditions frequently co-exists in real world and are hard to treat. In Figure \ref{fig:cmsresult}, we provide time-frame captured by \captionmethod-ALS for patient no. $11426$ with chronic airway obstruction. In the patient’s heath timeline, it shows that on the first few weeks visit, there is no sign of obstruction. The subsequent visits reflects a change in the patient’s heath with a large number of diagnosis (day 84). Nevertheless, as shown in figure \ref{fig:scal1}, \captionmethod-ALS achieves significantly good performance in terms of RMSE ( avg 60\% better) and computation time ($3\times$ faster) by leveraging the coupling between CP and PARAFAC2 tensor data.
\begin{figure}
	\begin{center}
		\includegraphics[clip,trim=5cm 12cm 4cm 7cm,width=0.5\textwidth]{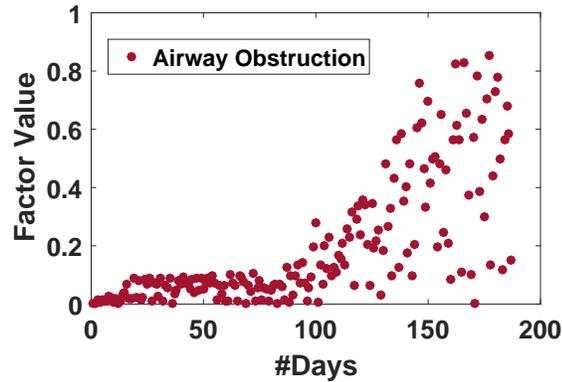}
		\caption{Visualization of time-frame captured of the patient no. $11426$ created by \captionmethod-ALS on
CMS dataset.}
		\label{fig:cmsresult}
	\end{center}
	\vspace{-0.2in}
\end{figure}
. 

\section{Conclusion}
This paper outlined our vision on exploring the coupling of CP and PARAFAC2 tensor decomposition using  various optimization methods (BPP, ASET and ALS) to improve accuracy of factorization. We propose \captionmethod, a framework that is able to offer interpretable results, and we provide a experimental analysis on synthetic as well as real world dataset. Extensive experiments with large synthetic dataset have demonstrated that the proposed method is capable of handling larger dataset for coupling  for which most of the baselines are not able to performs due to lack of memory.

Furthermore, this paper outlines a set of interesting future research directions:
\begin{itemize}
    \item How can we couple one of auxiliary tensor with irregular mode of PARAFAC2 tensor to obtain better approximation?
    \item What other constraints, other than non-negative, for the \captionmethod are well suited for various application and have potential to offer more accurate results? 
    \item How can we use coupling for incremental tensor data?
\end{itemize}

\vspace{0.5in}

\noindent\fbox{%
    \parbox{\textwidth}{%
       Chapter based on material published in ASONAM 2020 \cite{guiral2020c}.
    }%
}

\chapter{Niche Detection in User Content Consumption Data}
\label{ch:7}
\begin{mdframed}[backgroundcolor=Orange!20,linewidth=1pt,  topline=true,  rightline=true, leftline=true]
{\em "How do we know what types of content a certain market likes and prioritizes?Given a particular type of content (e.g “food related” videos), which markets find it the most attractive?”}
\end{mdframed}

Explainable machine learning methods have attracted increased interest in recent years.  In this work, we pose and study the \emph{niche detection} problem, which imposes an explainable lens on the classical problem of co-clustering interactions across two modes.  In the niche detection problem, our goal is to identify \emph{niches}, or co-clusters with node-attribute oriented explanations.  Niche detection is applicable to many social content consumption scenarios, where an end goal is to describe and distill high-level insights about user-content associations: not only that certain users like certain types of content, but rather the \emph{types} of users and content, explained via node attributes.  Some examples are an e-commerce platform with who-buys-what interactions and user and product attributes, or a mobile call platform with who-calls-whom interactions and user attributes.  Discovering and characterizing niches has powerful implications for user behavior understanding, as well as marketing and targeted content production. Unlike prior works, ours focuses on the intersection of explainable methods and co-clustering.  First, we formalize the niche detection problem and discuss preliminaries.  Next, we design an end-to-end framework, \ned, which operates in two steps: \emph{discovering} co-clusters of user behaviors based on interaction densities, and \emph{explaining} them using attributes of involved nodes.  Finally, we show experimental results on several public datasets, as well as a large-scale industrial dataset from Snapchat, demonstrating that \ned improves in both co-clustering ($\approx 20\%$ accuracy) and explanation-related objectives ($\approx 12\%$ average precision) compared to state-of-the-art methods.
The content of this chapter is adapted from the following paper:

{\em Ekta Gujral, Neil Shaw, Leonardo Neves, and Evangelos E. Papalexakis, “NED: Niche Detection in User Content Consumption Data". The content of this chapter was under blind peer review at the time of thesis submission.}

\section{Introduction}
\begin{figure}[t!]
	\begin{center}
	\includegraphics[clip, trim=0cm 2cm 0cm 4cm, width  = 0.7\textwidth]{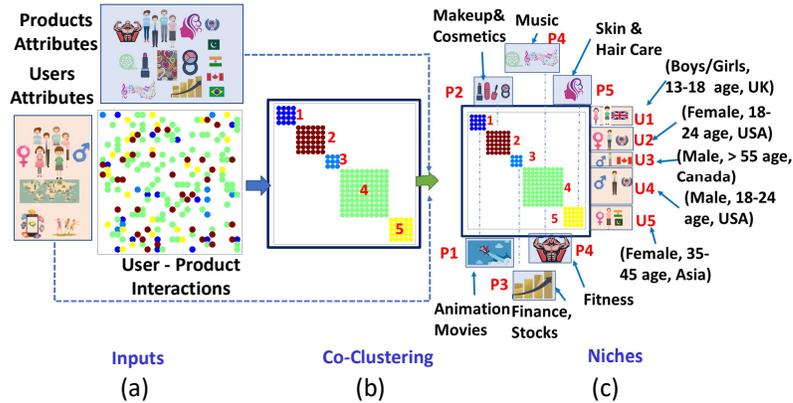}
		\caption{Our niche detection framework, \ned: (a) takes as inputs user-content interactions, user attributes, and content attributes, (b) mines coherent co-clusters from interaction data, and (c) outputs niches, or co-clusters imbued with concise, user and content attribute-oriented explanations.}
		\label{fig:nicheframework}
	\end{center}
	\vspace{-0.3in}
\end{figure}
Developing features that attract and appeal to customers is critical to companies, as it determines their success and revenue \cite{PMARCA}.  Recently, platforms that create, serve, and curate content such as Snapchat, Netflix, and Youtube use data-driven approaches to attract and maintain a large and diverse userbase. This approach has successfully promoted mass engagement and user traction on popular content via recommendation systems: the larger the number of users that interact with a piece of content, the more that piece of content tends to be promoted.  While this model is successful, engineering the process of content creation to generate this level of appeal by automating the process of identifying market \emph{niches}, is quite challenging. 

Specifically, in creating content, it is beneficial to understand the attributes that make a particular type of content (e.g food-related videos) attractive to specific markets that might be underserved by the platform (e.g. 18-24 year old women in Australia).  Thus, to improve targeting and to create and promote content that will likely better retain, engage, and satisfy target audiences, we propose a method for self-explaining \emph{niche detection} in user content consumption data. Our work characterizes niches as outstanding co-clusters in user-content interaction graph data, imbued with user and content-oriented attributed co-cluster explanations which designate audience and content types.  At its heart, our work addresses the scenario: \emph{Consider that we have a rich user-liked-content interaction graph, and nodal attributes describing the users (e.g. user profile, demographics information, device types) and content (e.g. creator profile,  category), how can we explain and make sense of it?}\hide{shall we specify ``it''? Is it the patterns in the data, the data itself?}

Our work lies at the intersection of co-clustering and explainable machine learning, though the problem we aim to solve is one that other works do not. Co-clustering is a method that aims to simultaneously cluster both users and content (interchangeably, products) simultaneously so that users and content can be organized into homogeneous blocks. This approach is helpful to partially answer our research question. In the literature, many co-clustering methods \cite{hartigan1972direct,govaert2013co,banerjee2007generalized} have been proposed. 
For instance, \cite{dhillon2003information, banerjee2007generalized} take information theoretic approaches to derive co-clusters by maximizing preserved mutual information and entropy between users and items.  More recently, \cite{xie2016unsupervised,xu2019deep} utilize deep autoencoders to generate low-dimensional representations for users and products, and employ Gaussian Mixture Models to infer the co-clusters. Several works have also employed Non-Negative Matrix Tri-Factorization (NMTF) \cite{gu2009co,salah2018word,nie2017learning}, turning the co-clustering problem into a matrix approximation problem \cite{han2017orthogonal}. 
Although these methods can be utilized to identify outstanding co-clusters, they are inherently limited in their capacity to \emph{explain} the interactions.  This is primarily because they cannot leverage user or item attributes to explain the discovered co-clusters, and thus have difficulty summarizing interactions between diverse user and item groups concisely.  Explainability needs are paramount for the task we aim to solve in practice: ultimately, any discovered data-driven insights must be acted upon by creators to dictate a content creation strategy.  Moreover, while explainability has become a focal point of recent works in machine learning \cite{friedman2001greedy, ke2017lightgbm}, there is no prior work on imbuing co-clusters with explanations, which is an important facet of the niche detection problem we propose in this work.

Summarily, addressing the niche detection problem has clearly high value for content strategists and platforms, but it poses several notable challenges.  Firstly, it requires powerful and performant, large-scale co-clustering, to discover coherent and outstanding statistical regularities in user-content interaction data.  Secondly, it requires a component to leverage nodal attributes available on users and content to concisely explain the associations between users and content which a co-cluster is characterized by, in a fashion interpretable to humans who must act as decision-makers on the basis of the explanations.  The output of such a method may deliver informative insights from crude user-content interaction data, such as ``female users generally enjoy skin-care related content more than others'' or ``18-24 male users enjoy bodybuilding content more than others,'' which can be acted upon and guide creators and strategists.  More importantly than discovering popular or globally intuitive patterns, the methodology of a successful niche detection framework offers the promise of enabling the discovery of smaller, underserved and non-apparent niches which can inform highly localized and tailored content solutions for a subgroup.  In this work, we formally pose the niche detection problem, and propose a framework, \ned, to tackle it by carefully designing the two components: Figure \ref{fig:nicheframework} illustrates the high-level process.  We further conduct extensive experiments demonstrating \ned's success in discovering coherent niches, and outperformance of alternative approaches.  
\begin{table}[t!]
	\small
	\begin{center}
		\begin{tabular}{clc}
			\hline
			\textbf{Symbols} & \textbf{Definition} \\ 
			\hline
			\hline
			$\mathbf{X}$, $ \mathbf{x} $, $ x$ & a matrix, column vector and scalar\\
			$\mathbf{A},\mathbf{B},\mathbf{S}$ & user cluster, product cluster and summary matrix \\
			$\mathbf{U}_f,\mathbf{P}_f$ & user attributes matrix, product attributes matrix \\
			$\mathbf{U}_{p_f},\mathbf{P}_{u_f}$ & user to product attributes matrix, product to user attributes matrix \\
			$I$, $J$, $F_1$, $F_2$& \#users, \#products, \#user features, \#product features,\\
		 $M$, $N$& No of user clusters, No of product clusters\\
			\hline
		\end{tabular}
		\caption{Table of symbols and their descriptions}
		\label{tablened:t2}
		\vspace{-0.2in}
	\end{center}
\end{table}

In short, our contributions are: 
 \begin{itemize}
 \item \textbf{Novel Problem Formulation}: We formally pose the niche detection problem for user behavior modeling. The problem is guided by two sub-problems: discovering co-clusters of user behaviors based on interaction densities, and explaining them using attributes of involved nodes via feature learning.
\item \textbf{Novel Algorithm}: We propose an intuitive, simple, efficient, and effective method, \ned, a niche detection framework, which proposes mutual information-inspired variant of NMTF for co-clustering, and a mutual information-inspired solution for explainable feature selection. To our knowledge, \ned is the first approach at explainable co-clustering.  Moreover, to evaluate niche quality, we propose a compression-based score to bridge the detected co-clusters and the feature explanations.
\item \textbf{Extensive Experiments}: We evaluate \ned on multiple synthetic and publicly available real-world datasets, as well as a private, large-scale sparse dataset with $500K$ users, $2500$ products and $>5M$ interactions from Snapchat, showing  $\approx 14\%$ accuracy improvement against state-of-the-art co-cluster approaches and $\approx20\%$ average precision improvement in explanation quality. 
 \end{itemize}
 Our implementation and small synthetic data is publicly available\footnote{\nedcodeurl}.

\section{Related work}
\label{sec:relatedwork}
Although several co-clustering and explainable machine learning methods have been previously developed, none of them produce co-clusters with node-attribute oriented explanations as ours does. We thus discuss tangent prior work in co-clustering and explainable machine learning.

\subsection{Co-clustering}
Most prior clustering literature has focused on one-sided clustering algorithms like $k$-means and its parameter-free variants \cite{fard2020deep,kumari2016anomaly}, spectral \cite{zhu2020spectral,gujral2019hacd}, and probabilistic  clustering \cite{rigon2020generalized,zhang2020novel}. Our problem deals with simultaneous clustering of rows and columns, known as bi(dimensional)-, co-, or block clustering \cite{cheng2000biclustering,mirkin2013mathematical} that can be categorized into following three main categories: information-theoretic, decomposition-based, and neural methods, which we discuss below.

\cite{dhillon2003information} introduced a co-clustering that utilizes a lossy coding scheme to co-cluster a two dimensional joint probability distribution, and it requires the number of clusters as input.  \cite{gao2014fast} proposes a parameter-free and a fast information-theoretic agglomerative co-clustering method.  \cite{whang-cikm2017a} proposed a co-clustering method that allow rows and column clusters to overlap with each other. \cite{cheng2016hicc} developed a hierarchical structure for rows and columns with a minimum number of leaf clusters to realize co-clustering.  Most recently, \cite{role2018coclust} proposed improvements to \cite{dhillon2003information}.   

\cite{gu2009co} proposed two regularization terms to use the geometric structure data of data graph and feature graph separately when using semi-NMTF decomposition. \cite{vcopar2019fast}  proposed a fast approach to constrain factor matrices to be cluster indicator matrices.  \cite{shao2017synchronization} proposed a co-clustering method from the perspective of dynamical synchronization.  \cite{nie2017learning} proposed a method for learning a bipartite graph with explicit linked components by imposing a rank constraint on the Laplacian matrix. Most recently, \cite{salah2018word} proposed a word co-occurrence NMTF method that leverages mutual information for co-clustering words and documents.

\cite{xie2016unsupervised} used the deep auto-encoder to map data to a low-dimensional space and then minimize the KL difference between cluster assignments and an auxiliary distribution distribution. Most recently, \cite{xu2019deep} proposed a deep co-clustering model, DeepCC, which used a deep auto-encoder to generate low-dimensional representations for rows and columns and used GMM framework for cluster assignment prediction.

\emph{Unlike our work, none of these methods can explain associations in the produced co-clusters.  Moreover, \ned's co-clustering component outperforms the previously proposed works in NMI and accuracy on both simulated and real datasets.}

\subsection{Explainable Machine Learning}
In the last decade, explainable machine learning has gained considerable attention. Prior work shows that the ability of intelligent systems to justify their choices is important for their successful use; when users do not understand the decisions of an intelligent system, they become cynical and unwilling to use it, despite improved performance \cite{gregor1999explanations, kayande2009incorporating}. Several works aim to explain complex predictive models with simple rule-based explanations; rule-based explanations \cite{martens2007comprehensible,carter2007investigation,duling2017use,martens2014explaining} and deep learning based explanations \cite{ribeiro2016should} have been a popular approach to explain black-box models. However, these methods are often tailored to the specifics of the model which is being explained.

Recently, another line of work has focused on explaining predictions of complex models in terms of the importance of features in the classification. \cite{strumbelj2010efficient} proposed an ablation-style approach which removes all possible subsets of features and evaluates changes in predictions. However, such combinatorial approaches are computationally expensive. \cite{lundberg2017unified} improved the computations described in \cite{strumbelj2010efficient} and proposed efficient method to interpret model prediction using weights on features, representing their relative contribution on the prediction using Shapley values, and effectively applying to any downstream classification model.  
Tree-based ensemble methods \cite{lundberg2020local} such as random forests \cite{lundberg2019explainable} and gradient boosted trees \cite{athanasiou2020explainable,ke2017lightgbm} achieve state-of-the-art performance in many domains. They have a successful history of use in machine learning, and new high-performance implementations are an active area of research \cite{ke2017lightgbm,rosansky1990association,chen2016xgboost}. Such models often outperform standard deep models \cite{shrikumar2017learning} on datasets where features are individually meaningful and do not have strong temporal or spatial structures \cite{chen2016xgboost}. Most recently, \cite{ke2017lightgbm} proposed LightGBM to enhance the performance of tree-based models. 

\emph{Unlike our work, none of these methods aim to explain co-clusters, and rather focus on traditional supervised learning.  Moreover, \ned's explainability component outperforms feature explanations from the recent state-of-the-art LightGBM in terms of stability score, average precision, and compression ratio.} 

\section{Problem Formulation}
 \label{sec:prelim}
Table \ref{tablened:t2} contains the symbols used throughout the chapter. Before we conceptualize the niche detection problem that our work tackles, we define certain terms necessary to set up the problem and formally define the problem statement.
\subsection{Problem Context}
We assume input data that consists of:
\begin{compactitem}
\item A set of users and products (interchangeably, contents) represented by $\mathcal{U} = \{U_1,\dots,U_I\}$ and $\mathcal{P}= \{P_1,\dots,P_J\}$ respectively. The relationship between entities consists of: user-product interactions containing tuple of the form $(U_i,P_j,x_{ij})$, where $U_i \in \mathcal{U}$ and $P_i \in \mathcal{P}$ and $x_{ij} \in \{0,1\}$, and arranged in the form of binary matrix $\mathbf{X} \in \mathbb{R}^{I \times J}$, and zero entries indicate absent interactions.
\item A set of user features, arranged in a binary matrix $\mathbf{U}_f \in \mathbb{R}^{I \times F_1}$, where $F_1$ represents total number of user features, and zero entries indicate absent features.
\item A set of product (or content) features stored in a binary matrix $\mathbf{P}_f \in \mathbb{R}^{J \times F_2}$, where $F_2$ represents total number of product features, and zero entries indicate absent features.
\end{compactitem}
The eventual goal in solving the niche detection problem is the capacity to discover co-clusters with user and product attribute-oriented explanations. The problem can be decomposed into two subgoals: first, identifying quality co-clusters, and second, explaining the co-clusters via careful node feature selection. 
\subsection{Problem Statement}
Next, we introduce a few basic definitions necessary to define our novel niche detection problem.
\begin{definition}[Co-cluster]
Formally, given a $I \times J$ data matrix $\mathbf{X}$, a co-clustering can be defined by two maps $\rho$ and $\gamma$, which groups users (or rows) and products (or columns) of $\mathbf{X}$ into $M$ and $N$ disjoint or hard clusters respectively. Specifically, 
\begin{align}
    \rho &: \{U_0, U_1, ..., U_I\} \rightarrow \{\hat{U}_1, \hat{U}_2, ..., \hat{U}_M\} \\
    \gamma &: \{P_0, P_1, ..., P_J\} \rightarrow \{\hat{P}_1, \hat{P}_2, ..., \hat{P}_N\}
\end{align}
where $\rho(U) = \hat{U}$ denotes that user $U$ is in user cluster $\hat{U}$, and $\gamma(P) = \hat{P}$ denotes that product $P$ is in product cluster $\hat{P}$.  A co-cluster is an interaction block defined by $\{\hat{U}_m, \hat{P}_n\}$ for some $m < M, n < N$.
\label{dfn:coclustering}
\end{definition}
\begin{definition}[Co-cluster Explanation]
A co-cluster explanation predicates the existence of binary feature matrices for users and products, $\mathbf{U}_f$, and $\mathbf{P}_f$ respectively.  Let $sub(\mathbf{U}_f)$ and $sub(\mathbf{P}_f)$ respectively denote some subset of the $F_1$ user and $F_2$ product features, without loss of generality. An explanation for a co-cluster is indicated by a suitable pair $(sub(\mathbf{U}_f), sub(\mathbf{P}_f))$, which are associated with the co-cluster. 
 \end{definition}
\begin{definition}[Niche]
A niche is the pairing of a co-cluster with a co-cluster explanation.  Formally, a niche is indicated by 
\begin{align}
\label{eqn:niche}
\{\hat{U}_m, \hat{P}_n, sub(\mathbf{U}_f), sub(\mathbf{P}_f)\}
\end{align}
 
where $\hat{U}_m$ is $m^{th}$ user cluster, $\hat{P}_n$ is the $n^{th}$ product cluster, and $sub(\mathbf{U}_f)$ and $sub(\mathbf{P}_f)$   are subsets of user and product features. 
\end{definition}

We now have all the necessary definitions to formally define our problem. Hence, we pose the following:
\vspace{0.1in}

\begin{mdframed}[linecolor=red!60!black,backgroundcolor=gray!30,linewidth=1pt,    topline=true,rightline=true, leftline=true] 
\textbf{Given} (a) a set of users $\mathcal{U}$ and products $\mathcal{P}$ with user-product interactions $\mathbf{X}$, (b) user-feature relationships $\mathbf{U_f}$, and (c) product-feature relationships $\mathbf{P_f}$; \\
\textbf{Design} a framework to identify one or more coherent niches of the form in Equ. \ref{eqn:niche}.
\end{mdframed}
Note that our framework can apply to any rows/columns but we use users/products for ease of explanation.

\section{Proposed Method: NED}
\label{sec:method}
Our proposed approach relies on two successive steps. First, the \emph{co-clustering step} that co-clusters the user-product interaction data matrix and second, the \emph{explaining step}, which focuses on learning the set of features that suitably characterize high-quality co-clusters discovered in the previous step. A two-step method is beneficial for two reasons: first, the \text{\em driving} features of the generative process may be missing in the observed data in lieu of strong correlates, in which case we may not want to try and infer a misleading process directly from these correlates; its sufficient in our case to try to identify strong correlative, instead of non-causal relationships. Second, with co-clustering independent of the features, we avoid \text{\em missed cluster} scenarios where a joint generative process may not identify a co-cluster simply because it is composed of a mix of features that the process is not sufficiently capable of capturing. By decoupling clustering and explanation, we prioritize recall on observed interaction clusters, acknowledging that some may be hard to adequately explain, but exist nonetheless. Although any co-clustering algorithm can be used in solving the niche detection problem, we propose a variant of NMTF guided by mutual information as one suitable solution. In fact, through extensive experiments (see Section \ref{sec:niche}), we show that our proposed method is not only intuitive, but achieves state-of-the-art co-clustering performance compared to previously proposed methods in literature.
\subsection{Step 1: Co-Clustering}
\label{sec:step1}
We frame the co-clustering problem (Defn. \ref{dfn:coclustering}) in an NMTF-inspired formulation: we seek a decomposition of the user-product matrix $\mathbf{X} \in \mathbb{R}^{I \times J}$ into three low dimensional non-negative \cite{long2005co,salah2018word} latent factor matrices i.e. $\mathbf{A} \in \mathbb{R}_{+}^{I \times M}$ is user-clustering matrix, $\mathbf{B} \in \mathbb{R}_{+}^{J \times N}$ is product-clustering matrix and $\mathbf{S} \in \mathbb{R}_{+}^{M \times N}$ provides the summary of $\mathbf{X}$ due to co-clustering. The values $M$ and $N$ represent the number of user and product clusters, respectively ($M << \min(I,J)$ and $N << \min(I,J)$). The optimization problem can be represented as:
\begin{equation}
\label{eq:lossfunction}
\begin{aligned}
     \mathcal{L}(\mathbf{X},\mathbf{A},\mathbf{B},\mathbf{S}) = & \min_{\mathbf{A},\mathbf{B},\mathbf{S}}||\mathbf{X} - \mathbf{A}\mathbf{S}\mathbf{B}^T ||^F_2 \\
     & \quad  s. t. \quad  \mathbf{A} \geq 0,\quad \mathbf{B} \geq 0,\quad \mathbf{S} \geq 0
\end{aligned}
\end{equation}
In this way, users and products are clustered simultaneously while satisfying constraints, keeping a good low-rank approximation. 
\subsubsection{Factor Inference}
Here, we derive an alternating optimization algorithm that infer the latent factor matrices from the user-product interaction $\mathbf{X}$. The Equ. \ref{eq:lossfunction} can re-written as:
\begin{equation}
\label{eq:lossfunctionwrtA}
\begin{aligned}
      \mathcal{L} = & \frac{1}{2} Tr((\mathbf{X}-\mathbf{A}\mathbf{S}\mathbf{B}^T)(\mathbf{X}-\mathbf{A}\mathbf{S}\mathbf{B}^T)^T)\\
      &   = \frac{1}{2} = \frac{1}{2} Tr(\mathbf{X}\mathbf{X}^T-2\mathbf{X}\mathbf{B}\mathbf{S}^T\mathbf{A}^T+\mathbf{A}\mathbf{S}\mathbf{B}^T\mathbf{B}\mathbf{S}^T\mathbf{A}^T)
\end{aligned}
\end{equation}
Next, as in the bilateral k-means algorithm \cite{han2017bilateral}, we derive iterative update rules for $\mathbf{A}$ and $\mathbf{B}$ under the non-negativity constraints. The update rule of $\mathbf{A}$ is given as:
\hide{
Next, as in the bilateral k-means algorithm \cite{han2017bilateral}, we derive iterative update rules in order to minimize $\mathcal{L}$ under the non-negativity constraints on $\mathbf{A}$ and $\mathbf{B}$. The update rule of $\mathbf{A}$ is given as:}
\begin{equation}
\label{eq:lossfunction_A}
     \mathbf{A}_{im} = \begin{cases}
      1 & m  =  \argmax_i (\hat{\mathbf{A}}(i,:)) ,\quad \hat{\mathbf{A}}(i,:) = [\mathbf{X}\mathbf{B}\mathbf{S}^T](i,:) \\
      0             & \text{otherwise}
     \end{cases}
\end{equation}

There is only one element equal to $1$ at $m^{th}$ column and the rest are zeros in each $i^{th}$ row of $\mathbf{A}$. Similarly, $\mathbf{B}$ can be updated as:
\begin{equation}
\label{eq:lossfunction_B}
 \mathbf{B}_{jn} = \begin{cases}
      1 & n  =  \argmax_j (\hat{\mathbf{B}}(j,:)) ,\quad \hat{\mathbf{B}}(j,:) = [\mathbf{X}^T\mathbf{A}\mathbf{S}](j,:) \\
      0             & \text{otherwise}
     \end{cases}
\end{equation}
As we impose constraints on user ($\mathbf{A}$) and product ($\mathbf{B}$) latent factors, the summary matrix $\mathbf{S}$ could be noisy since it is not optimized with any given criterion. It could represent an unclear structure (due to data noise, and high overlap among the categories represented by the clusters) where either there is no correlation between clusters, or every cluster is associated with other clusters. In order to mitigate the above issue, we compute the summary matrix $\mathbf{S}$ as positive point-wise mutual information (PPMI) matrix, which can extract clearer co-clusters and exploits its background knowledge for further convergence of the algorithm. The PMI is a theoretical measure of information widespread used to measure the association between pairs of results that arise from discrete random variables. In the literature \cite{newman2009external}, it is shown that this measure is highly co-related to conditional probability and resembles human judgment. Mathematically, the PMI between two random variable $u$ and $v$ is given by:
\begin{equation}
    PMI(u,v) = \log \big( \frac{p(u,v)}{p(u)p(v)} \big)
\end{equation}

Thus, we compute $\mathbf{S} \in \mathbb{R}^{M \times N}_{+}$ as follows:
\begin{equation}
\label{eq:lossfunction_S}
     \mathbf{S} = \log \Big(\frac{\mathbf{A}^T\mathbf{X}\mathbf{B}}{\sum_{j=1}^J \mathbf{A}^T\mathbf{X} \sum_{i=1}^I \mathbf{X}\mathbf{B}}\Big)  \quad  s.t. \quad   \mathbf{S} \geq 0
\end{equation}
So, each element of $\mathbf{S}_{m,n}$ represents the PMI between a user cluster $\mathbf{A}_m$ and a product cluster $\mathbf{B}_n$. Each positive value of the matrix has surely to be considered in the identification of co-cluster.

\textbf{Why PMI?}: The intuition behind computing $\mathbf{S}$ using PMI comes from the observation that sometimes, in the co-clustering of user-product data, a user cluster is associated with another product cluster that does not exist. This leads to the idea of an update rule based on co-occurrence between users and products for all cluster pairs in the given data. PMI is an information-theoretic approach that measures how often two clusters ($\mathbf{A}_m,\mathbf{B}_n$) occur as compared with what we expect if they were independent. The numerator of PMI informs us how often we observed the two clusters together in user-product context consumption. The denominator informs us how often we would anticipate both to co-cluster, assuming they are independent clusters. Thus, the ratio provides us an estimation of how much more the two clusters co-cluster than we anticipate by chance. Our choice for PMI is encouraged by \cite{su2006using}. Using PMI, we encourage users of similar interactions with products to have closer representation in latent space. For example, when representing co-clusters, we can easily think of positive relationship (e.g. “Female” and “Nurse”) but find it much harder to relate negative ones (“Female” and “Carpenter”). Therefore, we focus on positive point-wise mutual information. Next, Equ. (\ref{eq:lossfunction_A}) and (\ref{eq:lossfunction_B}) can be re-written as:
\begin{equation}
\label{eq:lossfunction_A1}
     \mathbf{A} = \mathbf{X}\mathbf{B} \log\Big( \frac{\mathbf{A}^T\mathbf{X}\mathbf{B}}{\sum_{i=1}^I \mathbf{A}^T\mathbf{X} \sum_{i=1}^I \mathbf{X}\mathbf{B}}\Big)^T
\end{equation}
\begin{equation}
\label{eq:lossfunction_B1}
     \mathbf{B} = \mathbf{X}^T\mathbf{A} \log\Big( \frac{\mathbf{A}^T\mathbf{X}\mathbf{B}}{\sum_{j=1}^J \mathbf{A}^T\mathbf{X} \sum_{i=1}^I \mathbf{X}\mathbf{B}}\Big)
\end{equation}
Now, we assign each user/product to a single cluster by finding the cluster with maximum membership. This translates to finding the maximum column index for each row.
\subsection{Step 2: Co-Cluster Explanation}
\label{sec:step2}
After discovering outstanding co-clusters, we next aim to identify the user and product feature subsets that best explain the co-clusters. The process involves two steps: auxiliary feature matrix creation to infer the user preferences over product features to get insights about implicit user similarities, and feature selection using point wise mutual information to compute the association between features and user/product cluster centroids. Through experiments we show that the auxiliary feature matrix helps to improve the explainability of our proposed method (see Section \ref{sec:auximp}). The quest for similarities plays an important role in co-cluster analysis \cite{sun2011pathsim}. In quadripartite graphs (see Figure \ref{fig:Quadripartite}), one can derive several semantics on similarity by considering different paths in a graph. Upon deriving these, we have all information to learn important co-cluster features: the final suitability score for each feature is captured as a linear combination of both steps. We briefly explain each step as follow: 
\subsubsection{Auxiliary Feature Matrix}
\label{sec:aux}
We have the user and product feature vectors ($\mathbf{U}_f \in \mathbb{R}^{I \times F_1}$ and $\mathbf{P}_f \in \mathbb{R}^{J \times F_2} $) which are independent and do not infer any user preferences for the features that appear in products and vice-versa. This could lead to undermining the user/product similarities for feature learning. To overcome this issue, we compute the user's proximity to a product feature through a meta-path as an indication of the user’s possible ``preference'' towards the product feature and vice versa. The meta-path \cite{sun2011pathsim} is a powerful mechanism for a user to select an appropriate similarity semantics to learn from a set of examples of similar objects. Formally, a meta-path can be defined as:
\begin{definition}[Meta-path]
A meta-path is a sequence of relations $\mathcal{R}$ between object types $\mathcal{O}$, which defines a new composite relation between its starting type and ending type. It is denoted in the form of $\mathcal{O}_1 \xrightarrow[]{\mathcal{R}_1} \mathcal{O}_2 \xrightarrow[]{\mathcal{R}_2}  \dots \xrightarrow[]{\mathcal{R}_{m-1}} \mathcal{O}_m$, which defines a composite relation  between $\mathcal{R}_1 \circ \mathcal{R}_2 \circ \dots  \circ \mathcal{R}_m $ between types $\mathcal{O}_1$ and $\mathcal{O}_m$, where $\circ$ denotes the composition operator on relation. For example, a meta-path (Figure \ref{fig:Quadripartite}, purple path) user (male) $\rightarrow$ product (movie) $\rightarrow$ feature (action) $\rightarrow$ product (movie) $\rightarrow$ feature (comedy) (denoted as UPFPF) indicates user preferences based on content similarities.
\end{definition}

The process of auxiliary feature matrix creation is adapted from \cite{koutra2017pnp, sun2011pathsim}. In \cite{koutra2017pnp}, movie preferences are learned by leveraging user similarities defined through different types of meta paths or relations to successfully design a new movie. Here, the auxiliary feature matrix helps to leverage both explicit features and user/product similarities via a graph-theoretic approach. For user auxiliary feature matrix as shown in Figure \ref{fig:Quadripartite}, the red path $\mathbf{U}_{p_f}^{UPF}$ (i.e., starting from a user and ending on a product feature via a product) finds the preferences for the product features for each user based on its interactions. The blue path $\mathbf{U}_{p_f}^{UPUPF}$ finds user preferences for the product features based on user-user similarity, and the purple path $\mathbf{U}_{p_f}^{UPFPF}$ finds user preferences based on product-product (content) similarity. 

\begin{figure}[t!]
	\vspace{-0.1in}
	\begin{center}
	\includegraphics[clip, trim=1cm 2cm 2cm 2cm, width  = 0.65\textwidth]{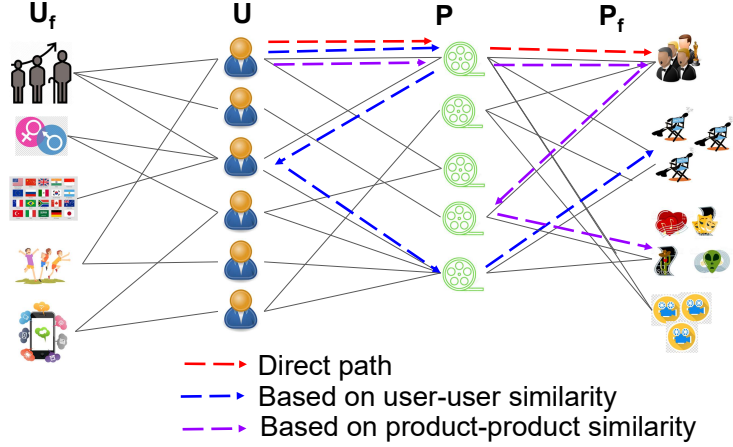}
	\caption{Quadripartite graph of the users $\mathbf{U} \in \mathbb{R}^{I}$, Product $\mathbf{P} \in \mathbb{R}^{J}$ and the user features $\mathbf{U}_f \in \mathbb{R}^{I \times F_1}$ and product features $\mathbf{P}_f \in \mathbb{R}^{J \times F_2}$. The resultant user  auxiliary feature matrices is $\mathbf{U}_{p_f} \in \mathbb{R}^{I \times F_2}$. \hide{V: can we make this figure larger? There is a lot of whitespace that probably can be covered without the figure taking up any extra space (but also keeping the aspect ratio)}}
	\label{fig:Quadripartite}
	\end{center}
	\vspace{-0.1in}
\end{figure}
For the final weighted  matrix $\mathbf{U}_{p_f} \in \mathbb{R}^{I \times F_2}$, which represents user $\mathbf{U}$ preference over product features $\mathbf{P}_f$ is a linear combination of the weighted $\mathbf{U}_{p_f}$ over the three predefined meta-path types:
\begin{equation}
    \mathbf{U}_{p_f} = \alpha \mathbf{U}_{p_f}^{UPF} + \beta \mathbf{U}_{p_f}^{UPUPF} + \gamma \mathbf{U}_{p_f}^{UPFPF}
    \label{eqn:metapath_upf}
\end{equation}

This linear combination helps to smooth the information in case of sparse direct user-product preferences. Similarly, we compute weighted auxiliary matrix for product features as:
\hide{to obtain $\mathbf{P}_{u_f} \in \mathbb{R}^{J \times F_1}$ (2nd figure), red path  $\mathbf{P}_{u_f}^{1}$ finds preferences for the user features for each product based its interaction, blue path $\mathbf{P}_{u_f}^{2}$ finds product preferences for the user features based product-product similarity and purple path $\mathbf{P}_{u_f}^{3}$ finds product preferences based on  user-user similarity. \neil{should explain these choices for parameters}.}
\begin{equation}
    \mathbf{P}_{u_f} = \alpha \mathbf{P}_{u_f}^{PUF} + \beta \mathbf{P}_{u_f}^{PUPUF} + \gamma \mathbf{P}_{u_f}^{PUFUF}
    \label{eqn:metapath_puf}
\end{equation}
where $\alpha, \beta, \gamma \in \mathbb{R}^{+}$ are combination parameters satisfying criterion $\alpha + \beta + \gamma = 1$. We set $\alpha = 0.5$, $\beta = 0.25$ and $\gamma = 0.25$ in both cases. We give higher importance to direct user/product preferences ($\alpha$) and lower importance to 4-step path ($\beta, \gamma$) because it could 'obscure' the individual preferences by depending on 'similar' users/products. Section \ref{sec:parasens} shows sensitivity analysis for these parameters. 
\hide{Should we add (maybe in the beginning of this subsection or even here, what is the disadvantage of using $U_f$ and $P_f$ directly instead of those new matrices. I think this currently doesn't come across as strongly as we would want.}
\subsubsection{Feature Selection}
\label{sec:feature}
For feature selection, we propose a PMI-based approach that leverages the association between user and product cluster centroids. We compute information about how often we observed the certain features for each user cluster using $\mathbf{A}$ and $\mathbf{U}_f$ as $\frac{(\mathbf{A}^T\mathbf{U}_f)_{ij}}{\sum_{i=1}^I\sum_{j=1}^{J}(\mathbf{A}^T\mathbf{U}_f)_{ij} } \in \mathbb{R}^{M \times F_1} $. Next, we compute information about their association while independently drawn as $p(\mathbf{U}_f) = \frac{\sum_{j=1}^J U_f}{\sum_{i=1}^I\sum_{j=1}^{J}(\mathbf{A}^T\mathbf{U}_f)_{ij} }$ and $p(\mathbf{A}) = \frac{\sum_{i=1}^I A}{\sum_{i=1}^I\sum_{j=1}^{J}(\mathbf{A}^T\mathbf{U}_f)_{ij} }$. 

\hide{We specifically for user feature learning, obtain matrix $\mathbf{T}$ by multiplying user clustering matrix $\mathbf{A}$ and user feature matrix $\mathbf{U}_f$ as $\mathbf{T} = \mathbf{A}^T\mathbf{U}_f \in \mathbb{R}^{M \times F_1}$. Now, we compute information about how often we observed the certain features for each user cluster by using $p(\mathbf{T}) = \frac{T_{ij}}{\sum_{i=1}^I\sum_{j=1}^{J}T_{ij}} $. Next, we compute information about their association while independently drawn as $p(\mathbf{U}_f) = \frac{\sum_{j=1}^J U_f}{\sum_{i=1}^I\sum_{j=1}^{J}T_{ij}} $ and $p(\mathbf{A}) = \frac{\sum_{i=1}^I A}{\sum_{i=1}^I\sum_{j=1}^{J}T_{ij}}$.}

Finally, we compute PMI relation as:
\begin{equation}
\label{sec:pmi_feature}
  \mathbf{U}_{PMI}^1 =\log\frac{p( \mathbf{A}^T\mathbf{U}_f)}{p(\mathbf{U}_f)p(\mathbf{A})} = \frac{ \mathbf{A}^T\mathbf{U}_f * \sum_{i=1}^I\sum_{j=1}^{J}(\mathbf{A}^T\mathbf{U}_f)_{ij}}{\sum_{j=1}^J \mathbf{U}_f^\mathbf{T} * \sum_{i=1}^I \mathbf{A}} 
\end{equation}

where $\mathbf{U}_{PMI}^1 \in \mathbb{R}^{M \times F_1}$. Now, we compute PMI relation for user auxiliary features matrix $\mathbf{U}_{p_f}$ as:

\begin{equation}
\label{sec:pmi_aux}
  \mathbf{U}_{PMI}^2 = \log\frac{p( \mathbf{A}^T\mathbf{U}_{p_f})}{p(\mathbf{U}_{p_f})p(\mathbf{A})} \in \mathbb{R}^{M \times F_2}
\end{equation}

Similarly, we compute both PMIs ($\mathbf{P}_{PMI}^1 \in \mathbb{R}^{N \times F_2}$ and $\mathbf{P}_{PMI}^2 \in \mathbb{R}^{N \times F_1}$) for product features also. Due to space limitations and symmetry, we do not include derivations for them.

To select the most relevant user and product features for the niche, we linearly combine the PMIs associated with co-clustering (i.e. via summary matrix $\mathbf{S}$), attribute matrices (via. Equ. \ref{sec:pmi_feature} ), and auxiliary matrices (via. Equ. \ref{sec:pmi_aux}) as following:
\begin{equation}
\label{user}
    \textbf{For users:} \quad   \mathbf{e}^{u}  = \mathbf{U}_{PMI}^1  + \mathbf{S} * \mathbf{P}_{PMI}^2 \in  \mathbb{R}^{M \times F_1}
\end{equation}
\begin{equation}
\label{prod}
     \textbf{For products:} \quad  \mathbf{e}^{p} = \mathbf{P}_{PMI}^1  + \mathbf{S}^T * \mathbf{U}_{PMI}^2 \in \mathbb{R}^{N \times F_2}
\end{equation}
Finally, we choose the top $N$ highest values for each cluster (or each row) from $\mathbf{e}^{u}$ (Equ. \ref{user}) and $\mathbf{e}^{p}$ (Equ. \ref{prod}) to explain the niche.

\section{Experiments}
In this section, we aim to answer the following research questions:
\begin{itemize}
    \item \textbf{RQ1 Accuracy}: Can \ned outperform state-of-the-art alternatives at effectively capturing co-clusters?
    \item \textbf{RQ2 Explainability}: Can \ned help learn meaningful explanations of co-clusters, and thus better niches? 
    \item \textbf{RQ3 Scalability} How efficient and scalable is \ned with respect to the size of the input graphs?
 \end{itemize}
We discuss these after detailing our experimental setup.
\subsection{Datasets and Experiment Setup}
The details of the synthetic and the real data used for experiments are given in Table \ref{ned:dataset}.
\subsubsection{Synthetic Data} 
In order to fully control and evaluate the niches in our experiments, we generate synthetic data i.e. user-product engagement graph $\mathbf{X} \in \mathbb{R}^{I \times J}$, user attributes $\mathbf{U}_f \in \mathbb{R}^{I \times F_1}$ (e.g. age, gender, country, app engagement etc.) and product attributes $\mathbf{P}_f \in \mathbb{R}^{J \times F_2}$   (e.g. name, publisher name, country, category,subcategory etc.) as discussed in Supplementary Material.
\begin{table}[t]
	\centering
	\small
	\begin{tabular}{|c|c|c|c|c|}
    \hline
     	\multirow{2}{*}{{\bf Dataset}}&\multirow{2}{*}{{\bf \#users}} &\multirow{2}{*}{{\bf \#products}}& {\bf \#features }  & {\bf \#clusters}   \\ 
      &  & & {\bf  (users, products)}  & {\bf  (users, products)}   \\ 
     \hline
      Syn-I&$10K$&$1K$&$(22,43)$&($14,20$) \\
      Syn-II&$50K$&$5K$&$(22,55)$&($70,35$) \\
      Syn-III&$500K$&$10K$&$(22,63)$&($140,50$) \\
      Syn-IV &$1M$&$50K$&$(22,86)$&($280,70$) \\
      \hline
      Cora  &$3K$&$1.5K$&$-$&($7,-$) \\
      WebKB4   &$4K$&$1K$&$-$&($4,-$) \\
      MovieLens  &$6K$&$4K$&$(25,23)$&($-,20$) \\
      News20   &$19K$&$61K$&$-$&($20,-$) \\
      Caltech  &$2K$&$300$&$-$&($3,-$) \\
      Snapchat  &$500K$&$2.5K$&$(22,238)$&($-,-$) \\
      \hline
 	\end{tabular}
	\caption{Details for the datasets. ``-'' indicates unknown/unavailable public information.}
	\label{ned:dataset} 
	\vspace{-0.3in}
\end{table}
\subsubsection{Real Data}
We employ several real-world public datasets from different domains:  \text{\em{Cora}} is publications dataset,  \text{\em{WebKB4}} \footnote{\url{http://membres-lig.imag.fr/grimal/data.html}} consists of classified web page information, \text{\em{MovieLens}} \footnote{\url{https://grouplens.org/datasets/movielens/}}has data has 6000 users and 4000 movie rating information, \text{\em{News20}} \footnote{\url{http://qwone.com/~jason/20Newsgroups/}} is a collection of approximately 20,000 newsgroup documents, \text{\em{Caltech}} \cite{kumar2011co} is an image dataset. We also use private \text{\em{Snapchat}} dataset containing full-duration views between users and public feed contents from the Snapchat platform; we assume full views to indicate persisted interest in the consumed content, as in \cite{lamba2019modeling, kaghazgaran2020social}.
\subsection{Baseline Methods}
The two major components of \ned are in (a) the discovering of coherent co-clusters, and (b) their concise explanation via nodal attributes. Thus, we conduct experiments with two categories of baseline to evaluate each contribution separately.\hide{, demonstrating \ned's outperformance in both categories.}

\subsubsection{Co-clustering} We compare with recent, state-of-the-art co-clustering approaches:
\begin{itemize}[leftmargin=10pt]
    \item \textbf{NEO-CC} \cite{whang-cikm2017a}: A non-exhaustive overlapping co-clustering method based on the minimum sum-squared residue objective.   
    \item \textbf{DeepCC} \cite{xu2019deep}: A deep autoencoder based co-clustering method which employs a variant of Gaussian Mixture Model (GMM) to infer cluster assignments.
    \item \textbf{CoClusInfo} \cite{role2018coclust}: An information-theoretic approach which uses mutual information to define its objective function.
    \item \textbf{FNMTF} \cite{vcopar2019fast}: A fast, NMTF method based on projected gradients, coordinate descent, and alternating least squares optimization.
    \item \textbf{WC-NMTF} \cite{salah2018word}: A word co-occurrence NMTF method that leverages mutual information for co-clustering the word and documents.
    \item \textbf{SNCC} \cite{lu2020sparse}: a sparse neighbor constrained co-clustering with dual regularizers for learning category consistency.
 \end{itemize}
    
Note that in \cite{salah2018word}, \textbf{WC-NMTF} method performed better than original NMF \cite{xu2003document}, orthogonal NMF  (ONMF) \cite{yoo2010orthogonal},  projective NMF (PNMF) \cite{yuan2009projective},  graph regularized NMF (GNMF) \cite{cai2010graph}, NMTF \cite{long2005co}, orthogonal NMTF (ONMTF) \cite{ding2006orthogonal} and graph regularized NMTF (GNMTF) \cite{shang2012graph}. Similarly, in \cite{lu2020sparse}, paper evaluated performance of \textbf{SNCC} against K-means \cite{likas2003global}, NMF \cite{xu2003document}, SNMF \cite{ding2008convex}, graph regularized NMF (GNMF) \cite{cai2010graph}, dual regularization NMTF (DNMTF) \cite{shang2012graph}, dual local learning co-clustering (DLLC) \cite{wang2017robust} and structured optimal bipartite graph (SOBG). Therefore, to avoid the repetitive comparison, we chose to compare our proposed method's performance with \textbf{SNCC} and \textbf{WC-NMTF}.

\subsubsection{Explainability} Although we could not find any explicit co-clustering explainability baselines, we adapted the recently proposed \textbf{LightGBM} \cite{ke2017lightgbm} for our explainability baseline. It is a boosting decision tree-based method that employs feature bundling to deal with a large number of features. To use it as our baseline, we fed co-clustering outcomes from the above-discussed method as labels for the user and product clusters along with user/product features data matrices to discern feature importance per co-cluster. We also compared our method with recently proposed method \textbf{BMGUFS}  \cite{bai2020block} which is rigorous approximation algorithms for block model guided unsupervised feature selection and helps in finding high-quality features for cluster explanation.
\begin{table}[ht!]
\tiny
\centering
\begin{sideways}
	\begin{tabular}{cccccccccc}
		\hline
		{\bf Dataset}& 	{\bf Cluster} & 	{\bf Metric} &{\bf NEO-CC}&{\bf DeepCC}&{\bf CoClusInfo}&{\bf FNMTF}&{\bf WC-NMTF}&\textbf{SNCC}&{\bf \ned}\\ 
		\hline
		\multirow{4}{*}{I} &\multirow{2}{*}{User}& NMI&$ 0.672 \pm 0.052$&$0.771 \pm 0.031$&$0.879 \pm 0.021$&$0.601 \pm 0.023$&$0.824 \pm 0.022$&$0.831 \pm 0.016$&\textbf{0.948 $\pm$ 0.017}\\ 
	    && Accuracy & $0.532\pm 0.029$ & $0.597\pm 0.034$ & $0.781\pm 0.049$ & $0.703\pm 0.021$ & $0.695 \pm 0.019$ &$0.780 \pm 0.021$& \textbf{0.865 $\pm$ 0.045}\\

        &\multirow{2}{*}{Product}& NMI&$0.772 \pm 0.075$&$ 0.736 \pm 0.002$&$0.886 \pm 0.001$&$0.851 \pm 0.043$&$0.801 \pm 0.032$&$0.854 \pm 0.161$&\textbf{0.949 $\pm$ 0.096}\\
	    &&Accuracy & $0.776\pm 0.105$ & $0.651\pm 0.005$ & $0.789\pm 0.057$ & $0.719\pm 0.001$ & $0.683 \pm 0.045$ &$0.741 \pm 0.122$& \textbf{0.865 $\pm$ 0.032}\\
		\hline
		
		\multirow{4}{*}{II} &\multirow{2}{*}{User}& NMI&$0.793 \pm 0.062$&$ 0.769 \pm 0.047 $&$0.901  \pm 0.008$& $0.883 \pm 0.041$&$0.746 \pm 0.052$&$0.902 \pm 0.021$&\textbf{0.960 $\pm$ 0.011}\\ 
	    && Accuracy & $0.682\pm 0.062$ & $0.568\pm 0.075$ & $0.794\pm 0.052$ & $0.703\pm 0.028$ & $0.683\pm 0.119$ &$0.761 \pm 0.061$& \textbf{0.867 $\pm$ 0.038}\\

        &\multirow{2}{*}{Product}& NMI&$0.673 \pm 0.024$&$ 0.743 \pm 0.064$&$0.924 \pm 0.064$&$0.904 \pm 0.058$&$0.839 \pm 0.072$&$0.911 \pm 0.141$&\textbf{0.947 $\pm$ 0.014}\\
	    &&Accuracy & $0.540\pm 0.086$ & $0.694\pm 0.109$ & $0.799\pm 0.034$ & $0.834\pm 0.058$ & $0.788\pm 0.001$ &$0.801 \pm 0.014$& \textbf{0.853 $\pm$ 0.042}\\

			\hline
		\multirow{4}{*}{III} &\multirow{2}{*}{User}& NMI&\multirow{4}{*}{\reminder{-}}&\multirow{4}{*}{\reminder{-}}&$0.720 \pm 0.011$&$0.913 \pm 0.046$&$0.763 \pm 0.023$&$0.921 \pm 0.046$&\textbf{0.973 $\pm$ 0.006}\\ 
	    && Accuracy & &&$0.694 \pm 0.017$ & $0.523\pm 0.034$ & $0.492\pm 0.002$ &$0.632 \pm 0.021$& \textbf{0.882 $\pm$ 0.023}\\

        &\multirow{2}{*}{Product}& NMI&&&$0.840 \pm 0.001$&$0.842 \pm 0.074$&$0.763 \pm 0.028$&$0.856 \pm 0.121$&\textbf{0.933 $\pm$ 0.019}\\
	    &&Accuracy &&&$0.640 \pm 0.068$ & $0.653\pm 0.104$ & $0.535\pm 0.112 $&$0.716 \pm 0.214$& \textbf{0.799 $\pm$ 0.023}\\
	
			\hline
		\multirow{4}{*}{IV} &\multirow{2}{*}{User}& NMI&\multirow{4}{*}{\reminder{-}}&\multirow{4}{*}{\reminder{-}}&$0.879 \pm 0.021$&$0.879 \pm 0.021$&\multirow{4}{*}{\reminder{-}}&$0.881 \pm 0.012$&\textbf{0.959 $\pm$ 0.017}\\ 
	    && Accuracy &  & &$0.534\pm 0.011$& $0.495\pm 0.133$ &   &$0.563 \pm 0.112$& \textbf{0.587 $\pm$ 0.072}\\

        &\multirow{2}{*}{Product}& NMI& & &$0.947 \pm 0.001$& $0.893 \pm 0.129$ & &$0.891 \pm 0.012$&\textbf{0.971 $\pm$ 0.007}\\
	    &&Accuracy &  &  & $0.832 \pm 0.064$&  $0.793 \pm 0.101$ &   &$0.801 \pm 0.102$& \textbf{0.902 $\pm$ 0.032}\\
	
			\hline
    \end{tabular}
    \end{sideways}
		\caption{Experimental results for NMI and Accuracy for synthetic data. Boldface indicates the best results.}
	\label{ned:resultsyn} 
\end{table}

\begin{figure*}[t!]
	\begin{center}
 	\includegraphics[clip, trim=4cm 3cm 6cm 5cm, width  = 0.24\textwidth]{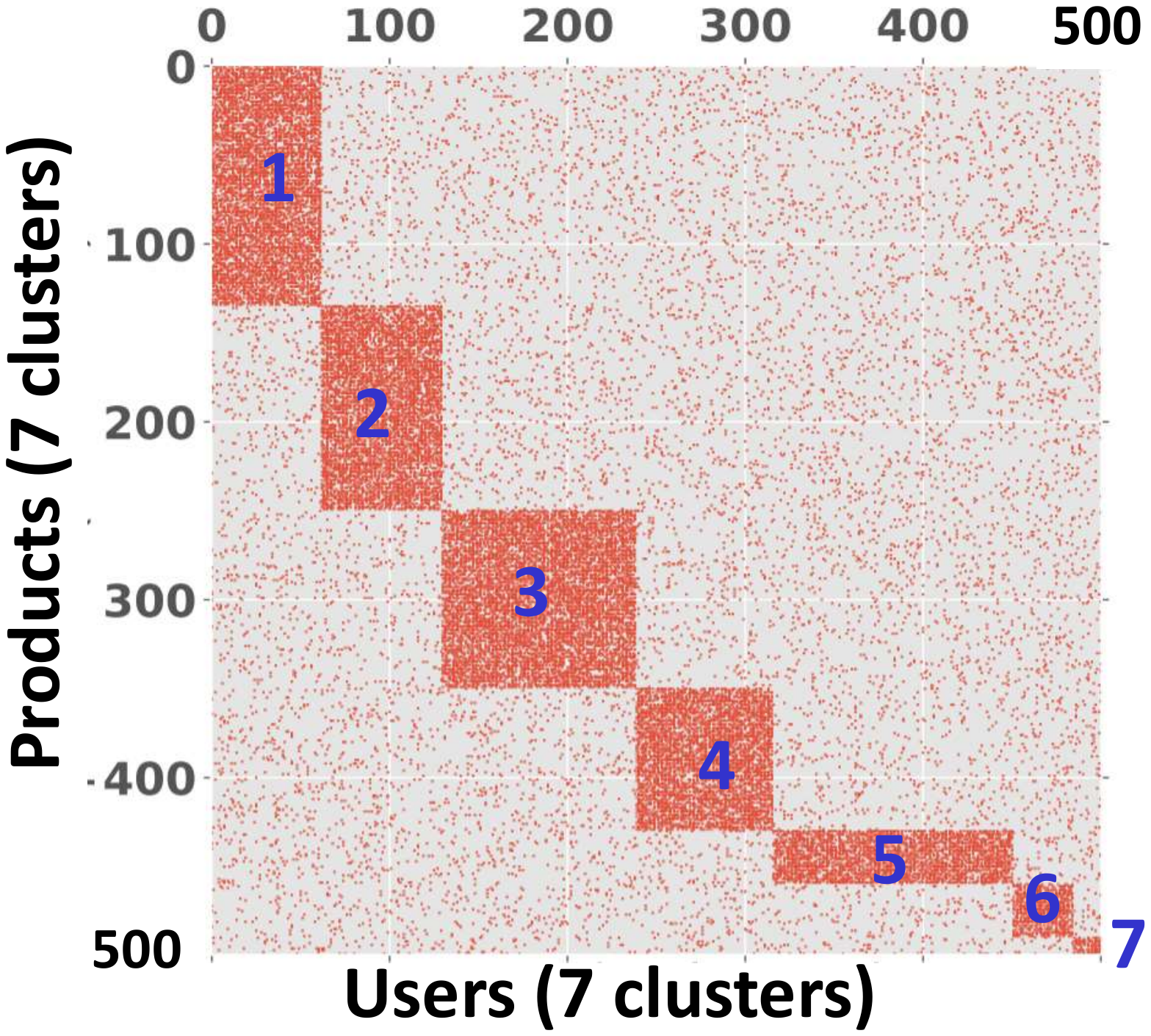}
 	\includegraphics[clip,  trim=4cm 3cm 6cm 5cm, width  = 0.24\textwidth]{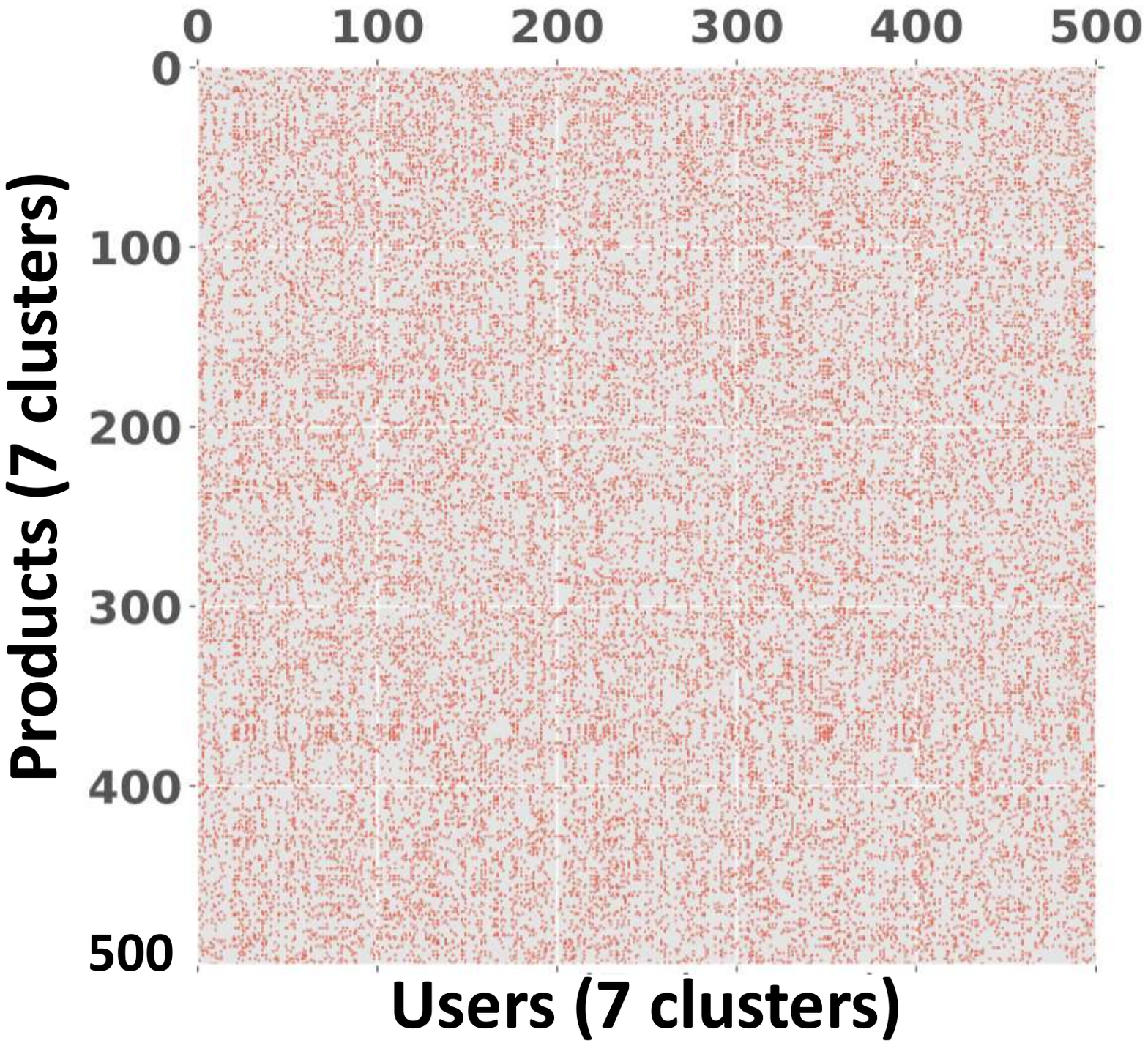}
 	\includegraphics[clip,  trim=4cm 3cm 6cm 5cm, width  = 0.24\textwidth]{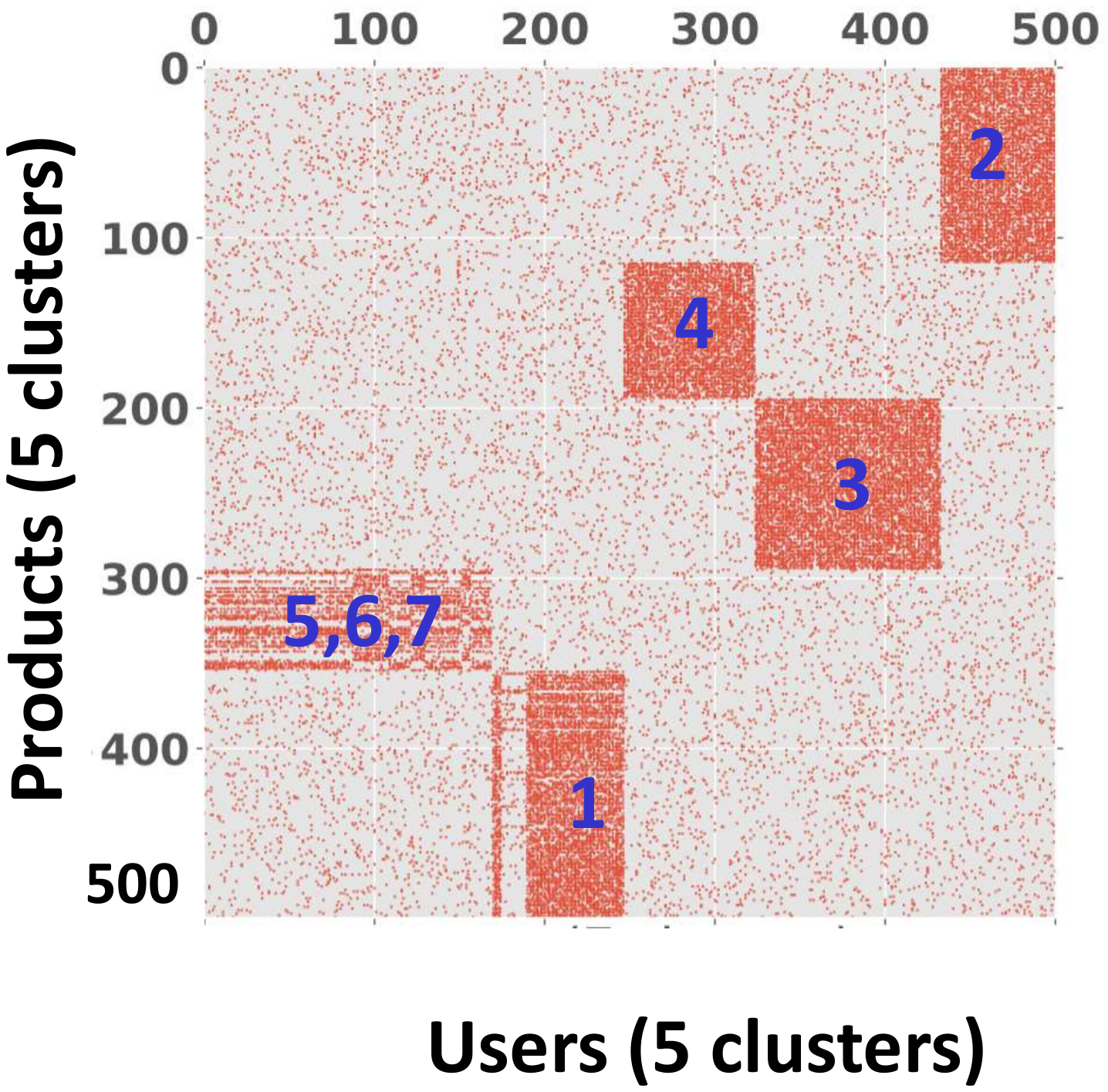}
	\includegraphics[clip,  trim=4cm 3cm 6cm 5cm, width  = 0.24\textwidth]{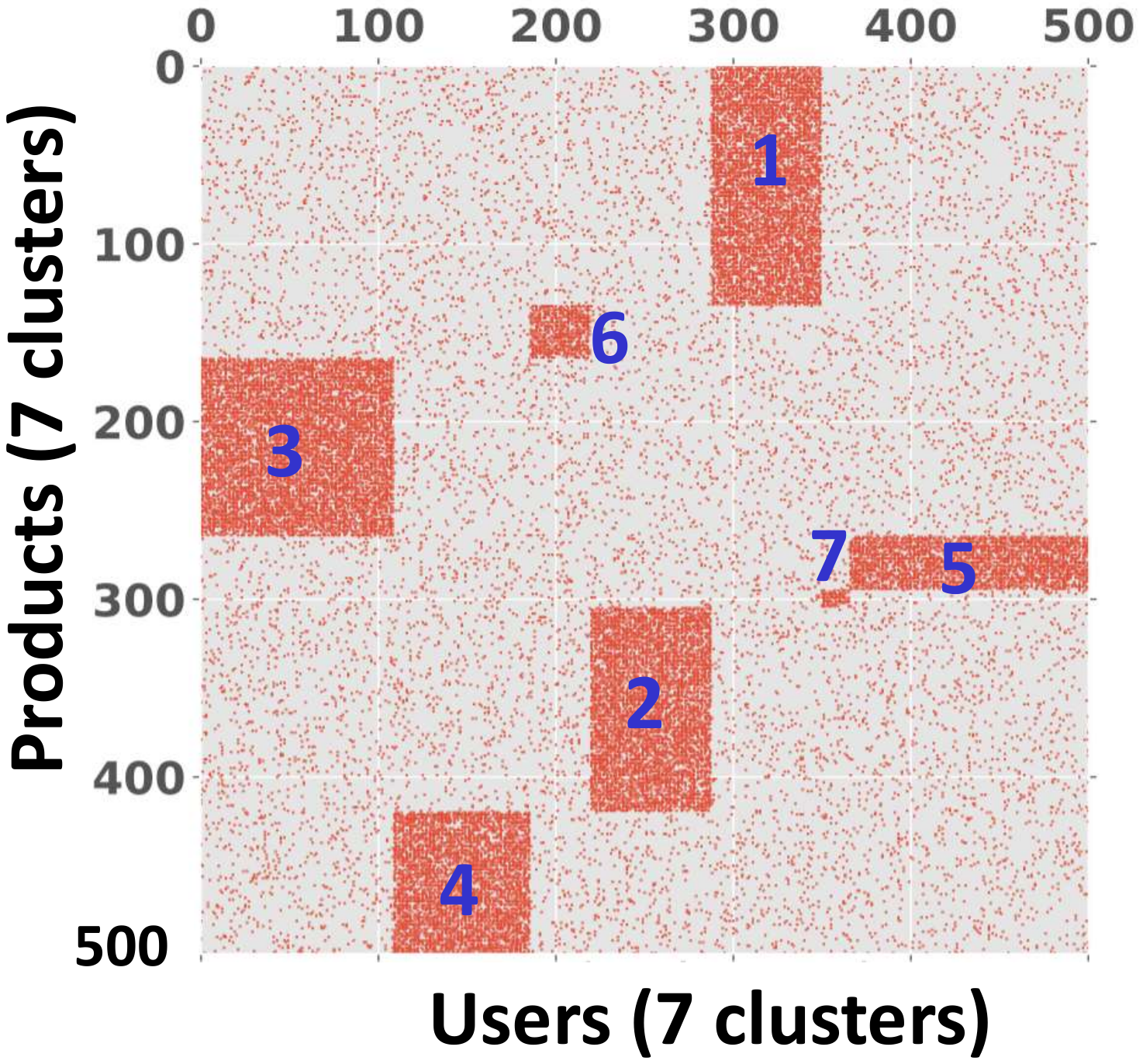}
	\caption{The co-clustering result of \ned on synthetic data. Left to right: (a) original synthetic data with $7$ users and $7$ product clusters, (b) shuffled synthetic data, (c) the second-best performing baseline (CoClusInfo) result (91\% accurate), and (d)  \ned's result (100\% accurate). \ned is successfully able to detect large as well as small sized co-clusters more accurately.}
		\label{fig:syncocluster}
	\end{center}
	\vspace{-0.1in}
\end{figure*}
\subsection{Evaluation Measures}
\subsubsection{Co-clustering} We evaluate \ned and the baselines for co-clustering using three criteria: \textbf{\em Normalized Mutual Information (NMI)}, \textbf{\em Accuracy} and \textbf{\em CPU time (sec)}. 
\subsubsection{Explainability} We evaluate \ned and the baselines for explainability using three criteria namely \textbf{Average Precision}, \textbf{Stability score} \cite{nogueira2016measuring} and our proposed \textbf{Compression Score}. The MDL based compression score is given by:
\begin{equation}
\begin{aligned}
Score &= \log^{*}r_m+\log^{*}c_n+ \log^{*}uf_m + \log^{*}pf_n + E(B_{mn})  \\
& - r_m \log_2\frac{r_m}{I} -  c_n \log_2\frac{c_n}{J} +  \log_2(r_mc_n +1)  \\
& - uf_m \log_2\frac{uf_m}{F_1} + \log_2(r_muf_m +1) + E(B_{mm}) \\
& - pf_n \log_2\frac{pf_n}{F_2} + \log_2(c_npf_n +1) + E(B_{nn})
\end{aligned}
\end{equation}

where $\log^{*}$ is the universal code length for integers \cite{rissanen1983universal}, $r_m$ is $m^{th}$ the row cluster encoding bits, $c_n$ is $n^{th}$ the column cluster encoding bits, $uf_m$ is the $m^{th}$ feature cluster encoding bits corresponding to the $m^{th}$ row cluster, $pf_n$ is $n^{th}$ feature cluster encoding bits corresponding to the $n^{th}$ column cluster. $ E(B_{mn})$, $ E(B_{mm})$ and $ E(B_{nn})$ are the number of bits required to encode block of user-product, user-feature and product-feature, respectively. The lower the value, the better is the compression. \hide{For all metrics other than CPU time and compression score, higher evaluation measures are advantageous.}
\subsection{Quantitative Analysis}
\label{sec:niche}
In this section, we provide quantitative evaluation and analysis of our method for co-clustering and explainability on synthetic and real-world datasets.\hide{ Table \ref{ned:resultsyn} and \ref{ned:resultreal} report results on co-clustering performance on synthetic and real data, respectively. Table \ref{ned:explainresult} reports results on explainability performance.}
Recommendation is undoubtedly one key task for businesses. However, explainable and thematic engagement is useful for both individual content creators (influencers, etc.) and companies (Snapchat, Netflix, etc.) who aim to attract certain audiences with original content. The most concrete and evaluate-able components are discovering and describing the co-clusters and explanations found, upon which these creators can leverage findings. Thus, our quantitative evaluation is focused around these points, and we also show associated qualitative end-to-end findings on real data.

\subsubsection{Co-clustering Performance}
To answer the first experimental question, we report co-clustering  performance of all methods in Table \ref{ned:resultsyn} and \ref{ned:resultreal}.  We cap the allowed run-time of all methods to 24 hours, indicating unfinished/non-converged results as ``-''.  

\textbf{Synthetic Data}:  As we can see in Table \ref{ned:resultsyn}, \ned achieves the best performance for all synthetic datasets. As expected, \ned significantly out-performs two related NMTF-based methods, SNCC, FNMTF and WC-NMTF, due to its use of mutual information via the summary matrix as described in Section \ref{sec:method}.  Moreover, \ned outperforms CoClusInfo, NEO-CC and the neural method DeepCC, in both accuracy and NMI, surpassing them with substantial accuracy improvements ( average $\approx 14\%$). We visualize the co-clustering ability of \ned and second highest performing baseline (CoClusInfo) on an additional small synthetic data (SmallSYN). The SmallSYN data has 500 users, 500 features and $7$ co-clusters, and is shown in Figure \ref{fig:syncocluster}(a). The shuffled data fed into CoClusInfo and \ned is shown in Figure \ref{fig:syncocluster}(b).  Subsequently, the users and products are rearranged to show discovered co-clusters as shown in Figure \ref{fig:syncocluster}(d), according to which \ned detects the co-clusters precisely (100\% accuracy).  CoClusInfo's results (achieving 91\% accuracy) are shown in Figure \ref{fig:syncocluster}(c). Note that the inferred co-cluster sequences in Figure \ref{fig:syncocluster}(c-d) are not  the same as in Figure \ref{fig:syncocluster}(a) -- this is because the cluster assignment for the same user/product clusters may be arbitrarily permuted during inference.

\begin{table}[ht!]
\tiny
\centering
\begin{sideways}
	\begin{tabular}{cccccccccc}
		\hline
		{\bf Dataset}& 	{\bf Cluster} & 	{\bf Metric} &{\bf NEO-CC}&{\bf DeepCC}&{\bf CoClusInfo}&{\bf FNMTF}&{\bf WC-NMTF}&{\bf SNCC}&{\bf \ned}\\ 
		\hline
		\multirow{3}{*}{Cora} &\multirow{3}{*}{User}& NMI&$ 0.034 \pm 0.004$&$0.003 \pm 0.001$&$0.152 \pm 0.006$&$0.152 \pm 0.011$&$0.173 \pm 0.002$&$0.112 \pm 0.001$&\textbf{0.181 $\pm$ 0.023}\\ 
	    && Accuracy & $0.206\pm 0.034$ & $0.2943\pm 0.045$ & $0.375\pm 0.032$ & $0.373\pm 0.101$ & $0.314\pm 0.133$ &$0.213 \pm 0.022$ &\textbf{0.399 $\pm$ 0.034}\\
	    && Time (sec) & $1488.8\pm 126.6$ & $1481.4\pm 256.3$ &$6.6\pm 1.23$& $4.5\pm 1.2$ & $121.3\pm 12.5$ &\textbf{4.1 $\pm$ 0.34}& $7.9 \pm 0.7$\\

			\hline
		\multirow{3}{*}{WebKB4} &\multirow{3}{*}{User}& NMI&$0.136 \pm 0.001$&$ 0.379 \pm 0.084$&$0.383 \pm 0.095$&$0.255 \pm 0.054$&$0.241 \pm 0.058$&$0.149 \pm 0.003$&\textbf{0.489 $\pm$ 0.043}\\ 
	    && Accuracy & $0.374\pm 0.075$ & $0.568\pm 0.026$ & $0.653\pm 0.082$ & $0.549\pm 0.102$ & $0.503\pm 0.121$ &$0.461 \pm 0.101$& \textbf{0.752 $\pm$ 0.029}\\
    	    && Time (sec) & $4089.7\pm 118.8$ & $1719.9\pm 493.5$ & $4.7 \pm 1.6$ & $ 3.7 \pm 0.63$ & $286.3\pm 2.2$ &\textbf{3.2 $\pm$ 0.21}& $3.8\pm 0.6$\\

			\hline
		\multirow{3}{*}{MovieLens} &\multirow{3}{*}{Product}& NMI&$0.356 \pm 0.020$&$ 0.636 \pm 0.102$&$0.742 \pm 0.112$&$0.394\pm 0.021$&$0.689 \pm 0.039$&$0.472 \pm 0.102$&\textbf{0.783 $\pm$ 0.124}\\ 
	    && Accuracy & $0.413 \pm 0.020$  & $0.550\pm 0.192$ & $0.649\pm 0.102$ & $0.382\pm 0.048$ & $0.592\pm 0.124$ &$0.564 \pm 0.116$ &\textbf{0.683 $\pm$ 0.017}\\
		&& Time (sec) & $8402.3 \pm 363.5$   & $2067.78\pm 34.5$ & $103.98 \pm 3.6$& $34.78 \pm 6.9$ &$673.67\pm 5.7$ &\textbf{29.43 $\pm$ 6.4}& $ 41.55\pm 3.6$\\
		
			\hline
		\multirow{3}{*}{News20} &\multirow{3}{*}{User}& NMI&\multirow{3}{*}{\reminder{-}}&$ 0.448 \pm 0.018$&$0.499 \pm 0.091$&$0.149 \pm 0.031$&$0.392 \pm 0.019$&$0.334 \pm 0.016$&\textbf{0.541 $\pm$ 0.019}\\
	    && Accuracy &   & $0.452\pm 0.075$ & $0.433\pm 0.012$ & $0.108\pm 0.093$ & $0.329\pm 0.121$ &$0.316 \pm 0.021$& \textbf{0.477 $\pm$ 0.029}\\
	 	&& Time (sec) &  & $4028.3\pm 42.6$ & $114.4\pm 4.1$ &  $97.4 \pm 12.3$ & $1143.1\pm 41.6$  &\textbf{92.6 $\pm$ 10.4}&$111.1\pm 16.3$  \\
		
		\hline
		\multirow{3}{*}{Caltech} &\multirow{3}{*}{User}& NMI&$0.324 \pm 0.086$&$ 0.636 \pm 0.087$&$0.753 \pm 0.059$&$0.211 \pm 0.101$&$0.593 \pm 0.109$&$0.462 \pm 0.011$&\textbf{0.756 $\pm$ 0.019}\\ 
	    && Accuracy & $0.543\pm 0.001$ & $0.778\pm 0.022$ & $0.899\pm 0.012$ & $0.531\pm 0.111$ & $0.835\pm 0.291$ &$0.791 \pm 0.091$& \textbf{0.911 $\pm$ 0.102}\\
		&& Time (sec) & $2649.6\pm 132.4$ & $281.4\pm 13.3$ & $2.4 \pm 0.72$ & $3.5\pm 0.41$ & $103.5\pm 17.4$ &\textbf{2.9 $\pm$ 0.21}&$3.9 \pm 0.24$\\
		\hline
    \end{tabular}
    \end{sideways}
		\caption{Experimental results for NMI, Accuracy and CPU Time in seconds  for real data. The boldface means the best results.}
	\label{ned:resultreal} 
\end{table}
\textbf{Real Data}: Table \ref{ned:resultreal} presents \ned’s performance compared to state-of-the-art techniques on real datasets, using the NMI, accuracy and CPU Time (in seconds). For each dataset, we list the scores that correspond to a specific cluster type. For all real datasets except Movielens data, only user labels are available. For Movielens data, product labels are inferred from movie genre types. \ned is the only method that shows consistently high performance on all different kinds of datasets, indicating its flexibility. Due to space limitations, here, we provide analysis of Movielens data below and analysis of other real datasets are provided in Supplementary Material. 

\text{\em Movielens Data}: We observe that \ned is able to detect $15$ clusters for movies. This is reasonable outcome for this data, as there are overlapping movie categories; for example, ``Toy Story (1995)'' can be categorized in Animation as well as Comedy. FNMTF detected only $5-6$ clusters as shown in Figure \ref{fig:movielens}.  CoClusInfo, WC-NMTF, SNCC and DeepCC achieved better performance, but still lower than \ned.

\begin{figure*}[!ht]
	\begin{center}
	\includegraphics[clip, trim=8cm 3cm 8cm 3cm, width  = 0.17\textwidth]{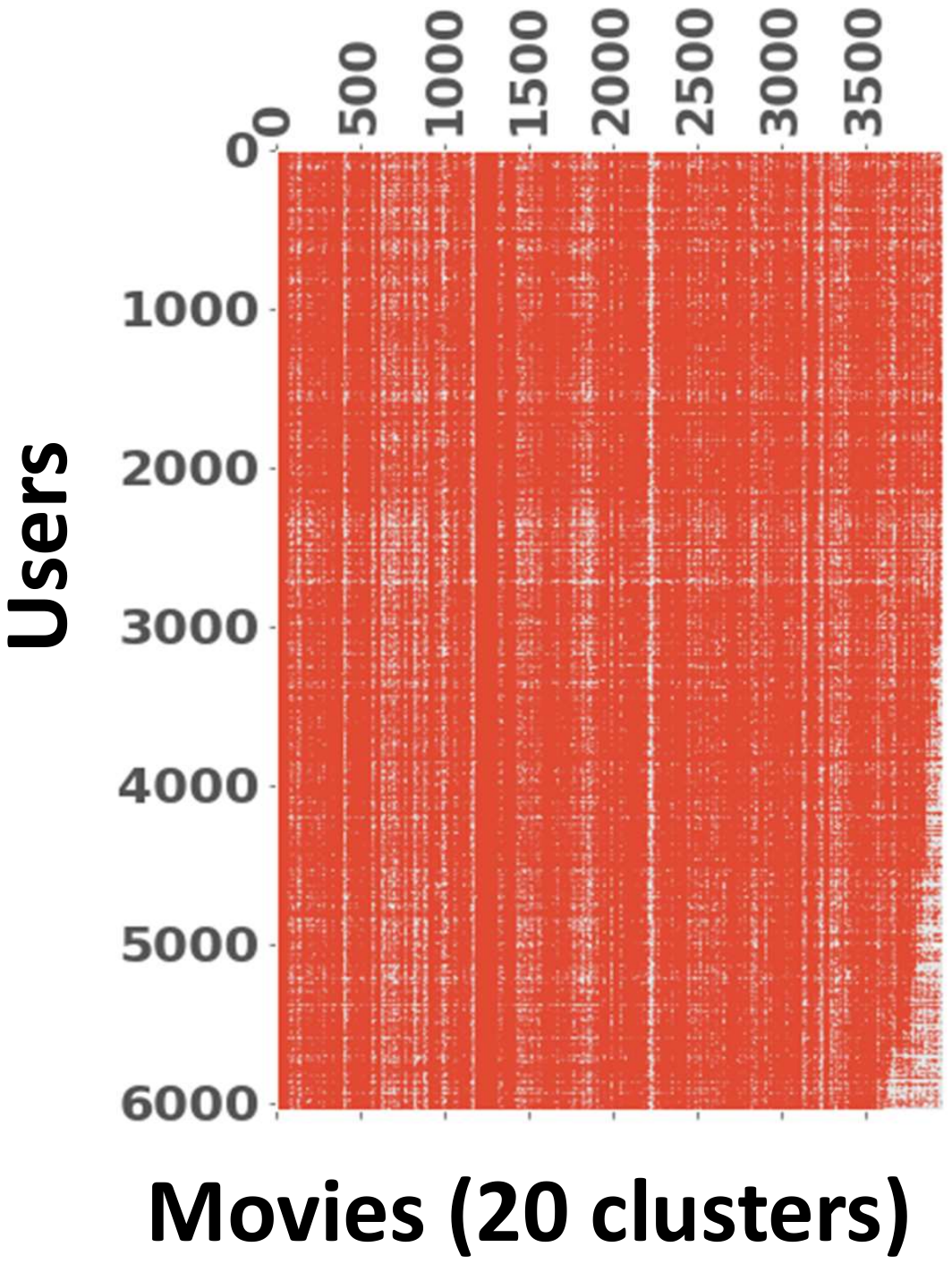}
	\includegraphics[clip, trim=8cm 3cm 8cm 3cm, width  = 0.17\textwidth]{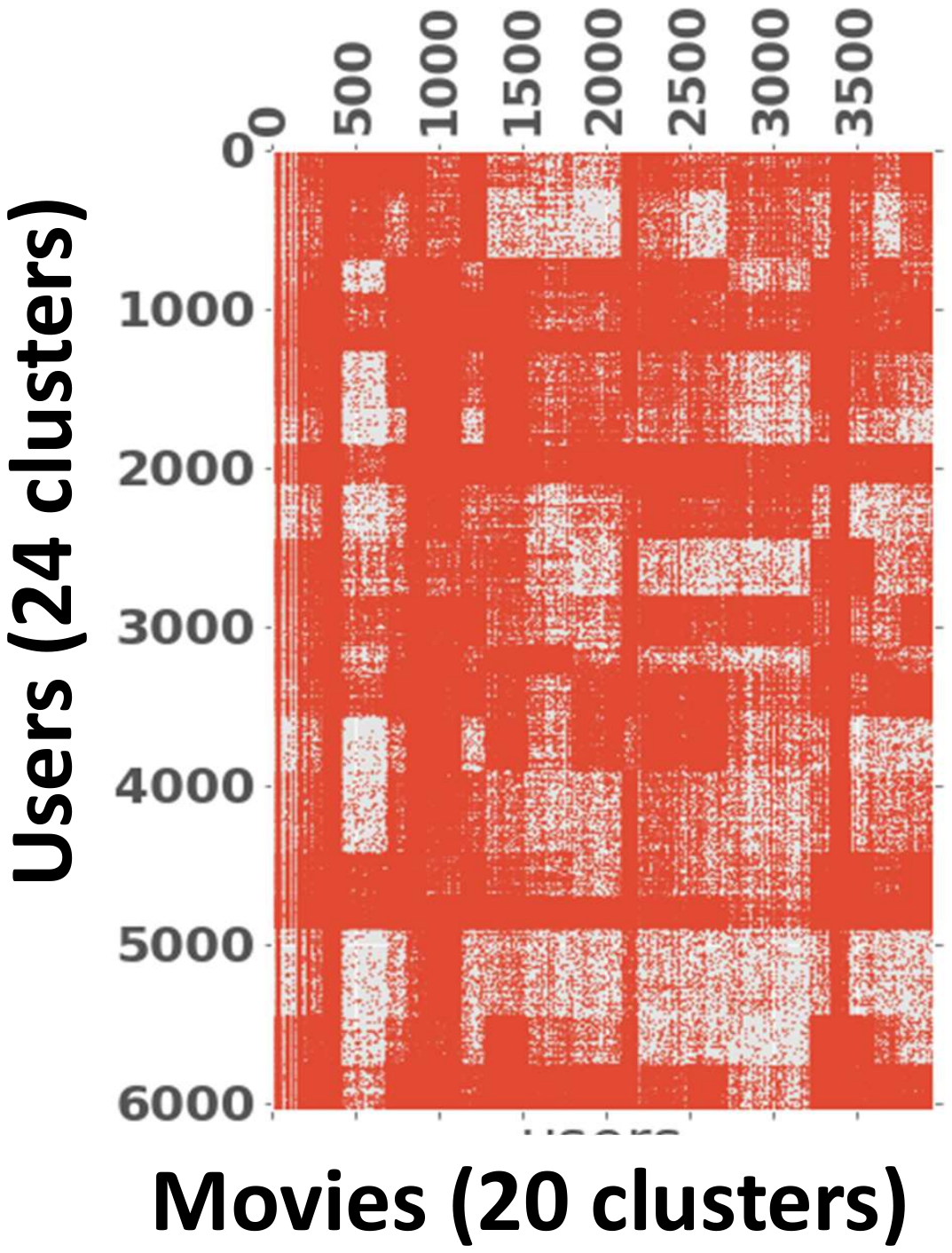}
	\includegraphics[clip, trim=8cm 3cm 8cm 3cm, width  = 0.17\textwidth]{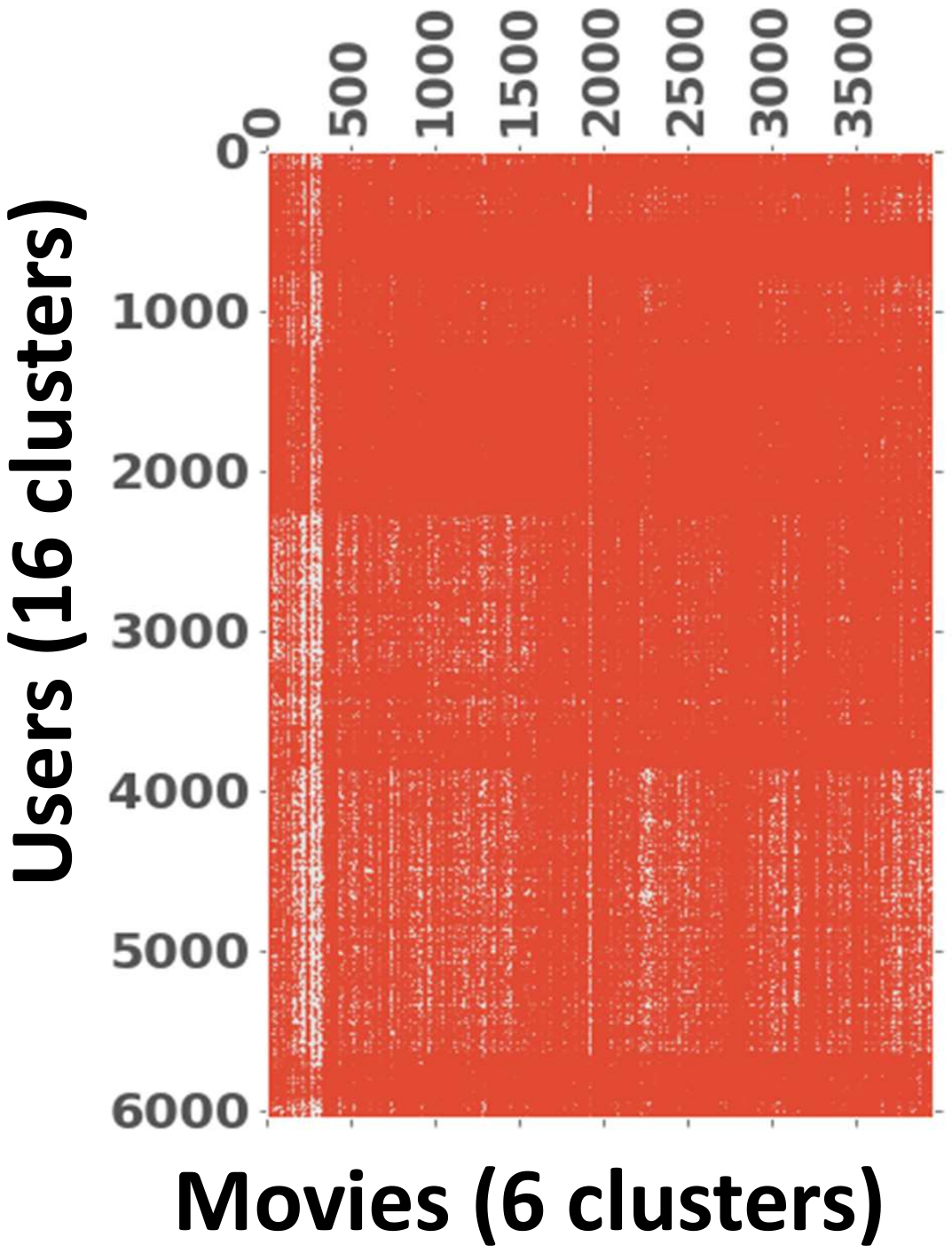}
	\includegraphics[clip, trim=8cm 3cm 8cm 3cm, width  = 0.17\textwidth]{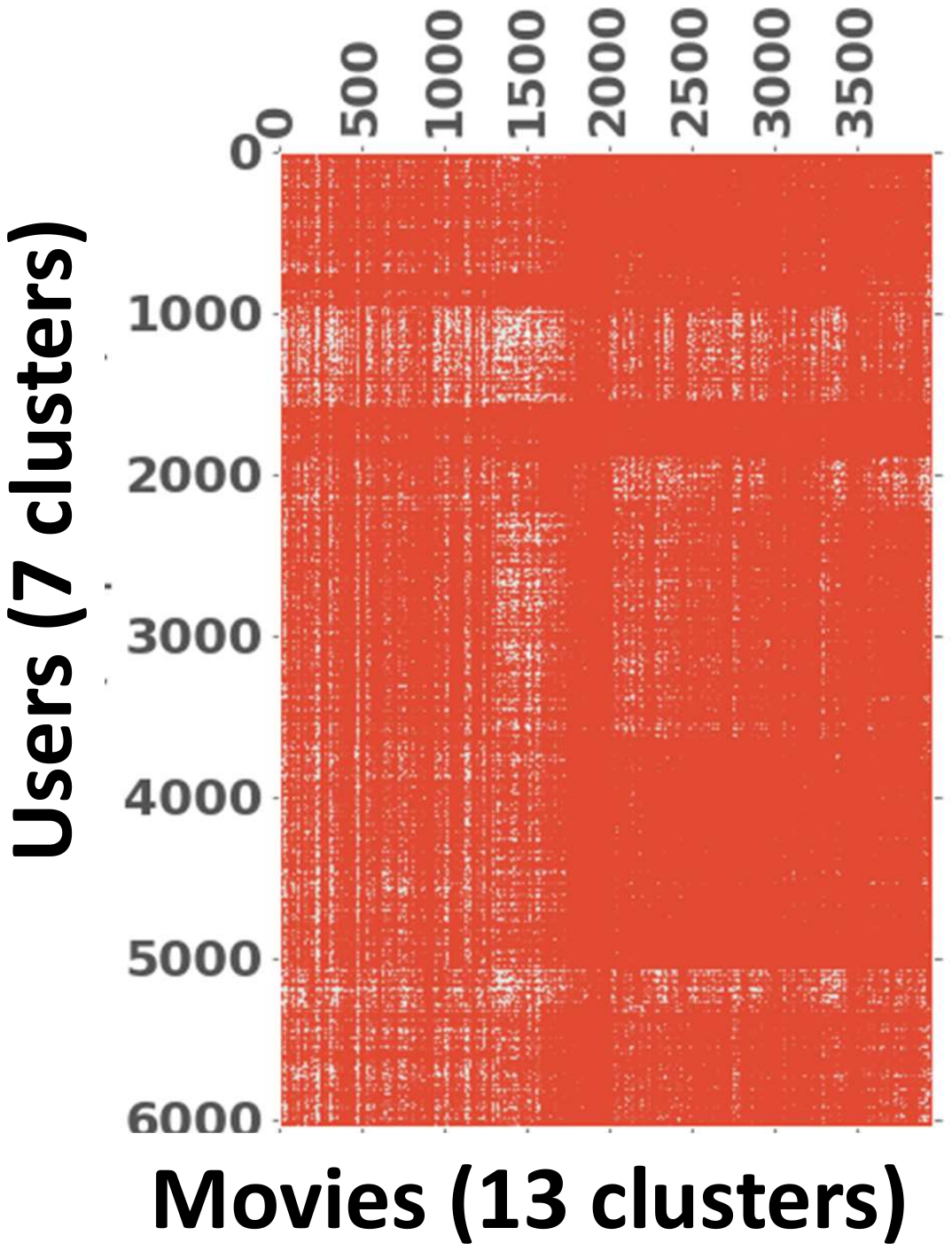}
	\includegraphics[clip, trim=8cm 3cm 8cm 3cm, width  = 0.17\textwidth]{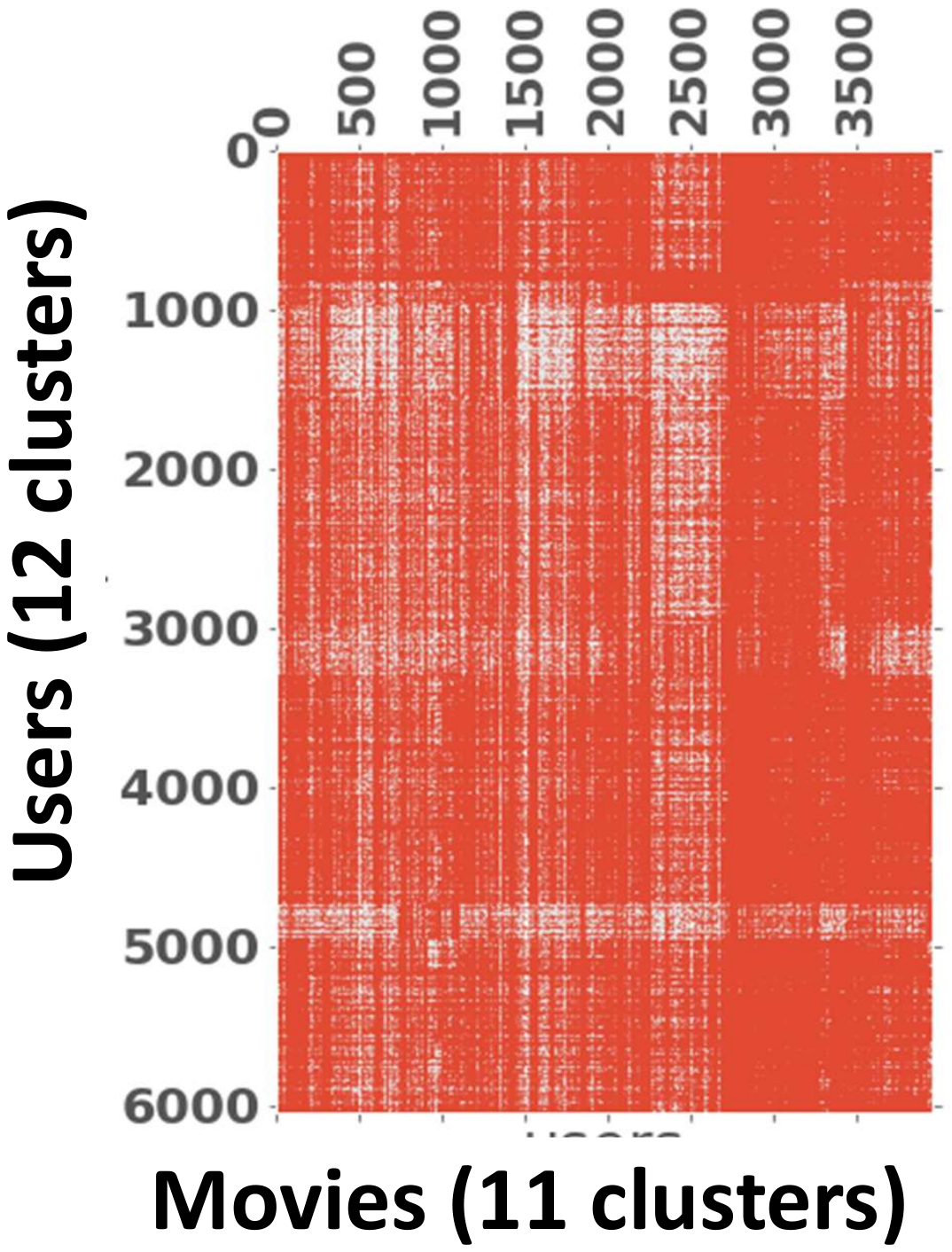}
		\caption{Visualized co-clustering result of \ned on the Movielens data. From left to right: (a) Original data, (b) co-cluster detected by \ned, (c) FNMTF, (d) WC-NMTF, and (e) DeepCC. Each method is run with $M = 50, N = 20$ (specifying a maximum of 50 user clusters and 20 product clusters). \ned evidently produces the most coherent co-clustering structure.} 
		\label{fig:movielens}
	\end{center}
	\vspace{-0.1in}
\end{figure*}
 We also visualize the co-clustering results on the Movielens dataset, to show a visual representation of improved co-clustering performance. We rearrange the original data matrix (Figure \ref{fig:movielens}(a)) according to the user and product cluster assignments to show the co-clustering result.  We observe that the co-clusters for \ned (Figure \ref{fig:movielens}(b)) are more salient compared to those for baselines FNMTF, WC-NMTF and DeepCC (Figures \ref{fig:movielens}(c-e)). Snapchat dataset is very interesting and detailed analysis is provided in Section \ref{subsec:snap}. 
 \vspace{-0.2in}
 \subsubsection{Explainability Evaluation}
 To answer the second experimental question, we first analyze the explanations derived from \ned, and compare the results with explanations from baseline methods. Table \ref{ned:explainresult} presents \ned's performance over LightGBM explanations on 2 synthetic and 2 real datasets, using the stability score, average precision, CPU time in sec. and our proposed compression score.
 
 \textbf{Setup}: First, we obtain co-clusters using \ned and other very closely related two matrix tri-factorization based methods, i.e FNMTF and WC-NMTF. Note that baselines  CoclusInfo, DeepCC and NEO-CC provide only user and product cluster factor matrices but do not provide any summary matrix $\mathbf{S}$, which is required for our computations (See Equ. \ref{user} and \ref{prod}).  Next, co-clustering results are fed into explainability components for both LightGBM and \ned.
 For \ned, first, we create auxiliary matrices for both users and products (see Section \ref{sec:aux}), and then compute PMI using co-clustering outcomes (see Section \ref{sec:feature}).  We then select the top-5 user and product features per co-cluster to explain it. 

 \textbf{Results}: We observe that \ned drastically outperforms the baselines for all the datasets. For synthetic data SYN-I, each user cluster is created with combination of age and gender. Similarly, product clusters are focused on gender dominated viewership. For example, in our simulated data,  women associate highly with ``makeup \& cosmetic, weight loss, and pop music'' and men associate highly ``card games, driving and racing games, and body building.''  Our results suggest that LightGBM is not able to explain co-clusters well with the correct attributes. This is likely, partially because it trains a classifier using only co-cluster outcomes and does not leverage implicit similarities between users and products. Conversely, \ned successfully captures these similarities in auxiliary matrices and is thus able to explain these relationships for each co-cluster. Similarly, \ned is able to provide more complex explanations for SYN-II in which product clusters do not have any specific gender based dominance.  
 
 For Movielens, \ned discovered an interesting co-cluster in which salesmen and programmers strongly associated with adventure movies, and in another co-cluster lawyers strongly associated with drama and fantasy movies.  Interestingly, all co-clustering baselines with combination of LightGBM for explanation underperformed in our experiments, compared to the solution adopted by \ned.  \ned consistently compressed the discovered co-clusters quite well, compared to LightGBM, suggesting that we can describe the co-clusters more concisely with better-quality attribute/feature explanations. Overall, \ned outperforms state-of-the-art approaches by at least $\approx15\% $ stability and  $\approx20\%$ average precision improvement. Also, \ned achieves at least $5\%$ compression bits and $20\%$ runtime reduction (see Table \ref{ned:explainresult}), In the table \ref{ned:explainresult}) boldface means the best results. The 'a' represents Stability, 'b' represents Avg. Precision, 'c' represents CPU Time (sec) and 'd' represents  Compression (Kb). Here 'U' = User, 'P' = Product and 'B' = Both..

\begin{table}[ht!]
\tiny
\centering
\caption{Experimental results for explainability evaluation.}
\begin{sideways}
\setlength{\tabcolsep}{4pt}
\begin{tabular}{ccccccccccc}
		\hline
		\multirow{2}{*}{\textbf{Data}} &  \multirow{2}{*}{\textbf{Score}} & \multicolumn{3}{c}{\textbf{FNMTF}} & \multicolumn{3}{c}{\textbf{WC-NMTF}} & \multicolumn{3}{c}{\textbf{\ned}}  \\
		\cline{3-11}
		&&\textbf{LightGBM}&\textbf{BMGUFS}&\textbf{\ned}&\textbf{LightGBM}&\textbf{BMGUFS}&\textbf{\ned}&\textbf{LightGBM}&\textbf{BMGUFS}&\textbf{\ned}\\ 
		\hline
	    \hline
		\multirow{6}{*}{S-I} &U-a &$0.306 \pm 0.04$&$0.294 \pm 0.01$&$0.445 \pm 0.06$&$0.381 \pm 0.06$&$0.264 \pm 0.02$&$0.447 \pm 0.07$&$0.501 \pm 0.06$&$0.416 \pm 0.08$&\textbf{0.596 $\pm$ 0.06}\\ 
	      & U-b &$0.119 \pm 0.08$&$0.115 \pm 0.06$&$0.134 \pm 0.01$&$0.112 \pm 0.03$&$0.201 \pm 0.09$&$0.289 \pm 0.02$&$0.148 \pm 0.01$&$0.296 \pm 0.16$&\textbf{ 0.496$\pm$0.07}\\ 
	      &P-a &$0.288 \pm 0.03$&$0.276 \pm 0.04$&$0.301 \pm 0.02$&$0.129 \pm 0.04$&$0.131 \pm 0.02$&$0.132 \pm 0.01$&$0.116 \pm 0.09$&$0.264 \pm 0.12$&\textbf{ 0.306 $\pm$ 0.04}\\ 
	      & P-b &$0.218 \pm 0.01$&$0.201 \pm 0.06$&$0.311 \pm 0.02$&$0.109 \pm 0.06$&$0.146 \pm 0.14$1&$0.231 \pm 0.03$&$0.274\pm 0.01$&$0.336 \pm 0.17$&\textbf{ 0.412 $\pm$ 0.02}\\ 
	      &  B-c &$43.34 \pm 5.2$&$264.43 \pm 21.64$&$8.45 \pm 4.2$&$58.01 \pm 4.3$&$124.6 \pm 6.4$&$10.23 \pm 3.1$&$28.26 \pm 2.6$&$119.6 \pm 10.4$&\textbf{5.86 $\pm$ 1.9}\\
	      &  B-d &$518.55 \pm 0.55$&$598.65 \pm 0.84$&$499.69 \pm 0.4$&$659.9 \pm 0.16$&$652.6 \pm 0.26$&$636.4 \pm 0.07$&$340.26 \pm 0.06$&$387.4 \pm 0.04$&\textbf{307.5$\pm$ 0.05}\\
	      
	   	    \hline
			\multirow{6}{*}{S-II} &U-a &$0.241 \pm 0.05$&$0.224 \pm 0.07$&$0.356 \pm 0.1$&$0.202\pm 0.02$&$0.200 \pm 0.06$&$0.331 \pm 0.02$&$0.485 \pm 0.07$&$0.483 \pm 0.02$&\textbf{0.8481$\pm$ 0.02}\\ 
	      & U-b &$0.105 \pm 0.01$&$0.106 \pm 0.03$&$0.321 \pm 0.09$&$0.09 \pm 0.01$&$0.121 \pm 0.04$&$0.298 \pm 0.04$&$0.185 \pm 0.08$&$0.321 \pm 0.02$&\textbf{0.403 $\pm$ 0.02}\\
	      & P-a &$0.637 \pm 0.11$&$0.592 \pm 0.09$&$0.812 \pm 0.13$&$0.071 \pm 0.01$&$0.062 \pm 0.01$&$0.035 \pm 0.01$&$0.585 \pm 0.15$&$0.576 \pm 0.14$&\textbf{0.836 $\pm$ 0.06}\\ 
	      & P-b &$0.121 \pm 0.06$&$0.122 \pm 0.08$&$0.224 \pm 0.01$&$0.101 \pm 0.03$&$0.102 \pm 0.04$&$0.101 \pm 0.02$&$0.100 \pm 0.04$&$0.101 \pm 0.03$&\textbf{0.390$\pm$ 0.05}\\\
	      &   B-c &$1123.69 \pm 12.7$&$1202.1.7 \pm 0.04$&$102.454 \pm 11.9$&$1229.82 \pm 32.8$&$1364.7 \pm 10.4$&$116.42 \pm 18.4$&$1178.24 \pm 8.9$&$962.4 \pm 10.4$&\textbf{64.1 $\pm$ 4.8}\\
  	      & B-d &$6294.58 \pm 0.08$  &$6746.24 \pm 0.02$&$5928.72 \pm 0.01$&$9761.52 \pm 0.1$&$9842 \pm 0.04$&$9285.6 \pm 0.02$&$4590 \pm 0.09$&$4656.8 \pm 0.02$&\textbf{3920.18 $\pm$ 0.07}\\ 
	     	  \hline
			\multirow{6}{*}{ML} & U-a &$0.197 \pm 0.01$ &$0.210 \pm 0.01$&$0.126 \pm 0.06$&$0.136 \pm 0.04$&$0.141 \pm 0.02$&$0.1679 \pm 0.05$&$0.129\pm 0.08$&$0.196 \pm 0.02$&\textbf{0.215 $\pm$ 0.04}\\ 
	      &  U-b &$0.124 \pm 0.02$&$0.132 \pm 0.11$&$0.246 \pm 0.05$&$0.136 \pm 0.01$&$0.142 \pm 0.06$&$0.389 \pm 0.08$&$0.197 \pm 0.01$&$0.271 \pm 0.06$&\textbf{0.466 $\pm$ 0.11}\\
	      &P-a &$0.009 \pm 0.07$&$0.101 \pm 0.01$&$0.101 \pm 0.02$&$0.129 \pm 0.09$&$0.162 \pm 0.03$&$0.196 \pm 0.06$&$0.212 \pm 0.02$&$0.108 \pm 0.01$&\textbf{0.228 v 0.06}\\ 
	      &  P-b &$0.246 \pm 0.12$&$0.294 \pm 0.02$&$0.301 \pm 0.02$&$0.312 \pm 0.01$&$0.264 \pm 0.01$&$0.358 \pm 0.03$&$0.326 \pm 0.09$&$0.113 \pm 0.05$&\textbf{0.424 $\pm$ 0.03}\\
	      & B-c &$38.6 \pm 7.4$&$124.6 \pm 2.5$&$30.5 \pm 1.4$&$35.1 \pm 2.9$&$112.2 \pm 3.6$&$28.4 \pm 6.4$&$21.24 \pm 1.4$&$98.7 \pm 3.3$&\textbf{14.30 $\pm$ 2.9}\\
	      & B-d&$749.29 \pm 0.42$&$760.4 \pm 0.01$&$730.95 \pm 0.19$&$788.08 \pm 0.12$&$796.6 \pm 0.04$&$760.02 \pm 0.18$&$696.17 \pm 0.49$&$683.6 \pm 0.01$&\textbf{671.7 $\pm$ 0.32}\\ 
	      
	      \hline
	
		\multirow{6}{*}{SC} &U-a &$0.289 \pm 0.04$&$0.122 \pm 0.04$&$0.305 \pm 0.02$&$0.292 \pm 0.03$&$0.161 \pm 0.01$&$0.321 \pm 0.04$&$0.329 \pm 0.04$&$0.231 \pm 0.02$&\textbf{0.351$\pm$ 0.08}\\ 
	      &  U-b &$0.374 \pm 0.07$&$0.101 \pm 0.04$&$0.518 \pm 0.11$&$0.281 \pm 0.03$&$0.009 \pm 0.01$&$0.521 \pm 0.04$&$0.277 \pm 0.04$&$0.201 \pm 0.01$&\textbf{0.527 $\pm$ 0.05}\\
	      & P-a &$0.298 \pm 0.01$&$0.106 \pm 0.07$&$0.344 \pm 0.09$&$0.312 \pm 0.01$&$0.112 \pm 0.01$&$0.241 \pm 0.09$&$0.311 \pm 0.09$&$0.261 \pm 0.09$&\textbf{0.351 $\pm$ 0.01}\\  
	      & P-b &$0.178 \pm 0.03$&$0.113 \pm 0.02$&$0.236 \pm 0.01$&$0.201 \pm 0.02$&$0.121 \pm 0.04$&$0.241 \pm 0.10$&$0.200 \pm 0.12$&$0.102 \pm 0.01$&\textbf{0.257 $\pm$ 0.03}\\
	      &B-c &$1498.4 \pm 12.5$&$2642.6 \pm 23.7$&$149.5 \pm 38.8$&$1298.5 \pm 34.9$&$4364.36 \pm 12.4$&$246 \pm 13.9$&$1256.37 \pm 56.2$&$3641.78 \pm 34.7$&\textbf{107.15 $\pm$11.3}\\
	      &  B-d &$60457.47 \pm1.21$ &$61362.2 \pm 0.04$&$54538.27 \pm1.14$&$58694.03 \pm3.21$&$58961.7 \pm 0.01$&$56022.87 \pm3.42$&$52889 \pm0.45$&$53316.8 \pm 0.14$&\textbf{47765.7 $\pm$ 0.34}\\
	      \hline
    \end{tabular}
    \end{sideways}
	
	\label{ned:explainresult} 
\end{table}

\begin{figure}[t!]
	\begin{center}
	\includegraphics[clip, trim=0cm 10cm 2cm 1cm, width  = 0.352\textwidth]{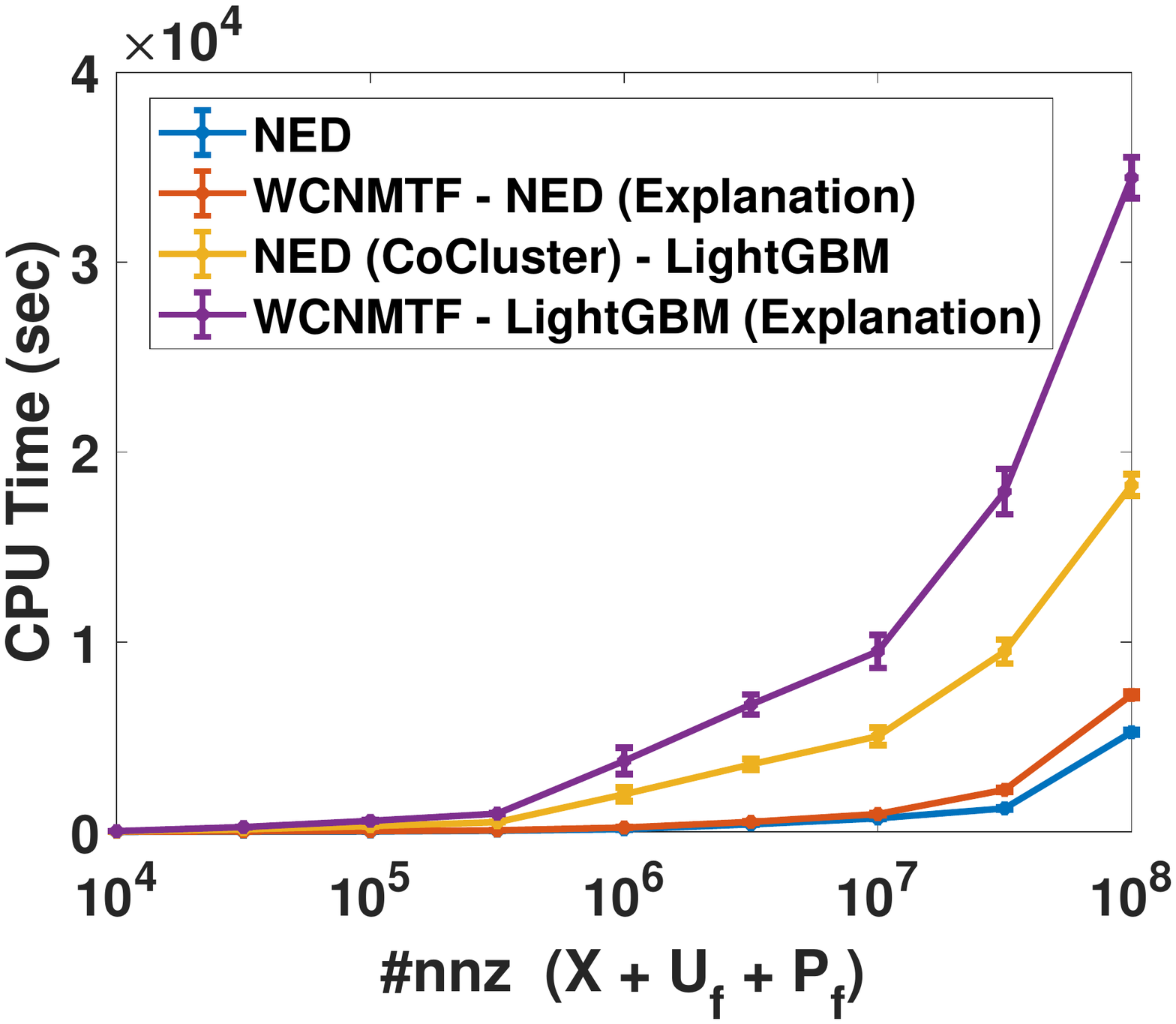}
		\includegraphics[clip, trim=0.5cm 12cm 2cm 1cm, width  = 0.354\textwidth]{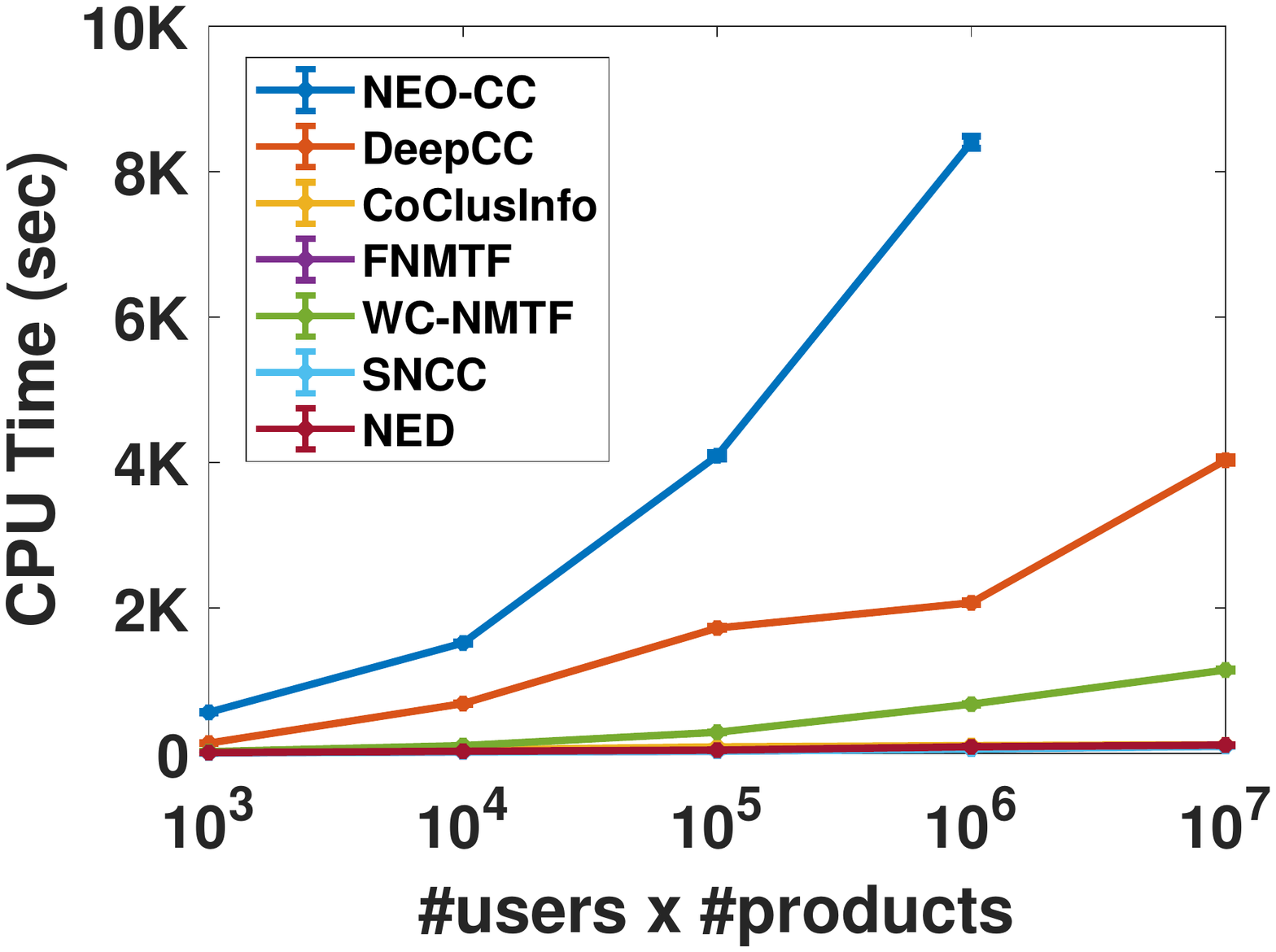}
		\caption{(a) Total running time  (averaged over 10 runs) of \ned versus the total number of non-zeros for Snapchat data (b) Co-clustering running time for synthetic data.}
		\label{fig:scalability}
	\end{center}
\end{figure}
\subsection{Scalability}
Finally, to answer the third question, we experimentally study the runtime of \ned with respect to the input size on real graph. For runtime measurements, we use a large private viewer-publisher Snapchat data with $5$ million viewer and $7500$ publisher. To generate real graphs of growing size, we increasingly sample the Snapchat user-product user engagement matrix rows and report \ned runtime averaged over $10$ runs in Figure \ref{fig:scalability}.  We can observe that the run-time of \ned scales approximately linearly; notably, \ned can handle interaction matrices with many millions of interactions in mere minutes. We show only linear scaling for Snapchat dataset because (a) it is the largest dataset, and (b) the scaling trends are consistent with other datasets.
\subsection{Effectiveness of Auxiliary Feature Matrix}
\label{sec:auximp}
To show the effectiveness of using meta path based auxiliary feature matrices (in short AFM), we computed average precision scores with and without them in Equ. \ref{user} and \ref{prod} for Movielens data. We fed \ned with varied number of user cluster $M$ (as this is unknown) but we fixed the number of movie clusters $N$ to 20. Table \ref{tbl:auxmovie} shows that \ned performance improved by $\approx 7.5\%$ when using auxiliary feature matrices. 
\begin{table}[t]
	\centering
	\small
	\begin{tabular}{|c|c|c|c|}
    \hline
     	 {\bf Given $M$ }&  {\bf Predicted $M$ }&{\bf Without AFM }  & {\bf With AFM }   \\ 
      \hline
      $15$ &$10$& $0.105\pm 0.11$&$0.113\pm 0.14$ \\
      $30$ &$24$&$0.198\pm 0.02$ & $0.215\pm 0.04$\\
     $50$ &$27$&$0.201\pm 0.01$ &$0.214\pm 0.06$ \\
      \hline
 	\end{tabular}
	\caption{Experiment evaluation of Auxiliary Feature Matrix on Movielens dataset.}
	\label{tbl:auxmovie} 
	\vspace{-0.2in}
\end{table}
\subsection{Parameter Sensitivity Analysis}
\label{sec:parasens}
We evaluate the sensitivity of the interpolation parameters $\alpha,\beta$ and $\gamma$  in Equ. \ref{eqn:metapath_upf}-\ref{eqn:metapath_puf} which describe involvement of each meta-path matrix. We learn stability and average precision score for different combinations of $\{\alpha, \beta,\gamma\}$ on synthetic data SYN-I. Here, we kept $\beta = \gamma$ to give equal importance to 4-step meta path. Figure \ref{fig:senstivity}(left) shows that \ned performs better when higher importance is given to direct paths i.e $\alpha \geq 0.5$. Beyond $\alpha = 0.5$ , there is no significant change in performance.  Moreover, Figure \ref{fig:senstivity} (right) shows that average precision achieves optimal values around $\alpha=0.5$, suggesting that incorporating indirect interactions via meta-paths beyond just direct paths does contribute to improved performance.  Hence, we choose $\alpha = 0.5$,$\beta = \gamma = 0.25$ in our experiments.
\begin{figure}[t!]
	\begin{center}
	\includegraphics[clip, trim=1cm 13cm 2cm 1cm, width  = 0.42\textwidth]{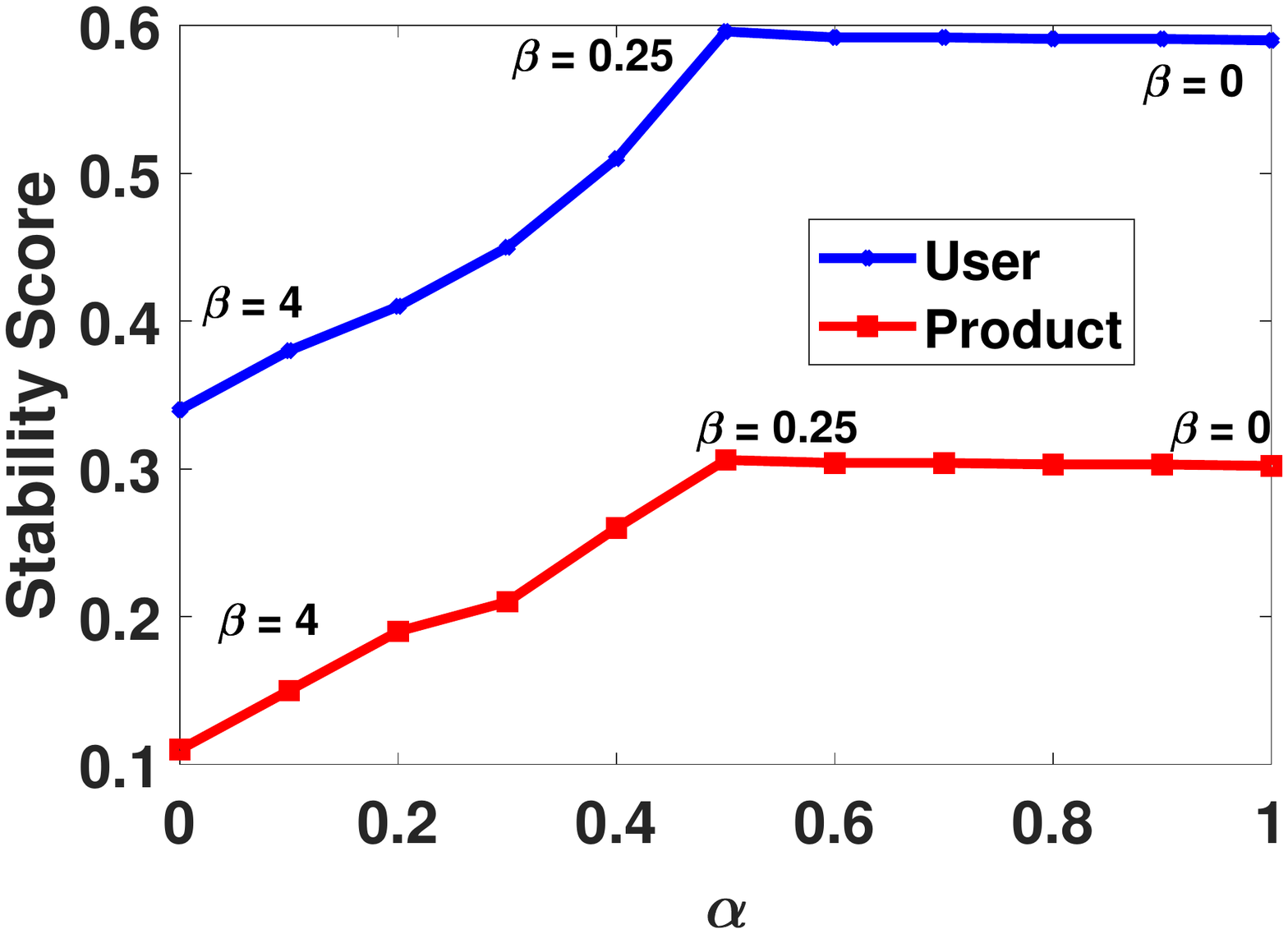}
	\includegraphics[clip, trim=0.8cm 13cm 2cm 1cm, width  = 0.42\textwidth]{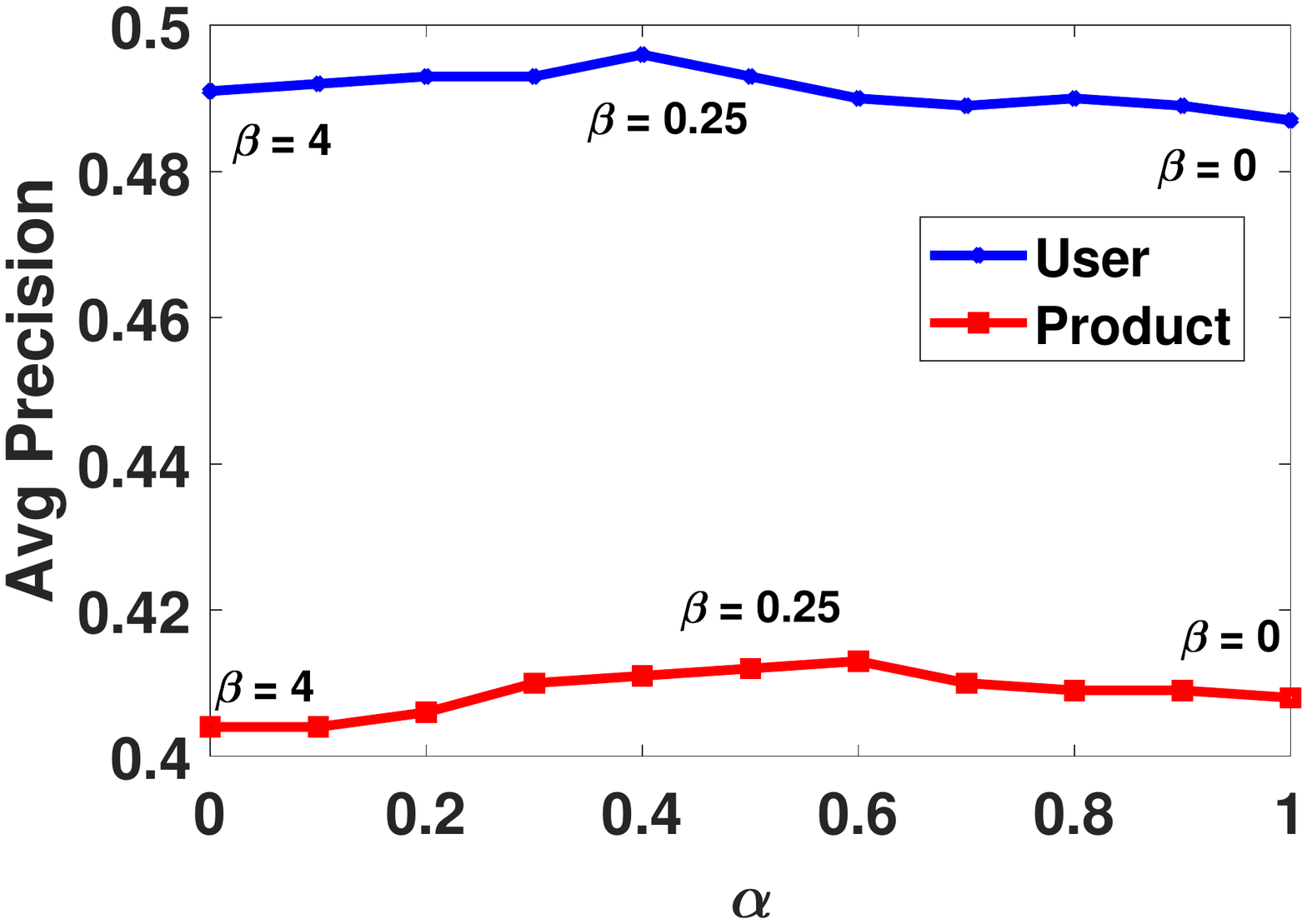}
		\caption{Sensitivity of interpolation parameters for co-cluster explanation; values of $\alpha = 0.5$ (interpolating direct path with indirect metapath matrices) produce best explanation results.  \hide{maybe 1 sentence take-home msg here?}}
		\label{fig:senstivity}
	\end{center}
	\vspace{-0.1in}
\end{figure}
\subsection{\ned at Work}
\label{subsec:snap}
We use \ned on a private viewer-publisher interaction dataset from Snapchat, consisting of $500K$ viewers, $2500$ publishers, and $5$ million viewer-publisher interactions (here, viewers correspond to users, and publishers to products, for consistency with our prior discussion).  The dataset is naturally unlabeled, making NMI and accuracy analysis infeasible; therefore, we resort to qualitative discussion.  In this dataset, each viewer is described by $22$ associated features (age, gender, country, etc.) and each publisher has $238$ features (e.g. publisher demographics, publishing category etc.). We ran \ned with the maximum number of user clusters $M = 100$ and product clusters $N = 25$. For explainability evaluation, we use top $N=5$ highest feature values computed with Equ. \ref{user} and \ref{prod}. \ned finds viewer-publisher co-clusters of various sizes in Snapchat data, as shown in Figure \ref{fig:snapstudy}. Table \ref{tbl:nichetop3cluster} provide three major niches based on high PMI values from the summary matrix $\mathbf{S}$.

\begin{table}[t]
	\centering
	\small
	\begin{tabular}{|c|c|c|}
    \hline
     	 {\bf Beauty \& Lifestyle }& {\bf Family \& Hobbies }  & {\bf Music }   \\ 
      \hline
      Cosmetics&Parenting \& child care&Urban hip-hop\\
      Women Fashion&Home Improvement&Electronic Dance Music\\
     Women Lifestyle&Shopping&Pop Music\\
     Spa beauty  &Gardening&Concerts\\
      Nail Art &Decor Design&Classical Opera\\
      \hline
 	\end{tabular}
	\caption{Illustration of niches discovered by \ned.}
	\label{tbl:nichetop3cluster} 
	\vspace{-0.1in}
\end{table}
\begin{figure}[!ht]
\vspace{-0.2in}
	\begin{center}
	\includegraphics[clip, trim=0cm 4cm 3cm 2cm, width  = 0.65\textwidth]{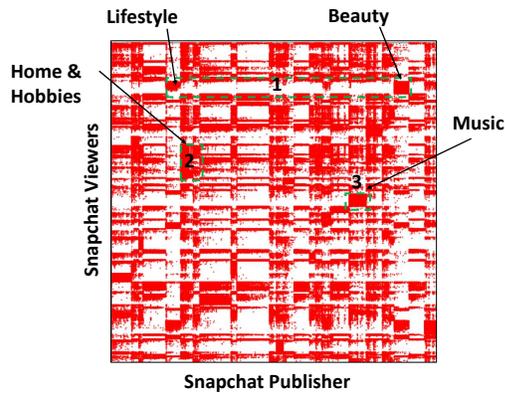}
		\caption{\ned on Snapchat data finds clusters of viewers/publishers with similar attributes coherence. The interaction matrix is carefully arranged by \ned,  revealing patterns: e.g., the young women heavily view content regarding `Beauty' category published by women creators, and they focus on the same group of content.}
		\label{fig:snapstudy}
	\end{center}
	\vspace{-0.2in}
\end{figure}
The viewers clusters in the green square labeled as `1' mostly belong to groups of young women (age between $13-20$) associated with publisher who publish content regarding ‘Beauty \& Lifestyles’ category. We observe that viewers in this dense group have similar content consumption. The viewers cluster `2' contains viewers with connections across many publishers clusters like 'Home \& Family' and 'Hobbies \& Interests'. All of the viewers are between two age groups $25-34$ and $>35$. This niche does not have any gender dominance in viewer cluster but we observe that most of the content creators or publishers are men. Niche represented by `3' is the most prominent: this is a dense group of male viewers, $>70\%$ of them are from North America and all of them are between age group $21-24$, highly associated with content related to 'World Music', published by group of male publishers between age group $25-34$. Most of the publishers are from Europe. Overall as shown in Table \ref{ned:explainresult}, \ned outperforms state-of-the-art approaches by $\approx 6\% $ stability and $\approx 4\%$ average precision improvement. Also, \ned achieves at least $10\%$ compression bits and $25\%$ run time reduction (See Table \ref{ned:explainresult}) for Snapchat dataset.  All in all, by using \ned, we are able to understand and explain the user content consumption graph in a completely unsupervised fashion.

\section{Conclusions}
In this work, we tackle the problem of discovering market niches for strategic content creation to satisfy diverse audience groups.  We pose the niche detection problem as one which involves discovering coherent co-clusters in user-product (content) interaction graph data, as well as explaining the co-clusters using nodal attributes on user and product nodes.  To our knowledge, ours is the first work which tackles such an explainable co-clustering problem.  We proposed \ned, the first niche detection framework for finding and explaining co-clusters in attributed interaction graphs.  \ned utilizes principles from mutual information to (a) propose and solve a non-negative matrix tri-factorization oriented objective to discover cohesive co-clusters, and (b) select important user and product features associated with these co-clusters using meta-path driven feature selection. In doing so, we find that \ned successfully discovers niches in user-content consumption data, and demonstrates improvements over both state-of-the-art co-clustering approaches ($\approx14\%$ accuracy) as well as candidate explanation approaches ($\approx20\%$ average precision) on multiple simulated and real-world datasets.  Finally, we show that \ned provides an advantageous speed-quality tradeoff over alternative approaches, and scales effortlessly and discovers interesting insights on large-scale interaction data with over \emph{60M} interactions, via a private dataset from Snapchat. 

\vspace{0.5in}

\noindent\fbox{%
    \parbox{\textwidth}{%
       The content of this chapter was under blind peer review at the time of thesis submission.
    }%
}

\part{Mining Streaming Tensors}
\chapter{Sampling-based Batch Incremental Tensor Decomposition}
\label{ch:8}
\begin{mdframed}[backgroundcolor=Orange!20,linewidth=1pt,  topline=true,  rightline=true, leftline=true]
{\em "How to summarize high-order data tensor? How to incrementally update those patterns over time?”}
\end{mdframed}

Tensor decompositions are invaluable tools in analyzing multimodal datasets. In many real-world scenarios, such datasets are far from being static, to the contrary they tend to grow over time. For instance, in an online social network setting, as we observe new interactions over time, our dataset gets updated in its "time" mode. How can we maintain a valid and accurate tensor decomposition of such a dynamically evolving multimodal dataset, without having to re-compute the entire decomposition after every single update? In this chapter we introduce \sambaten, a {\em Sam}pling-based {\em Ba}tch Incremental {\em Ten}sor Decomposition algorithm, which incrementally maintains the decomposition given new updates to the tensor dataset. \sambaten is able to scale to datasets that the state-of-the-art in incremental tensor decomposition is unable to operate on, due to its ability to effectively summarize the existing tensor and the incoming updates, and perform all computations in the reduced summary space. We extensively evaluate \sambaten using synthetic and real datasets. Indicatively, \sambaten achieves comparable accuracy to state-of-the-art incremental and non-incremental techniques, while being up to {\em 25-30 times faster}. Furthermore, \sambaten scales to very large sparse and dense dynamically evolving tensors of dimensions up to $100K \times 100K \times 100K$ where state-of-the-art incremental approaches were not able to operate. The content of this chapter is adapted from the following published paper:

{\em Gujral, Ekta, Ravdeep Pasricha, and Evangelos E. Papalexakis. "Sambaten: Sampling-based batch incremental tensor decomposition." In Proceedings of the 2018 SIAM International Conference on Data Mining, pp. 387-395. Society for Industrial and Applied Mathematics, 2018.}

\section{Introduction}
\label{sambaten:intro}
Tensor decomposition is a very powerful tool for many problems in data mining \cite{kolda2005higher,papalexakis2016tensors}. \hide{, machine learning \cite{pmlrv51anandkumar16}, chemometrics \cite{bro1997parafac}, signal processing \cite{sidiropoulos2004low} to name a few areas. }The success of tensor decomposition lies in its  capability of finding complex patterns in multi-way settings, by leveraging higher-order structure and correlations within the data.  The dominant tensor decompositions are CP/PARAFAC (henceforth referred to as CP), which extracts interpretable latent factors from the data, and Tucker, which estimates the joint subspaces of the tensor. In this work we focus on the CP decomposition, which has been shown to be extremely effective in exploratory data mining time and time again \cite{papalexakis2016tensors}.
\begin{figure}[!ht]
		\vspace{-0.2in}
	\begin{center}
		\includegraphics[width = 0.7\textwidth]{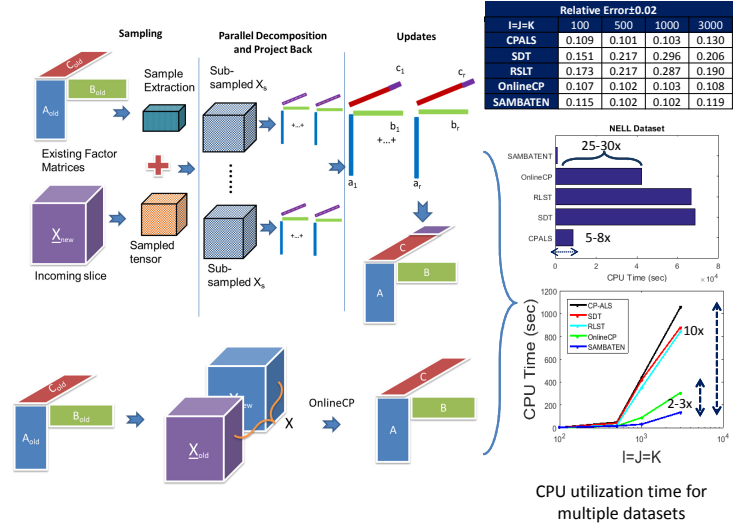}
		\caption{\sambaten outperforms state-of-the-art baselines while maintaining competitive accuracy.}
		\label{fig:crown_jewel}
	\end{center}
	\vspace{-0.3in}
\end{figure}

In a wide array of modern real-world applications, \hide{especially in the age of Big Data,} data are far from being static. To the contrary, data get updated dynamically. For instance, in an online social network, new interactions occur every second and new friendships are formed at a similar pace. In the tensor realm, we may view a large proportion of these dynamic updates as an introduction of new ``slices'' in the tensor: in the social network example, new interactions that happen as time evolves imply the introduction of new snapshots of the network, which grow the tensor in the ``time'' mode. A tensor decomposition in that tensor can discover {\em communities} and their evolution over time. How can we handle such updates in the data without having to re-compute the decomposition whenever an update arrives, but incrementally update the existing results given the new data? In the community detection example, how can we track the evolution of the existing communities, and discover new ones, for the new updates that continuously arrive?

Computing the decomposition for a dynamically updated tensor is challenging, with the challenges lying, primarily, on two of the three V's in the traditional definition of Big Data: {\em Volume} and {\em Velocity}. As a tensor dataset is updated dynamically, its volume increases to the point that techniques which are not equipped to handle those updates {\em incrementally}, inevitably fail to execute due to the sheer size of the data. Furthermore, even though the applications that tensors have been successful so far  do not require real-time execution per se, the decomposition algorithm must, nevertheless, be able to ingest the updates to the data at a rate that will not result in the computation being ``drowned'' by the incoming updates.

The majority of prior work has focused on the Tucker Decomposition of incrementing tensors \cite{papadimitriou2005streaming,SunITA,fanaee2015multi}, however very limited amount of work has been done on the CP. Nion and Sidiropoulos \cite{nion2009adaptive} proposed  two methods namely Simultaneous Diagonalization Tracking (SDT) and Recursive Least Squares Tracking (RLST) and most recently, \cite{zhou2016accelerating} introduced the OnlineCP decomposition for higher order online tensors. Even though prior work in incremental CP decomposition, by virtue of allowing for incremental updates to the already computed model, is able to deal with {\em Velocity}, when compared to the naive approach of re-executing the entire decomposition on the updated data, every time a new update arrives, it falls short when the {\em Volume} of the data grows.

We show a snapshot of our results in Figure \ref{fig:crown_jewel}: \sambaten is faster than all state-of-the-art methods on data that the baselines were able to operate on. Furthermore, \sambaten was able to scale to, both dense and sparse, dynamically updated tensors, where none of the baselines was able to run. Finally, \sambaten achieves comparable accuracy to  existing incremental and non-incremental methods.
Our contributions are summarized as follows:
\begin{itemize}[noitemsep]
	\item {\bf Novel scalable algorithm}: \hide{We introduce  \sambaten, a novel scalable  and  parallel incremental tensor decomposition algorithm for efficiently computing the CP decomposition of incremental tensors.} The advantage of \sambaten stems from the fact that it only operates on small summaries of the data at all times, thereby being able to maintain its efficiency regardless of the size of the full data. 
	To the best of our knowledge, this is the first incremental tensor decomposition which effectively leverages sparsity in the data.
	\hide{\item {\bf Quality control}: As a tensor is dynamically updated, some of the incoming updates may contain rank-deficient structure, which, if not handled appropriately, can pollute the results. We equip \sambaten with a quality control option, which effectively determines whether an update is rank-deficient, and subsequently handles the update in a way that it does not affect latent factors that are not present in that update.}
	\item {\bf Extensive experimental evaluation}: Through experimental evaluation on six real-world datasets with sizes that range up to $70$GB, and synthetic tensors that range up to $100K\times 100K \times 100K$, we show that\sambaten can incrementally maintain very accurate decompositions, faster and in a more efficient and scalable manner than state-of-the-art methods.
	\hide{\item {\bf Stability and Scalability}: Based on extensive experimental analysis on real and synthetic datasets, our algorithm \sambaten produces more stable decompositions than existing online approaches, as well as better scalability.}
\end{itemize}

{\bf Reproducibility}: We make our Matlab implementation publicly available at link \footnote{\sambatencodeurl}. Furthermore, all the datasets we use for evaluation are publicly available.
\section{Related work}
\label{sec:related}
Incremental tensor methods in the literature can be categorized into three main categories:  1) Tucker
decomposition, 2) CP decomposition, 3) Tensor completion

\noindent{\bf Tucker Decomposition}:
Online tensor decomposition was first proposed by Sun \textit{el at.}\cite{SunITA} as ITA (Incremental Tensor Analysis), describing the three variants of Incremental Tensor Analysis. First,  DTA i.e. Dynamic tensor analysis which is based on calculation of co-variance of matrices in traditional higher-order singular value decomposition in an incremental fashion. Second, with help of SPIRIT algorithm, they found approximation of DTA named as Stream Tensor Analysis (STA). Third, they proposed window-based tensor analysis (WTA). To improve the efficiency of DTA, it uses a sliding window strategy. Liu \textit{el at.}\cite{papadimitriou2005streaming} proposed an efficient method to diagonalize the core tensor to overcome this problem. \hide{Other approaches replace SVD with incremental SVD to improve the efficiency.} Hadi \textit{el at.} \cite{fanaee2015multi} proposed the multi-aspect-streaming tensor analysis (MASTA) method that \hide{relaxes  constraint} and allows the tensor to simultaneously grow in all modes.  

\noindent{\bf CP Decomposition}:
There is very limited study on online CP decomposition methods. Phan \textit{el at.} \cite{phan2011parafac} had developed a theoretic approach GridTF to large-scale tensors processing based on an extension to CP's fundamental mathematics theory. They used divide and concur technique to get sub-tensors and fuses the output of all factorization to achieve final factor matrices which is proved to be same as decomposing the whole tensor using CP decomposition. Its potential of concurrent computing methods to adapt the engineering applications remains unclear. Nion \textit{el at.} \cite{nion2009adaptive}, proposed two algorithms that focus on CP decomposition namely SDT (Simultaneous Diagonalization Tracking)  that incrementally perform the SVD of the unfolded tensor; and RLST (Recursive Least Squares Tracking) , which recursively updates the decomposition factors by minimizing the mean squared error. The latest related work is OnlineCP, proposed by Zhou, \textit{el at.} \cite{zhou2016accelerating}, is an online CP decomposition method, where the the latent factors are updated when there are new data. 

\noindent{\bf Tensor Completion}:
The main difference between completion and decomposition techniques is that in completion ``zero'' values are considered ``missing'' and are not part of the model, and the goal is to impute those missing values accurately, rather than extracting latent factors from the observed data. 

The earliest work on incremental tensor completion traces back to  \cite{mardani2015subspace}, and recently,  Qingquan \textit{el at.}\cite{song2017multi},  proposed streaming tensor completion based on block partitioning.
\section{Problem Formulation}
\label{sambaten:problem}
In many real-world applications, data grow dynamically. In a time-evolving social network, we observe user interactions every few seconds, which can be translated to new tensor slices, after fixing the temporal granularity (a problem which, on its own merit, is very hard to solve optimally, and we do not address in this chapter). This incremental property of data  gives rise to the need for an on-the-fly update of the existing decomposition, which we name incremental tensor decomposition. Notice that the literature (and thereby this chapter) uses the terms ``incremental'', ``dynamic'', and ``online'' interchangeably.  In such scenarios, data updates happen very fast which make traditional (non-incremental) methods to collapse because they need to recompute the decomposition for the entire dataset.
\hide{
\begin{figure}[!ht]
	\vspace{-0.1in}
	\begin{center}
		\includegraphics[clip,trim=0.2cm 1cm 0.2cm 1cm,width  = 0.7\textwidth]{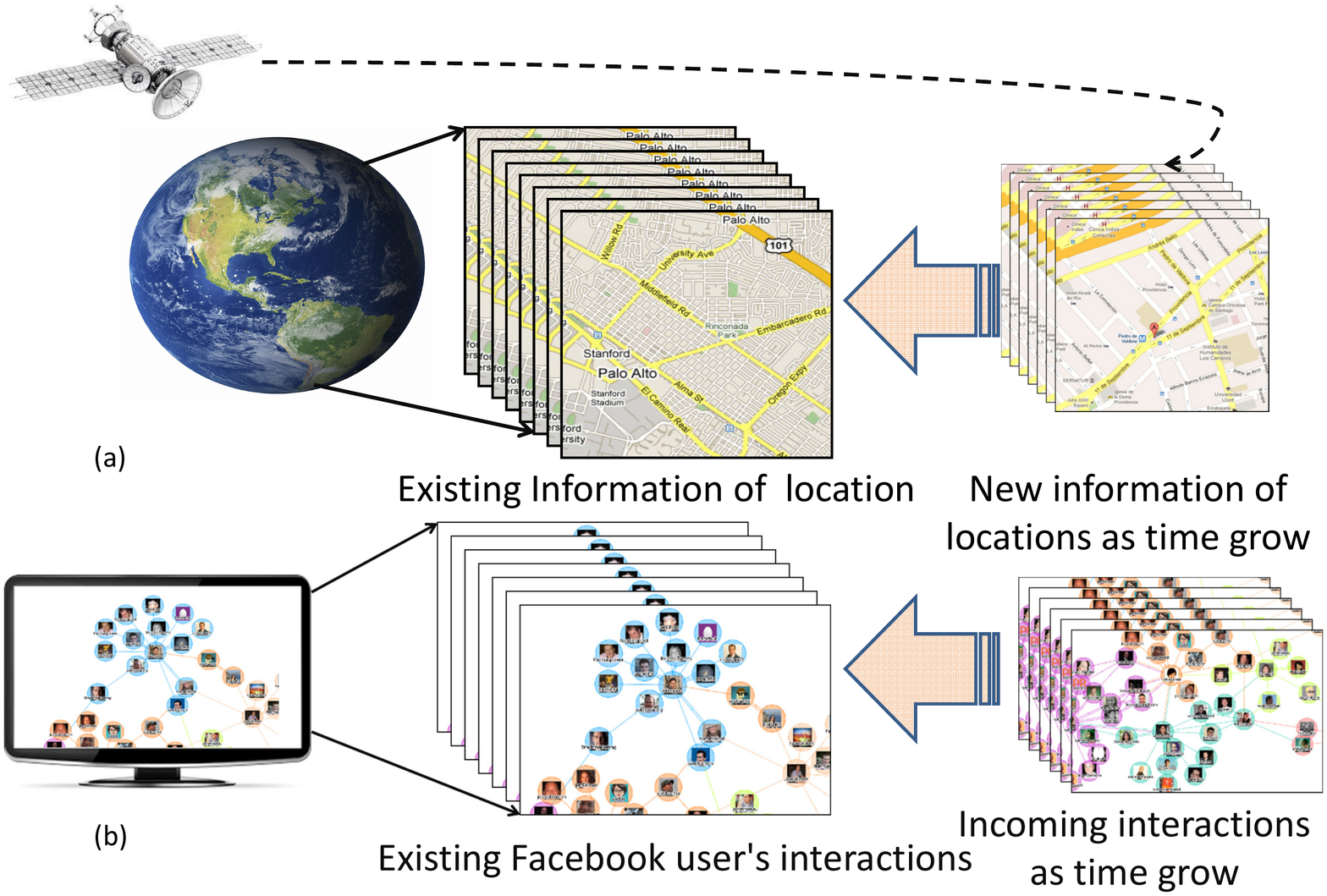}
				\caption{Real life dynamic data examples (a) collecting new location update from satellite to GPS recommendations systems (b) growing Facebook interaction between people over the time.}
		\label{fig:realLife}
	\end{center}
	\vspace{-0.2in}
\end{figure}
}

We focus on a 3-mode tensor one of whose dimensions are growing with time. However, the problem definition (and our proposed method) extends to any number of modes. Let us consider $\tensor{X}(t)$=$\mathbb{R}^{I\times J \times K_1(t)}$ at time $t$. The CP decomposition of $\tensor{X}(t)$ is given as :
\[
\begin{array}{c}
\tensor{X}^{(1)}(t)\approx (\mathbf{A}(t) \odot \mathbf{B}(t))\mathbf{C}^{T}(t) \approx \mathbf{L}(t)\mathbf{C}^{T}(t)
\end{array}
\]
where $\mathbf{L}(t)=(\mathbf{A}(t) \odot \mathbf{B}(t))$ of dimension $IJ\times R$ and $\mathbf{C}^{T}(t)$ is of dimension $K_{1}\times R$. When new incoming slice $\tensor{X}(t^{'})$=$\mathbb{R}^{I\times J \times K_2(t^{'})}$ is added in mode 3, required decomposition at time $t^{'}$ is  :
\[
\begin{array}{c}
\tensor{X}^{(1)}(t+t^{'}) \approx \mathbf{L}(t+t^{'})\mathbf{C}^{T}(t+t^{'})
\end{array}
\]
where $\mathbf{L}(t+t^{'})=(\mathbf{A}(t+t^{'}) \odot \mathbf{B}(t+t^{'}))$ of dimension $IJ\times R$ and $\mathbf{C}^{T}(t+t^{'})$ is of dimension $(K_{1}+K_{2})\times R$.\\

The problem that we solve is the following:

\begin{mdframed}[linecolor=red!60!black,backgroundcolor=gray!20,linewidth=1pt,    topline=true,rightline=true, leftline=true] 
{\bf Given} (a) an existing set of decomposition results $\mathbf{A}(t),\mathbf{B}(t)$ and $\mathbf{C}(t)$ of $R$ components, which approximate tensor $\tensor{X}_{old}$ of size $ I \times J \times K_1$ at time \textit{t} , (b) new incoming batch of slices in form of tensor $\tensor{X}_{new}$ of size $ I \times J \times K_2$ at any time $t^{'}$, find updates of $\mathbf{A}(t^{'}),\mathbf{B}(t^{'})$ and $\mathbf{C}(t^{'})$ {\bf incrementally} to approximate tensor $\tensor{X}$ of dimension $ I \times J \times K$, where K =$ K_1 + K_2$ after appending new slice or tensor to $3^{rd}$ mode while maintaining a comparable accuracy with running the full CP decompositon on the entire updated tensor $\tensor{X}$.
\end{mdframed}

To simplify notation, we will interchangeably refer to $\mathbf{A}(t)$ as $\mathbf{A}_{old}$ (when we need to refer to specific indices of that matrix), and similarly for $\mathbf{A}(t')$ we shall refer to it as $\mathbf{A}^{'}$.
\section{Proposed Method: SamBaTen}
\label{sambaten:method}
As we mention in the introduction, there exists a body of work in the literature that is able to efficiently and incrementally update the CP decomposition in the presence of incoming tensor slices \cite{nion2009adaptive,zhou2016accelerating}. However, those methods fall short when the size of of the dynamically growing tensor increases, and eventually are not able to scale to very large dynamic tensors. The reason why this happens is because these methods operate on the {\em full data}, and thus, even though they incrementally update the decomposition (avoiding to re-compute it from scratch), inevitably, as the size of the full data grows, it takes a toll on the run-time and scalability.

In this chapter we propose \sambaten, which takes a different view of the solution, where instead of operating on the full data, it operates on a summary of the data. Suppose that the ``complete'' tensor (i.e., the one that we will eventually get when we finish receiving updates) is denoted by $\tensor{X}$. Any given incoming slice (or even a batch of slice updates) can be, thus, seen as a sample of that tensor, $\tensor{X}$ where the sampled indices in the third mode (which we assume is the one receiving the updates) are the indices of the incoming slice(s).  Suppose, further, that given a set of sample tensors (which are drawn by randomly selecting indices from all the modes of the tensor) we can approximate the original tensor with high-accuracy (which, in fact, the literature has shown that it is possible \cite{papalexakis2012parcube,erdos2013walk}). Therefore, when we receive a new batch of slices as an update, if we update those samples with the new indices, then we should be able to compute a decomposition very efficiently which incorporates the slice updates, and approximates the updated tensor well. A visual summary of \sambaten is shown in Figure \ref{fig:method}.

\begin{figure}[!ht]
\begin{center}
		\includegraphics[clip, trim=1.2cm 8.7cm 1.5cm 0.5cm,width  = 0.7\textwidth]{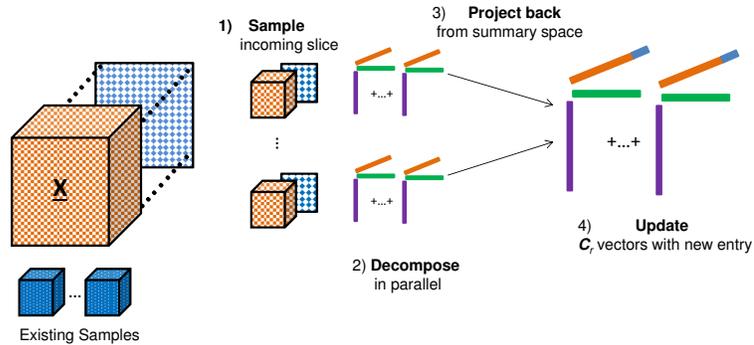}
		\caption{\sambaten: Sampling-based Batch Incremental Tensor Decomposition: 1) Sample incoming tensor into sub-tensors,  2) run parallel decompositions on the samples, 3) project back the results into the original space, and, finally, 4) update the incrementally growing factor matrix $\mathbf{C}$.}
		\label{fig:method}
	\end{center}
	\vspace{-0.2in}
\end{figure}
\subsection{The heart of \sambaten}
The algorithmic framework we propose is shown in Figure \ref{fig:method} and is described below:
We assume that we have an existing set of decomposition results, as well as a set of {\em summaries} of the tensor, before the update. Summaries are in the form of sampled sub-tensors, as described in the text below. For simplicity of description, we assume that we are receiving updated slices on the third mode, which in turn have to add new rows to the $\mathbf{C}$ matrix (that corresponds to the third mode). We, further, assume that the updates come in batches of new slices, which, in turn, ensures that we see a mature-enough update to the tensor, which contains useful structure. Trivially, however, \sambaten can operate on singleton batches.
In the following lines, $\tensor{X}$ is the tensor prior to the update and $\tensor{X}_{new}$ is the batch of incoming slices. Given an update, \sambaten performs the following steps:

\textbf{Sample}: The rationale behind \sambaten is that each batch $\tensor{X}_{new}$ can be seen as a sample of third-mode indices of (what is going to be) the full tensor. In this step, we are going to merge these incoming indices with an already existing set of sampled tensors. In order to obtain those pre-existing samples, we follow a similar approach to \cite{papalexakis2012parcube}. Namely, we sample indices from the tensor $\tensor{X}$ based on a measure of importance.
To determine the importance for each mode $m$ and then sample the indices using this measure as a sampling weight divided by its probability. An appropriate measure of importance (MoI) is the \textbf{sum-of-squares} of the tensor for each mode.  For the first mode , MoI is defined as:
$x_a(i)=\sum_{j=1}^J \sum_{k=1}^K \tensor{X}(i,j,k)^2$
for  $i \in$ (1,I). Similarly, we can define the MoI for modes 2 and 3.
\hide{
\begin{equation}\label{eq:2}
x_b(j)=\sum_{i=1}^I \sum_{k=1}^K \tensor{X}(i,j,k)^2, x_c(k)=\sum_{i=1}^I \sum_{j=1}^J \tensor{X}(i,j,k)^2  
\end{equation}
for  $j \in$ (1, J) and $k \in$  (1,K).
}

We sample each mode of $\tensor{X}$ without replacement, using the above MoI to bias the sampling probabilities. With $s$ as sampling factor, i.e. if $\tensor{X}$ has size $I \times J \times K$, then $\tensor{X}_s$ will be of size $\frac{I}{s},\frac{J}{s},\frac{K}{s}$. Sampling rate for each mode is independent from each other, and in fact, different rates can be used for imbalanced modes. In the case of sparse tensors, the sample will focus on the dense regions of the tensor which contains most of the useful structure. In the case of dense tensors, the sample will give priority to the regions of the tensor with the highest variation. 

After forming the sample summary $\tensor{X}_s$ for $\tensor{X}$, we merge it with the samples obtained from the intersection of the third-mode indices of $\tensor{X}_{new}$ and the already sampled indices in the remaining modes, so that the final sample is equal to
$\tensor{X}_s$=$\tensor{X}(I_s,J_s,K_s \cup [K+1 \cdots K_{new}])$, where $K+1 \cdots K_{new}$ are the third-mode indices of $\tensor{X}_{new}$.

Due to the randomized nature of this summarization, we need to draw multiple samples, in order to obtain a reliable set of summaries. Each such independent sample is denoted as $\tensor{X}_s^{(r)}$. In the case of dense tensors, obtaining multiple, independent random samples helps summarize as much of the useful variation as possible. In fact, we will see in the experimental evaluation that increasing the number of samples, especially for dense tensors, improves accuracy. 

In \cite{papalexakis2012parcube} the authors note that in order for their method to work, a set of anchor indices must be common between all samples, so that, later on, we can establish the correct correspondence of latent factors. However, in \sambaten we do not have to actively fix a set of indices across sampling repetitions. When we sample $I_s,J_s,K_s$ each time, those indices correspond to a portion of the decomposition that is already computed. Therefore, the entire set of indices $I_s,J_s,K_s$ can serve as the set of anchors. This is a major advantage compared to \cite{papalexakis2012parcube}, since \sambaten 1) does not need to commit to a set of fixed indices for all samples a-priori, which, due to randomization may happen to represent a badly structured portion of the tensor, 2) does not need to be restricted in a ``small'' set of fixed common indices (which is required in \cite{papalexakis2012parcube} in order to ensure that sufficiently enough new indices are sampled across repetitions), but to the contrary, is able to use a larger number of anchor indices to establish correspondence more reliably, and 3) does not require any synchronization between different sampling repetitions, which results in higher parallelism potential.

\textbf{Decompose}: Having obtained $\tensor{X}_s$, from the previous step, \sambaten decomposes the summary using any state-of-the-art algorithm, obtaining factor matrices $[\mathbf{A}_i^{'}, \mathbf{B}_i^{'}, \mathbf{C}_i^{'}]$. For the purposes of this chapter, we use the Alternating Least Squares (ALS) algorithm for the CP decomposition, which is probably the most well studied algorithm for CP.

\begin{algorithm2e}[H]
\small
   \caption{\sambaten} \label{alg:proposed}
     \label{sambaten:ADMM}
	 \SetAlgoLined 
	 \KwData{ Tensor $\tensor{X}_{new}$ of size $I \times J \times K_{new}$, Factor matrices $\mathbf{A}_{old},\mathbf{B}_{old}, \mathbf{C}_{old}$, sampling factor $s$ and number of repetitions $r$.}
\KwResult{Factor matrices $\mathbf{A}, \mathbf{B}, \mathbf{C}$, $\lambda$.}
\For{$i = 1$ to $r$} {
 Compute $\mathbf{x}_a, \mathbf{x}_b$ and $\mathbf{x}_c$.\\
 Sample a set of indices $I_s,J_s,K_s$ from $\tensor{X}$ without replacement using $\mathbf{x}_a(i)/\sum\limits_{i=1}^I x_a(i) $ as probability (accordingly for the rest).\\
 $\tensor{X}_s$=$\tensor{X}(I_s,J_s,K_s \cup [K+1 \cdots K_{new}])$\\
 $[\mathbf{A}_i^{'}, \mathbf{B}_i^{'}, \mathbf{C}_i^{'}] = $ CP$\left( \tensor{X}_s, R \right)$.\\
\nl Normalize $\mathbf{A}_i^{'}, \mathbf{B}_i^{'}, \mathbf{C}_i^{'}$ (as in the text) and absorb scaling in $\boldsymbol{\lambda}$.\\
Compute optimal matching between the columns of $\mathbf{A}_{old}, \mathbf{B}_{old}, \mathbf{C}_{old}$ and $\mathbf{A}_i^{'}, \mathbf{B}_i^{'}, \mathbf{C}_i^{'}$ as in the text\\
Update only zero entries of $\mathbf{A}, \mathbf{B}, \mathbf{C}$ that correspond to sampled entries of $\mathbf{A}_i^{'}, \mathbf{B}_i^{'}, \mathbf{C}_i^{'}$\\
Obtain $\mathbf{C}_{new}$ of size $K_{new} \times R$ by taking last $K_{new}$ rows of $\mathbf{C}^{'}$.\\ 
Use a shared copy of $\mathbf{C}_{new}$ and average out its entries in a column-wise fashion across different sampling repetitions. \\
}
Update $\mathbf{C}$ of size $(K_{new}+K_{old}) \times R$ as :
$C =\begin{bmatrix}
           \mathbf{C}_{old} \\
          \mathbf{C}_{new}\\
         \end{bmatrix}$\\
Update scaling $\boldsymbol{\lambda}$ as the average of the previous and new value.\\
\KwRet $\mathbf{A},\mathbf{B},\mathbf{C},\boldsymbol{\lambda}$
\end{algorithm2e}

\textbf{Project back}: The CP decomposition is unique (under mild conditions) up to permutation and scaling of the components \cite{papalexakis2016tensors}. This means that, even though the existing decomposition $[\mathbf{A}_{old},\mathbf{B}_{old}, \mathbf{C}_{old}]$ may have established an order of the components, the decomposition $[\mathbf{A}_i^{'}, \mathbf{B}_i^{'}, \mathbf{C}_i^{'}]$ we obtained in the previous step is very likely to introduce a different ordering and scaling, as a result of the aforementioned permutation and scaling ambiguity. Formally, the sampled portion of the existing factors and the currently computed factors are connected through the following relation Equ. \ref{sambateneq:rel}:
\begin{equation}
\label{sambateneq:rel}
 [\mathbf{A}_{old}(I_s,:),\mathbf{B}_{old}(J_s,:), \mathbf{C}_{old}(K_s,:)] = 
 [\mathbf{A}_i^{'}(I_s,:) \boldsymbol{\Lambda \Pi}, \mathbf{B}_i^{'}(J_s,:) \boldsymbol{\Pi}, \mathbf{C}(K_s,:)_i^{'} \boldsymbol{\Pi}]
 \end{equation}

where $\boldsymbol{\Lambda}$ is a diagonal scaling matrix (which without loss of generality we absorb on the first factor), and $\boldsymbol{\Pi}$ is a permutation matrix that permutes the order of the components (columns of the factors).

In order to tackle the scaling ambiguity, we need to normalize the results in a consistent manner. In particular, we normalize such that each column of the newly computed factors which spans the indices that are shared with $[\mathbf{A}_{old},\mathbf{B}_{old}, \mathbf{C}_{old}]$ has unit norm:
$\mathbf{A}_i^{'}(:,f)=\frac{\mathbf{A}_i^{'}}{||\mathbf{A}_i^{'}(I_s,f)||_2},$
and accordingly for the remaining factors. Note that for $\mathbf{A}_i^{'}$,  trivially holds that $\mathbf{A}_i^{'}(I_s,f) = \mathbf{A}_i^{'}(:,f)$ and similarly for $\mathbf{B}_i^{'}$. After normalizing, the relation between the existing factors and the currently computed is $\mathbf{A}_{old}(I_s,:) = \mathbf{A}_i^{'} \boldsymbol{\Pi}$ (and similarly for the rest). Each iteration retains a copy of $[\mathbf{A}_{old}(I_s,:),\mathbf{B}_{old}(J_s,:), \mathbf{C}_{old}(K_s,:)]$ which will serve as the anchor for disambiguating the permutation of components. We normalize $[\mathbf{A}_{old}(I_s,:),\mathbf{B}_{old}(J_s,:), \mathbf{C}_{old}(K_s,:)]$ to unit norm as well, and the reason behind that lies in Lemma \ref{lemma}:
\begin{lemma}
\label{lemma}
Consider $\mathbf{a} = \mathbf{A}_i^{'}(:,f_1)$ and $\mathbf{b} = \mathbf{A}_{old}(:,f_2)$. If $f_1$ and $f_2$ correspond to the same latent CP factor, in the noiseless case, then $\mathbf{a}^T \mathbf{b} = 1$ otherwise  $\mathbf{a}^T \mathbf{b} < 1$.
\end{lemma}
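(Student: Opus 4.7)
My plan is to exploit the unit-norm normalization that \textbf{Step 3 (Project back)} enforces on both the anchor factor $\mathbf{A}_{old}$ and the newly computed $\mathbf{A}_i^{'}$, and then split into two cases based on whether the two column indices $f_1,f_2$ label the same latent CP component. First I would verify that, in the noiseless case, the CP decomposition of the sampled sub-tensor $\tensor{X}_s$ recovers exactly the same latent factors as the global decomposition, restricted to the sampled indices, up to the scaling/permutation ambiguity captured by Eq.~(\ref{sambateneq:rel}). This is guaranteed by the essential uniqueness of CP (mild Kruskal-type conditions), so each column of $\mathbf{A}_i^{'}$ is parallel to exactly one column of $\mathbf{A}_{old}$.

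Given that, in Case~1 (\emph{same factor}, $f_1 \leftrightarrow f_2$), the two columns are parallel, so after unit-norm normalization we have $\mathbf{a} = \mathbf{b}$ (sign ambiguities can be fixed by the normalization convention, since the scaling is absorbed into $\boldsymbol{\lambda}$). Therefore $\mathbf{a}^T\mathbf{b} = \|\mathbf{a}\|_2^2 = 1$, which gives the first claim. In Case~2 (\emph{different factors}), the Cauchy--Schwarz inequality gives
\[
\mathbf{a}^T\mathbf{b} \;\le\; |\mathbf{a}^T\mathbf{b}| \;\le\; \|\mathbf{a}\|_2\,\|\mathbf{b}\|_2 \;=\; 1,
\]
with equality only when $\mathbf{a}$ and $\mathbf{b}$ are collinear. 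Collinearity would mean that two distinct CP components collapse to the same direction in the first-mode factor, which contradicts the linear-independence assumption required for CP's essential uniqueness (Kruskal's condition). Hence $\mathbf{a}^T\mathbf{b} < 1$ strictly.

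\textbf{Main obstacle.} The delicate step is Case~2: the strict inequality $\mathbf{a}^T\mathbf{b}<1$ needs the two factor directions to be genuinely distinct, which implicitly requires the CP rank to be correctly identified \emph{on the sampled sub-tensor} and the corresponding columns of $\mathbf{A}$ to be linearly independent. If the sampling rate $s$ is chosen too aggressively, a latent component may not be adequately represented in $\tensor{X}_s$, causing two sampled columns to appear nearly parallel and degrading the strict inequality to an approximate one. I would therefore state the lemma under the standing assumption that the sub-tensor preserves the CP rank (which is the regime \sambaten operates in, and which the earlier discussion of ParCube-style sampling already invokes). With that caveat, the proof reduces cleanly to the two-line Cauchy--Schwarz argument above, and the lemma directly justifies using the inner product as the matching score for resolving the permutation $\boldsymbol{\Pi}$ in the subsequent component-matching step of Algorithm~\ref{alg:proposed}.
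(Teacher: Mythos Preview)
Your proposal is correct and takes essentially the same approach as the paper: both arguments rest on the unit-norm normalization from the Project-back step together with the Cauchy--Schwarz inequality $\mathbf{a}^T\mathbf{b} \le \|\mathbf{a}\|_2\|\mathbf{b}\|_2 = 1$, with equality precisely when $\mathbf{a}=\mathbf{b}$. If anything, your treatment is more careful than the paper's two-line proof, since you explicitly justify the \emph{strict} inequality in the ``otherwise'' case by invoking Kruskal-type linear independence of the factor columns, whereas the paper only states that equality holds in the same-factor case and leaves the strictness of the other direction implicit.
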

\begin{proof}
From Cauchy-Schwartz inequality  $\mathbf{a}^T \mathbf{b} \leq \|\mathbf{a}\|_2 \|\mathbf{b}\|_2$. The above inequality is maximized when $\mathbf{a} = \mathbf{b}$ and for unit norm $\mathbf{a}, \mathbf{b}$, $\mathbf{a}^T \mathbf{b} \leq 1$. Therefore, if $\mathbf{a} = \mathbf{b}$, which happens when $f_1$ and $f_2$ correspond to the same latent CP factor, $\mathbf{a}^T \mathbf{b} =  1$.
\end{proof}

Lemma \ref{lemma} is a guide for identifying the permutation matrix $\boldsymbol{\Pi}$: For every column of $\mathbf{A}_i^{'}$ we compute the inner product with every column of $\mathbf{A}_{old}(I_s,:)$ and compute a matching when the inner product is equal (or close) to 1. Given a large-enough number of rows for  $\mathbf{A}_i^{'}$ (which is usually the case, since we require a large-enough sample of the tensor in order to augment it with the update and compute the factor updates accurately), this matching can be computed  reliably even in noisy real-world data, as we show in the experimental evaluation.

\textbf{Update results}: 
After appropriately permuting the columns of $\mathbf{A}_i^{'}, \mathbf{B}_i^{'}, \mathbf{C}_i^{'}$, we have all the information needed to update our model. Returning to the problem definition of Section \ref{sambaten:problem}, $\mathbf{A}_i^{'}$ contains the updates to the rows within $I_s$ for $\mathbf{A}(t)$ (and similarly for $\mathbf{B}$ and $\mathbf{C}$). Even though $\mathbf{A, B}$ do not increase their number of rows over time, the incoming slices may contribute valuable new estimates to the already estimated factors. Thus, for the already existing portions of $\mathbf{A, B, C}$ we only update the zero entries that fall within the range of $I_s, J_s$, and $K_s$ respectively. Finally, $ \mathbf{C}_i^{'}([K+1 \cdots K_{new}],:)$ contains the factors for the newly arrived slices, which need to be merged to the already existing columns of $\mathbf{C}$. After properly permuting the columns of $\mathbf{C}_i^{'}$, we accumulate the lower portion of the  $\mathbf{C}_i^{'}$ (corresponding to the new slices) into $\mathbf{C}_{new}$ and we take the column-wise average of the rows to-be-appended to $\mathbf{C}$, across repetitions. Finally, we update
$
   \mathbf{C}(t') =\begin{bmatrix}
             \mathbf{C}_{old} \\
          \mathbf{C}_{new}\\
         \end{bmatrix}.
         $
         
\noindent{\bf Guarantees for Correctness}  Lemma \ref{lemma} is essentially providing a guarantee for the correctness of \sambaten. In particular, Lemma \ref{lemma} ensures that, under mild conditions that make the CP decomposition unique and identifiable, \sambaten will discover the correct latent factors, disambiguate the permutation and scaling ambiguity, and update the correct columns of the factor matrix that is to be augmented. As we will also empirically demonstrate in the experimental evaluation, indeed, \sambaten is able to produce correct results that are on par with state-of-the-art methods.
 
\subsection{Dealing with rank deficient updates}
\label{sec:getrank}
So far, the above algorithm description is based on the assumption that each one of the sampled tensors $\tensor{X}_s$ we obtain are full-rank, and the CP decomposition of that tensor is identifiable (assumption that is also central to previous works \cite{papalexakis2012parcube}). However, this assumption glosses over the fact that in reality, updates to our tensor may be rank-deficient. In other words, even though $\mathbf{A}(t),\mathbf{B}(t),\mathbf{C}(t)$ have $R$ components, the update may contain $R_{new}$ components, where $R_{new} < R$. If that happens, then the matching as described above is going to fail, and this inevitably leads to very low quality results. Here, we tackle this issue by adding an extra layer include graphics of quality control when we compute the CP decomposition, right before line 5 of Algorithm \ref{alg:proposed}: we estimate the number of components $R_{new}$ in $\tensor{X}_s$ and instead of the ``universal'' rank $R$, which may result in low quality factors, we use $R_{new}$ and we accordingly match only those $R_{new}$ to their most likely matches within the existing $R$ components.  Estimating the number of components is a very hard problem, however, there exist efficient heuristics in the literature, such as the Core Consistency Diagnostic (\effcor) \cite{bro2003new} which gives a quality rating for a computed CP decomposition. By successively trying different candidate ranks for $\tensor{X}_s$, we estimate its actual rank, as shown in Algorithm  \ref{alg:getRank} (\getrank), and use that instead. For efficiency we use a recent implementation of \effcor that is especially tailored for exploiting sparsity \cite{papalexakis2015fast}. In the experimental evaluation we demonstrate that using \getrank indeed results in higher-quality latent factors
 
\begin{algorithm2e}[H]
 \caption{\getrank method} 
 \label{alg:getRank}
 
\KwData{ Tensor $\tensor{X}$, maximum rank $R$ , maximum no. of iterations $it$.}
\KwResult{ $R_{new}$}
\For{1 to $R$ }{
\For{$j=1$ to $it$ } {
 {Run CP Decomposition on $\tensor{X}$ with rank $i$ and obtain $f_i$} \\
 {Run \effcor($\tensor{X}_s$,$f_i$) and obtain $p(i,j)$} \\
 }
 }
Sort $p$ and get top 1 index $idx_1$.\\
\KwRet {$R_{new}$ =$idx_1$}
\end{algorithm2e}
 \section{Experiments}
\label{sambaten:experiments}

In this section we extensively evaluate the performance of \sambaten on multiple synthetic and real datasets, and compare its performance with state-of-the-art approaches. We experiment on the different parameters of \sambaten and the baselines, and how that affects performance. We implemented \sambaten in Matlab using the functionality of the Tensor Toolbox for Matlab \cite{bader2015matlab}  which supports very efficient computations for sparse tensors. Our implementation is available at link \footnote{\sambatencodeurl}. We used Intel(R) Xeon(R), CPU E5-2680 v3 @ 2.50GHz machine with 48 CPU cores and 378GB RAM. 
\subsection{Data-set description}
Below, we describe the process for generating synthetic data and the real datasets we used.
\subsubsection{Synthetic data generation}
\label{sec:syntDataDes}
We generate random tensors of dimension $I=J=K$ with increasing $I$. \hide{where I $\in$ [100, 500, 1000, 3000, 5000, 10000, 50000, 100000].} Those tensors are created from a known set of randomly generated factors, so that we have full control over the ground truth of the full decomposition. We dynamically calculate the size of batch or slice for our all experiments to fit the data into memory. The machine used in this experiments has $\approx$380GB of memory. For example, in case of $I=J=K=50000$, batch size is up to 5, the dense incoming slice requires memory space equivalent to $\approx$80 GB  and the rest of memory space is used for the computations.
The specifications of each synthetic dataset are given in Table \ref{table:tsyndataset}.
\begin{table}[h!]
\small
\begin{center}
\begin{tabular}{ |c|c|c|c|c| }
\hline
Dimension  & Density- & Density-& Batch & Sampling   \\
($I=J=K$)& dense & sparse& size &factor\\
\hline
\hline
 100  &100\%&65\%&50 &2\\
\hline
 500  &100\%&65\%&150&2\\
\hline
 1000  &100\%&55\%&150&2\\
\hline
 3000  &100\%&55\%&100&5\\
 \hline
 5000  &100\%&55\%&100&5\\
\hline
 10000 &100\%&55\%&10&2\\
  \hline
 50000  &100\%&35\%&5&2\\
\hline
 100000  &100\%&35\%&5&2\\
\hline
\end{tabular}
\bigskip
\caption{Table of Datasets analyzed}
\label{table:tsyndataset}
\end{center}
\vspace{-0.3in}
\end{table}
 
\subsubsection{Real Data Description}
\label{sec:realDataDec}
In order to truly evaluate the effectiveness of \sambaten, we test its performance against six real datasets that have been used in the literature. Those datasets  are summarized in Table \ref{table:tdataset} and are publicly available at \cite{frosttdataset} \hide{\url{http://frostt.io/tensors}}.

\hide{
NIPS, NELL, Facebook-Wall, Facebook-links, Patents, and Amazon dataset. These datasets are publicly available at \cite{frosttdataset}. \noindent{\textbf{NIPS}} publication dataset is collected by Globerson et al.\cite{chechik2007eec} and consists of papers published from 1987 to 2003 in NIPS. The modes are ``paper'', ``author'' and ``word relation''.
\noindent{\textbf{NELL}} \cite{carlson2010toward} is collected as part of the Read the Web project at Carnegie Mellon University. It is an entity-relation-entity tuple snapshot of the Never Ending Language Learner knowledge base. 
\noindent{\textbf{Facebook Wall posts}} dataset was first used in \cite{viswanath2009evolution} and has modes ``Wall owner'', ``Poster'', and
``day'', where the Poster created a post on the Wall owner’s Wall on the specified timestamp. 
\noindent{\textbf{Facebook Links posts}} \cite{viswanath2009evolution} contains a list of all of the user-to-user links from the Facebook New Orleans sub-network. 
\noindent{\textbf{Patents}} dataset is pairwise co-occurence of terms within window of 7 words in the US utility patents on a year basis. The modes of the tensor represents year-term-term, and the values are $\log(1+f_{i,j})$ , where $f_{i,j}$ is the frequency in which the words $i$ and $j$ appeared in the window.  Each slice of the tensor is symmetric.
Finally, \noindent{\textbf{Amazon}} dataset is collected by SNAP \cite{mcauley2013} and consists of product review from Amazon where mode are in form of user-product-word.
}

\begin{table*}[h!]
\begin{center}
\footnotesize
\setlength\tabcolsep{1pt}
\begin{tabular}{ |c|c|c|c|c|c|c|c| }
\hline
Name&	Description &	Dimensions&	NNZ &  Batch & Sampling & Dataset   \\
&	 &	&	 &  size & factor & File Size  \\
\hline
\hline
NIPS \cite{chechik2007eec}&(Paper,Author,Word)	&2.4K x 2.8K x 14Km	&3,101,609&500&10&57MB\\

NELL	 \cite{carlson2010toward}&(Entity,Relation,Entity)	 &12K x 9K x 28K&76,879,419&500&10&1.4GB\\

Facebook-wall \cite{viswanath2009evolution}&(Wall owner, Poster, day)	&62K x 62K x 1,070&78,067,090&100&5&2.1GB\\

Facebook-links \cite{viswanath2009evolution}& (User, Links, Day)	&62K	x  62K x	650	&263,544,295&50&2&3.8GB\\

Patents\cite{frosttdataset}	& (Term ,Term, Year) & 239K x 239K x 46 &3,596,640,708&10&2&73GB\\

Amazon\cite{mcauley2013}	& (User, Product ,Word) &4.8M x	1.7M x 1.8M&	1,741,809,018&50000&20&43GB\\
\hline
\end{tabular}
\caption{Real datasets analyzed}
\label{table:tdataset}
\end{center}
\vspace{-0.2in}
\end{table*}
\subsection{Evaluation Measures}
\label{sec:EvaMeas}
We evaluate \sambaten and the baselines using three criteria: Relative Error, Wall-Clock time and Fitness. These measures provide a quantitative way to compare the performance of our method.
More Specifically, \textbf{Relative Error} is effectiveness measurement and defined as : 
\[
Relative Error=\frac{||\tensor{X}_{original}-\tensor{X}_{predicted}||}{||\tensor{X}_{original}||}
\]
where, the lower the value, the better.

\noindent{\textbf{CPU time (sec)}}: \hide{indicates how much faster does the decomposition runs as compared to re-running the entire decomposition algorithm whenever we receive a new update on the existing tensor.} The average running time denoted by $T_{tot}$ for processing all slices for given tensor, measured in seconds, and is used to validate the time efficiency of an algorithm.

\noindent{\textbf{Relative Fitness}}: 
Relative Fitness is defined as: 
\[
Relative Fitness=\frac{||\tensor{X}_{original}-\tensor{X}_{\sambaten}||}{||\tensor{X}_{original}-\tensor{X}_{BaseLine}||}
\]
where, again, lower is better.
\subsection{Baselines for Comparison}
\label{sec:BasComControl}
Here we briefly present the state-of-the-art baselines we used for comparison. Note that for each baseline we use the {\em reported parameters} that yielded the best performance in the respective publications. For fairness, we compare against the parameter configuration for \sambaten that yielded the best performance in terms of low wall-clock timing, low relative error and fitness. 
Note that all comparisons were carried out over 10 iterations each, and each number reported is an average with a standard deviation attached to it. \\
\noindent{\bf CP\_ALS \cite{bader2015matlab}}: is considered the most standard and well optimized algorithm for CP. We use the implementation of the Tensor Toolbox for Matlab \cite{bader2015matlab}. Here, we simply re-compute CP using CP\_ALS after every update.

\noindent{\bf SDT \cite{nion2009adaptive}}: Simultaneous Diagonalization Tracking (SDT) is based on incrementally tracking the Singular Value Decomposition (SVD) of the unfolded tensor $\tensor{X}_{(3)}=U\sum V^T$.

\noindent{\bf RLST \cite{nion2009adaptive}}: Recursive Least Squares Tracking (RLST) is another online approach in which recursive updates are computed to minimize the Mean Squared Error (MSE) on incoming  slice. 

\noindent{\bf OnilneCP \cite{zhou2016accelerating}}: This is the most recent and state-of-the-art method in online computation of CP. 
\hide{
OnilneCP fixes $\mathbf{A}$ and $\mathbf{B}$ to solve for $\mathbf{C}$ and minimizes the cost as:
\[\mathbf{C}\leftarrow \argmin_c \frac{1}{2} ||\left(\begin{smallmatrix} \tensor{X}_{old(3)}\\  \tensor{X}_{new(3)}\end{smallmatrix}\right)-\left(\begin{smallmatrix} \mathbf{C}_{old}\\  \mathbf{C}_{new}\end{smallmatrix}\right)(\mathbf{B} \odot \mathbf{A})^T||^2
\] 
}
\hide{
Then $\mathbf{C}$ is updated as :\\
\[
\mathbf{C}=\left(\begin{smallmatrix} \mathbf{C}_{old}\\  \mathbf{C}_{new}\end{smallmatrix}\right)=\left[\left(\begin{smallmatrix} \mathbf{C}_{old}\\  \tensor{X}_{new(3)}\end{smallmatrix}\right)-((\mathbf{B} \odot \mathbf{A})^T)^\ddagger\right]
\]
Matrix $\mathbf{A}$ is updated as $\mathbf{A}=\mathbf{P}\mathbf{Q}^{-1}$ where $\mathbf{P}=\mathbf{P}_{old}+\tensor{X}_{new(1)}(\mathbf{C}_{new} \odot \mathbf{B})$ and Q is $(\mathbf{C}^T\mathbf{C} \otimes \mathbf{B}^T\mathbf{B})$. Matrix $\mathbf{B}$ is estimated as $\mathbf{B}=\mathbf{U}\mathbf{V}^{-1}$ where $\mathbf{U}=\tensor{X}_{(2)}(\mathbf{C}\odot \mathbf{A})$ and $\mathbf{V}=(\mathbf{C} \odot \mathbf{A})^T (\mathbf{C} \odot \mathbf{A})$.\\
}


We conduct our experiments on multiple synthetic datasets and six real-world tensors datasets. We set the tolerance rate for convergence between consecutive iterations to $10^{-5}$ and the maximum number of iteration to 1000 for all the algorithms. The batch size and sampling factor is selected based on dimensions of first mode i.e. $I$, provided in Table \ref{table:tsyndataset} and \ref{table:tdataset} for synthetic and real dataset respectively. 
We use the publicly available implementations for the baselines, as provided by the authors. We only modified the interface of the baselines, so that it is consistent across all methods with respect to the way that they receive the incoming slices. No other functionality has been changed.
\subsection{Experimental Results}
\label{sambaten:Experimentsresult}
The following major three aspects are analyzed.\\
\textbf{Q1. Effectiveness and Accuracy}: How effective is \sambaten as compared to the baselines on different synthetic and real world datasets?\\
\textbf{Q2. Speed \& Scalability}: How fast is \sambaten when compared to the state-of-the-art methods on very large sized datasets?\\
\hide{\textbf{Q3:} \testbf{Effectiveness and Accuracy} What is the cost-benefit trade-off of computing the actual rank of the incoming batch?\\}
\textbf{Q3. Parameter Sensitivity}: What is the influence of sampling factor $s$ and sampling repetitions $r$?\\
\vspace{-0.25in}
\subsubsection{Baselines for Comparison}
\label{sec:baslineComp}
For all datasets we compute Relative Error,CPU time (sec) and Fitness. For \sambaten, CP\_{ALS} , OnlineCP, RSLT and SDT we use 10\% of the data in each dataset as existing dataset. We experimented for both dense as well as sparse tensor to check the performance of our method. The results for the dense and sparse synthetic data are shown in Table \ref{table:denseRE} - \ref{table:sparseRE}. For each of datasets , the best result is shown in bold. OnlineCP, SDT and RLST address the issue very well. Compared
with CP\_{ALS}, SDT and RLST reduce the mean running time by up to 2x times and OnlineCP reduce mean time by up to $3\times$ for small dataset (I up to 3000). Performance of RLST was better than SDT algorithm on 8 out of 8 third-order synthetic tensor datasets. In fact, the efficiency (in terms of CPU time (sec)) of SDT is quite close to RLST. However, the main issue of SDT and RLST is their estimation of relative error and fitness.  For some datasets, such as $I=100$ and $I=3000$, they perform well, while for some others, they exhibit  poor fitness and relative error, achieving only nearly half of the fitness of other methods. For {\em small size} datasets, OnlineCP's efficiency and accuracy is better than all methods. As the dimension grows, however, the performance of OnlineCP method reduces,and particularly for datasets of dimension larger than  $5000 \times 5000 \times 5000$. Same behavior is observed for sparse tensors. \sambaten is comparable to baselines for small dataset and outperformed the  baselines for large dataset. CP\_{ALS} is the only baseline able to run on datasets up to size $3000 \times 3000 \times 3000$. These results answer {\bf Q1} as the \sambaten have comparable accuracy to other baseline methods.

\begin{table*}[h!]
\small
\begin{center}
\captionsetup[Table]{font=tiny,labelfont=tiny}
\begin{tabular}{|c|c|c|c|c|c|}
\hline
I=J=K&$CP_{ALS}$ &OnlineCP&SDT&RLST&\sambaten  \\ 
\hline
100&0.109 $\pm$ 0.01&\textbf{0.107$\pm$ 0.02}&	0.173$\pm$ 0.02&	0.151$\pm$ 0.02&0.115$\pm$ 0.02	\\
500&	\textbf{0.102 $\pm$ 0.09}&	\textbf{0.102$\pm$ 0.09}&	0.217$\pm$ 0.06	&0.217$\pm$ 0.06&\textbf{0.102$\pm$ 0.09}\\
1000&0.103$\pm$ 0.01&	0.103$\pm$ 0.01&	0.287$\pm$ 0.01	&0.296$\pm$ 0.01&\textbf{0.102$\pm$ 0.01}\\
3000&0.119$\pm$ 0.01&	\textbf{0.108$\pm$ 0.01}&0.189$\pm$ 0.01	&0.206$\pm$ 0.01&0.109$\pm$ 0.01\\
5000	&N/A&	0.122$\pm$ 0.002&0.201$\pm$ 0.002	&0.196$\pm$ 0.04&\textbf{0.115$\pm$ 0.009}\\
10000&N/A&	0.173$\pm$ 0.04&0.225$\pm$ 0.04	&0.252$\pm$ 0.06&\textbf{0.162$\pm$ 0.01}	\\
50000	&N/A&0.215$\pm$ 0.03&0.229$\pm$ 0.03&0.26$\pm$ 0.01&\textbf{0.169$\pm$ 0.01}\\
100000&N/A&N/A&	N/A&	N/A&\textbf{0.275 $\pm$ 0.00}\\
\hline
\end{tabular}
\caption{Experimental results for relative error for synthetic dense tensor. We see that \sambaten gives comparable accuracy to baseline.}
\label{table:denseRE}
\end{center}
\vspace{-0.3in}
\end{table*}

\begin{figure}[!ht]
		\vspace{-0.1in}
	\begin{center}
		\includegraphics[clip, trim=1cm 5cm 1cm 5cm, width  = 0.35\textwidth]{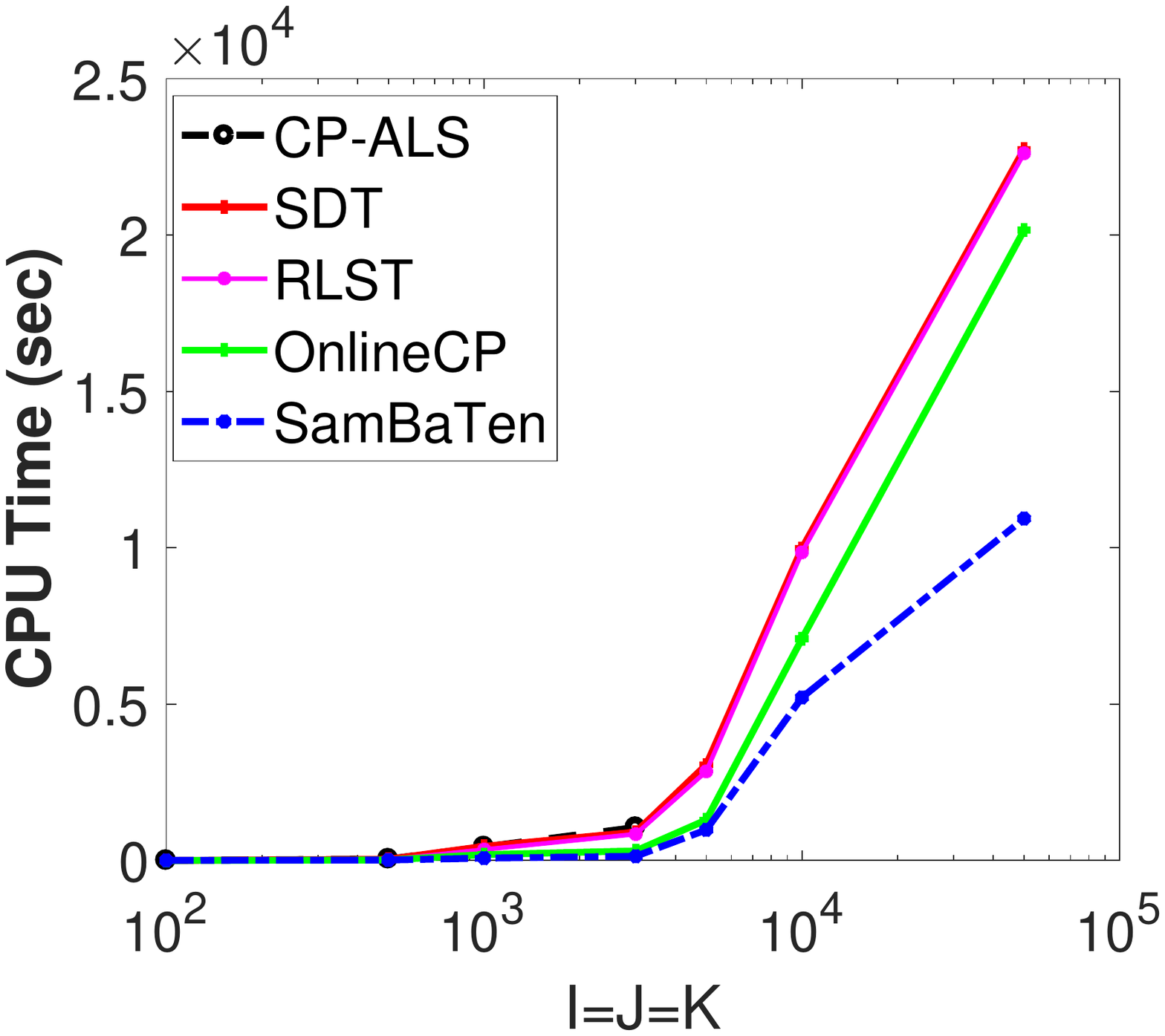}
		\includegraphics[clip, trim=1cm 5cm 1cm 5cm, width  = 0.35\textwidth]{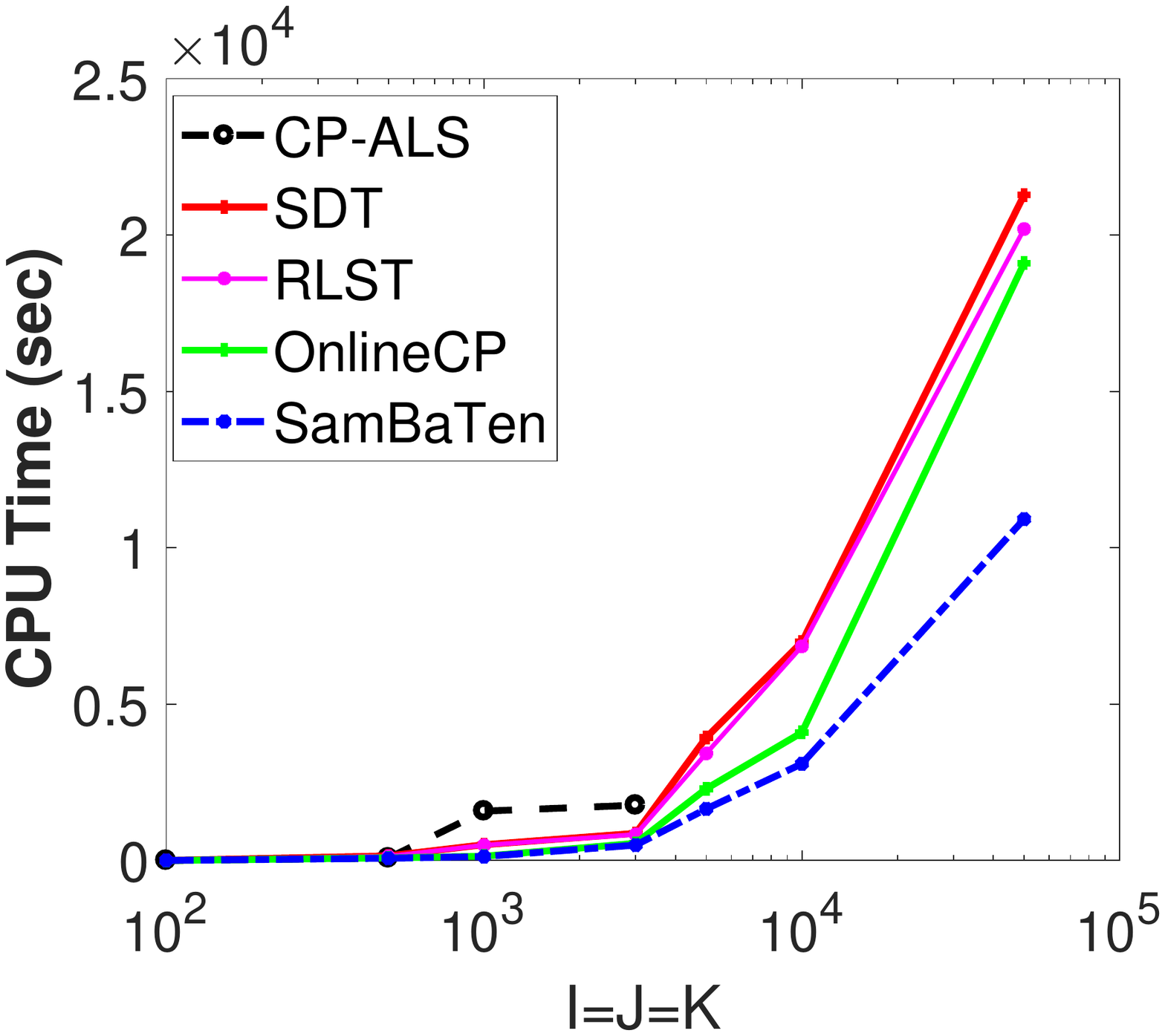}
		\caption{Experimental results for CPU time (sec) for (a) dense tensor (b) sparse tensor}
		\label{fig:denseCPU}
	\end{center}
	\vspace{-0.3in}
\end{figure}

\begin{figure}[!ht]
	\begin{center}
		\includegraphics[clip, trim=0cm 0.8cm 0.4cm 0.4cm,  width  = 0.4\textwidth]{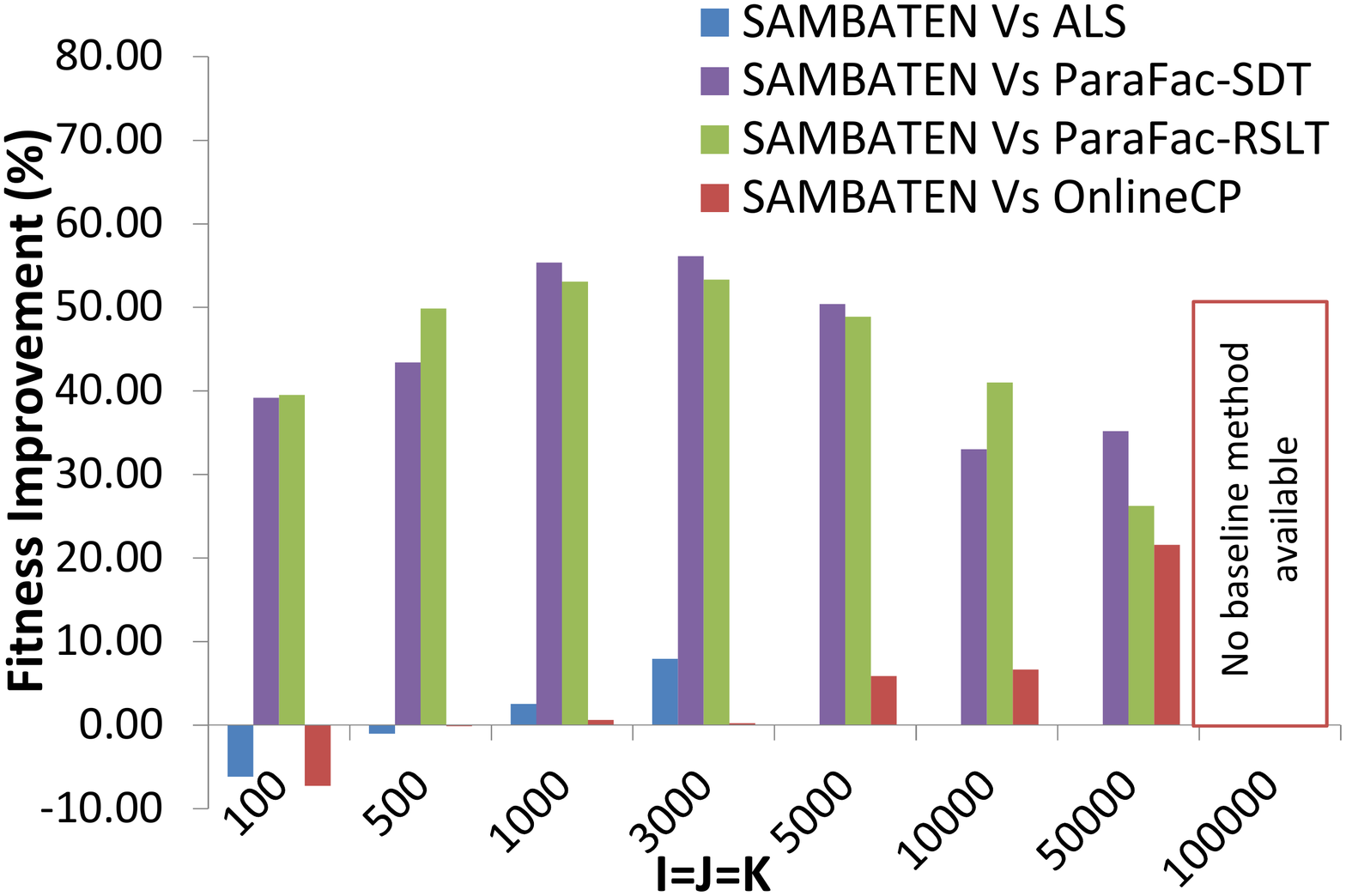}
		\includegraphics[ clip, trim=0cm 0.8cm 0.2cm 0.2cm, width  = 0.4\textwidth]{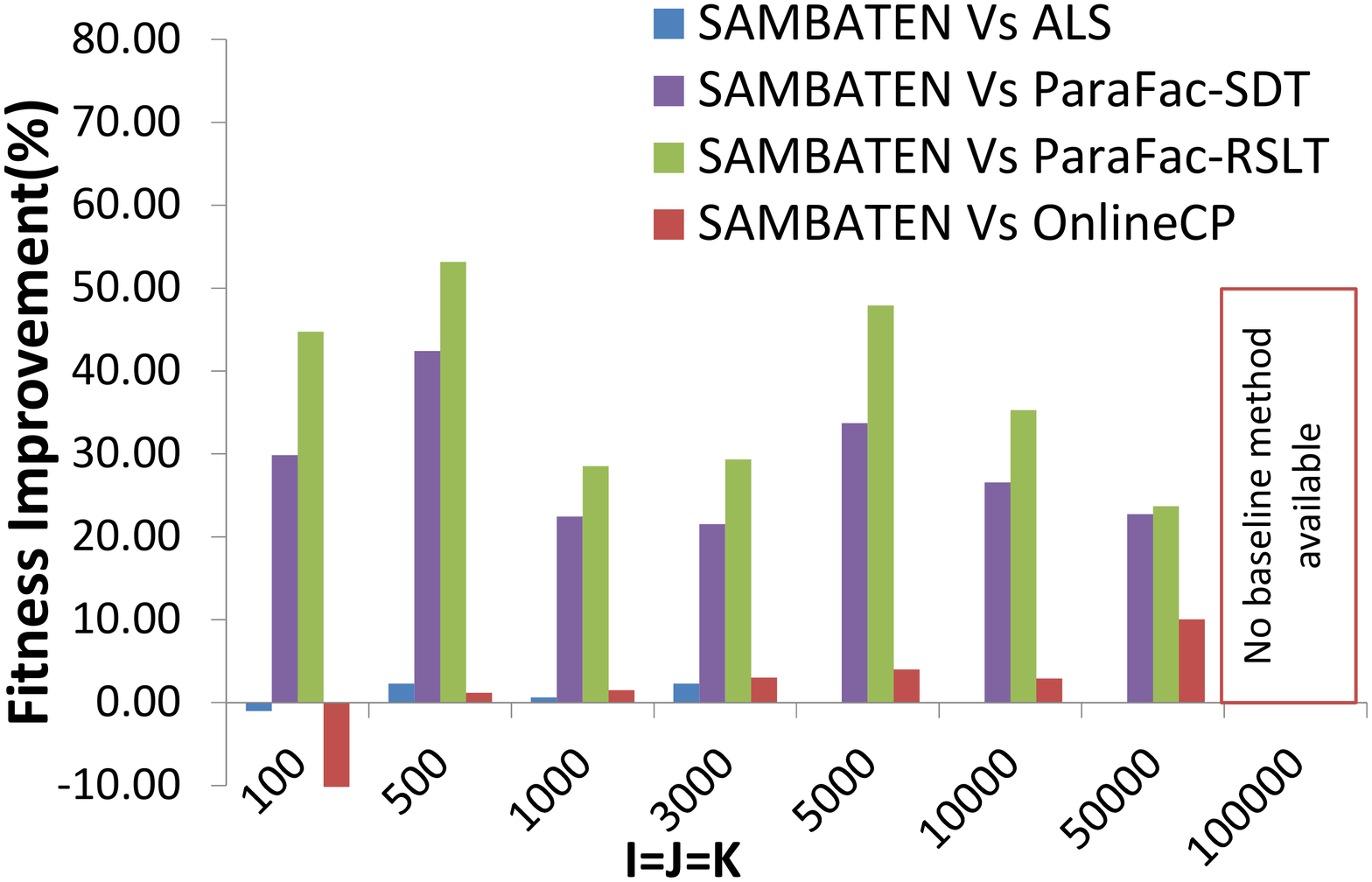}
		\caption{Experimental results for relative fitness improvement for (a) dense tensor (b) sparse tensor}
		\label{fig:spdenFitness}
	\end{center}
	\vspace{-0.2in}
\end{figure}


\begin{table*}[ht!]
\small
\captionsetup[Table]{font=tiny,labelfont=tiny}
\begin{center}
\begin{tabular}{ |c||c|c|c|c|c|}
\hline
I=J=K&$CP_{ALS}$ &OnlineCP&SDT&RSLT&\sambaten  \\ 
\hline
100&0.169$\pm$ 0.01&\textbf{0.154$\pm$ 0.02}&0.306$\pm$ 0.01&0.313$\pm$ 0.01&0.178$\pm$ 0.01\\
500&\textbf{0.175$\pm$ 0.01}&0.188$\pm$ 0.01&0.43$\pm$ 0.01&0.421$\pm$ 0.01&0.184$\pm$ 0.01\\
1000&0.179$\pm$ 0.01&0.185$\pm$ 0.01&0.613$\pm$ 0.01&0.813$\pm$ 0.01&\textbf{0.178$\pm$ 0.01}\\
3000&0.177$\pm$ 0.02&\textbf{0.171$\pm$ 0.04}&0.446$\pm$ 0.03&0.513$\pm$ 0.02&0.176$\pm$ 0.03\\
5000&N/A&0.192$\pm$ 0.02&0.494$\pm$ 0.09&0.535$\pm$ 0.13&\textbf{0.187$\pm$ 0.04}\\
10000&N/A&\textbf{0.173$\pm$ 0.01}&0.212$\pm$ 0.01&0.224$\pm$ 0.11&0.198$\pm$ 0.12\\
50000&N/A&0.222$\pm$ 0.00&0.262$\pm$ 0.00&0.259$\pm$ 0.00&\textbf{0.200$\pm$ 0.00}\\
100000&N/A&N/A&	N/A&	N/A&\textbf{0.283$\pm$ 0.00}\\
\hline
\end{tabular}
\caption{Experimental results for relative error for synthetic sparse tensor. We see that \sambaten works better in very large scale dataset such as 50000 $\times$ 50000 $\times$ 50000.}
\label{table:sparseRE}
\end{center}
\vspace{-0.3in}
\end{table*}
\sambaten is efficiently able to compute $100K \times 100K \times 100K$ sized tensor with batch size of 5 and sampling factor 2. It took \textbf{58095.72s} and \textbf{47232.2s} to compute online decomposition for dense and sparse tensor, respectively. On other hand, state-of-art methods are unable to handle such large scaled incoming data.

Table \ref{table:realCPU} shows the comparison between methods. \sambaten outperforms other state-of-the-art approaches in most of  the real dataset. In the case of NIPS datset, \sambaten gives better results compared to the baselines, specifically in terms of CPU Time ({\em faster up to 20 times}) and Fitness ({\em better up to 15-20\%}). \sambaten outperforms for NELL, Facebook-Wall and Facebook-Links dataset in terms of efficiency comparable to CP\_{ALS}. For the NIPS dataset, \sambaten is {\em 25-30 times faster} than OnlineCP method. Due to high dimensions of dataset, RSLT and SDT are unable to execute further. Note that all the real datasets we use are highly sparse, however, no baselines except CP\_ALS actually take advantage of that sparsity, therefore, repeated CP\_ALS tends to be faster because the baselines have to deal with dense computations which tend to be slower, when the data contain a lot of zeros.  Most importantly, however, \sambaten performed very well on Amazon and Patent datasets, arguably the hardest of the six real datasets we examined and have been analyzed in the literature, {\em where none of the baselines was able to run}. These result answer {\bf Q1} and {\bf Q2} and show that \sambaten is able to handle large dimensions and sparsity.

\begin{table}[ht!]
\tiny
\centering
\begin{sideways}
\begin{tabular}{|c|c|c|c|c|c|c|c|c|c|}
\hline
Dataset&\multicolumn{5}{|c|}{CPU Time (sec)}&\multicolumn{4}{|c|}{Fitness \sambaten w.r.t} \\[0.6ex]
\hline
&$CP_{ALS}$ &OnlineCP&SDT&RSLT&\sambaten  & $CP_{ALS}$&OnlineCP&SDT&RSLT \\ [0.6ex]
\hline
\hline
NIPS&177.46$\pm$ 2.9&372.03$\pm$ 8.9&1608.23$\pm$ 37.5&1596.07$\pm$ 15.2&\textbf{43.98$\pm$ 0.6}&0.96$\pm$ 0.01&0.98$\pm$ 0.01&\textbf{0.78$\pm$ 0.02}&0.82$\pm$ 0.01\\[0.6ex]
NELL&8783.27$\pm$ 11.2&42325.22$\pm$ 70.0&65325.22$\pm$ 25.2&63485.98$\pm$ 10.6&\textbf{983.83$\pm$ 30.7}&0.95$\pm$0.02&0.81$\pm$ 0.01&\textbf{0.76$\pm$ 0.02}&0.81$\pm$ 0.01\\[0.6ex]
Facebook-wall&3041.98$\pm$3.8&N/A&N/A&N/A&\textbf{736.07$\pm$4.1}&0.97$\pm$0.01&N/A&N/A&N/A\\[0.6ex]
Facebook-links&2689.69$\pm$7.9&N/A&N/A&N/A&\textbf{343.32$\pm$6.3}&0.96$\pm$0.06&N/A&N/A&N/A\\[0.6ex]
Amazon&N/A&N/A&N/A&N/A&\textbf{4892.07$\pm$61.8}&N/A&N/A&N/A&N/A\\[0.6ex]
Patent  &N/A&N/A&N/A&N/A&\textbf{8068.27$\pm$55.4} &N/A&N/A&N/A&N/A\\[0.6ex]
\hline
\end{tabular}
\end{sideways}
\caption{ \sambaten performance for real datasets. \sambaten outperforms the baselines for all the large  tensors.}
\label{table:realCPU}
\end{table}

\subsubsection{Sensitivity of Sampling Factor \textit{s}}
\label{sec:sControl}
The sampling factor plays an important role in \sambaten. We performed experiments to evaluate the impact of changing sampling factor on \sambaten. For these experiments , we fixed batch size to 50 for all datasets. We see in figure \ref{fig:samplingImpact} that increasing sampling factor results in reduction of CPU time (as sparsity of sub sampled tensor increased) and it reduces the fitness of output up-to 2-3\%. In sum, these observations demonstrate that: 1) a suitable sampling factor on sub-sampled tensor could improve the fitness and result in better tensor decomposition, and 2) the higher sampling factor is, the lower the CPU time. This result partially answers {\bf Q3}.
\begin{figure}[!ht]
	\vspace{-0.1in}
	\begin{center}
		\includegraphics[clip,trim=0.8cm 7.1cm 1.2cm 4.5cm,width = 0.4\textwidth]{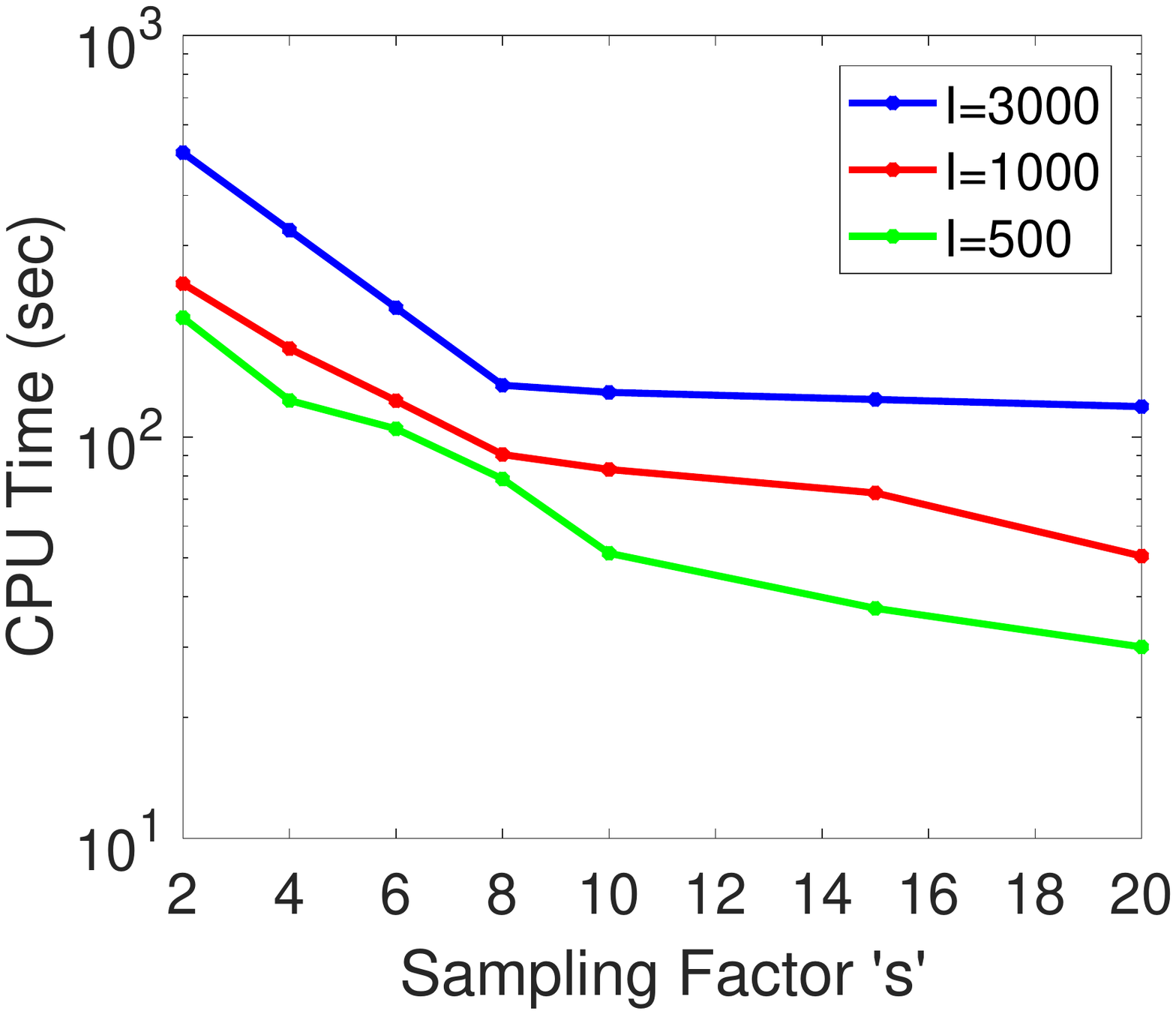}
		\includegraphics[clip,trim=0.8cm 7.1cm 1.2cm 4.5cm,width= 0.4\textwidth]{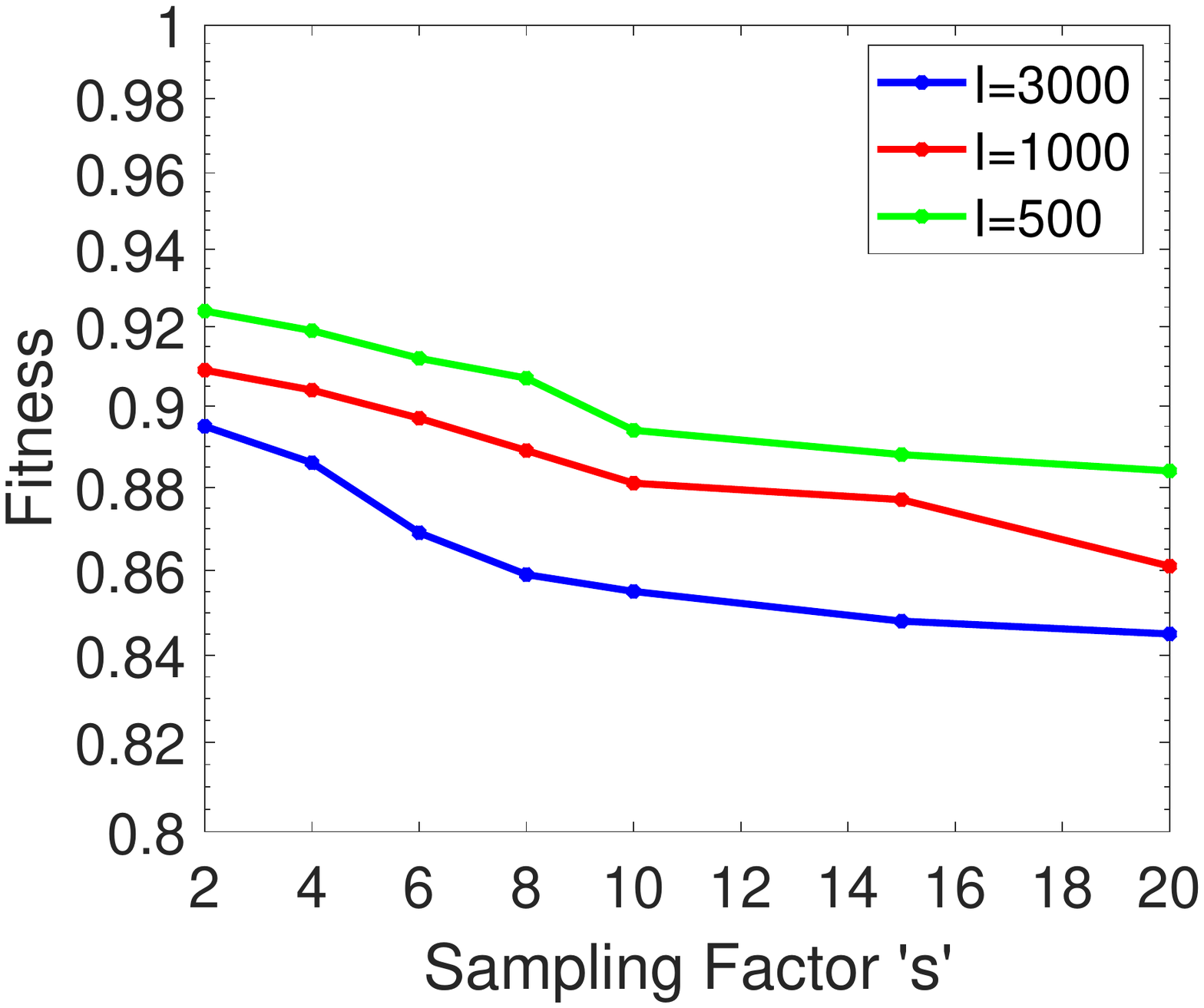}
		\caption{\sambaten  CPU Time (sec)  and Relative Fitness vs. Sampling Factor $s$ on different datasets ({\em lower is better}).}
		\label{fig:samplingImpact}
	\end{center}
\end{figure}

\subsubsection{Sensitivity of Repetition Factor \textit{r}}
\label{sec:rControl}
We evaluate the performance for parameter setting $r$ i.e the number of parallel decompositions.  For these experiments, we choose batch size and sampling rate for synthetic $ 500 \times 500 \times 500$ dataset and  NIPS real world dataset as provided in table \ref{table:tsyndataset} and \ref{table:tdataset}, respectively. We can see that with higher values of the repetition factor $r$, \hide{FMS score  and } Relative Fitness (\sambaten vs CP\_{ALS}) is  improved as shown in Figure \ref{fig:rImpact}(a). We experiment on varying repetition factor \textit{r} with Sampling factor \textit{s} on NIPS real world dataset to check the performance of our method as shown in Figure \ref{fig:rImpact}(b). The lower the relative fitness score, the better the decomposition. This result completes the answer to {\bf Q3}.
\begin{figure}[!h]
	\begin{center}
		\includegraphics[clip,trim=0.8cm 7.1cm 1.2cm 4.5cm,width  = 0.4\textwidth]{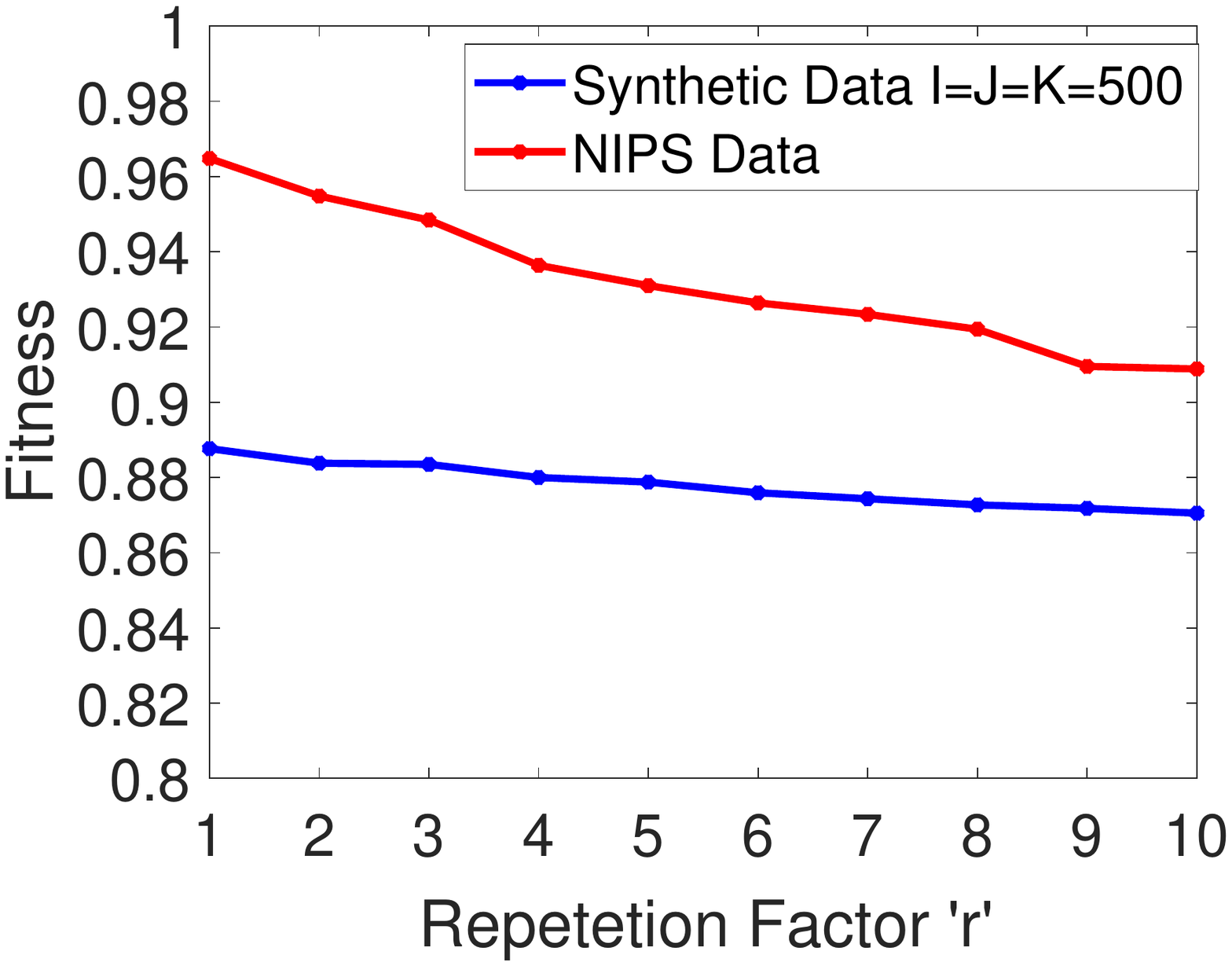}
		\includegraphics[clip,trim=0.8cm 7.1cm 1.2cm 4.5cm,width  = 0.4\textwidth]{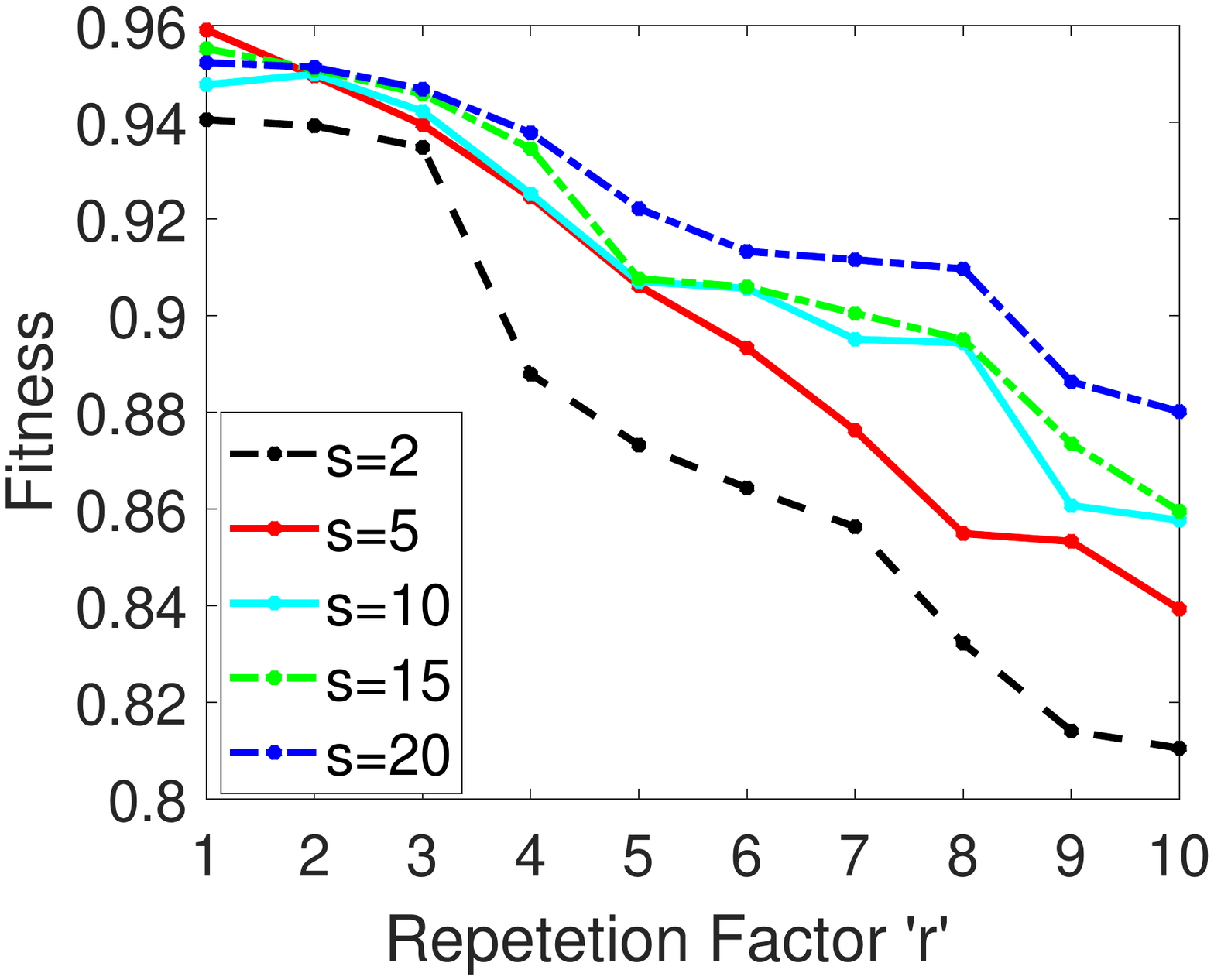}
		\caption{\sambaten Relative Fitness vs. repetition factor $r$ on synthetic and NIPS datasets ({\em lower is better}).}\label{fig:rImpact}	
	\end{center}
\end{figure}

\section{Conclusions}
We introduce \sambaten, a novel sample-based incremental CP tensor decomposition. We show its effectiveness with respect to approximation quality, with its performance being on par with state-of-the-art incremental and non-incremental algorithms, and we demonstrate its efficiency and scalability by outperforming state-of-the-art approaches ({\em 25-30 times faster}) and being able to run very large incremental tensors where none of the baselines was able to produce results. 

\vspace{0.5in}

\noindent\fbox{%
    \parbox{\textwidth}{%
       Chapter based on material published in SDM 2018 \cite{gujral2018sambaten}.
    }%
}
\chapter{Online Compression-based Tensor Decomposition}
\label{ch:9}
\begin{mdframed}[backgroundcolor=Orange!20,linewidth=1pt,  topline=true,  rightline=true, leftline=true]
{\em "How to leverage random compression for streaming high-order tensor decomposition? Can we guarantee the identifiability and speed-up with parallelism?”}
\end{mdframed}
 
Tensor decompositions are powerful tools for large data analytics as they jointly model multiple aspects of data into one framework and enable the discovery of the latent structure and higher-order correlations within the data. One of the most widely studied and used decompositions, especially in data mining and machine learning, is the Canonical Polyadic or CP decomposition. However, today's datasets are not static and these datasets often dynamically growing and changing with time. To operate on such large data, we present \octen the first ever compression-based online parallel implementation for the CP decomposition.We conduct an extensive empirical analysis of the algorithms in terms of fitness, memory used and CPU time, and in order demonstrate the compression and scalability of the method, we apply \octen to big tensor data. Indicatively, \octen performs on-par or better than to state-of-the-art online and offline methods in terms of decomposition accuracy and efficiency, while saving up to \text{\em {40-250 \%}} memory space.	The content of this chapter is adapted from the following published paper:

{\em Gujral, Ekta, Ravdeep Pasricha, Tianxiong Yang, and Evangelos E. Papalexakis. "Octen: Online compression-based tensor decomposition." In 2019 IEEE 8th International Workshop on Computational Advances in Multi-Sensor Adaptive Processing (CAMSAP), pp. 455-459. IEEE, 2019.}

\section{Introduction}
\label{octen:intro}
A Tensor is a multi-way array of elements that represents higher-order or multi-aspect data. In recent years, tensor decompositions have gained increasing popularity in big data analytics \cite{papalexakis2016tensors}. In higher-order structure, tensor decomposition are capable of finding complex patterns and  higher-order correlations within the data. Much like matrix factorization tools like the Singular Value Decomposition, there exist generalizations for the tensor domain, with the most widely used being CANDECOMP/PARAFAC or CP \cite{PARAFAC} which extracts interpretable factors form tensor data, and Tucker decomposition \cite{tucker3}, which is known for estimating the joint subspaces of tensor data. In this work we focus only on the CP decomposition, which is extremely effective in exploratory knowledge discovery on multi-aspect data.

In the era of information explosion, data are generated or modified at large volume. In such  environments, data may be increased or decreased in any of its dimensions with high velocity. When using tensors to represent this dynamically changing data, an instance of the problem is that of a ``streaming'', ``incremental'', or ``online'' tensors\footnote{Notice that the literature (and thereby this chapter) uses the above  terms as well as ``dynamic'' interchangeably.}. Considering an example of time evolving social network interactions, where a large number of users interact with each other every second (Facebook users update $\approx 684K$ piece of content and Twitter users send $\approx 100K$ tweets every single minute\footnote{http://mashable.com/2012/06/22/data-created-every-minute/}); every such snapshot of interactions is a new incoming slice(s) to the tensor on its ``time'' or mode, which is seen as a streaming update. Additionally, the tensor may be growing in all of its n-modes, especially in complex and evolving environments such as online social networks. In this chapter, our goal is, given an already computed CP decomposition, to {\em track} the CP decomposition of an online tensor, as it receives streaming updates, 1) {\em efficiently},  being much faster than re-computing the entire decomposition from scratch every update, and utilizing small amount of memory, and 2) {\em accurately}, incurring an approximation error that is as close as possible to the decomposition of the full tensor. For exposition purposes, we focus on the  time-evolving/streaming scenario, where a three-mode tensor grows on the third (``time'') mode, however, our work extends to cases where more than one modes is online.

\begin{figure}[!ht]
	\vspace{-0.2in}
	\begin{center}
		\includegraphics[clip,trim=0.1cm 3.0cm 0.5cm 3.0cm,width = 0.75\textwidth]{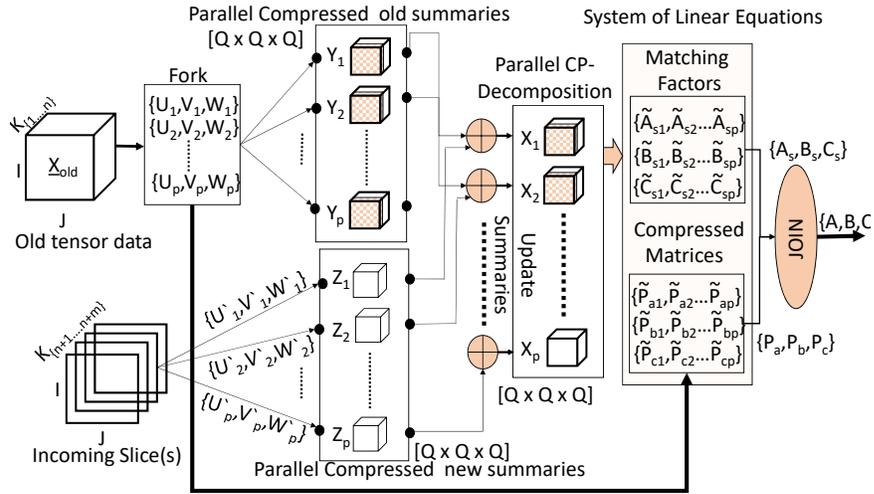}
		\caption{\octen framework. Compressed tensor summaries $\tensor{Y_p}$ and $\tensor{Z_p}$  is obtained by applying compression matrices $(\mathbf{U}_p, \mathbf{V}_p,\mathbf{W}_p)$ and $(\mathbf{U}_p^{'}, \mathbf{V}_p^{'},\mathbf{W}_p^{'})$ to $\tensor{X_{old}}$ and $\tensor{X_{new}}$ or incoming slice(s) respectively. The updated summaries are computed by $\tensor{X_p}$ = $\tensor{Y_p}+\tensor{Z_p}$. Each $\tensor{X_p}$ is independently decomposed in parallel. The update step anchors all factor matrices to a single reference, and solves a linear equation for the overall A, B, and C.}
		\label{octen:crown_jewel}
	\end{center}
\end{figure}

As the volume and velocity of data grow, the need for time- and space-efficient online tensor decomposition is imperative. There already exists a modest amount of prior work in online tensor decomposition both for Tucker \cite{austin2016parallel,SunITA} and CP \cite{nion2009adaptive,zhou2016accelerating}. However, most of the existing online methods \cite{austin2016parallel,zhou2016accelerating,nion2009adaptive} \hide{We *may* wanna say here ``with the exception of SamBaTen'' but maybe not, because we may distract from the main point.}, model the data in the full space, which can become very memory taxing as the size of the data grows. There exist memory efficient tensor decompositions, indicatively MET for Tucker \cite{kolda2008scalable} and PARACOMP \cite{sidiropoulos2014parallel} for CP, neither of which are able to handle online tensors. In this chapter, we fill exactly that gap.

Online tensor decomposition is a challenging task due to the following reasons. First, maintaining high-accuracy (competitive to decomposing the full tensor) using significantly fewer computations and memory than the full decomposition calls for innovative and, ideally, sub-linear approaches. Second, operating on the full ambient space of the data, as the tensor is being updated online, leads to super-linear increase in time and space complexity, rendering such approaches hard to scale, and calling for efficient methods that work on spaces that are significantly smaller than the original ambient data dimensions. Third, in many real settings, more than one modes of the tensor may receive streaming updates at different points in time, and devising a flexible algorithm that can handle such updates seamlessly is a challenge.
To handle the above challenges, in this chapter, we propose to explore how to decompose online or incremental tensors based on CP decomposition. We specifically study: (1) How to  make parallel update method based on CP decomposition for online tensors? (2) How to identify latent component effectively and accurately after decomposition? Answering the above questions, we propose \octen (Online Compression-based Tensor Decomposition) framework. Our contributions are summarized as follows:
\begin{itemize}[noitemsep]
	\item {\bf Novel Parallel Algorithm} We introduce \octen, a novel compression-based algorithm for online tensor decomposotion that admits an efficient parallel implementation. We do not limit to 3-mode tensors, our algorithm can easily handle higher-order tensor decompositions. 
	\item {\bf Correctness guarantees} By virtue of using random compression, \octen can guarantee the {\em identifiability} of the underlying CP decomposition in the presence of streaming updates.
	\item {\bf Extensive Evaluation} Through experimental evaluation on various datasets, we show that \octen  provides stable decompositions (with quality on par with state-of-the-art), while offering up to \text{\em {40-250 \%}} memory space savings.
\end{itemize}

{\bf Reproducibility}: We make our Matlab implementation publicly available at link \footnote{\octencodeurl}. Furthermore, all the small size datasets we use for evaluation are publicly available on same link.

\section{Related work}
\label{octen:related}
In this section, we provide review of the work related to our algorithm. At large,  incremental tensor methods in the literature can be categorized into two main categories as described below:

\noindent{\bf Tensor Decomposition}: Phan \textit{el at.} \cite{phan2011parafac} had purposed a theoretic method namely GridTF to large-scale tensors decomposition based on CP's basic mathematical theory to get sub-tensors and join the output of all decompositions to achieve final factor matrices. Sidiropoulos \textit{el at.}\cite{nion2009adaptive}, proposed algorithm that focus on CP decomposition namely RLST (Recursive Least Squares Tracking), which recursively updates the factors by minimizing the mean squared error. In 2014, Sidiropoulos \textit{el at.} \cite{sidiropoulos2014parallel} , proposed a parallel algorithm for low-rank tensor decomposition that is suitable for large tensors. The Zhou, \textit{el at.} \cite{zhou2016accelerating} describes an online CP decomposition method, where the latent components are updated for incoming data. The most related work to ours was proposed by \cite{gujral2018sambaten} which is sampling-based batch incremental tensor decomposition algorithm. These state of art techniques focus on only fast computation but not effective memory usage. Besides CP decomposition, tucker decomposition methods\cite{SunITA,papadimitriou2005streaming} were also introduced. Some of these methods were not only able to handle data increasing in one-mode, but also have solution for multiple-mode updates using methods such as incremental SVD \cite{fanaee2015multi}. Latest line of work is introduced in \cite{austin2016parallel} i.e TuckerMPI to find inherent low-dimensional multi-linear structure, achieving high compression ratios. Tucker is mostly focused on recovering subspaces of the tensor, rather than latent factors, whereas our focus is on the CP decomposition which is more suitable for exploratory analysis.

\noindent{\bf Tensor Completion}:
Another field of study is tensor completion, where real-world large-scale datasets are considered incomplete. In literature, wide range of methods have been proposed based on online tensor imputation\cite{mardani2015subspace} and  tensor completion with auxiliary information\cite{narita2011tensor,acar2011all}. The most recent method in this line of work is by Qingquan \textit{el at.}\cite{song2017multi},  who proposed a low-rank tensor completion with general multi-aspect streaming patterns, based on block partitioning of the tensor. However, these approaches can not directly applied when new batches of data arrived. This provides us a good starting point for further research.

\section{Problem Formulation}
In many real-world applications, data grow dynamically and may do so in many modes. For example, given a dynamic
tensor in a movie-based recommendation system, organized as \text{\em{users $\times$ movie $\times$ rating $\times$ hours}},the number of registered users, movies watched or rated, and hours may all increase over time. Another example is network monitoring sensor data where tons of information like source and target IP address, users ,ports etc., is collected every second. This nature of data gives rise to update existing decompositions on the fly or online and we call it incremental decomposition \footnote{We uses the terms ``incremental'', ``dynamic'', and ``online'' interchangeably in chapter}.  In such conditions, the  update needs to process the new data very quickly, which makes non-incremental methods to fall short because they need to recompute the decomposition for the full dataset.
The problem that we solve is the following:

\begin{mdframed}[linecolor=red!60!black,backgroundcolor=gray!20,linewidth=1pt,    topline=true,rightline=true, leftline=true] 
\begin{problem}
{\bf Given} (a) an existing set of {\em summaries} \{$\tensor{Y}_{1},\tensor{Y}_{2} \dots \tensor{Y}_{p}$\} of $R$ components, which approximate tensor $\tensor{X}_{old}$ of size \{ $ I^{(1)} \times I^{(2)} \times \dots I^{(N-1)} \times  t_{old}$\} at time \textit{t} , (b) new incoming batch of slice(s) in form of tensor $\tensor{X}_{new}$ of size \{$ I^{(1)} \times I^{(2)} \times \dots I^{(N-1)} \times  t_{new}$\}, find updates of ($\mathbf{A}^{(1)},\mathbf{A}^{(2)}$ , \dots , $\mathbf{A}^{(N-1)}$, $\mathbf{A}^{(N)}$) {\bf incrementally} to approximate tensor $\tensor{X}$ of dimension \{$ I^{(1)} \times I^{(2)} \times \dots I^{(N-1)} \times I^{(N)}$\}, where $ I^{(N)} = (t_{old}+t_{new})= I_{1\dots n}^{(N)} + I_{(n+1)\dots m}^{(N)}$ after appending new slice or tensor to $N^{th}$ mode.
\end{problem}
\end{mdframed}

\section{Proposed Method: OCTen}
\label{octen:method}
As we mention in the previous section, to the best of our knowledge, there is no algorithm in the literature that is able to efficiently compress and incrementally update the CP decomposition in the presence of incoming tensor slices. However, there exists a method for static data \cite{sidiropoulos2014parallel}. Since this method considers the tensor in its entirety, it cannot handle streaming data and as the data size grows its efficiency may degrade, since it handles the full data in one shot. In this section, we introduce \octen, a new method for parallel incremental decomposition designed with two main goals in mind: \textbf{G1}:  Compression, speed , simplicity, and parallelization; and \textbf{G2}: correctness in recovering compressed partial results for incoming data, under suitable conditions.

The algorithmic framework we propose is shown in Figure \ref{octen:crown_jewel} and is described below:

We assume that we have a pre-existing set of {\em summaries} of the $\tensor{X}$ before the update. Summaries are in the form of compressed tensors of dimension [$Q \times Q \times Q$]. For simplicity of description, we assume that we are receiving updated slices on the third mode. We, further, assume that the updates come in batches of new slices, which, in turn, ensures that we see a mature-enough update to the tensor, which contains useful structure. Trivially, however, \octen can operate on singleton batches and for $>3$ modes also.

In the following, $\tensor{X}_{old}$ is the tensor prior to the update and $\tensor{X}_{new}$ is the batch of incoming slice(s). Considering $S=\prod_{i=1}^{[N-1]}I^{(i)}$ and $T=\sum_{i=1}^{[N-1]}I^{(i)}$ , we can write space and time complexity in terms of $S$ and $T$. Given an incoming batch, \octen performs the following steps:

\textbf{Parallel Compression and Decomposition}: 
When handling large tensors $\tensor{X}$ that are unable fit in main memory, we may compress the tensor $\tensor{X}$ to a smaller tensor that somehow apprehends most of the systematic variation in  $\tensor{X}$. Keeping this in mind, for incoming slice(s) $\tensor{X}_{new}$, during the parallel compression step, we first need to create 'p' parallel triplets of random compression matrices \{$\mathbf{U_p}, \mathbf{V_p}, \mathbf{W_p}$\} of $\tensor{X}$.  Thus, each worker (i.e. Matlab parpool) is responsible for creating and storing the three compression matrices of dimension $\mathbb{R}^{I \times Q},\ \ \mathbb{R}^{J \times Q}$ and $\mathbb{R}^{t_{new} \times Q}$. These matrices share at least 'shared' amount of columns among each other. Mathematically, we can describe it as follows:
\begin{equation}
\tensor{X}=  \begin{bmatrix}
\{\mathbf{U_1}, \mathbf{V_1},\mathbf{ W_1}\} \\
\{\mathbf{U_2}, \mathbf{V_2}, \mathbf{W_2}\}\\
\dots\\
\{\mathbf{U_p}, \mathbf{V_p}, \mathbf{W_p}\}
\end{bmatrix} 
= \begin{bmatrix}
\{(\mathbf{u} \ \ \mathbf{U_{1'}}),(\mathbf{v} \ \ \mathbf{V_{1'}}),(\mathbf{w}  \ \ \mathbf{W_{1'}})\} \\
\{(\mathbf{u} \ \ \mathbf{U_{2'}}),(\mathbf{v} \ \ \mathbf{V_{2'}}),(\mathbf{w}  \ \ \mathbf{W_{2'}})\} \\
\dots\\
\{(\mathbf{u} \ \ \mathbf{U_{p'}}),(\mathbf{v} \ \ \mathbf{V_{p'}}),(\mathbf{w}  \ \ \mathbf{W_{p'}})\} \\
\end{bmatrix}
\end{equation}
where u, v and w are shared and have dimensions of $\mathbb{R}^{I \times Q_{shared}}, \mathbb{R}^{J \times Q_{shared}}$ and $\mathbb{R}^{t_{new} \times Q_{shared}}$.

For compression matrices, we choose to assign each worker to create a single row of each of the matrices to reduce the burden of creating an entire batch of \{$\mathbf{U_p^{'}}, \mathbf{V_p^{'}}, \mathbf{W_p^{'}}$\} of $\tensor{X}_{new}$. We see that for each worker is sufficient to hold \{$\mathbf{U_p}, \mathbf{V_p}, \mathbf{W_p}$\} in memory. Now, we created compressed tensors or replicas $\{\tensor{Z_1}, \tensor{Z_2} \dots \tensor{Z_p}\}$ from each triplets of 3-mode compression matrices generated from $\tensor{X}_{new}$;see Figure \ref{octen:crown_jewel}.  $\tensor{Z}_{p}$ is 3-mode tensor of size  $\mathbb{R}^{Q \times Q \times Q}$. Since $Q$ is considerably smaller than [I ,J,  K], we use $ O(Q^3)$ of memory on each worker. 

For $\tensor{X}_{old}$, we already have replicas $\{\tensor{Y_1}, \tensor{Y_2} \dots \tensor{Y_p}\}$ from each triplets of 3-mode compression matrices \{$\mathbf{U_p}, \mathbf{V_p}, \mathbf{W_p}$\} generated from $\tensor{X}_{old}$;see Figure \ref{octen:crown_jewel}. In general, the compression comprises N-mode products, leads to complexity of $(Q_{(1)}St_{new} + Q_{(2)}St_{new} +Q_{(3)}St_{new} + \dots Q_{(N)}St_{new})$ overall for dense tensor $\tensor{X}$, if the first mode is compressed first, followed by the second, and then the third mode and so on. We choose to keep $Q_1, Q_2,Q_3 \dots Q_N$ of same order as well non-temporal dimensions are of same order in our algorithm, so time complexity of parallel compression step for N-mode data is $O(QSt_{new})$ for each worker. The {\em summaries} are always dense, because first mode product with tensor is dense, hence remaining mode products unable to exploit sparsity. 

After appropriately computing {\em summaries} $\{\tensor{Z_1}, \tensor{Z_2} \dots \tensor{Z_p}\}$ for incoming slices, we need to update the old summaries $\{\tensor{Y_1}, \tensor{Y_2} \dots \tensor{Y_p}\}$. Each worker reads its segment and processing update parallel as given below. 
   
\begin{equation}
     \begin{bmatrix}
    \tensor{X_1} \\
    \tensor{X_2}\\
    \vdots\\
    \tensor{X_p}
    \end{bmatrix} 
    = 
     \begin{bmatrix}
    \tensor{Y_1} \\
    \tensor{Y_2}\\
    \vdots\\
   \tensor{ Y_p}
    \end{bmatrix}
   \oplus 
    \begin{bmatrix}
    \tensor{Z_1} \\
    \tensor{Z_2}\\
    \vdots\\
    \tensor{Z_p}
    \end{bmatrix} \odot  \begin{bmatrix}
    \mathbf{W^{'}_{1}(k,q)} \\
    \mathbf{W^{'}_{2}(k,q)}\\
    \vdots\\
    \mathbf{W^{'}_{p}(k,q)}
    \end{bmatrix} \implies \begin{bmatrix}
    (\mathbf{A_{s(1)}}, \mathbf{B_{s(1)}} , \mathbf{C_{s(1)}}) \\
    (\mathbf{A_{s(2)}}, \mathbf{B_{s(2)}} , \mathbf{C_{s(2)}})\\
    \vdots\\
    (\mathbf{A_{s(p)}}, \mathbf{B_{s(p)}} , \mathbf{C_{s(p)}})
    \end{bmatrix}
\end{equation}            
 Where $k$ is the number of slices of the incoming tensor and $q$ is the slice number for thecompressed tensor. Further, for the decomposition step, we processed 'p' {\em summaries} on different workers, each one fitting the decomposition to the respective compressed tensor $\{\tensor{X_1}, \tensor{X_2} \dots \tensor{X_p}\}$ created by the compression step.  We assume that the updated compressed tensor $\{\tensor{X_1}, \tensor{X_2} \dots \tensor{X_p}\}$ fits in the main memory, and performs in-memory computation.  We denote $p^{th}$ compressed tensor decompositions as $(\mathbf{A_{s(p)}}, \mathbf{B_{s(p)}} , \mathbf{C_{s(p)}})$ as discussed above. The data for each parallel worker $\tensor{X_p}$ can be uniquely decomposed, i.e. $(\mathbf{A_p,B_p,C_p})$ is unique up to scaling and column permutation.

Furthermore, parallel compression and decomposition is able to achieve Goal \textbf{G1}.
\begin{algorithm2e}[H]
 	\caption{\octen for incremental tensor decomposition}
    	\label{octenalg:method}
	 \SetAlgoLined
      \KwData{ $\tensor{X}_{new} \in \mathbb{R}^{I \times J \times K_{(n+1)\dots m}}$,summary $ \tensor{Y}_i \in \mathbb{R}^{Q \times Q \times Q}$,  R,p,Q, shared S.}
	 \KwResult{Factor matrices $\mathbf{A}, \mathbf{B}, \mathbf{C}$ of size $(I \times R)$, $(J \times R)$ and $(K_{1\dots n,(n+1) \dots m} \times R)$.}
			
			\While{new slice(s) coming} 
			{
    			$\tensor{Z}_i \leftarrow \{(\mathbf{U}_i^{'},\mathbf{V}_i^{'}, \mathbf{W}_i^{'})\}$ ,\ \ $ \tensor{Z}_i \in \mathbb{R}^{Q \times Q \times Q}$,\ \ $i \in (1,p)$  \\
    			$\tensor{X}_i\leftarrow \tensor{Y}_i\oplus \tensor{Z}_i$ , \ \ $ \tensor{X}_i \in \mathbb{R}^{Q \times Q \times Q}$,\ \ $i \in (1,p)$ 
    			$(\widetilde{\mathbf{A}}_{s(i)},\widetilde{\mathbf{B}}_{s(i)},\widetilde{\mathbf{C}}_{s(i)}) \leftarrow CP(\tensor{X}_i,R),\ \ i \in (1,p)$ 
    			$(\widetilde{\mathbf{P}}_{a(i)},\widetilde{\mathbf{P}}_{b(i)},\widetilde{\mathbf{P}}_{c(i)})\leftarrow \{(\mathbf{U}^{'}(i,[S,:],:)^T,\mathbf{V}^{'}(i,[S,:],:)^T,\mathbf{W}^{'}(i,[S,:],:)^T)\},\ \ i \in (1,p)$ 
    			\For{$i \leftarrow 1$ to $p-1$}{
        				$(\widetilde{\mathbf{A}}_{s},\widetilde{\mathbf{B}}_{s},\widetilde{\mathbf{C}}_{s})\leftarrow \Pi\big[(\widetilde{\mathbf{A}}_{s(i)}, \widetilde{\mathbf{B}}_{s(i)} , \widetilde{\mathbf{C}}_{s(i)}) \ \  {;} \ \ (\mathbf{A}_{s(i+1)}, \mathbf{B}_{s(i+1)} , \mathbf{C}_{s(i+1)})\big]$\\
        				 $(\widetilde{\mathbf{P}}_{a},\widetilde{\mathbf{P}}_{b},\widetilde{\mathbf{P}}_{c}) \leftarrow \big[(\widetilde{\mathbf{P}}_{a(i)},\widetilde{\mathbf{P}}_{b(i)},\widetilde{\mathbf{P}}_{c(i)}) \ \  {;} \ \ (\widetilde{\mathbf{P}}_{a(i+1)},\widetilde{\mathbf{P}}_{b(i+1)},\widetilde{\mathbf{P}}_{c(i+1)})\big]$ \\
    				
    				}
				$\mathbf{A} \leftarrow \widetilde{\mathbf{P}}_a^{-1}*\widetilde{\mathbf{A}}_s$ ; $\mathbf{B} \leftarrow \widetilde{\mathbf{P}}_b^{-1}*\widetilde{\mathbf{B}}_s$ ;   $\mathbf{C} \leftarrow [\mathbf{C}_{old}; \widetilde{\mathbf{P}}_c^{-1}*\widetilde{\mathbf{C}}_s]$\\
			}
			\KwRet{ ($\mathbf{A},\mathbf{B},\mathbf{C}$)}
		\end{algorithm2e}

\textbf{Factor match for identifiability}: According to Kruskal \cite{harshman1972determination}, the CP decomposition is unique (under mild conditions) up to permutation and scaling of the components i.e. $\mathbf{A},\mathbf{B}$ and $\mathbf{C}$ factor matrices. Consider an 3-mode tensor $\tensor{X}$ of dimension $I$, $J$ and $K$ of rank $R$. If rank 
\begin{equation}
\label{octen:uniq1}
r_c=F \implies K \geq R \  \&  \ I(I-1)(J-1)\geq 2R(R-1),
\end{equation}
 then rank 1 factors of tensor $\tensor{X}$  can be uniquely computable\cite{harshman1972determination,jiang2004kruskal}. With help of Kronecker product\cite{brewer1978kronecker} property i.e. $(\mathbf{U}^T \otimes \mathbf{C}^T \otimes \mathbf{W}^T)(\mathbf{A} \odot\mathbf{B} \odot \mathbf{C}) = ((\mathbf{U}^T\mathbf{A})\odot(\mathbf{V}^T\mathbf{B})\odot(\mathbf{W}^T\mathbf{C})) \approx (\widetilde{\mathbf{A}} ,\widetilde{\mathbf{B}} ,\widetilde{\mathbf{C}})$. Now Kruskal's combining uniqueness and Kronecker product property, we can obtain identifiable factors from {\em summaries} if 
\begin{equation} 
\label{octen:uniq2}
 \min(Q,r_A)+\min(Q,r_B)+\min(Q,r_C) \geq 2R+2
\end{equation}
 where Kruskal-rank of A,  denoted $r_A$, is the maximum r such that any $r$ columns of $\mathbf{A}$  are linearly independent;see \cite{sidiropoulos2014parallel}. Hence, upon factorization of 3-mode $\tensor{X_p}$ into $R$ rank-1 components, we obtain $ \mathbf{A}= a^T_pA\Pi_{p}\lambda_p^{(1/N)}$ where $a$ is shared among summaries decompositions , $\Pi_p$ is a permutation matrix, and $\lambda_p$ is a diagonal scaling matrix obtained from CP decomposition. To match factor matrices after decomposition step, we first normalize the shared columns of factor matrices $(\mathbf{A_{s(i)}, B_{s(i)} , C_{s(i)}})$ and $(\mathbf{A_{s(i+1)}, B_{s(i+1)} , C_{s(i+1)}})$ to unit norm $||.||_1$ \hide{From V: I think you need $\| \|_2$ here, right?, From Ekta: here we need to normalize each column by dividing it with its max value. Basically converting values between 0-1.}. Next, for each column of $\mathbf{(A_{s(i+1)},  B_{s(i+1)} , C_{s(i+1)}})$, we find the most similar column of $(\mathbf{A_{s(i)},  B_{s(i)} , C_{s(i)}})$, and store the correspondence. Finally, We can describe factor matrices as :
 
\begin{equation}
 \mathbf{\widetilde{A}_s} = \begin{bmatrix}
        a^T_1\\
        a^T_2\\
 	\vdots\\
 	a^T_p
 \end{bmatrix} *\mathbf{A} \Pi\lambda^{(1/N)} , \  \ \mathbf{\widetilde{B}_s} = \begin{bmatrix}
 b^T_1\\
 b^T_2\\
 \vdots\\
b^T_p
 \end{bmatrix} *\mathbf{B} \Pi\lambda^{(1/N)} , \  \ \mathbf{\widetilde{C}_s} = \begin{bmatrix}
 c^T_1\\
 c^T_2\\
 \vdots\\
 c^T_p
 \end{bmatrix} *\mathbf{C} \Pi\lambda^{(1/N)} 
\end{equation}
 
where $\widetilde{\mathbf{A}}_s,\widetilde{\mathbf{B}}_s,$ and $\widetilde{\mathbf{C}}_s$ are matrices of dimension $\widetilde{\mathbf{A}}_s \in \mathbb{R}^{pQ \times R}, \ \ \widetilde{\mathbf{B}}_s \in\mathbb{R}^{pQ \times R}$ and $\widetilde{\mathbf{C}}_s \in \mathbb{R}^{pQ \times R}$ respectively and N is number of dimensions of tensor. For 3-mode tensor , N is 3 and for 4-mode tensor, N is 4 and so on. Similarly, for 4-mode, we have $(\mathbf{\widetilde{A}_s , \widetilde{B}_s , \widetilde{C}_s,\widetilde{D}_s})$ and so on.   Even though for 3-mode tensor, $\mathbf{A}$ and $\mathbf{B}$ do not increase their number of rows over time, the incoming slices may contribute valuable new estimates to the already estimated factors.Thus, we update the all factor matrices in same way. This is able to partially achieve Goal \textbf{G2}.

\textbf{Update results}:  Final step is to remove the all singleton dimensions from the sets compression matrices \{$\mathbf{U_p,V_p, W_p}$\} and stack them together. A singleton dimension of tensor or matrix is any dimension for which size of matrix or tensor with given dimension becomes one. Consider the 5-by-1-by-5 array A. After removing its singleton dimension, the array A become 5-by-5. Now, we can write compression matrices as:
\begin{equation}
	\widetilde{\mathbf{P}}_a= \begin{bmatrix}
	\mathbf{U(:,:,1)^T}\\
   \mathbf{	U(:,:,2)^T}\\
	\vdots\\
\mathbf{	U(:,:,p)^T}
	\end{bmatrix}  , \  \ \widetilde{\mathbf{P}}_b = \begin{bmatrix}
	\mathbf{V(:,:,1)^T}\\
	\mathbf{V(:,:,2)^T}\\
	\vdots\\
	\mathbf{V(:,:,p)^T}
	\end{bmatrix} , \  \ \widetilde{\mathbf{P}}_c = \begin{bmatrix}
	\mathbf{W(:,:,1)^T}\\
	\mathbf{W(:,:,2)^T}\\
	\vdots\\
\mathbf{	W(:,:,p)^T}
	\end{bmatrix} 
\end{equation}

where $\widetilde{\mathbf{P}}_a,\widetilde{\mathbf{P}}_b,$ and $\widetilde{\mathbf{P}}_c$ are matrices of dimension $\widetilde{\mathbf{P}}_a=\mathbb{R}^{pQ \times I},\ \ \widetilde{\mathbf{P}}_b=\mathbb{R}^{pQ \times J}$ and $\widetilde{\mathbf{P}}_c=\mathbb{R}^{pQ \times K}$ respectively. The updated factor matrices (A, B, and C) for 3-mode tensor $\tensor{X}$ (i.e. $\tensor{X}_{old} +  \tensor{X}_{new}$)  can be obtained by : 

\begin{equation}
\mathbf{A}=\widetilde{\mathbf{P}}_a^{-1}*\widetilde{\mathbf{A}}_s,\ \ 
\mathbf{B}= \widetilde{\mathbf{P}}_b^{-1}*\widetilde{\mathbf{B}}_s , \ \ \mathbf{C}= \big[ \mathbf{C}_{old};\widetilde{\mathbf{P}}_c^{-1}*\widetilde{\mathbf{C}}_s\big]
\end{equation}

where $\mathbf{A},\mathbf{B}$ and $\mathbf{C}$ are matrices of dimension $\mathbb{R}^{I \times R}, \ \ \mathbb{R}^{J \times R}$ and $\mathbb{R}^{K_{1\dots n,(n+1) \dots m} \times R}$ respectively. Hence, we achieve Goal \textbf{G2}.

\textbf{Complexity Analysis}: As discussed previously, compression step's time and space complexity is $O(QSt_{new})$ and $O(Q^3)$ respectively.  Identifiability and update can be calculated in $O(pQI+pQR)$. Hence, time complexity is considered as $O(p^2QI+p^2QIR+QSt_{new})$. Overall, as S is larger than any other factors, the time complexity of \octen can be written as $O(QSt_{new})$. In terms of space consumption, \octen is quite efficient since only the compressed matrices, previous factor matrices and summaries need to be stored. Hence, the total cost of space is $pQ(pT+t_{new}+R)+(T+t_{old})R+Q^3$.
\begin{table*}[h!]
	\small
	\begin{center}
		\begin{tabular}{ |c||c|c|c| }
			\hline
			Dataset & Time Complexity & Space Complexity & Reference \\
			\hline
		    \octen&$O(QSt_{new})$&$pQ(pT+t_{new}+R)+(T+t_{old})R+Q^3$&\\
			OnlineCP&$O(NSt_{new}R)$&$St_{new}+(2T+t_{old})R+(N-1)R^2$&\cite{zhou2016accelerating}\\
			SambaTen&$O(nnz(X)R(N-1)$& $nnz(X)+(T+t_{old}+2S)R+2R^2$&\cite{gujral2018sambaten}\\
			RLST&$O(R^2S)$&$St_{new}+(T+t_{old}+2S)R+2R^2$&\cite{nion2009adaptive}\\
			ParaComp&$O(QSt_{new}R)$&$St_{new}+(T+t_{old})R+pQ^3$&\cite{sidiropoulos2014parallel}\\
			\hline
		\end{tabular}
		\caption{Complexity comparison between \octen and state-of-art methods.}
		\label{octen:complexAnalysis}
	\end{center}
\end{table*}

\subsection{Extending to Higher-Order Tensors} 
We now show how our approach is extended to higher-order cases. Consider N-mode tensor $\tensor{X}_{old} \in \mathbb{R}^{I_1 \times I_2 \times \dots \times I_{N-1} \times T_1}$. The factor matrices are $(\mathbf{A}^{(1)}_{old}, \mathbf{A}^{(2)}_{old},\dots, \mathbf{A}^{(N-1)}_{old}, \mathbf{A}^{(T_1)}_{old})$ for CP decomposition with $N^{th}$ mode as new incoming data. A new tensor $\tensor{X}_{new} \in \mathbb{R}^{I_1 \times I_2 \times \dots \times I_{N-1} \times T_2}$ is added to $\tensor{X}_{old}$ to form new tensor of $\mathbb{R}^{I_1 \times I_2 \times \dots  \times I_{N-1} \times T}$ where $T=T_1+T_2$. In addition, sets of {\em compression matrices} for old data are \{$\mathbf{U}_p^{(1)},\mathbf{U}_p^{(2)},\dots, \mathbf{U}_p^{(N-1)}, \mathbf{U}_p^{(T)}$\} and for new data it is \{$\mathbf{U}_p^{'(1)},\mathbf{U}_p^{'(2)},\dots, \mathbf{U}_p^{'(N-1)}, U_p^{'(T)}$\} for {\em p} number of summaries.

Each compression matrices are converted into compressed cubes i.e. for $\tensor{X}_{old}$ compressed cube is of dimension $\tensor{Y_{p}} \in \mathbb{R}^{Q^{(1)} \times Q^{(2)}  \dots \times Q^{(N-1)}  \times Q^{(N)} }$ and same follows for $\tensor{X}_{new}$. The updated summaries are computed using $\tensor{X_{p}}=\tensor{Y_{p}}+\tensor{Z_{p}}$ s.t. $\tensor{X_{p}}  \in \mathbb{R}^{Q^{(1)} \times Q^{(2)}  \dots \times Q^{(N-1)}  \times Q^{(N)} }$. After CP decomposition of  $\tensor{X_{p}}$, factor matrices and random compressed matrices are stacked as : 
\begin{multline*}
(\widetilde{\mathbf{A}}_{s}^{(1)},\widetilde{\mathbf{A}}_{s}^{(2)},\dots, \widetilde{\mathbf{A}}_{s}^{(N-1)} , \widetilde{\mathbf{A}}_{s}^{(N)}) \leftarrow \Pi\big[(\widetilde{A}_{s(i)}^{(1)}, \widetilde{\mathbf{A}}_{s(i)}^{(2)} ,\dots, \widetilde{\mathbf{A}}_{s(i)}^{(N-1)},\widetilde{\mathbf{A}}_{s(i)}^{(N)} ) \ \  {;} \\ 
 \ \ (\widetilde{\mathbf{A}}_{s(i+1)}^{(1)}, \widetilde{\mathbf{A}}_{s(i+1)}^{(2)} ,\dots, \widetilde{\mathbf{A}}_{s(i+1)}^{(N-1)},\widetilde{\mathbf{A}}_{s(i+1)}^{(N)} )\big] , i \in (1,p-1)
\end{multline*}
\begin{equation}
\& \  \ (\widetilde{\mathbf{P}}^{(1)},\widetilde{\mathbf{P}}^{(2)},\dots, \widetilde{\mathbf{P}}^{(N-1)} , \widetilde{\mathbf{\mathbf{P}}}^{(N)}) = \begin{bmatrix}
	\widetilde{\mathbf{P}}_{(1)}^{(1)},\widetilde{\mathbf{P}}_{(1)}^{(2)}, \dots, \widetilde{\mathbf{P}}_{(1)}^{(N-1)} , \widetilde{\mathbf{P}}_{(1)}^{(N)}\\
    \widetilde{\mathbf{P}}_{(2)}^{(1)},\widetilde{\mathbf{P}}_{(2)}^{(2)}, \dots, \widetilde{\mathbf{P}}_{(2)}^{(N-1)} , \widetilde{\mathbf{P}}_{(2)}^{(N)}\\
	\vdots\\
	\widetilde{\mathbf{P}}_{(p)}^{(1)},\widetilde{\mathbf{P}}_{(p)}^{(2)}, \dots, \widetilde{\mathbf{P}}_{(p)}^{(N-1)} , \widetilde{\mathbf{P}}_{(p)}^{(N)}
\end{bmatrix} 
\end{equation}
Finally, the update rule of each non-temporal mode $\in (1,N-1)$ and temporal mode $\in (N)$ is :
\begin{multline*}
(\mathbf{A}^{(1)}, \mathbf{A}^{(2)} ,\dots ,\mathbf{A}^{(N-1)}, \mathbf{A}^{N}) \leftarrow
(\widetilde{\mathbf{P}}^{(1)^{-1}}*\widetilde{\mathbf{A}}_{s}^{(1)},
\widetilde{\mathbf{P}}^{(2)^{-1}}*\widetilde{\mathbf{A}}_{s}^{(2)}, \dots ,
\end{multline*}
\begin{equation} 
\widetilde{\mathbf{P}}^{(N-1)^{-1}}*\widetilde{\mathbf{A}}_{s}^{(N-1)}, \big[ \mathbf{A}_{old}^{(N)}; \widetilde{\mathbf{P}}^{(N)^{-1}}*\widetilde{\mathbf{A}}_{s}^{(N)}) \big]
\end{equation}
\hide{
$$
\begin{bmatrix}
A^{(1)}\\
A^{(2)}\\
\vdots\\
A^{(N-1)}\\
A^{N}
\end{bmatrix} 
\leftarrow
\begin{bmatrix}
\widetilde{P}^{(1)-1}*\widetilde{A}_{s}^{(1)}\\
\widetilde{P}^{(2)-1}*\widetilde{A}_{s}^{(2)}\\
\vdots\\
\widetilde{P}^{(N-1)-1}*\widetilde{A}_{s}^{(N-1)}\\
\widetilde{P}^{(N)-1}*\widetilde{A}_{s}^{(N)}
\end{bmatrix} 
$$
}
Finally, by putting everything together, we obtain the general version of our \octen for 3-mode tensor, as presented in Algorithm \ref{octen:method}.The matlab implementation for higher modes is also available at $link^{1}$.

\subsection{Necessary characteristics for uniqueness}
As we mention above, \octen is able to identify the solution of the online CP decomposition, as long as the parallel CP decompositions on the compressed tensors are also identifiable. Empirically, we observe that if the decomposition was given data that have exact or near-trilinear structure (or multilinear in the general case), i.e., obeying the low-rank CP model with some additive noise, \octen is able to successfully, accurately, and using much fewer resources than state-of-the-art, track the online decomposition. On the other hand, when given data that do not have a low trilinear rank, the results were of lower quality. This observation is of potential interest in exploratory analysis, where we do not know 1) the (low) rank of the data, and 2) whether the data have low rank to begin with (we note that this is an extremely hard problem, out of the scope of this chapter, but we refer the interested reader to previous studies \cite{wang2018low,papalexakis2016automatic} for an overview). If \octen provides a good approximation, this indirectly signifies that the data have appropriate trilinear structure, thus CP is a fitting tool for analysis. If, however, the results are poor, this may indicate that we need to reconsider the particular rank used, or even analyzing the data using CP in the first place. We reserve further investigations of what this observation implies for future work.

\section{Experiments}
\label{octen:experiments}
We design experiments to answer the following questions:\\
\textbf{Q1. Memory Efficient}: How  much memory \octen  required for updating incoming data?\\
\textbf{Q2. Fast and Accurate}: How fast and accurately are updates in \octen compared to incremental algorithms?\\
\textbf{Q3. Scalability}: How does the running time of \octen increase as tensor data grow (in time mode)?\\
\textbf{Q4. Sensitivity}: What is the influence of parameters on \octen ?\\
\textbf{Q5. Effectiveness}: How \octen used in real-world scenarios?

For our all experiments, we used Intel(R) Xeon(R), CPU E5-2680 v3 @ 2.50GHz machine with 48 CPU cores and 378GB RAM. 
\hide{In this section we extensively evaluate the performance of \octen on multiple synthetic datasets, and compare its performance with state-of-the-art approaches.We experiment on the different parameters of \octen, and how that affects performance } 

\subsection{Evaluation Measures}
\label{octen:EvaMeas}
We evaluate \octen and the baselines using three criteria: fitness, processor memory used, and wall-clock time. These measures provide a quantitative way to compare the performance of our method. More specifically, \textbf{Fitness} measures the effectiveness of approximated tensor and defined as : 
\begin{equation}
Fitness(\%)= 100* \Big(1- \frac{||\tensor{X}-\tensor{X}_{approx}||_F^2}{||\tensor{X} ||_F^2}\Big)
\end{equation}
higher the value, the better approximation. 

\textbf{CPU time (sec)}: indicates the average running time for processing for processing all slices for given tensor, measured in seconds, is used to validate the time efficiency of an algorithm.

\textbf{Process Memory used(MB)}: indicates the average memory required to process each slices for given tensor, is used to validate the space efficiency of an algorithm.
\subsection{Baselines}
In this experiment, four baselines have been selected as the competitors to evaluate the performance. \\

\textbf{OnlineCP}\cite{zhou2016accelerating}: it is online CP decomposition method, where the latent factors are updated when there are new data.\\

\textbf{SambaTen}\cite{gujral2018sambaten}: Sampling-based Batch Incremental Tensor Decomposition algorithm is the most recent and state-of-the-art method in online computation of canonical parafac and perform all computations in the reduced summary space.\\

\textbf{RLST}\cite{nion2009adaptive}: Recursive Least Squares Tracking (RLST) is another online approach in which recursive updates are computed to minimize the Mean Squared Error (MSE) on incoming slice.\\

\textbf{ParaComp}\cite{sidiropoulos2014parallel}: an implementation of non-incremental parallel compression based tensor decomposition method. The model is based on parallel processing of randomly compressed and reduced size replicas of the data. Here, we simply re-compute decomposition after every update.
\subsection{Experimental Setup}
\label{octen:syntDataDes}
The specifications of each synthetic dataset are given in Table \ref{octen:tsyndataset}. We generate random tensors of dimension I = J = K with increasing I and other modes. Those tensors are created from a known set of randomly generated factors with known rank $R$, so that we have full control over the ground truth of the full decomposition. We dynamically calculate the size of incoming batch or incoming slice(s) for our all experiments to fit the data into 10\% of memory of machine. Left over memory of machine  is used for computations for algorithm. We use 20 parallel workers for every experiment.

\begin{table*}[h!]
	\small
	\begin{center}
		\begin{tabular}{ |c||c|c|c|c|c|c|c| }
			\hline
			Dimension & $\#$number of non-zeros &  Batch size  &p&Q&shared \\
			\hline
			\hline
			50 x 50 x 50 &$125K$&5 &20&30&5\\
			100 x 100 x 100 &$1M$&10 &30&35&10\\
			500 x 500 x 500 &$125M$&50&40&30&6\\
			1000 x 1000 x 1000  &$1B$&20&50 &40 &10 \\
			5000 x 5000 x 5000 &$7B$&10&90&70&25\\
			10000 x 10000 x 10000 &$1T$&10&110&100&20\\
			50000 x 50000 x 50000 &$6.25T$&4&140&150&30\\
			\hline
		\end{tabular}
		\bigskip
		\caption{Table of Datasets analyzed}
		\label{octen:tsyndataset}
	\end{center}
\end{table*}

Note that all comparisons were carried out over 10 iterations each, and each number reported is an average with a standard deviation attached to it. Here, we only care about the relative comparison among baseline algorithms and  it is not mandatory to have the best rank decomposition for every dataset. Although, there are few methods \cite{morup2009automatic,papalexakis2016automatic} available in literature to find rank of tensor. But for simplicity, we choose the rank R to 5 for all datasets. In case of  method-specific parameters, for the ParaComp algorithm, the settings are choose to give best performance. For OnlineCP, we choose the batchsize which gave best performance in terms of approximation fitness. For fairness, we also compare against the parameter configuration for \octen that yielded the best performance for baselines. Also, during processing , for all methods we removed the unwanted variable from baselines to fairly compare with our methods.

\subsection{Results}
\subsubsection{Memory efficient, Fast and Accurate}
For all datasets we compute Fitness(\%),CPU time (seconds) and Memory(MB) space required. For \octen, OnlineCP, ParaComp,Sambaten and RLST we use 10\% of the time-stamp data in each dataset as existing old tensor data. The results for qualitative measure for data is shown in Table \ref{octen:resultsyndataFitness}-\ref{octen:resultsyndataMM} . For each of tensor data ,the best performance is shown in bold. All state-of-art methods address the issue very well. Compared with OnlineCP, ParaComp,Sambaten and RLST, \octen give comparable fitness and reduce the mean CPU running time by up to 2x times for big tensor data. For all datasets, PARACOMP's accuracy (fitness) is better than all methods. But it is able to handle upto $\mathbb{R}^{10^4 \times 10^4 \times 250}$ size tensor only. For small size datasets, OnlineCP's efficiency is better than all methods. For large size dataset, \octen outperforms the baseline methods w.r.t fitness, average running time (improved 2x-4x) and memory required to process the updates. It significantly saved 200-250\% of memory as compared to Online CP, Sambaten and RLST as shown in Table \ref{octen:resultsyndataMM}. It saved 40-80\% memory space compared to Paracomp. Hence, \octen is comparable to state-of-art methods for small dataset and outperformed them for large dataset. These results answer Q1 \& Q2 as the \octen have comparable qualitative measures to other methods.

\begin{table}[h!]
\small
	\begin{center}
		\begin{tabular}{ |c||c|c|c|c|c| }
			\hline
			Size &\multicolumn{5}{|c|}{\textbf{Fitness(\%)}} \\
			\hline
		     I=J=K &  \octen & OnlineCP& SambaTen&ParaComp&RLST \\
			\hline
			50 &97.1$\pm$4.9&86.9$\pm$7.3&76.9$\pm$16.1&\textbf{100.0$\pm$0} &95.0$\pm$0.1\\
			
			100 &95.8$\pm$2.2&	93.4$\pm$6.9& 72.1$\pm$3.7&	\textbf{100.0$\pm$0} 	&97.7$\pm$0.1\\
			
			500&96.9$\pm$1.3&90.9$\pm$8.7&	84.1$\pm$0.6&\textbf{100.0$\pm$0} &	94.6$\pm$0.1		 \\
			1000&96.1$\pm$0.1&71.6$\pm$10.7&	85.6$\pm$0.1	&\textbf{100.0$\pm$0} &	98.3$\pm$0.1  \\
			5000&90.8$\pm$14.7	&90.6 $\pm$42.7&	80.5$\pm$15.5&	\textbf{100.0$\pm$0} 	&98.8$\pm$0.1		 \\
			10000&\textbf{98.13$\pm$2.5}&54.34$\pm$1.4&56.12$\pm$1.5&NA&97.3$\pm$0.2 \\
			50000&\textbf{68.13$\pm$4.2}&56.89$\pm$3.7&53.95$\pm$12.7&NA&NA \\
	     \hline
		\end{tabular}
	  
	\begin{tabular}{ |c||c|c|c|c|c| }
		\hline
		Dimension &\multicolumn{5}{|c|}{\textbf{CPU Time(sec)}} \\
		\hline
		I=J=K &  \octen & OnlineCP& SambaTen&ParaComp&RLST\\
		\hline
		50 &1.32$\pm$0.01&	\textbf{0.95$\pm$0.1}&2.53$\pm$0.1&1.09$\pm$0.1&1.35$\pm$0.1\\	
		100 & 1.92$\pm$0.3&\textbf{1.35$\pm$0.3}&5.19$\pm$0.3&	1.91$\pm$0.3&	4.97$\pm$0.3 \\	
		500	&\textbf{19.7$\pm$0.1}&	20.73$\pm$4.4	&99.21$\pm$2.9	&26.3$\pm$0.1	&37.20$\pm$0.2	\\
		1000&	\textbf{290.9$\pm$6.7}	&455.7$\pm$24.5&	606.3$\pm$1.7&	511.37$\pm$1.6&	401.08$\pm$1.8	\\
		5000&\textbf{4398.8$\pm$45.5}&	5835.5$\pm$0.2K&	6779.7$\pm$0.3K&	5157.98$\pm$3.9&	6464.52$\pm$0.3K\\
		10000&\textbf{15406.5$\pm$156.9}&21287.3$\pm$85.6&20589.81$\pm$69.4&NA&22974.76$\pm$0.1K \\
	    50000&\textbf{78892.8$\pm$98.2}&80159.6$\pm$23.5&79922.80$\pm$47.3&NA&NA \\
	   	\hline
	\end{tabular}  
	\caption{Experimental results for speed and accuracy of approximation of incoming slices. We see that \octen gives comparable accuracy and speed to baseline.This answers our Q2.}
		\label{octen:resultsyndataFitness}
	\end{center}
 
\end{table}
 
\begin{table}[h!]
		\small
 	\begin{tabular}{ |c||c|c|c|c|c| }
			\hline
			Size & \multicolumn{5}{|c|}{\textbf{Process Memory (MB)} }\\
			\hline
			I=J=K & \octen & OnlineCP& SambaTen &ParaComp&RLST \\
			\hline
			50  &	\textbf{1.4$\pm$0.001}&	1.6$\pm$0.002&	8.8$\pm$0.001&	5.5$\pm$0.003&	6.271$\pm$0.001\\
			100&	\textbf{13.2$\pm$0.04}	&16.8$\pm$0.03	&30.1	$\pm$0.01&14.5$\pm$0.02	&19.255		$\pm$0.02	 \\
			500&	\textbf{ 25.7$\pm$0.09}&	2018.3$\pm$0.91&	2959.4$\pm$0.15&	26.3$\pm$0.01&	1070.138		$\pm$0.01	 \\
			1000&	\textbf{85.1$\pm$4.1}&	16037.4$\pm$56.5&	13830.8$\pm$21.7&	110.9$\pm$5.1&	7789.841	$\pm$23.2		  \\
			5000 & 	\textbf{3138.1$\pm$10.1}&	19457.7$\pm$25.7	&19409.4$\pm$56.8	&4707.1$\pm$4.1&	11295.270	$\pm$12.9	\\
			10000& \textbf{21179.5$\pm$56.9}&67145.7 $\pm$0.0 &61935.7$\pm$0.0&NA&32459.1	$\pm$456.1  \\
			50000&    \textbf{59613.5$\pm$30.0}   &89657.8 $\pm$0.0 &78387.2$\pm$0.0&NA&NA \\
					\hline
		\end{tabular}
	 \bigskip
	    \caption{Experimental results for memory required to process of incoming slices. We see that \octen remarkably save the memory as compared to state-of-art techniques.This answers our Q1.}
	   \label{octen:resultsyndataMM}
 
\end{table}
 
\subsubsection{Scalability Evaluation}
To evaluate the scalability of our method, firstly, a tensor $\tensor{X} $ of small slice size ($I \in [20,50,100]$) but longer time dimension ($K \in [10^2-10^6]$) is created. Its first $\le$10\% timestamps of data is used for $\tensor{X}_{old}$ and each method’s running time for processing batch of $\le$10 data slices at each time stamp measured.
\begin{figure}[!ht]
	\begin{center}
	\includegraphics[clip,trim=0.2cm 5cm 1cm 3cm,width = 0.35\textwidth]{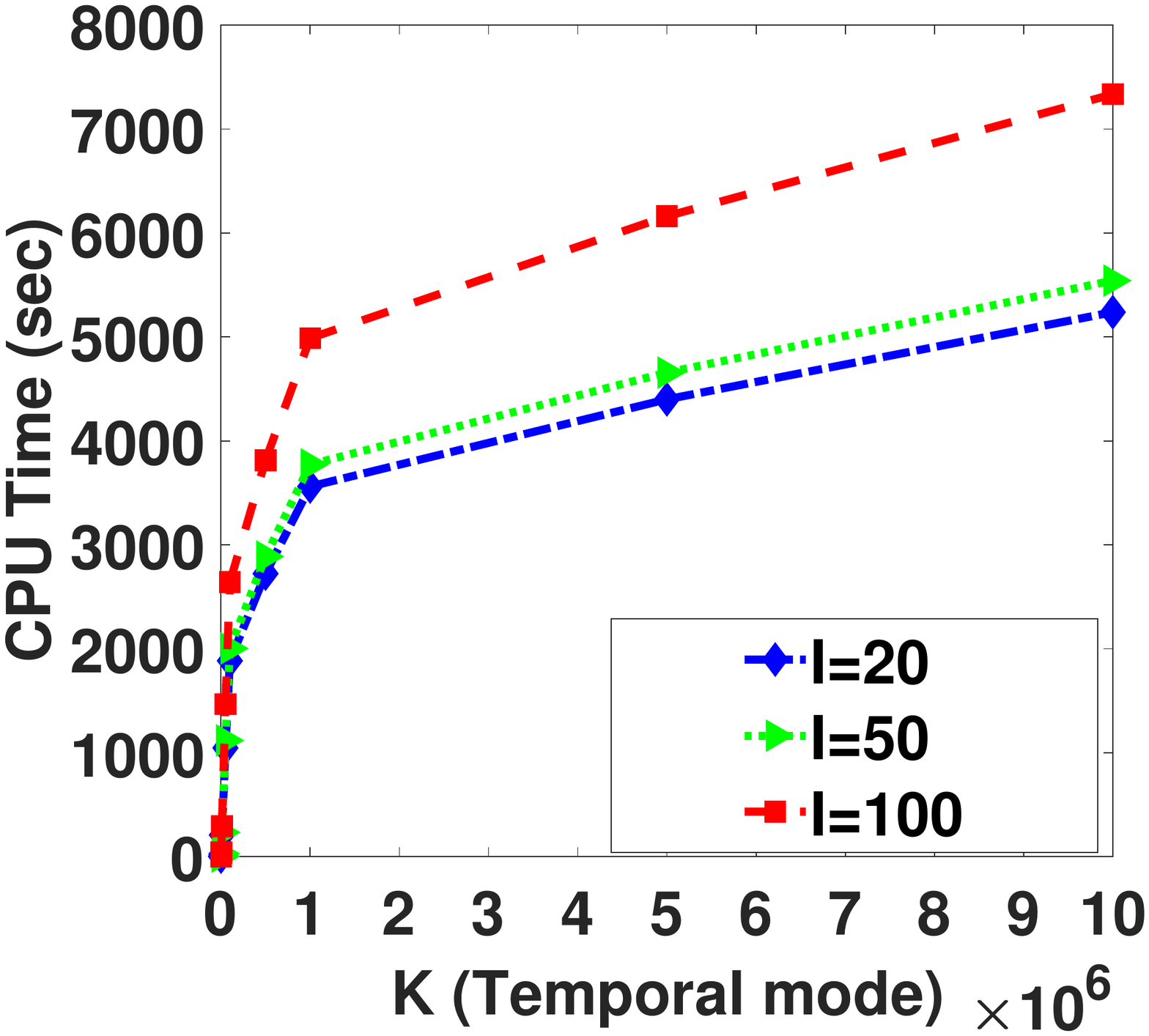}
	\includegraphics[clip,trim=0.2cm 5cm 1cm 3cm,width = 0.35\textwidth]{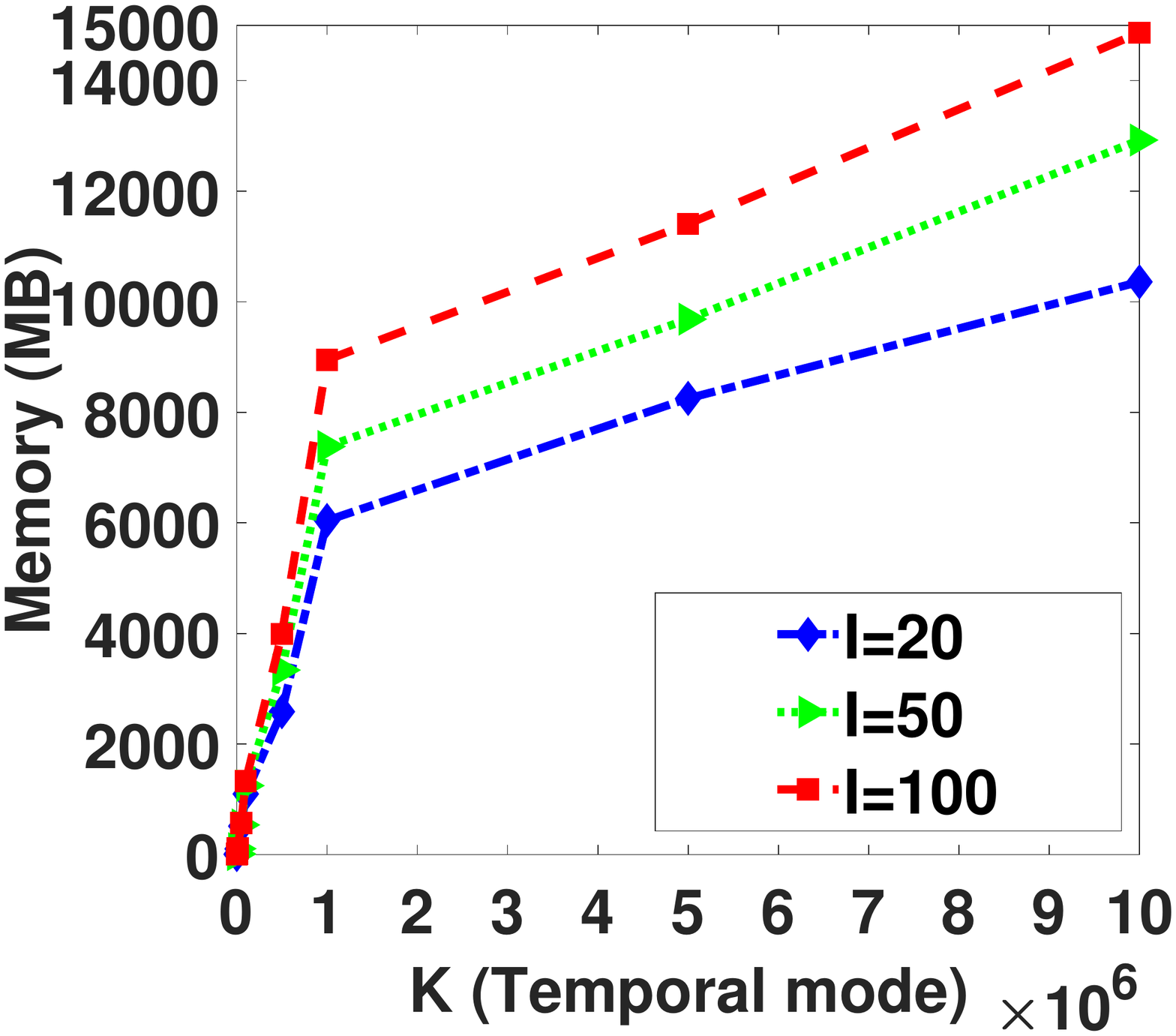}
	\caption{ CPU time (in seconds) and Memory (MB) used for processing slices to tensor $\tensor{X}$ incrementing in its time mode. The time and space consumption increases quasi-linearly. The mean fitness is $\geq$90\% for all experiments}
	\label{octen:SCALABILITY}
	\end{center}
\end{figure}

As can be seen from Figure \ref{octen:SCALABILITY}, increasing length of temporal mode increases time consumption quasi-linearly. However the slope is different for various non-temporal mode data sizes. In terms of memory consumption, \octen also behaves linearly. This is expected behaviour because with increase in temporal mode, the parameters i.e. p and Q also grows. Once again, our proposed method illustrates that it can efficiently process large size temporal data. This answers our Q3.

\subsubsection{Parameter Sensitivity}
\label{octen:sControl}
We extensively evaluate sensitivity of number of compressed cubes required 'p' , size of compressed cubes and number of shared columns required for \octen. we fixed batch size to $\approx 0.1*K$ for all datasets ,where $K$ is time dimension of tensor . As discus in section \ref{octen:method}, it is possible to identify unique decomposition . In addition, if we have 
 \begin{equation}
 \label{octen:para}
p \geq \max([(I-shared)/(Q-shared) \ \ J/Q  \ \ K/Q])
\end{equation}
 for parallel workers, decomposition is almost definitely identifiable with column permutation and scaling.
We keep this in mind and evaluate the \octen as follow.

\textbf{Sensitivity of \textit{p}} :The number of compressed cubes plays an important role in \octen. We performed experiments to evaluate the impact of changing the number of cubes i.e. $p$ with fixed values of other parameters for different size of tensors. We see in figure \ref{octen:sen_p}  that increasing  number of cubes results in increase of Fitness of approximated tensor and  CPU Time and Memory (MB) is super-linearly increased. Consider the case of $I=J=K=1000$, from above condition, we need $P\geq\max\big( \big[ \frac{1000-10}{50-10} \ \ \frac{1000}{50} \ \ \frac{100}{50} \big]\big) \approx 25$. We can see from figure \ref{octen:sen_p}, the condition holds true.
\begin{figure}[!ht]
	\begin{center}
		\includegraphics[clip,trim=0cm 5cm 0cm 2.5cm,width = 0.32\textwidth]{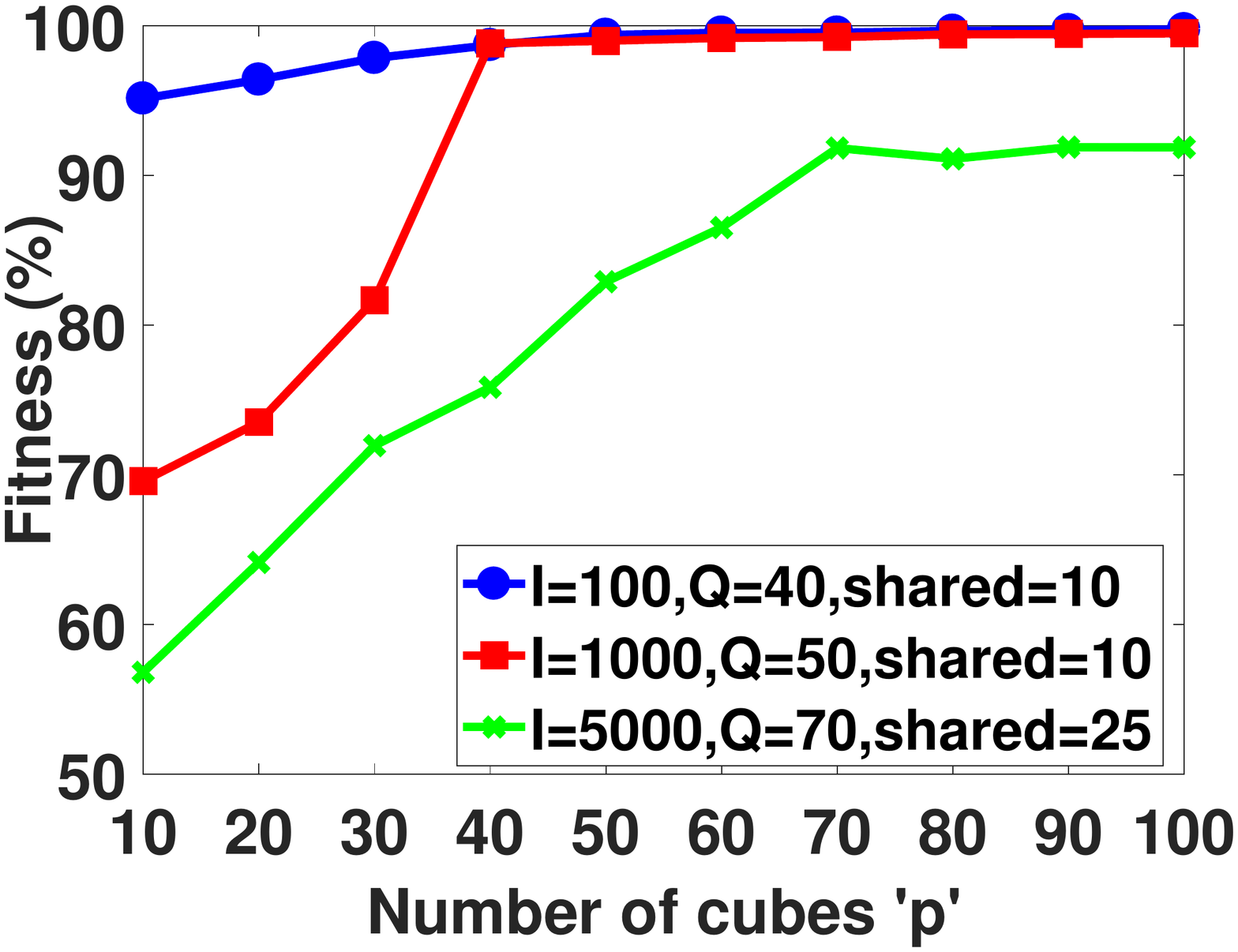}
	    \includegraphics[clip,trim=0cm 5cm 0cm 2.5cm,width = 0.32\textwidth]{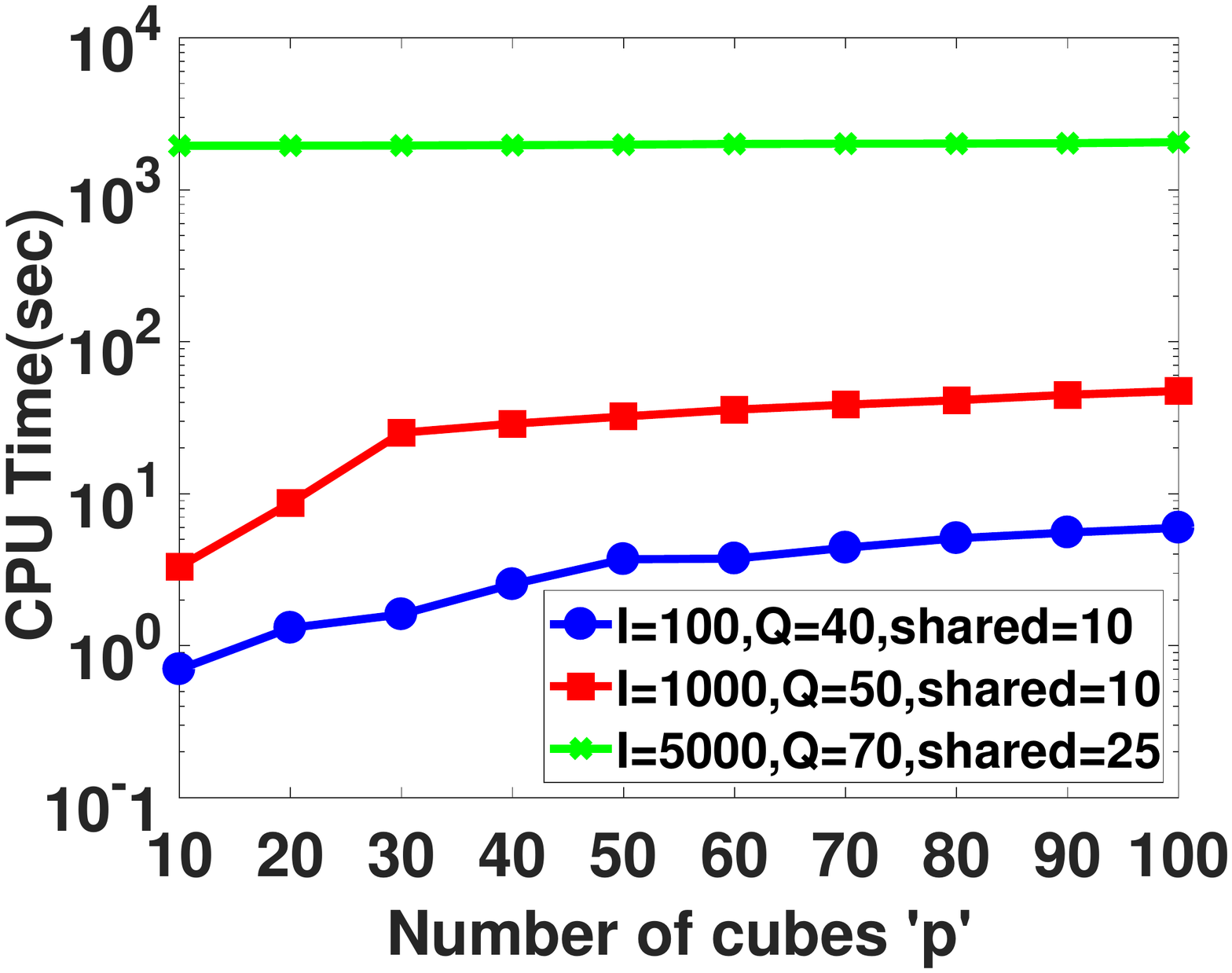}
		\includegraphics[clip,trim=0cm 5cm 0cm 2.5cm,width = 0.32\textwidth]{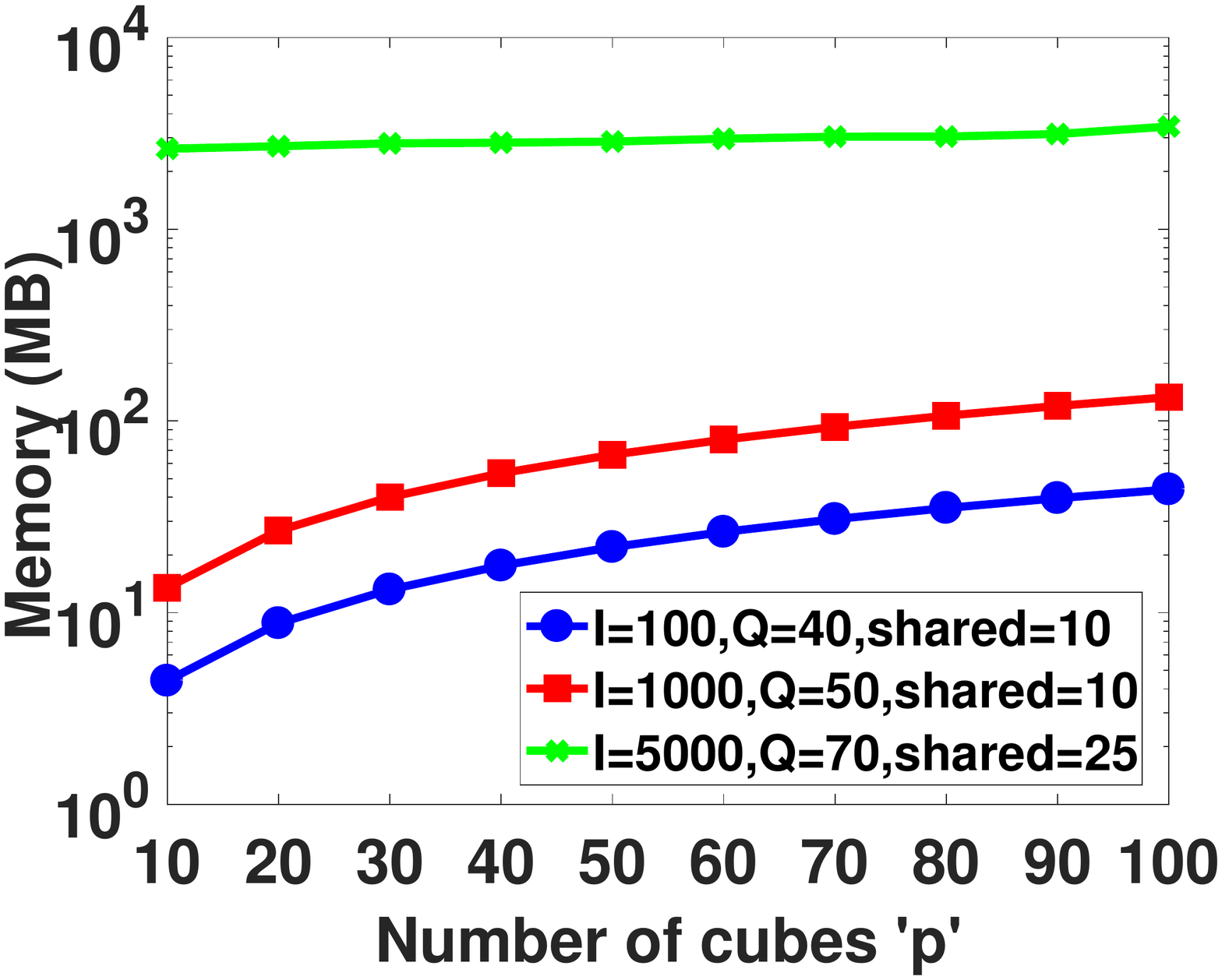}
		\caption{\octen Fitness, CPU Time (sec) and memory used vs. Number of compressed tensors 'p'on different datasets. With large 'p', high fitness is achieved.}
		\label{octen:sen_p}
	\end{center}

\end{figure}
\begin{figure}[!ht]

	\begin{center}
		\includegraphics[clip,trim=0cm 5cm 0cm 2.5cm,width = 0.32\textwidth]{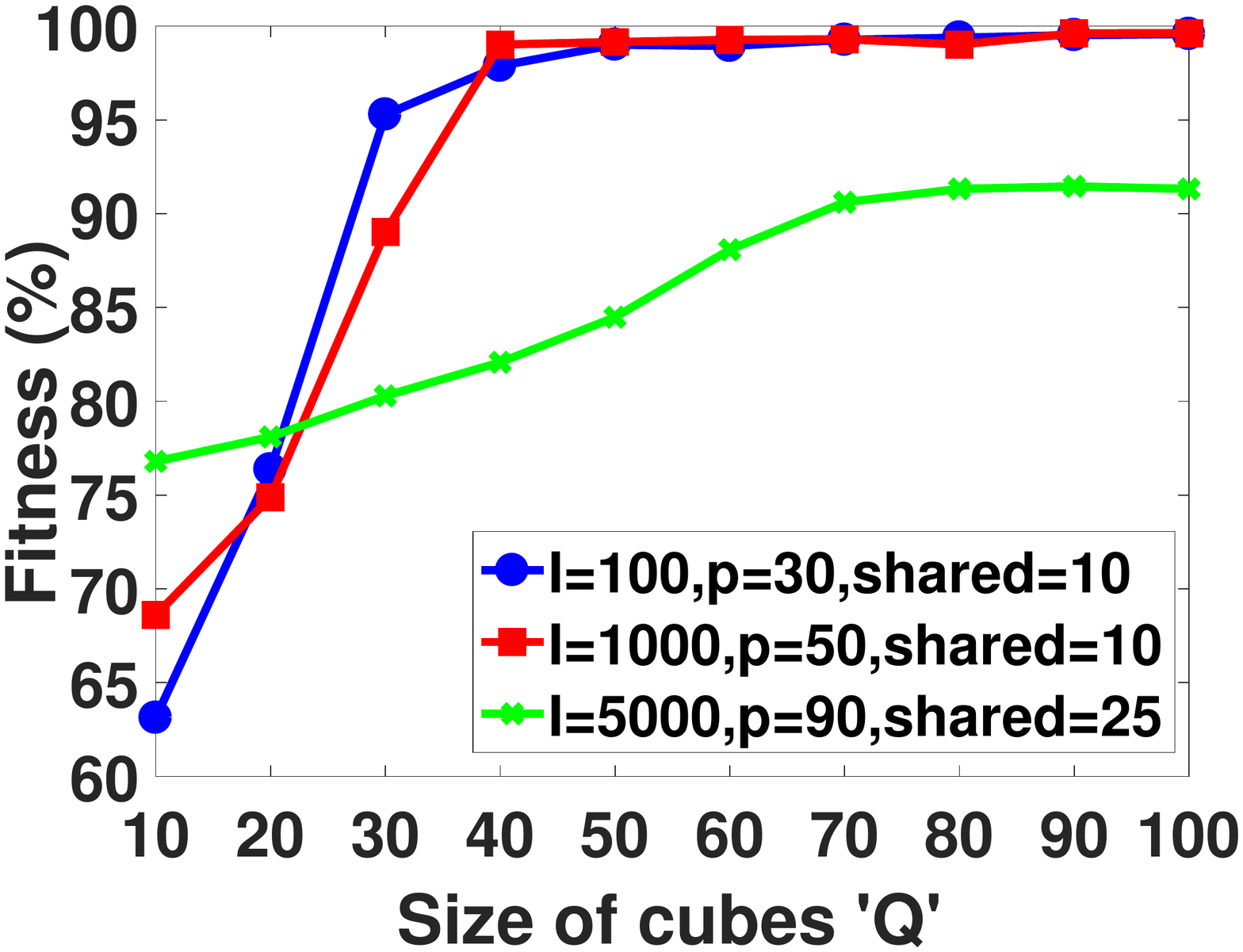}
		\includegraphics[clip,trim=0cm 5cm 0cm 2.5cm,width = 0.32\textwidth]{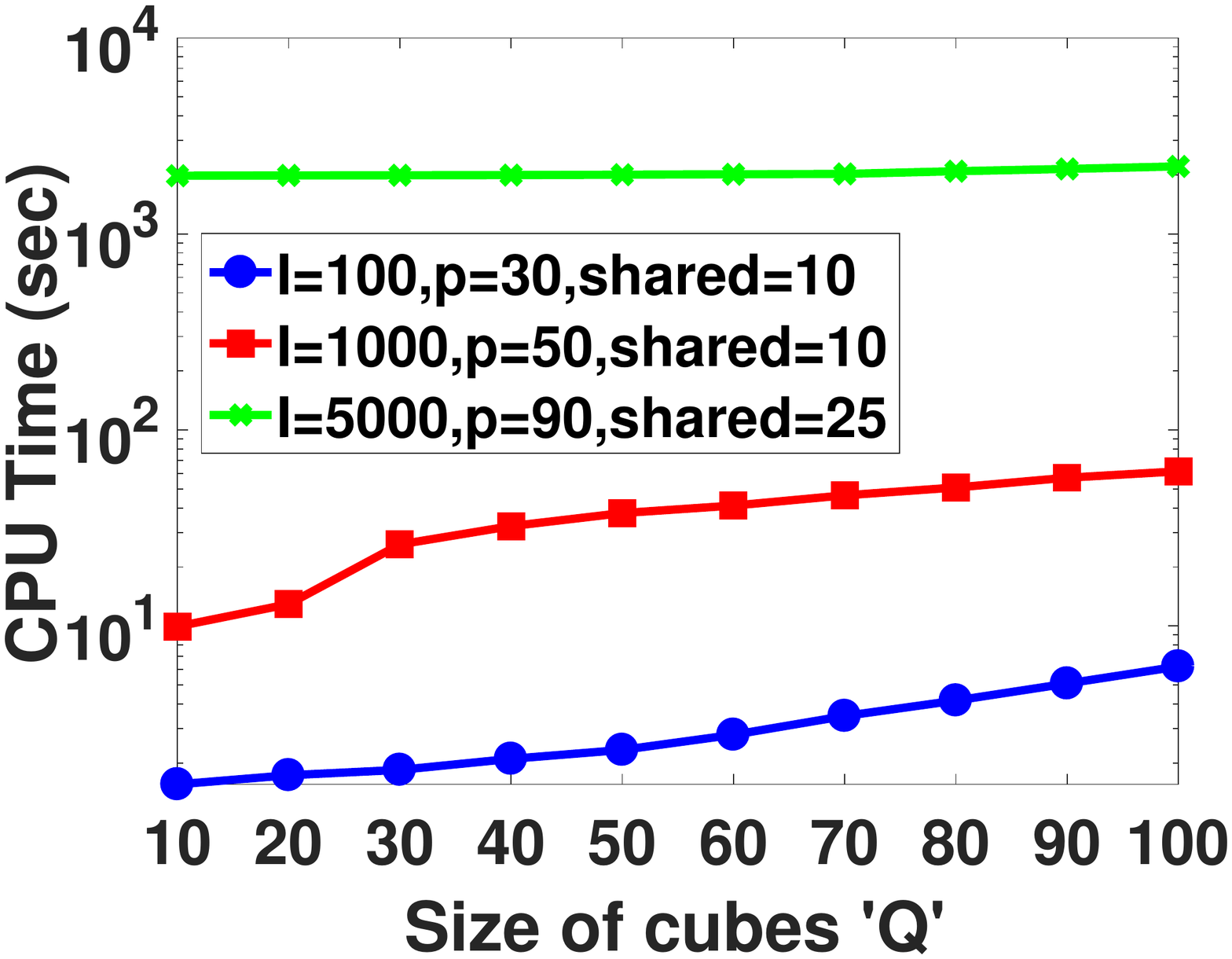}
		\includegraphics[clip,trim=0cm 5cm 0cm 2.5cm,width = 0.32\textwidth]{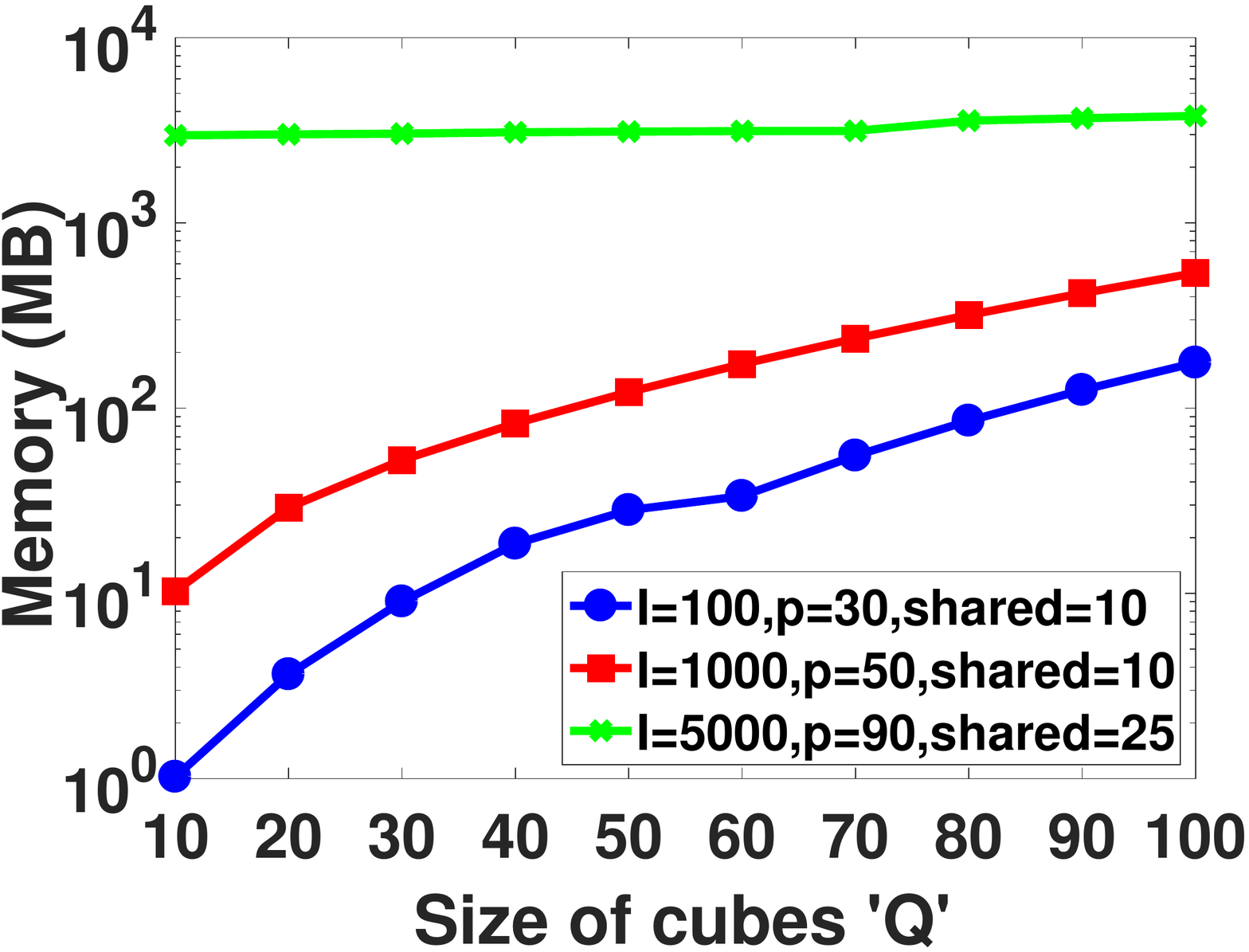}
		\caption{\octen Fitness, CPU Time (sec) and memory used vs. size of compressed tensors 'Q' on different datasets.}
		\label{octen:sen_Q}
	\end{center}

\end{figure}
\begin{SCfigure}
	\includegraphics[clip,trim=1.5cm 8cm 1.5cm 4cm,width = 0.4\textwidth]{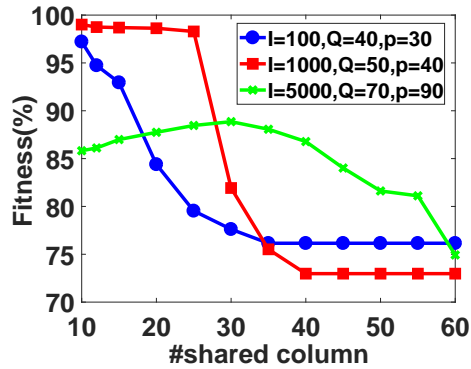}
	\caption{ \octen fitness vs. shared columns of compressed tensors 'shared' on different datasets. It is observed that parameter 'shared' has negligible effect on CPU time (sec) and memory used(MB).}
	\label{octen:sen_shared}
\end{SCfigure}

\textbf{Sensitivity of \textit{Q}} : To evaluate the impact of Q , we fixed other parameters i.e. 'p' and 'shared'. We can see that with higher values of the 'Q', Fitness is improved as shown in Figure \ref{octen:sen_Q}. Also It is observed that when equation \ref{octen:para} satisfy, fitness become saturated. Higher the size of compressed cubes, more memory is required to store them. 

\textbf{Sensitivity of \textit{shared}} :  To evaluate the impact of 'shared' , we fixed other parameters i.e. 'p' and 'Q'.We observed that this parameter does not have impact on CPU Time (sec) and Memory space(MB). The best fitness is found when $shared \le \frac{Q}{2}$ as shown in figure \ref{octen:sen_shared}. Fitness decreases when $shared \ge \frac{Q}{2}$ because the new compressed cubes completely losses its own structure when joined to old compressed cubes. To retain both old and new structure we choose to keep parameter $shared \le \frac{Q}{2}$ for all experiments.

In sum, these observations demonstrate that: 1) a suitable number of cubes and its size i.e. $p,Q$ on compressed tensor could improve the fitness  and result in better tensor decomposition, and 2) For identifiability 'p' must satisfy the condition, $ p \geq \max([(I-shared)/(Q-shared) \ \ J/Q  \ \ K/Q])$, to achieve better fitness,lower CPU Time (seconds) and low memory space(MB). This result answers Q4.

\subsubsection{Effectiveness on real world dataset}
In order to truly evaluate the effectiveness of \octen, we test its performance against five sparse (density $\approx 10^{-5}$) real datasets present in Table \ref{table:trealCPU}.\hide{that have been used in the literature i.e American College Football Network (ACFN) \cite{egonetTensors2016}, Foursquare-NYC \cite{yang2014modeling}, NIPS Publications \cite{chechik2007eec}, Facebook-links \cite{viswanath2009evolution} and NELL \cite{carlson2010toward}.} American College Football Network (ACFN) [115 x 115 x 10k] dataset includes interaction between players of Division IA colleges games during Fall 2000; Foursquare-NYC [1k x 40k x 310] dataset contains 10 months check-in data in New York city collected from Foursquare; Facebook Links [63k x 63k x 650] dataset contains a list of all of the user-to-user links from the Facebook New Orleans sub-network; NELL [12k x 9k x 28k] dataset is an entity-relation-entity tuple snapshot of the Never Ending Language Learner knowledge base and NIPS Publications [2.5k x 2.8k x 14k] dataset consists of papers published from 1987 to 2003 in NIPS. The \octen's performance in terms of timing and memory utilization on datasets are summarized in Table \ref{table:trealCPU} and \ref{table:treal}.
\begin{table}[h!]
	\small
	\begin{center}
		\begin{tabular}{ |c||c|c|c|c|c| }
			\hline
			Dataset &  OnlineCP  & SambaTen & RLST& ParComp &  \octen \\
			\hline
			\hline
			ACFN \cite{egonetTensors2016}   &\textbf{12.91}&16.45&43.52&20.23&14.48\\
			Foursquare-NYC \cite{yang2014modeling}   & 712.21 & 746.20 & 761.42& 1.2k &\textbf{642.37}\\
			Facebook Links \cite{viswanath2009evolution}  & n/a&4.7k &n/a  &n/a &\textbf{3.9k}\\
			NELL \cite{carlson2010toward}  &42k&37k&n/a&n/a&\textbf{35k}\\
			NIPS Publications \cite{chechik2007eec} &372.03&343.98&1.6k&448.63&\textbf{315.47}\\
			\hline
		\end{tabular}
		\caption{Average CPU Time (sec) over all the  batches. The lower the better. The best performance is shown in bold.}
		\label{table:trealCPU}
	\end{center}
\end{table}

\octen outperforms other baseline methods in most of the real dataset. In the case of Foursquare-NYC, Facebook-links, NELL and NIPS dataset, \octen gives better results compared to the baselines, specifically in terms of memory used ({\em better up to average 50-70 times}) and CPU time ({\em better up to average 5-8\%}). \octen  outperforms for ACFN dataset in terms of memory usage and CPU time is comparable to other methods. Due to high dimensions of dataset, RSLT and PARACOMP are unable to execute within 12 hours for Facebook-links and NELL datasets. \hide{Note that all the real world datasets that we use here are vary sparse ($\approx 10^{-5}$), however, none of the baselines take advantage of that sparsity (except SambaTen) and parallelism (except PARACOMP),  hence the baselines have to deal with dense computations all time which tend to be slower without parallel processing, when the data contain a lot of zeros. However,} \octen took advantage of parallel compression and decomposition of incoming slices and save huge amount of memory requirements along with giving comparable run time.
\begin{table}[h!]
	\small
	\begin{center}
		\vspace{-0.1in}
		\begin{tabular}{ |c||c|c|c|c|c| }
			\hline
			Dataset &  OnlineCP  & SambaTen & RLST& ParComp&  \octen \\
			\hline
			\hline
			ACFN   &1.47&1.21&1.11&2.06&\textbf{0.02}\\
			Foursquare-NYC   &2.36 &2.18 &2.13 &4.34 &\textbf{0.34}\\
			Facebook Links  &n/a &45.49 &n/a &n/a &\textbf{12.78} \\
			NELL  &17.11 &16.61 & n/a &n/a &\textbf{9.43} \\
			NIPS Publications &3.18 &2.08 &3.86 &6.7 &\textbf{0.45}\\
			\hline
		\end{tabular}
		\caption{Average Memory (GB) usage over all the batches. The lower the better. The best saving performance is shown in bold.}
		\label{table:treal}
	\end{center}
 
\end{table}

\subsection{\octen at work}
Beyond our memory and CPU time analysis of the real world datasets in Tbl. \ref{table:treal} and \ref{table:trealCPU}, we consider American College Football Network (ACFN) \cite{egonetTensors2016} and  Foursquare-NYC dataset for further analysis.

\textbf{Case study 1}: We construct the ACFN tensor data with the player-player interaction to a 115 x 115 grid, and considering the time as the third dimension of the tensor. Therefore, each element in the tensor is an integer value that represents the number of interactions between players at a specific moment of time. Our aim is to find the players communities (ground truth communities = 12) changed over time in football dataset. In order to evaluate the effectiveness of our method on football dataset, we compare the ground-truth communities against the communities found by the our method. Figure \ref{fig:realAllcommunities} shows a visualization of the football network over time, with nodes colored according to the observed communities. American football leagues are tightly-knit communities because of very limited matches played across communities. Since these communities generally do not overlap, we perform hard clustering. We find that communities are volatile and players belongs to  community {\#}12 (from subfigure (a)) are highly dynamic in forming groups. We observe that \octen is able to find relevant communities and also shows the ability to capture the changes in forming those communities in temporal networks.

 \begin{figure*}[!ht]
	\begin{center}
		\includegraphics[clip,trim=4cm 1cm 4cm 1cm,width = 0.23\textwidth]{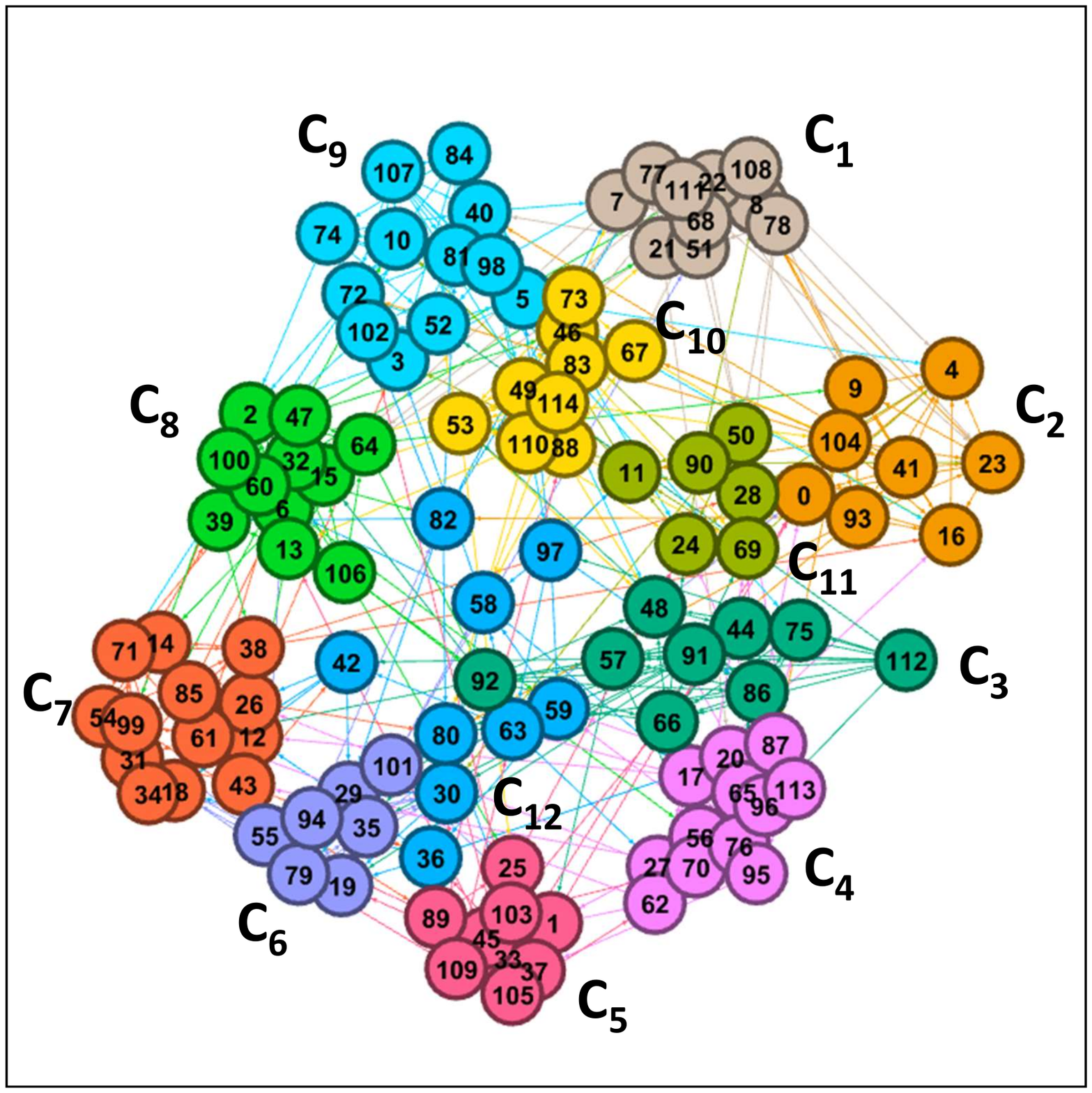}
		\includegraphics[clip,trim=4cm 1cm 4cm 1cm,width = 0.23\textwidth]{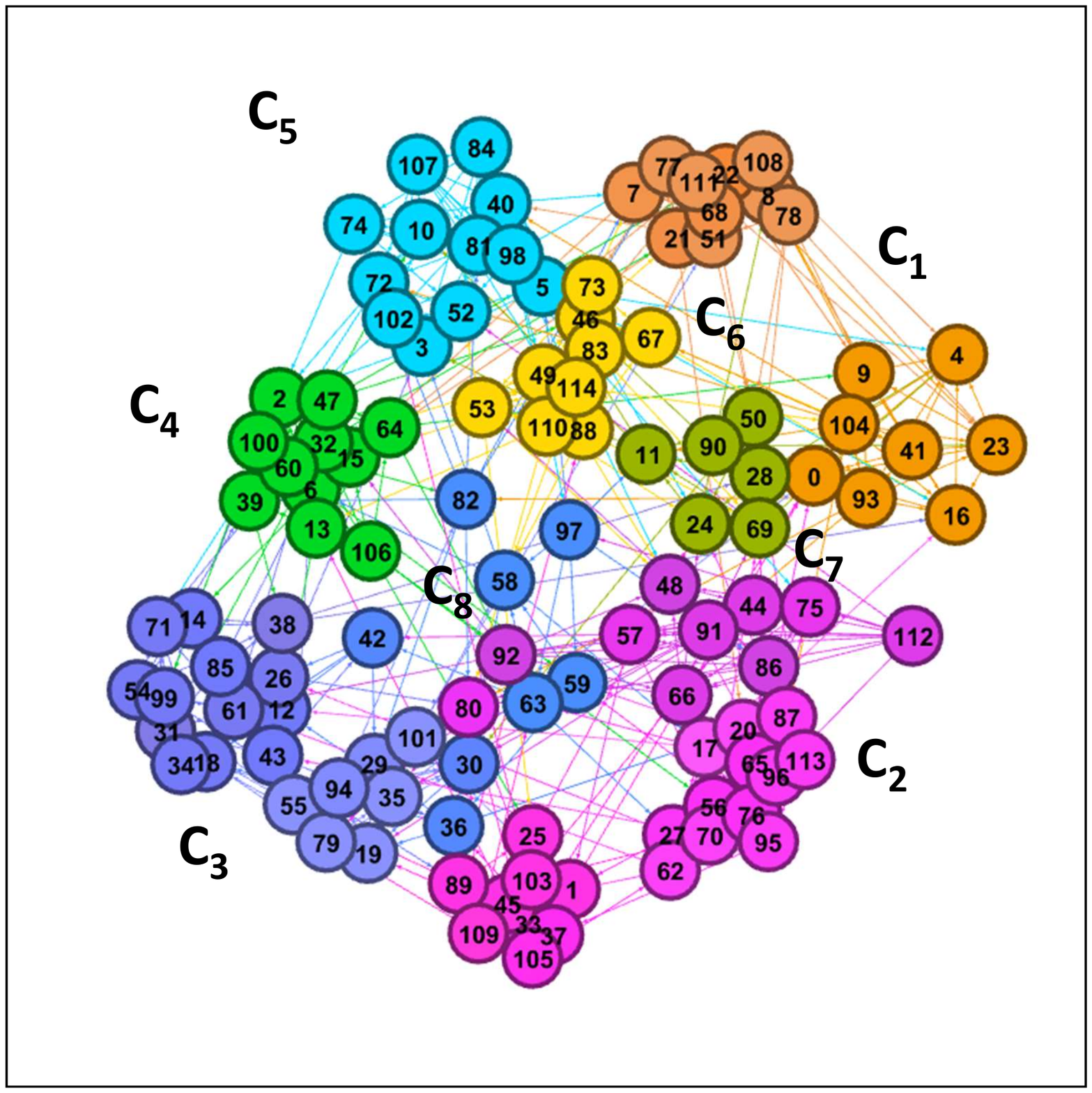}
		\includegraphics[clip,trim=4cm 1cm 4cm 1cm,width = 0.23\textwidth]{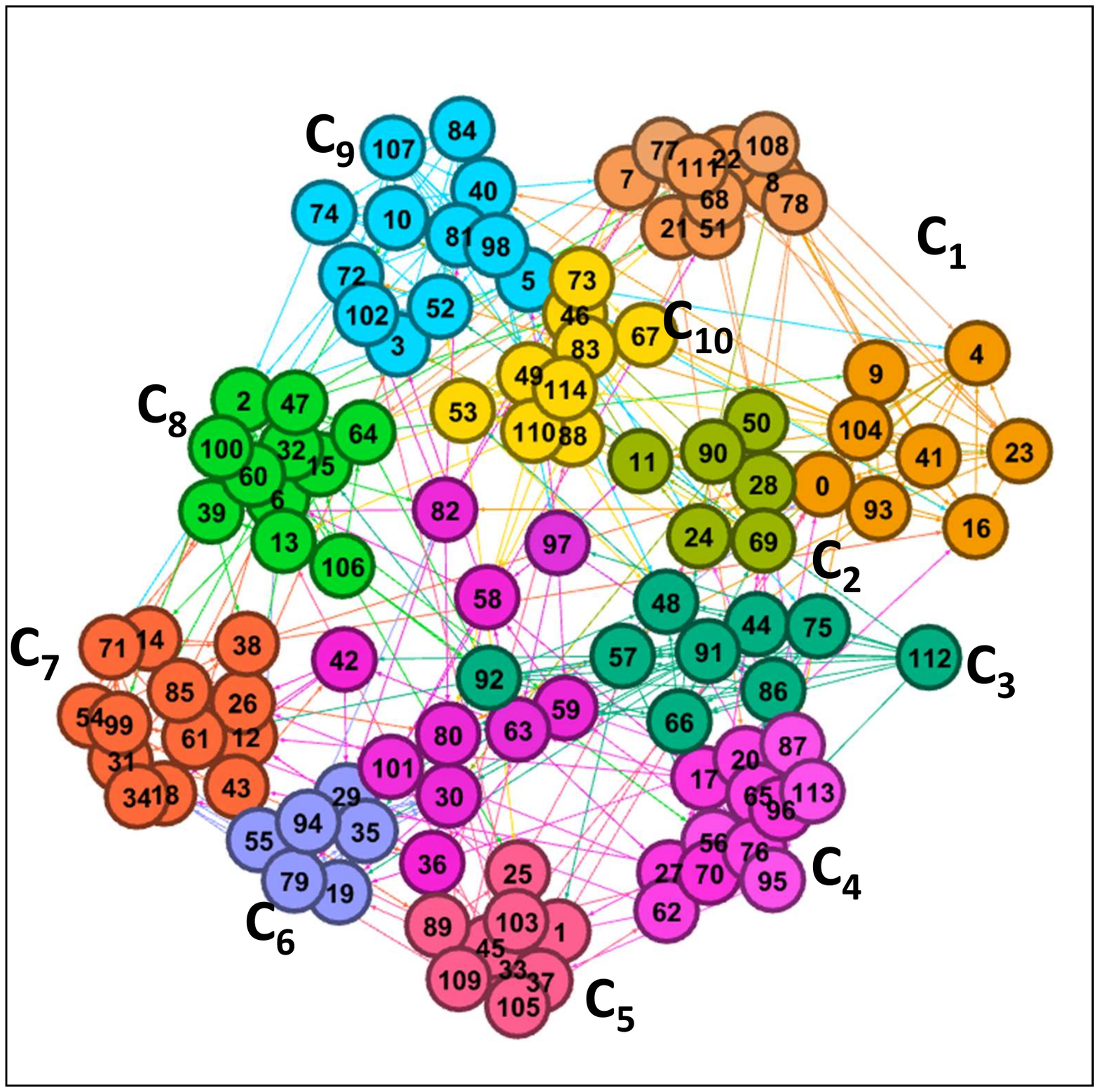}
		\includegraphics[clip,trim=4cm 1cm 4cm 1cm,width = 0.23\textwidth]{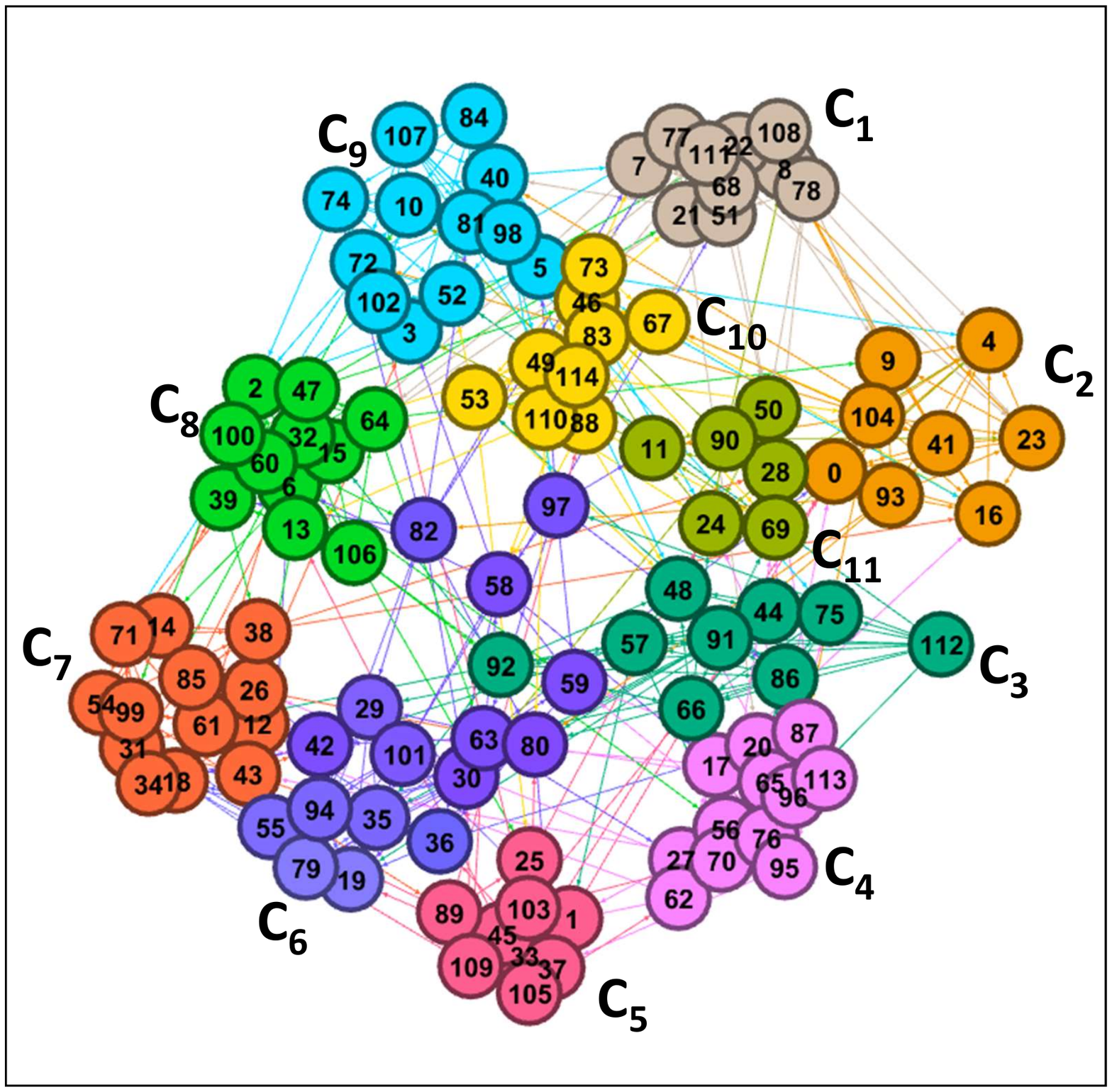}
		\caption{Visualization of the ground truth communities vs. the identified communities using \octen on ACFN dataset, which has 12 observed players communities i.e. $C \in \{C_1,C_2 \dots C_{12}\}$. \textbf{(a):} Represents the visualization of the network colored with ground truth communities. \textbf{(b):} Shows the visualization of the network colored with predicted communities at time $ \frac{1}{3} T$, where $T$ is total time stamps. \textbf{(c):} Shows the visualization of the network colored with predicted communities at time $ \frac{2}{3} T$. \textbf{(d):} Shows the visualization of the network colored with predicted communities at time $T$. We see that reconstructed views using \octen helps to identify the communities changing over time.}
		\label{fig:realAllcommunities}
	\end{center}
	\vspace{-0.2in}
\end{figure*}

\textbf{Case study 2}: Foursquare-NYC dataset includes long-term ($\approx$ 10 months) check-in data in New York city collected from Foursquare from 12 April 2012 to 16 February 2013. The tensor data is structured as [user (1k), Point of Interest (40k), time (310 days)] and each element in the tensor represents the total time spent by user for that visit. Our aim is to find next top@5 places to visit in NYC per user. We decompose the tensor data into batches of 7 days and using rank = $15$ estimated by AutoTen \cite{papalexakis2016automatic}. For evaluation, we reconstruct the complete tensor from factor matrices and mask the known POIs in the tensor and then rank the locations for each user. Since we do not have human supplied relevance rankings for this dataset, we choose to visualize the most significant factor values (locations) using maps provided by Google. If the top ranked places are with-in 5 miles radius of user's previous places visited, then we consider the decomposition is effective. 

  \begin{figure*}[!ht]
 	\begin{center}
 		\includegraphics[clip,trim=0cm 0cm 0.0cm 0cm,width = 0.5\textwidth]{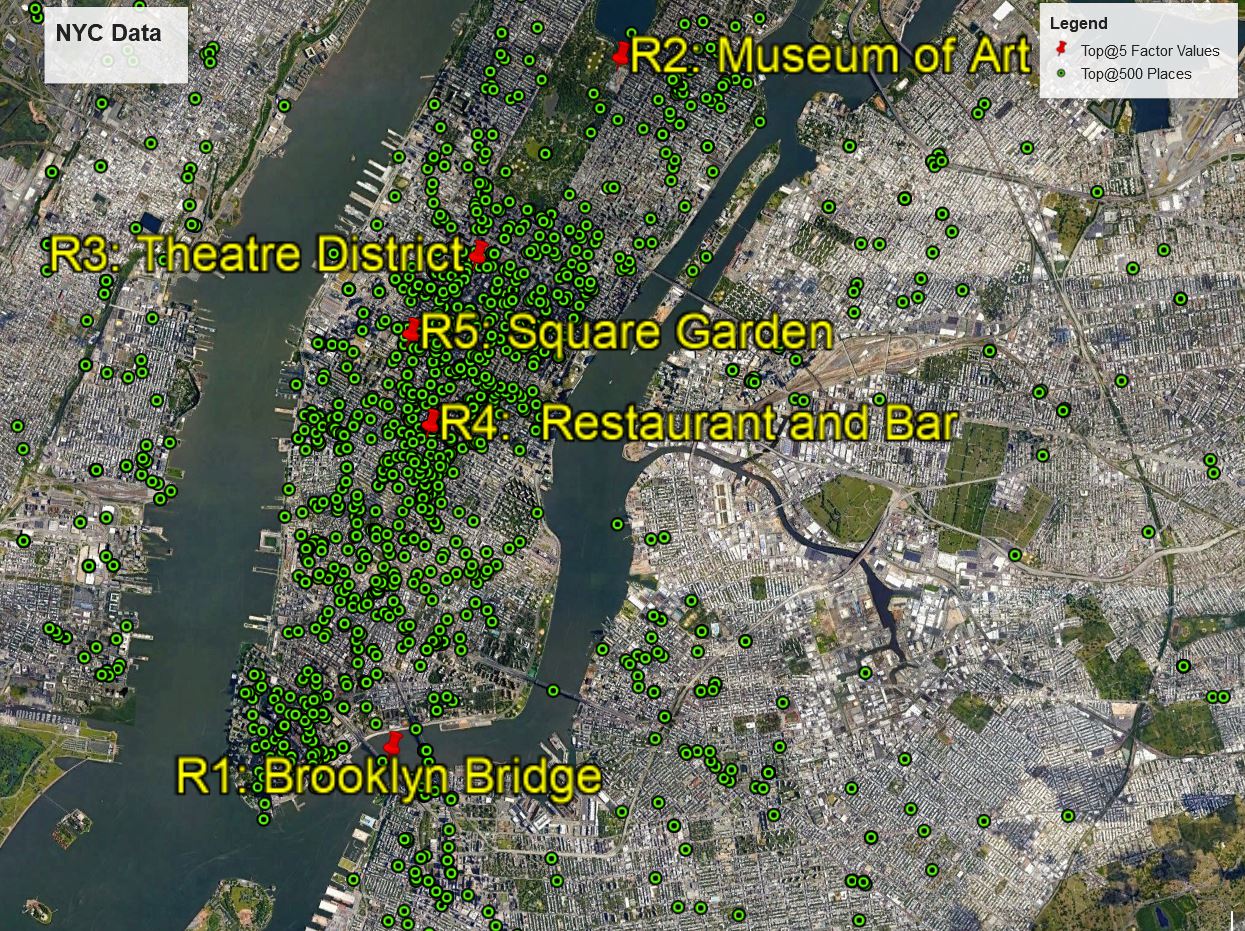}
 		\caption{ \octen's five highest values of the factor are represented as red markers.}
 		\label{fig:realAllplaces}
 	\end{center}
 
 \end{figure*}
In the Figure \ref{fig:realAllplaces}, the five red markers corresponds to the five highest values of the factor. These locations correspond to well-known area in NYC : Brooklyn Bridge , Square garden , Theater District and Museum of Art. The high density of activities (green points) verifies their popularity. 
\begin{figure}[!ht]
	\begin{center}
		\includegraphics[clip,trim=0cm 0cm 0.0cm 0cm,width = 0.41\textwidth]{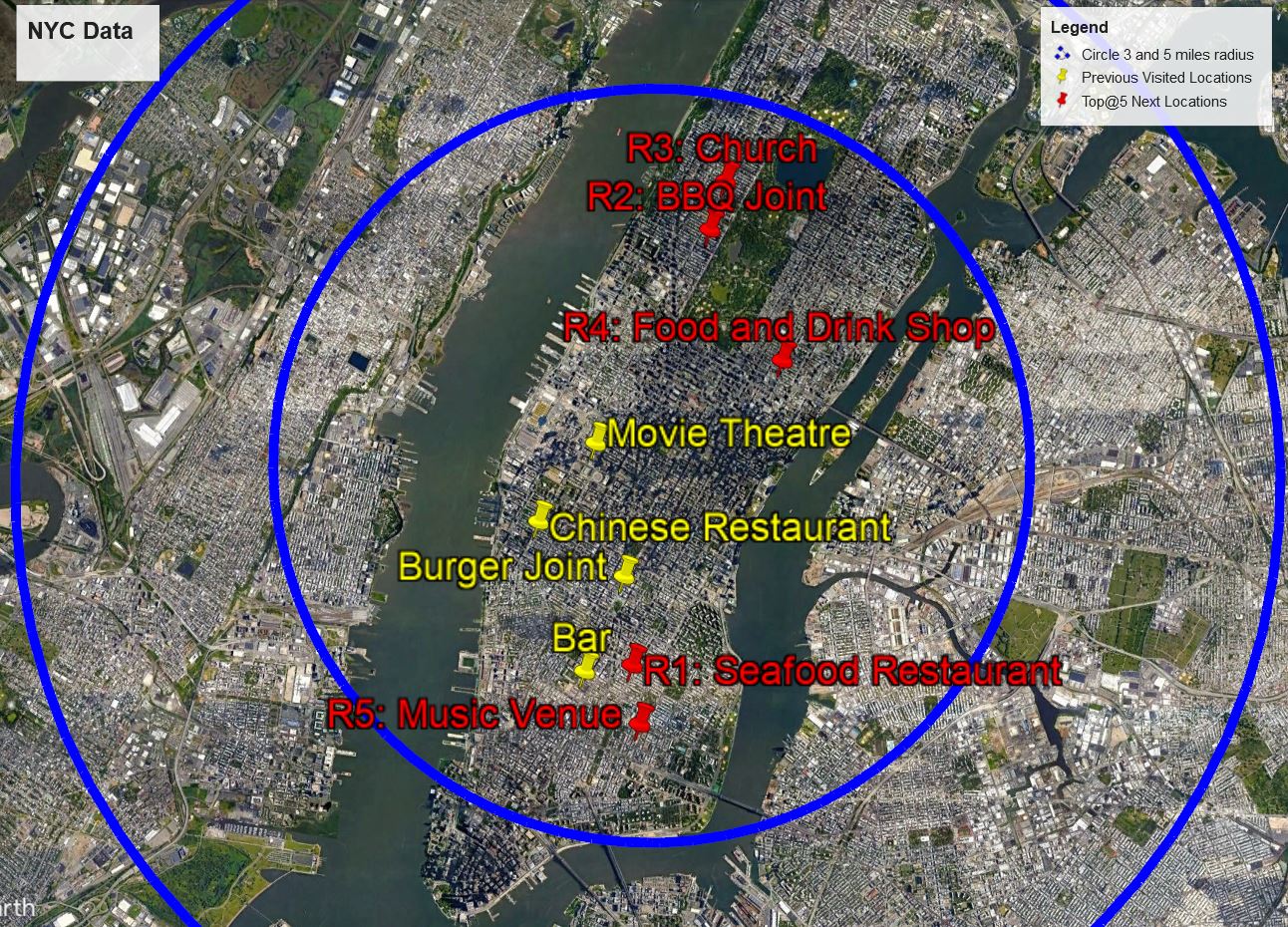}
		\includegraphics[clip,trim=0cm 0cm 0.0cm 0cm,width = 0.41\textwidth]{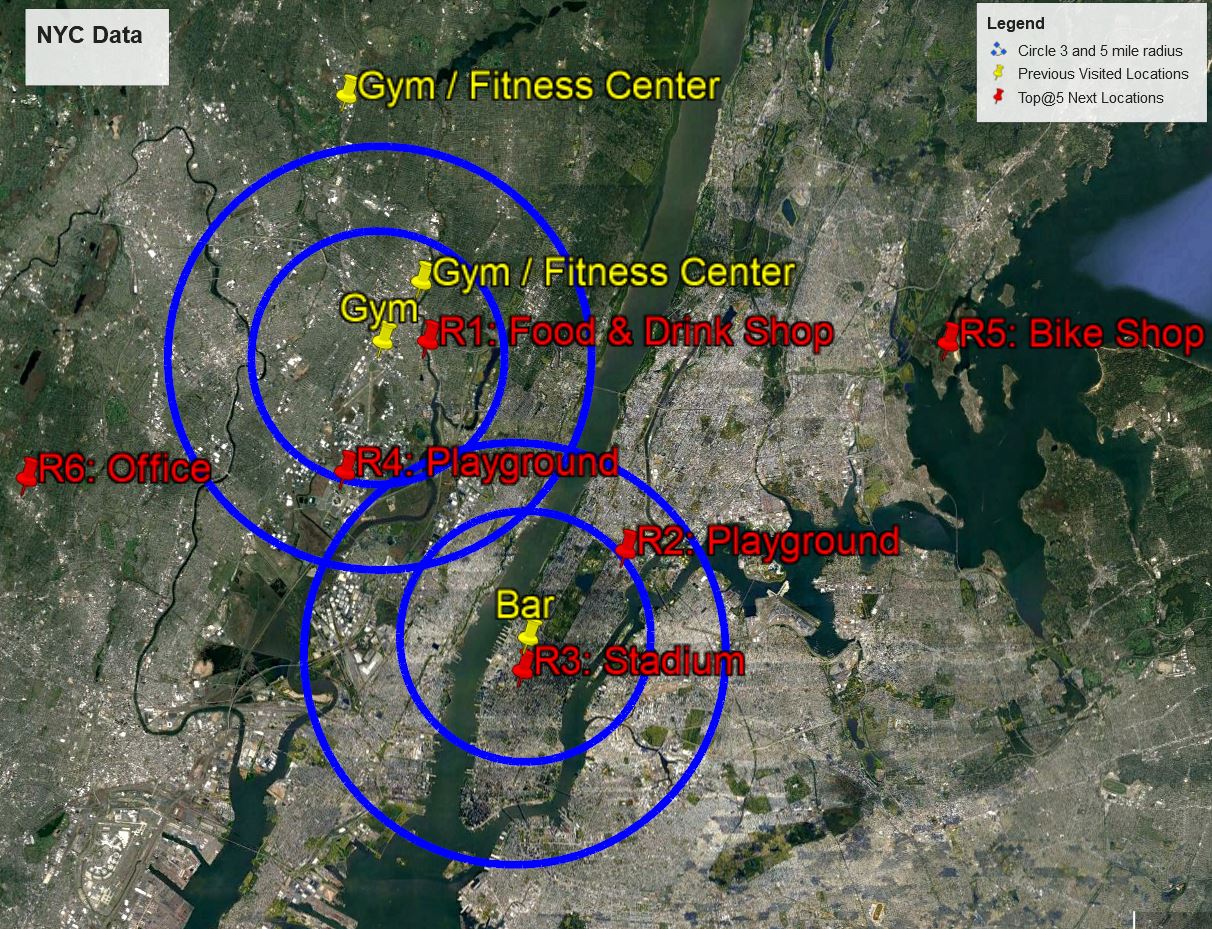}
		\caption{Visualization of the top@5 POIs of the \textbf{user{\#}192 and user{\#}902} obtained from reconstructed tensor using factor matrices. The yellow markers are user's previous visited POIs and red markers are recommended POIs. }
		\label{fig:realAllplacesb}
	\end{center}
	\vspace{-0.1in}
\end{figure}

Figure \ref{fig:realAllplacesb} shows the top@5 results for users {\#}192 and user {\#}902, the red marker shows the next locations to visit and yellow marker shows the previous visited locations. More interestingly, we can see that user {\#}192 visited coffee shops and restaurants most of the time, top@5 ranked locations are also either restaurants or food \& drink shops. Similarity, user {\#}902, most visited places are  Fitness center, top@5 ranked locations are park, playground and Stadium.
  
This shows the effectiveness of the decomposition and confirms that the \octen can be used for various types of data analysis and this answers \textbf{Q5}.

\section{Conclusion} 
\label{octen:conclusions}
In this work, we focus on online tensor decomposition problem and proposed a novel compression based \octen framework. The proposed framework effectively identify the low rank latent factors of compressed replicas of incoming slice(s) to achieve online tensor decompositions. To further enhance the capability, we also tailor our general framework towards  higher-order online tensors. Through experiments, we empirically validate its effectiveness and accuracy and we demonstrate its memory efficiency and scalability by outperforming state-of-the-art approaches (40-200 \% better). Regardless, future work will focus on investigating different tensor decomposition methods and incorporating various tensor mining methods into our framework.

\vspace{0.5in}

\noindent\fbox{%
    \parbox{\textwidth}{%
       Chapter based on material published in CAMSAP 2019 \cite{gujral2019octen}.
    }%
}
\chapter{Streaming Algorithms to Track the Block Term Decomposition of Large Tensors}
\label{ch:10}
\begin{mdframed}[backgroundcolor=Orange!20,linewidth=1pt,  topline=true,  rightline=true, leftline=true]
{\em "How to incrementally update the low or multi-linear rank data effectively?”}
\end{mdframed}

In data mining, block term tensor decomposition (BTD) is a relatively under-explored but very powerful multi-layer factor analysis method that is ideally suited for modeling for batch processing of data which is either low or multi-linear rank, e.g., EEG/ECG signals, that extract "rich" structures ($> rank-1$) from tensor data while still maintaining a lot of the desirable properties of popular tensor decompositions methods such as the interpretability, uniqueness, and etc. These days data, however, is constantly changing which hinders its use for large data. The tracking of the BTD decomposition for the dynamic tensors is a very pivotal and challenging task due to the variability of incoming data and lack of efficient online algorithms in terms of accuracy, time and space.

In this paper, we fill this gap by proposing an efficient method \obtd to compute the BTD decomposition of streaming tensor datasets containing millions of entries. In terms of effectiveness, our proposed method shows comparable results with the prior work, BTD, while being computationally much more efficient. We evaluate \obtd on six synthetic and three diverse real datasets, indicatively, our proposed method shows $10-60\%$ speedup and saves $40-70\%$ memory usage over the traditional baseline methods and is capable of handling larger tensor streams for which the classic BTD fails to run. To the best of our knowledge, \obtd is the first approach to track streaming block term decomposition while not only being able to provide stable decompositions but also provides better performance in terms of efficiency and scalability. The content of this chapter is adapted from the following published paper:

{\em Gujral, Ekta, and Evangelos E. Papalexakis. "OnlineBTD: Streaming Algorithms to Track the Block Term Decomposition of Large Tensors." In 2020 IEEE 7th International Conference on Data Science and Advanced Analytics (DSAA), pp. 168-177. IEEE, 2020.}

\section{Introduction}
\label{sec:intro}
Tensor decomposition methods are very vital tool in various applications like biomedical imaging \cite{schalk2004bci2000}, social networks \cite{yin2017local}, and recommender systems \cite{leskovec2007dynamics} to solve various challenging problems. Tensors are higher-order matrix generalization that can reveal more details compared to matrix data, while maintaining most of the computational efficiencies. Each such order of the data is an impression of the same underlying phenomenon e.g, the formation of friendship in social networks or the evolution of communities over time. A main task of the tensor analysis is to decompose the  multi-modal data into its latent factors, which is widely known as CANDECOMP/PARAFAC (CP) \cite{carroll1970analysis,PARAFAC} and Tucker Decomposition \cite{tucker3} in the literature. The CP decomposition has found many applications in machine learning \cite{papalexakis2016automatic}, statistical learning \cite{chatzichristos2017higher} and computational neuroscience \cite{ribeiro2015tensor} to understand brain generated signals. 

\textbf{Motivating example}: Given the importance of tensor analysis for large-scale data science applications, there has been a growing interest in scaling up these methods to handle large real-world data \cite{zhou2016accelerating,gujral2018sambaten,smith2018streaming,gujral2018octen}. However, the CP and Tucker decomposition make a strong assumption on the factors, namely that they are rank-1 (see def. \ref{def:rank1}). In various application domains, like biomedical images \cite{mousavian2019noninvasive}, text data \cite{gujral2020beyond}, it is often controversial whether this assumption is satisfied for all the modes of the problem. The low or multi-linear rank may be a better approximation of the real-world applications \cite{gujral2020beyond,Vasilescu19}. For example, fetal electrocardiogram (fECG) tracking is extremely important for evaluating fetal health and analyzing fetal heart conditions during pregnancy. The electrical activity of the fetal heart is recorded during labor between 38 and 41 weeks of gestation by electrodes (non-invasive) placed on mother's abdomen or an electrode attached to the fetal scalp (invasive, not routinely used) while the cervix is dilated (i.e. during delivery). This signal is produced from a small heart, therefore the amplitude of the signal is low and quite similar to the adult ECG, with a higher heart rate. The recorded signal also has interference from muscle noise, motion artifacts and etc. The fetal and maternal ECG have temporal and spectral overlap. The separation of an accurate fetal electrocardiogram signal from the abdominal mixed signals is complicated but very crucial for many reasons like early detecting fetal distress to avoid painful emergency caesarean delivery ($>9.6\%$) \cite{hafeez2014prevalence} and reduce brain damage or cerebral palsy in new-born. How to identify and separate fetal ECG over time which can signify a potential issues from noise? How to monitor the dynamic fetal signal behavior? In  \cite{akbari2015fetal}, the Tucker decomposition is used for fECG extraction. Because of the fetal low-amplitude signal compared to the mother's heart signal, the drawback of this method lies in the use of only the fetal periodicity constraints to get rank-1 components. Such limitations can be resolved by Block Term Decomposition (BTD) \cite{de2008decompositions2}. BTD helps to tensorize (low multi-linear blocks) abdominal mixed signals into separate subspaces of the mother, fetus and noise signals. The block term decomposition writes a given tensor as a sum of terms with a low multi-linear rank, without rank-1 being required.
\begin{figure}[!ht]
	\begin{center}
		\includegraphics[clip,trim=0cm 7cm 0cm 7cm,width = 0.7\textwidth]{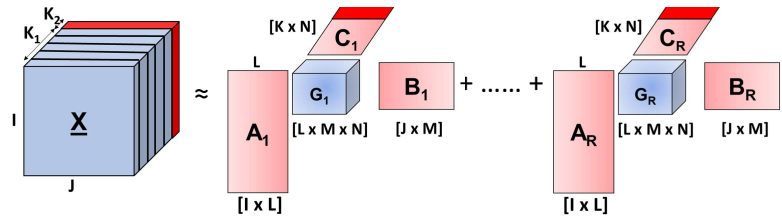}
		\caption{\obtd procedure. Left: the original tensor is extended with an extra slice (red) in the third mode. Right: the BTD of the tensor is updated by adding a new vector (red) to the factor matrix in the
third mode and modifying the existing factor matrices (pink).}
		\label{obtd:btdonline}
	\end{center}
	\vspace{-0.15in}
\end{figure}

\textbf{Previous Works:} The Block Term Decomposition (BTD) unifies the CP and Tucker Decomposition. The BTD framework offers a coherent viewpoint on how to generalize the basic concept of rank from matrices to tensors. The author \cite{hunyadi2014block} presented an application of BTD where epileptic seizures pattern was non-stationary, such a trilinear signal model is insufficient. The epilepsy patients suffer from recurring unprovoked seizures, which is a cause and a symptom of abrupt upsurges. They showed the robustness of BTD against these sudden upsurges with various model parameter settings. The author \cite{chatzichristos2017higher} used a higher-order BTD for the first time in fMRI analysis. Through extensive simulation, they demonstrated its effectiveness in handling strong instances of noise. A deterministic block term tensor decomposition (BTD) - based Blind Source Separation \cite{de2011blind,ribeiro2015tensor} method was proposed and offered promising results in analyzing the atrial activity (AA) in short fixed segments of an AF ECG signals. The paper \cite{de2019block} extends the work \cite{de2011blind} for better temporal stability. The paper \cite{ mousavian2019noninvasive} proposed doubly constrained block-term tensor decomposition to extract fetal signals from maternal abdominal signals. Recently, the paper \cite{gujral2020beyond} proposed ADMM based constrained BTD method to find structures of communities within social network data. However, these classic methods and applications of BTD are limited to small-sized static dense data ($1K \times 1K \times 100$) and require a large amount of resources (time and space) to process big data. In this era, data is growing very fast and a recipe for handling the limitations is to adapt existing approaches using online techniques. For example, health monitoring data like fECG represents a large number of vibration responses measured over time by many sensors attached at different parts of abdomen. In such applications, a naive approach would be to recompute the decomposition from scratch for each new incoming data. Therefore, this would become impractical and computationally expensive. 

\textbf{Challenges}. For many different applications like sensor network monitoring or evolving social network, the data stream is an important model. Streaming decomposition is a challenging task due to the following reasons. First, \textbf{accuracy}: high-accuracy (competitive to decomposing the full tensor) using significantly fewer computations than the full decomposition calls for innovation. Second, \textbf{speed}: the velocity of incoming data into the system is very high and require real-time execution. Third, \textbf{space}: operating on the full ambient space of data, as the tensor is being updated online, leads to increase in space complexity, rendering offline approaches hard to scale, and calling for efficient methods that work on memory spaces which are significantly smaller than the original ambient data dimensions. Lastly, \textbf{beyond rank-1 data} \cite{gujral2020beyond}: there are certain instances wherein rank-1 decomposition (CP or Tucker) can not be useful, for example,  EEG signals \cite{hunyadi2014block} needed to be modeled as a sum of exponentially damped sinusoids and allow the retrieval of poles by singular value decomposition. The rank-one terms can only model components of data that are proportional along columns and rows, and it may not be realistic to assume this. Alternatively, it can be handled with blocks of decomposition.  

\textbf{Mitigating Challenges}: Motivated by the above challenges, the objective of our work is to develop an algorithm for large multi-aspect or multi-way data analysis that is scalable and amenable to incremental computation for continuously incoming data.  In this paper, we propose a method to decompose online or streaming tensors based on BTD decomposition. Our goal is, given an already computed BTD decomposition, to {\em track} the BTD decomposition of an online tensor, as it receives streaming updates, a) {\em efficiently},  being much quicker than recomputing the entire decomposition from scratch after each update, and using less memory, and b) {\em accurately}, obtaining an approximation error which is as similar to the complete tensor decomposition as possible. Answering the above questions, we propose \obtd framework (Figure \ref{obtd:btdonline}). Our \obtd achieves the best of both worlds in terms of accuracy, speed and memory efficiency: 1) in all cases considered for real and synthetic data (Table \ref{tbl:mean_loss}) it is faster than a highly optimized baseline method, achieving up to $10-60\%$ performance improvement; 2) simultaneously, the proposed method is more robust and scalable, since it can execute large problem instances in a reasonable time whereas the baseline fails due to excessive memory consumption (Figure \ref{obtd:scale_time}).  

To our best knowledge, there is no work in the literature that deals with streaming or online Block Term Decomposition. To fill the gap, we propose a scalable and efficient method to find the BTD decomposition for streaming large-scale high-order multi-aspect or temporal data and maintain comparable accuracy. Our contributions can be summarized as:
\begin{itemize}[noitemsep]
	\item  \textbf{Novel and Efficient Algorithm}: We introduce \obtd, a novel and efficient algorithm for tracking the BTD decompositions of streaming tensors that admits an accelerated implementation described in Section \ref{sec:method}. We do not limit to three-mode tensors, our algorithm can easily handle higher-order tensor decompositions.
	\item  \textbf{Stable Decomposition}:  Based on the empirical analysis (Section \ref{sec:experiments}) on synthetic datasets, our algorithm produces similar or more stable decompositions than to existing offline approaches, as well as a better scalability.
	\item \textbf{Real-World Utility}:  We performed a case study of applying \obtd on the dataset by ANT-Social Network which consists of $>1$ million interactions over 41 days and EEG signal data to understand the human body movement.  
\end{itemize}

\textbf{Reproducibility}: To promote reproducibility, we make our MATLAB implementation publicly available at Link\footnote{\label{note3}\obtdcodeurl}. 

\section{Proposed Method: OnlineBTD}
\label{obtd:method}
In this section, we present our proposed method to track the BTD decomposition of streaming tensor data in an incremental setting. Initially a case of the third-order will be discussed for simplicity of presentation. Further, we expand further to more general conditions, where our proposed algorithm can handle tensors with a higher modes. Formally, the problem that we solve is the following:
\begin{mdframed}[linecolor=red!60!black,backgroundcolor=gray!20,linewidth=1pt,    topline=true,rightline=true, leftline=true] 
{\bf Given} (a) an existing set of BTD decomposition i.e. $\mathbf{A}_{old}, \mathbf{B}_{old} $ and $\mathbf{C}_{old} $ factor matrices, having $(L_r, M_r, N_r)$ latent components, that approximate tensor $\tensor{X}_{old} \in \mathbb{R}^{I \times J \times K_1}$ with Rank $R$ at time \textit{t} , (b) new incoming slice (s) in form of tensor $\tensor{X}_{new} \in \mathbb{R}^{I \times J \times K_2}$ at any time $ \Delta t$, \\
{\bf Find} updates of $\mathbf{A}_{{new}}, \mathbf{B}_{new} $ and $\mathbf{C}_{new}$ {\bf incrementally} to approximate BTD tensor $\tensor{X} \in \mathbb{R}^{I \times J \times (K_1+K_2)}$ after appending new slice(s) at $t=t_1+\Delta t$ in last mode while maintaining a comparable accuracy with running the full BTD decomposition on the entire updated tensor $\tensor{X}$.
\end{mdframed}
\subsection{The Principle of \obtd}
To address the online BTD problem, our proposed method follows the same alternating update schema as ALS, such that only one factor matrix is updated at one time by fixing all others. Our proposed method is the first work to do an online BTD algorithm, so we need to make some assumptions that will ground the problem. Practically, all other online decomposition (CP/Tucker) based works \cite{zhou2016accelerating,gujral2018octen,nion2009adaptive,gujral2018sambaten} presume the same thing, either implicitly or explicitly.

\subsubsection{Assumptions}
\begin{itemize}
	\item We assume the last mode of a tensor growing, while the size of the other modes remain unchanged with time.
	\item The factor matrices $\mathbf{A}_{old}, \mathbf{B}_{old} $ and $\mathbf{C}_{old} $ and core tensor $\mathcal{G}_{old}$ for old data ($\tensor{X}_{old}$) at time stamp $t_1$ is available. 
	\item The tensor $\tensor{X}$ rank $R$ and Block rank $L_r,M_r,N_r$ are available where $r \in [1,R]$. 
\end{itemize}
\subsubsection{\textbf{Update Temporal Mode}}
 \hide{\vagelis{if we have leveraged any similar-looking derivation to what OnlineCP folks do, let's cite and target it head-on by sayng that this is a different and more challenging problem because XYZ}}
Consider first the update of factor $\mathbf{C}_{new}$ obtained after fixing $\mathbf{A}_{old}$, $\mathbf{B}_{old}$ and $\mathcal{G}_{old}$, and solving the corresponding minimization in Equ \ref{obtd:updateWtmp}.
\begin{equation}
\label{obtd:updateWtmp}
 \mathbf{C}_{new} \leftarrow \argminA_{\mathbf{C}} ||\mathbf{X}^{(3)} - \mathbf{C}. bd(\mathcal{G}) .(\mathbf{B} \odot \mathbf{A})^{T}||^F_2
\end{equation}
The above Equ. (\ref{obtd:updateWtmp}) resembles similar to solving problem for CP \cite{kolda2009tensor,zhou2016accelerating} but it is different and more challenging problem because offline/online CP has rank-1 latent factors (single block) for decomposition without any core tensor. In BTD, we deals with beyond rank-1 and decomposition consists of $R$ number of blocks. Each block is solved with partition-wise Kronecker product (See def. (\ref{def:pKronecker})) instead of column-wise Khatri-Rao product (See def. (\ref{def:krao})). Further Equ. (\ref{obtd:updateWtmp}) can be written as:
\begin{equation}
\label{obtd:updateW}
\begin{aligned}
 \mathbf{C}_{new} & = \argminA_{\mathbf{C}} ||\begin{bmatrix}
 \mathbf{X}^{(3)}_{old}\\  
 \mathbf{X}^{(3)}_{new}\\ 
 \end{bmatrix} - \begin{bmatrix}
 \mathbf{C}_{r_{old}}\\  
 \widetilde{\mathbf{C}_r}\\ 
 \end{bmatrix} . [\mathcal{G}_{r}^{(3)}(\mathbf{B}_{r} \otimes \mathbf{A}_{r})^{T}]||^F_2\\
 & = \argminA_{\mathbf{C}} ||\begin{bmatrix}
 \mathbf{X}^{(3)}_{old} - \mathbf{C}_{r_{old}}. [\mathcal{G}_{r}^{(3)}(\mathbf{B}_{r} \otimes \mathbf{A}_{r})^{T}]\\   \mathbf{X}^{(3)}_{new} - \widetilde{\mathbf{C}_r}. [\mathcal{G}_{r}^{(3)}(\mathbf{B}_{r} \otimes \mathbf{A}_{r})^{T}]\\ 
 \end{bmatrix}  ||^F_2
  \end{aligned}
\end{equation}

where $r \in [1,R]$ and the above equation presents that the first part is minimized with respect to $\mathbf{C}_{r_{old}}$, since $\mathbf{A}_r$, $\mathbf{B}_r$ and $\mathcal{G}_r^{(3)}$ are fixed as $\mathbf{A}_{r_{old}}$, $\mathbf{B}_{r_{old}}$ and $\mathcal{G}_{r_{old}}^{(3)}$ from the last time stamp. The $\widetilde{\mathbf{C}_r}$ can be obtained after minimizing above equation as : 
\begin{equation}
\label{obtd:e1}
\begin{aligned}
 \widetilde{\mathbf{C}}
 & = \mathbf{X}^{(3)}_{new} * [\mathcal{G}_{r_{old}}(\mathbf{B}_{r_{old}} \otimes \mathbf{A}_{r_{old}})^{T}]^{\dagger} \quad \forall r \in [1, R]\\
 & = \mathbf{X}^{(3)}_{new} * (\mathcal{G}_{1_{old}}.(\mathbf{B}_{1_{old}} \otimes \mathbf{A}_{1_{old}})^{T} \dots \mathcal{G}_{R_{old}}.(\mathbf{B}_{R_{old}} \otimes \mathbf{A}_{R_{old}})^{T})^{\dagger}
 \end{aligned}
\end{equation}
\hide{ \vagelis{let's say here that we henceforth referred as MTTKRONP}}
\textbf{Observation 1}: The classic Matricized Tensor Kronecker product is expensive process because of high computations during Kronecker product, henceforth referred as {\em classic MTTKRONP}. Thus the {\em accelerated MTTKRONP} is required. Consider matrix $\mathbf{A} \in \mathbb{R}^{I \times L}$ and $\mathbf{B} \in \mathbb{R}^{J \times M}$ and its Kronecker product $\mathbf{AB} \in \mathbb{R}^{IJ \times LM}$. To speed up the process, we avoid use of def (\ref{def:Kronecker}) and also avoid explicit allocation of memory by following steps:
\begin{itemize}
	\item  Reshaping matrix into 4-D array : $\mathbf{A} \in \mathbb{R}^{1 \times I \times 1 \times L}$; $\mathbf{B} \in \mathbb{R}^{J \times 1 \times M \times 1}$;
    \item  Multiplies 4-D array $\tensor{A}$ and $\tensor{B}$ by multiplying corresponding elements; $\mathbf{Kr}$ = $\tensor{A}.*\tensor{B}$
	\item  Reshape $\mathbf{Kr}$  as $\in \mathbb{R}^{IJ  \times LM}$ and multiple its transpose with matrix form of given core tensor.  
\end{itemize}
\begin{table}[h!]
	\begin{center}
		\vspace{-0.1in}
		\begin{tabular}{cccc}
			\hline
			\textbf{I=J,Rank} &\textbf{Classic} (sec) & \textbf{Accelerated} (sec) & \textbf{Improvement} \\ 
			\hline
		    $1000,5$&$0.16$&$0.09$&$43\%$\\
		    $1500,15$&$0.73$&$0.54$&$26\%$\\
		    $5000,25$&$37.55$&$28.91$&$23\%$\\
				\hline
		\end{tabular}
				\caption{ Computational gain of accelerated vs classic MTTKRONP.}
			\label{table:acceleratedMTTKRONP}
		\vspace{-0.1in}
	\end{center}
\end{table}

The above method saves  $\approx 30\%$ (average) of computational time as provided in Table \ref{table:acceleratedMTTKRONP}, when compared to classic (product of tensor and output of kron) method available in MATLAB \cite{bworld}. 

The factor matrix $\mathbf{C}_{new}\in \mathbb{R}^{(K_1+K_2)  \times N}$ is updated by appending the projection $\mathbf{C}_{old}\in \mathbb{R}^{K_1  \times N}$ of previous time stamp, to $\widetilde{\mathbf{C}} \in \mathbb{R}^{K_2  \times N}$ of new time stamp, i.e.,
 \begin{equation}
 \label{obtd:cupdate}
\mathbf{C}_{new}  =
 \begin{bmatrix}
 \mathbf{C}_{old}\\  
 \widetilde{\mathbf{C}}\\ 
 \end{bmatrix} = \begin{bmatrix}
 \mathbf{C}_{old}\\  
    \mathbf{X}^{(3)}_{new} * [\mathcal{G}_{r_{old}}*\mathbf{Kr}_r^{T}]^{\dagger}\\
 \end{bmatrix}
 \end{equation}
 where $r \in [1,R]$ and the accelerated MTTKRONP is efficiently calculated in linear complexity to the number of non-zeros.
\subsubsection{\textbf{Update Non-Temporal Mode}}
We update $\mathbf{A}_{new}$ by fixing $\mathbf{B}_{old}$, $\mathcal{G}_{old}$ and $\mathbf{C}_{new}$. We set derivative of the loss $\mathcal{LS}$ w.r.t. $\mathbf{A}$ to zero to find local minima as  :
\begin{equation}
\label{obtd:rls}
\frac{\delta ( [\mathbf{X}^{(1)}_{new} -\mathbf{A}_{r_{new}}.[bd(\mathcal{G}_{r_{old}}).(\mathbf{C}_{r_{new}} \otimes \mathbf{B}_{r_{old}})]^{T}]}{\delta A_{r_{new}}}=0
\end{equation}
By solving above equation, we obtain:
\begin{equation}
 \label{obtd:aupdate}
 \small
 \begin{aligned}
\mathbf{A}_{new}  & = ( \begin{bmatrix}
 \mathbf{X}^{(1)}_{old}\\  
 \mathbf{X}^{(1)}_{new}\\
 \end{bmatrix}*[\mathcal{G}_{r_{old}}.( \begin{bmatrix}
 \mathbf{C}_{r_{old}}\\  
 \widetilde{\mathbf{C}_r}\\
 \end{bmatrix} \otimes \mathbf{B}_{r_{old}})^{T}]^\dagger  \\
  & =  \mathbf{X}^{(1)}_{new}*[\mathcal{G}_{r_{old}}.( \widetilde{\mathbf{C}_r} \otimes \mathbf{B}_{r_{old}})^{T}]^\dagger + \mathbf{X}^{(1)}_{old}*[\mathcal{G}_{r_{old}}.( \mathbf{C}_{r_{old}} \otimes \mathbf{B}_{r_{old}})^{T}]^\dagger  \\
 & = \mathbf{X}^{(1)}_{new}*\mathbf{G}^\dagger  + \mathbf{A}_{old}, \quad \mathbf{G}= [\mathcal{G}_{r_{old}}(\widetilde{\mathbf{C}_r} \otimes \mathbf{B}_{r_{old}})^{T}] 
 \end{aligned}
\end{equation}

In this way, the factor update equation consists of two parts: the historical part; and the new data part that makes computation fast using {\em accelerated MTTKRONP}. The $\mathbf{A}_{new} \in \mathbb{R}^{I  \times LR}$ is then partitioned into block matrices using corresponding rank (i.e. $L$) per block. 

Similarly, $\mathbf{B}_{new}$ can be updated for mode-2 as :
\begin{equation}
\label{obtd:bupdate}
\mathbf{B}_{new} =  \mathbf{X}^{(2)}_{new}*\mathbf{G}^\dagger  + \mathbf{B}_{old}, \quad \mathbf{G}= [\mathcal{G}_{r_{old}}(\widetilde{\mathbf{C}_r} \otimes \mathbf{A}_{r_{new}})^{T}] 
\end{equation}
The $\mathbf{B}_{new} \in \mathbb{R}^{J  \times MR}$ is then partitioned into block matrices using corresponding rank (i.e. $M$) per block.
\subsubsection{\textbf{Update core tensor}}
The updated core tensor is obtained from updated factors $\mathbf{A}_{new}$, $\mathbf{B}_{new}$ and $\widetilde{\mathbf{C}}$ using following equation:
\begin{equation}
\label{obtd:core}
 \begin{aligned}
\begin{bmatrix}
 (\mathcal{G}_1)_{LMN}  \\
 \hdots\\
 (\mathcal{G}_R)_{LMN} \\
\end{bmatrix} = &  \begin{bmatrix}
 ( \widetilde{\mathbf{C}}_{1_{new}} \otimes \mathbf{B}_{1_{new}} \otimes\mathbf{A}_{1_{new}})  \\
 \hdots\\
 ( \widetilde{\mathbf{C}}_{R_{new}} \otimes \mathbf{B}_{R_{new}}\otimes\mathbf{A}_{R_{new}} \\
\end{bmatrix}^\dagger . \tensor{X}_{new}(:)\\
& = \mathbf{H}^\dagger . \tensor{X}_{new}(:)\\
 \end{aligned}
\end{equation}
\textbf{Observation 2}: The above pseudo-inverse ($\mathbf{H}^\dagger$) or generalized inverse is very expensive in terms of time and space. This can be accelerated using reverse order law \cite{rakha2004moore,courrieu2008fast} and modified LU Factorization (provided in Algorithm \ref{obtdalg:lu}) and equation can be re-written as:
\begin{equation}
\label{obtd:core2}
 \begin{aligned}
  {} &\mathbf{L} =LU_{modified}(\mathbf{H})\\
 & \mathcal{G}_{new} =  (\mathbf{L}( \mathbf{L}^T \mathbf{L})^{-1}( \mathbf{L}^T \mathbf{L})^{-1}\mathbf{L}^T\mathbf{H}^T)\tensor{X}_{new}(:)
 \end{aligned}
\end{equation}
The main reason to use LU factorization over traditional pseudo-inverse is because a back tracing error is lower\cite{higham2002accuracy} as:
\begin{equation}
E_{forward}  \leq cond(\mathbf{H}) \times E_{backward} 
\end{equation}
\hide{
\begin{wrapfigure}{l}{4cm}
	\vspace{-0.15in}
	\begin{center}
		\includegraphics[clip,trim=1cm 3.5cm 0cm 3.5cm,width = 0.25\textwidth]{fig/err.pdf}
		\caption{This figure shows illustrates the difference between forward and backward error.}
		\label{obtd:err}
	\end{center}
	\vspace{-0.1in}
\end{wrapfigure}}
where $E_{forward}$ is forward error, $E_{backward}$ backward error and $cond$ represents condition number of matrix. Since condition number does not depend on an algorithm used to solve given problem, so choosing LU algorithm gives smaller backward error and it will lead to lower forward error. Also, in this we are dealing with triangular matrices (L and U), which can be solved directly by forward and backward substitution without using the Gaussian elimination (Gauss - Jordan) process\cite{hildebrand1987introduction} used in pseudo-inverse. The $\mathcal{G}_{new} \in \mathbb{R}^{RLMN  \times 1}$ is then partitioned into $R$ blocks  and reshaped using corresponding rank (i.e. $[L,M, N]$) per block.\\
\begin{algorithm2e}[H]
\caption{Modified LU Factorization for \obtd}
    	\label{obtdalg:lu}
	 \SetAlgoLined
      \KwData{$\mathbf{A} \in  \mathbb{R}^{n \times n} $}
			\KwResult{  Lower matrix $\mathbf{L}$, Upper matrix $\mathbf{U}$, Permutation matrix $\mathbf{P}$}
			 $\mathbf{L} = eye(n);$ $\mathbf{P} = \mathbf{L}$; $\mathbf{U} = \mathbf{A}$;\\
			\For {$k \leftarrow 1$ to $n$} {
			$[val \quad m] = max(abs(\mathbf{U}(k:n,k)));; m = m  +k - 1$\\
            \If {m $\neq$ k} {
              Interchange rows $m$ and $k$ in $\mathbf{U}$ and $\mathbf{P}$\\
               \If {k $\geq$ 2}{
                 Interchange rows $m$ and $k$ in $\mathbf{L}$ for $(k-1)$ columns   \\
               }
            }
         \For {$j \leftarrow (k + 1)$ to $n$}{
       $\mathbf{L}(j,k)= \mathbf{U}(j,k)/\mathbf{U}(k,k);$  $\mathbf{U}(j,:)=\mathbf{U}(j,:)-\mathbf{L}(j,k)*\mathbf{U}(k,:);$\\
        }
        $\mathbf{L}(:,k) = \mathbf{L}(:,k)* \sqrt{\mathbf{A}(k,k) - \mathbf{L}(k,1:k-1)*\mathbf{L}(k,1:k-1)^T};$
}
	\KwRet $\mathbf{L}$, $\mathbf{U}$, $\mathbf{P}$
\end{algorithm2e}

\textbf{Summary}: For a 3-mode tensor that grows with time or at its $3^{rd}$ mode, we propose an efficient algorithm for tracking its BTD decomposition on the fly. We name this algorithm as \obtd, comprising the following two stages:
\begin{itemize}
\item Initialization stage (Algorithm (1) in supplementary section (A.3)):  in case factors from old tensor $\tensor{X}_{old}$ are not available, then we obtain its BTD decomposition as ($\mathbf{A}, \mathbf{B}, \mathbf{C}$ and $\mathcal{G}$)

\item Update stage (Algorithm \ref{obtd:method}): for each new incoming data $\tensor{X}_{new}$, it is processed as:
\begin{itemize}
\item For the temporal mode $3$, $\mathbf{C}$ is updated using Equ. (\ref{obtd:cupdate})
\item For non-temporal modes $1$ and $2$, $\mathbf{A}$ and $\mathbf{B}$ is updated using Equ. (\ref{obtd:aupdate}) and Equ. (\ref{obtd:bupdate}), respectively.
\item For core-tensor, $\mathcal{G}$ is updated using Equ. (\ref{obtd:core}) and accelerated using Equ. (\ref{obtd:core2}).
 \end{itemize}
  \end{itemize}
\begin{algorithm2e} [!htp]
		\caption{\obtd Update Framework}
	 	\label{obtdalg:method}
	 	\SetAlgoLined
			\KwData{$\tensor{X}_{new} \in  \mathbb{R}^{I_1 \times I_2 \times \dots \times I_{N-1} \times K_2}$, old data factors $(\mathbf{A}^{(1)}, \mathbf{A}^{(2)},\dots,\mathbf{A}^{(N-1)}, \mathbf{A}^{(N)}),\tensor{D} $  , Rank $R$ and $L$.} 
			\KwResult{Updated factor matrices $( \mathbf{A}^{(1)}, \mathbf{A}^{(2)},\dots, \mathbf{A}^{(N-1)}, \mathbf{A}^{(N)},\tensor{D})$,} 
		      
			\text{\color{blue}Update temporal modes of tensor $\tensor{X}$} as:  $\mathbf{A}^{(N)} \leftarrow \begin{bmatrix}
 \mathbf{A}^{(N)}_{old}\\    
 \mathbf{X}^{(N)}_{new} * [\mathcal{G}_{r_{old}}.(\otimes_{i=1}^{N-1}\mathbf{A}^{(i)})^{T}]^{\dagger}  \end{bmatrix}   \quad  \forall{r \in [1,R]} $ \\
     		\For {$n \leftarrow 1$ to $N-1$}{
		\text{\color{blue}Update other modes of  tensor $\tensor{X}$} as:  $\mathbf{A}^{(i)}  \leftarrow \mathbf{X}^{(i)}_{new}*[\mathcal{G}_{r_{old}}(\otimes_{i \ne n}^{N}\mathbf{A}_r^{(i)})^{T}]^\dagger  + \mathbf{A}^{(i)}_{old} \quad \forall{i \in [1,N]}\quad  \forall{r \in [1,R]}  $\\
			}
		   \text{\color{blue}Update core tensor using $\tensor{X}_{new}$} $\tensor{D}  \leftarrow  \begin{bmatrix}
 \otimes_{i=1}^{N}\mathbf{A}^{(i)}_1\\
 \hdots\\
  \otimes_{i=1}^{N}\mathbf{A}^{(i)}_R\\  \end{bmatrix}^\dagger  * \tensor{X}_{new}(:) \quad \forall{i \in [1,N]} \quad  \forall{r \in [1,R]}  $\\
			\KwRet{ Updated $( \mathbf{A}^{(1)}, \mathbf{A}^{(2)},\dots, \mathbf{A}^{(N-1)}, \mathbf{A}^{(N)}, \tensor{D})$}
\end{algorithm2e}

\subsection{Extending to Higher order tensors}
We now show how our approach is extended to higher-order cases. Consider N-mode tensor $\tensor{X}_{old} \in \mathbb{R}^{I_1 \times I_2 \times \dots \times K_1 }$. The factor matrices are $(\mathbf{A}^{(1)}_{old}, \mathbf{A}^{(2)}_{old},\dots, \mathbf{A}^{(N-1)}_{old}, \mathbf{A}^{(N)}_{old})$ for BTD decomposition with $N^{th}$ mode as new incoming data. A new tensor $\tensor{X}_{new} \in \mathbb{R}^{I_1 \times I_2 \times \dots \times K_{2}}$ is added to $\tensor{X}_{old}$ to form new tensor of $\mathbb{R}^{I_1 \times I_2 \times \dots  \times K}$ where $K = K_1 + K_2$. The subscript $i \ne n$ indicated the $n^{th}$ matrix is not included in the operation.

The Temporal mode can be updated as :
\begin{equation}
\mathbf{A}^{(N)} = \begin{bmatrix}
 \mathbf{A}^{(N)}_{old}\\  \mathbf{A}^{(N)}_{new}\\ \end{bmatrix} = \begin{bmatrix}
 \mathbf{A}^{(N)}_{old}\\    
 \mathbf{X}^{(N)}_{new} * [\mathcal{G}_{r_{old}}.(\otimes_{i=1}^{N-1}\mathbf{A}^{(i)})^{T}]^{\dagger}
 \end{bmatrix} 
 \end{equation}
 
The Non-Temporal modes can be updated as:
\begin{equation}
 \begin{aligned}
\mathbf{A}^{(i)} &=  \mathbf{X}^{(i)}_{new}*[\mathcal{G}_{r_{old}}(\otimes_{i \ne n}^{N}\mathbf{A}_r^{(i)})^{T}]^\dagger  + \mathbf{A}^{(i)}_{old}\\
&=\mathbf{X}^{(i)}_{new}*[\mathcal{G}_{r_{old}}(Kr^{(n)}_{r} )^{T}]^\dagger  + \mathbf{A}^{(i)}_{old}
 \end{aligned}
\end{equation}
where $i \in [1,N-1]$ and we denote the Kronecker(Kr) product of the first (N- 1) but the $n^{th}$ loading matrices, $(\otimes_{i \ne n}^{N}\mathbf{A}_r^{(i)})$  as $Kr^{(n)}_{r}$.

The core tensor of BTD decomposition can be updated as:
\begin{equation}
\tensor{D}=  \begin{bmatrix}
 \otimes_{i=1}^{N}\mathbf{A}^{(i)}_1\\
 \hdots\\
  \otimes_{i=1}^{N}\mathbf{A}^{(i)}_R\\  \end{bmatrix}^\dagger  * \tensor{X}_{new}(:)
  \end{equation}
  
\textbf{Avoid duplicate Kronecker product}: To avoid duplicate Kronecker product in above calculations, we use a dynamic programming method adapted from \cite{zhou2016accelerating} to compute all the Kronecker products in one run as shown in Figure (\ref{obtd:kronlist}). 
\begin{wrapfigure}{l}{6cm}
	\vspace{-0.1in}
	\begin{center}
		\includegraphics[clip,trim=1cm 5cm 1cm 5cm,width = 0.4\textwidth]{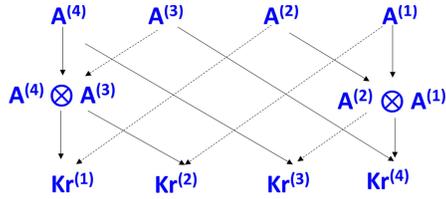}
		\caption{Kronecker products for the $5^{th}$-order.}
		\label{obtd:kronlist}
	\end{center}
\end{wrapfigure}
The process uses intermediate results to avoid duplicate Kronecker product. The method goes through the factor matrix from both sides, until it reaches the results of first and last Kronecker product in a block and repeats the process for all the blocks. 

\textbf{Accelerated computation summary:} we accelerate computation in \obtd via a) using {\em accelerated MTTKRONP} instead of {\em classic MTTKRONP}, b) using modified-LU factorization instead of classic pesudo-inverse, and c) by avoiding duplicate Kronecker product for higher order tensors. Finally, by putting everything together, we obtain the general version of our \obtd algorithm, as presented in Algorithm \ref{obtdalg:method}.

\section{Experiments}
\label{obtd:experiments}
We design experiments to answer the following questions: \textbf{(Q1)} How fast, accurate  and memory efficient are updates in \obtd compared to classic BTD algorithm?
\textbf{(Q2)} How does the running time of \obtd increase as tensor data grow (in time mode)?
\textbf{(Q3)} What is the influence of parameters on \obtd?
\textbf{(Q4)} How \obtd used in real-world scenarios? 

\subsection{Experimental Setup}
\subsubsection{Synthetic Data}
 We provide the datasets used for evaluation in Table \ref{tblbtd:dataset}. For all synthetic data we use rank $R = 3$. In the entries of factor matrix $\mathbf{A}$, $\mathbf{B}$ and $\mathbf{C}$ Gaussian noise is added such that the signal-to-noise ratio is $10dB$.
\begin{table}[t]
	\centering
	\small
	\begin{tabular}{|c||c|c|c|c|}
	\cline{1-5}
	\multirow{2}{*}{{\bf Dataset}}& \multicolumn{4}{|c|}{{\bf Statistics (K: Thousands M: Millions)}}	    \\ 
	\cline{2-5}
	& {\bf $I=J$} & {\bf K}& {\bf [L,M,N]} & {\bf \text{\em{Batch}}} \\	\hline
		I &$100$&$500$&$[5,6,7]$&$50$\\\hline
	    II &$250$&$1K$&$[5,6,7]$&$50$\\\hline
		III &$1K$&$1K$&$[5,6,7]$&$10$\\\hline
	   IV&$1K$&$10K$&$[5,6,7]$&$10$\\\hline
		V &$1K$&$100K$&$[3,4,5]$&$10$\\\hline
		VI &$1K$&$1M$&$[3,4,5]$&$10$\\\hline
	\hline
	\end{tabular}
	\caption{Details for the synthetic datasets.}
	\vspace{-0.1in}
	\label{tblbtd:dataset} 
\end{table}
\begin{table}[t]
	\centering
	\small
	\begin{tabular}{|c||c|c|c|c|c|}
	\cline{1-6}
    {\bf Dataset}	& {\bf $I$} & {\bf J}& {\bf K} & {\bf [L,M,N]} &{\bf \text{\em{Batch}}} \\	\hline
	ANT-Network \cite{nrdata} &$822$&$822$&$41$&$[3,3,3]$&$10$\\\hline
	EU Core \cite{yin2017local}  &$1004$&$1004$&$526$&$[8,8,8]$&$10$\\\hline
	EEG Signal \cite{schalk2004bci2000} &$109$&$896$&$20K$&$[5,5,5]$&$20$   \\\hline
	\hline
	\end{tabular}
	\caption{Details for the real datasets.}
	\vspace{-0.1in}
	\label{tbl:realdataset} 
\end{table}
\subsubsection{Real Data}
We provide the datasets used for evaluation in Table \ref{tbl:realdataset}. Rank determination in the experiments is performed with the aid of the Core Consistency Diagnostic (CorConDia) method \cite{bro2003new,papalexakis2016automatic} and the triangle method ,implemented by Tensorlab 3.0\cite{vervliet2016tensorlab}.
\begin{itemize}
    \item \textbf{EU-Core}\cite{yin2017local}: consists of e-mail data from a large European research institution over $526$ days.
    \item \textbf{Ant Social Network}\cite{nrdata}: This dataset consists of information of all social interaction among ants, their behavior and spatial movement from all ants in six colonies over 41 days.
    \item  \textbf{EEG Signal}\cite{schalk2004bci2000}: This dataset includes 109 subjects who performed 14 experimental different motor/imagery tasks while 64-channel EEG were recorded using the BCI2000 system. 
\end{itemize}
\subsubsection{aseline method}
In this experiment, two baselines have been used as the competitors to evaluate the performance.  \begin{itemize}
	\item  \textbf{BTD-ALS} : an implementation of standard ALS fitting algorithm BTD. Implementation of BTD-ALS is not provided in Tensorlab \cite{vervliet2016tensorlab}. Hence, we provide an efficient implementation of BTD-ALS to promote reproducibility.
 	\item  \textbf{BTD-NLS} \cite{vervliet2016tensorlab}  : an implementation of standard NLS fitting algorithm BTD with noisy initialization.  
 	\end{itemize}
 Note that there is no method in literature for online or incremental BTD tensors. Hence, we compare our proposed method against algorithms that decompose full tensor. Also, extending BTD-NLS to online settings is challenging and requires further research both to find the best fit and to interpret the role of the independent variables used in various inherit methods. It faces difficulties in fitting due to the narrow boundaries on the model and less flexibility. 
\begin{table}[H]
	\ssmall
	\begin{sideways}
	  	\begin{tabular}{cccccccccc}
	\hline
	\multirow{2}{*}{{\bf SYN}}&  \multicolumn{3}{c}{{\bf Approximation Loss}} & \multicolumn{3}{c}{{\bf CPU Time (sec) }}& \multicolumn{3}{c}{{\bf Memory Usage (MBytes)}}\\
	&  {\bf BTD-ALS} & {\bf BTD-NLS} & {\bf \obtd}& {\bf BTD-ALS} & {\bf BTD-NLS} & {\bf \obtd}&  {\bf BTD-ALS} & {\bf BTD-NLS} & {\bf \obtd}\\ 
\hline
	I &$0.12 \pm 0.01$&\textbf{0.04 $\pm$ 0.03}&$0.07 \pm 0.02$&$79.62 \pm 4.5$&\textbf{12.7 $\pm$ 3.1}&13.36 $\pm$ 6.31&$7.725 \pm 0.01$&$14.7 \pm 0.02$&\textbf{4.23 $\pm$ 0.01}\\ 
	II  &$0.16 \pm 0.04$&\textbf{0.09 $\pm$ 0.03}&\textbf{0.09 $\pm$ 0.01}&$466.91 \pm 23.6$&$325.3 \pm 34.9$&\textbf{106.99 $\pm$ 12.61}&$157.2 \pm 0.01$&$219.4 \pm 4.5$&\textbf{32.70 $\pm$ 0.05}\\ 
	III  &\reminder{OoM}&\reminder{OoM}&\textbf{0.11 $\pm$ 0.01}&\reminder{OoM}&\reminder{OoM}&\textbf{831.52 $\pm$ 43.82}&\reminder{OoM}&\reminder{OoM}&\textbf{315.11 $\pm$ 0.01}\\ 
	IV  &\reminder{OoM}&\reminder{OoM}&\textbf{0.13 $\pm$ 0.06}&\reminder{OoM}&\reminder{OoM}&\textbf{2858.45 $\pm$ 59.45}&\reminder{OoM}&\reminder{OoM}&\textbf{314.21 $\pm$ 0.01}\\
	V  &\reminder{OoM}&\reminder{OoM}&\textbf{0.16 $\pm$ 0.03}&\reminder{OoM}&\reminder{OoM}&\textbf{7665.23 $\pm$ 89.81}&\reminder{OoM}&\reminder{OoM}&\textbf{316.45 $\pm$ 0.01}\\
	VI  &\reminder{OoM}&\reminder{OoM}& \textbf{0.39 $\pm$ 0.11}&\reminder{OoM}&\reminder{OoM}&\textbf{89349.62 $\pm$ 253.06}&\reminder{OoM}&\reminder{OoM}&\textbf{312.21 $\pm$ 0.01}\\ 
	\hline
	\end{tabular}
	\end{sideways}
	\caption{Experimental results for approximation error, CPU Time in seconds and Memory Used in MB for synthetic tensor. We see that \obtd gives stable decomposition in reasonable time and space as compared to classic BTD method. The boldface means the best results.}
	\label{tbl:mean_loss} 
\end{table}
\subsubsection{Evaluation Metrics}
We evaluate \obtd and the baselines using three criteria: 1) \textbf{approximation loss}, 2) \textbf{CPU time} in second, and 3) \textbf{memory usage} in Megabytes. These measures provide a quantitative way to compare the performance of our method. For all criterion, lower is better.
\subsection{Experimental Results}
\subsubsection{\textbf{Accurate, Fast and Memory Efficient}}
First, as shown in Table \ref{tbl:mean_loss}, we remark that \obtd is both more memory-efficient and faster than the baseline methods and at the same time to a large extent comparable in terms of accuracy. In particular, the baseline methods fail to execute in the large problems i.e. SYN-III to SYN-VI for given target rank due to out of memory problems during the formation of core tensor $\tensor{\mathcal{G}}$. This improvement stems from the fact that baseline methods attempt to decompose in full tensor, whereas \obtd can do the same in streaming mode or on the fly, thus having higher immunity to large data volumes in short time intervals and small memory space with comparable accuracy. 

For \obtd, we use $1-10\%$ of the time-stamp data in each dataset as existing old tensor data. The results for qualitative measure for data are provided in Table \ref{tbl:mean_loss}. For each of the tensor data, the best performance is shown in bold. All state-of-art methods address the issue very well for small datasets. Compared with  BTD-ALS and BTD-NLS, \obtd gives lower or similar approximation loss and reduce the mean CPU running time by up to avg. $45\%$ times for big tensor data. For all datasets, BTD-NLS's loss is lower than all methods. But it is able to handle up to $1K \times 1K \times 100$ size only. Most importantly, however, \obtd performed very well on SYN-V and SYN-VI datasets, arguably the hardest of the six synthetic datasets we examined {\em where none of the baselines was able to run efficiently (under 48-72 hours)}. It significantly saved $10-60\%$ of computation time and saved $40-80\%$ memory space compared to baseline methods as shown in Table \ref{tbl:mean_loss}. Hence, \obtd is comparable to state-of-art methods for the small datasets and outperformed them for the large datasets. These results answer Q1 as the \obtd has better qualitative measures to other methods.
\subsubsection{\textbf{Scalability Evaluation}}
 To evaluate the scalability of our method, firstly, a dense tensor $\tensor{X}$ of small slice size $I = J = 100$  but longer $3^{rd}$ dimension $(K \in [10^2 - 10^8])$ is created. Its first $1-10\%$ timestamps of data is used for $\tensor{X}_{old}$ and each method's running time for processing batch of $100$ data slices at each timestamp is measured. We decomposed it using fixed target rank $R=3$ and fixed block rank $L = M = N = 5$. The baseline approach consumes more time as we increase the $K$. The baseline method runs up to $10^9$ non-zero elements or $10^5$ slices and runs out of memory for further data. However, our proposed method, successfully decomposed the tensor in reasonable time as shown in Figure \ref{obtd:scale_time}. Overall, our method achieves up to $27\%$ (average) speed-up regarding the time required and average $76\%$ gain over memory saving. This answers our Q2. In terms of batch size, it is observed that the time consumed by our method is linearly increasing as the batch size grows. However, their slopes vary with different rank used. The analysis is included in the supplementary section (B) due to the limitation of space here.
\begin{figure}[!ht]
	\vspace{-0.12in}
	\begin{center}
		\includegraphics[clip,trim=1cm 5cm 0cm 6cm,width = 0.43\textwidth]{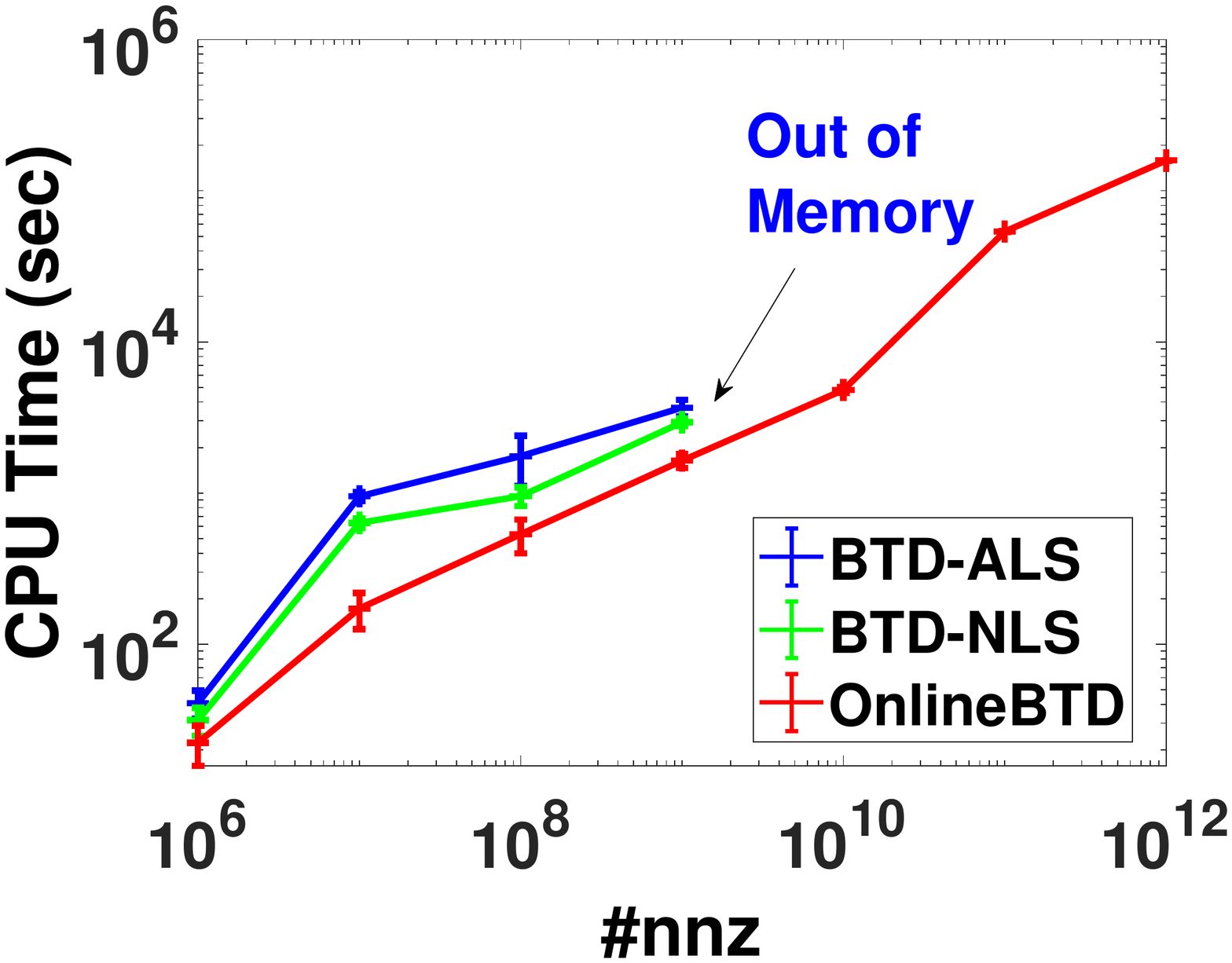}
		\includegraphics[clip,trim=1cm 5.6cm 1cm 6cm,width = 0.41\textwidth]{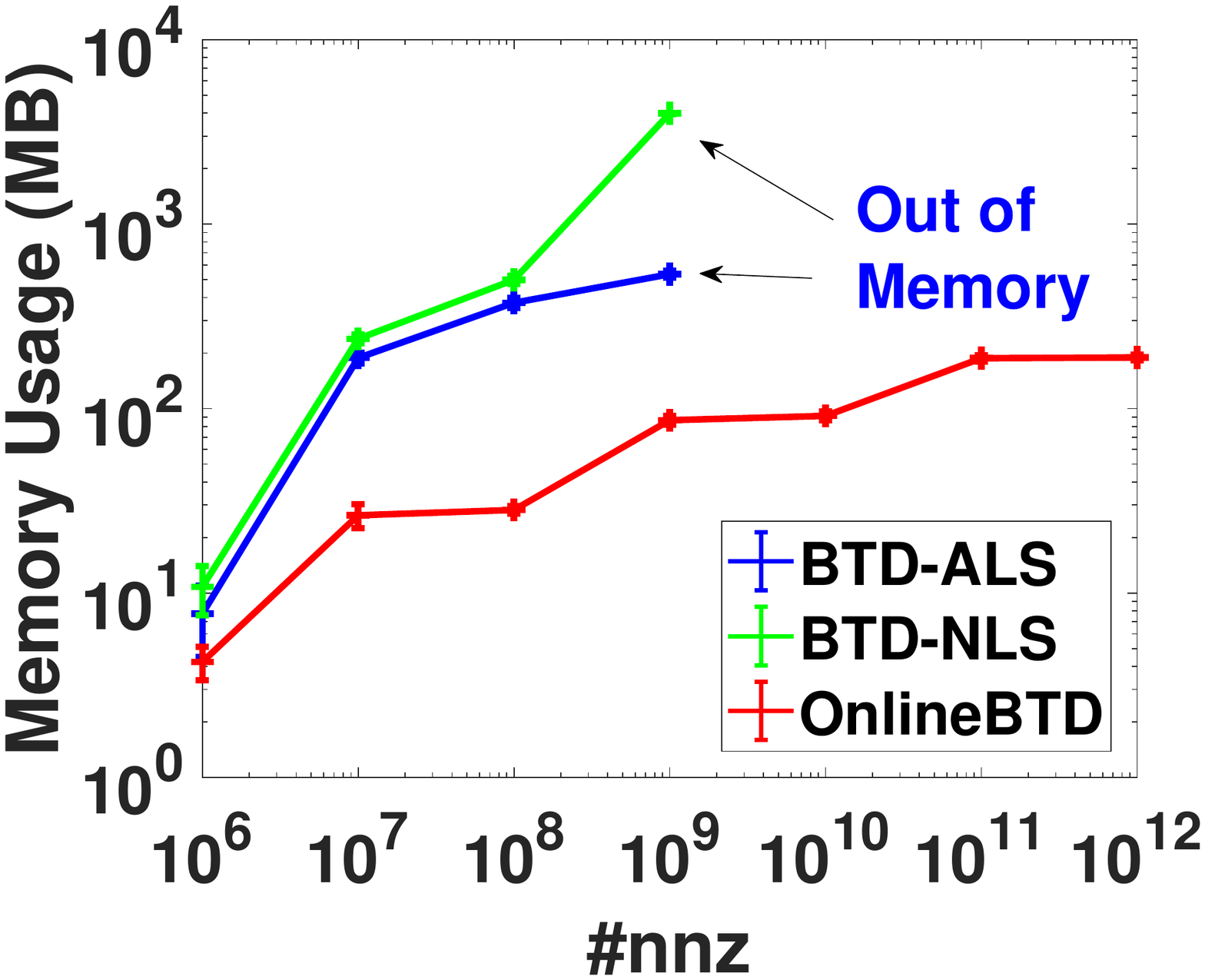}
		\caption{CPU time (sec) and Memory (MB) used for processing slices to tensor $\tensor{X}$ incrementing in its time mode. The time and space consumption increases linearly. The mean approximation error is $\leq$10\% for all experiments. $\#nnz$: Number of non-zero elements. }
		\label{obtd:scale_time}
	\end{center}
	\vspace{-0.15in}
\end{figure}
\subsubsection{\textbf{Sensitivity of \obtd}}
We extensively evaluate sensitivity of \obtd w.r.t. target rank $R$, block rank $(L, M, N)$ and noise added during initialization process. For all experiments, we use tensor $\tensor{X} \in  \mathbb{R}^{250 \times 250 \times 10^5}$ and batch size of $10$ slices at a time.

\textbf{Sensitivity w.r.t tensor rank-$R$}: We fixed the block rank $(L =5, M = 6, N = 7)$ with initialization factor noise is fixed at 10dB.The number of blocks play an important role in \obtd. We see in Figure \ref{obtd:scale_R} (a) that increasing number of blocks result in decrease of approximation error of reconstructed tensor. The CPU Time and Memory (MB) is  linearly (slope 1.03) increased as shown in figure \ref{obtd:scale_R} (b) and \ref{obtd:scale_R} (c).
\begin{figure}[!h]
	\begin{center}
		\includegraphics[clip,trim=1cm 5.5cm 0cm 6cm,width = 0.3\textwidth]{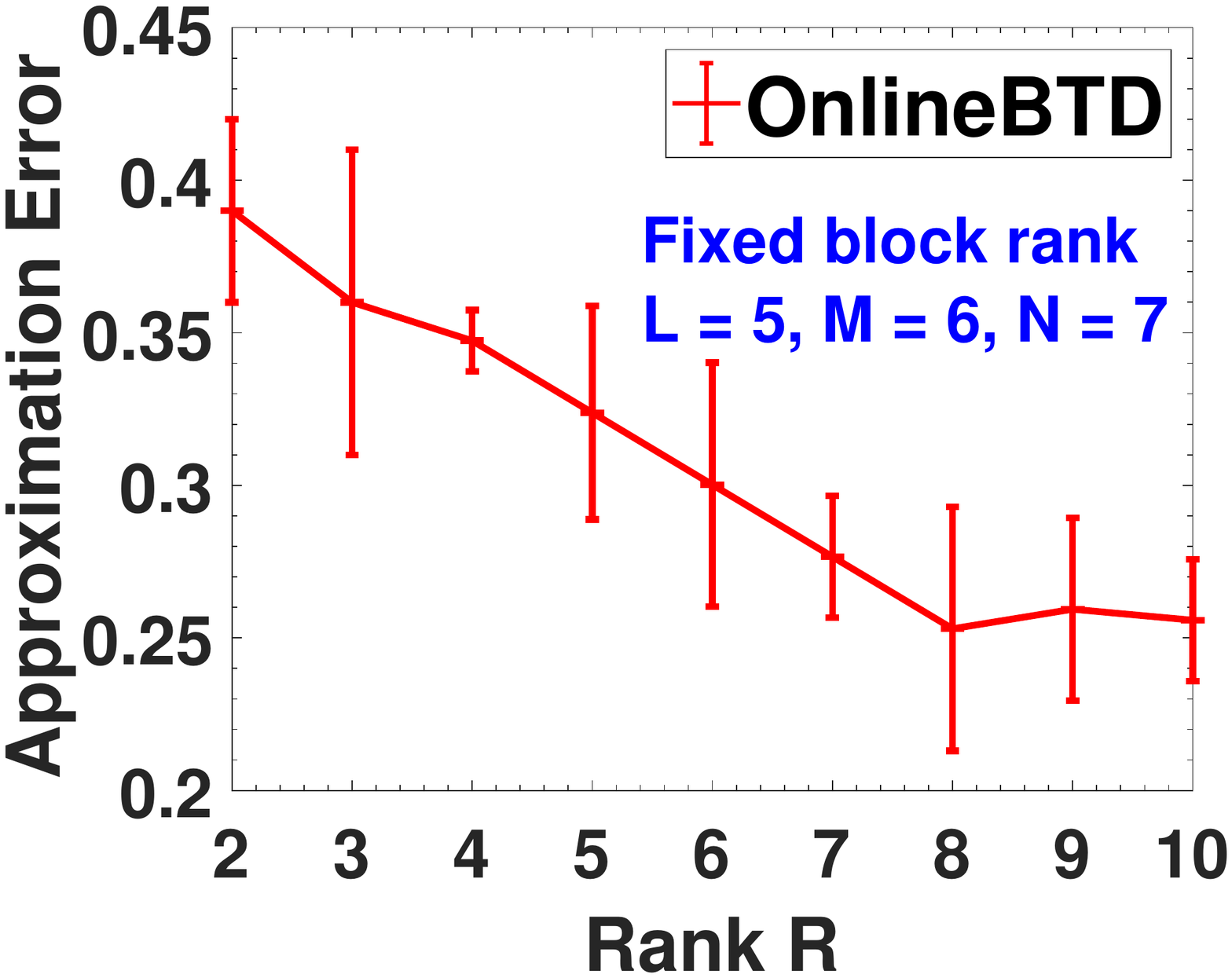}
		\includegraphics[clip,trim=1cm 5.5cm 0.5cm 6cm,width = 0.3\textwidth]{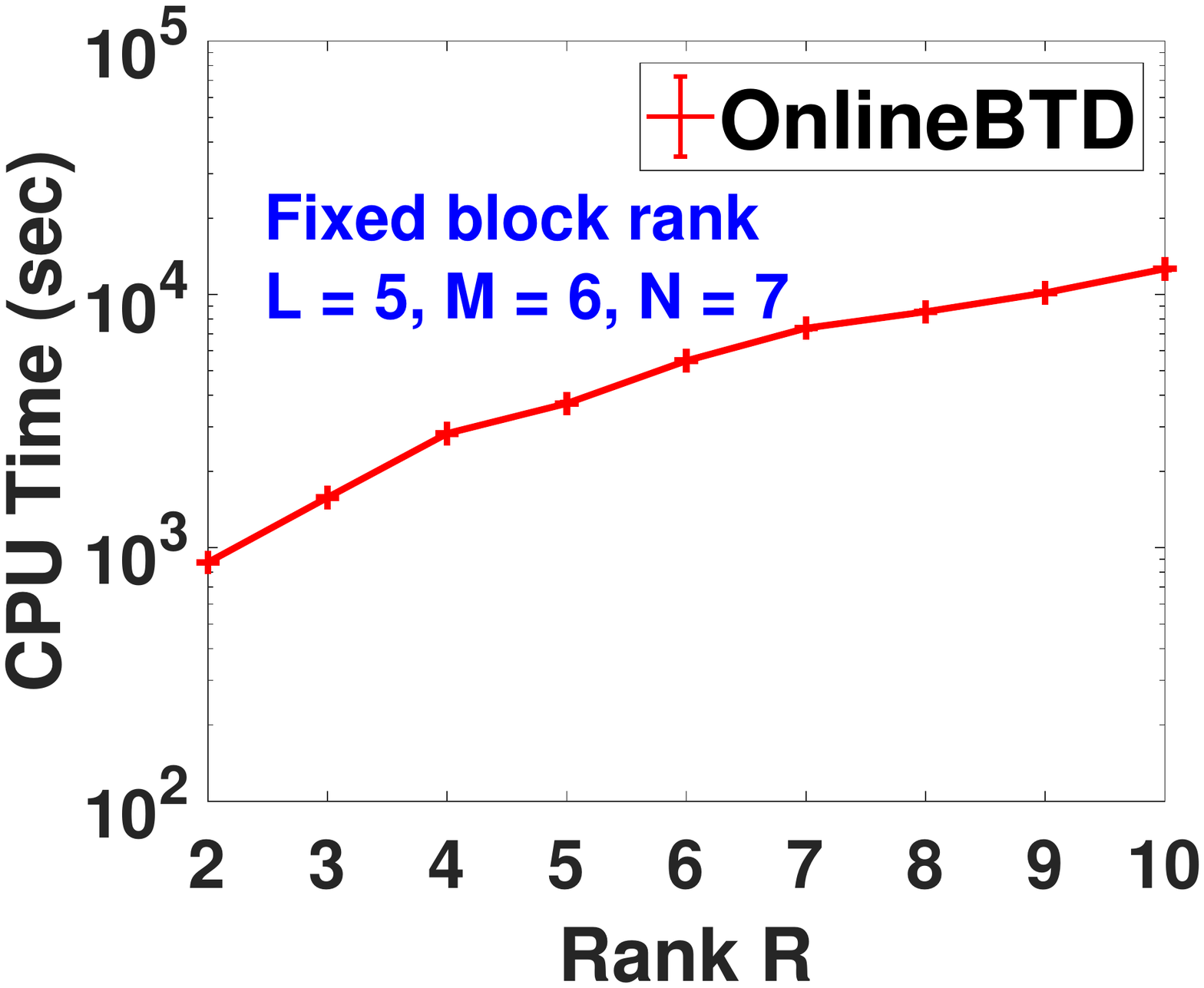}
		\includegraphics[clip,trim=1cm 5.5cm 1cm 6cm,width = 0.3\textwidth]{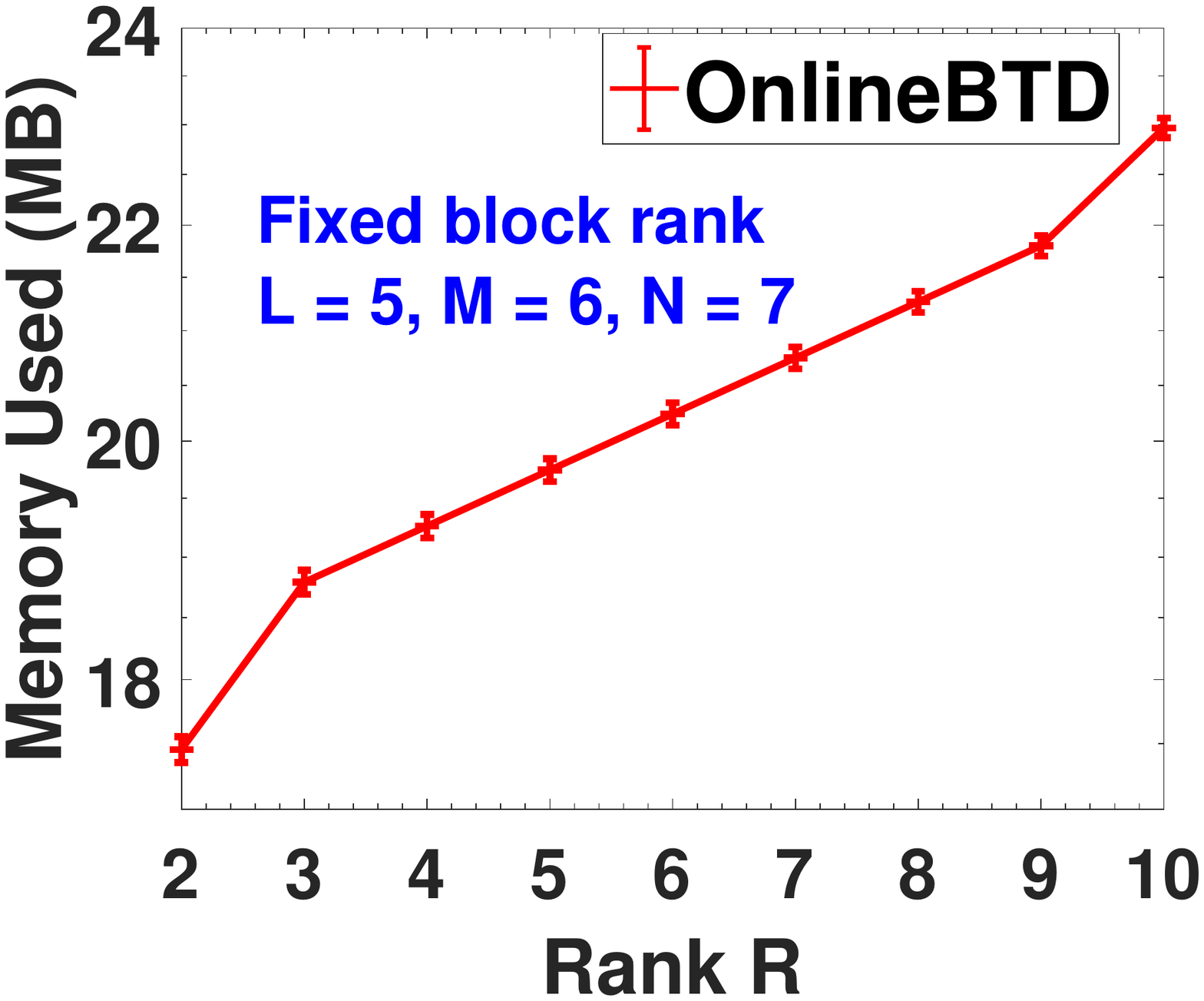}
		\caption{The average approximation error, time and memory usage for varying target rank 'R' on different datasets.}
		\label{obtd:scale_R}
	\end{center}
		\vspace{-0.3in}
\end{figure}

\textbf{Sensitivity w.r.t block rank $(L,M,N)$}: To evaluate the impact of block rank - $(L,M,N)$, we fixed tensor rank $R = 5$ and noise added to $10dB$. We can see that with higher values of the $(L,M,N)$, approximation error is improved as shown in Figure \ref{obtd:scale_para} (a) and become saturated when original block rank is achieved. Higher the block size, more memory and time is required to compute them as shown in figure \ref{obtd:scale_para}.

\begin{figure}[!ht]
	\begin{center}
		\includegraphics[clip,trim=1cm 5cm 0cm 5cm,width = 0.3\textwidth]{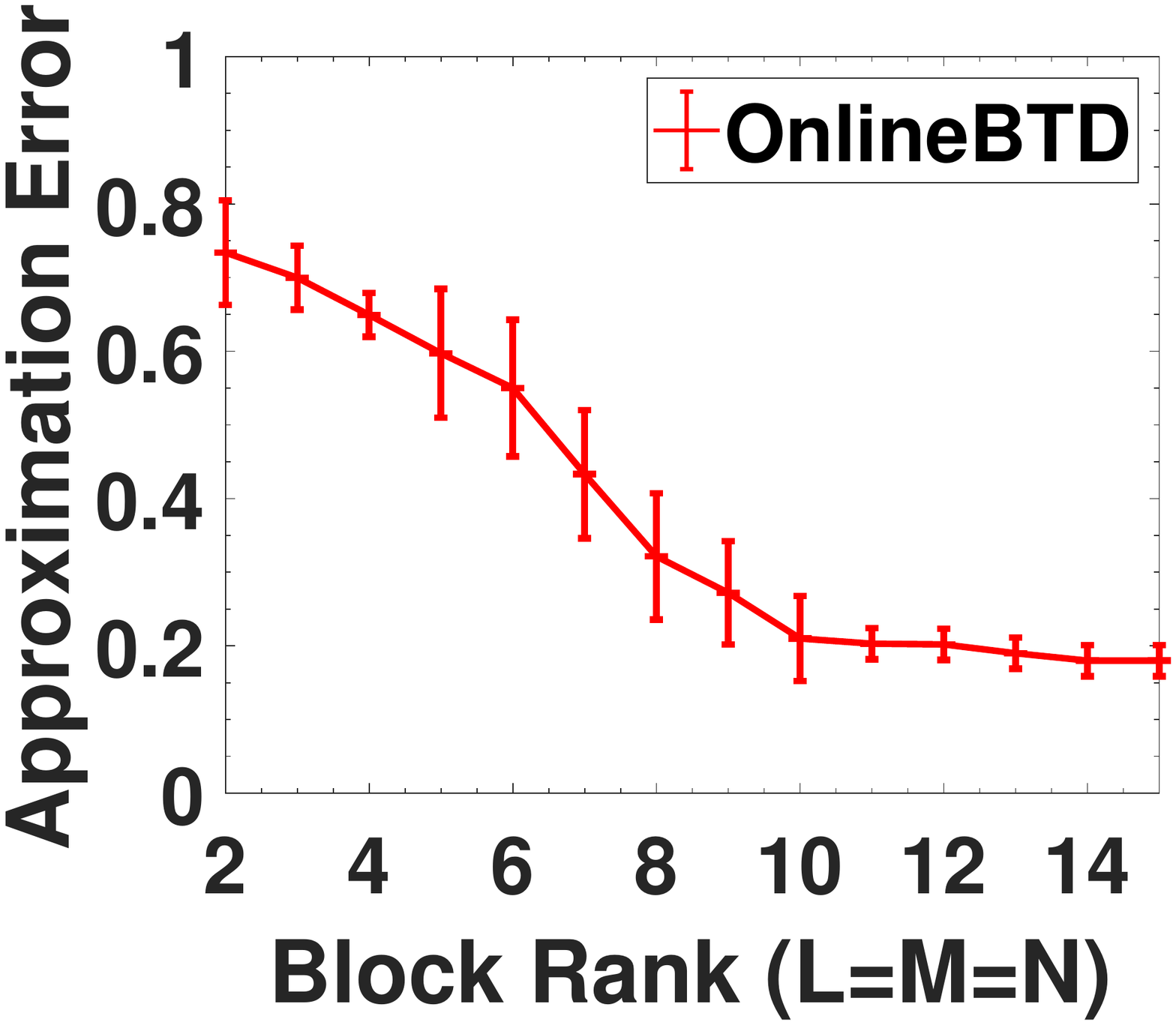}
		\includegraphics[clip,trim=1cm 5cm 1cm 5cm,width = 0.3\textwidth]{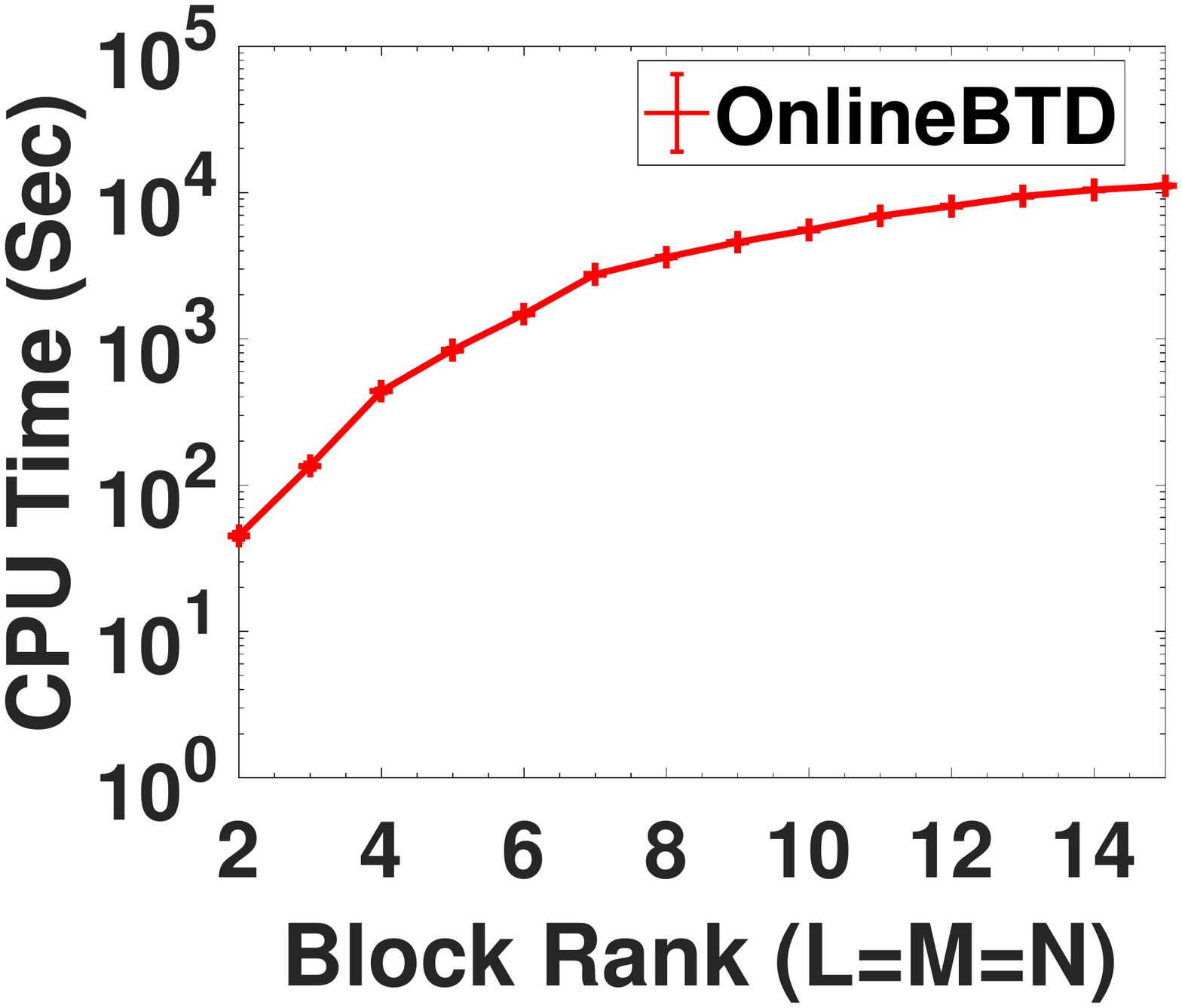}
		\includegraphics[clip,trim=1cm 5cm 1cm 5cm,width = 0.3\textwidth]{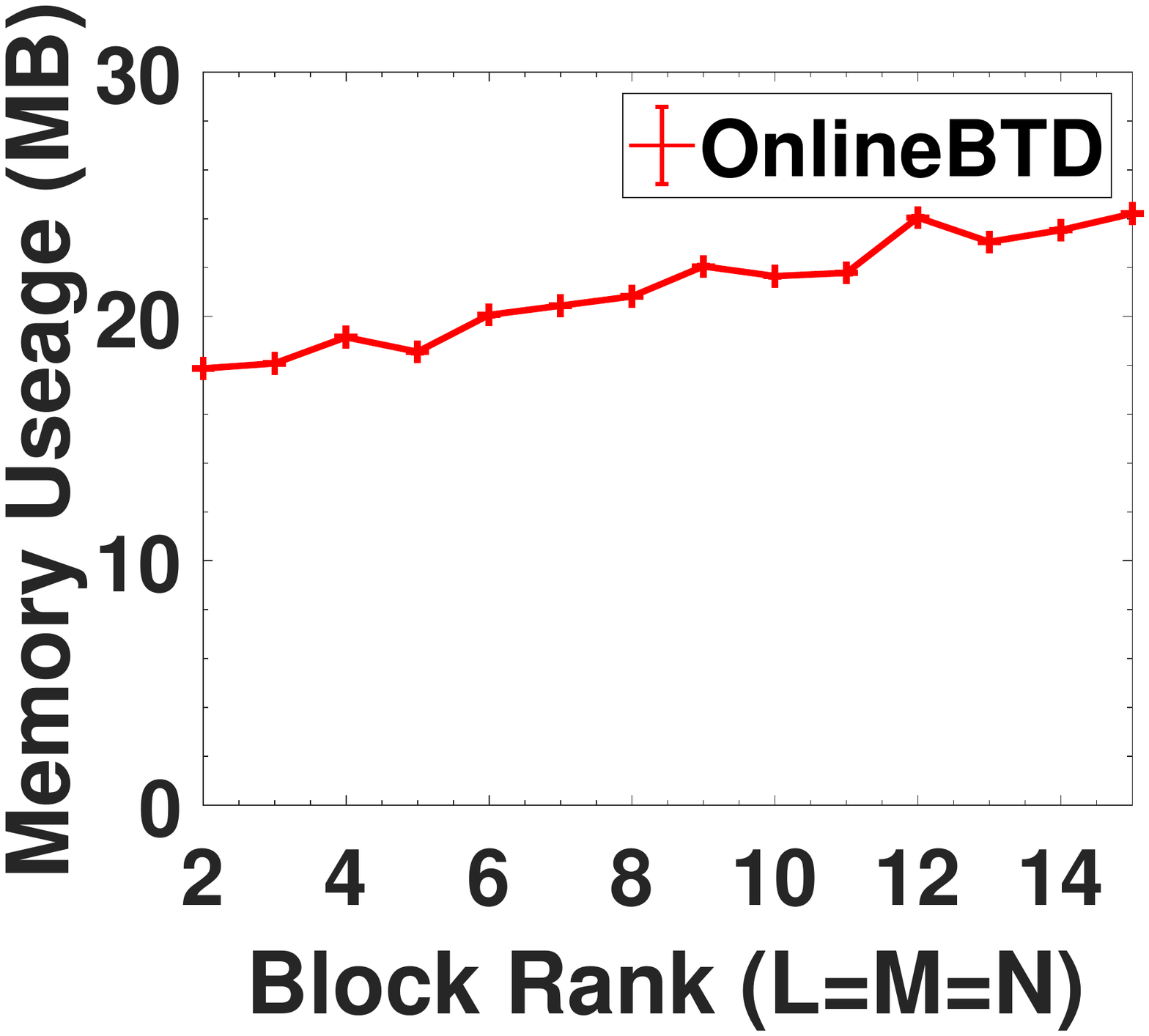}
		\caption{The average approximation error, CPU time in seconds and memory usage in MBytes for varying block rank - (L,M,N) of $\tensor{X}$ with original $L = M = N = 10$. As the rank increases, lower approximation error is achieved. Increase in time and memory consumption is expected behaviour.}
		\label{obtd:scale_para}
	\end{center}
\end{figure}
\begin{table}[H]
\ssmall\setlength\tabcolsep{1pt}
\begin{sideways}
	\begin{tabular}{cc@{}ccccccccc}
	\hline
	\multirow{2}{*}{{\bf R}} & \multirow{2}{*}{{\bf REAL}}&  \multicolumn{3}{c}{{\bf Approximation Loss}} & \multicolumn{3}{c}{{\bf CPU Time (sec) }}& \multicolumn{3}{c}{{\bf Memory Usage (MBytes)}}\\
     &	&  {\bf BTD-ALS} & {\bf BTD-NLS} & {\bf \obtd}& {\bf BTD-ALS} & {\bf BTD-NLS} & {\bf \obtd}&  {\bf BTD-ALS} & {\bf BTD-NLS} & {\bf \obtd}\\ 
\hline
	\parbox[t]{2mm}{\multirow{3}{*}{\rotatebox[origin=c]{90}{$5$}}} &A &$0.36 \pm 0.13$&\textbf{0.34 $\pm$ 0.14}&$0.37 \pm 0.13$&$385.51 \pm 67.52$&$432.47 \pm 46.89$&\textbf{127.32 $\pm$ 13.31}&$353.35 \pm 0.01$&$211.34 \pm 0.02$&\textbf{45.37 $\pm$ 0.01}\\ 
	& B  &\reminder{OoM}&\reminder{OoM}&\textbf{0.31 $\pm$ 0.09}&\reminder{OoM}&\reminder{OoM}&\textbf{742.08 $\pm$ 65.37}&\reminder{OoM}&\reminder{OoM}&\textbf{312.34 $\pm$ 0.01}\\ 
		& C  &\reminder{OoM}&\reminder{OoM}&\textbf{0.34 $\pm$ 0.03}&\reminder{OoM}&\reminder{OoM}&\textbf{4834.3 $\pm$ 169.42}&\reminder{OoM}&\reminder{OoM}&\textbf{572.7 $\pm$ 0.01}\\  \hline
	\hline
	\parbox[t]{2mm}{\multirow{3}{*}{\rotatebox[origin=c]{90}{$10$}}} &A &$0.22 \pm 0.11$&\textbf{0.21 $\pm$ 0.09}&$0.23 \pm 0.02$&$593.1 \pm 53.5$&$602.4 \pm 77.34$&\textbf{283.45 $\pm$ 43.47}&$490.3 \pm 0.03$&$381.59 \pm 0.01$&\textbf{90.43 $\pm$ 0.01}\\
	& B  &\reminder{OoM}&\reminder{OoM}&\textbf{0.23 $\pm$ 0.14}&\reminder{OoM}&\reminder{OoM}&\textbf{1267.4 $\pm$ 121.3}&\reminder{OoM}&\reminder{OoM}&\textbf{583.4 $\pm$ 0.01}\\
	& C  &\reminder{OoM}&\reminder{OoM}&\textbf{0.27 $\pm$ 0.05}&\reminder{OoM}&\reminder{OoM}&\textbf{8639.7 $\pm$ 392.36}&\reminder{OoM}&\reminder{OoM}&\textbf{845.5 $\pm$ 0.01}\\  \hline
	\end{tabular}
	\end{sideways}
	 \caption{A $\leftarrow$ ANT-Network; B $\leftarrow$ EU-Core; C $\leftarrow$ EEG Signals dataset. The average and standard deviation of memory usage and time metric comparison on real world dataset using two different target for five random initialization. For EU-Core and EEG signal datasets, baselines are unable to create intermediate core tensor. The baseline method has less approximation loss as compared to our proposed method. However, the time and memory saving ($>50\%$) with \obtd is significant.}
	\label{tbl:realdatares} 
\end{table}
  \begin{figure}[!ht]
 	\begin{center}
	     \includegraphics[clip,trim=3cm 9cm 5cm 8.5cm,width = 0.35\textwidth]{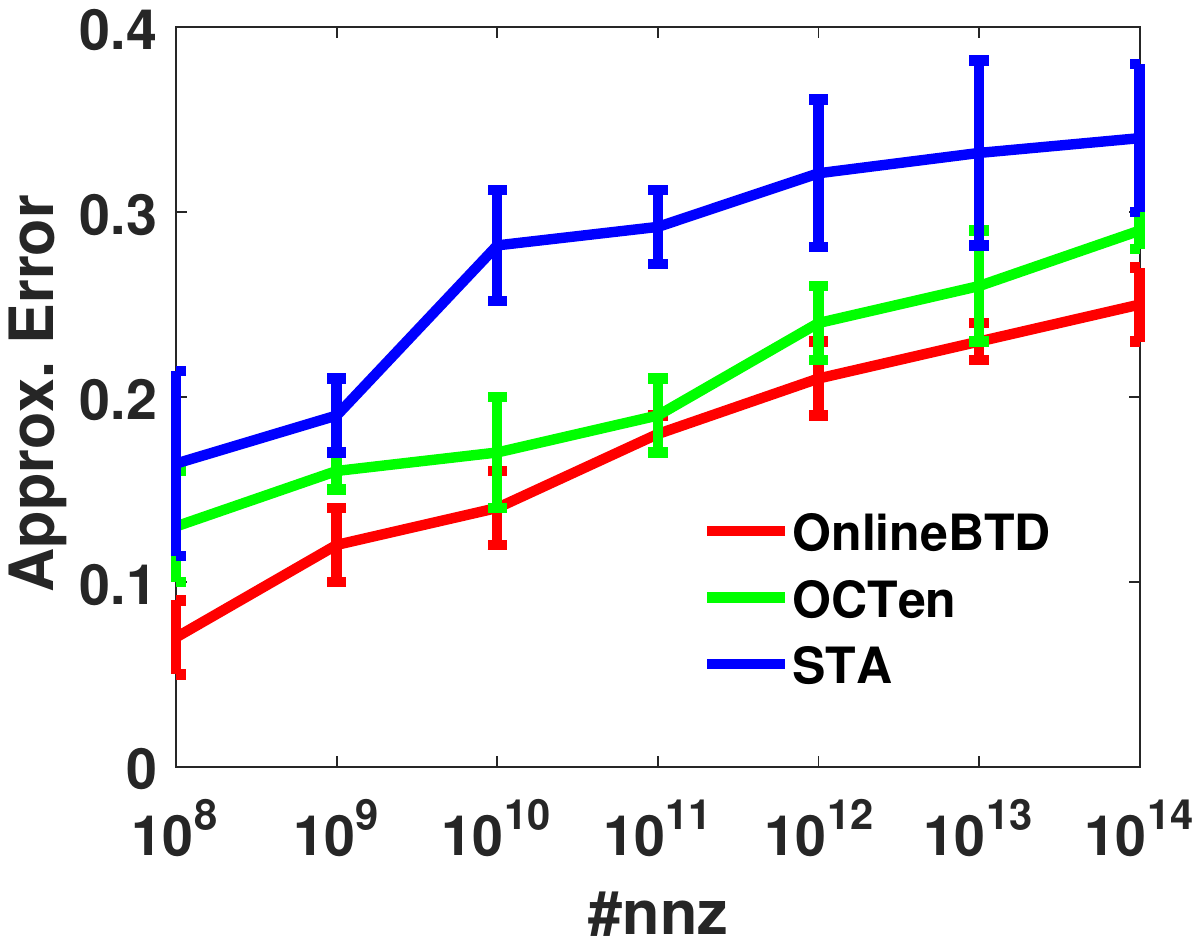}
		\includegraphics[clip,trim=3cm 8cm 5cm 8cm,width = 0.34\textwidth]{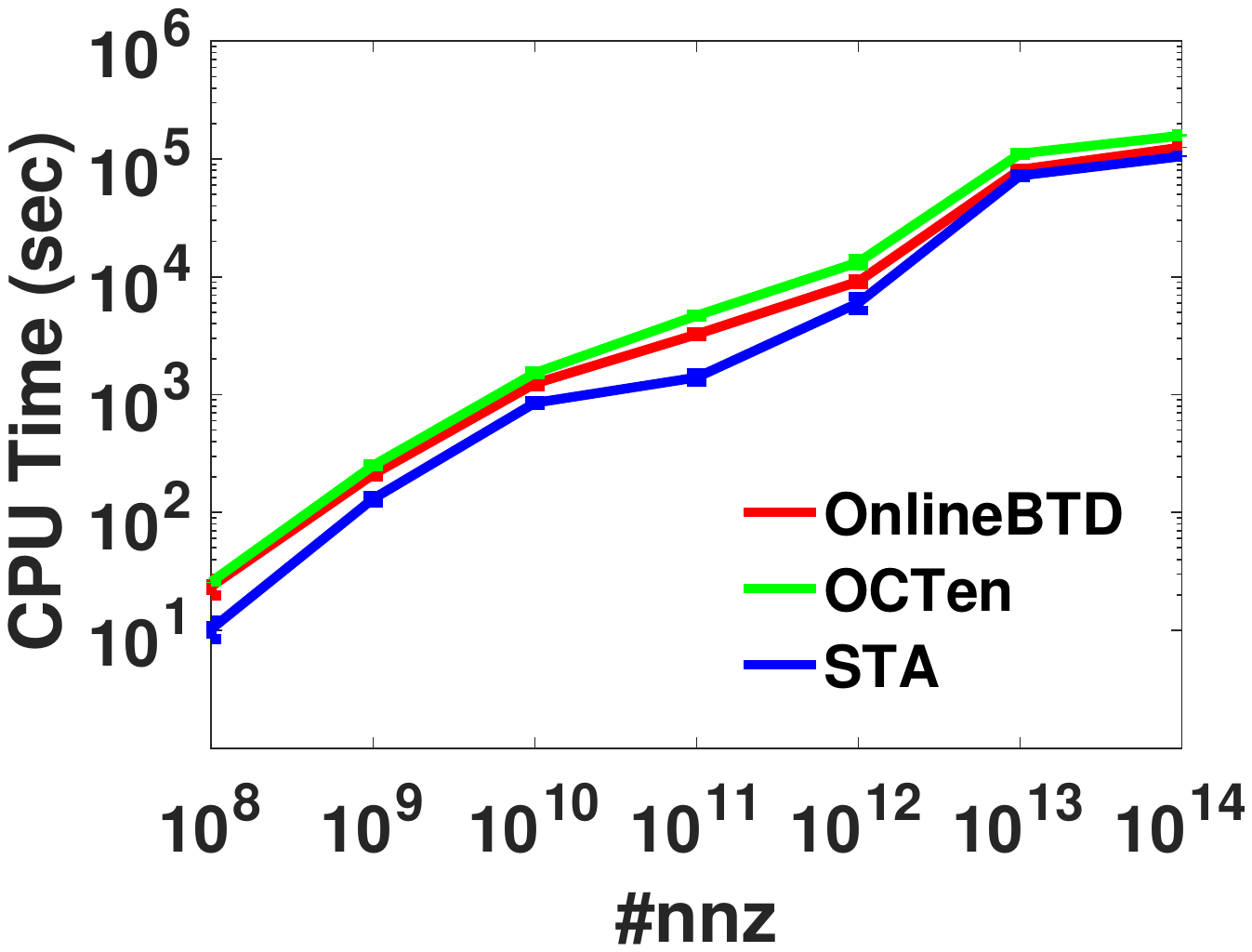}
		\caption{The CPU time in seconds and approximation loss for CP/Tucker/BTD online tensor decomposition.}
		\label{obtd:cptucker}
	\end{center}
\end{figure}

\begin{figure*}[!ht]
  \centering
		\includegraphics[clip,trim=0cm 6cm 0cm 5cm,width = 0.46\textwidth]{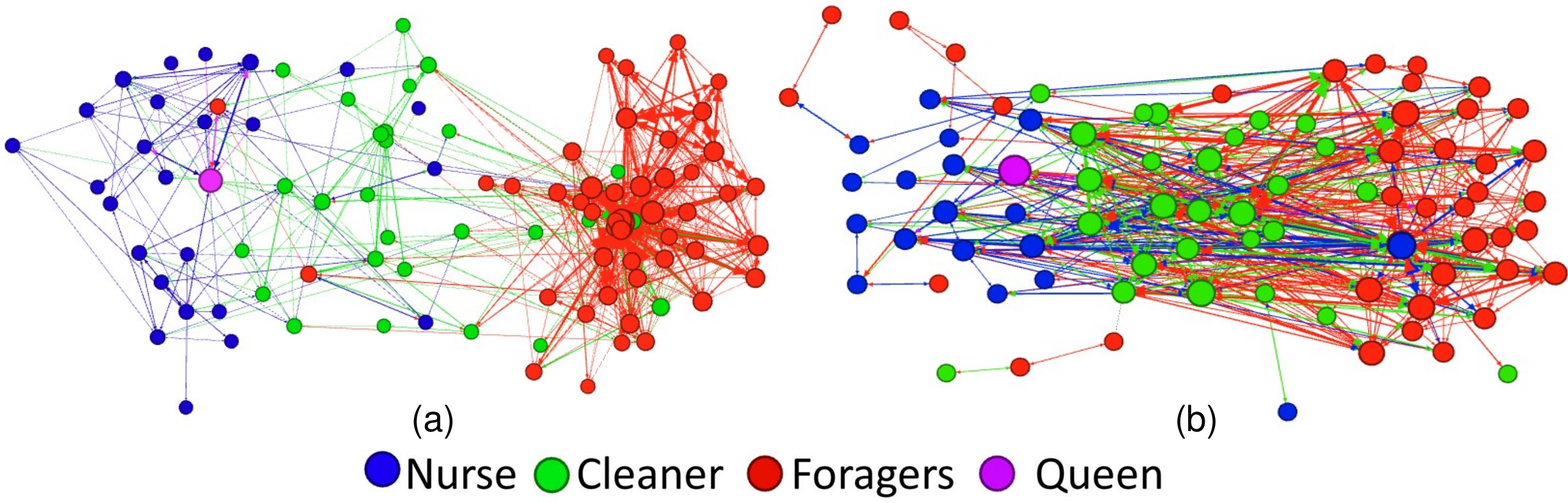}
        \includegraphics[clip,trim=0cm 6cm 0cm 5cm,width = 0.46\textwidth]{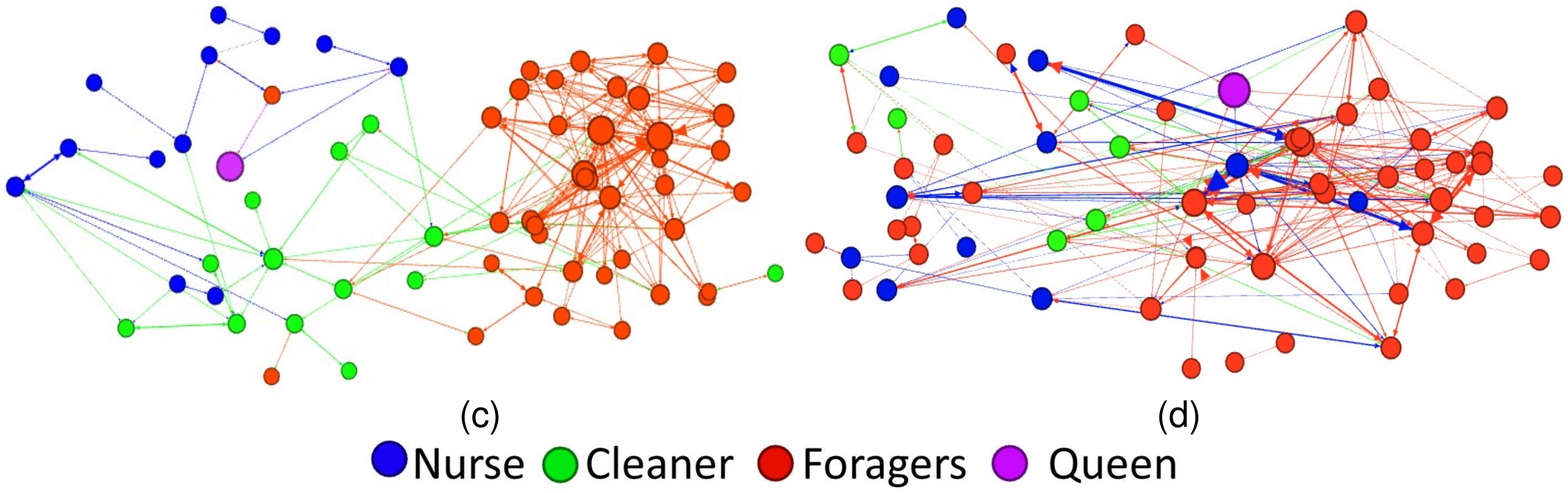}
         \includegraphics[clip,trim=0cm 1cm 0cm 3cm,width = 0.22\textwidth]{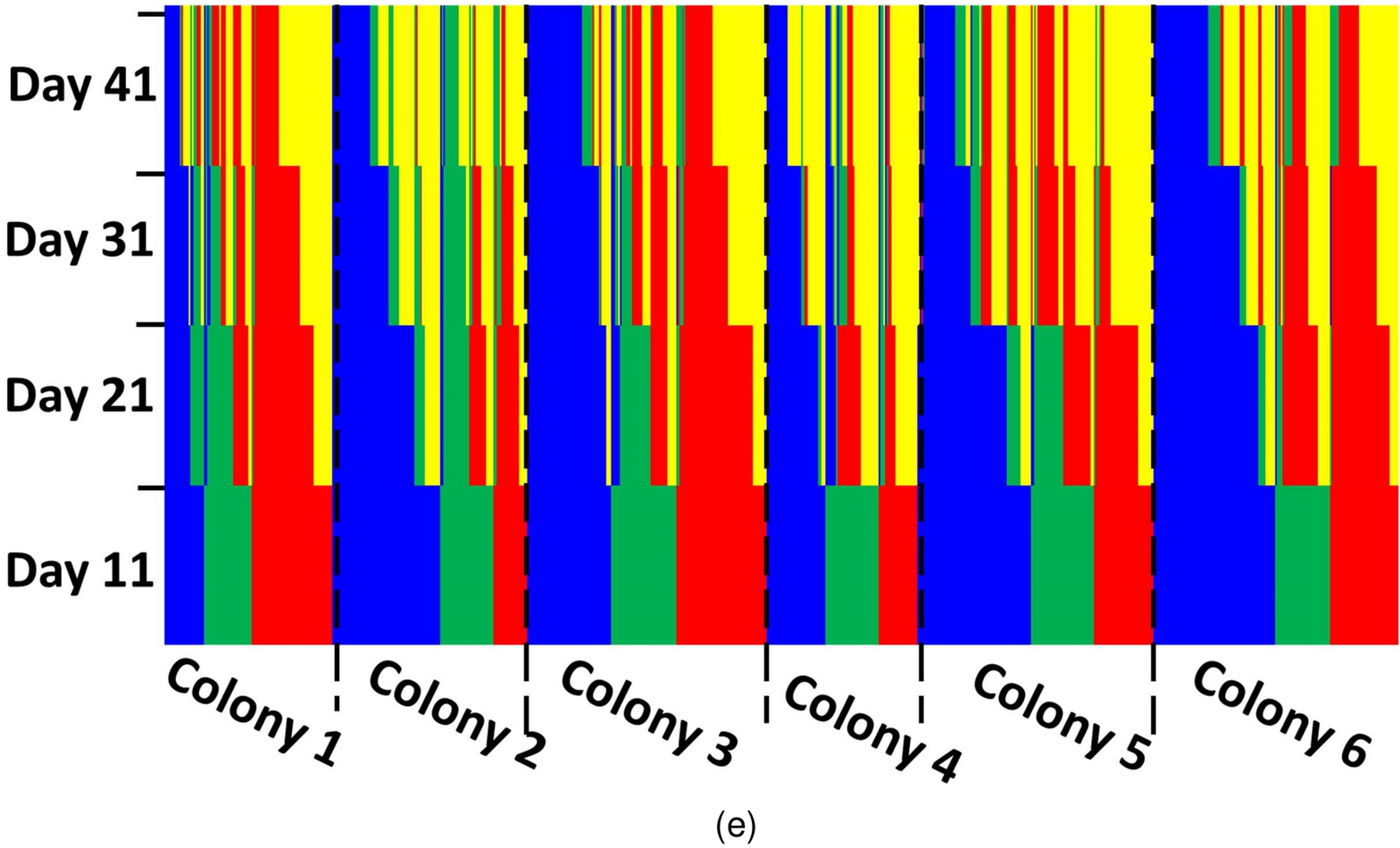}
	    \includegraphics[clip,trim=1cm 1cm 0cm 3cm,width = 0.22\textwidth]{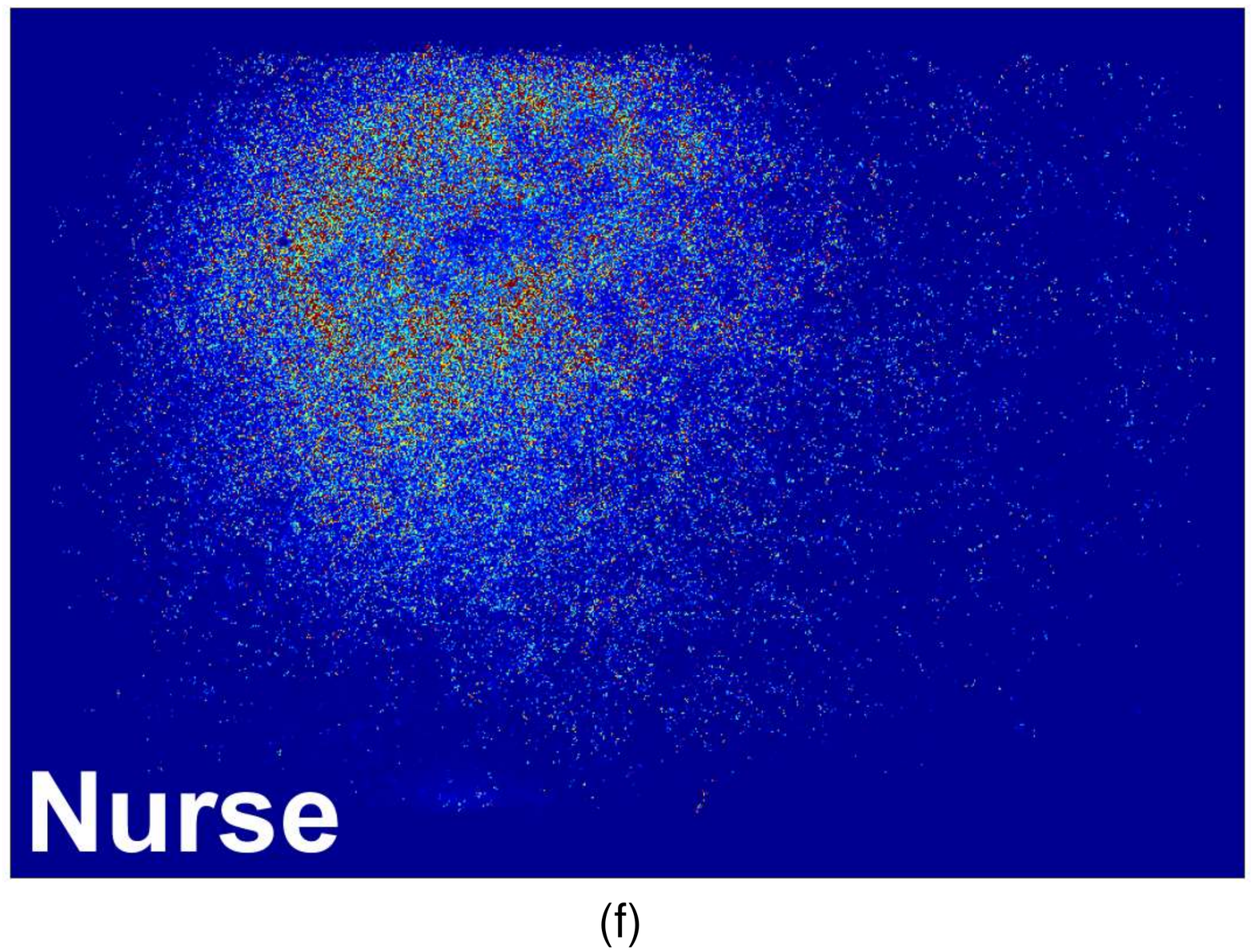}
	    \includegraphics[clip,trim=1cm 1cm 0cm 0cm,width = 0.22\textwidth]{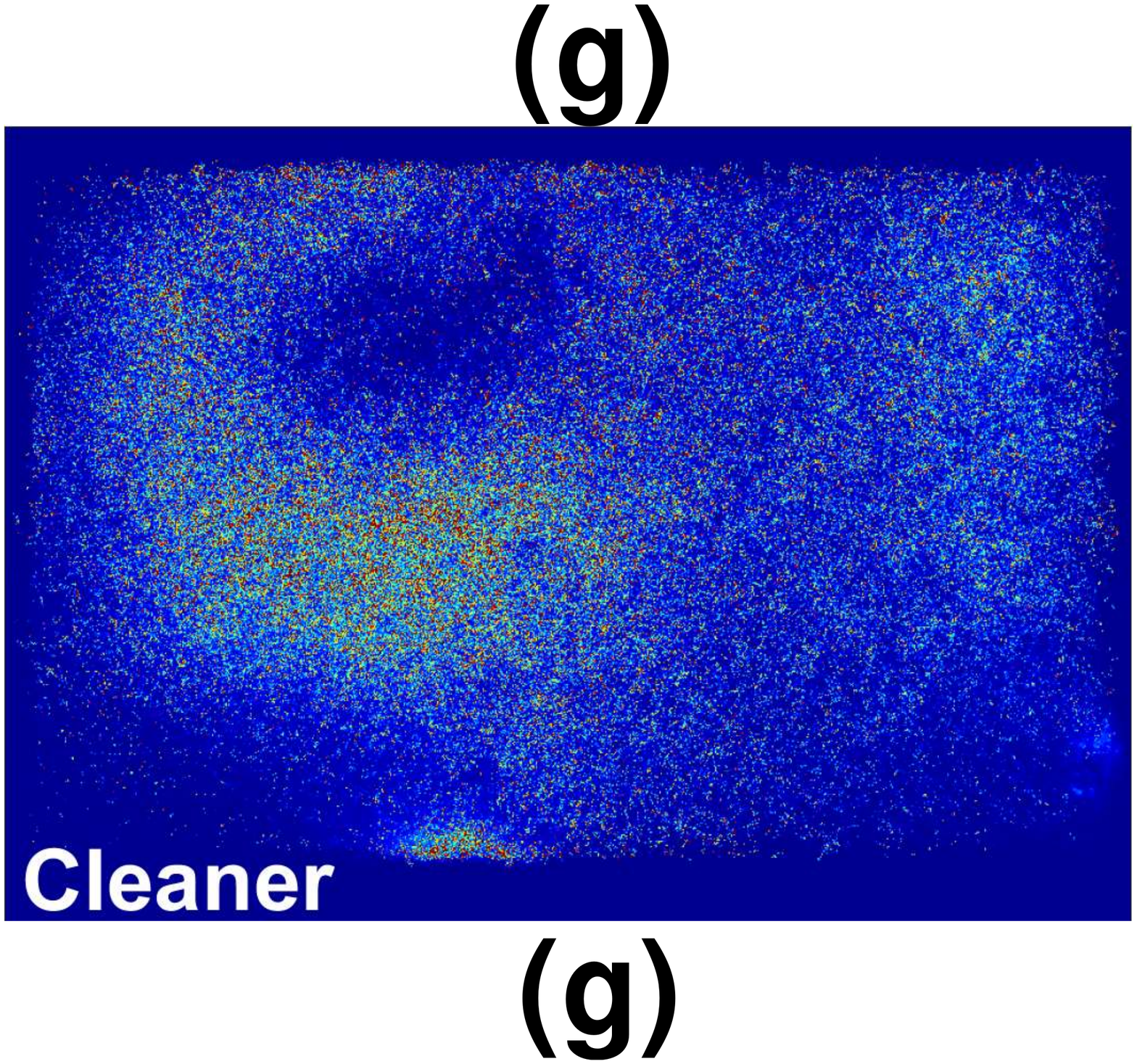}
	    \includegraphics[clip,trim=1cm 1cm 0cm 3cm,width = 0.22\textwidth]{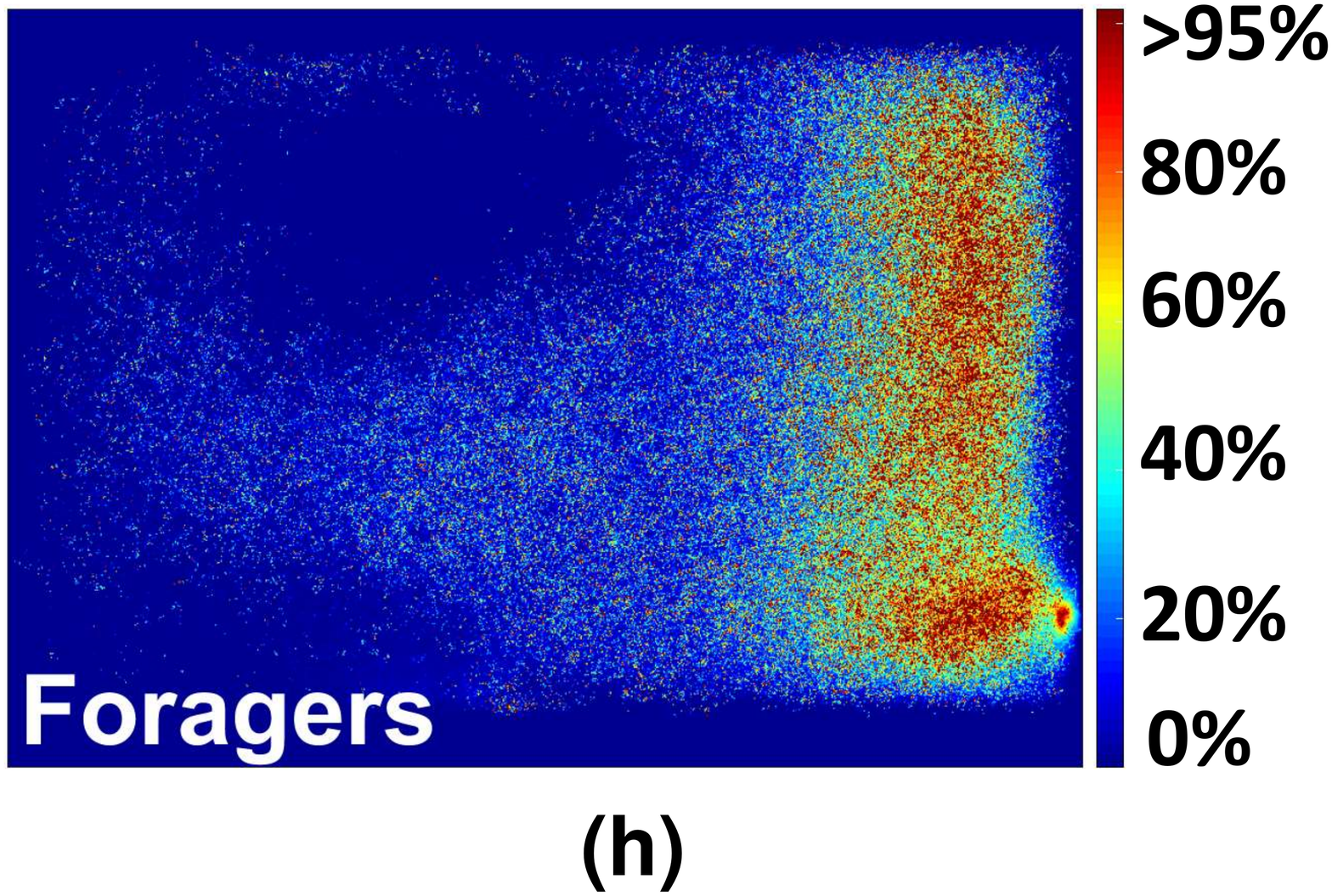}
            \caption{(a)-(d) Community detection results of colony 1 on days 11, 21, 31 and 41 of the experiment; (e) The community profiling of the each ant for every 10-day period for all colonies. Blue: nurse community; green: cleaning community; red: foraging community. Ants that disappeared because they are lost or dead are indicated in yellow; (f)-(h) Spatial distribution of nurses, cleaners, and foragers.}
		\label{obtd:antanalysis}
\end{figure*}

\subsubsection{Comparison to Online Tucker and Online CP}
We also evaluate \obtd performance in terms of computation time with other CP and Tucker online methods as given below: 
\begin{itemize}
    \item \textbf{Online Tucker Decomposition} \cite{sun2006beyond}: \textit{STA} is a streaming tensor analysis method, which provides a fast, streaming approximation  method that continuously track the changes of projection matrices using the online PCA technique.
    \item \textbf{Online CP Decomposition} \cite{gujral2019octen}: \textit{OCTen} is a compression-based online parallel implementation for the CP decomposition.
\end{itemize}

Here, we use dense tensor $\tensor{X}$ of slice size $I = J = 1000$  but longer $3^{rd}$ dimension $(K \in [10^2 - 10^8])$ for evaluation. For CP and Tucker decomposition, we use rank $R = 10$ and for \obtd we use $L = 10$ and $R = 1$. We see in Figure \ref{obtd:cptucker}, \obtd performance is better than OCTen, however, STA outperforms the all the methods. In terms of Fitness, \obtd is better ($>35\%$) than all the baseline methods. 

\subsection{Effectiveness on Real-world data}
We evaluate the performance of \obtd approach for the real datasets as well. The empirical results are given in Table (\ref{tbl:realdatares}). 
\subsubsection{\textbf{Ant Social Network}}
Here, we discuss the usefulness of \obtd towards extracting meaningful communities or clusters of ants from Ant Social Network\cite{nrdata} data. It is well known fact that ants live in highly organized societies with a specific division of labor among workers, such as cleaning the nest, caring for the eggs, or foraging, but little is known or researched about how these division of labor are generated. This network consists of more than $1$ million interactions within $41$ days between $822$ ants. The challenge in community detection of such large data is to capture the functional behavioral based on spatial and temporal distribution of interactions. Therefore, there is a serious need to learn more useful representation.
\hide{
\textbf{Model Interpretation:} the model interpretation towards the target challenge:
\begin{itemize}[noitemsep]
	\item \textbf{Incremental factor}: The temporal factor provides behavioral trajectories of workers or ants over the $41$ days of the experiments and it provides insights how the workers tend to move from one role to another. \hide{Each column of factor or loading matrix $\mathbf{C}$ represents community activation profile,}
	\item \textbf{Non-incremental factor}:   Each column of factor or loading matrix $\mathbf{A}$ represents a community and each row represents the importance of community membership for an ant to each one of the $L$ communities within block.\hide{ Therefore, an entry $\mathbf{A}_r$ (i, j) represents the membership of user $i$ to the $j^{th}$ community in $r^{th}$ block. We consider non-overlapping communities based on functional behaviour of ants.}
\end{itemize}}

\textbf{Qualitative Analysis}: For this case study, we decompose tensor in batch of $10$ days to extract communities. We observed that the ant organization in this dataset is type Pleometrosis where multiple egg laying queen create there own colonies within organization. There are three communities \cite{mersch2013tracking} namely nurse (N), cleaner (C) and forager (F) present in each colony. We compute {\em{F1-score}} to evaluate the communities quality.

We focus our analysis on communities within each colony of ants in this dataset. The word "ant colony" refers to groups of workers, fertile individuals, and brood that non-aggressively stay together and work jointly. Our proposed method helped us to track temporal changes among the groups by performing community detection analyses on the batches of 10-day periods. Figure \ref{obtd:antanalysis}(e) shows behavioral trajectories of three communities of ants over the 41 days. We observed that ants exhibit preferred behavioral trajectory i.e. move from nursing (located near the queen) to cleaning (move throughout the colony) to foraging (moving in and out of the colony) as they age. The most common transition among them was from cleaner to forager. Conceptually, those ants share similar behaviour in terms of movement and work load. As a result, it becomes a very important challenge to accurately cluster them. Nevertheless, \obtd tracked the behaviour changes very well and achieves significantly good performance in terms of $F1-score$ i.e $\approx 0.79$ as compared to baseline (max F1-Score $\approx 0.63$). The communities of colony 1 for day 11, 21, 31 and 41 is shown in Figure \ref{obtd:antanalysis}(a)-(d). The heatmap (Figure \ref{obtd:antanalysis}(f)-(h)) provides spatial distribution of community interactions. As any of the baseline does not have capability to capture this temporal behaviour efficiently, our proposed method gives advantage to decompose the data in streaming fashion in reasonable time.

\subsubsection{EU-Core \cite{yin2017local}} This is a temporal network dataset of communication via. emails between students, professor and staff members of the research institution during October 2003 to May 2005 (18 months) of $42$ departments. EU-Core consists of multiple strongly connected communities corresponding to many instances of communication within department. However, we also find numerous re-occurred groups which indicate communication across departments. Interestingly, we note that between Oct,2004 -Jan,2005 one community of researchers established a continuous communication consisting of $80-85$ researchers who interacted each month from the same department. Suddenly, some members disappeared during Dec'04 and again in Jan'05 communication resumed back normally as shown in Figure \ref{obtd:eucore}. We believe that this may reflect the days of the week of Christmas and New year holidays.
\begin{figure}[!ht]
	\begin{center}
		\includegraphics[clip,trim=0cm 15cm 0cm 5cm,width = 0.55\textwidth]{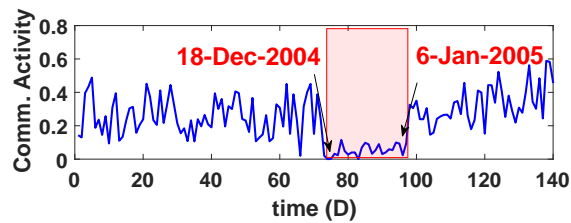}
		\caption{A community activity profile of EU-Core.}
		\label{obtd:eucore}
	\end{center}
\end{figure}
\subsubsection{EEG Signals \cite{schalk2004bci2000}} The challenge in movement detection of EEG signal data is to capture behaviour regarding the spatio-temporal representations of raw streams. EEG signals usually consist of various noises e.g. cardiac signal. Apart from the systems noises, such as power line interference etc. EEG signals consist from some unavoidable noises like eye blinks, heart beat and muscle activity, all harm to collecting high signal-to-noise ratio EEG signals. It is difficult to make sure that the subjects concentrate on the performing tasks during the whole experiment period. The offline tensor decomposition model assumes that the subject maintains the same spectral structure and topography within the observed window. 
\begin{figure}[!ht]
	\begin{center}
		\includegraphics[clip,trim=4.5cm 2.9cm 4.5cm 3cm,width = 0.5\textwidth]{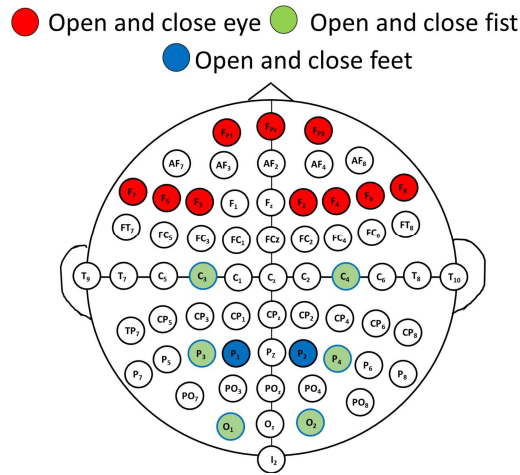}
		\caption{EEG Electrode map.}
		\label{obtd:eegsignal}
	\end{center}
	\vspace{-0.1in}
\end{figure}
However, these are typically characterised by evolving repetitive sharp waves. Our proposed method allows more variability and more interaction between the factors in order to capture such non-stationarities. We find that frontal electrode lobes i.e $F_2$ through $F_8$ and front polar electrode lobes $F_{P1}$ and $F_{P2}$ gives the better separations results to differentiate the motor movements tasks between eye open and eye closed. The parietal ($P_3$, $P_4$), central ($C_3$, $C_4$) and occipital ($O_1$, $O_2$) electrode lobes as shown in Figure \ref{obtd:eegsignal} gives results at temporal scales for open and close left or right fist. This movement capturing over temporal mode is beneficial to users with severe disabilities.

The results in Table \ref{tbl:realdatares} and above qualitative analysis shows the effectiveness of the decomposition and confirms that the \obtd can be used for various types of data analysis and this answers \textbf{Q4}.

\section{Conclusion}
\label{obtd:conclusions}
We proposed \obtd, a novel online method to learn Block Term Decomposition (BTD). The performance of the proposed method is assessed via experiments on six synthetic as well as three real-world networks.  We summarize our contribution as:
\begin{itemize}
	\item  The proposed framework effectively identify the beyond rank-1 latent factors of incoming slice(s) to achieve online block term tensor decompositions. To further enhance the capability, we also tailor our general framework towards  higher-order online tensors.
	\item Through experimental evaluation on multiple datasets, we show that \obtd  provides stable decompositions and  have significant improvement in terms of run time and memory usage.
	\item Utility: we provide a clean and effective implementation of BTD-ALS and all accelerated supporting implementations along with \obtd source code.
\end{itemize}
There is still room for improving our method. One direction is to explore online BTD for NLS (Non-Linear Square). Another direction is to further extend it for more general dynamic tensors that may be changed on any modes so that our method can be more suitable for applications such as computer vision. 

\vspace{2in}
\noindent\fbox{%
    \parbox{\textwidth}{%
       Chapter based on material published in DSAA 2019 \cite{gujral2020onlinebtd}.
    }%
}
\chapter{Streaming PARAFAC2 Decomposition for Sparse Datasets}
\label{ch:11}
\begin{mdframed}[backgroundcolor=Orange!20,linewidth=1pt,  topline=true,  rightline=true, leftline=true]
{\em "How to incrementally update the decomposition of irregular tensors?”}
\end{mdframed}

In tensor mining, PARAFAC2 is a powerful and a multi-modal factor analysis method that is ideally suited for modeling for batch processing of data which forms ``irregular'' tensors, e.g., user movie viewing profiles, where each user's timeline does not necessarily align with other users. However, these days data is dynamically changing which hinders the use of this model for large data. The tracking of the PARAFAC2 decomposition for the dynamic tensors is very pivotal and challenging task due to the variability of incoming data and lack of online efficient algorithm in terms of time and memory.

In this chapter, we fill this gap by proposing an efficient method to compute the PARAFAC2 decomposition of streaming large tensor datasets containing millions of entries, called \spade. In terms of effectiveness, our proposed method shows comparable results with the prior work, PARAFAC2, while being computationally much more efficient. We evaluate \spade on both synthetic and real datasets, indicatively, our proposed method shows $10-23\times$ speedup and saves $17-150\times$ memory usage over the baseline methods and is also capable of handling larger tensor streams ($\approx 7$ million users) for which the batch baseline was not able to operate. To the best of our knowledge, \spade is the first approach to online PARAFAC2 decomposition while not only being able to provide on par accuracy but also provide better performance in terms of scalability and efficiency. The content of this chapter is adapted from the following published paper:

{\em Gujral, Ekta, Georgios Theocharous, and Evangelos E. Papalexakis. "SPADE: S treaming PA RAFAC2 DE composition for Large Datasets." In Proceedings of the 2020 SIAM International Conference on Data Mining, pp. 577-585. Society for Industrial and Applied Mathematics, 2020.}

\section{Introduction}
\label{spade:intro}
The PARAFAC1 (CP) decomposition method is used to handle multi-aspect or multi-way data, and the principle is well researched among the data mining community, for example, by Kolda and Bader \cite{kolda2009tensor}, Bro \cite{bro1997parafac}, and Papalexakis et al. \cite{papalexakis2016tensors}. Regardless of recent development on temporal data through classic tensor decomposition approaches \cite{austin2016parallel,gujral2018sambaten,nion2009adaptive,SunITA,zhou2016accelerating}, there are certain instances \cite{ho2014limestone,ho2014marble} wherein time modeling is difficult for the regular tensor factorization methods, due to either data irregularity or time-shifted latent factor appearance as shown in Figure \ref{spadefig:spade}. The PARAFAC2 decomposition, proposed by Harshman \cite{harshman1972parafac2}, is another alternative to the PARAFAC1 (CP) model. PARAFAC2 can easily handle sub-matrices of dynamic length as opposed to the PARAFAC1 model which requires fixed data length for which no time-alignment is necessary. The PARAFAC2 model showed a remarkable ability because: a) The actual structure of each slice or sub-matrix is well approximated without any additional parameters. b) As the features (tutorials, movies, etc.), i.e., 2nd mode, of the tensor are uniform across all slices, PARAFAC2 is able to extract the common characteristics by employing such uniformity, which is important in the click-stream problem. c) The PARAFAC2 allows one mode to be irregular (See Figure \ref{spadefig:spade}) that is particularly suitable for chromatographic data  \cite{amigo2010comprehensive,skov2008multiblock} and electronic health records\cite{perros2017spartan}. d) Similar to the PARAFAC1/CP decomposition method, the PARAFAC2 method provides unique solutions under certain mild assumptions \cite{ten1996some,stegeman2016multi}, but this model is more loosely constrained and hence, provides more relevant details than PARAFAC1/CP\cite{kiers1999parafac2}.
\begin{figure}[!ht]
		\begin{center}
		\includegraphics[clip,trim=0.8cm 6.5cm 0.6cm 7.4cm,width = 0.7\textwidth]{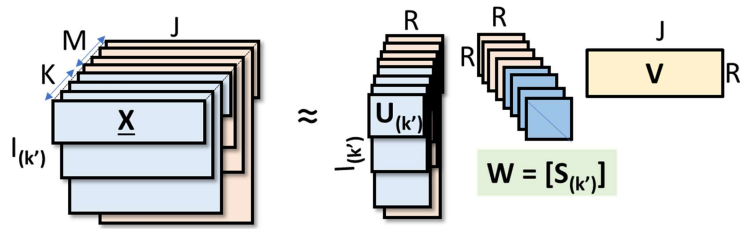}
		\caption{An illustration of the tensor decomposition on streaming PARAFAC2 data.}
		\label{spadefig:spade}
	\end{center}
	\vspace{-0.27in}
\end{figure}

In the era of information explosion, the data of diverse variety is generated or modified in large volumes. In many cases, data may be added or removed from any of the dimensions with high velocity. When using tensors to represent this dynamically changing data, an instance of the problem is of the form of a ``streaming'', ``incremental'', or ``online'' tensors. Considering an example of Electronic Health Records \cite{perros2017spartan} data as shown in Figure \ref{spadefig:spade}, where we have $K$ number of subjects for which we observe $J$ features and we permit each $k^{th}$ subject to have $I_k$ observations. As time grows, a number of subjects $M$ is added with more or fewer observations. Each such subject is a new incoming slice(s) to the tensor $\mathbf{X}_k$ on its $N^{th}$ mode, which is seen as a streaming update. Additionally, the tensor may be growing in all of its $N$-modes, especially in complex and evolving environments such as online social networks. As shown in Figure \ref{spadefig:spade}, PARAFAC2 approximates entire data (old + new) as: $\mathbf{X}_{k} \approx \mathbf{U}_{k^{'}}  \mathbf{S}_{k^{'}} \mathbf{V}^T $, where $k^{'} \in [1, (K + M)]$, $\mathbf{U}_{k^{'}} \in \mathbb{R}^{I_{k^{'}}  \times R}$, $\mathbf{S}_{k^{'}} $ is a diagonal $R \times R$ , $\mathbf{V} \in \mathbb{R}^{J \times R}$ and $R$ is the target rank of the decomposition.

Streaming PARAFAC2 decomposition is a challenging task due to the following reasons. First, to fit the PARAFAC2 model, alternating least squares (ALS) is commonly used. For 3-mode tensor, the estimations of all the three modes are done alternatively and iteratively until no significant changes are observed or local minimum solution is achieved and thus its main drawback is very slow convergence for large datasets that is often observed due to expensive recalculations for the entire tensor. Second, for PARAFAC2 tensor data, any pre-processing to accumulate across any mode may lose significant information. Third, maintaining high-accuracy (competitive to decomposing the full tensor) using significantly fewer computations than the full decomposition calls for innovative and, ideally, sub-linear approaches. Lastly, operating on the full ambient data space, as the tensor is being updated online, leads to an increase in time and space complexity, rendering such approaches is hard to scale, and thus calls for efficient methods that work on memory spaces which are significantly smaller than the original ambient data dimensions.

To handle the above challenges, in this paper, we propose a method to decompose online or incremental tensors based on PARAFAC2 decomposition. Our goal is, given an already computed PARAFAC2 decomposition, to {\em track} the PARAFAC2 decomposition of an online tensor, as it receives streaming updates, 1) {\em efficiently},  being much faster than re-computing the entire decomposition from scratch after every update, and utilizing smaller amount of memory, and 2) {\em accurately}, incurring an approximation error that is as close as possible to the decomposition of the full tensor. Answering the above questions, we propose \spade (\textbf{S}treaming \textbf{PA}RAFAC2 \textbf{DE}compistion) framework. Our \spade achieves the best of both worlds in terms of speed and memory efficiency: a) it is faster than a highly-optimized baseline in all cases considered for both real (Figure  \ref{figspade:Movielensesults}) and synthetic (Table \ref{tblspade:mean_loss}, \ref{tblspade:mean_time}, \ref{tblspade:mean_memory}) datasets, achieving up to $10-23\times$ performance gain; b) at the same time, \spade is more scalable, in that it can execute in reasonable time for large problem instances when the baseline fails due to excessive memory consumption (Figure \ref{spadefig:ScalabilityR}). For exposition purposes, we focus on the streaming scenario, where a 3-mode tensor grows on the third mode, however, our work extends to cases where more than one modes are online.

To the best of our knowledge, no work has assessed streaming or online PARAFAC2 for large-scale dense/sparse data, as well as the challenges arising by doing so. Our \textbf{contributions} are summarized as follows:
\begin{itemize}
	\item {\bf Novel Scalable Online Algorithm}: We introduce \spade, a scalable and effective algorithm for tracking the PARAFAC2 decompositions of online tensors that admits an efficient parallel implementation. We do not limit to 3-mode tensors, our algorithm can easily handle higher-order tensor decompositions. We make our Matlab implementation publicly available on the link{\footnote{\spadecodeurl}}.
	\item {\bf Extensive Evaluation} We evaluate the scalability of \spade using datasets originating from two different application domains, namely a sequential user viewing patterns dataset by Adobe\hide{let's call it a sequential user viewing patterns dataset by Adobe} and a time-evolving movie ratings dataset, which is publicly available. Additionally, we perform extensive synthetic data experiments.
    \item {\bf Real-world case study} We performed a case study of applying \spade on the dataset by Adobe which consists of a sequence of tutorials watched by $\approx 7 Million$ users \hide{copy this description to the previous bullet}. The communities discovered were evaluated by an expert from Adobe.
 \end{itemize}

\section{Proposed Method: SPADE}
\label{spade:method}
In this section, we introduce our proposed method for tracking the
PARAFAC2 decomposition of data in an incremental setting. For presentation purposes, initially a 3-mode irregular tensor case will be discussed. Then, we further present our proposed method to handle higher mode tensors. Here, we assume that only last mode of a tensor is increasing over time and other modes remain unchanged over time. Formally, the problem that we solve is the following:

\begin{mdframed}[linecolor=red!60!black,backgroundcolor=gray!20,linewidth=1pt,    topline=true,rightline=true, leftline=true] 
{\bf Given} (a) an existing set of PARAFAC2 decomposition i.e. $\mathbf{U}_{old}, \mathbf{V}_{old} $ and $\mathbf{W}_{old} $ factor matrices, having $R$ latent components, that approximate tensor $\mathbf{X}_{old} \in \mathbb{R}^{I_k \times J}$ at time \textit{t} for $k \in [1,\dots, K]$ , (b) new incoming slice (s) in form of tensor $\mathbf{X}_{new} \in \mathbb{R}^{I_n \times J}$ at any time $ \Delta t$ for $n \in [1,\dots, N]$, \\
{\bf Find} updates of $\mathbf{U}_{{new}}, \mathbf{V}_{new} $ and $\mathbf{W}_{new}$ {\bf incrementally} to approximate PARAFAC2 tensor $\mathbf{X} \in \mathbb{R}^{ I_{(kn)} \times J}$ for $kn \in [1,\dots, K+N]$ after appending new slice(s) at $t=t_1+\Delta t$ in last mode while maintaining a comparable accuracy with running the full PARAFAC2 decompositon on the entire updated tensor $\tensor{X}$.
\end{mdframed}

\subsection{The Principle of \spade}
To address the online PARAFAC2 problem, \spade follows the same alternating update schema as ALS, such that only one factor matrix is updated at one time by fixing all others.

{\bf{Assumptions}}:
\begin{itemize}
	\item  The factor matrices $\mathbf{U}_{old}, \mathbf{V}_{old} $ and $\mathbf{W}_{old} $ for old data ($\tensor{X}_{old}$) at time stamp $t_1$ is available.  $\mathbf{U}_{old}$ is obtained using product of $\mathbf{Q}_{k}$ and $\mathbf{H}_{old} $.
	\item  We have pre-existing supporting matrices $\mathbf{L}_{old} $ and $\mathbf{M}_{old} $ from old data.
	\item There is no rank or concept drift\cite{pasricha2018identifying} in the data.  
\end{itemize}
\subsubsection{CP “slice-wise” tensor $\tensor{Y}$ formulation}
Initiate $\mathbf{W}_{rand}$  random linear combinations of columns of the existing $\mathbf{W}_{old}$ factor matrix. To formulate CP tensor $\tensor{Y}$, we obtain SVD of existing factors and new incoming data as follows: \begin{equation}
\label{eqspade:svd}
\begin{aligned}
[\mathbf{P}_n, \Sigma_n, \mathbf{Z}_n] = 
& SVD[\mathbf{H}_{old}  \times diag(\mathbf{W}_{rand}(n,:)) \\
& \times (\mathbf{X}_{new_{n}}  \times \mathbf{V}_{old})^T]
\end{aligned}
\end{equation}
Given the SVD of above equation, the minimum of Eq. (\ref{eq:parafac2}) over left-orthonormal $\mathbf{Q}_n$ is given by $\mathbf{Q}_n = \mathbf{Z}_n\mathbf{P}_n^T$. This equivalence implies that minimizing the objective Equ. (\ref{eq:parafac2_2}) is achieved by executing the decomposition on a tensor  $\tensor{Y} = \mathbf{Q}_n^T\mathbf{X}_{new}(n)  \in \mathbb{R}^{R \times J \times N}$ with frontal slices as : 
\begin{equation}
\label{eqspade:als}
\mathcal{LS}= \argmin\frac{1}{2} ||\tensor{Y}-[\![\mathbf{H}_{new}; \mathbf{V}_{new} ; \mathbf{W}_{new}]\!] ||_F^2
\end{equation}

 Note that our loading or factor matrices for CP tensor decomposition in Eq. \ref{eqspade:als} are: $\mathbf{H}_{new} \in \mathbb{R}^{R \times R}, \mathbf{V}_{new} \in \mathbb{R}^{J \times R}$ and $\mathbf{W}_{new} \in \mathbb{R}^{N \times R}$.
\subsubsection{Initialization of Supporting Matrices}
$\mathbf{M}$ can be initialized as MTTKRP \cite{kolda2009tensor,perros2017spartan}\hide{have we defined MTTKRP? let's put it in the notations and add references to Tammy's paper and Spartan} w.r.t $1^{st}$ and $2^{nd}$ mode of tensor. We use the notation $\mathbf{M}^{(i)}$ to denote the MTTKRP corresponding to the $i^{th}$ tensor mode. For example, for mode-1, it is computed as $\mathbf{M}^{(1)} = \tensor{Y}^{(1)}*(\mathbf{V}  \odot \mathbf{W} )^{T}$ and so on. Similarly, supporting matrix $\mathbf{L}$ for mode 'n' can be computed as Hadmard product of all factor matrices expect the n-mode factor matrix. For example, for mode-1 , it can be written as $\mathbf{L}^{(1)} = (\mathbf{W}^T \mathbf{W} * \mathbf{V}^T\mathbf{V})$ and so on.
\subsubsection{Update temporal factor}
Consider first the update of factor $\mathbf{W}^{'}$ obtained after fixing $\mathbf{V}_{old}$ and $\mathbf{H}_{old}$, and solving the corresponding minimization in Equ \ref{eqspade:updateW}.
\begin{equation}
\label{eqspade:updateW}
 \argmin_{\mathbf{W}}\frac{1}{2}||\tensor{Y}_{new}^{(3)} -\mathbf{W}_{rand}(\mathbf{V}_{old} \odot \mathbf{H}_{old})^{T})||^F_2
\end{equation}

where $\mathbf{W}_{rand} \in \mathbb{R}^{N \times R}$ is random linear combinations of columns of the existing $\mathbf{W}_{old}$ matrix. The $\widetilde{\mathbf{W}}$ can be obtained after minimizing above equation as : 
\begin{equation}
\label{eqspade:e1}
\begin{aligned}
 \widetilde{\mathbf{W}}
 & = \tensor{Y}^{(3)}_{new} * ((\mathbf{V}_{old} \odot \mathbf{H}_{old})^{T})^{\dagger} \\
 & = \tensor{Y}^{(3)}_{new} * (\mathbf{V}^{T}_{old} \mathbf{V}_{old} * \mathbf{H}^{T}_{old}\mathbf{H}_{old})^{\dagger}*(\mathbf{V}_{old} \odot \mathbf{H}_{old})  \end{aligned}
\end{equation}

As the term $(\mathbf{V}^{T}_{old} \mathbf{V}_{old} * \mathbf{H}^{T}_{old}\mathbf{H}_{old})$ is invertible, so its pseudo-inverse is its inverse and Equ (\ref{eqspade:e1}) can be written as: 
\begin{equation}
\label{eqspade:e2}
 \widetilde{\mathbf{W}}=\frac{\tensor{Y}^{(3)}_{new} *(\mathbf{V}_{old} \odot \mathbf{H}_{old}) }{(\mathbf{V}^{T}_{old} \mathbf{V}_{old} * \mathbf{H}^{T}_{old} \mathbf{H}_{old})} 
\end{equation}
The existing MTTKRP is expensive process \cite{perros2017spartan,zhou2018online}. Thus the {\em accelerated MTTKRP} \cite{perros2017spartan} regarding the mode-3 could be written as the inner product between the corresponding $r^{th}$ columns of $\mathbf{H}_{old}$ and [$\tensor{Y}^{(3)}_{new}$ $\mathbf{V}_{old}$], respectively. Thus, in order to retrieve a row $\mathbf{M}^{(3)}(n, :)$, we can simply operate as:
\begin{equation}
\mathbf{M}^{(3)}(n, :) = dot (\mathbf{H}_{old}, \tensor{Y}^{(3)}_{new}\mathbf{V}_{old}) 
\end{equation}
The arising sub-problem, after manipulation can be re-written as:
\begin{equation}
\widetilde{\mathbf{W}} = \frac{\mathbf{M}_{new}^{(3)}}{(\mathbf{V}^{T}_{old}  \mathbf{V}_{old} * \mathbf{H}^{T}_{old} \mathbf{H}_{old}))}
\end{equation}
 $\mathbf{W}_{new}$ is updated by appending the projection $\mathbf{W}_{old}$ of previous time stamp, to $\widetilde{\mathbf{W}}$ of new time stamp, i.e.,
 \begin{equation}
\mathbf{W}_{new} = \begin{bmatrix}
 \mathbf{W}_{old}\\  \frac{\mathbf{M}_{new}^{(3)}}{(\mathbf{V}^{T}_{old}  \mathbf{V}_{old} * \mathbf{H}^{T}_{old} \mathbf{H}_{old}))}  \end{bmatrix} =\begin{bmatrix}
 \mathbf{W}_{old}\\  \widetilde{\mathbf{W}}\\ \end{bmatrix} 
 \end{equation}
 where the MTTKRP, $\mathbf{M}_{new}^{(3)}$ is parallelizable and is efficiently calculated in linear complexity to the number of non-zeros in $\tensor{Y}$.\\
\begin{algorithm2e}[H]
		\caption{\spade Update Framework}
        \label{algspade:method}
			\KwData{ $\tensor{X}_{new} \in  \mathbb{R}^{I_k \times J_2 \times \dots \times J_{N-1}} \quad \forall{k=[1,K]}$, old data factors $(\mathbf{U},\mathbf{A}^{(1)}, \mathbf{A}^{(2)},\dots, \mathbf{A}^{(N-1)}, \mathbf{A}^{(N)})$, supporting matrices [$\mathbf{L}_{old}$ , $\mathbf{M}_{old}$] , Rank $R$.}
			\KwResult{ Updated factor matrices $(\mathbf{U},\mathbf{A}^{(1)}, \mathbf{A}^{(2)},\dots, \mathbf{A}^{(N-1)}, \mathbf{A}^{(N)})$}
		      Initialize $\mathbf{A}^{(N)}_{rand}   \in \mathbb{R}^{K \times R}$ from $\mathbf{A}^{(N)}_{old} $\\
		     \For { $k \leftarrow 1$ to $K$}{
			 \ssmall{[$\mathbf{P}_{k}, \Sigma_k, \mathbf{Z}_{k}] \leftarrow$ SVD($ \mathbf{A}^{(1)} \mathbf{A}^{(N)}_{rand}(k) \tensor{X}_{new_{k}}^T \odot_{i=2}^{N-1}\mathbf{A}^{(i)} $)} with Rank R.\\
			 $\mathbf{Q}_k = \mathbf{Z}_k\mathbf{P}_k^T$\\
			$\tensor{Y}_k = \mathbf{Q}_k^T\tensor{X}_{new_{k}}$  
			}
			\text{\color{blue}Update temporal modes of CP tensor $\tensor{Y}$} $\mathbf{A}^{(N)} = \begin{bmatrix}
 \mathbf{A}^{(N)}_{old}\\  \mathbf{A}^{(N)}_{new}\\ \end{bmatrix} = \begin{bmatrix}  \mathbf{A}^{(N)}_{old}\\  \frac{\mathbf{M}^{(N)}}{\oast_{i=1}^{N-1}\mathbf{A}^{(i)} }\\  \end{bmatrix}   \quad \forall{i \in [1,N]}  $ \\
     		\For {$n \leftarrow 1$ to $N-1$}{
			\text{\color{blue}Update other modes of CP tensor $\tensor{Y}$} $\mathbf{A}^{(i)} =\frac{\mathbf{M}_{old}^{(i)}  +  \odot_{i \ne n}^{N}\mathbf{A}^{(i)} }{  \mathbf{L}_{old}^{(i)}+ \oast_{i \ne n}^{N}\mathbf{A}^{(i)} } \quad \forall{i \in [1,N]}  $
			}
		  \text{\color{blue}Update first mode of PARAFAC2 tensor $\tensor{X}_{new}$} $\mathbf{U}=  \begin{bmatrix}
 \mathbf{U}_{old}\\
 \mathbf{Q}_n * \mathbf{A}^{(1)} \\  \end{bmatrix} $ \\
			\KwRet{ Updated $(\mathbf{U},\mathbf{A}^{(1)}, \mathbf{A}^{(2)},\dots, \mathbf{A}^{(N-1)}, \mathbf{A}^{(N)})$}
\end{algorithm2e}
\subsubsection{Update factor non-temporal factors}
We update $\mathbf{H}_{new}$ by fixing $\mathbf{V}_{old}$ and $\mathbf{W}_{new}$. We set derivative of the loss $\mathcal{LS}$ w.r.t. $\mathbf{H}$ to zero to find local minima as  :
\begin{equation}
\label{eqspade:rls}
\frac{\delta ( [\tensor{Y}^{(1)}_{new} -\mathbf{H}_{new}(\mathbf{W}_{new} \odot \mathbf{V}_{old})^{T}]}{\delta H_{new}}=0
\end{equation}
By solving above equation, we obtain:
\begin{equation}
\small
 \begin{aligned}
\mathbf{H}_{new} = &\frac{\tensor{Y}^{(1)}*(\mathbf{W}_{new} \odot \mathbf{V}_{old})^{T} }{ (\mathbf{W}_{new} \odot \mathbf{V}_{old})^{T}(\mathbf{W}_{new} \odot \mathbf{V}_{old})}\\
 & =\frac{\tensor{Y}_{old}^{(1)}*(\mathbf{V}_{old} \odot \mathbf{W}_{old})^{T}  + \tensor{Y}_{new}^{(1)}*( \widetilde{\mathbf{W}} \odot \mathbf{V}_{old})^{T}}{ (\mathbf{W}^{T}_{old} \mathbf{W}_{old} * \mathbf{V}^{T}_{old}\mathbf{V}_{old})+(\widetilde{\mathbf{W}}^T\widetilde{\mathbf{W}} * \mathbf{V}^{T}_{old}\mathbf{V}_{old})} \\
& =\frac{\mathbf{M}_{old}^{(1)}  +    \tensor{Y}_{new}^{(1)}*(\widetilde{\mathbf{W}} \odot \mathbf{V}_{old})^{T}}{  \mathbf{L}_{old}^{(1)}+(\widetilde{\mathbf{W}}^T \widetilde{\mathbf{W}}* \mathbf{V}^{T}_{old}\mathbf{V}_{old})} 
 \end{aligned}
\end{equation}
As classic MTTKRP  $\tensor{Y}_{new}^{(1)}*( \widetilde{\mathbf{W}} \odot \mathbf{V}_{old}  )^{T}$ is expensive, therefore modified and accelerated MTTKRP can be expressed as a summation of block matrix multiplications:
\begin{equation}
\mathbf{M}^{(1)}_{new} = \sum_{n = 0}^N \tensor{Y}_{new} \mathbf{T}_n=(\tensor{Y}_{new}\mathbf{V}_{old})*\mathbf{W}^{'}
\end{equation}
where $\mathbf{T}_n$ is $n^{th}$ vertical block of the Khatri Rao Product $(\mathbf{V}_{old} \odot \mathbf{W}^{'})$. The above computation can be easily parallelized over $N$ independent sub-problems and summing the partial results. Other efficient way is by computing the slice wise matrix product $\tensor{Y}_{new}\mathbf{V}_{old}$ and for each row of the intermediate result of size $\mathbb{R}^{R \times R}$, we compute the Hadamard product with $\mathbf{W}^{'}(n,:)$ as described in \cite{perros2017spartan}. Hence $\mathbf{H}_{new}$ can be updated as :
\begin{equation}
\mathbf{H}_{new} =\frac{\mathbf{M}_{new}^{(1)}}{\mathbf{L}_{new}^{(1)}}=\frac{\mathbf{M}^{(1)}_{old}  +    (\tensor{Y}_{new}\mathbf{V}_{old})*\mathbf{W}^{'} }{  \mathbf{L}^{(1)}_{old}+(\mathbf{W}^{'^{T}}\mathbf{W}^{'} * \mathbf{V}_{old}^T\mathbf{V}_{old})} 
\end{equation}

In this way, the factor update equation and supporting matrices update consist of two parts: the historical part; and the new data part that makes computation fast.

Similarly, $\mathbf{V}_{new}$ can be updated with accelerated MTTKRP for mode-2 as  $\mathbf{M}^{(2)} = \sum_{n = 0}^N \tensor{Y}_{new}^T \mathbf{T}_n$ as :
\begin{equation}
\mathbf{V}_{new} =\frac{\mathbf{M}_{new}^{(2)}}{\mathbf{L}_{new}^{(2)}}=\frac{\mathbf{M}_{old}^{(2)}  +    (\tensor{Y}_{new}^T\mathbf{H}_{new})*\mathbf{W}^{'} }{  \mathbf{L}_{old}^{(2)}+(\mathbf{H}_{new}\mathbf{H}_{new}  * \mathbf{W}^{'^{T}}\mathbf{W}^{'})}
\end{equation}
\subsubsection{Update factor U} 
Finally, we update mode-1 factor of PARAFAC2 tensor $\tensor{X}_{new}$ by appending the projection $\mathbf{U}_{old}$ of previous time step, to $\widetilde{\mathbf{U}}$ obtained from factor matrix $\mathbf{H}_{new}$ and $\mathbf{Q}_n$ as given below:
\begin{equation}
\mathbf{U}_{new}=  \begin{bmatrix}
 \mathbf{U}_{old}\\ \mathbf{Q}_n * \mathbf{H}_{new} \\  \end{bmatrix}  
  \end{equation}
\textbf{Summary}: Our proposed algorithm, \spade, consist of three parts: First, it obtains slice wise CP tensor $\tensor{Y}$ from incoming tensor data $\tensor{X}_{new}$ using existing non-temporal factor or loading matrices i.e $\mathbf{H}_{old}$ and $\mathbf{V}_{old}$ and matrix created from random linear combinations of columns of the existing temporal factor matrix $\mathbf{W}_{old}$, Second, we initialize two small set of supporting matrices {$\mathbf{L}_{old}$} and {$\mathbf{M}_{old}$}
with the old tensor and its factors by using {\em accelerated MTTKRP}\cite{perros2017spartan}. In last step, the new incoming tensor data $\tensor{X}_{new}$ is processed by our proposed effective, parallel and fast incremental update method presented in Algorithm \ref{algspade:method}.  
\subsection{Extending to Higher-Order Tensors}
We now show how our approach is extended to higher-order cases. Consider N-mode tensor $\tensor{X}_{old} \in \mathbb{R}^{I_m \times J_2 \times \dots \times J_{N-1} }$. The factor matrices are $(\mathbf{U}_{old},\mathbf{A}^{(1)}_{old}, \mathbf{A}^{(2)}_{old},\dots, \mathbf{A}^{(N-1)}_{old}, \mathbf{A}^{(T_1)}_{old})$ for PARAFC2 decomposition with $N^{th}$ mode as new incoming data. A new tensor $\tensor{X}_{new} \in \mathbb{R}^{I_k \times J_2 \times \dots \times J_{N-1}}$ is added to $\tensor{X}_{old}$ to form new tensor of $\mathbb{R}^{I_{t} \times J_2 \times \dots  \times J_{N-1}}$ where $t = k + m$.  In addition, supporting matrices $\mathbf{L}^{(1)},..., \mathbf{L}^{(N-1)}$ and $\mathbf{M}^{(1)},..., \mathbf{M}^{(N-1)}$ are stored, where $\mathbf{M}^{(n)}$  and $\mathbf{L}^{(n)}$, $n \in [1,N-1]$ are the supporting matrices for mode $n$. Additionally, the Khatri-Rao and Hadamard products of a sequence of N matrices are denoted by $\odot_{i \ne n}^{N}\mathbf{A}^{(i)}$ and $\oast_{i=1}^{N-1}\mathbf{A}^{(i)}$, respectively. The subscript $i \ne n$ indicated the $n^{th}$ matrix is not included in the operation.

The Temporal mode can be updated as :
\begin{equation}
\mathbf{A}^{(N)} = \begin{bmatrix}
 \mathbf{A}^{(N)}_{old}\\  \mathbf{A}^{(N)}_{new}\\ 
 \end{bmatrix} = 
 \begin{bmatrix}
 \mathbf{A}^{(N)}_{old}\\  
 \frac{\mathbf{M}^{(N)}}{\oast_{i=1}^{N-1}\mathbf{A}^{(i)} }\\  
 \end{bmatrix} 
 \end{equation}
 
The Non-Temporal modes can be updated as:
\begin{equation}
\mathbf{A}^{(i)} =\frac{\mathbf{M}_{old}^{(n)}  +  \odot_{i \ne n}^{N}\mathbf{A}^{(i)} }{  \mathbf{L}_{old}^{(n)}+ \oast_{i \ne n}^{N}\mathbf{A}^{(i)} }
\end{equation}
where $i \in [1,N-1]$. 

The dynamic or first mode of PARAFAC2 tensor can be updated as:
\begin{equation}
\mathbf{U}=  \begin{bmatrix}
 \mathbf{U}_{old}\\
 \mathbf{Q}_n * \mathbf{A}^{(1)} \\  \end{bmatrix}  
  \end{equation}

We obtain the general version of update rule of our \spade for N-mode tensor, as presented in Algorithm \ref{algspade:method}. \textbf{NOTE}: Line 3-5 can be executed in parallel.

\section{Experiments}
\label{secspade:experiments}
In this section we extensively evaluate the performance of \spade on five synthetic and two real datasets, and compare its performance with state-of-the-art approaches.\hide{ We implemented \spade in Matlab using the functionality of the Tensor Toolbox \cite{tensortoolbox} which supports efficient computations for sparse tensors.} Note that all comparisons were carried out over 5 iterations each, and each number reported is an average with its standard deviation attached to it. 
\subsection{Experimental Setup}
\subsubsection{Synthetic Data Generation} The specifications of each synthetic dataset are given in Table \ref{tblspade:dataset}. For all synthetic data we use rank $R = 15$. The entries of loading matrix $\mathbf{V}$ and $\mathbf{W}$ are Gaussian with unit variance, and orthogonality is imposed on factors $\mathbf{H}$ and  $\mathbf{Q}$, then a few entries are clipped to zero randomly to create a sparse PARAFAC2 tensor. The MATLAB script is provided on $link^{1}$. 
\begin{table}[t]
	\centering
	\small
	\begin{tabular}{|c||c|c|c|c|c|}
	\cline{1-6}
	\multirow{2}{*}{{\bf Dataset}}& \multicolumn{5}{|c|}{{\bf Statistics (K: Thousands M: Millions)}}
	    \\ 
	\cline{2-6}
	& {\bf $I_{max}$} & {\bf J}& {\bf \text{\em{K}}} & {\bf \text{\em{Batch}}} &{\bf \text{\em{\#nnz}}}\\	\hline
		I &$1K$&$3K$&$10K$&$500$&$3M$\\\hline
	    II &$2K$&$6K$&$50K$&$500$&$60M$\\\hline
		III &$5K$&$8K$&$100K$&$450$&$133M$\\\hline
	   IV&$5K$&$10K$&$500K$&$50$&$239M$\\\hline
		V &$10K$&$12K$&$1M$&$10$&$507M$\\\hline
	\hline
	ML &$21$&$28K$&$139K$&$1K$& $20M$ \\\hline
	Adobe &$2K$&$17K$&$6.8M$& $10K$ &$35M$ \\\hline
	\end{tabular}
	\caption{Details for the datasets. {\bf \text{\em{K}}} is the number of subjects, {\bf J} is the number of features, {\bf $I_{max}$} is the number of observations for the k-th subject and \#nnz corresponds to the total number of non-zeros.The rank of tensors is {\bf \text{\em{R} = 15}}. Density is between [$10^{-3}, 10^{-5}$]. \textbf{ML: MovieLens}}
 
	\label{tblspade:dataset} 
\end{table}
\begin{table}[t]
	\centering
	\small
	\begin{tabular}{cc@{}c@{}c@{}}
	\hline
	\multirow{1}{*}{{\bf SYN}}&  \multicolumn{1}{c}{{\bf PARAFAC2}} & \multicolumn{1}{c}{{\bf SPARTan}}
	& \multicolumn{1}{c}{{\bf  \spade }}   \\ 
\hline
	I &\textbf{1649.2 $\pm$  0.1}&$1649.9 \pm 0.1$&$1659.7\pm 0.1$\\ 
	II & \textbf{16601.3 $\pm$ 9.4} & $16611.3 \pm 7.3$& $16652.2 \pm 2.2$\\ 
	III &$4453.1 \pm 2.7$&\textbf{4443.9 $\pm$ 2.5}&$4454.1 \pm 3.7$\\ 
	IV &\reminder{OoM}&$3107.6\pm 3.3$&\textbf{3099.5 $\pm$ 3.2}\\ 
	V &\reminder{OoM}&\reminder{OoM} &\textbf{1777.6 $\pm$ 9.5}\\ 
	\hline
	\end{tabular}
		\caption{Mean LOSS over complete tensor data. The boldface means the best results.}
 
	\label{tblspade:mean_loss} 
\end{table}
\begin{table}[t]
	\centering
	\small
	\begin{tabular}{cccc}
	\hline
	\multirow{1}{*}{{\bf SYN}}&  \multicolumn{1}{c}{{\bf PARAFAC2}} & \multicolumn{1}{c}{{\bf SPARTan}}
	& \multicolumn{1}{c}{{\bf  \spade }}   \\
\hline
	I &$7.4 \pm 1.5$&$4.47 \pm 2.5$&\textbf{1.3$\pm$ 0.3}\\ 
	II &$725.7 \pm 9.2$&$290.8 \pm 2.3$&\textbf{21.9 $\pm$ 2.7}\\ 
	III &$1158.7 \pm 10.7$&$330.5 \pm 4.6$&\textbf{22.3 $\pm$ 1.4}\\ 
	IV &\reminder{OoM}&$672.4\pm 7.5$&\textbf{36.7$\pm$ 1.3}\\ 
	V &\reminder{OoM}&\reminder{OoM}&\textbf{144.7 $\pm$ 1.8}\\ 
	\hline
	\end{tabular}
		\caption{Mean CPU TIME (mins) over all batches of third-order datasets. The boldface means the best results.}
 
	\label{tblspade:mean_time} 
\end{table}
\begin{table}[t]
	\centering
	\small
	\begin{tabular}{cccc }
	\hline
	\multirow{1}{*}{{\bf SYN}}&  \multicolumn{1}{c}{{\bf PARAFAC2}} & \multicolumn{1}{c}{{\bf SPARTan}}
	& \multicolumn{1}{c}{{\bf  \spade }}   \\
\hline
	I &$1490.3\pm 0.1$&$393.9\pm 0.2$&\textbf{133.3 $\pm$  0.3}\\ 
	II &$21978.5\pm 0.1$&$16003.5\pm 0.3$&\textbf{1737.7$\pm$  0.2}\\ 
	III &$12813.7\pm 0.1$&$8835.3\pm 0.1$&\textbf{739.2$\pm$  0.4}\\ 
	IV &\reminder{OoM}&$10785.4\pm 0.1$&\textbf{622.1$\pm$  0.1}\\ 
	V &\reminder{OoM}&\reminder{OoM}&\textbf{545.4$\pm$  0.1}\\ \hline
	\end{tabular}
		\caption{Average MEMORY USAGE (MBytes) for decomposition. The boldface means the best results.}
			\vspace{-0.1in}
	\label{tblspade:mean_memory} 
\end{table}
\subsubsection{Real Data Description} We evaluate the performance of the proposed method \spade against the state-of-art methods for the real datasets as well. In our experiments, we include \textbf{MovieLens - 20M}\cite{harper2016movielens} and \textbf{Adobe dataset}. MovieLens-20M dataset is widely used in recent literature. For this dataset, we created tensor as year-by-movie-by-user i.e each year of ratings corresponds to a certain observation for each user's activity. Adobe dataset is sequential data and it consists of tutorial sequence of anonymous $7$ million users. The data is structured as sequence-by-tutorial-by-user. We have semi synthetic ground truth values for each user based on tutorial watched. 
\subsubsection{Evaluation Measures}
\label{secspade:EvaMeas}
We evaluate \spade and the baselines using three criteria: \textbf{approximation loss}, \textbf{CPU time} in minutes and \textbf{memory usage} in Megabytes. These measures provide a quantitative way to compare the performance of our method. For all criterias, lower is better.
\subsubsection{Baselines} In this experiment, two baselines have been used as the competitors to evaluate the performance.  \begin{itemize}
	\item  \textbf{PARAFAC2} \cite{kiers1999parafac2} : an implementation of standard fitting algorithm PARAFAC2 with random initialization.  
 	\item  \textbf{SPARTan} \cite{perros2017spartan}  : a scalable PARAFAC2 fitting algorithm was proposed for large and sparse data.  
 	\end{itemize}
\hide{ Note that there is no method in literature for online or incremental PARAFAC2 tensors. Hence, we compare our proposed method \spade against algorithms that decompose full tensor. }
Note: Our proposed method is natural extension of scalable PARAFAC2\cite{perros2017spartan}.
\hide{ \subsubsection{Implementations  and  Parallelism} To  enable  reproducibility and broaden the usage of the \spade,  our implementation is publicly available at $link^{1}$. For our all experiments, we used Intel(R) Xeon(R), CPU E5-2680v3 @ 2.50GHz machine with 48 CPU cores and 378GB RAM. We utilize the capabilities of Parallel Computing Toolbox  of  Matlab  by  activating  parallel  pool  both \spade and the baseline approach, whenever this is appropriate.}
\begin{table*}[t]
	\centering
	\small
	\begin{tabular}{c|lcccc }
	\cline{1-6}
	 &\multirow{2}{*}{{\bf Algorithm}}&  \multicolumn{2}{c}{{\bf MovieLens}} & \multicolumn{2}{c}{{\bf Adobe}} \\
	 &&   R=10 & R=50 &  R=10 & R=50 \\
	\cline{1-6}
\parbox[t]{2mm}{\multirow{2}{*}{\rotatebox[origin=c]{90}{Time}}}&  SPARTan&$257.8\pm 8.3$&$5673.4\pm 4.1$&\reminder{OoM}&\reminder{OoM}\\ 
	&\spade &$51.9\pm 3.3$&$609.6\pm 5.6$&$80.1\pm 4.6$&$824.9\pm 9.1$\\ 
    \hline 
    \parbox[t]{2mm}{\multirow{2}{*}{\rotatebox[origin=c]{90}{Mem.}}}& SPARTan &$66474.8\pm 0.1$&$250212\pm 0.1$&\reminder{OoM}&\reminder{OoM}\\ 
	&\spade &$2092.7\pm 34.9$&$10346.3\pm 23.7$&$690.7\pm 37.1$&$3409.3\pm 56.8$\\ 
\hline
	\end{tabular}
	 	\caption{The average and standard deviation of memory usage and time metric comparison on MovieLens and Adobe using two different target for five random initialization.}
	 	\vspace{-0.1in}
	\label{tblspade:realdata} 
\end{table*}
\begin{figure*}[!ht]
	\begin{center}
		\includegraphics[clip,trim=0cm 6cm 2cm 6.4cm,width =0.245\textwidth]{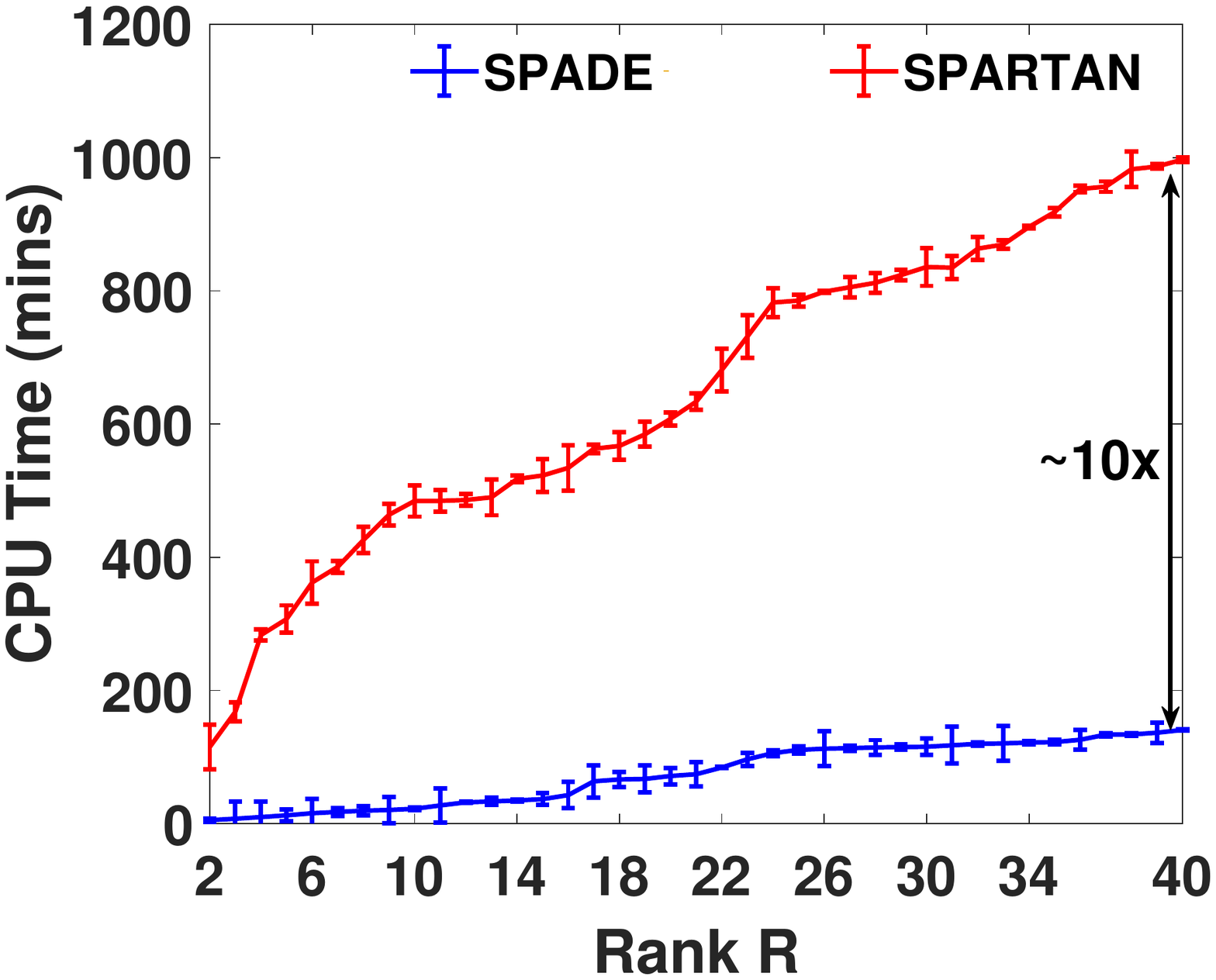}
		\includegraphics[clip,trim=0cm 6cm 1.8cm 6.2cm,width = 0.245\textwidth]{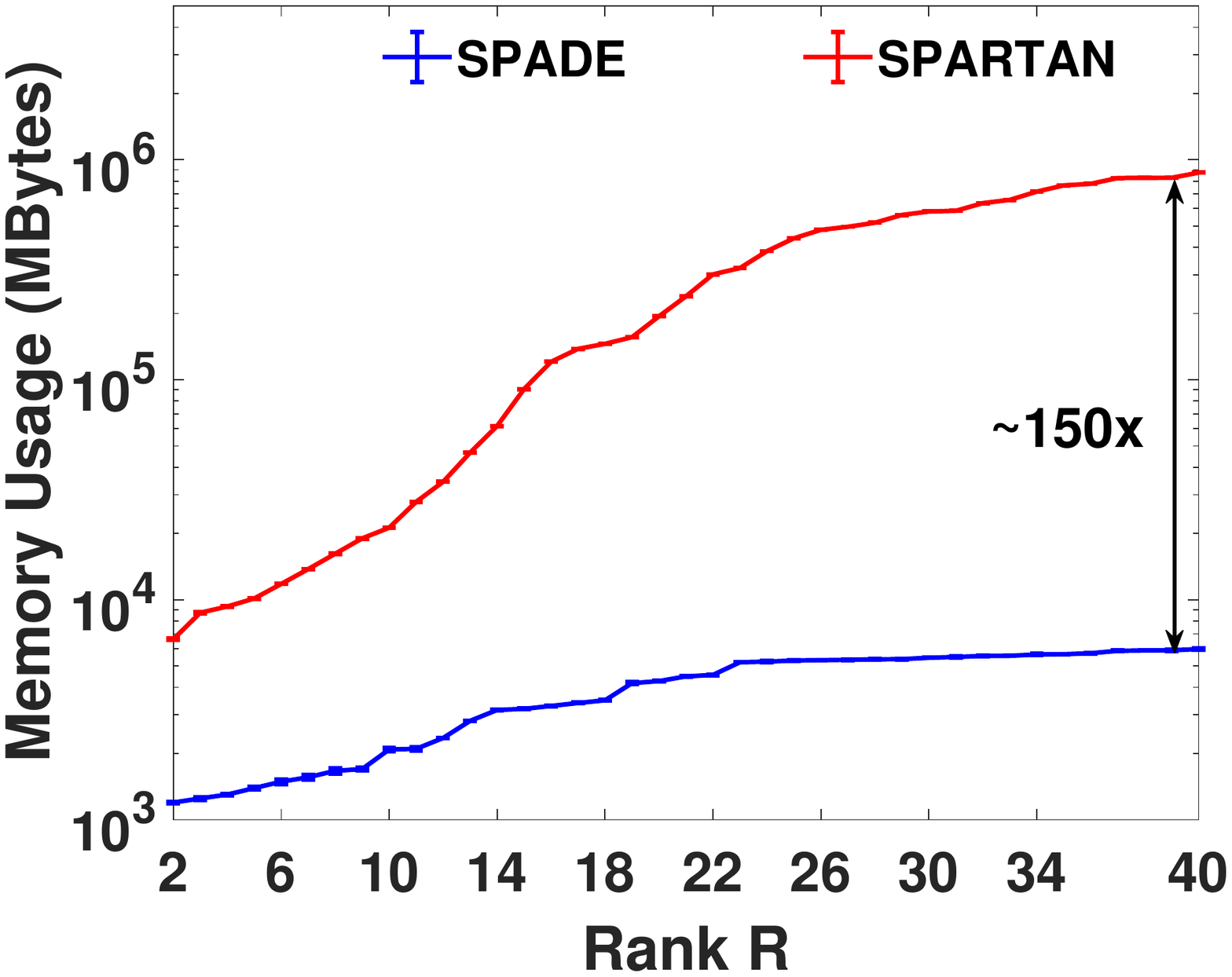}
		\includegraphics[clip,trim=0cm 5.6cm 2cm 6.5cm,width =0.245\textwidth]{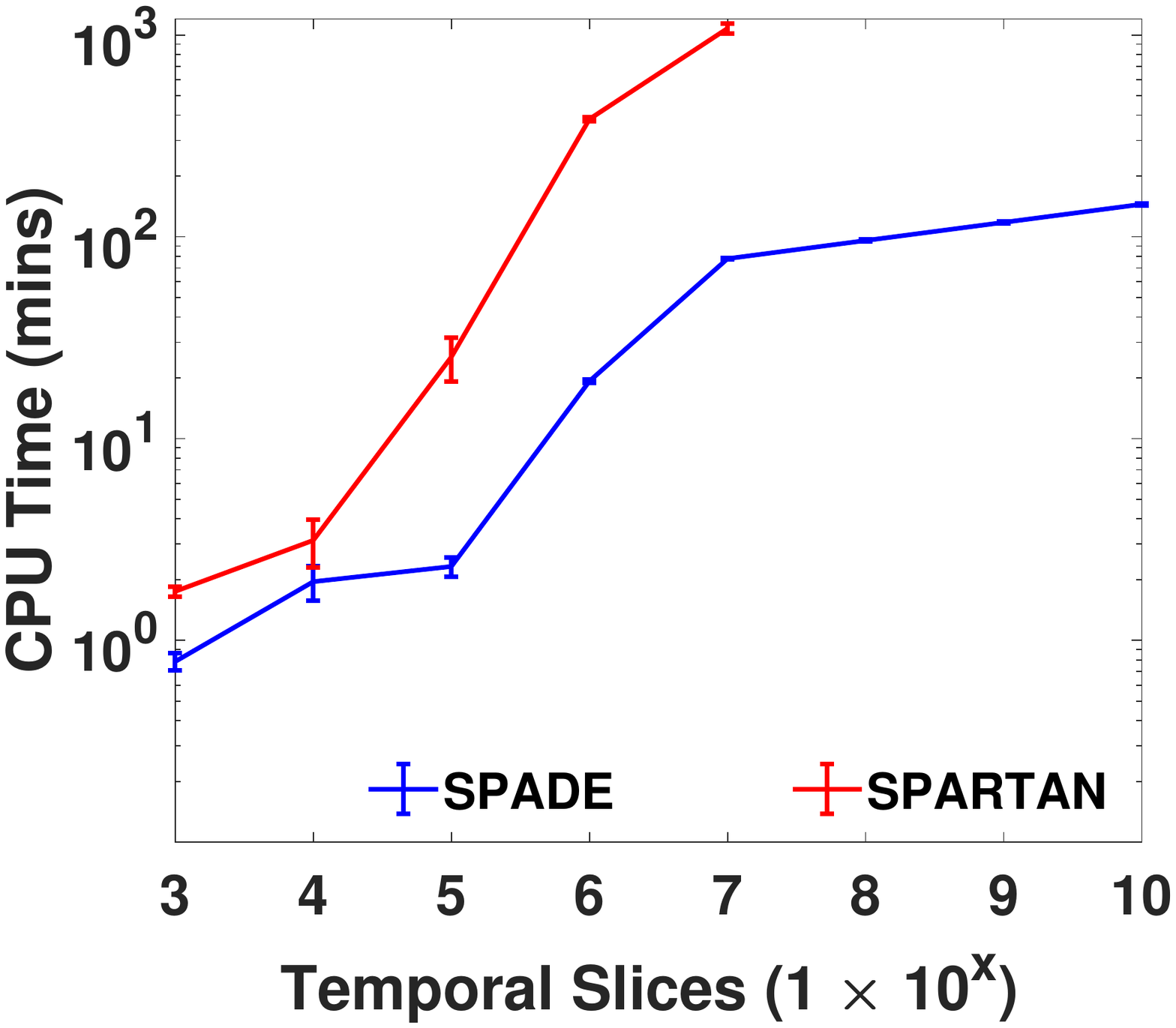}
		\includegraphics[clip,trim=0cm 5cm 1.8cm 6.5cm,width = 0.235\textwidth]{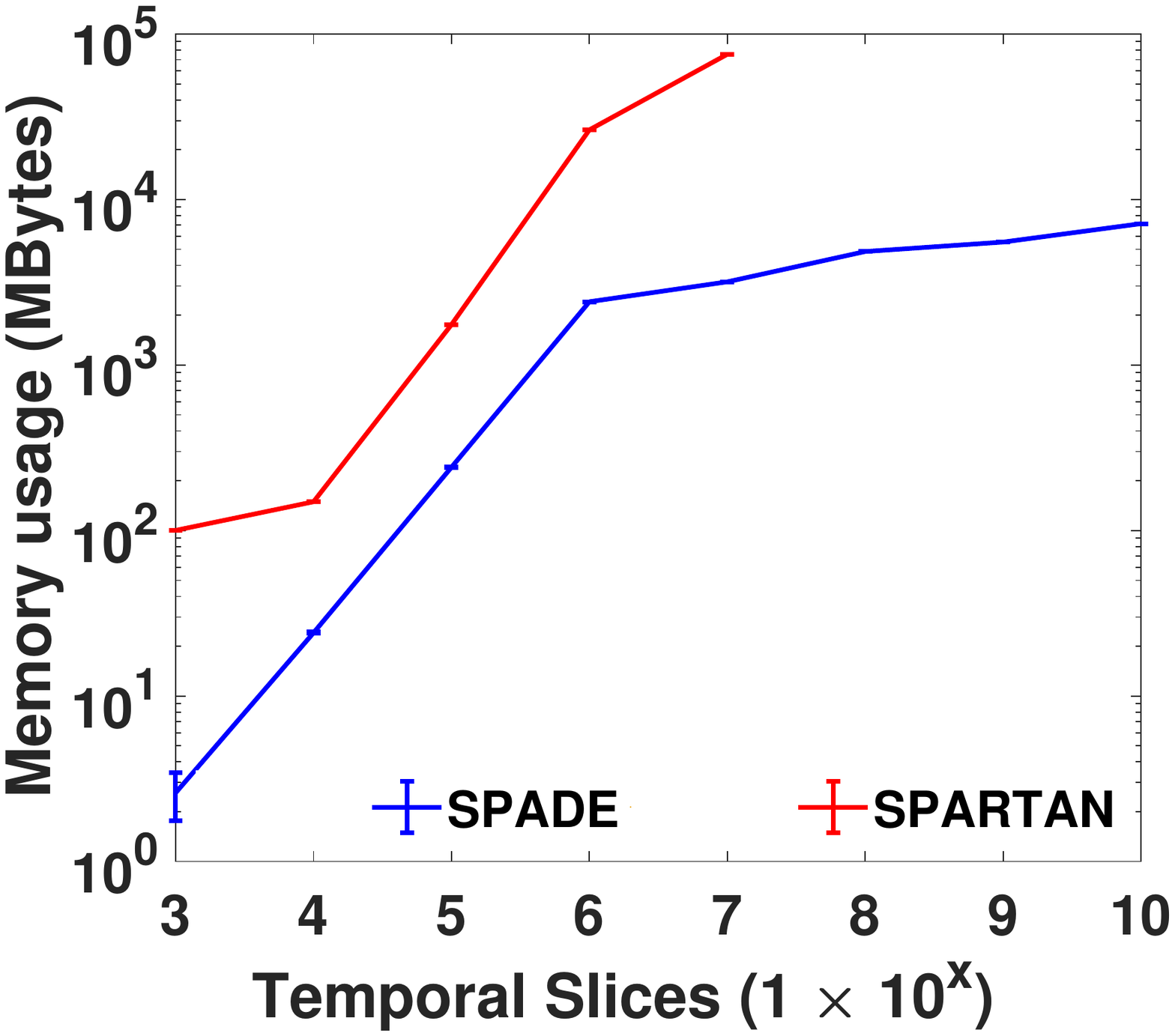}
		\caption{The average time in minutes and memory usage in MBytes for varying target rank ($1^{st}$,$2^{nd}$) of synthetic data of size ($1000\times 1000 \times 10^5$) and varying number of subjects(K) ($3^{rd}$, $4^{th}$) for synthetic data of size ($100 \times 100 \times [10^3, 10^{10}]$).}
		\label{spadefig:ScalabilityR}
	\end{center}
\end{figure*}
\subsection{\spade is fast and memory-efficient}
\subsubsection{Synthetic data results}
First, we remark that \spade is both more memory-efficient and faster than the state-of-art methods. In particular, the state-of-art methods fail to execute in the two largest problems i.e. SYN-VI and SYN-V for given target rank due to out of memory problems during the formation of the slice wise CP tensor $\tensor{Y}$. This improvement stems from the fact that state-of-art methods attempt to decompose in full tensor, whereas \spade can do the same in streaming mode, thus having higher immunity to large data volumes in short time interval and small memory space with comparable accuracy. 

From the results shown in Table (\ref{tblspade:mean_loss}), it can be concluded that, the \spade best performance is on par to the state-of-the-art (i.e. classic PARAFAC2 as well as scalable PARAFAC2 (SPARTAN)), algorithm best performance, in terms of approximation loss and \spade performances are \textbf{significantly better} in CPU time and memory usage as shown in Table (\ref{tblspade:mean_time}, \ref{tblspade:mean_memory}).  Most importantly, however, \spade performed very well on SYN-IV and SYN-V datasets, arguably the hardest of the five synthetic datasets we examined {\em where none of the baselines was able to run efficiently (under 10 hours)}. Overall, it is clear that the state-of-the-art approaches cannot fully handle the large data of size $15-by-12K-by-1Mil$ as it requires $> 1TB$ memory and surpasses the available storage capacity of our system. On the contrary, \spade properly performs for all the datasets considered in a reasonable amount of time, since it only operates directly on the incoming tensor slice(s). In particular when tested on SYN-IV dataset, for $R = 40$, \spade is up to $13\times$ faster than the state-of-the-art method. Even for a lower target rank of $R = 5$, \spade obtains up to $20\times$ faster computation. These results show that \spade is able to handle large dimensions in reasonable time and memory.
\subsubsection{Real data results}
We evaluate the performance of the proposed \spade approach against the baseline method for the real datasets as well. The empirical results in terms of CPU time and Memory usage is given in Table (\ref{tblspade:realdata}). There is no significant difference is observed in their effectiveness in terms of approx.loss. The baseline method has $1-2\%$ less approx.loss as compared to our proposed method. However, the time and memory saving with \spade is significant in case of Movielens data. For Adobe dataset, baseline is unable to create intermediate slice wise CP tensor of size $10 \times 17K \times 6.8Mil$. Our proposed method took only $< 14$ hours for computing factors for it. Our proposed algorithm, \spade, shows very promising results in speed and space utilization and showing that it is less sensitive to the size of the data, and thus, having better performance.
\subsubsection{Scalability Evaluation}
We also valuate the scalability of our algorithm on synthetic and real dataset. Firstly, a tensor $\tensor{X} \in \mathbb{R}^{1000 \times 1000 \times 10^5}$ is decomposed with increasing target rank.The baseline approach consumes more time as we increase the target rank as shown in Figure \ref{spadefig:ScalabilityR} ($1^{st}$, $2^{nd}$). On the contrary, the time needed by \spade increases very moderately. Overall, our method achieves up to $10\times$ gain regarding the time required and $150\times$ gain over memory usage. Second, we create a tensor $\tensor{X} \in \mathbb{R}^{100 \times 100 \times 10^{10}}$ of small slice size but long $3^{rd}$ dimension. We decomposed it using fixed target rank $R=5$. The baseline method runs upto $10^7$ slices and runs out of memory for further data. However, our proposed method, successfully decomposed the tensor in reasonable time as shown in Figure \ref{spadefig:ScalabilityR} ($3^{rd}$, $4^{th}$).
 \begin{figure*}[!ht]
	\begin{center}
			\includegraphics[clip,trim=0cm 4cm 0cm 5.4cm,width =0.25\textwidth]{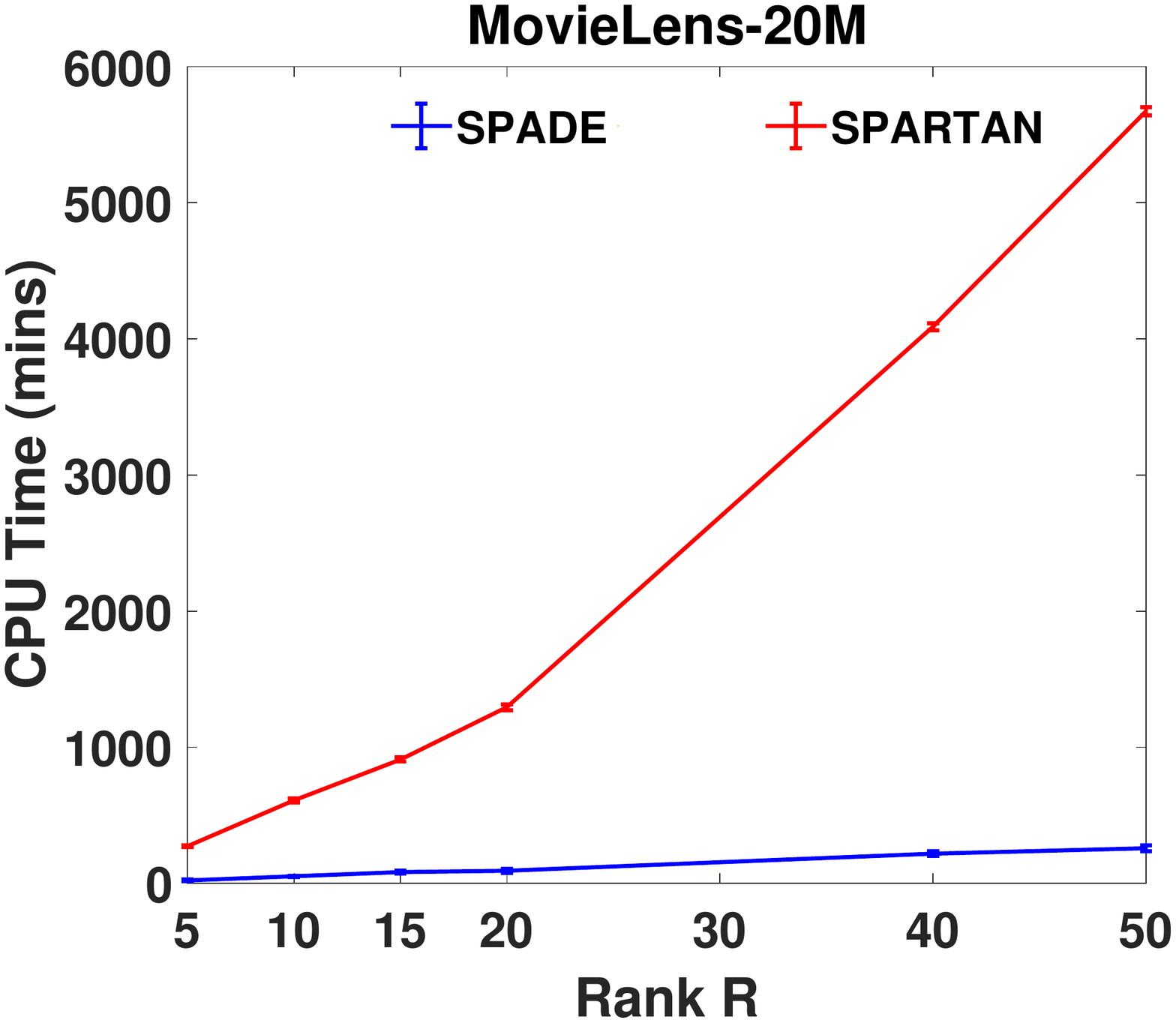}
		\includegraphics[clip,trim=0cm 4.5cm 0cm 5.2cm,width = 0.24\textwidth]{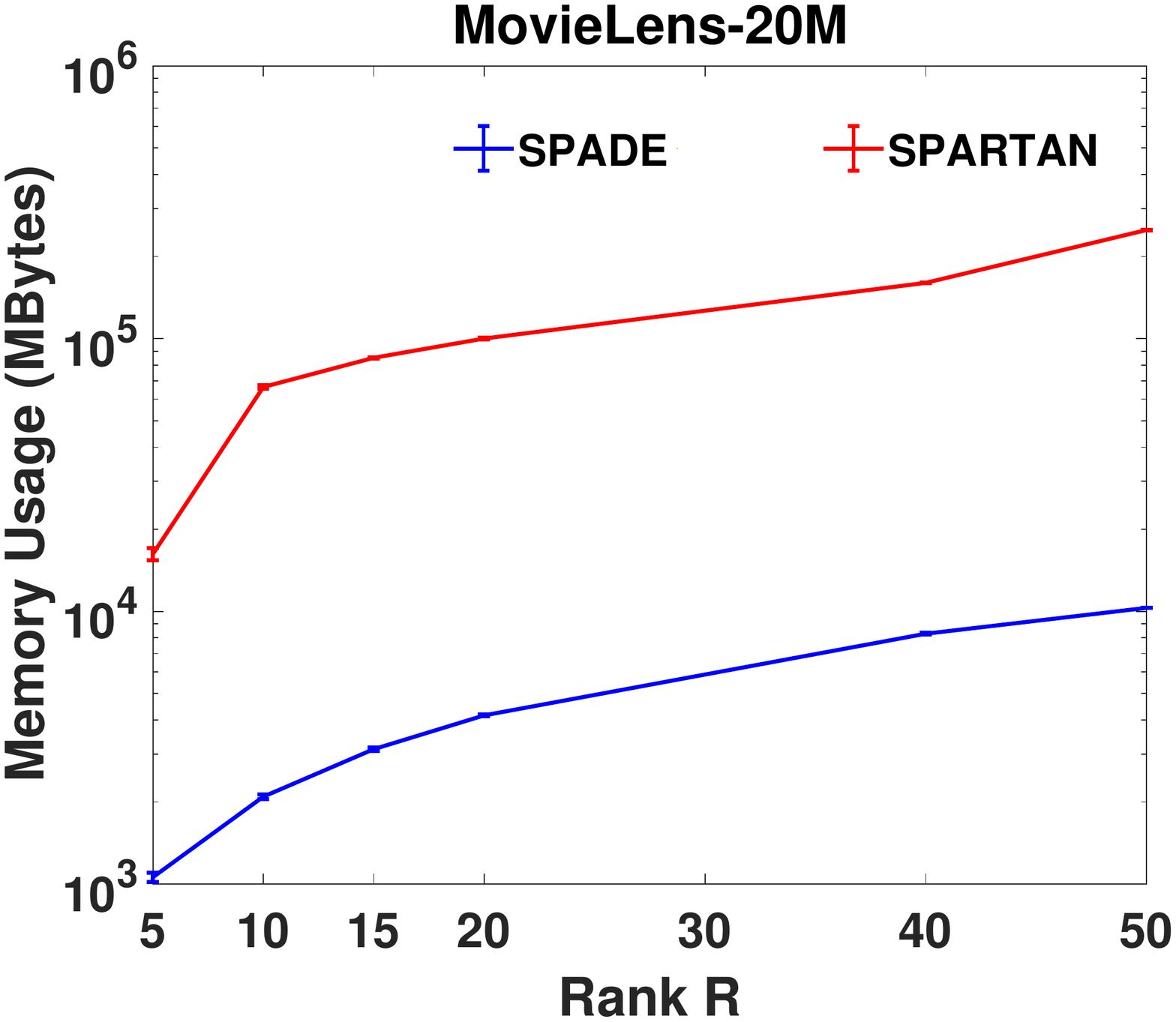}
		\includegraphics[clip,trim=0cm 4cm 0.8cm 5.2cm,width =0.24\textwidth]{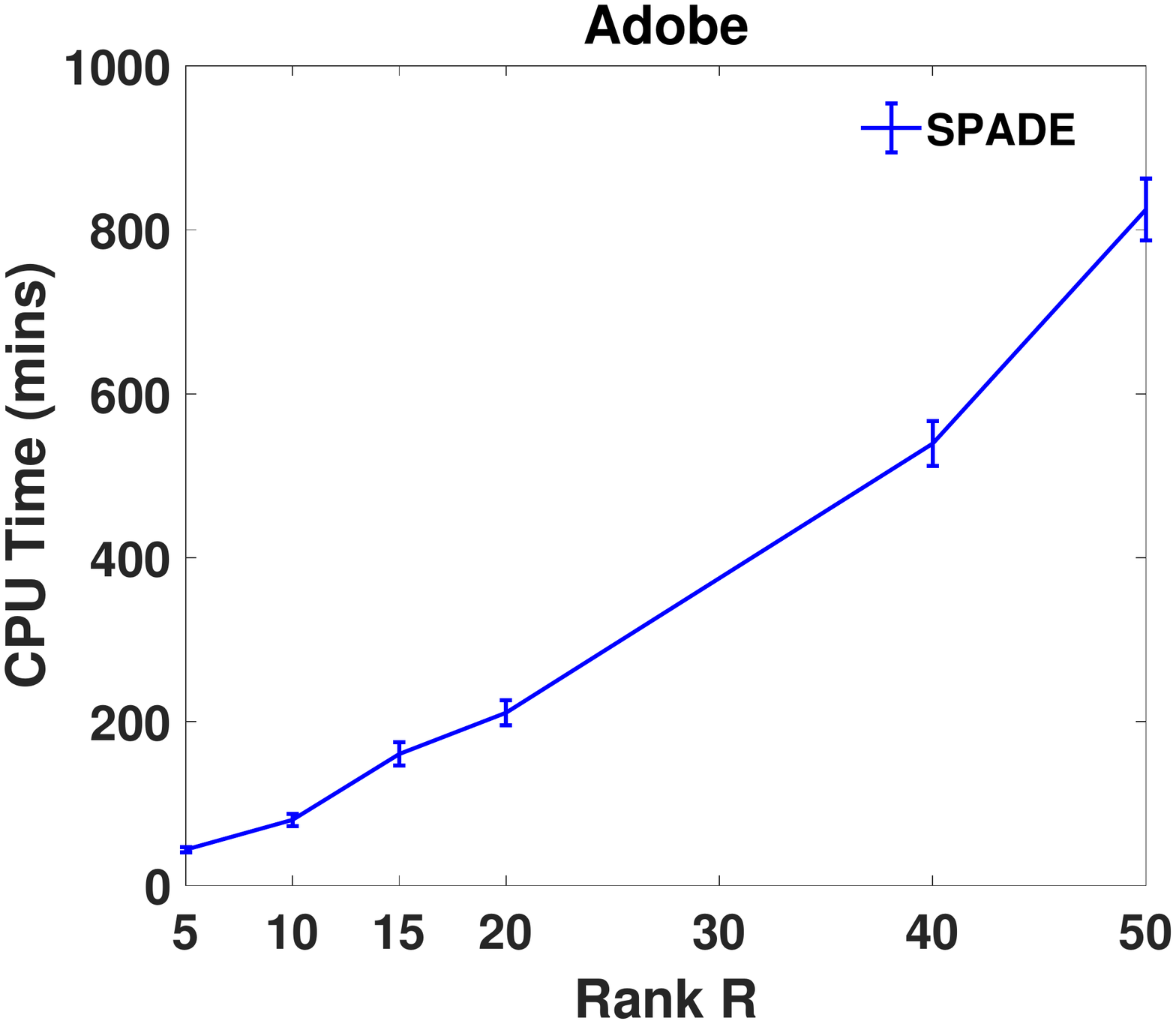}
		\includegraphics[clip,trim=0cm 4cm 1cm 5.2cm,width = 0.24\textwidth]{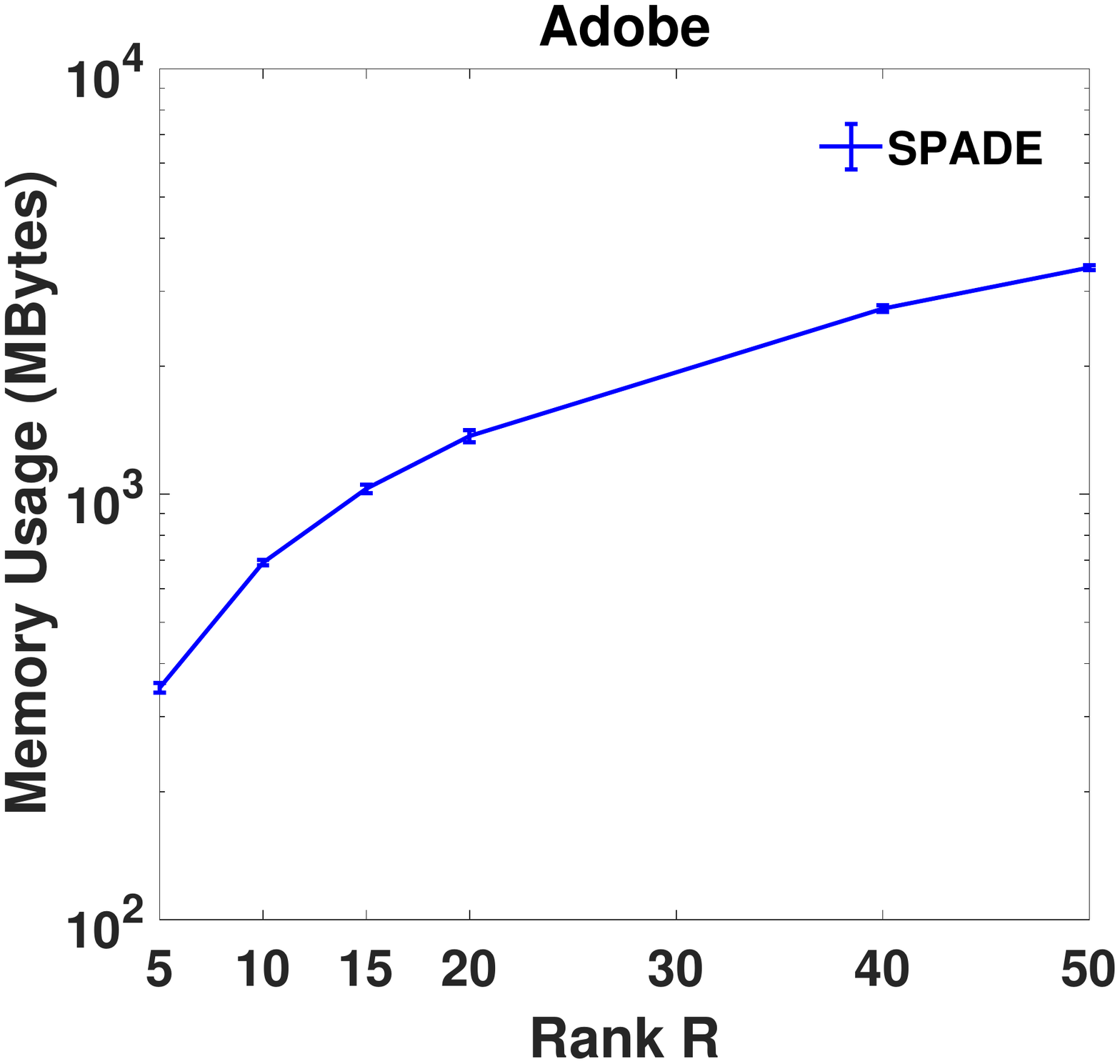}
		\caption{The average time in minutes and memory usage in MBytes for varying target rank for both the real datasets. For Adobe dataset baseline method runs out of memory.}
		\label{figspade:Movielensesults}
	\end{center}
\end{figure*}

We also evaluate scalability using both real dataset. The \spade saved up to $23\times$ memory used and achieved $17\times$ speedup over the baseline approach for the MovieLens dataset as shown in Figure \ref{figspade:Movielensesults}. The baselines unable to run for Adobe dataset because of high computation requirements for intermediate CP tensor creation. However, our proposed \spade successfully able to decompose the tensor (Figure \ref{figspade:Movielensesults}) and shows effectiveness over large irregular datasets. In terms of batch size, it is observed that the time consumed by our method is linearly increasing as the batch size grows. However, their slopes vary with different rank used. The analysis is not included due to limitation of space here.
\section{Community discovery on Adobe data}
\subsection{Motivation} Here, we discuss the usefulness of PARAFAC2 towards extracting meaningful communities or clusters of users from Adobe data. The challenge in community detection of such large data ($\approx$ 7 million users) is to capture behaviour regarding the sequential click of the tutorials for each user. In this dataset, there is no real alignment in the "time" mode, since every view of a tutorial can happen anywhere in time, therefore, we care about the sequence followed by each user and it cannot be modeled as a regular online tensor decomposition methods. Therefore, there is a serious need to learn richer and more useful representation. Below, we describe how \spade can be used to successfully handle this challenge. 
\subsection{Model Interpretation} the model interpretation towards the target challenge:
\begin{itemize}
	\item \textbf{Incremental factor}: Each column of factor or loading matrix $\mathbf{W}$ represents a community and each row represents the importance of community membership for a user to each one of the $R$ communities. Therefore, an entry $\mathbf{W}$ (i, j) represents the membership of user $i$ to the $j^{th}$ community. We consider non-overlapping communities based on type of tutorial watched by user. So we normalize the matrix w.r.t row between [0,1] and highest value indicates the community membership to corresponding user.
    \item \textbf{Irregular dimension factor}: Each $\mathbf{U}_k$ loading matrix gives the sequential signature of each user i.e. each $r^{th}$ column of $\mathbf{U}_k$ reflects the evolution of the community $r$ for all $I_k$ tutorial watched sequences for user $k$. 
	\item \textbf{Non-incremental factor}: The factor matrix $\mathbf{V}$ reflects the community definition based on tutorials and each row indicates a tutorial features.
\end{itemize}
\subsection{Qualitative Analysis}
For this case study, we decompose tensor in batch of $50K$ users to extract communities. For this experiment, we set the Rank $R = 24$ based on semi-synthetic ground truth labels. We compute {\em{Kull-back Leibler divergence}} or simply KL-div \cite{gorovits2018larc} to evaluate the communities quality. 

In order to present the use of \spade towards community detection, we focus our analysis on a subset of tutorials(s) watched by each community in Adobe dataset. Figure \ref{figspade:adobeanalysis}(a) shows the top 5 (based on number of users) community's most frequent tutorial(s) sequence watched. Conceptually, those users share similar interest in terms of learning, domain knowledge or interests. As a result, it becomes a very important challenge to accurately cluster the users. Nevertheless, \spade achieves significantly good performance in terms of $KL-div$ i.e $\approx 0.553$. In Figure\ref{figspade:adobeanalysis}(b), we show top 5 communities of the dataset as clustered by \spade with $R = 24$. Qualitatively, we see that the method’s output concurs with the communities that appear to be strong on the spy-plots. These communities are connected strongly within the group and have very few connections outside the group. As any of the baseline is unable to execute the entire data, our proposed method gives advantage to decompose the data in streaming fashion in reasonable time.
 \begin{figure}[!ht]
	\begin{center}
			\includegraphics[clip,trim=0cm 4.5cm 0cm 4.5cm,width =0.33\textwidth]{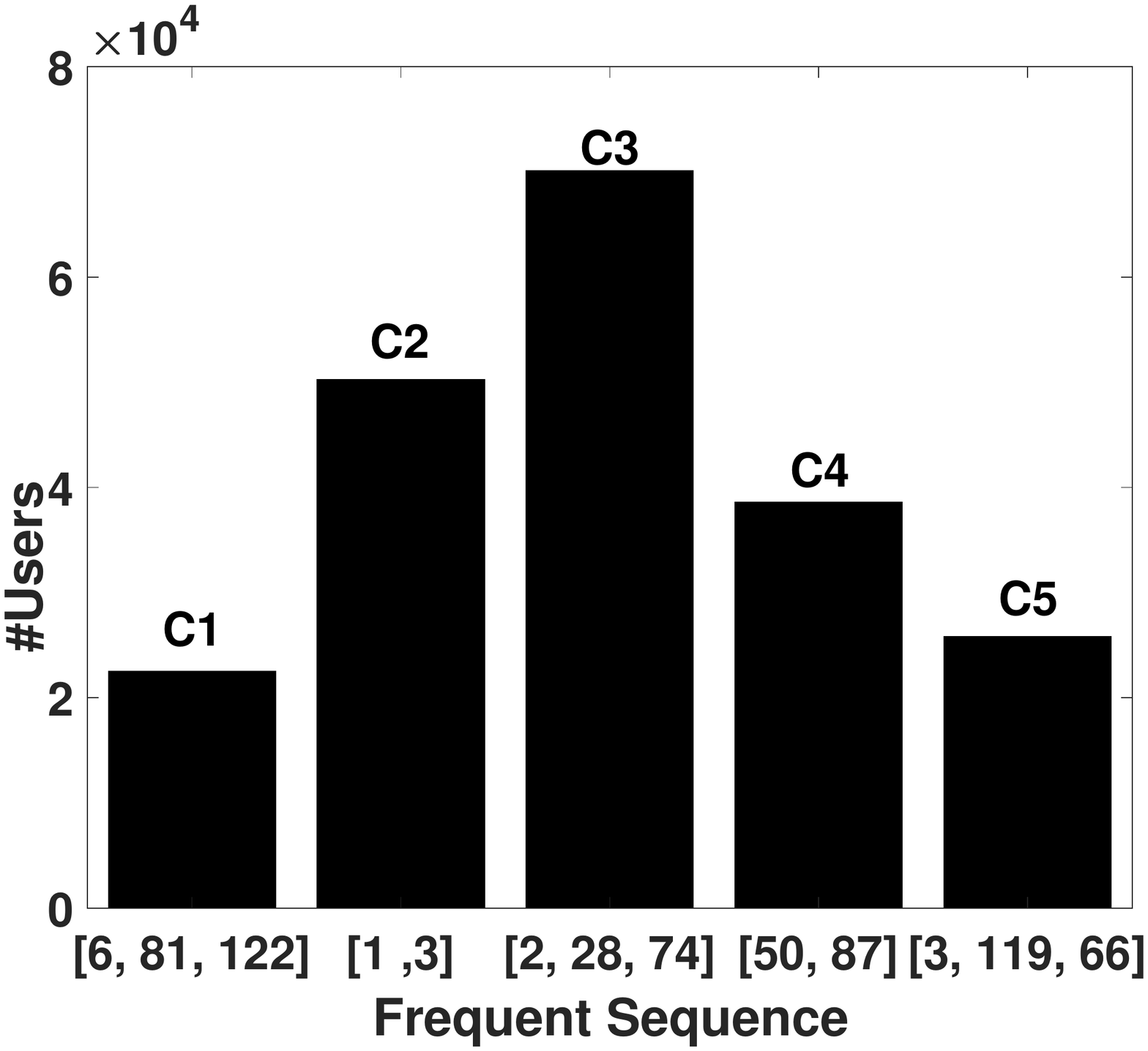}
		\includegraphics[clip,trim=0.5cm 4.5cm 0cm 5.2cm,width = 0.34\textwidth]{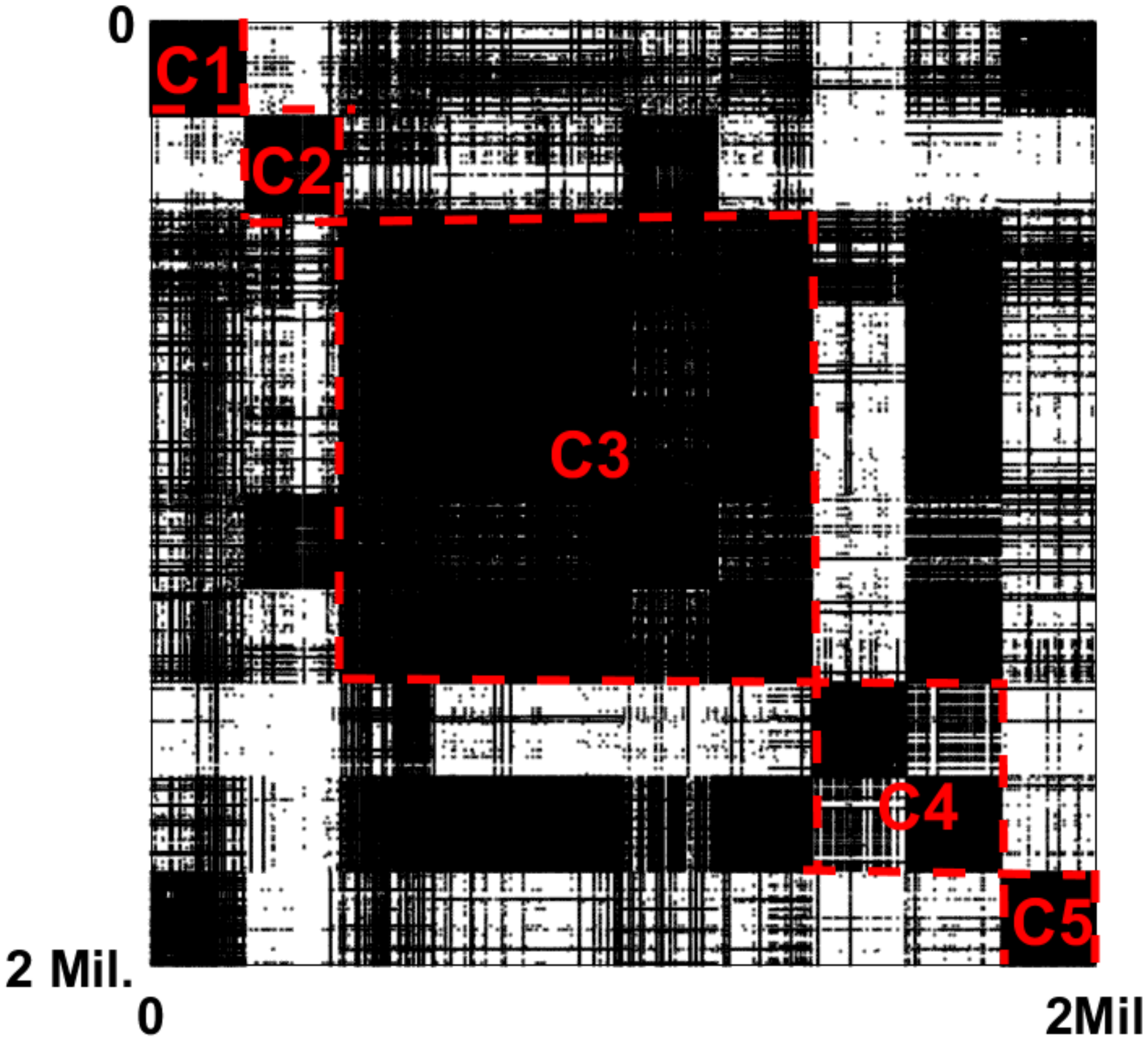}
		\caption{For top 5 communities (a) frequent sequence of tutorials watched (b) spy-plot of user-user view.}
		\label{figspade:adobeanalysis}
	\end{center}
\end{figure}
\section{Conclusions}
We propose \spade, a novel online PARAFAC2 decomposition method. We demonstrate its efficiency and scalability over synthetic and real datasets. The \spade provides comparable approximation quality to baselines and it is both fast ($10-23\times$) and memory-efficient ($17-150\times$) than the baseline approaches. Extensive experiments with Adobe dataset have demonstrated that the proposed method is capable of handling larger dataset in incremental fashion for which none of the baseline performs due to lack of memory.

Future directions include, but are not limited to: a) extension of the proposed method for multi-aspect-streaming tensors, b) incorporate various constraints e.g smoothness, sparsity and non-negativity for more applications. 

\vspace{0.5in}

\noindent\fbox{%
    \parbox{\textwidth}{%
       Chapter based on material published in SDM 2020 \cite{gujral2020spade}.
    }%
}
\chapter{Automatic PARAFAC2 Tensor Analysis}
\label{ch:12}
\begin{mdframed}[backgroundcolor=Orange!20,linewidth=1pt,  topline=true,  rightline=true, leftline=true]
{\em "How to automatic mine data using PARAFAC2 in a data driven and unsupervised way.?”}
\end{mdframed}

In data mining, PARAFAC2 is a powerful and multi-layer tensor decomposition method that is ideally suited for unsupervised modeling of "irregular" tensor data, e.g., patient's diagnostic profiles, where each patient's recovery timeline does not necessarily align with other patients. In real-world applications, where no ground truth is available for this data, how can we automatically choose how many components to analyze? Although extremely trivial, finding the number of components is very hard. So far, under traditional settings, to determine a reasonable number of components, when using PARAFAC2 data, is to compute decomposition with a different number of components and then analyze the outcome manually. This is an inefficient and time-consuming path, first, due to large data volume and second, the human evaluation makes the selection biased. In this chapter, we introduce \aptera, a novel automatic PARAFAC2 tensor mining that is based on locating the L-curve corner. The automation of the PARAFAC2 model quality assessment helps both novice and qualified researchers to conduct detailed and advanced analysis. We extensively evaluate \aptera’s performance on synthetic data, outperforming existing state-of-the-art methods on this very hard problem. Finally, we apply \aptera to a variety of real-world datasets and demonstrate its robustness, scalability, and estimation reliability. The content of this chapter is adapted from the following paper:

{\em Guiral, Ekta, and Evangelos E. Papalexakis. "Aptera: Automatic PARAFAC2 Tensor Analysis." The content of this chapter was under blind peer review at the time of thesis submission. }
\section{Introduction}
Tensors are the generalization of vectors and matrices. They are ubiquitous (e.g. images, videos, and social networks) and ever-increasing in popularity. With the opportunity to handle large volumes and velocity of data due to recent technical developments, such as mobile connectivity \cite{novovic2017evolving}, digital tools \cite{madabhushi2016image}, biomedical technology \cite{bellazzi2011data}, and modern medical testing techniques \cite{cms}, we face multi-source and multi-view \cite{gujral2020beyond} datasets. Suppose, for example, we are given health care record data, such as Centers for Medicare and Medicaid (CMS) \cite{cms}, and we have information about patients who visited the hospital, or who got what kind of diagnosis in which visit, and when. Time modeling is difficult for the regular tensor factorization methods (e.g. CP \cite{carroll1970analysis} and Tucker \cite{tucker3}), due to either data irregularity or time-shifted latent factor appearance of such data. Hence, data is formulated as a 3-mode PARAFAC2 tensor \cite{harshman1972parafac2} as shown in Figure \ref{apterafig:motiex}. 
 \begin{figure}[!ht]
 \vspace{-0.2in}
	\begin{center}
	    \includegraphics[clip,trim=0cm 3cm 0cm 3cm,width=0.7\textwidth]{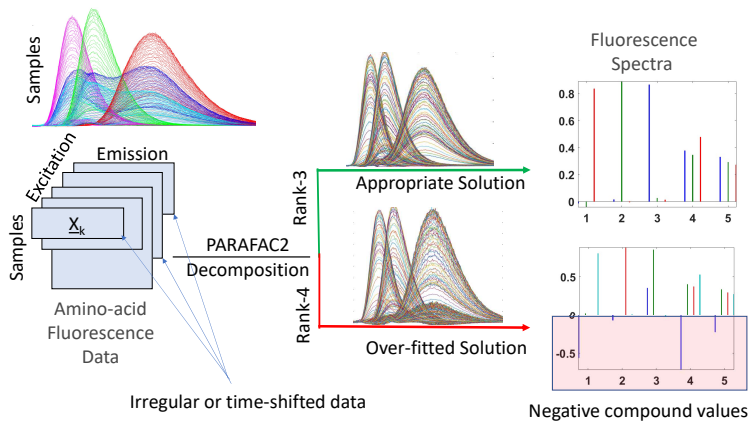}
		\caption{Amino acid data PARAFAC2 decomposition. The correct model has three components namely tryptophan, tyrosine and phenylalanin (which are chemically verified \cite{kiers1998three}), so a rank-$4$ solution would overfit by introducing misleading components and further introducing negative values in order to allow for cancellation of artifacts introduced.} 
		\label{apterafig:motiex}
	\end{center}
	 \vspace{-0.2in}
\end{figure}

With a rich variety of applications, PARAFAC2 decomposition \cite{harshman1972parafac2}  is a very effective analytical method and if performed correctly, it can reveal underlying structures of data. However, there are research issues that need to be tackled in the field of data mining in order for PARAFAC2 decompositions to assert their role as an effective tool for practitioners. One challenge, which has received considerable attention, is finding the correct number of components aka rank of PARAFAC2 decomposition. The tensor rank calculation has been proven to be NP-Hard \cite{hillar2013most} and in various cases NP-Complete \cite{haastad1989tensor}. Additionally, irregularity of data makes it a more complex and trivial task. This is the reason that various tensor mining papers \cite{afshar2018copa,kiers1999parafac2}, understandably, set the number of components manually. 

PARAFAC2 decomposition is able to handle various chromatographic data and choosing the correct number of components allows it to separate each variability source by using spectral information. Consider amino acid data \cite{kiers1998three} where three compounds tyrosine, tryptophan and phenylalanine dissolved in phosphate-buffered water. In Figure (\ref{apterafig:motiex}), PARAFAC2 decomposition with rank-$3$ resembles the pure spectra of tryptophan, tyrosine and phenylalanin. When  PARAFAC2 decomposition with rank-$4$ is applied to this data, the fourth component does not resemble any of the compounds and in fact, it does not seem to reflect any chemical information. Therefore, it becomes very important to select the correct number of components to solve real-world problems.

In literature, one popular approach to find the rank of CP tensor is  core  consistency diagnostic (CORCONDIA) \cite{bro2003new}. The CORCONDIA essentially  assesses significant deviations from a super-diagonal core tensor. This would suggest that the CP decomposition is not optimal either because the selected rank is not correct, or the CP model cannot describe the data well enough. This approach is widely studied and explored among the tensor mining community. AutoTen \cite{papalexakis2016automatic} is a powerful method that uses CORCONDIA as a building block to provide unsupervised detection of multi-linear low-rank structure in tensors. Over the last few years, there has been various methods \cite{shi2017tensor,tsitsikas2020nsvd,zhao2015bayesian} proposed to find the number of components of fixed dimension tensor data. However, only one method namely Autochrome \cite{johnsen2014automated} estimates rank for irregular data. Unfortunately, this method uses various computation diagnostics that require the conversion of irregular data to regular data. This is expensive in terms of memory utilization. 
 
To fill the gap, we propose a effective and efficient method \aptera to estimate the rank of irregular 'PARAFAC2' data that discover the number of components (interchangeably rank) through higher-order singular values. We observe that the nature of higher-order singular values is like 'L' shape (See Figure (\ref{apterafig:lcurveplot})) i.e. singular values start with high values and slowly dies to the end of computations. Also, we observed that the correct number of components falls exactly on the maximum curvature of the L-curve. Hence, we exploit the well known numerical method approach called L-curve corner detection \cite{castellanos2002triangle} in our work to estimate the number of components. Our contributions are summarized as follows:
\begin{itemize}
	\item {\bf An efficient and simple rank estimation method}: We introduce \aptera, a scalable and effective algorithm for estimating the number of components of PARAFAC2 decompositions of irregular tensors that admits an efficient parallel implementation. We make our Matlab implementation publicly available on the link{\footnote{\apteracodeurl}}.
	\item {\bf Extensive Evaluation}: We evaluate \aptera on synthetic and multiple real-world data, in order to study the behavior of our method in comparison to other baselines. 
	\item \textbf{Technology Transfer:} This work provides an efficient way to apply the numeric method idea of L-curve corner to find the rank of PARAFAC2 tensor data, aiming to explore its capabilities and promote it within the data mining community.
 \end{itemize}
\section{Related work}
As outlined in the introduction, rank detection and low-rank structure discovery are very hard problems, and there are currently no general-purpose methods that can achieve these tasks efficiently. In the tensor mining literature, there exist most effective and efficient methods by the name of Core Consistency Diagnostic or CORCONDIA \cite{bro1998multi,bro2003new}, AutoTen \cite{papalexakis2016automatic} and NSVD \cite{tsitsikas2020nsvd}, that can serve as a guide to judging how well a tensor is modeled by a given PARAFAC/CP decomposition. Most recently, a Bayesian robust tensor factorization (BRTF) \cite{zhao2015bayesian} employs a fully Bayesian generative model for automatic CP-rank estimation. However, this method often under/over-estimates the true rank of tensors and has a high computational cost. To automatically estimate the Tucker-rank, an automatic relevance determination (ARD) algorithm is applied for sparse Tucker decomposition \cite{morup2009automatic}. ARD is a hierarchical Bayesian approach widely used in many methods \cite{qi2004predictive}, but its efficiency is quite low. However, these methods are not directly applicable to the irregular tensor data.

There is very limited work done for PARAFAC2 data rank estimation. There exists a method named Autochrome \cite{johnsen2014automated} which uses PARAFAC2 decomposition for estimating the rank of tensor data. The method is based on a number of model diagnostics (quality criteria) collected from models with different numbers of factors. They combining these diagnostics to assess what are the appropriate number of components of data. However. this method is limited to  gas chromatography–mass spectrometry data and also various diagnostics computations require regular CP/PARAFAC1 tensor as input instead of the irregular (PARAFAC2) tensor.

To our best knowledge, there is no work in the literature that deals with the reveling a number of components of PARAFAC2 decomposition without using expensive computations of Core Consistency Diagnostics and not limited to a specific type of data. To fill the gap, we propose a scalable and efficient method that reveals the number of components of the PARAFAC2 model.

\section{Proposed Method: Aptera}
\label{apterasec:method}
In data mining applications (e.g. chromatography, health care), we are given a very large irregular multi-layer data which is required to analyze by domain researchers, and we are asked to identify various useful patterns that could potentially help to grow the business or provide valuable insights about data. Most of the time, this analysis is done unsupervised as collecting ground truth is extremely expensive and requires human intervention. Unfortunately, it is not straightforward to determine the proper number of components for PARAFAC2 tensors. Since CORCONDIA based methods have instabilities in the quality estimations\cite{tsitsikas2020nsvd} and, therefore, we propose a new method for finding the structure in PARAFAC2 tensor data using the L-corner approach that reduces the human intervention and trial-and-error fine-tuning. Our proposed method consists of three steps as described below.
 \begin{mdframed}[linecolor=red!60!black,backgroundcolor=gray!30,linewidth=1pt,    topline=true,rightline=true, leftline=true] 
 \textbf{\em Informal Problem:} \textbf{Given} a 'irregular' tensor without labelled data and maximum possible rank $R_{max}$, how can we analyze it using the PARAFAC2 decomposition so that we can also
\begin{itemize}
\item Determine automatically a good number of components for the decomposition.
\item  Minimize human involvement and trial-and-error fine-tuning.
\end{itemize}
\end{mdframed}

\subsection{PARAFAC2 decomposition}
\label{subapterasec:p2f}
Here, we solve $R_{max}$-component PARAFAC2 decompositions as given in Equ. (\ref{apteraeq:parafac2_equ}) by using random initialization. For each decomposition, we keep same initial parameters i.e. number of maximum iterations, tolerance for convergence etc. 
\begin{equation}
\label{apteraeq:parafac2_equ}
\begin{aligned}
\mathcal{L} = & \sum^K_{k=1}\argminA_{\mathbf{Q}_k}\frac{1}{2}||\mathbf{X}_k - \mathbf{Q}_k\mathbf{H}\mathbf{W}\mathbf{V}^T||^2_F \quad \forall k \in [1,K]\\
 & \text{subject to} \quad \mathbf{Q}_k\mathbf{Q}_k^T=\mathbf{I}_{R_{max}}
\end{aligned}
\end{equation}

Due to the irregular nature of the first mode of PARAFAC2 data, we use its resultant latent factors to create CP tensors using the Khatri-Rao product on factors $\tensor{Y} = (\mathbf{H} \odot \mathbf{V} \odot \mathbf{W}) \in \mathbb{R}^{R_{max} \times J \times K}$. This gives us a flexibility to use any existing method to discover the rank of the reconstructed tensor. Unfortunately, CORCONDIA based methods like AutoTen \cite{papalexakis2016automatic}, Autochrome \cite{johnsen2014automated} get confused because the input, i.e the CP tensor, is created using outcome of PARAFAC2 decomposition instead of actual data which could have a different number of components. For example, consider the PARAFAC2 data has total of $10$ components and we factorize this data with $R_{max} = 20$. 
When we provide the CP tensor with $R_{max} = 20$ to CORCONDIA based methods, it is highly likely possible that Core Consistency diagnostic metric is close to 100\% at $R_{max} = 20$, because it can trivially produce “super-diagonal” core. To overcome such instabilities, we use multi-linear orthogonal projections via Higher Order Singular Value Decomposition (HOSVD) for discovering the number of component. 
\subsection{Formation of L-curve using Pareto Optimal Truncation}
\label{subapterasec:pot}
The Singular Value Decomposition (SVD) gives the best low-rank approximation of a matrix. In the sense of multi-linear rank, a generalization of the SVD is the higher-order SVD (HOSVD). Nowadays, it is better known with the effort of de Lathauwer et al. \cite{de2000best}, who analyzed the structure of core tensor and proposed to use multi-linearity to discover the rank of the tensor. Motivated by this, we compute HOSVD of $\tensor{Y}$ as given in Equ. (\ref{apteraeq:hosvd}). 
 \begin{equation}
\label{apteraeq:hosvd}
[\tensor{G},\mathbf{A},\sigma] = HOSVD(\tensor{Y})
\end{equation}  
where $ \tensor{G}$ is decomposed core tensor, $\mathbf{A}$ is set of matrices for each dimension and $\sigma$ is set of n-mode multi-linear (interchangeably higher-order) non-negative singular values which appear in decreasing order. We can reconstruct 3-mode CP tensor using $ \tensor{G}$
\hide{let's stick to a single notation for tensor (either bold underline or the Euler script is fine, but G here is inconsistent)} and $\mathbf{A}$ as given below Equ. (\ref{apteraeq:hosvdrev}).
 \begin{equation}
\label{apteraeq:hosvdrev}
\tensor{Y} = \tensor{G} \times \mathbf{A}^{1} \times \mathbf{A}^{2} \times \mathbf{A}^{3}  
\end{equation}  
Selecting the appropriate degree of compression is equivalent to estimating the rank of the tensor. Though the best rank approximation is NP-hard \cite{hillar2013most}, a satisfying result can always be estimated by choosing a proper degree of truncation. Here, we use Pareto optimal truncation \cite{ali2019finding,he2018identification} based on the upper bound on the singular values. For any possible 3-mode tensor dimensions, the corresponding relative error $E$ can be defined as
\begin{equation}
\small
\label{apteraeq:reconstruction}
vec(E_{rjk}) =  \sum_{r=1}^{R_{max}}\sigma\{1\}(r) +  \sum_{j=1}^{J}\sigma \{2\}(j) +\sum_{k=1}^{K}\sigma\{3\}(k)  
\end{equation} 
\begin{equation}
\small
\label{apteraeq:reconstructionerr}
E(n) = E_{rjk} = \frac{\sqrt{E_{rjk}}}{||\sigma\{1\}||}
\end{equation} 
where $n \in \{1,2,3,\dots , R_{max}JK\}$ is linearized index pairs of $R_{max}, J$ and $K$ e.g. $(n=1) \leftarrow [r=1, j=1, k=1]$. Next, we define the points on the 2D plane with possible tensor dimension $\mathbf{d}$ as:
\begin{equation}
\small
\label{apteraeq:lcurve}
\begin{split}
P(n) = \sqrt{(x(n))^2+(y(n))^2}, \quad x(n) = ||d(n)|| ;\quad y(n) = E(n)
\end{split}
\end{equation} 
\vspace{-0.5in}
\begin{SCfigure}[][h]
		\includegraphics[clip,trim=1cm 8cm 2cm 0cm,width = 0.3\textwidth]{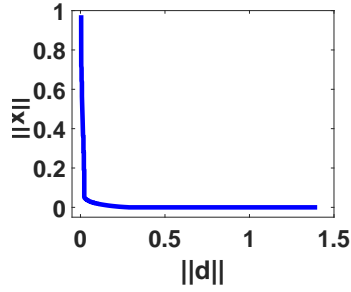}
		\caption{The L-curve formed after Pareto optimal truncation on multi-linear singular values of synthetic tensor.}
		\label{apterafig:lcurveplot}
			\vspace{-0.3in}
\end{SCfigure}

where $\mathbf{d}$ is a vector of multi-indices and represents as $d(1) \leftarrow [r=1, j=1, k=1]$, $d(2) \leftarrow [r=2, j=1, k=1]$, and so on. The  $||d(n)||$ is computed as $\frac{(r*R_{max})+(j*J)+(k*K)+(rjk)}{R_{max}JK}$. Now, we sort the points $P$ and update residual norm ($x$) and solution norm ($y$) accordingly. By eliminating the $P$ values that do not satisfy the monotonic condition, we can get a Pareto front end \cite{hansen2001curve}. Having realized the important roles played by the norms of the solution $y$ and norms of the residual  $x$, it is quite natural to plot these two quantities versus each other, i.e., a trade-off curve as shown in Figure (\ref{apterafig:lcurveplot}). This is precisely the L-curve that can be utilized for estimation of the rank of the tensor. Algorithm \ref{alg:Pareto} provides the pseudocode of computing Pareto front end.
 
\subsection{Rank Estimation with L-curve Corner}
\label{subapterasec:lc}
In this step, we use the L-curve corner method \cite{cultrera2016simple} to estimate the number of components of a tensor. To improve the efficiency of the method, we can adapt a triangle method \cite{castellanos2002triangle} that uses geometric properties like the angle and direction of the triangle to estimate the L-curve curvature as shown in Figure (\ref{apterafig:lcurvedet}).
\begin{figure}
	\begin{center}
	    \includegraphics[clip,trim=0cm 5cm 1cm 5cm,width=0.55\textwidth]{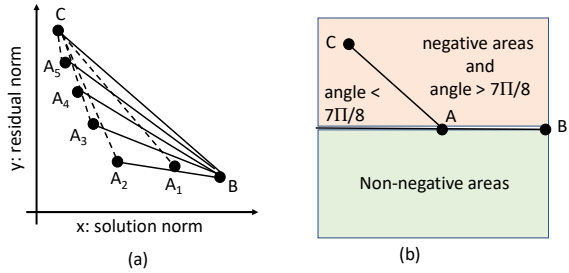}
		\caption{(a) Pictorial view to find the corner point of an L-curve. Fixing point $B$  and $C$  and forming triangles with $A_1$, $A_2$, $A_3$, $A_4$, and $A_5$. (b) Scenario that shows a case when point $A$ could be a corner of the L-curve. The corner point will be the point $A$ with the smallest angle which is also less than $7\pi/8$ and with the
corresponding triangle $ABC$ having negative area.} 
		\label{apterafig:lcurvedet}
	\end{center}
		\vspace{-0.36in}
\end{figure}

Although, above process gives estimated rank for each dimension, but note that PARAFAC2 requires only a single rank value. Therefore, we report minimum rank predicted across tensor regular modes. Putting everything together, we end up with Algorithm (\ref{apteraalg:aptera}) which is an efficient solution to the minimization problem of Equ. (\ref{apteraeq:parafac2_equ}) and automatically determine the good number of components for the decomposition.\\
\begin{algorithm2e}[H]
\small
		\label{alg:Pareto}
		\caption{Pareto Optimal Truncation}
	
			\KwData {$E$, tensor dimensions (R, J, K).}
			\KwResult {Estimated rank $R_{out}$.}
			 $\mathbf{d} = \begin{bmatrix}
                         1 & 1 & 1 \\
                        2 & 1 & 1   \\
                        \vdots &\vdots &\vdots \\
                         R & J & K   \\
                        \end{bmatrix}		$ \\
			 $x \leftarrow \frac{R*\mathbf{d}(n,1) + J*\mathbf{d}(n,2) + K*\mathbf{d}(n,3) + product(\mathbf{d}(n,:))}{(RJK)}$  \\
			 $y  \leftarrow E$;  $P = \sqrt{x^2+y^2}$; $p =1$\\
			 \textcolor{blue}{//Pareto Optimal Truncation}\\
		     Rearranging $x$ and $y$ based on $P$.\\
			\While{$p \leq length(x)$}{
			 $idx  \leftarrow (x >= x(p) \& y > y(p)) | (x > x(p) \& y >= y(p))$\\
			\If{$\forall{idx}$}{
	           $d = d(\neg idx,:); x = x(\neg idx); y = y(\neg idx)$\\
			}
			  $p = p+1-\sum idx(1:p)$\\
			}
			\textcolor{blue}{//L-curve corner detection}\\
			 $\tensor{AB} \leftarrow [x-x'; y-y']$;  $\tensor{AC} \leftarrow [x(n)-x'; y(n)-y']$\\
			 \KwRet $\tensor{AB},\tensor{AC}$\\
	\end{algorithm2e}
\textbf{Summary:} First, we compute PARAFAC2 decompositions with either $R_{max}$ or $\max(J,K)$ (see Section \ref{subapterasec:p2f}). Then, we create the CP tensor $\tensor{Y}$ using outer product of factor matrices (See Section \ref{subapterasec:p2f}). Next, we compute the HOSVD on $\tensor{Y}$ and perform Pareto optimal truncation (see Section \ref{subapterasec:pot}). Finally, find L-curve corner (final "best rank") using triangle method as discussed in Section \ref{subapterasec:lc}.

\begin{algorithm2e} [H]
		\caption{\aptera: Automatic PARAFAC2 Tensor Mining}
	
			\KwData Tensor $\tensor{X}$ and maximum budget for component search $R_{max}$ (Optional), No of Experiments $N$.
			\KwResult PARAFAC2 decomposition $\mathbf{U}, \mathbf{V}, \mathbf{W}$ of $\tensor{X}$, $R_{est}$
			\For {$n = 1 \dots N$}{
		        \ Run PARAFAC2 decomposition for $R_{max}$ (if given) or for $\max(J,K)$ components.\\
		         Create CP Tensor $\tensor{Y}$ using above components as described in the text.\\
		       Compute HOSVD; $ [\sigma] = HOSVD(\tensor{Y})$ to obtain multi-linear singular values.\\
		        Perform truncation on singular values (refer Section \ref{subapterasec:pot} and Algorithm (1a)).\hide{We should have a brief mention here, because I don't think they are required to look at the supplementary material}\\
		        Compute L-corner using Section\ref{subapterasec:lc} as $R_{out}(n)$ .\\
		  }
		    Choose most repeating $R_{out}(n)$ to select $R_{est}$.\\
		    Output the chosen $R_{est}$ and the corresponding decomposition.\\

		\label{apteraalg:aptera}
\end{algorithm2e}

\section{Experiments}
\label{apterasec:experiments}
We design experiments to answer the following questions: \textbf{(Q1)} How accurately \aptera detect rank as compared to baseline algorithm?
\textbf{(Q2)} How \aptera used in real-world scenarios? 
\textbf{(Q3)} How does the running time of \aptera increase as tensor data grow (in 3rd mode)?
\subsection{Synthetic Data Description} A first step in evaluating our method is to check its performance on simulated data whose rank and factors can be pre-defined. We create synthetic tensors by generating two-factor matrices with $R$ columns each, where their elements are drawn as Gaussian with unit variance. Then, factors are column-wise normalized. The set of factor matrices for irregular mode is created in such a way that it retains the property of orthogonality. By considering these matrices as the PARAFAC2 factor matrices, therefore, the rank of the PARAFAC2 tensor will be exactly $R$. We considered a setup with $1000$ subjects, $500$ feature variables, and a maximum of $200$ observations for each subject with rank-$5$. Also, we deformed the generated tensor data by an additive noise tensor that has the rank higher than $5$ but has norm $2\times$ less than actual synthetic data. 
\begin{table}[t]
	\centering
	\small
	\begin{tabular}{|c|c|c|}
    \hline
     {\bf Dataset}& {\bf Dimension } & {\bf Components}   \\ 
     \hline
      Syn-I&$200 \times 500 \times 1000$&$5$ (Synthetic)\\
      Amino Acid &$ 5 \times 201 \times 61 $&$3$ (See\cite{kiers1998three})\\
      Wine-GCMS &$2700 \times 200 \times 44$&$4$ (See \cite{skov2008multiblock})\\
      EU-Core &$ 986 \times 986 \times 827$&$28$ (See \cite{yin2017local})\\
      CMS  &$ 250 \times 1000 \times 98000 $& NA\\
      \hline
 	\end{tabular}
	\caption{Details for the datasets.}
	\label{apteratbl:dataset} 
		\vspace{-0.25in}
\end{table}
\subsection{Real Data Description} We evaluate the performance of the proposed method \aptera for the real datasets to assess the practicality in real-world scenarios. For this reason, in our experiments we includes real data sets as shown in Table (\ref{apteratbl:dataset}).
\begin{itemize}
    \item {\bf Chemical Data}: We used two chemical data i.e. Wine-GCMS (Gas Chromatography Mass Spectroscopy) \cite{skov2008multiblock} and Amino acids fluorescence data \cite{kiers1998three}. The wine data consists $44$ samples of red wines, 44 samples, produced from the same grape (Cabernet Sauvignon), harvested in different geographical areas. For each sample a mass spectrum scan is measured at $2700$ elution time-points was obtained providing a data for $44$ samples have size $2700 \times 200$. The Amino acid data set consists of $5$ simple laboratory-made samples. Each sample has size $333$. The dimensions are Each sample contains different amounts of tyrosine, tryptophan and phenylalanine dissolved in phosphate buffered water. Through core consistency experiments, it is found that amino acid data has rank-3 \footnote{\url{http://www.models.life.ku.dk/Amino_Acid_fluo}}.
    \item {\bf Social Network Data}: We also analyze EU-Core \cite{yin2017local} data and it consists of emails between members of the research institution during October 2003 to May 2005 (18 months) and messages can be sent to multiple recipients of 42 departments. For this dataset, we created tensor as days-by-researcher-by-researcher. For this data previous research has identified 28 components \cite{yin2017local} so we assume that the true rank is equal or around that number.
    \item {\bf Healthcare Data}: This dataset is synthetically created by Centers for Medicare and Medicaid (CMS) \cite{cms} by using 5\% of real medicare data  and includes $98K$ subjects. We created PARAFAC2 tensor as visits - by - diagnosis - by - patient.
\end{itemize}
\subsubsection{Baselines} In this experiment, three baselines have been used as to evaluate the performance. \begin{itemize}
	\item  \textbf{Autochrome}\cite{johnsen2014automated}: an implementation of detecting components for PARAFAC2 model which uses Core Consistency Diagnostic on CP tensor $\tensor{Y}$. 
	\item \textbf{Iteration based} \cite{hoggard2007parallel,johnsen2014automated}: We postulate that this could be a decent baseline because if appropriate number of components are selected, the PARAFAC2 decomposition converges fast within few iteration. Therefore, we consider it as one of our baseline.
   \item  \textbf{NSVD based} \cite{tsitsikas2020nsvd}: A Normalized Singular Value Deviation (NSVD) method used to find the rank via CP tensor $\tensor{Y}$.
   \item \textbf{Tucker ARD based} \cite{morup2009automatic}: it is an automatic relevance determination algorithm for Tucker decomposition using the gradient based sparse coding algorithm. We use this method to find the rank via CP tensor $\tensor{Y}$.
   \item \textbf{BRTF Based} \cite{zhao2015bayesian}: a Bayesian robust tensor factorization
which employs a fully Bayesian generative model for
automatic CP-rank estimation
 	\end{itemize}
 	
All methods except Autochrome require a maximum bound $R_{max}$ on the rank; for fairness, we set $R_{max} = 2R_{original}$ for all methods. Note that all comparisons were carried out over 10 iterations (or experiments) for all methods and most repeated rank is reported here.
 \subsection{Q1: Rank Structure of Synthetic Dataset}
For our synthetic dataset, we observe in Figure (\ref{apterafig:synresult}) that \aptera presents a quite distinct L-curve corner at rank-$5$ for all given random initilizations which is the correct answer. On the other hand, even though Autochrome and NSVD methods seem to approximate a region around $7$ components (tested for $2-10$ rank), it struggles to give a definitive answer and leaves open the possibility of up to $8$ or more components. Both, Tucker ARD and BRTF based methods not able to provide certain solution for synthetic data. Interestingly, even iteration based baseline seems to be working better than Autochrome and NSVD, showing a subtle indication at $5$ components.
\begin{figure*}
	\vspace{-0.34in}
	\begin{center}
	    \includegraphics[clip,trim=1cm 10cm 0cm 1cm,width=0.31\textwidth]{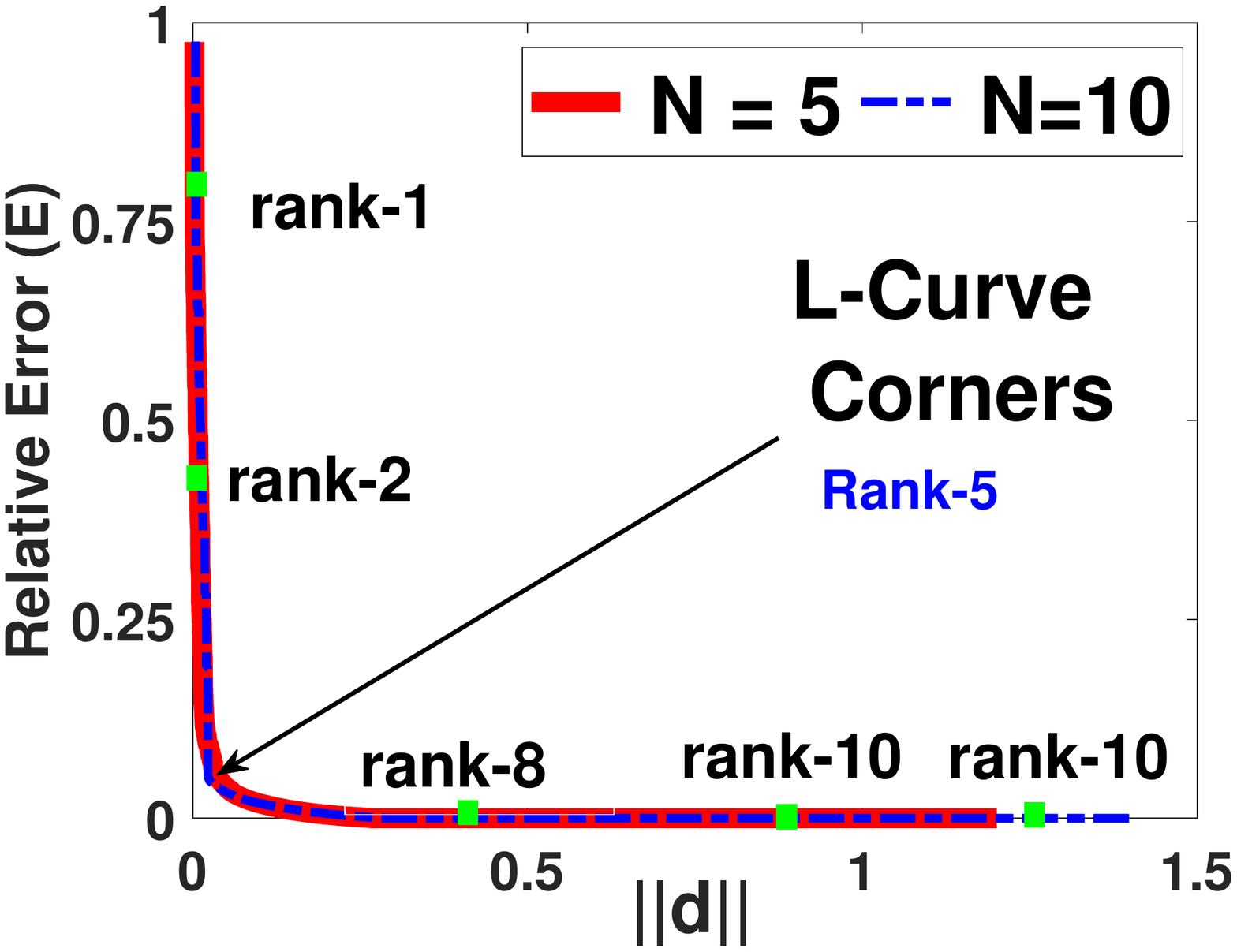}
		\includegraphics[clip,trim=4cm 10cm 5cm 5cm,width=0.29\textwidth]{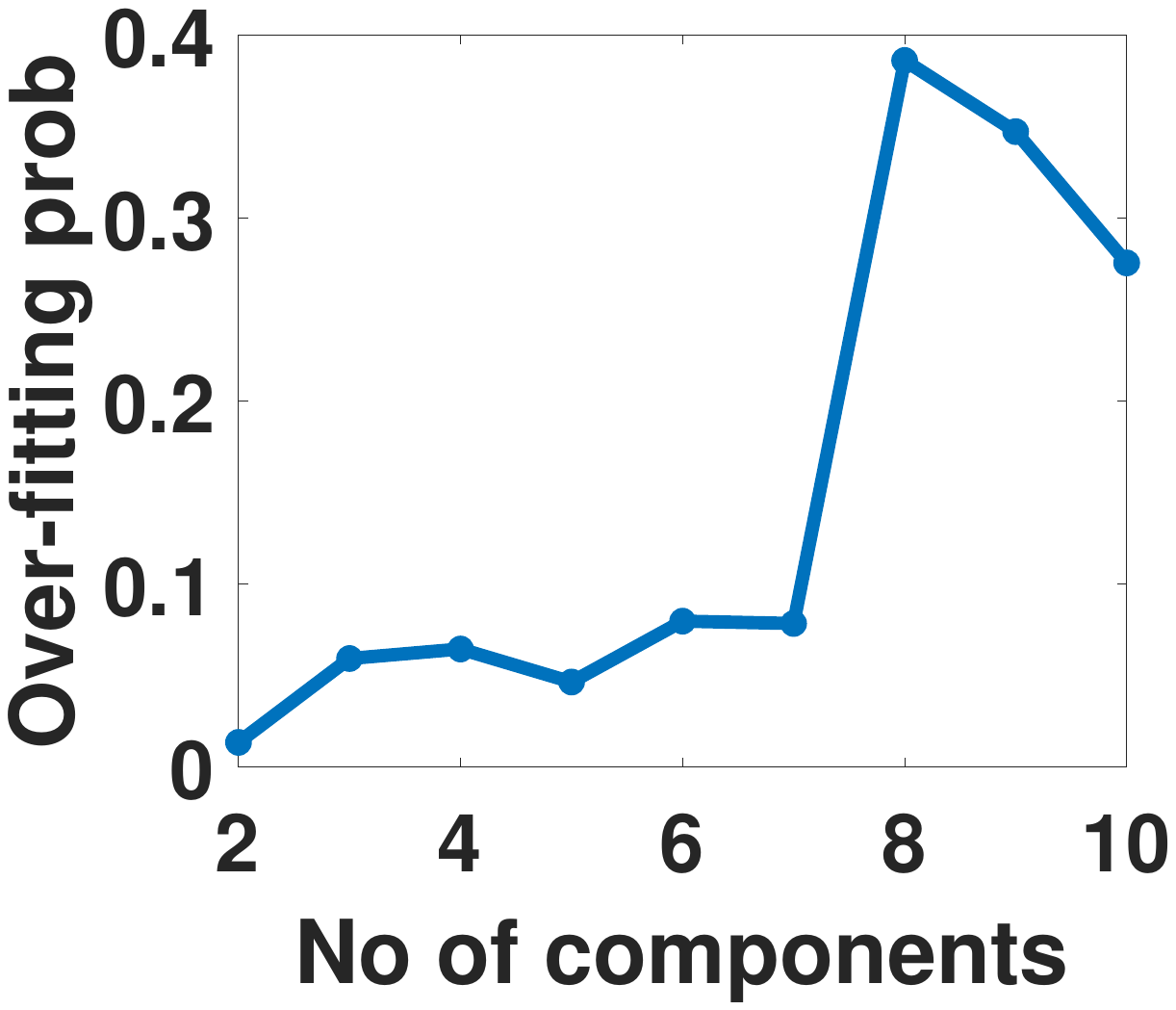}
		\includegraphics[clip,trim=3.5cm 10cm 5cm 5cm,width=0.3\textwidth]{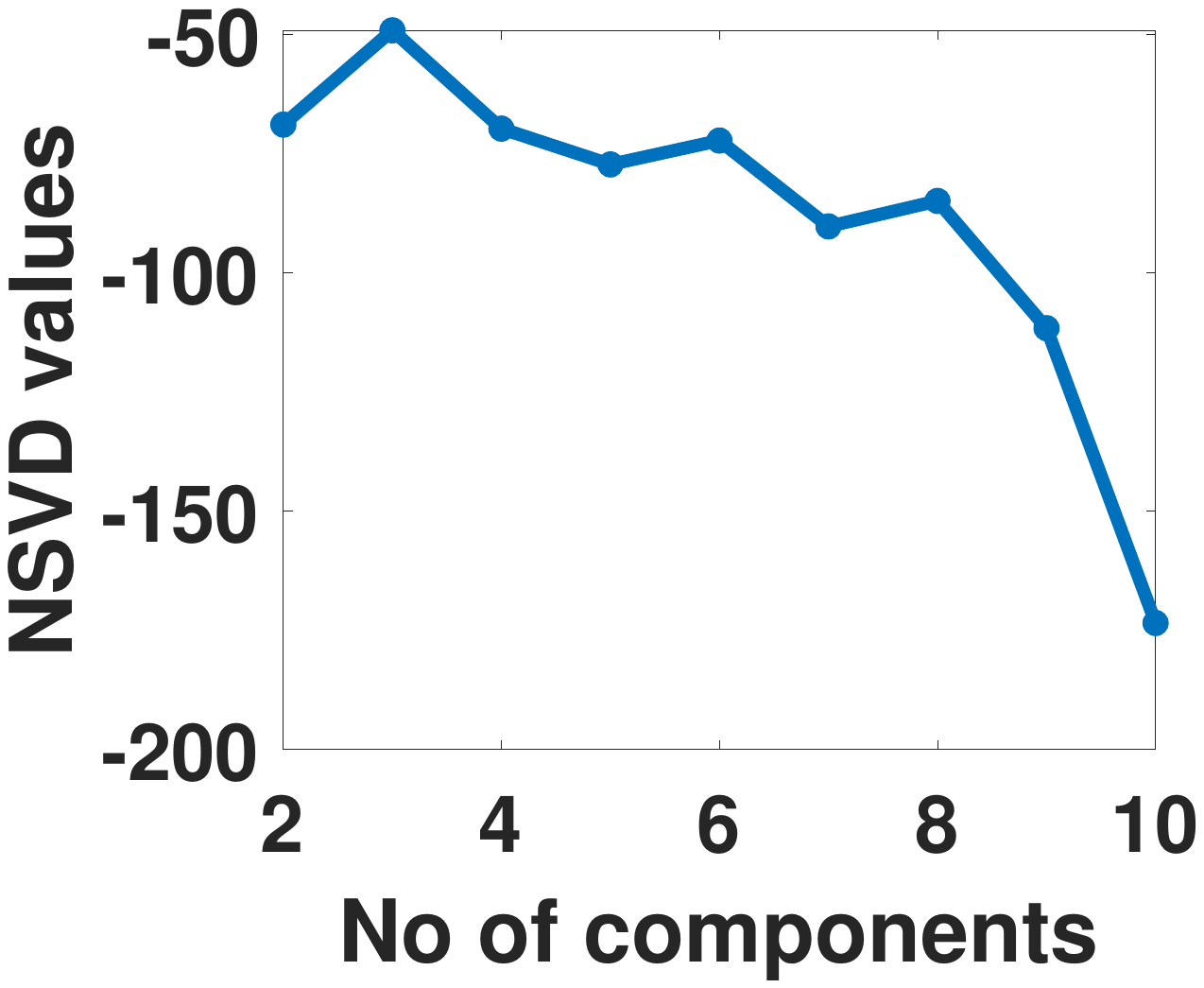}
		\includegraphics[clip,trim=4cm 10cm 5cm 5cm,width=0.31\textwidth]{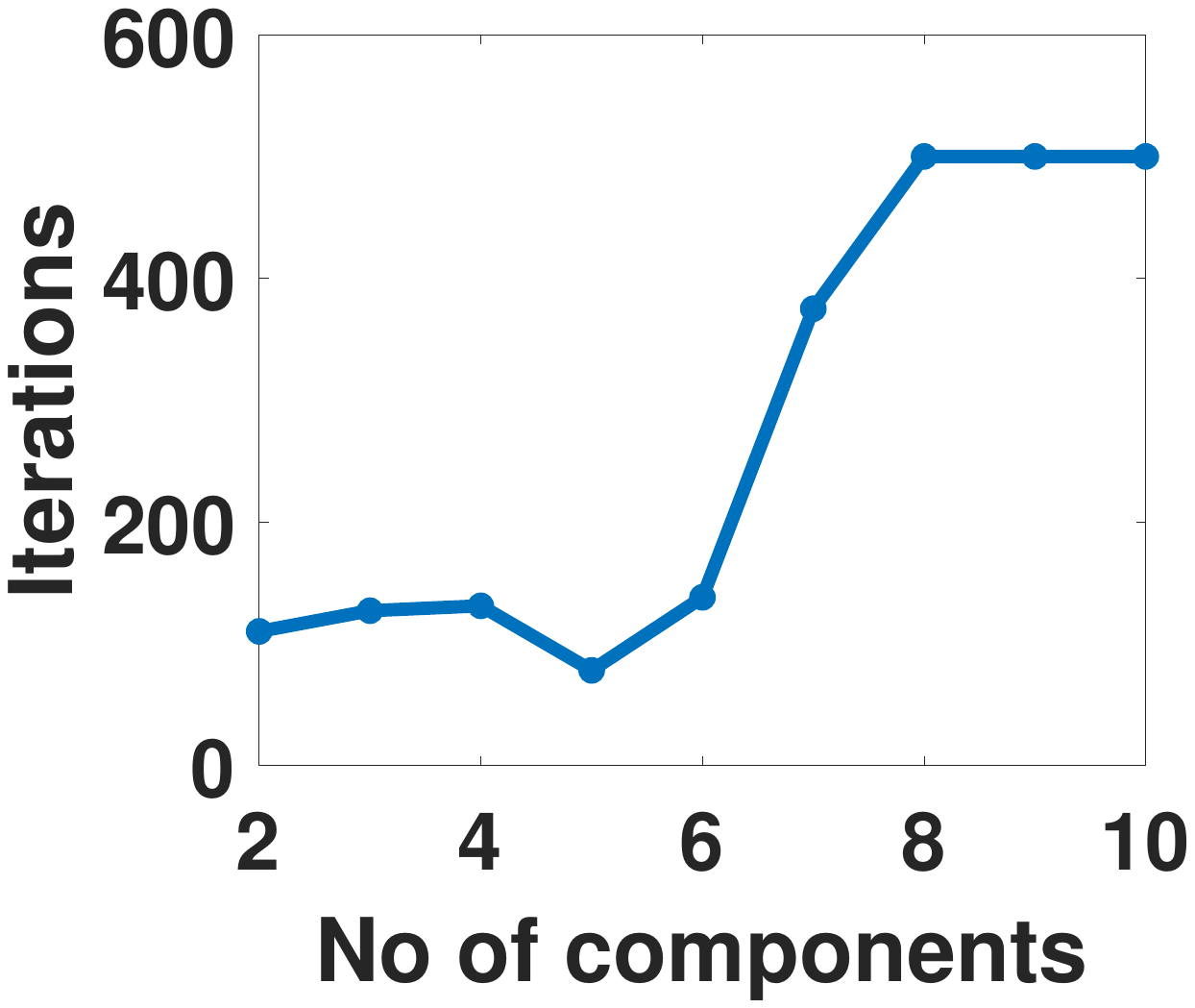}
		\includegraphics[clip,trim=2cm 9cm 3cm 6cm,width=0.36\textwidth]{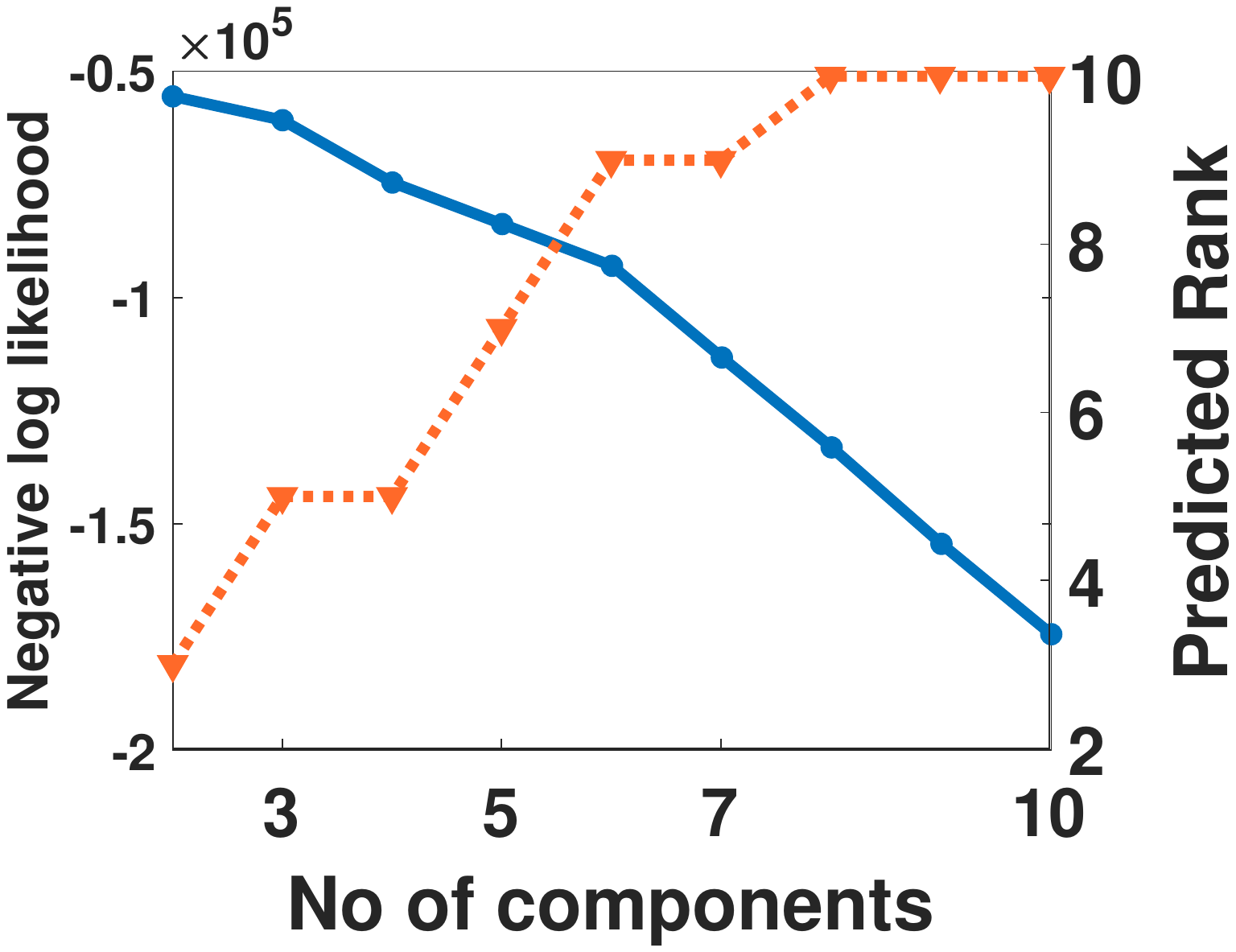}
		\includegraphics[clip,trim=4cm 10cm 5cm 5cm,width=0.3\textwidth]{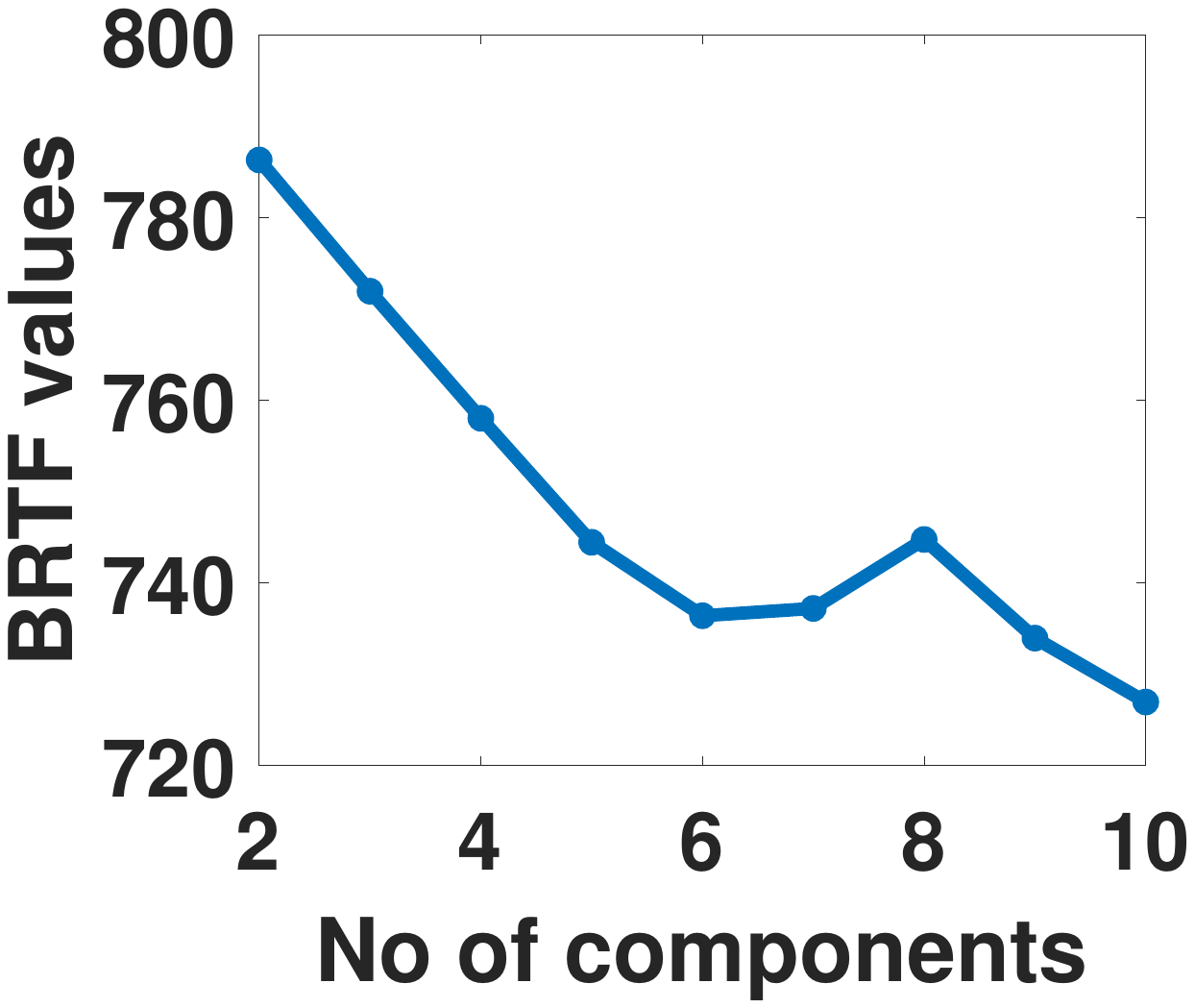}
		\caption{Baselines comparison on the synthetic dataset. From left to right represents, (a) our proposed method \aptera for $5^{th}$ and $10^{th}$ experiment run, (b) Autochrome, (c) NSVD based, (d) Iteration based, (e) Tucker ARD based, and (f) BRTF based method.} 
		\label{apterafig:synresult}
	\end{center}
	\vspace{-0.3in}
\end{figure*}
\subsection{Q2: Rank Structure of Real Datasets}
While our purposed method \aptera performs reasonable on synthetic tensor data, indicating a L-curve corner exactly where the predefined number of component is, but in order to evaluate its practicality in real-world scenarios, it is also important to research its performance and behavior on real-world data. For this reason, we analyze a range of real data sets as shown in Table (\ref{apteratbl:dataset}).
 \subsubsection{{\bf Chemical Data}}
 In Table (\ref{apteratbl:resultreal}), we can see that \aptera estimates the $4$ components of wine-GC data, as opposed to iteration based and BRTF based baseline methods which suggests only $3$ components. NSVD and Tucker ARD based baselines also fails to identify the correct answer by vaguely indicating only $6$ or $7$ components, and on other hand, Autochrome also suggests correct $4$ components. It is interesting to observe that for amino acid data, our proposed method \aptera, Iteration based and Tucker-ARD methods are able to estimate the correct rank r=$3$ where all other baselines fail to discover $3$ components in the data, as shown in Table (\ref{apteratbl:resultreal}). Figure (\ref{apterafig:aminoresults}), shows the importance of estimating correct rank. Correct rank (Figure \ref{apterafig:aminoresults}(b)) can extract all relevant patterns without overlapping them (i.e. under-fitted model) in case low rank estimated than actual rank (Figure \ref{apterafig:aminoresults}(a)). Also, more negative values indicated over-fitted model (Figure \ref{apterafig:aminoresults}(c)) for such datasets. Our proposed method clearly outperforms most of the baselines.\hide{Can we strengthen the statement to "performs on par or better"? if this is true of course. Also if it feels stronger. It's just that ``outperforms'' and ``most'' may raise some alarm if together, vs. on par or better covers all cases. EDIT: Now looking at the table, maybe what you already have is fine.}
\begin{table*}[t]
	\centering
	\small
	\begin{tabular}{|c|c|c|c|c||c|c|c|c|}
	\hline
       {\bf Methods}&{\bf Wine }&{\bf Amino}&	{\bf EUCore}& {\bf CMS}&{\bf Wine}&{\bf Amino}&	{\bf EUCore}& {\bf CMS}\\\hline
       {\bf $R_o \rightarrow$}&$4$&$3$&$28$&$-$&\multicolumn{4}{c|}{Percent Deviation (\%)}\\ \hline
       {\bf Autochrome}&$4$&$2$&$26$&$OoM$&\textbf{0.00}&$- 33.33$&$- 7.15$&$-$\\ \hline
       {\bf NSVD}&$7$&$6$&$13$ and $35$&$50$&$75.00$&$100.00$&$- 53.57$&$-$\\ \hline
       {\bf Iterations}&$3$&\textbf{3}&$25$&$11$&$-25.00$&\textbf{0.00}&$-10.71$&$-$\\ \hline
       {\bf Tucker ARD}&$6$&\textbf{3}&$33$&$2$&$50.00$&\textbf{0.00}&$17.78$&$-$\\ \hline
        {\bf BRTF}&$3$&$2$&$49$&$OoM$&$-25.00$&$- 33.33$&$75.00$&$-$\\ \hline
        {\bf \aptera}&\textbf{4}&\textbf{3}&$29$&$9$&\textbf{0.00}&\textbf{0.00}&\textbf{3.57}&$-$\\ \hline
	\end{tabular}
	\caption{Performance of \aptera for rank estimation. Numbers where our proposed method outperforms other baselines are bolded. The negative sign indicates solution is under-fitted and positive values ($>0$ for deviation) indicates over-fitted solution.}
	\label{apteratbl:resultreal} 
		\vspace{-0.3in}
\end{table*}
\subsubsection{{\bf Social Network Data}}
In Table (\ref{apteratbl:resultreal}), we see that the behavior of NSVD on the EUcore dataset is more complicated, providing multiple estimation at $13$ and $35$ components. On other hand, Tucker ARD and BRTF provides $33$ and $49$ components, respectively and clearly over-fitting the data, while Iteration based and Autochrome detect $25$ and $26$ components. Note that even though it is hard to obtain ground truth for this type of data, \aptera is able to at least suggest potential structure at all the points where other baselines do as well. It is also interesting to observe that the $28$ components or clusters that the authors identify in \cite{yin2017local} make sense considering that not only \aptera, Iteration based and Autochrome also provide an indication near this number of components.  
\begin{figure}
	\vspace{-0.15in}
	\begin{center}
	    \includegraphics[clip,trim=0cm 3cm 0cm 3cm,width=0.7\textwidth]{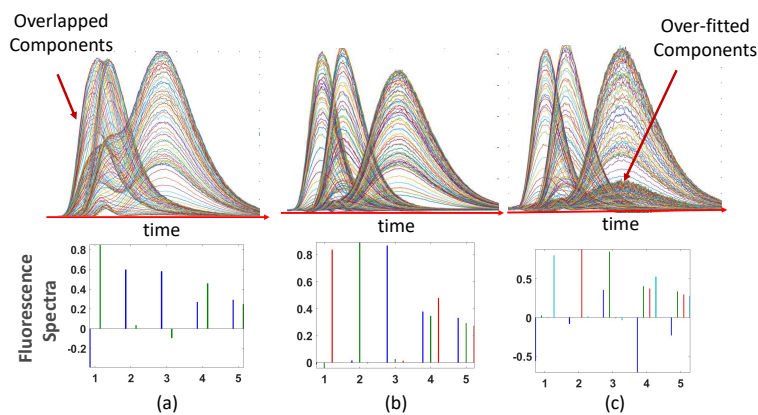}
		\caption{Weighted elution profiles for amino acid rank estimation. From left to right represents, (a) under-fitted model $(R=2)$, (b) correct estimation i.e. $R =3$, (c) over-fitted model with $R=4$.} 
		\label{apterafig:aminoresults}
	\end{center}
		\vspace{-0.5in}
\end{figure}
\subsubsection{{\bf Healthcare Data}}
\label{heathcare}
Centers for Medicare and Medicaid (CMS) data files were created to allow researchers to gain familiarity using Medicare claims data while protecting beneficiary privacy. The CMS data contains multiple files per year. The file contains synthesized data taken from a 5\% random sample of Medicare beneficiaries in 2008 and their claims from 2008 to 2010. We decompose PARAFAC2 tensor with rank between $R = 2$ to $R=50$. Our aim is to estimate appropriate rank to find clinically-meaningful groups of features. For this data, BRTF based and Autochrome baselines unable to proceed due to out of memory after computing PARAFAC2 decompositions. Iteration based baseline and our method \aptera, estimated $11$ and $9$ components, respectively. NSVD does not provide any estimation of rank for this data. We observe one of the the component (or cluster) in which  most of the patients has respiratory disease. These are the patients with high utilization ($>50\%$), multiple clinical visits (avg 67) and high severity (death rate 8-10\%). Most of the patients share ICD-9 code 492 (Emphysema), 496 (Chronic airway obstruction) and 511 (Pleurisy). These codes are characterized by obstruction of airflow that interferes with normal breathing. 
\subsubsection{{\bf Running Time Analysis}}
Figure (\ref{apterafig:timeanalysis}a), shows the time taken by each method for synthetic and real dataset. We remark that our proposed method is faster that all baselines, even when compared to iteration based method where only PARAFAC2 decomposition is considered and no further computations are considered to find rank. This is because \aptera only requires to compute decomposition for $R_{max}$ and other calculations once for each experiment.

\begin{figure}
	\vspace{-0.1in}
	\begin{center}
	    \includegraphics[clip,trim=0cm 13.5cm 0cm 1cm,width=0.46\textwidth]{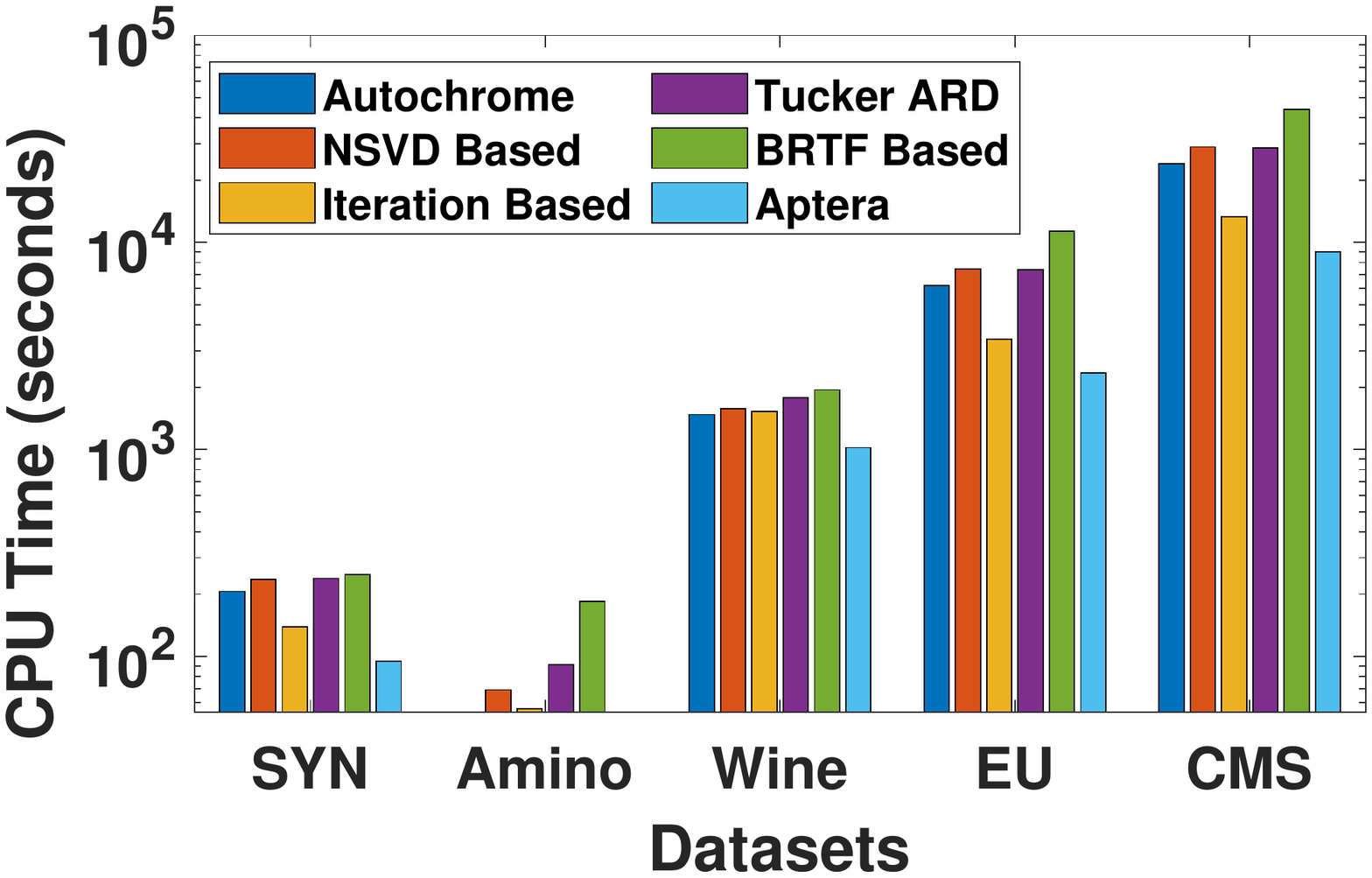}
	    \includegraphics[clip,trim=0cm 12cm 0cm 0cm,width=0.43\textwidth]{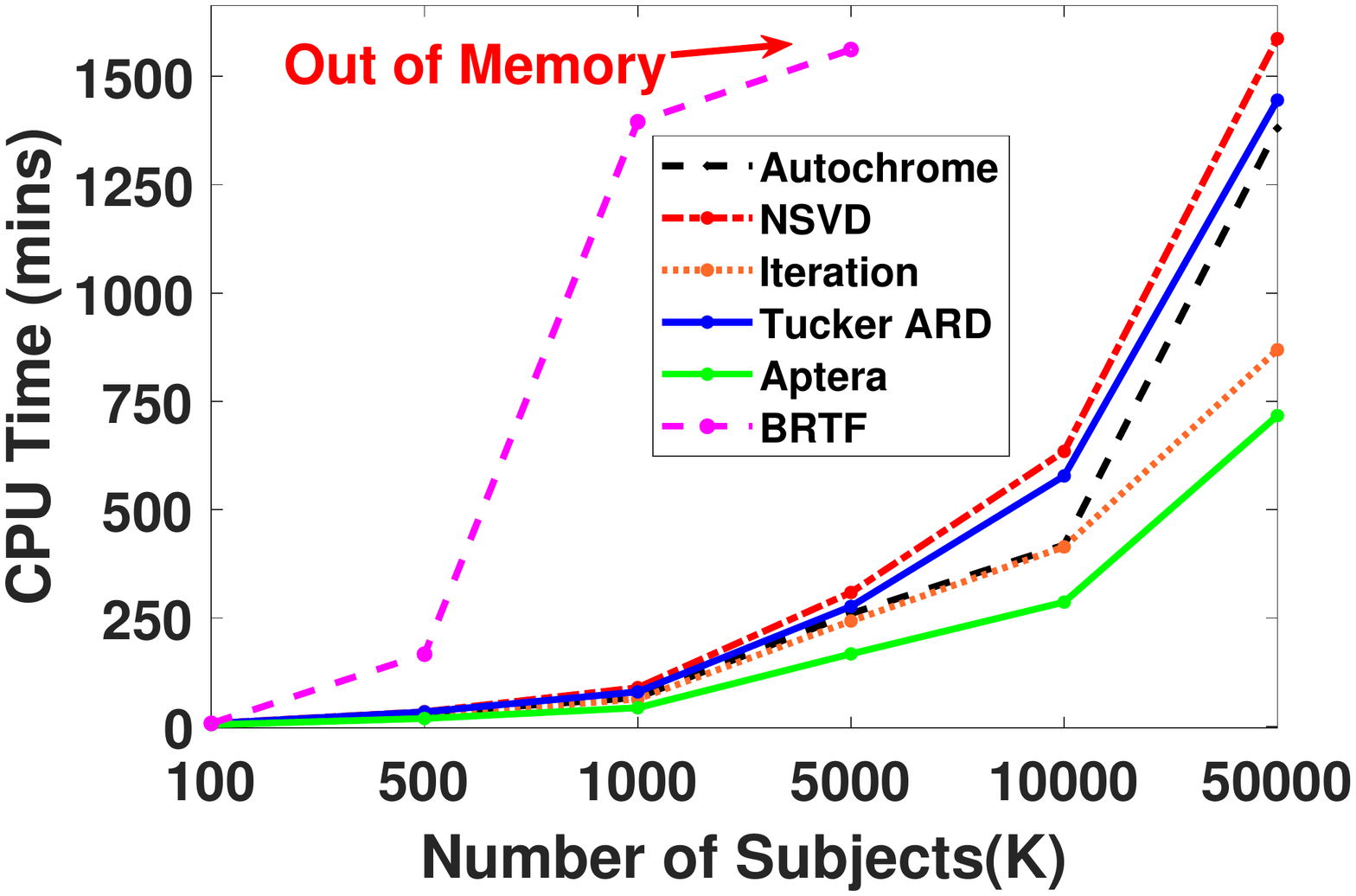}
		\caption{(a) Computation time of rank estimation for synthetic and real data. (b) Scalability Analysis on synthetic data.} 
		\label{apterafig:timeanalysis}
	\end{center}
	\vspace{-0.5in}
\end{figure}
\subsection{Q3: Scalability Analysis}
We also evaluate the scalability of our algorithm on synthetic dataset in terms of time needed for increasing number of subjects (K). A PARAFAC2 tensors $\tensor{X} \in \mathbb{R}^{100 \times 100 \times [100 \- 50000]}$ are decomposed with fixed target rank $R = 10$. The time needed by \aptera increases very linearly with increase in non\-zero elements. Our proposed method \aptera, successfully estimate the rank of the large PARAFAC2 tensors in reasonable time as shown in Figure (\ref{apterafig:timeanalysis}b) and is up to average $8\-20\%$ faster than baseline methods. We remark the favorable scalability properties of \aptera, rendering it is practical to use for large tensors.
\section{Conclusions}
In this paper, we work towards an automatic, PARAFAC2 tensor mining algorithm that minimizes human intervention. We encourage reproducibility by making our code publicly available. Our main contributions are:
\begin{itemize}
    \item \textbf{Algorithm:} We proposed a new scalable method called \aptera for discovering low\-rank structure in irregular data, which is based on the finding l\-curve corner of higher order singular values. 
    \item \textbf{Technology Transfer:} This work provides an efficient way to use L\-curve corner for rank detection in tensor mining, aiming to explore its capabilities and promote it within the community.
    \item \textbf{Evaluation:} We evaluate our method on synthetic data, showing their robustness compared to the baselines, as well as a wide variety of real datasets.
\end{itemize}

\vspace{0.5in}

\noindent\fbox{%
    \parbox{\textwidth}{%
       The content of this chapter was under blind peer review at the time of thesis submission.
    }%
}

\part{Concluding Remarks}
\chapter{Conclusions}
In this thesis, we address the problem of mining large multi-aspect and dynamic graphs. Specifically, we develop models and algorithms for graph mining, and online tensor decomposition.

We develop a semi-supervised clustering algorithm \smacd that simultaneously consider multi-view clustering and semi-supervised learning to improve the clustering accuracy by utilizing partially available labels of nodes. Our algorithms exploit non-negativity and sparsity constraints to find patterns in multi-aspect graphs. We discovers meaningful rich structure of communities in multi-aspect graphs. Specifically, our algorithm \richcom exploits the Block Term Decomposition to extract higher than rank-1 but still interpretable structure from a multi-aspect dataset and uses AO-ADMM to speed up decomposition. Another line of work on community detection focuses on PARAFAC2 tensor mining where we proposed a novel method \captionmethod uses the coupling between CP and PARAFAC2 tensor and \poplar, where we introduce the concept of Laplacian regularization on the PARAFAC2 decomposition, which improves community detection in time-evolving social networks, by leveraging graph-based auxiliary information. We also study the problem of niche detection and proposed method \ned which discover co-clusters of user behaviors based on interaction densities, and explaining them using attributes of involved nodes. 

We expand our scope to incremental multi-aspect data, in which we leverage network information over time. We develop fast, scalable and efficient methods i.e. \sambaten, \octen to tackle streaming CP decomposition, \spade to tackle irregular tensors (PARAFAC2) and \obtd to handle beyond rank-1 incremental tensor decomposition. For above methods we assumed that rank of the tensor is known to us. But in real world, this is far from truth. So to fill the gap, we study the automatic mining of PARAFAC2 decomposition and proposed efficient and scalable method namely \aptera. 

Regardless of the diverse topics, this thesis has consistently used a combination of science and engineering grounded on theoretical foundations. Intending to impact the practice of computer science, this thesis requires design, experimentation, quantitative and qualitative evaluation, and analytic modeling to answer questions about community detection, incremental tensor analysis, deep learning, and methodologies related to their application. Although there is lot of work to be done, this thesis provides methods, tools and insights that can facilitate follow up research.

\section{Future Directions}

\textbf{Tensor Mining}: Hinton et. al \cite{sabour2017dynamic} proposed a novel neural network architecture called Capsule Network (CapsNet). It outperforms state-of-the-art Convolutional Neural Networks (CNN) on simple challenges like MNIST data. In general, capsules are a vector, and their elements consist of the size and direction of the object and its likelihood. Even though the focus of the thesis is on multi-aspect graphs, the methods developed herein can generalize to a broad spectrum of real-world applications within and beyond graph mining, where the key requirement is that multi-aspect contextual information is present, and we can model it as a tensor. Thus, future research goal is to gain insights on tensor compression for Capsule Networks, a relatively new deep learning paradigm, by examining its behavior in different practical settings.

\textbf{Explainable AI}: In the near term, we plan to explore the \underline{temporal evolution of niches} in a number of ways, first, we like to understand when certain \emph{niches} are \emph{"born"}, whether they have certain periodic or other temporal patterns, and when do those \emph{niches} \emph{"retire"}?. Second, correlate those niches with external factors or events, which can be a second layer of explanation, in addition to the explanation provided by the attributes currently (For example, the birth of a certain Niche temporally correlates with a certain celebrity talking about "ABC" on the platform). 

Another interesting topic here is to explore the \underline{"locale variation of niches"}:  Are there certain niches that are particular to a location? Are there any attribute that could help with an explanation? Which niches persist across locations? And for those that do, is there any temporal variation in their birth and propagation (e.g., location 1 is inspired by location 2 and adopts a niche with some time delay). If niche is timely detected, we can expand its scope to different markets. For example, on the Cheerio Challenge, parents must balance Cheerios on their sleeping children's heads, it was started by "Life of Dad media" company in Los Angles and went viral in USA. Later on, after few months, it went viral in India and took on a new dimension because of Father's Day. 

\textbf{AI for Healthcare}: Artificial Intelligence (AI) is bringing a revolution to the healthcare industry. With the increasingly important role of AI in healthcare, there are growing worries over the lack of transparency and explainability in addition to potential bias encountered by model predictions. The medical diagnosis model is responsible for human life and we need to be confident enough to treat a patient as instructed by a black-box model. If the model cannot explain itself in the healthcare domain, then there is a huge risk of making a wrong decision may override its advantages of accuracy, speed, and decision-making efficacy. This would, in turn, severely limit its scope and utility. This is where Explainable Artificial Intelligence (XAI) comes into the picture. XAI increases the trust placed in an AI system by medical practitioners as well as AI researchers, and thus, eventually, leads to increasingly widespread deployment of AI in healthcare.  This is perfectly aligned with the work on Niche Detection, where we try to find explainable co-clusters of viewers and publishers. Our work on PARAFAC2 tensor mining mostly focuses on health care data like CMS (Centers for Medicare and Medicaid Services). In this work, we would like to expand the research dimension via developing models/tools based on Graph Neural Networks for healthcare applications.

Another research area that we would like to explore is “Deep Learning for Networks” specifically for capsule networks. The broad goal of this research will be to address the challenge: to develop analysis techniques that enable us to understand when and why Capsule network methods will be successful and to design effective methods with explicit performance guarantees. It is a fact that training deep learning models requires a computationally expensive empirical process. This slow process raises resource costs, wastes energy, and impedes user productivity. Under the same hood, work will focus on how to reduce resource costs and energy needs for training capsule networks on large data, and in turn, help democratize Capsule networks use to more application domains.

 For reproducibility and the benefit of the community, we make most of the algorithms used throughout this thesis available at link\footnote{\allcods}.

\bibliographystyle{plain}
\bibliography{bib/refs}

\end{document}